\theoremstyle{definition}
\newtheorem{definition}{Definition}[section]
\newtheorem{lemma}{Lemma}[section]
\newtheorem{proposition}{Proposition}[section]
\newtheorem{corollary}{Corollary}[section]
\newtheorem{theorem}{Theorem}[section]
\newtheorem{remark}{Remark}[section]
\newtheorem{example}{Example}[section]
\newtheorem{fact}{Fact}[section]
\numberwithin{equation}{section}
\newcommand{\dsum}{\displaystyle\sum}
\newcommand{\dint}{\displaystyle\int}
\newcommand{\dfr}{\displaystyle\frac}
\newcommand{\naturals}{\ensuremath{\mathbb{N}}}
\newcommand{\reals}{\ensuremath{\mathbb{R}}}
\newcommand{\pr}{\ensuremath{\mathbb{P}}}
\newcommand{\expectation}{\ensuremath{\mathbb{E}}}
\newcommand{\LL}{\ensuremath{\mathbb{L}}}
\newcommand{\integers}{\ensuremath{\mathbb{Z}}}
\def\cA{\ensuremath{\mathcal{A}}}
\def\cB{\ensuremath{\mathcal{B}}}
\def\cC{\ensuremath{\mathcal{C}}}
\def\cD{\ensuremath{\mathcal{D}}}
\def\cF{\ensuremath{\mathcal{F}}}
\def\cL{\ensuremath{\mathcal{L}}}
\def\cN{\ensuremath{\mathcal{N}}}
\def\cP{\ensuremath{\mathcal{P}}}
\def\cR{\ensuremath{\mathcal{R}}}
\def\cS{\ensuremath{\mathcal{S}}}
\def\cT{\ensuremath{\mathcal{T}}}
\def\cU{\ensuremath{\mathcal{U}}}
\def\cV{\ensuremath{\mathcal{V}}}
\def\cX{\ensuremath{\mathcal{X}}}
\def\cY{\ensuremath{\mathcal{Y}}}
\def\cZ{\ensuremath{\mathcal{Z}}}
\def\sF{\ensuremath{\mathsf{F}}}
\def\sP{\ensuremath{\mathsf{P}}}
\def\d{\ensuremath{\mathrm{d}}}
\def\eps{\ensuremath{\varepsilon}}
\def\deq{\triangleq}
\def\eqdist{\ensuremath{\stackrel{{\rm d}}{=}}}
\def\LSI{\ensuremath{\mathrm{LSI}}}
\def\Ent{\ensuremath{\mathrm{Ent}}}
\def\mse{\ensuremath{\mathsf{mse}}}
\def\lmmse{\ensuremath{\mathsf{lmmse}}}
\def\mmse{\ensuremath{\mathsf{mmse}}}
\def\var{\ensuremath{\mathsf{var}}}
\def\OU{\ensuremath{\mathrm{OU}}}
\def\Bernoulli{\ensuremath{\mathrm{Bernoulli}}}
\def\Poisson{\ensuremath{\mathrm{Poisson}}}
\def\CP{\ensuremath{\mathsf{CP}}}
\def\sphere{\ensuremath{\mathbb{S}}}
\def\emp{\ensuremath{\mathsf{P}}}
\begin{document}

\begin{center}
\vspace*{1cm} {\Huge{\bf{Concentration of Measure Inequalities\\[0.1cm]
in Information Theory, Communications\\[0.2cm]
and Coding}}}

\vspace*{2.5cm} MONOGRAPH\\[0.2cm]
Last updated: September 29, 2014.

\vspace*{2cm} {\LARGE{Foundations and Trends\\
in Communications and Information Theory,\\[0.2cm]
{\em Second Edition}, 2014.}}

\vspace*{2.0cm} {\bf{Maxim Raginsky}} \\[0.2cm]
Department of Electrical and Computer Engineering,\\
Coordinated Science Laboratory,\\
University of Illinois at Urbana-Champaign,\\
Urbana, IL 61801, USA.\\
E-mail: maxim@illinois.edu\\[0.5cm]
and \\[0.5cm]
{\bf{Igal Sason}}\\[0.2cm]
Department of Electrical Engineering,\\
Technion -- Israel Institute of Technology,\\
Haifa 32000, Israel.\\
E-mail: sason@ee.technion.ac.il

\end{center}

\newpage
\begin{center}
{\huge{\bf{Abstract}}}
\end{center}
\vspace*{1cm}

During the last two decades, concentration inequalities have been the subject of exciting
developments in various areas, including convex geometry,
functional analysis, statistical physics, high-dimensional statistics, pure and applied
probability theory (e.g., concentration of measure phenomena in random graphs, random
matrices, and percolation), information theory, theoretical computer science, and learning
theory. This monograph focuses on some of the key modern mathematical
tools that are used for the derivation of concentration inequalities,
on their links to information theory, and on their various applications
to communications and coding. In addition to being a survey, this monograph
also includes various new recent results derived by the authors.

The first part of the monograph introduces classical concentration inequalities
for martingales, as well as some recent refinements and extensions. The power
and versatility of the martingale approach is exemplified in the context of codes defined
on graphs and iterative decoding algorithms, as well as codes for wireless communication.

The second part of the monograph introduces the entropy method, an information-theoretic technique for
deriving concentration inequalities. The basic ingredients of the entropy method are discussed first
in the context of logarithmic Sobolev inequalities, which underlie the so-called functional approach to
concentration of measure, and then from a complementary information-theoretic viewpoint based on
transportation-cost inequalities and probability in metric spaces. Some
representative results on concentration for dependent random variables are
briefly summarized, with emphasis on their connections to the entropy method.
Finally, we discuss several applications of the entropy method to problems in
communications and coding, including strong converses, empirical distributions
of good channel codes, and an information-theoretic converse for concentration
of measure.

\newpage
\begin{center}
{\huge{\bf{Acknowledgments}}}
\end{center}
\vspace*{1cm}

It is a pleasure to thank several individuals, who have carefully read parts of the manuscript in various stages and provided constructive comments, suggestions, and corrections. These include Ronen Eshel, Peter Harremo\"es, Eran Hof, Nicholas Kalouptsidis, Leor Kehaty, Aryeh Kontorovich, Ioannis Kontoyannis, Mokshay Madiman, Daniel Paulin, Yury Polyanskiy, Boaz Shuval, Emre Telatar, Tim van Erven, Sergio Verd\'u, Yihong Wu and Kostis Xenoulis. Among these people, Leor Kehaty is gratefully acknowledged for a very detailed report on the initial draft of this manuscript, and Boaz Shuval is acknowledged for some helpful comments on the first edition. The authors are thankful to the three anonymous reviewers and the Editor in Chief, Sergio Verd\'u, for very constructive and detailed suggestions, which contributed a lot to the presentation of the first edition of this manuscript. The authors accept full responsibility for any remaining omissions or errors.

The work of M.~Raginsky was supported in part by the U.S.\ National Science Foundation (NSF) under CAREER award no.\ CCF--1254041. The work of I.~Sason was supported by the Israeli Science Foundation (ISF), grant number 12/12. The hospitality of the Bernoulli inter-faculty center at EPFL, the Swiss Federal Institute of Technology in Lausanne, during the summer of 2011 is acknowledged by I.~Sason. We would like to thank the organizers of the {\em Information Theory and Applications Workshop} in San-Diego, California; our collaboration in this project was initiated during this successful workshop in Feb.~2012. Finally, we are grateful to the publishers of the {\em Foundations and Trends (FnT) in Communications and Information Theory}: Mike Casey, James Finlay and Alet Heezemans for their assistance in both the first and second editions of this monograph
(dated: Oct.~2013 and Sept.~2014, respectively).

\tableofcontents
\setcounter{section}{1}
\chapter{Introduction}

\section{An overview and a brief history}
Concentration-of-measure inequalities provide bounds on the
probability that a random variable $X$ deviates from its mean,
median or other typical value $\overline{x}$ by a given amount. These
inequalities have been studied for several decades, with some
fundamental and substantial contributions during
the last two decades. Very roughly speaking, the concentration of
measure phenomenon can be stated in the following simple way:
``A random variable that depends in a smooth way on many independent
random variables (but not too much on any of them) is essentially
constant'' \cite{Talagrand96}. The exact meaning of such a statement
clearly needs to be clarified rigorously, but it often means that
such a random variable $X$ concentrates around $\overline{x}$ in a
way that the probability of the event $\{|X-\overline{x}| \geq t\}$,
for a given $t > 0$, decays exponentially in $t$. Detailed treatments
of the concentration of measure phenomenon, including historical
accounts, can be found, e.g., in \cite{Boucheron_Lugosi_Massart_book,
Ledoux, Lugosi_Lecture_Notes, Massart_book, McDiarmid_tutorial, Talagrand95}.

In recent years, concentration inequalities have been intensively studied
and used as a powerful tool in various areas. These include convex geometry,
functional analysis, statistical physics, dynamical systems, probability
(random matrices, Markov processes, random graphs,
percolation etc.), statistics, information theory, coding theory, learning theory,
and theoretical computer science. Several techniques have been developed so
far to prove concentration of measure inequalities. These include:
\begin{itemize}
\item The martingale approach (see, e.g.,
\cite{McDiarmid_tutorial,Azuma,Hoeffding}, \cite[Chapter~7]{AlonS_tpm3},
\cite{Chung_LU2006,survey2006}),
and its information-theoretic applications
(see, e.g., \cite{RiU_book} and references therein, \cite{SeldinLCTA_IT2012}).
This methodology will be covered in Chapter~\ref{chapter: martingales},
which is focused on concentration inequalities for discrete-time martingales
with bounded differences, as well as on some of their potential applications in
information theory, coding and communications.
A recent interesting avenue that follows from the martingale-based
concentration inequalities which are introduced in Chapter~\ref{chapter: martingales}
refers to their generalization to random matrices (see, e.g.,
\cite{Tropp_FoCM_2011, Tropp_ECP_2011}).
\item The entropy method and logarithmic Sobolev inequalities
(see, e.g., \cite[Chapter~5]{Ledoux}, \cite{Lugosi_Lecture_Notes}
and references therein).
This methodology and its many remarkable links to information theory
will be considered in Chapter~\ref{chapter: entropy method}.
\item Transportation-cost inequalities that originated from
information theory (see, e.g., \cite[Chapter~6]{Ledoux},
\cite{Gozlan_Leonard}, and references therein).
This methodology, which is closely related to the entropy
method and log-Sobolev inequalities, will be
considered in Chapter~\ref{chapter: entropy method}.
\item Talagrand's inequalities for product
measures (see, e.g., \cite{Talagrand96},
\cite[Chapter~4]{McDiarmid_tutorial}, \cite{Talagrand95}
and \cite[Chapter~6]{Steele_book}) and their links to
information theory \cite{Dembo}. These inequalities
proved to be very useful in combinatorial applications (such as the
study of common and/or increasing subsequences), in statistical
physics, and in functional analysis. We do
not discuss Talagrand's inequalities in detail.
\item Stein's method (or the method of exchangeable pairs) was recently
used to prove concentration inequalities (see, e.g., \cite{Chatterjee_phd,
Chatterjee1, Chatterjee2, Ross2011, Ghosh_Goldstein11a, Ghosh_Goldstein11b,
GoldsteinI14, Mackey_AOP14, Paulin_EJP14}).
\item Concentration inequalities that follow from rigorous methods in
statistical physics (see, e.g., \cite{Abbe_Montanari_KSAT,
Korada_Macris_ISIT07, Korada_Kudekar_Macris_ISIT08, Kudekar_thesis09,
Kudekar_Macris_IT2009, Korada_Macris_IT10, Montanari05, Talagrand_2010}).
\item The so-called reverse Lyapunov inequalities were recently
used to derive concentration inequalities for multi-dimensional
log-concave distributions \cite{Bobkov_Madiman_paper2} (see also a
related work in \cite{Bobkov_Madiman_paper1}).
The concentration inequalities in \cite{Bobkov_Madiman_paper2}
imply an extension of the Shannon--McMillan--Breiman strong ergodic theorem
to the class of discrete-time processes with log-concave marginals.
\end{itemize}
The last three items are not addressed in this monograph.

We now give a synopsis of some of the main ideas underlying the
martingale approach (Chapter~\ref{chapter: martingales}) and the
entropy method (Chapter~\ref{chapter: entropy method}).

The Azuma--Hoeffding inequality, as is introduced in Chapter~\ref{chapter: martingales},
is by now a well-known tool to establish concentration results for
{\em discrete-time bounded-difference martingales}.
It is due to Hoeffding \cite{Hoeffding}, who proved this inequality
for a sum of independent and bounded random variables, and to Azuma
\cite{Azuma}, who later extended it to bounded-difference martingales.
This inequality was introduced into the computer science literature by
Shamir and Spencer \cite{ShamirS87}, who used it to prove concentration
of the chromatic number for random graphs around its expected value
(the chromatic number of a graph is defined as the minimal number of
colors required to color all the vertices of this graph such that no
two adjacent vertices have the same color).
Shamir and Spencer \cite{ShamirS87} established concentration of the chromatic
number for the so-called {\em Erd\"os--R\'enyi} ensemble of random graphs,
where an arbitrary pair of vertices is connected by an edge with probability
$p \in (0,1)$, independently of all other edges. Note that the
concentration result in \cite{ShamirS87} was established without
knowing the expected value of the chromatic number over this ensemble.
This approach has been imported into coding theory in \cite{LubyMSS_IT01},
\cite{RichardsonU2001} and \cite{SipserS96}, especially for
exploring concentration of measure phenomena pertaining to codes defined
on graphs and iterative message-passing decoding algorithms. The last
decade has seen an ever-expanding use of the Azuma--Hoeffding inequality
for proving concentration inequalities in coding theory
(see, e.g., \cite{RiU_book} and references therein). All
these concentration inequalities serve in general to justify
theoretically the ensemble approach to codes defined on graphs;
nevertheless, much stronger concentration of measure phenomena are
observed in practice.

Let $f \colon \reals^n \rightarrow \reals$ be a function that has
{\em bounded differences}, i.e., the value of $f$ changes by a bounded
amount whenever any of its $n$ input variables is changed arbitrarily
while others are held fixed. A common method for proving
concentration of such a function of $n$ independent random variables
around its expected value $\expectation[f]$ revolves around the so-called
McDiarmid's inequality or the ``independent bounded-differences
inequality'' \cite{McDiarmid_tutorial}. This inequality, as is introduced
in Chapter~\ref{chapter: martingales}, was originally
proved via the martingale approach \cite{McDiarmid_tutorial}.
Although the proof of McDiarmid's inequality has some similarity to the
proof of the Azuma--Hoeffding inequality, the bounded-difference
assumption on $f$ that is used for the derivation of the former inequality
yields an improvement in the exponent by a factor of~$4$.
Nice applications of martingale-based concentration inequalities
in discrete mathematics and random graphs, based on the Azuma--Hoeffding
and McDiarmid inequalities, are exemplified in \cite[Section~3]{McDiarmid_tutorial},
\cite[Chapter~7]{AlonS_tpm3}, \cite{RiU_book} and \cite[Chapters~1,~2]{Steele_book}.

In spite of the large variety of problems where concentration of measure phenomena can be
asserted via the martingale approach, as pointed out by Talagrand \cite{Talagrand96},
``for all its qualities, the martingale method has a great drawback: it does not seem
to yield results of optimal order in several key situations. In particular,
it seems unable to obtain even a weak version of concentration of measure
phenomenon in Gaussian space.'' In Chapter~\ref{chapter: entropy method}
of this monograph, we focus on another set of techniques, fundamentally rooted
in information theory, that provide very strong concentration inequalities.
These powerful techniques, commonly referred to as the {\em entropy method}, have
originated in the work of Michel Ledoux \cite{Ledoux_paper}, who found an
alternative route to a class of concentration inequalities for product measures
originally derived by Talagrand \cite{Talagrand95} using an ingenious inductive
technique. Specifically, Ledoux noticed that the well-known Chernoff bounding
technique, which bounds the deviation probability of the form $\pr(|X-\bar{x}| > t)$,
for an arbitrary $t > 0$, in terms of the moment-generating function (MGF)
$\expectation[\exp(\lambda X)]$, can be combined with the so-called
{\em logarithmic Sobolev inequalities}, which can be used to control the MGF
in terms of the relative entropy.

Perhaps the best-known log-Sobolev inequality, first explicitly referred to
as such by Leonard Gross \cite{Gross}, pertains to the standard Gaussian
distribution in Euclidean space $\reals^n$, and bounds the relative entropy
$D(P \| G_n)$ between an arbitrary probability distribution $P$ on $\reals^n$
and the standard Gaussian measure $G_n$ by an ``energy-like'' quantity related
to the squared norm of the gradient of the density of $P$ w.r.t.\ $G_n$. By a
clever analytic argument which he attributed to an unpublished note by Ira Herbst,
Gross has used his log-Sobolev inequality to show that the logarithmic MGF
$\Lambda(\lambda) = \ln \expectation[\exp(\lambda U)]$ of $U = f(X^n)$, where
$X^n \sim G_n$ and $f \colon \reals^n \to \reals$ is an arbitrary sufficiently
smooth function with $\| \nabla f \| \le 1$, can be bounded as
$\Lambda(\lambda) \le \lambda^2/2$. This bound then yields the optimal Gaussian
concentration inequality $\pr\left(\left|f(X^n)-\expectation[f(X^n)]\right| > t\right)
\le 2\exp\left(-t^2/2\right)$ for $X^n \sim G_n$ and $t>0$.
(It should be pointed out that the Gaussian log-Sobolev inequality has a curious history,
and it seems to have been discovered independently in various equivalent forms by several
people, e.g., by Stam \cite{Stam} in the context of information theory, and by Federbush
\cite{Federbush} in the context of mathematical quantum field theory. Through the work of
Stam \cite{Stam}, the Gaussian log-Sobolev inequality has been linked to several other
information-theoretic notions, such as the concavity of entropy power
\cite{Costa_IT85, Dembo_Cover_Thomas, Villani_EP_concavity,Toscani}.)

In a nutshell, the entropy method takes this idea and applies it beyond the Gaussian case.
In abstract terms, log-Sobolev inequalities are functional inequalities that relate the
relative entropy between an arbitrary distribution $Q$ w.r.t.\ the distribution $P$ of
interest to some ``energy functional'' of the density $f = \d Q / \d P$. If one is interested
in studying concentration properties of some function $U = f(Z)$ with $Z \sim P$, the core
of the entropy method consists in applying an appropriate log-Sobolev inequality to the
{\em tilted distributions} $P^{(\lambda f)}$ with
$\d P^{(\lambda f)}/\d P \propto \exp(\lambda f)$. Provided the function $f$ is well-behaved
in the sense of having bounded ``energy,'' one can use the Herbst argument to pass from the
log-Sobolev inequality to the bound $\ln \expectation[\exp(\lambda U)] \le c\lambda^2/(2C)$,
where $c > 0$ depends only on the distribution $P$, while $C > 0$ is determined by the energy
content of $f$. While there is no general technique for deriving log-Sobolev inequalities,
there are nevertheless some underlying principles that can be exploited for that purpose.
We discuss some of these principles in Chapter~\ref{chapter: entropy method}. More information
on log-Sobolev inequalities can be found in several excellent monographs and lecture notes
\cite{Ledoux,Massart_book,Guionnet_Zegarlinski,Ledoux_lecture_notes,Royer}, as well as in recent
papers \cite{Bobkov_Gotze_expint,Bobkov_Ledoux,Bobkov_Tetali,Chafai,KitsosT_IT09} and references
therein.

Around the same time that Michel Ledoux first introduced the entropy method \cite{Ledoux_paper},
Katalin Marton showed in a breakthrough paper \cite{Marton_dbar}  that one can bypass functional
inequalities and work directly on the level of probability measures (see also the survey paper
\cite{Marton_ISIT2013}, presented at the 2013 Shannon Award Lecture). More specifically,
Marton has shown that Gaussian concentration bounds can be deduced from the so-called
{\em transportation-cost inequalities}. These inequalities, discussed in detail in
Section~\ref{sec:transportation}, relate information-theoretic quantities, such as the relative
entropy, to a certain class of distances between probability measures on the metric space where
the random variables of interest are defined. These so-called {\em Wasserstein distances} have
been the subject of intense research activity that touches upon probability theory, functional
analysis, dynamical systems, partial differential equations, statistical physics, and differential
geometry. A great deal of information on this field of {\em optimal transportation} can be found
in two books by C\'edric Villani --- \cite{Villani_TOT} offers a concise and fairly elementary
introduction, while a more recent monograph \cite{Villani_newbook} is a lot more detailed and
encyclopedic. Multiple connections between optimal transportation, concentration of measure, and
information theory are also explored in
\cite{Gozlan_Leonard,Dembo,Cattiaux_Guillin,Dembo_Zeitouni_TC,Djellout_Guillin_Wu,Gozlan,E_Milman}.
Note that Wasserstein distances have been also used in information theory in the context of
lossy source coding \cite{Gray_Neuhoff_Shields_dbar,Gray_Neuhoff_Omura,SV_1996}.

The first explicit invocation of concentration inequalities in an information-theoretic context
appears in the work of Ahlswede {\em et al.}~\cite{Ahlswede_Gacs_Korner,Ahlswede_Dueck}. These authors
have shown that a certain delicate probabilistic inequality, which was referred to as the ``blowing
up lemma'', and which we now (thanks to the contributions by Marton \cite{Marton_dbar,Marton_blowup})
recognize as a Gaussian concentration bound in the Hamming space, can be used to derive strong converses
for a wide variety of information-theoretic problems, including multi-terminal scenarios.
The importance of sharp concentration inequalities for characterizing fundamental limits of coding schemes
in information theory is evident from the recent flurry of activity on {\em finite-blocklength} analysis
of source and channel codes (see, e.g.,
\cite{Altug_Wagner_asymmetric,AmraouiMRU,NozakiKS,Kontoyiannis_Verdu,Kostina_Verdu_IT2012,Matthews_2012,
Polyanskiy_Poor_Verdu_IT2010,Wiechman_Sason_ISP}).
Thus, it is timely to revisit the use of concentration-of-measure ideas in information theory from
a modern perspective. We hope that our treatment, which, above all, aims to distill the core
information-theoretic ideas underlying the study of concentration of measure, will be helpful
to researchers in information theory and related fields.

\vspace*{-0.1cm}
\section{A reader's guide}

This monograph is mainly focused on the interplay between concentration of measure and
information theory, as well as applications to problems related
to information theory, communications and coding. For this reason, it is
primarily aimed at researchers and graduate students working in these fields.
The necessary mathematical background is real analysis, elementary functional analysis,
and a first graduate course in probability theory
and stochastic processes. As a refresher textbook for this mathematical 
background, the reader is referred, e.g., to \cite{Rosenthal_book}.

Chapter~\ref{chapter: martingales} on the martingale approach is structured as follows:
Section~\ref{section: Discrete-Time Martingales} lists key definitions and basic facts 
pertaining to discrete-time martingales, and Section~\ref{section: Two Basic Concentration Inequalities}
presents basic inequalities that form the basis of the martingale 
approach to concentration of measure. The concentration inequalities in 
Section~\ref{section: Two Basic Concentration Inequalities} include the 
celebrated Azuma--Hoeffding and McDiarmid inequalities, and 
Section~\ref{section: Refined Versions of the Azuma--Hoeffding Inequality}
is focused on the derivation of refined versions of the Azuma--Hoeffding inequality.
Section~\ref{section: Relations to Results in Probability Theory}
discusses the connections of the concentration inequalities introduced in
Section~\ref{section: Refined Versions of the Azuma--Hoeffding Inequality} to classical
limit theorems of probability theory, including the central limit theorem for martingales,
the moderate deviations principle for i.i.d.\ real-valued random variables, and the suitability
of the concentration inequalities derived in Chapter~\ref{chapter: martingales} for some structured
functions of discrete-time Markov chains.
Section~\ref{section: Applications} forms the second part of Chapter~\ref{chapter: martingales},
applying the concentration inequalities from Sections~\ref{section: Two Basic Concentration Inequalities}
and~\ref{section: Refined Versions of the Azuma--Hoeffding Inequality} to information theory, 
communications and coding theory. 
Section~\ref{Section: Summary} concludes with a summary of the chapter.

Several nice surveys on concentration inequalities via the martingale
approach are available, including \cite{McDiarmid_tutorial},
\cite[Chapter~7]{AlonS_tpm3}, \cite[Chapter~2]{Chung_LU2006},
\cite{survey2006} and \cite[Chapters~1 and~2]{Steele_book}.
The main focus of Chapter~\ref{chapter: martingales} is on the
presentation of several concentration inequalities that
form the basis of the martingale approach, with an emphasis on a sample
of their potential applications in information and communication-theoretic
aspects.

Chapter~\ref{chapter: entropy method} on the entropy method is structured as
follows: Section~\ref{sec:ingredients} introduces the main ingredients of the
entropy method, and it sets up the major themes that recur throughout the chapter.
Section~\ref{sec:Gaussian_LSI} focuses on the logarithmic Sobolev inequality
for Gaussian measures, as well as on its numerous links to information-theoretic
ideas. The general scheme of logarithmic Sobolev inequalities is introduced in
Section~\ref{sec:LSI}, and then applied to a variety of continuous and discrete
examples, including an alternative derivation of McDiarmid's inequality that does
not rely on martingale methods.
Thus, Sections~\ref{sec:Gaussian_LSI} and \ref{sec:LSI} present an approach to
deriving concentration bounds based on {\em functional} inequalities. In
Section~\ref{sec:transportation}, concentration is examined through the lens of
geometry in probability spaces equipped with a metric. This viewpoint centers around
intrinsic properties of probability measures, and has received a great deal of attention
since the pioneering work of Marton \cite{Marton_dbar,Marton_blowup} on transportation-cost
inequalities.  Although the focus in Chapter~\ref{chapter: entropy method} is mainly on
concentration for product measures, Section~\ref{sec:nonproduct} contains a brief summary
of a few results on concentration for functions of dependent random variables, and discusses
the connection between these results and the information-theoretic machinery that has been
the subject of the chapter. Several applications of concentration to problems in information
theory are surveyed in Section~\ref{sec:applications}. Section~\ref{section: Entropy Chapter Summary}
concludes with a brief summary.

\setcounter{section}{2}

\chapter{Concentration Inequalities via the Martingale Approach}
\label{chapter: martingales}
\chaptermark{Concentration Inequalities via the Martingale Approach}

This chapter introduces concentration inequalities for discrete-time
martingales with bounded differences, and it provides several of their
potential applications in information theory, digital communications and
coding. It starts by introducing the basic concentration inequalities of
Azuma--Hoeffding and McDiarmid, as well as various refinements.
It then moves to applications, which include concentration for
random binary linear block codes, concentration for random regular
bipartite graphs, concentration for low-density parity-check (LDPC) codes,
and concentration for orthogonal-frequency-division-multiplexing (OFDM) signals.

\section{Discrete-time martingales}
\label{section: Discrete-Time Martingales}

We start with a brief review of martingales to set definitions and notation.

\begin{definition}[Discrete-time martingales] Let $(\Omega, \mathcal{F},
\pr)$ be a probability space. A sequence $\{X_i, \mathcal{F}_i\}_{i=0}^n$,
$n \in \naturals$, where the $X_i$'s are
random variables and the $\mathcal{F}_i$'s are $\sigma$-algebras,
is a martingale if the following conditions are satisfied:

\begin{enumerate}
\item The $\mathcal{F}_i$'s form a {\em filtration}, i.e., $\mathcal{F}_0
\subseteq \mathcal{F}_1 \subseteq \ldots
\subseteq \mathcal{F}_n \subseteq \mathcal{F}$; usually, $\mathcal{F}_0$ is the trivial
$\sigma$-algebra $\{\emptyset, \Omega\}$ and $\mathcal{F}_n$ is the
full $\sigma$-algebra $\mathcal{F}$.
\item $X_i \in \LL^1(\Omega, \mathcal{F}_i, \pr)$
for every $i \in \{0, \ldots, n\}$; this means that each $X_i$ is
defined on the same sample space $\Omega$, it is $\mathcal{F}_i$-measurable,
and $\expectation [|X_i|] = \int_{\Omega} |X_i(\omega)| \pr(\d\omega) < \infty.$
\item For all $i \in \{1, \ldots, n\}$, $X_{i-1}
= \expectation[ X_i | \mathcal{F}_{i-1}]$ holds almost surely.
\end{enumerate}
\label{definition: Doob's martingales}
\end{definition}
\noindent In general, relations between random variables such as
$X=Y$, $X \le Y$ or $X \ge Y$ are assumed to hold almost surely (a.s.).

\bigskip Here are some useful facts about martingales.

\begin{fact}Since $\{\mathcal{F}_i\}_{i=0}^n$ is a
filtration, it follows from the tower property for
conditional expectations that
\begin{equation}
X_j = \expectation[X_i | \mathcal{F}_j],  \quad \forall \, i>j.
\label{eq:martingales}
\end{equation}
Also
$
\expectation[X_i] = \expectation \bigl[ \expectation[X_i |
\mathcal{F}_{i-1}] \bigr] = \expectation[X_{i-1}],
$
so, it follows from \eqref{eq:martingales} that the expectations
of every term $X_i$ of a martingale sequence are all equal to
$\expectation[X_0]$. Note that, since $X_i$ is $\mathcal{F}_i$-measurable,
\eqref{eq:martingales} also holds for $i=j$.
\end{fact}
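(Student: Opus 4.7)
The plan is to prove the two claims in succession. The first (the tower-type identity $X_j = \expectation[X_i | \mathcal{F}_j]$ for all $i > j$) is the essential content, while the second (constancy of the expectations) will follow as a one-line corollary.

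For the first claim, I would induct on the gap $k = i - j \geq 1$. The base case $k = 1$ is precisely item~3 of the definition of a martingale, so there is nothing to prove. For the inductive step, assume that $X_j = \expectation[X_{i-1} | \mathcal{F}_j]$ almost surely. Since $\mathcal{F}_j \subseteq \mathcal{F}_{i-1}$ by the filtration property (item~1), the tower rule for nested $\sigma$-algebras gives $\expectation[X_i | \mathcal{F}_j] = \expectation\bigl[\expectation[X_i | \mathcal{F}_{i-1}] \,\big|\, \mathcal{F}_j\bigr]$ almost surely. Applying the martingale property to the inner conditional expectation rewrites the right-hand side as $\expectation[X_{i-1} | \mathcal{F}_j]$, which equals $X_j$ by the inductive hypothesis. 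The integrability condition (item~2) ensures that every conditional expectation along this chain is well-defined in $\LL^1$.

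For the second claim, taking unconditional expectations in the defining martingale identity and iterating yields $\expectation[X_i] = \expectation\bigl[\expectation[X_i | \mathcal{F}_{i-1}]\bigr] = \expectation[X_{i-1}] = \ldots = \expectation[X_0]$, which is what is asserted. Equivalently, this is just the $j = 0$ case of the first identity after taking expectations and noting that $\mathcal{F}_0$ is the trivial $\sigma$-algebra.

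I do not anticipate any serious obstacle here; the result is essentially bookkeeping around the tower property. The one subtlety worth being explicit about is that each of the equalities invoked holds only on a set of full measure, so to make the chain of identities rigorous one should form the intersection of the finitely many almost-sure events involved at stages $j+1, j+2, \ldots, i$, which remains of full measure. Beyond that, everything reduces to a direct application of the tower property and the defining one-step martingale relation.
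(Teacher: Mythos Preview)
Your proposal is correct and follows essentially the same approach as the paper: the paper simply invokes the tower property for conditional expectations without spelling out the induction, and your argument makes that one-line invocation explicit via induction on the gap $i-j$. There is nothing to add or correct.
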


\begin{fact} One can generate martingale sequences by the
following procedure: Given a random variable $X \in \LL^1(\Omega, \mathcal{F},
\pr)$ and an arbitrary filtration $\{\mathcal{F}_i\}_{i=0}^n$, let
\begin{equation*}
X_i = \expectation[X | \mathcal{F}_i], \quad \, \forall \, i \in
\{0, 1, \ldots\, n\}.
\end{equation*}
Then, the sequence $X_0, X_1, \ldots, X_n$ forms a martingale (with respect to
the above filtration) since
\begin{enumerate}

\item The random variable $X_i = \expectation[X |
\mathcal{F}_i]$ is $\mathcal{F}_i$-measurable, and
$\expectation[|X_i|] \leq \expectation[|X|] < \infty$.

\item By assumption, $\{\mathcal{F}_i\}_{i=0}^n$ is a filtration.

\item For every $i \in \{1, \ldots, n\}$
\begin{eqnarray*}
&& \hspace*{-1.8cm} \expectation[X_i | \mathcal{F}_{i-1}]
= \expectation \bigl[ \expectation[ X | \mathcal{F}_i] |
\mathcal{F}_{i-1} \bigr] \\
&& = \expectation[X | \mathcal{F}_{i-1}]  \quad (\text{since} \,
\mathcal{F}_{i-1} \subseteq \mathcal{F}_i) \\
&& = X_{i-1}.
\end{eqnarray*}
\end{enumerate}

In the particular case where
$\mathcal{F}_0 = \{\emptyset, \Omega\}$ and $\mathcal{F}_n = \mathcal{F}$,
we see that $X_0, X_1, \ldots, X_n$ is a martingale sequence with
\begin{eqnarray*}
&& \hspace*{-0.8cm} X_0 = \expectation[X | \mathcal{F}_0] =
\expectation[X], \quad X_n = \expectation[X | \mathcal{F}_n] = X.
\end{eqnarray*}
That is, we get a martingale sequence where the first element
is the expected value of $X$ and the last element
is $X$ itself (a.s.). This has the following interpretation: at the
beginning, we don't know anything about $X$, so we estimate it by its
expected value. At each step, more and more information about the random
variable $X$ is revealed, until its value is known almost surely.
\label{fact: construction of martingales}
\end{fact}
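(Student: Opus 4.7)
The plan is to verify the three conditions in Definition~1.1 one at a time, using only the defining properties of conditional expectation. The filtration property is already part of the hypothesis, so the only genuine work concerns the integrability/measurability assertion and the martingale identity.

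For the integrability part, I would invoke two standard facts about conditional expectation: first, $X_i = \expectation[X \mid \mathcal{F}_i]$ is $\mathcal{F}_i$-measurable by the very definition of a version of the conditional expectation; second, Jensen's inequality for conditional expectations applied to the convex function $|\cdot|$ gives $|X_i| \le \expectation[|X| \mid \mathcal{F}_i]$ a.s. Taking ordinary expectations on both sides and using $\expectation\bigl[\expectation[|X| \mid \mathcal{F}_i]\bigr] = \expectation[|X|]$ yields $\expectation[|X_i|] \le \expectation[|X|] < \infty$, which places $X_i$ in $\LL^1(\Omega, \mathcal{F}_i, \pr)$.

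For the martingale identity, I would apply the tower property. Since $\mathcal{F}_{i-1} \subseteq \mathcal{F}_i$ by the filtration property, nesting of conditional expectations gives
$$\expectation\bigl[X_i \mid \mathcal{F}_{i-1}\bigr] = \expectation\bigl[\expectation[X \mid \mathcal{F}_i] \mid \mathcal{F}_{i-1}\bigr] = \expectation[X \mid \mathcal{F}_{i-1}] = X_{i-1} \quad \text{a.s.}$$
The special case $\mathcal{F}_0 = \{\emptyset, \Omega\}$ and $\mathcal{F}_n = \mathcal{F}$ then follows: conditioning on the trivial $\sigma$-algebra reduces to the unconditional expectation, while conditioning on $\mathcal{F}$ returns an $\mathcal{F}$-measurable version of $X$ itself (a.s.).

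There is essentially no technical obstacle here: the statement is a direct textbook consequence of the contractivity of conditional expectation in $\LL^1$ and the tower property. The only subtlety worth flagging is that every equality between conditional expectations holds only almost surely, so one should either fix, once and for all, a single $\pr$-null set outside of which each asserted identity holds, or carry the ``a.s.''\ qualifier through each displayed line to avoid any ambiguity when the recursion across $i \in \{1, \ldots, n\}$ is unrolled.
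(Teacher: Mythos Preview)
Your proposal is correct and follows essentially the same argument as the paper: the measurability and integrability of $X_i$ come from the defining properties of conditional expectation (the paper simply asserts $\expectation[|X_i|]\le\expectation[|X|]$, while you spell out the Jensen step), and the martingale identity is obtained via the tower property exactly as written. There is no substantive difference between the two.
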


\begin{example}
Let $\{U_k\}_{k=1}^n$ be independent random variables on a common probability
space $(\Omega, \mathcal{F}, \pr)$, and assume that $\expectation[U_k] = 0$
and $\expectation[|U_k|] < \infty$ for every $k$. Let us define
$$ X_k = \sum_{j=1}^k U_j, \quad \forall \, k \in \{1, \ldots, n\}$$ with $X_0 = 0$.
Define the natural filtration where $\mathcal{F}_0 = \{\emptyset, \Omega\}$, and
\begin{align*}
\mathcal{F}_k &= \sigma(X_1, \ldots, X_k) \\
&= \sigma(U_1, \ldots, U_k), \quad \forall \, k \in \{1, \ldots, n\}.
\end{align*}
Note that $\mathcal{F}_k = \sigma(X_1, \ldots, X_k)$ denotes the minimal
$\sigma$-algebra that includes all the sets of the form $$\bigl\{\omega \in
\Omega \colon X_1(\omega) \leq \alpha_1, \ldots, X_k(\omega) \leq \alpha_k \bigr\}$$
where $\alpha_j \in \reals \cup \{-\infty, +\infty\}$ for $j \in \{1, \ldots, k\}$.
It is easy to verify that $\{X_k, \mathcal{F}_k\}_{k=0}^n$ is a martingale
sequence; this implies that all the concentration inequalities that apply
to discrete-time martingales (like those introduced in this chapter) can be
particularized to concentration inequalities for sums of independent random variables.
\label{example: sum of zero-mean and independent Rvs}
\end{example}

If we relax the equality in the definition of a martingale to an inequality,
we obtain sub- and super-martingales. More precisely, to define sub- and
super-martingales, we keep the first two conditions in
Definition~\ref{definition: Doob's martingales}, and the equality
in the third condition is replaced by one of the following:
\begin{itemize}
\item $\expectation[ X_i | \mathcal{F}_{i-1}] \geq X_{i-1}$
holds a.s.\ for sub-martingales.
\item $\expectation[ X_i | \mathcal{F}_{i-1}] \leq X_{i-1}$
holds a.s.\ for super-martingales.
\end{itemize}
From the tower property for conditional expectations, it follows that
\begin{equation}
\expectation[X_i | \mathcal{F}_j] \ge X_j,  \quad \forall \, i>j
\label{eq:sub-martingales}
\end{equation}
for sub-martingales, and
\begin{equation}
\expectation[X_i | \mathcal{F}_j] \le X_j,  \quad \forall \, i>j
\label{eq:super-martingales}
\end{equation}
for super-martingales. By taking expectations on both sides of
\eqref{eq:sub-martingales} and \eqref{eq:super-martingales}, it
follows that the expectations of the terms of a sub-martingale
(respectively, super-martingale) sequence form a monotonic increasing
(respectively, decreasing) sequence.
Clearly, every random process that is both a sub- and super-martingale is
a martingale, and vice versa. Furthermore, $\{X_i, \mathcal{F}_i\}$ is a
sub-martingale if and only if $\{-X_i, \mathcal{F}_i\}$ is a
super-martingale. The following properties are direct consequences
of Jensen's inequality for conditional expectations:

\begin{theorem} \label{theorem: mappings of martingales or sub/ super martingales}
The following holds for mappings of martingales or sub/ super martingales:
\begin{itemize}
\item If $\{X_i, \mathcal{F}_i\}$ is a martingale, $h$ is
a convex (concave) function and $\expectation\bigl[|h(X_i)|\bigr]<\infty$,
then $\{h(X_i), \mathcal{F}_i\}$ is a sub- (super-) martingale.
\item If $\{X_i, \mathcal{F}_i\}$ is a super-martingale, $h$ is monotonic
increasing and concave, and $\expectation\bigl[|h(X_i)|\bigr]<\infty$,
then $\{h(X_i), \mathcal{F}_i\}$ is a super-martingale.
Similarly, if $\{X_i, \mathcal{F}_i\}$ is a
sub-martingale, $h$ is monotonic increasing and convex, and
$\expectation\bigl[|h(X_i)|\bigr]<\infty$, then
$\{h(X_i), \mathcal{F}_i\}$ is a sub-martingale.
\end{itemize}
\end{theorem}

\begin{example}
The following are special cases of
Theorem~\ref{theorem: mappings of martingales or sub/ super martingales}:
\begin{itemize}
\item If $\{X_i, \mathcal{F}_i\}$ is a martingale, then $\{|X_i|, \mathcal{F}_i\}$
is a sub-martingale.
\item If $\{X_i, \mathcal{F}_i\}$ is a martingale and
$X_i \in \LL^2(\Omega, \mathcal{F}_i, \pr)$,
then $\{X_i^2, \mathcal{F}_i\}$ is a sub-martingale.
\item If $\{X_i, \mathcal{F}_i\}$ is a non-negative sub-martingale and
$X_i \in \LL^2(\Omega, \mathcal{F}_i, \pr)$ (i.e., for every $i$, the random variable
$X_i$ is defined on the same sample space $\Omega$, it is $\mathcal{F}_i$-measurable,
and $\expectation [X_i^2] < \infty$) then also $\{X_i^2, \mathcal{F}_i\}$ is a sub-martingale.
\end{itemize}
\end{example}

\section{Basic concentration inequalities}
\label{section: Two Basic Concentration Inequalities}

We now turn to the main topic of the chapter, namely the martingale approach
to proving concentration inequalities, i.e., sharp bounds on the {\em deviation
probabilities} $\pr\left(|U - \expectation U| \ge r\right)$ for all $r \ge 0$,
where $U$ is a real-valued random variable with some additional ``structure"
--- for instance, $U$ may be a function of a large number $n$ of independent
or weakly dependent random variables $X_1,\ldots,X_n$.  In a nutshell, the
martingale approach has two basic ingredients:
\begin{enumerate}
	\item {\bf The martingale decomposition} --- we first construct a suitable
filtration $\{\cF_i\}^n_{i=0}$ on the probability space $(\Omega,\cF,\pr)$ that
carries $U$, where $\cF_0 = \{\emptyset,\Omega\}$ is the trivial $\sigma$-algebra,
and $\cF_n = \cF$. Then we decompose the difference $U - \expectation U$ as
\begin{align}  \label{eq:martingale_decomposition}
U - \expectation U &= \expectation[U|\cF_n] - \expectation[U|\cF_0] \nonumber\\
&= \sum^n_{i=1} \left( \expectation[U|\cF_i] - \expectation[U|\cF_{i-1}]\right).
\end{align}
The idea is to choose the $\sigma$-algebras $\{\cF_i\}$ in such a way that the differences
$\xi_i = \expectation[U|\cF_i] - \expectation[U|\cF_{i-1}]$ are bounded in some sense,
e.g., almost surely.
\item {\bf The Chernoff bounding technique} --- using Markov's inequality, the problem of
bounding the deviation probability $\pr (|U - \expectation U| \ge r)$ is reduced to the
analysis of the {\em logarithmic moment-generating function}
$\Lambda(t) \deq \ln \expectation[\exp(t U)]$, $t \in \reals$. Moreover, exploiting the
martingale decomposition \eqref{eq:martingale_decomposition}, we may write
\begin{align*}
\Lambda(t) = t \expectation[U] + \ln \expectation\left[ \prod^n_{i=1}\exp(t\xi_i)\right],
\end{align*}
which allows us to focus on the behavior of individual terms $\exp(t\xi_i)$, $i = 1,\ldots,n$.
Now, the logarithmic moment-generating function plays a key role in the theory of large
deviations \cite{Dembo_Zeitouni}, which can be thought of as a (mainly) asymptotic analysis
of the concentration of measure phenomenon. Thus, its prominent appearance here is not
entirely unexpected.
\end{enumerate}
There are more sophisticated variants of the martingale approach, some of which we will
have occasion to see later on, but the above two ingredients are a good starting point.
In the remainder of this section, we will elaborate on these ideas and examine their basic
consequences.

\subsection{The Chernoff bounding technique and the Hoeffding lemma}
\label{subsection: Chernoff}

The first ingredient of the martingale method is the well-known Chernoff bounding
technique\footnote{The name of H.~Chernoff is associated with this technique because
of his 1952 paper \cite{Chernoff}; however, its roots go back to S.N.~Bernstein's
1927 textbook on the theory of probability \cite{Bernstein}.}: Using Markov's inequality,
for every $t > 0$,
\begin{align*}
	\pr (U  \ge r) &= \pr \big(\exp(t U) \ge \exp(t r)\big)\\
	&\le \exp(-t r) \expectation[\exp(t U)].
\end{align*}
Equivalently, if we define the {\em logarithmic moment generating function}
$\Lambda(t) \deq \ln \expectation[\exp(t U)]$, $t \in \reals$, we can write
\begin{align}\label{eq:upper_tail}
\pr (U \ge r) & \le \exp\big(\Lambda(t)-t r \big), \qquad \forall \, t > 0.
\end{align}
To bound the probability of the lower tail, $\pr (U \le - r)$, we follow the
same steps, but with $-U$ instead of $U$. Now the success of the whole enterprize
hinges on our ability to obtain tight upper bounds on $\Lambda(t)$. One of the
basic tools available for that purpose is the following lemma due to Hoeffding
\cite{Hoeffding}:
\begin{lemma}[Hoeffding]\label{lm:Hoeffding} Let $U \in \reals$ be a random
variable, such that $U \in [a,b]$ a.s.\ for some finite $a < b$. Then, for every
$t \in \reals$,
\begin{align}
\label{eq:Hoeffding}
\expectation\left[ \exp\big(t(U- \expectation U)\big)\right]
\le \exp \left(\frac{t^2(b-a)^2}{8}\right).
\end{align}
\end{lemma}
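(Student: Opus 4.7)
The plan is to reduce the bound to a one-variable calculus problem via the convexity of the exponential function. First I would center the random variable: by replacing $U$ with $U - \expectation U$, I may assume without loss of generality that $\expectation U = 0$, in which case $a \le 0 \le b$ (since $U \in [a,b]$ a.s.). The inequality to establish becomes $\expectation[\exp(tU)] \le \exp(t^2(b-a)^2/8)$.

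Next I would use the fact that $x \mapsto e^{tx}$ is convex on $[a,b]$, so for every $u \in [a,b]$,
\begin{equation*}
e^{tu} \le \frac{b-u}{b-a}\, e^{ta} + \frac{u-a}{b-a}\, e^{tb}.
\end{equation*}
Taking expectations and using $\expectation U = 0$ kills the linear terms in $U$, leaving
\begin{equation*}
\expectation[\exp(tU)] \le \frac{b}{b-a}\, e^{ta} + \frac{-a}{b-a}\, e^{tb}.
\end{equation*}
Introducing the change of variables $p = -a/(b-a) \in [0,1]$ and $h = t(b-a) \ge 0$, the right-hand side becomes $(1-p)e^{-ph} + p\, e^{(1-p)h}$, and after factoring out $e^{-ph}$ it equals $e^{-ph}\bigl(1-p + p\, e^{h}\bigr)$. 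Taking logarithms, it suffices to prove the scalar inequality $\varphi(h) \le h^2/8$ for all $h \ge 0$ and $p \in [0,1]$, where
\begin{equation*}
\varphi(h) \deq -ph + \ln\bigl(1 - p + p\, e^{h}\bigr).
\end{equation*}

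For this final step I would apply Taylor's theorem to $\varphi$ around $h = 0$. A direct computation gives $\varphi(0) = 0$ and $\varphi'(h) = -p + \dfrac{p\, e^{h}}{1 - p + p\, e^{h}}$, so $\varphi'(0) = 0$. For the second derivative, setting $q(h) \deq \dfrac{p\, e^{h}}{1 - p + p\, e^{h}} \in (0,1)$, a short calculation yields $\varphi''(h) = q(h)\bigl(1 - q(h)\bigr) \le \tfrac{1}{4}$, since $x(1-x) \le 1/4$ for $x \in [0,1]$. Then Taylor's theorem with the Lagrange remainder produces $\varphi(h) = \tfrac{1}{2}\varphi''(\theta)\, h^2 \le h^2/8$ for some $\theta \in [0,h]$. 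Substituting back $h = t(b-a)$ gives the desired bound.

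The proof is essentially routine, so there is no single hard step; the one point requiring care is the uniform bound $\varphi''(h) \le 1/4$, which depends on recognizing the second derivative as a variance-type quantity of a Bernoulli random variable with parameter $q(h)$ and invoking the elementary inequality $x(1-x) \le 1/4$. Everything else is a bookkeeping exercise in convexity and Taylor expansion.
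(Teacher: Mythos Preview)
Your proof is correct and follows essentially the same approach as the paper: both use convexity of the exponential to reduce to a two-point distribution, then bound the resulting function $H_p(\lambda) = -p\lambda + \ln(1-p+pe^\lambda)$ (your $\varphi$) by $\lambda^2/8$ via Taylor's theorem and the variance bound $q(1-q)\le 1/4$. The only cosmetic difference is that you center $U$ first, whereas the paper works with $\xi = U - \expectation U$ and $p = (\expectation U - a)/(b-a)$ directly, but the two parametrizations coincide.
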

\begin{proof}
For every $p \in [0,1]$ and $\lambda \in \reals$, let us define the function
\begin{align}\label{eq:Hoeffding_function}
H_p(\lambda) \deq \ln \left( pe^{\lambda (1-p)} + (1-p)e^{-\lambda p}\right).
\end{align}	
Let $\xi = U - \expectation U$, where $\xi \in [a-\expectation U, b - \expectation U]$.
Using the convexity of the exponential function, we can write
\begin{align*}
\exp(t\xi)
&= \exp\left( \frac{U-a}{b-a} \cdot t(b-\expectation U) + \frac{b-U}{b-a}
\cdot t(a - \expectation U)\right) \\
&\le \left(\frac{U-a}{b-a}\right) \exp\big(t(b-\expectation U)\big)
+ \left(\frac{b-U}{b-a}\right)\exp\big(t(a-\expectation U)\big).
\end{align*}
Taking expectations of both sides, we get
\begin{align}
\expectation[\exp(t\xi)] &\le \left(\frac{\expectation U - a}{b-a}\right)
\exp\big(t(b-\expectation U)\big) + \left(\frac{b - \expectation U}{b-a}\right)
\exp\big(t(a-\expectation U)\big) \nonumber\\
&= \exp\big( H_p(\lambda)\big) \label{eq:Hoeffding_bound_0}
\end{align}
where we have let
\begin{align*}
p = \frac{\expectation U - a}{b-a} \qquad \text{and} \qquad \lambda =  t(b-a).
\end{align*}
In the following, we show that for every $\lambda \in \reals$
\begin{equation} \label{eq:bound on Hoeffding_function}
H_p(\lambda) \le \frac{\lambda^2}{8}, \quad \forall \, p \in [0,1].
\end{equation}
From \eqref{eq:Hoeffding_function}, we have
\begin{align}
& H_p(\lambda) = -\lambda p + \ln\bigl(p e^\lambda + (1-p)\bigr),
\label{eq:Hoeffding_function2}\\
& H_p'(\lambda) = -p + \frac{p e^\lambda}{p e^\lambda + 1-p},
\label{eq:derivative of Hoeffding_function} \\
& H_p''(\lambda) = \frac{p(1-p) e^{\lambda}}{\bigl(p e^{\lambda} + (1-p) \bigr)^2}.
\label{eq:2nd derivative of Hoeffding_function}
\end{align}
From \eqref{eq:Hoeffding_function2}--\eqref{eq:2nd derivative of Hoeffding_function},
we have $H_p(0) = H_p'(0)=0$, and
\begin{align*}
H_p''(\lambda) &= \frac{1}{4} \, \frac{p e^{\lambda} \cdot (1-p)}{\left(\frac{p
e^{\lambda} + (1-p)}{2} \right)^2} \\
& \le \frac{1}{4}, \quad \forall \, \lambda \in \reals, \; p \in [0,1]
\end{align*}
where the last inequality holds since the geometric mean is less than or equal
to the arithmetic mean. Using a Taylor's series expansion, there exists an
intermediate value $\theta \in [0, \lambda]$ (or $\theta \in [\lambda, 0]$ if
$t < 0$) such that
\begin{align*}
H_p(\lambda) = H_p(0) + H_p'(0) \lambda + \frac{1}{2} \, H_p''(\theta) \, \lambda^2
\end{align*}
so, consequently, \eqref{eq:bound on Hoeffding_function} holds.
Substituting this bound into \eqref{eq:Hoeffding_bound_0} and using the above definitions
of $p$ and $\lambda$, we get \eqref{eq:Hoeffding}.
\end{proof}

\subsection{The Azuma--Hoeffding inequality}
\label{subsection: Azuma--Hoeffding inequality}
The Azuma--Hoeffding inequality, stated in
Theorem~\ref{theorem: Azuma--Hoeffding inequality} below, is a useful concentration
inequality for bounded-difference martingales. It was proved by Hoeffding \cite{Hoeffding}
for sums of independent and bounded random variables, followed by a discussion on sums of
dependent random variables. This inequality was later generalized by Azuma \cite{Azuma} to
the more general setting of bounded-difference martingales. The proof of the Azuma--Hoeffding
inequality that we present below is a nice concrete illustration of the general approach outlined
in the beginning of this section. Moreover, we will have many occasions to revisit this
proof in order to obtain various refinements of the Azuma--Hoeffding inequality.

\begin{theorem}[The Azuma--Hoeffding inequality]
Let $\{X_k, \mathcal{F}_k\}_{k=0}^n$
be a real-valued martingale sequence.
Suppose that there exist nonnegative reals $d_1,\ldots,d_n$, such that
$ |X_k - X_{k-1}| \leq d_k$ a.s.\ for all $k \in \{1,\ldots,n\}$. Then, for every $r > 0$,
\begin{equation}
\pr( | X_n - X_0 | \geq r) \leq 2 \exp\left(-\frac{r^2}{2
\sum_{k=1}^n d_k^2}\right).
\label{eq: Azuma--Hoeffding concentration inequality - general case}
\end{equation}
\label{theorem: Azuma--Hoeffding inequality}
\end{theorem}
\begin{proof}
For an arbitrary $r > 0$,
\begin{equation}
\pr(|X_n - X_0| \geq r) = \pr(X_n - X_0 \geq r) + \pr(X_n - X_0 \leq -r).
\label{eq: union of disjoint events}
\end{equation}
Let $\xi_k \triangleq X_k - X_{k-1}$ for $k \in \{1, \ldots, n\}$ denote
the differences of the martingale sequence. By hypothesis,
$|\xi_k| \leq d_k$ and $\expectation[\xi_k \, | \, \mathcal{F}_{k-1}] = 0$
a.s.\ for every $k \in \{1, \ldots, n\}$.

We now apply the Chernoff bounding technique:
\begin{eqnarray}
&& \pr(X_n - X_0 \geq r) \nonumber \\
&& = \pr \Biggl(\sum_{k=1}^n \xi_k \geq r \Biggr) \nonumber \\
&& \leq \exp(-tr) \, \expectation\left[\exp \left(t \sum_{k=1}^n \xi_k \right) \right],
\quad \forall \, t \geq 0.
\label{eq: Chernoff's inequality}
\end{eqnarray}
By the law of iterated expectations, the expectation on the right-hand side of
\eqref{eq: Chernoff's inequality} is equal to
\begin{eqnarray}
&& \expectation \biggl[ \exp \biggl(t \sum_{k=1}^n \xi_k \biggr)
\biggr] \nonumber \\
&& = \expectation \Biggl[ \expectation \biggl[ \exp \biggl(t
\sum_{k=1}^n \xi_k \biggr) \, \bigg| \, \mathcal{F}_{n-1} \biggr] \Biggr]
\nonumber \\
&& = \expectation \Biggl[ \exp \biggl(t \sum_{k=1}^{n-1} \xi_k
\biggr) \, \expectation \bigl[ \exp(t \xi_n) \, | \,
\mathcal{F}_{n-1} \bigr] \Biggr]
\label{eq: smoothing theorem}
\end{eqnarray}
where the last equality holds since $Y_n \triangleq \exp \bigl(t
\sum_{k=1}^{n-1} \xi_k \bigr)$ is $\mathcal{F}_{n-1}$-measurable.
We now apply the Hoeffding lemma with the conditioning on $\cF_{n-1}$.
Indeed, we know that $\expectation[\xi_n|\cF_{n-1}] = 0$ and that
$\xi_n \in [-d_n,d_n]$ a.s., so Lemma~\ref{lm:Hoeffding} gives that
\begin{align}
\label{eq:bound on the conditional moment generating function of xi_n}
\expectation \bigl[\exp(t \xi_n) \, | \, \mathcal{F}_{n-1}\bigr]
\leq \exp\left(\frac{t^2 \, d_n^2}{2}\right).
\end{align}
Continuing recursively in a similar manner, we can bound the quantity
in \eqref{eq: smoothing theorem} by
\begin{equation}  \label{eq:bound on the moment generating function of Xn-X0}
\expectation \biggl[ \exp \biggl(t \sum_{k=1}^n \xi_k \biggr)
\biggr] \leq \prod_{k=1}^n \exp\left(\frac{t^2 \, d_k^2}{2}\right)
= \exp \left(\frac{t^2}{2} \, \sum_{k=1}^n d_k^2 \right).
\end{equation}
Substituting this bound into \eqref{eq: Chernoff's inequality}, we obtain
\begin{align}\label{eq:parametric_Azuma}
\pr(X_n - X_0 \geq r) \leq
\exp\left(-t r + \frac{t^2}{2} \, \sum_{k=1}^n d_k^2 \right), \quad \forall \, t \geq 0.
\end{align}
Finally, choosing
$t = r \left(\sum_{k=1}^n d_k^2\right)^{-1}$ to minimize the right-hand side of
\eqref{eq:parametric_Azuma}, we get
\begin{equation}
\pr(X_n - X_0 \geq r) \leq \exp \left(-\frac{r^2}{2 \sum_{k=1}^n d_k^2} \right).
\label{eq: one-sided Azuma--Hoeffding inequality}
\end{equation}
Since $\{X_k, \mathcal{F}_k \}$ is a martingale with bounded differences,
so is $\{-X_k, \mathcal{F}_{k}\}$ (with the same bounds on its differences).
This implies that the same bound is also valid for the probability
$\pr(X_n - X_0 \leq -r)$. Using these bounds in
\eqref{eq: union of disjoint events}, we complete the proof of
Theorem~\ref{theorem: Azuma--Hoeffding inequality}.
\end{proof}

\begin{remark}
In \cite[Theorem~3.13]{McDiarmid_tutorial}, the Azuma--Hoeffding inequality
is stated as follows: Let $\{Y_k, \mathcal{F}_k\}_{k=0}^n$ be a
martingale-difference sequence with $Y_0=0$ (i.e., $Y_k$ is
$\mathcal{F}_k$-measurable, $\expectation[|Y_k|] < \infty$ and
$\expectation[Y_k| \mathcal{F}_{k-1}]=0$ a.s.\ for every $k \in
\{1, \ldots, n\}$). Assume that, for every $k$, there
exist some numbers $a_k, b_k \in \reals$ such that, a.s.,
$a_k \leq Y_k \leq b_k$. Then, for every $r \geq 0$,
\begin{equation}
\pr \left( \bigg|\sum_{k=1}^n Y_k\bigg| \geq r \right) \leq 2
\exp\left(-\frac{2r^2}{\sum_{k=1}^n (b_k-a_k)^2} \right).
\label{eq: concentration inequality for a martingale-difference
sequence (McDiarmid's tutorial)}
\end{equation}
Consider a real-valued martingale sequence $\{X_k, \mathcal{F}_k\}_{k=0}^n$,
where $a_k \leq X_k - X_{k-1} \leq b_k$ a.s.\ for every $k$. Let $Y_k
\triangleq X_k - X_{k-1}$ for every $k \in \{1, \ldots, n\}$. Then it
is easy to see that $\{Y_k, \mathcal{F}_k\}_{k=0}^n$ is a
martingale-difference sequence. Since $\sum_{k=1}^n Y_k = X_n - X_0$, it
follows from
\eqref{eq: concentration inequality for a martingale-difference sequence (McDiarmid's tutorial)}
that
$$\pr \left( |X_n - X_0| \geq r \right) \leq 2
\exp\left(-\frac{2r^2}{\sum_{k=1}^n (b_k-a_k)^2} \right),
\quad \forall \, r > 0.$$
\end{remark}

\begin{example}
Let $\{Y_i\}_{i=0}^{\infty}$ be i.i.d.\ binary random variables
which take values $\pm d$ with equal
probability, where $d > 0$ is some constant. Let $X_k = \sum_{i=0}^k Y_i$ for $k \in \{0, 1,
\ldots, \}$, and define the natural filtration $\mathcal{F}_0
\subseteq \mathcal{F}_1 \subseteq \mathcal{F}_2 \ldots $ where
$$\mathcal{F}_k = \sigma(Y_0, \ldots, Y_k) \, , \quad \forall
\, k \in \{0, 1, \ldots, \}$$ is the $\sigma$-algebra
generated by  $Y_0, \ldots, Y_k$. Note that
$\{X_k, \mathcal{F}_k\}_{k=0}^{\infty}$ is a martingale sequence, and
(a.s.) $ |X_k - X_{k-1}| = |Y_k| = d, \, \forall \, k \in
\naturals$. It therefore follows from the Azuma--Hoeffding inequality that
\begin{equation}
\pr( | X_n - X_0 | \geq \alpha \sqrt{n}) \leq 2
\exp\left(-\frac{\alpha^2}{2d^2}\right) \label{eq: Azuma--Hoeffding
inequality for example1}
\end{equation}
for every $\alpha \geq 0$ and $n \in \naturals$.  Since the random
variables $\{Y_i\}_{i=0}^{\infty}$ are i.i.d.\ with zero mean and
variance~$d^2$, the Central Limit Theorem (CLT) says that
$\frac{1}{\sqrt{n}} (X_n - X_0) = \frac{1}{\sqrt{n}} \sum_{k=1}^n Y_k$
converges in distribution to
$\mathcal{N}(0,d^2)$. Therefore, for every $\alpha \geq 0$,
\begin{equation}
\lim_{n \rightarrow \infty} \pr( | X_n - X_0 | \geq \alpha
\sqrt{n})= 2 \, Q\Bigl(\frac{\alpha}{d}\Bigr)
\label{CLT1 - i.i.d. RVs}
\end{equation}
where
\begin{equation}
Q(x) \triangleq \frac{1}{\sqrt{2\pi}} \, \int_{x}^{\infty}
\exp\left(-\frac{t^2}{2}\right) \mathrm{d}t, \quad \forall
\, x \in \reals \label{eq: Q function}
\end{equation}
is the complementary standard Gaussian CDF (also known as the
$Q$-function), for which we have the following exponential
upper and lower bounds (see, e.g., \cite[Section~3.3]{MUD_book_SV}):
\begin{equation}
\frac{1}{\sqrt{2\pi}} \, \frac{x}{1+x^2} \cdot \exp\left(-\frac{x^2}{2}\right)
< Q(x) < \frac{1}{\sqrt{2\pi} \, x} \cdot \exp\left(-\frac{x^2}{2}\right), \;
\; \forall \, x>0. \label{eq: upper and lower bounds for the Q
function}
\end{equation}
From \eqref{CLT1 - i.i.d. RVs} and \eqref{eq: upper and lower bounds for the Q
function}, it follows that the exponent on the right-hand side of
\eqref{eq: Azuma--Hoeffding inequality for example1} is exact.
\label{example1}
\end{example}

\begin{example} Fix some $\gamma \in (0,1]$. Let us generalize
Example~\ref{example1} above by considering the case where the
i.i.d.\ binary random variables $\{Y_i\}_{i=0}^{\infty}$ have the probability law
$$ \pr(Y_i = +d) = \frac{\gamma}{1+\gamma}, \quad \pr(Y_i =
-\gamma d) = \frac{1}{1+\gamma} \; .$$
Therefore, each $Y_i$ has zero mean and variance $\sigma^2 = \gamma d^2$.
Define the martingale sequence $\{X_k, \mathcal{F}_k\}_{k=0}^{\infty}$ as
in Example~\ref{example1}. By the CLT, $\frac{1}{\sqrt{n}} \, (X_n - X_0) =
\frac{1}{\sqrt{n}} \, \sum_{k=1}^n Y_k$ converges weakly to
$\mathcal{N}(0, \gamma d^2)$, so for every $\alpha \geq 0$
\begin{equation}
\lim_{n \rightarrow \infty} \pr( | X_n - X_0 | \geq \alpha
\sqrt{n})= 2 \, Q\biggl(\frac{\alpha}{\sqrt{\gamma} \, d}\biggr).
\label{CLT2 - i.i.d. RVs}
\end{equation}
From the bounds on the $Q$-function given in
\eqref{eq: upper and lower bounds for the Q function}, it follows that the
right-hand side of \eqref{CLT2 - i.i.d. RVs} scales exponentially like
$e^{-\frac{\alpha^2}{2 \gamma d^2}}$. Hence, the exponent in this example
is improved by a factor of $\frac{1}{\gamma}$ in comparison to the Azuma--Hoeffding inequality
(which gives the same bound as in Example~\ref{example1} since
$|X_k - X_{k-1}| \leq d$ for every $k \in \naturals$). This indicates that
a refinement of the Azuma--Hoeffding inequality is possible if additional information on
the variance is available. Refinements of this sort were studied extensively
in the probability literature, and they are the focus of
Section~\ref{subsection: Martingales with a uniform bound on the differences}.
\label{example2}
\end{example}

\subsection{McDiarmid's inequality}
\label{subsection: McDiarmid's inequality}

A prominent application of the martingale approach is the derivation of a
powerful inequality due to McDiarmid (see \cite[Theorem~3.1]{McDiarmid_1997}
or \cite{McDiarmid_bounded_differences_Martingales_1989}), also known as the
{\em bounded-difference inequality}. Let $\cX$ be a set, and let
$f \colon \cX^n \to \reals$ be a function that satisfies the
{\em bounded difference assumption}
\begin{align}\label{eq:bounded_differences_ch2}
\sup_{x_1,\ldots,x_n,x'_i \in \cX} \Big| & f(x_1,\ldots,x_{i-1},x_i,x_{i+1}\ldots,x_n) \nonumber\\
& - f(x_1,\ldots,x_{i-1},x'_i,x_{i+1},\ldots,x_n) \Big| \le d_i
\end{align}
for every $1 \le i \le n$, where $d_1,\ldots,d_n$ are arbitrary nonnegative
real constants. This is equivalent to saying that, for every given $i$,
the variation of the function $f$ with respect to\ its $i$th coordinate is upper
bounded by $d_i$. (We assume that each argument of $f$ takes values in
the same set $\cX$ mainly for simplicity of presentation; an extension
to different domains for each variable is easy.)

\begin{theorem}[McDiarmid's inequality]
Let $\{X_k\}_{k=1}^n$ be independent (not necessarily identically distributed)
random variables taking values in a measurable space $\cX$. Consider a
random variable $U = f(X^n)$ where $f \colon \cX^n \to \reals$ is a measurable
function satisfying the bounded difference assumption \eqref{eq:bounded_differences_ch2},
and $X^n \triangleq (X_1, \ldots, X_n)$.
Then, for every $r \geq 0$,
\begin{equation}
\pr\left( \bigl| U - \expectation U \bigr|
\geq r\right) \leq 2 \exp \left(-\frac{2 r^2}{\sum_{k=1}^n d_k^2} \right).
\label{eq: McDiarmid's inequality}
\end{equation}
\label{theorem: McDiarmid's inequality}
\end{theorem}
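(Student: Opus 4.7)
The plan is to apply the two ingredients of the martingale method already developed in the chapter: a Doob-type martingale decomposition together with a conditional version of the Hoeffding lemma, and then optimize via the Chernoff technique. The extra factor of $4$ in the exponent (compared to Azuma's inequality) will come from the fact that, under the bounded-difference assumption together with independence of the $X_i$'s, one can control not just the magnitude of each martingale increment but the total length of the (conditional) interval in which it lives; Hoeffding's lemma then yields $\exp(t^2 d_i^2/8)$ rather than $\exp(t^2 d_i^2/2)$.

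First I would set $\cF_0 = \{\emptyset,\Omega\}$ and $\cF_i = \sigma(X_1,\ldots,X_i)$ for $1 \le i \le n$, and form the Doob martingale $U_i \deq \expectation[U \mid \cF_i]$, so that $U_0 = \expectation U$ and $U_n = U$ almost surely, giving the decomposition
\begin{equation*}
U - \expectation U = \sum_{i=1}^{n} \xi_i, \qquad \xi_i \deq U_i - U_{i-1}.
\end{equation*}
Because $X_1,\ldots,X_n$ are independent, I can write $U_i$ as a deterministic function of $X_1,\ldots,X_i$ obtained by integrating out $X_{i+1},\ldots,X_n$ against their joint law. Define, for each $i$,
\begin{equation*}
A_i \deq \inf_{x \in \cX} \expectation\bigl[f(X_1,\ldots,X_{i-1},x,X_{i+1},\ldots,X_n) \,\big|\, \cF_{i-1}\bigr] - U_{i-1},
\end{equation*}
and $B_i$ the corresponding supremum over $x$. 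Both $A_i$ and $B_i$ are $\cF_{i-1}$-measurable, and we have $A_i \le \xi_i \le B_i$ almost surely; moreover, the bounded-difference hypothesis \eqref{eq:bounded_differences_ch2} guarantees $B_i - A_i \le d_i$ pointwise. This verification of the conditional-range bound is the main technical step — everything else is routine — because it is where independence is truly used; without independence, one could only bound $|\xi_i|$ by $d_i$ rather than control its conditional oscillation by $d_i$, and the improved constant would be lost.

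Next I would invoke the conditional analogue of Hoeffding's Lemma~\ref{lm:Hoeffding} applied to $\xi_i$ given $\cF_{i-1}$: since $\expectation[\xi_i \mid \cF_{i-1}] = 0$ and $\xi_i$ lies in an $\cF_{i-1}$-measurable interval of length at most $d_i$, the proof of Lemma~\ref{lm:Hoeffding} goes through verbatim inside the conditional expectation and yields, almost surely,
\begin{equation*}
\expectation\!\left[e^{t \xi_i} \,\big|\, \cF_{i-1}\right] \le \exp\!\left(\frac{t^2 d_i^2}{8}\right), \quad \forall\, t \in \reals.
\end{equation*}
Iterating this bound through the tower property exactly as in the proof of Theorem~\ref{theorem: Azuma's concentration inequality}, I obtain
\begin{equation*}
\expectation\!\left[\exp\!\left(t \sum_{i=1}^{n} \xi_i\right)\right] \le \exp\!\left(\frac{t^2}{8}\sum_{i=1}^{n} d_i^2\right).
\end{equation*}

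Finally, the Chernoff bound \eqref{eq:upper_tail} gives, for every $t > 0$,
\begin{equation*}
\pr(U - \expectation U \ge r) \le \exp\!\left(-r t + \frac{t^2}{8}\sum_{i=1}^{n} d_i^2\right),
\end{equation*}
and optimizing with $t = 4 r / \sum_{i=1}^{n} d_i^2$ yields $\pr(U - \expectation U \ge r) \le \exp\!\left(-2 r^2 / \sum_{i=1}^{n} d_i^2\right)$. Applying the identical argument to $-U$ (whose associated function $-f$ also satisfies the bounded-difference condition with the same constants $d_i$) handles the lower tail, and a union bound over the two tails produces the factor of $2$ in \eqref{eq: McDiarmid's inequality}.
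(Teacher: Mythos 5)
Your proof is correct and follows essentially the same route as the paper: the same Doob martingale decomposition, the same identification of the $\cF_{i-1}$-measurable interval $[A_i,B_i]$ of length $\le d_i$ containing $\xi_i$ (with independence used at exactly the same spot to control $B_i - A_i$), the conditional Hoeffding lemma, and the standard Chernoff optimization at $t = 4r/\sum d_i^2$. The only cosmetic difference is that you spell out the function $f_k$ of the paper's proof as the explicit conditional expectation over $X_{i+1},\ldots,X_n$; the substance is identical.
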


\begin{remark}
One can use the Azuma--Hoeffding inequality for a derivation of a
concentration inequality in the considered setting. However, the following
proof provides an improvement by a factor of~4 in the exponent
of the bound.
\end{remark}

\begin{proof}
Let $\cF_0 = \{\emptyset, \Omega\}$ be the trivial $\sigma$-algebra,
and for $k \in \{1, \ldots, n\}$ let $\cF_k = \sigma(X_1, \ldots, X_k)$
be the $\sigma$-algebra  generated by $X_1, \ldots, X_k$.
For every $k \in \{1,\ldots,n\}$, define
\begin{equation}
\xi_k \triangleq \expectation \bigl[ f(X^n) \, | \, \mathcal{F}_k \bigr]
- \expectation \bigl[ f(X^n) \, | \, \mathcal{F}_{k-1} \bigr].
\label{eq: martingale difference}
\end{equation}
Note that $\mathcal{F}_0 \subseteq \mathcal{F}_1 \ldots \subseteq \mathcal{F}_n$
is a filtration, and
\begin{eqnarray}
&& \expectation \bigl[ f(X^n) \, | \, \mathcal{F}_0 \bigr] =
\expectation \bigl[ f(X^n) \bigr], \nonumber \\
&& \expectation \bigl[ f(X^n) \, | \, \mathcal{F}_n \bigr] = f(X^n).
\end{eqnarray}
From the last three equalities, it follows that
\begin{equation*}
f(X^n) - \expectation \bigl[ f(X^n) \bigr] = \sum_{k=1}^n \xi_k.
\end{equation*}
In the following, we need a lemma:
\begin{lemma} \label{lemma: McDiarmid}
For every $k \in \{1, \ldots, n\}$, the following properties hold a.s.:
\medskip

\noindent 1. $\expectation[\xi_k \, | \, \mathcal{F}_{k-1}] = 0$ and $\xi_k$ is
$\mathcal{F}_k$-measurable, so $\{\xi_k, \mathcal{F}_k\}$ is a martingale-difference.
\medskip

\noindent 2. $|\xi_k| \leq d_k$.
\medskip

\noindent 3. $\xi_k \in [A_k, A_k + d_k]$ where $A_k$ is a non-positive and
$\mathcal{F}_{k-1}$-measurable random variable.
\end{lemma}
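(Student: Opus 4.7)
The plan is to exploit the independence of $X_1,\ldots,X_n$ in order to realize both conditional expectations in \eqref{eq: martingale difference} as explicit (deterministic) functions of the coordinates already revealed, and then to transfer the bounded-difference assumption on $f$ pointwise to those functions. Property 1 is then immediate, and Properties 2 and 3 will follow from a suitable choice of an $\cF_{k-1}$-measurable lower endpoint $A_k$.

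For Property 1, the tower property gives
$$\expectation[\xi_k \mid \cF_{k-1}] = \expectation\bigl[\expectation[f(X^n)\mid\cF_k] \,\big|\, \cF_{k-1}\bigr] - \expectation[f(X^n)\mid\cF_{k-1}] = 0 \quad \text{a.s.},$$
since $\cF_{k-1} \subseteq \cF_k$, and $\xi_k$ is $\cF_k$-measurable because both terms in its definition are. For Properties 2 and 3, I would introduce the deterministic function
$$g_k(x_1,\ldots,x_k) \deq \expectation\bigl[f(x_1,\ldots,x_k,X_{k+1},\ldots,X_n)\bigr],$$
so that, by independence of $X_{k+1},\ldots,X_n$ from $(X_1,\ldots,X_k)$ and of $X_k$ from $\cF_{k-1}$, one has almost surely
$$\expectation[f(X^n)\mid\cF_k] = g_k(X_1,\ldots,X_k), \qquad \expectation[f(X^n)\mid\cF_{k-1}] = \bar{g}_k(X_1,\ldots,X_{k-1}),$$
where $\bar{g}_k(x_1,\ldots,x_{k-1}) \deq \expectation[g_k(x_1,\ldots,x_{k-1},X_k)]$. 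The bounded-difference assumption transfers pointwise from $f$ to $g_k$: for any $x_1,\ldots,x_{k-1}$ and any $x,x'\in\cX$, the quantity $g_k(x_1,\ldots,x_{k-1},x) - g_k(x_1,\ldots,x_{k-1},x')$ is the expectation of a random variable bounded in modulus by $d_k$, and is therefore itself bounded by $d_k$.

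To obtain Property 3, I would then set
$$A_k \deq \bigl[\inf_{x\in\cX} g_k(X_1,\ldots,X_{k-1},x)\bigr] - \bar{g}_k(X_1,\ldots,X_{k-1}),$$
and define $B_k$ analogously with a supremum. Both are $\cF_{k-1}$-measurable. Since $\bar{g}_k(X_1,\ldots,X_{k-1})$ is an average of $g_k(X_1,\ldots,X_{k-1},X_k)$ in the last coordinate, it lies between the inf and the sup, so $A_k \leq 0 \leq B_k$; and the pointwise bound on $g_k$ just derived gives $B_k - A_k \leq d_k$. Since $\xi_k = g_k(X_1,\ldots,X_k) - \bar{g}_k(X_1,\ldots,X_{k-1}) \in [A_k, B_k] \subseteq [A_k, A_k + d_k]$, Property 3 follows; Property 2 is then immediate, because $A_k \leq 0 \leq B_k$ together with $B_k - A_k \leq d_k$ force $|A_k|,|B_k|\leq d_k$, and hence $|\xi_k|\leq d_k$. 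The only mildly delicate point is measurability of the inf and sup when $\cX$ is uncountable; this is handled in the usual way by working with the essential inf and sup with respect to the law of $X_k$ (which preserves all of the inequalities above), and is routine under the standard Borel-space assumption implicit in the setup.
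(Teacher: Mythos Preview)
Your proof is correct and follows essentially the same route as the paper's: both introduce a deterministic function representing $\expectation[f(X^n)\mid\cF_k]$ via independence, define $A_k$ and $B_k$ as the infimum and supremum over the $k$th coordinate (your $A_k$ coincides with the paper's once one unwinds the notation), and bound $B_k - A_k \le d_k$ by pushing the bounded-difference assumption through the expectation. The paper likewise flags the measurability of the infimum/supremum for uncountable $\cX$ in a footnote, just as you do at the end.
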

\begin{proof}
The random variable $\xi_k$, defined in \eqref{eq: martingale difference}, is
$\mathcal{F}_k$-measurable since $\mathcal{F}_{k-1} \subseteq \mathcal{F}_k$,
and $\xi_k$ is a difference of two functions where one is
$\mathcal{F}_k$-measurable and the other is $\mathcal{F}_{k-1}$-measurable.
Furthermore, since $\{\cF_i\}$ is a filtration, it follows from
\eqref{eq: martingale difference} and the tower principle for conditional
expectations that $\expectation[\xi_k \, | \, \mathcal{F}_{k-1}] = 0.$
This proves the first item.
The second item follows from the first and third items since the latter two
items imply that
\begin{align} \label{eq:sub-interval of third item}
A_k &= \expectation[A_k | \cF_{k-1}] \nonumber \\
    &\leq \expectation[\xi_k | \cF_{k-1}] = 0 \nonumber \\
    &\leq \expectation[A_k + d_k | \cF_{k-1}] \nonumber \\
    &=A_k + d_k
\end{align}
where the first and last equalities hold since $A_k$ is $\mathcal{F}_{k-1}$-measurable.
Hence, $0 \in [A_k, A_k + d_k]$ which implies that $[A_k, A_k + d_k] \subseteq [-d_k, d_k]$;
consequently, it follows from the third item that $|\xi_k| \le d_k$.

To prove the third item, note that $\xi_k = f_k(X_1, \ldots, X_k)$ holds
a.s. for the $\cF_k$-measurable function $f_k \colon \cX^k \rightarrow \reals$
which is given by
\begin{align} \label{eq:f_k in McDiarmid's proof}
& f_k(x_1, \ldots, x_k) \nonumber \\
& = \expectation\bigl[f(x_1, \ldots, x_k, X_{k+1}, \ldots, X_n)\bigr]
- \expectation\bigl[f(x_1, \ldots, x_{k-1}, X_k, \ldots, X_n)\bigr].
\end{align}
Equality \eqref{eq:f_k in McDiarmid's proof} holds due to
the definition of $\{\xi_k\}$ in \eqref{eq: martingale difference}
with $\cF_k = \sigma(X_1, \ldots, X_k)$ for $k \in \{1, \ldots, n\}$,
and the independence of the random variables $\{X_k\}_{k=1}^n$.
Let us define, for every $k \in \{1, \ldots, n\}$,
\begin{align*}
& A_k \triangleq \inf_{x \in \mathcal{X}} f_k(X_1, \ldots, X_{k-1}, x), \nonumber \\
& B_k \triangleq \sup_{x \in \mathcal{X}} f_k(X_1, \ldots, X_{k-1}, x)
\end{align*}
which are $\mathcal{F}_{k-1}$-measurable\footnote{This is certainly the case
if ${\cal X}$ is countably infinite. For uncountable spaces, one needs to
introduce some regularity conditions to guarantee measurability of infima
and suprema. We choose not to dwell on these technicalities here to keep
things simple; the book by van der Vaart and Wellner \cite{Vaart_Wellner_book}
contains a thorough treatment of these issues.}, and by definition
$\xi_k \in [A_k, B_k]$ holds almost surely. Furthermore, for every point
$(x_1, \ldots, x_{k-1}) \in \mathcal{X}^{k-1}$,
we obtain from \eqref{eq:f_k in McDiarmid's proof} that
\begin{align}
& \sup_{x \in \mathcal{X}} f_k(x_1, \ldots, x_{k-1}, x) -
\inf_{x' \in \mathcal{X}} f_k(x_1, \ldots, x_{k-1}, x') \nonumber \\
& = \sup_{x, x' \in \mathcal{X}} \bigl\{ f_k(x_1, \ldots, x_{k-1}, x) -
f_k(x_1, \ldots, x_{k-1}, x') \bigr\} \nonumber \\
& = \sup_{x, x' \in \mathcal{X}} \Bigl\{
\expectation \bigl[ f(x_1, \ldots, x_{k-1}, x, X_{k+1}, \ldots, X_n)]
\nonumber\\
& \qquad \qquad - \expectation \bigl[ f(x_1, \ldots, x_{k-1}, x', X_{k+1}, \ldots, X_n)] \Bigr\}
\label{eq: independence} \\
& = \sup_{x, x' \in \mathcal{X}} \Bigl\{
\expectation \bigl[ f(x_1, \ldots, x_{k-1}, x, X_{k+1}, \ldots, X_n)
\nonumber\\
& \qquad \qquad - f(x_1, \ldots, x_{k-1}, x', X_{k+1}, \ldots, X_n)] \Bigr\} \nonumber \\
& \leq d_k \label{eq: bound on the variation of g with respect to k-th coordinate}
\end{align}
where \eqref{eq: independence} follows from \eqref{eq:f_k in McDiarmid's proof},
and \eqref{eq: bound on the variation of g with respect to k-th coordinate}
follows from the bounded-difference condition in \eqref{eq:bounded_differences_ch2}.
Hence, $B_k - A_k \leq d_k$ a.s., which implies that
$\xi_k \in [A_k, A_k+d_k]$. Note that
the third item of the lemma gives better control on the range of $\xi_k$ than what we
had in the proof of the Azuma--Hoeffding inequality (i.e., item~2 asserts that $\xi_k$ is
contained in the interval $[-d_k, d_k]$ which is twice longer than the sub-interval
$[A_k,A_k+d_k]$ in the third item, see \eqref{eq:sub-interval of third item}).
\end{proof}
We now proceed in the same manner as in the proof of the Azuma--Hoeffding inequality.
Specifically, for $k \in \{1,\ldots,n\}$, $\xi_k \in [A_k,A_k+d_k]$ a.s.,
where $A_k$ is $\cF_{k-1}$-measurable, and $\expectation[\xi_k|\cF_{k-1}] = 0$. Thus,
we may apply the Hoeffding lemma (see Lemma~\ref{lm:Hoeffding}) with a conditioning
on $\cF_{k-1}$ to get
\begin{align} \label{eq:bound on exponent in McDiarmid's proof}
\expectation\Big[e^{t\xi_k}\Big|\cF_{k-1}\Big] \le \exp\left( \frac{t^2d^2_k}{8}\right).
\end{align}
Similarly to the proof of the Azuma--Hoeffding inequality, by repeatedly using
the recursion in \eqref{eq: smoothing theorem}, the last inequality implies that
\begin{equation}
\expectation \biggl[ \exp \biggl(t \sum_{k=1}^n \xi_k \biggr)
\biggr] \leq \exp \left(\frac{t^2}{8} \, \sum_{k=1}^n d^2_k \right)
\end{equation}
and, from \eqref{eq: Chernoff's inequality},
\begin{align}
& \pr(f(X^n) - \expectation[f(X^n)] \geq r) \nonumber \\
& = \pr\left( \sum_{k=1}^n \xi_k \geq r \right) \nonumber \\
& \leq \exp\left(-t r + \frac{t^2}{8} \, \sum_{k=1}^n d_k^2 \right),
\quad \forall \, t \geq 0.
\label{eq:parametric_McDiarmid}
\end{align}
The choice $t = 4r \left(\sum_{k=1}^n d_k^2\right)^{-1}$ minimizes the expression
in \eqref{eq:parametric_McDiarmid}, so
\begin{equation}
\pr\Big(f(X^n) - \expectation[f(X^n)] \geq r\Big)
\leq \exp \left(-\frac{2 r^2}{\sum_{k=1}^n d_k^2} \right).
\label{eq: one-sided McDiarmid's inequality}
\end{equation}
By replacing $f$ with $-f$, it follows that this bound is also valid for the probability
$\pr\bigl(f(X^n) - \expectation[f(X^n)] \leq -r \bigr)$, so
\begin{align*}
&\Pr\Big(\Big|f(X^n) - \expectation[f(X^n)]\Big| \geq r\Big) \nonumber \\
&= \Pr\Big(f(X^n) - \expectation[f(X^n)] \geq r\Big) +
\pr\Big(f(X^n) - \expectation[f(X^n)] \leq -r \Big) \nonumber \\
& \le 2 \exp \left(-\frac{2 r^2}{\sum_{k=1}^n d_k^2} \right)
\end{align*}
which gives the bound in \eqref{eq: McDiarmid's inequality}.
\end{proof}

\begin{example}  \label{example: concentration of L - Alon and Spencer's book}
A nice example from \cite[Section~7.5]{AlonS_tpm3}
is revisited in the following. The concentration inequality that was
obtained in \cite[Theorem~7.5.1]{AlonS_tpm3}, via the Azuma-Hoeffding
inequality, is improved in this example by applying McDiarmid's inequality
(Theorem~\ref{theorem: McDiarmid's inequality}).

Let $g \colon \{1, \ldots, n\} \to \{1, \ldots, n\}$ be chosen uniformly at
random from all $n^n$ such possible functions. Let $L(g)$ denote the number of values
$y \in \{1, \ldots, n\}$ for which the equation $g(x)=y$ has no solution (i.e.,
$g(x) \neq y$ for every $x \in \{1, \ldots, n\}$).
By the linearity of the expectation, we have
$\expectation[L(g)] = n \left(1-\frac{1}{n}\right)^n.$
Consequently, for every $n \in \naturals$,
\begin{align} \label{eq: tight bounds on the expected value of L}
\frac{n-1}{e} < \expectation[L(g)] < \frac{n}{e}.
\end{align}
The right-hand side of \eqref{eq: tight bounds on the expected value of L}
holds due to the fact that the sequence
$\{\bigl(1-\frac{1}{n}\bigr)^n\}_{n \in \naturals}$
is monotonic increasing, and it converges to $\frac{1}{e}$;
the left-hand side of \eqref{eq: tight bounds on the expected value of L}
can be verified as follows:
\begin{align*}
\expectation[L(g)] & = n \left(1-\frac{1}{n}\right)
\cdot \left(1-\frac{1}{n}\right)^{n-1} \\[0.1cm]
& = \frac{n-1}{\left( 1 + \frac{1}{n-1} \right)^{n-1}} \\
& > \frac{n-1}{e}
\end{align*}
where the last inequality holds since the sequence
$\{\bigl(1+\frac{1}{n}\bigr)^n\}_{n \in \naturals}$
is monotonic increasing, and it converges to $e$.
Hence, \eqref{eq: tight bounds on the expected value of L}
provides tight bounds on $\expectation[L(g)]$, which scale
linearly with $n$.

In \cite[Section~7.5]{AlonS_tpm3}, the following approach
implies a concentration inequality for $L(g)$ around its
expected value. Let us construct a martingale sequence
$\{X_k, \cF_k\}_{k=0}^n$ (see Fact~\ref{fact: construction of martingales})
by $$X_k = \expectation[L(g) \, | \, \cF_k], \quad \forall
\, k \in \{0, \ldots, n\}$$ with the natural filtration
$\cF_k = \sigma\bigl(g(1), \ldots, g(k)\bigr)$
which denotes the $\sigma$-algebra that is generated by the first
$k$ values of the random function $g$, for $k \in \{1, \ldots, n\}$,
and $\cF_0 = \{\emptyset, \{1, \ldots, n\}\}$
is the minimal $\sigma$-algebra that only includes the empty set and the
probability space. By construction, $X_0 = \expectation[L(g)]$ and $X_n = L(g)$.
Since a modification of one value of $g$ cannot change $L(g)$ by more than~1,
it follows that $|X_k - X_{k-1}| \leq 1$ for every $k \in \{1, \ldots, n\}$.
From the Azuma-Hoeffding inequality and
\eqref{eq: tight bounds on the expected value of L},
it follows that
\begin{align} \label{AH inequality for L}
\pr \left( \Bigl|L(g) - \frac{n}{e}\Bigr| > \alpha \sqrt{n} + 1 \right)
\leq 2 \exp\left(-\frac{\alpha^2}{2}\right), \quad \forall \, \alpha > 0.
\end{align}
This concentration result, as stated in \cite[Theorem~7.5.1]{AlonS_tpm3},
can be improved as follows:
let $f \colon \{1, \ldots, n\}^n \to \{1, \ldots, n\}$ be defined by
$L(g) \triangleq f\bigl( g(1), \ldots, g(n) \bigr)$
so, the function $f$ maps the $n$-length vector $(g(1), \ldots, g(n))$
to the number of elements $y \in \{1, \ldots, n\}$ where
$g(x) \neq y$ for every $x \in \{1, \ldots, n\}$.
Since by assumption $g(1), \ldots, g(n)$ are independent
random variables, the variation of $f$ with respect to each of its arguments
(while all the other $n-1$ arguments of $f$ are kept fixed) is no more than~1.
Consequently, from McDiarmid's inequality,
\begin{align} \label{McDiarmid's inequality for L}
\pr \left( \Bigl|L(g) - \frac{n}{e}\Bigr| > \alpha \sqrt{n} + 1 \right)
\leq 2 \exp\bigl(-2 \alpha^2\bigr), \quad \forall \, \alpha > 0,
\end{align}
which implies that the exponent of the concentration inequality
\eqref{AH inequality for L} is improved by a factor of~4.
\end{example}

\begin{example}
Let $B$ be a normed space, and $\{\underline{v}_k\}_{k=1}^n$  be
$n$ vectors in $B$. Let $\{\Theta_k\}_{k=1}^n$ be independent
${\rm Bernoulli}\bigl(\frac{1}{2}\bigr)$ random variables with
$\pr(\Theta_k = 1) = \pr(\Theta_k = -1) = \frac{1}{2}$, and let
$X = \Bigl\| \sum_{k=1}^n \Theta_k \, \underline{v}_k \Bigr\|.$
By setting
$$f(\theta_1, \ldots, \theta_n) = \left\| \sum_{k=1}^n \theta_k \,
\underline{v}_k \right\|, \quad  \forall \, \theta_k \in \{-1, +1\},
\; k \in \{1, \ldots, n\}$$
the variation of $f$ with respect to its $k$-th argument is upper
bounded by $2 \| \underline{v}_k\|$. Consequently, since $\{\Theta_k\}$
are independent, it follows from McDiarmid's inequality that
$$\pr\bigl( | X - \expectation[X] | \ge \alpha \bigr)
\leq 2 \exp\left(-\frac{\alpha^2}{2 \sum_{k=1}^n \|\underline{v}_k\|^2}
\right), \quad \forall \, \alpha > 0.$$
\end{example}

\begin{remark}
Due to the large
applicability of McDiarmid's inequality, there is an interest to improve
this inequality for sub-classes of Lipschitz functions of independent random
variables. An improvement of this inequality for separately Lipschitz functions
of independent random variables has been recently derived in \cite{Rio_ecp13}
(see also a recent follow-up paper in \cite{DedeckerF14}).
\end{remark}

\subsection{Hoeffding's inequality and its improved versions}

The following concentration inequality for sums of independent and bounded random variables,
originally due to Hoeffding \cite[Theorem~2]{Hoeffding}, can be viewed as a special
case of McDiarmid's inequality:

\begin{theorem}[Hoeffding's inequality]
Let $\{U_k\}_{k=1}^n$ be a sequence of independent and bounded random variables where,
for $k \in \{1, \ldots, n\}$, $U_k \in [a_k, b_k]$ holds a.s. for some finite
constants $a_k, b_k \in \reals$ ($a_k < b_k$).
Let $\mu_n \triangleq \sum_{k=1}^n \expectation[U_k]$. Then,
\begin{equation}
\pr\left( \left| \sum_{k=1}^n U_k - \mu_n \right| \geq r \right)
\leq 2 \exp\left(-\frac{2 r^2}{\sum_{k=1}^n (b_k-a_k)^2} \right), \quad \forall \, r \geq 0.
\label{eq: Hoeffding inequality}
\end{equation}
\label{theorem: Hoeffding inequality}
\end{theorem}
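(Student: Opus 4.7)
The plan is to obtain \eqref{eq: Hoeffding inequality} as an immediate corollary of McDiarmid's inequality. First I would define $f(u_1, \ldots, u_n) = \sum_{k=1}^n u_k$ on $\cX^n$ with $\cX = \reals$ and verify the bounded-difference assumption \eqref{eq:bounded_differences_ch2}: for any coordinate $i$ and any $u_1,\ldots,u_n,u'_i$ (with $u_i,u'_i \in [a_i,b_i]$), the change $|f(\ldots,u_i,\ldots) - f(\ldots,u'_i,\ldots)| = |u_i - u'_i|$ is at most $b_i - a_i$. So the bounded-difference constants are $d_k = b_k - a_k$. Since $U = f(U_1,\ldots,U_n) = \sum_k U_k$ and $\expectation U = \mu_n$, applying Theorem~\ref{theorem: McDiarmid's inequality} with these constants yields \eqref{eq: Hoeffding inequality} with no further work.

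A self-contained direct proof is almost equally short and worth sketching, since it reuses only the Chernoff technique of Section~\ref{subsection: Chernoff} together with Lemma~\ref{lm:Hoeffding} and sidesteps the martingale decomposition entirely. By the Chernoff bound \eqref{eq:upper_tail} applied to $\sum_k (U_k - \expectation U_k)$,
\begin{equation*}
\pr\!\left(\sum_{k=1}^n U_k - \mu_n \geq r\right)
\leq e^{-rt}\, \expectation\!\left[\exp\!\left(t \sum_{k=1}^n (U_k - \expectation U_k)\right)\right], \quad \forall\, t \geq 0.
\end{equation*}
Independence of the $U_k$ lets me factor the MGF into $\prod_{k=1}^n \expectation[\exp(t(U_k - \expectation U_k))]$, and since each centered variable $U_k - \expectation U_k$ is zero-mean and supported on an interval of length $b_k - a_k$, Hoeffding's lemma bounds each factor by $\exp(t^2(b_k - a_k)^2/8)$. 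The product gives the upper bound $\exp(-rt + t^2 \sum_k (b_k - a_k)^2/8)$, whose minimizer is $t = 4r/\sum_k (b_k - a_k)^2$, yielding the one-sided bound $\exp(-2r^2/\sum_k (b_k - a_k)^2)$. Applying the same argument to $-U_k$ covers the lower tail, and a union bound produces the factor of $2$ in \eqref{eq: Hoeffding inequality}.

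There is essentially no obstacle here, since both routes are mechanical once one recognizes that summation trivially satisfies the bounded-difference property with $d_k = b_k - a_k$. The only minor point to double-check is that the constants in the exponent line up exactly — in particular, that McDiarmid's factor of $2$ in \eqref{eq: McDiarmid's inequality} combines with $d_k^2 = (b_k - a_k)^2$ to give precisely the denominator in \eqref{eq: Hoeffding inequality} — but this is just arithmetic.
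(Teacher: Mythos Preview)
Your proposal is correct, and your primary route --- applying McDiarmid's inequality to $f(u^n)=\sum_{k=1}^n u_k$ with bounded-difference constants $d_k=b_k-a_k$ --- is exactly the paper's one-line proof. The alternative direct argument via Chernoff and Hoeffding's lemma that you sketch is also sound, though the paper does not bother with it here.
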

\begin{proof}
Apply Theorem~\ref{theorem: McDiarmid's inequality} to the function
$$f(u^n) \triangleq \sum_{k=1}^n u_k, \qquad
\forall \, u^n \in \prod_{k=1}^n [a_k, b_k].$$

An alternative elementary proof combines the Chernoff bound with Lemma~\ref{lm:Hoeffding}
to get
\begin{align} \label{eq: Hoeffding - step1}
& \pr\left( \sum_{k=1}^n U_k - \mu_n \geq r \right) \nonumber \\
& = \pr \left( \sum_{k=1}^n \bigl( U_k - \expectation[U_k] \bigr) \geq r \right) \nonumber \\
& \leq \exp(-tr) \, \expectation \left[ \exp\Biggl( t \sum_{k=1}^n \bigl( U_k - \expectation[U_k] \bigr)
\Biggr) \right]  \qquad \forall \, t \geq 0 \nonumber \\
& = \exp(-tr) \, \prod_{k=1}^n \expectation \left[ \exp\Bigl( t \bigl( U_k - \expectation[U_k] \bigr)
\Bigr) \right] \nonumber \\
& \leq \exp(-tr) \, \prod_{k=1}^n \exp\left( \frac{t^2 (b_k-a_k)^2}{8} \right) \nonumber \\
& = \exp\Bigl(-tr + \frac{t^2}{8} \sum_{k=1}^n (b_k - a_k)^2 \Bigr).
\end{align}
Optimization of the right-hand side of \eqref{eq: Hoeffding - step1}
with respect to $t$ gives $$t = \frac{4r}{\sum_{k=1}^n (b_k - a_k)^2}$$ and
its substitution into \eqref{eq: Hoeffding - step1} yields that, for every
$r \geq 0$,
$$ \pr\left( \sum_{k=1}^n U_k - \mu_n \geq r \right) \leq
\exp\left(-\frac{2 r^2}{\sum_{k=1}^n (b_k-a_k)^2} \right).$$
The same bound holds for $ \pr\left( \sum_{k=1}^n U_k - \mu_n \leq -r \right)$,
which leads to the inequality in \eqref{eq: Hoeffding inequality}.
\end{proof}

Recall that a key step in the proof of McDiarmid's inequality is to invoke Hoeffding's
lemma (Lemma~\ref{lm:Hoeffding}). However, a careful look at the proof of Lemma~\ref{lm:Hoeffding}
reveals a potential source of slack in the bound
$$
\ln \expectation\Bigl[ \exp\Bigl(t(U-\expectation[U])\Bigr) \Bigr] \le \frac{t^2(b-a)^2}{8}
$$
--- namely, that this bound is the same regardless of the {\em location} of the mean
$\expectation[U]$ relative to the endpoints of the interval $[a,b]$. As it turns out,
one does indeed obtain an improved version of Hoeffding's inequality by making use of
this information. An improved version of Hoeffding's inequality was derived by
Kearns and Saul \cite{KS_1998}, and it has been recently further improved by
Berend and Kontorovich \cite{Berend_Kontorovich_missing_mass_2012}. The following
improvement of Hoeffding's inequality (Lemma~\ref{lm:Hoeffding}) is obtained in
\cite{Berend_Kontorovich_missing_mass_2012}:

\begin{lemma}[Berend and Kontorovich]\label{lm:Berend and Kontorovich}
Let $U$ be a real-valued random variable, such that $U \in [a,b]$ a.s.\ for
finite $a < b$. Then, for every $t \ge 0$,
\begin{align}
\label{eq:Berend and Kontorovich lemma}
\expectation\left[ \exp\big(t(U- \expectation U)\big)\right]
\le \exp \left(c_{\text{BK}}(p) \, t^2(b-a)^2\right)
\end{align}
where
\begin{align}  \label{eq:c of Berend and Kontorovich inequality}
c_{\text{BK}}(p) &= \begin{cases}
0, & \text{if $p = 0$} \\[0.2cm]
\dfrac{1-2p}{4\ln\left(\frac{1-p}{p}\right)}, & \text{if $0 < p < \dfrac{1}{2}$} \\[0.6cm]
\dfrac{p(1-p)}{2}, & \text{if $\dfrac{1}{2} \leq p \leq 1$}
\end{cases}
\end{align}
with
\begin{align}  \label{eq:p of Berend and Kontorovich inequality}
p = \frac{\expectation[U] - a}{b-a}.
\end{align}
\end{lemma}

\begin{proof}
Recall the definition of $H_p(\lambda)$ in \eqref{eq:Hoeffding_function}. We
deviate from the proof of Lemma~\ref{lm:Hoeffding} at the point where the bound
$H_p(\lambda) \le \frac{\lambda^2}{8}$ in \eqref{eq:bound on Hoeffding_function}
is replaced by the improved bound
\begin{align}  \label{eq:improved bound on Hoeffding_function}
H_p(\lambda) \leq c_{\text{BK}}(p) \, \lambda^2, \qquad \forall \, \lambda \ge 0,
\; p \in [0,1].
\end{align}
where $c_{\text{BK}}(p)$ is introduced in
\eqref{eq:c of Berend and Kontorovich inequality}; for a proof of
\eqref{eq:improved bound on Hoeffding_function}, the reader is referred
to the proofs of \cite[Theorem~3.2]{Berend_Kontorovich_missing_mass_2012} and
\cite[Lemma~3.3]{Berend_Kontorovich_missing_mass_2012}.
\end{proof}

\begin{remark}
The bound on the right-hand side of \eqref{eq:Berend and Kontorovich lemma}
depends on the location of $\expectation[U]$ in the interval $[a,b]$, and it
therefore refines Hoeffding's inequality in Lemma~\ref{lm:Hoeffding}. The
worst case where $p=\frac{1}{2}$ (i.e., if $\expectation[U] = \frac{a+b}{2}$
is in the middle of the interval $[a,b]$) coincides however with Hoeffding's
inequality (since, from \eqref{eq:c of Berend and Kontorovich inequality},
$c_{\text{BK}}(p) = \frac{1}{8}$ if $p = \frac{1}{2}$). The bound on
$H_p(\lambda)$ in \eqref{eq:improved bound on Hoeffding_function} can be
weakened to
\begin{align}  \label{eq:Kearns-Saul bound on Hoeffding_function}
H_p(\lambda) \leq c_{\text{KS}}(p) \, \lambda^2, \qquad
\forall \, \lambda \in \reals, \; p \in [0,1]
\end{align}
where the abbreviation 'KS' on the right-hand side of
\eqref{eq:Kearns-Saul bound on Hoeffding_function}
stands for the Kearns-Saul inequality in \cite{KS_1998},
and it is given by
\begin{align}  \label{eq:c of Kearns-Saul inequality}
c_{\text{KS}}(p) &= \begin{cases}
0, & \text{if $p = 0, 1$} \\[0.2cm]
\dfrac{1}{8}, & \text{if $p = \frac{1}{2}$} \\[0.2cm]
\dfrac{1-2p}{4\ln\left(\frac{1-p}{p}\right)}, & \text{if $p \in (0,1) \setminus \{\dfrac{1}{2}\}$}.
\end{cases}
\end{align}
From \eqref{eq:c of Berend and Kontorovich inequality} and
\eqref{eq:c of Kearns-Saul inequality}, we have
\begin{align*}
c_{\text{BK}}(p) = c_{\text{KS}}(p), \qquad & \forall \, p \in \Bigl[0, \frac{1}{2}\Bigr]  \\
0 \le c_{\text{BK}}(p) \le c_{\text{KS}}(p) \le \frac{1}{8}, \qquad & \forall \, p \in [0, 1]
\end{align*}
where the equality $c_{\text{BK}}(p) = c_{\text{KS}}(p) = \frac{1}{8}$ holds if and only if $p=\frac{1}{2}$
(see Figure~\ref{Figure:c}). Note that
$$\lim_{p \rightarrow \frac{1}{2}} c_{\text{BK}}(p) = \lim_{p \rightarrow \frac{1}{2}} c_{\text{KS}}(p) = \frac{1}{8}$$
which implies the continuity of $c_{\text{BK}}(\cdot)$ and $c_{\text{KS}}(\cdot)$ over the interval $[0,1]$.
\begin{figure}[here!]  
\begin{center}
\epsfig{file=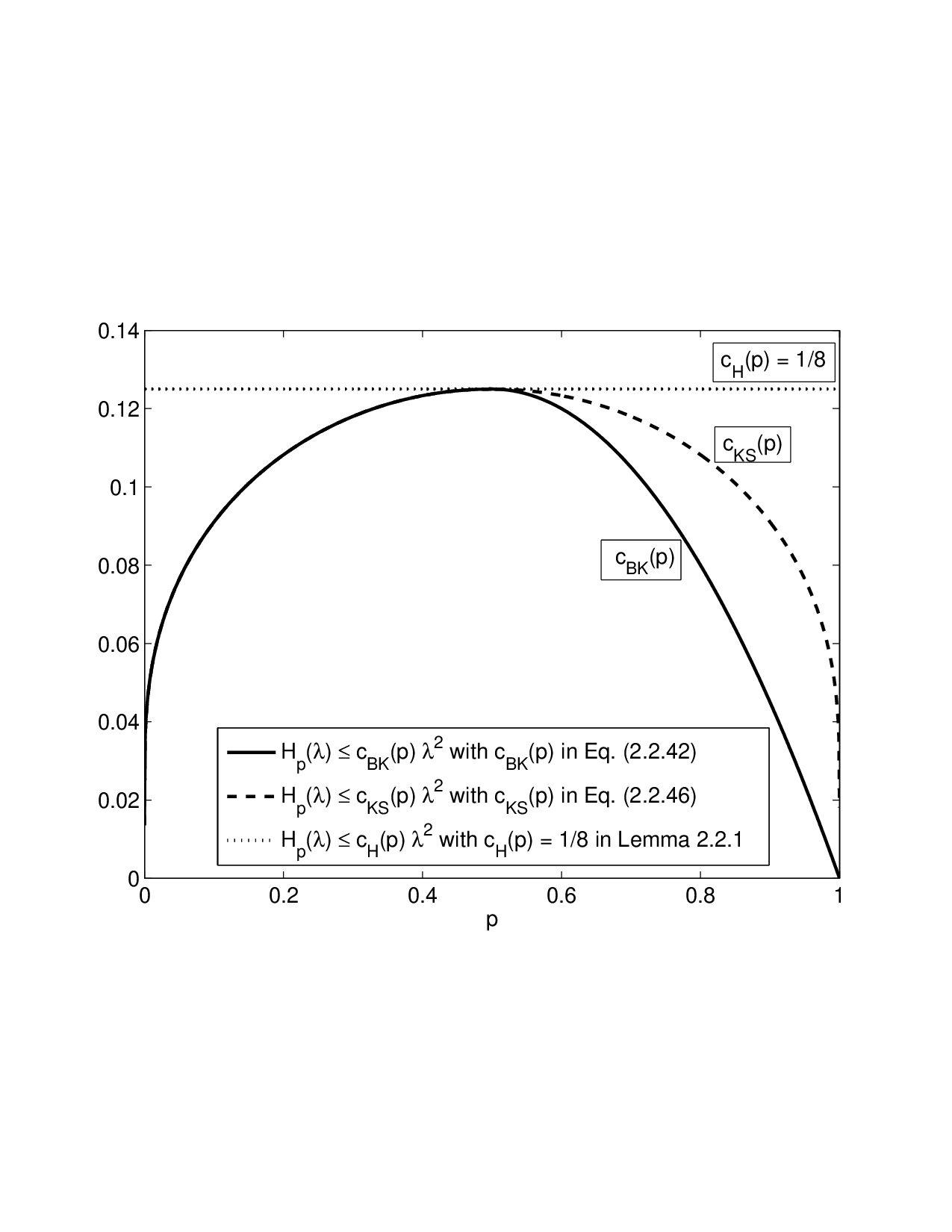,scale=0.5}
\end{center}
\caption{\label{Figure:c} A comparison between upper bounds on the Hoeffding
function $H_p(\lambda)$ in \eqref{eq:Hoeffding_function}; these bounds are of
the type $H_p(\lambda) \le c(p) \, \lambda^2$ for every $p \in [0,1]$ and
$\lambda \ge 0$ (see Eqs.~\eqref{eq:bound on Hoeffding_function},
\eqref{eq:improved bound on Hoeffding_function} and
\eqref{eq:Kearns-Saul bound on Hoeffding_function} with $c(p) = \frac{1}{8}$
or $c(p)$ in \eqref{eq:c of Berend and Kontorovich inequality} and
\eqref{eq:c of Kearns-Saul inequality}, respectively; these values of $c(p)$ correspond
to the dotted, solid and dashed lines, respectively, as a function of $p \in [0,1]$.)}
\end{figure}
\end{remark}

The improved bound in Lemma~\ref{lm:Berend and Kontorovich} (cf. Lemma~\ref{lm:Hoeffding})
leads to the following improvement of Hoeffding's inequality (Theorem~\ref{theorem: Hoeffding inequality}):

\begin{theorem}[Berend and Kontorovich inequality]
Let $\{U_k\}_{k=1}^n$ be a sequence of independent and bounded random variables such that,
for every $k \in \{1, \ldots, n\}$, $U_k \in [a_k, b_k]$ holds a.s. for some constants
$a_k, b_k \in \reals$. Let $\mu_n \triangleq \sum_{k=1}^n \expectation[U_k]$.
Then,
\begin{equation}
\pr\left( \left| \sum_{k=1}^n U_k - \mu_n \right| \geq r \right)
\leq 2 \exp\left(-\frac{r^2}{4 \sum_{k=1}^n c_k \, (b_k-a_k)^2} \right), \quad \forall \, r \geq 0
\label{eq: Berend and Kontorovich inequality}
\end{equation}
\label{theorem: Berend and Kontorovich inequality}
where $c_k \triangleq c_{\text{BK}}(p_k)$ (see \eqref{eq:c of Berend and Kontorovich inequality})
with
\begin{equation} \label{eq:p_k}
p_k = \frac{\expectation[U_k] - a_k}{b_k - a_k}.
\end{equation}
\end{theorem}
\begin{proof}
Inequality \eqref{eq: Berend and Kontorovich inequality} follows from a combination of
the Chernoff bound and Lemma~\ref{lm:Berend and Kontorovich} (similarly to the proof of
Theorem~\ref{theorem: Hoeffding inequality} that relies on the Chernoff bound and
Lemma~\ref{lm:Hoeffding}).
\end{proof}

A loosening of the bound in Theorem~\ref{eq: Berend and Kontorovich inequality},
by a replacement of $c_k \triangleq c_{\text{BK}}(p_k)$ with
$\widetilde{c}_k \triangleq c_{\text{KS}}(p_k)$
(see \eqref{eq:c of Berend and Kontorovich inequality},
\eqref{eq:p of Berend and Kontorovich inequality} and
\eqref{eq:c of Kearns-Saul inequality}), gives the Kearns-Saul
inequality in \cite{KS_1998}:
\begin{theorem}[Kearns--Saul inequality]
Let $\{U_k\}_{k=1}^n$ be a sequence of independent and bounded random variables such that,
for every $k \in \{1, \ldots, n\}$, $U_k \in [a_k, b_k]$ holds a.s.\ for some constants
$a_k, b_k \in \reals$. Let $\mu_n \triangleq \sum_{k=1}^n \expectation[U_k]$.
Then, for every $r \geq 0$,
\begin{equation}
\pr\left( \left| \sum_{k=1}^n U_k - \mu_n \right| \geq r \right)
\leq 2 \exp\left(-\frac{r^2}{4 \sum_{k=1}^n c_k \, (b_k-a_k)^2} \right)
\label{eq: Kearns-Saul inequality}
\end{equation}
where $c_k = c_{\text{KS}}(p_k)$ with $c_{\text{KS}}(\cdot)$ and $p_k$
in \eqref{eq:c of Kearns-Saul inequality} and \eqref{eq:p_k},
respectively. The bound in \eqref{eq: Kearns-Saul inequality}
improves Hoeffding's inequality in \eqref{eq: Hoeffding inequality} unless
$p_k = \frac{1}{2}$ (i.e., if $\expectation[U_k] = \frac{a_k + b_k}{2}$)
for every $k \in \{1, \ldots, n\}$; in the latter case, both bounds coincide.
\label{theorem: Kearns-Saul inequality}
\end{theorem}
An information-theoretic proof of the basic inequality
that leads to the Kearns-Saul inequality is given in
Section~\ref{subsection: Gaussian concentration and $T_1$ inequalities} of
Chapter~\ref{chapter: entropy method}.

The reader is referred to \cite{From_and_Swift_2011} for another refinement
of Hoeffding's inequality that is not covered in this section.

\section{Refined versions of the Azuma--Hoeffding inequality}
\label{section: Refined Versions of the Azuma--Hoeffding Inequality}

The following section considers generalized and refined versions of
the Azuma-Hoeffding inequality (see
Sections~\ref{subsection: A generalization of the Azuma--Hoeffding inequality}
and~\ref{subsection: Martingales with a uniform bound on the differences}).
A derivation of one-sided inequalities for sub and super martingales is
considered as well (see
Section~\ref{subsection: concentration inequalities for sub and super martingales}).

\subsection{A generalization of the Azuma--Hoeffding inequality}
\label{subsection: A generalization of the Azuma--Hoeffding inequality}
The following theorem generalizes the Azuma-Hoeffding inequality for
real-valued martingale sequences $\{X_k, \mathcal{F}_k\}_{k=0}^n$
with bounded differences in the case where the differences
$\xi_k \triangleq X_k - X_{k-1}$ are bounded between the endpoints
of {\em asymmetric} intervals around zero. Furthermore, it states
that the same bound holds not only for the probability of the event
where $|X_n - X_0| \ge r$, for some $r \ge 0$, but also for the
probability of the more likely event where there exists an index
$k \in \{1, \ldots, n\}$ such that $|X_k - X_0| \ge r$; the idea
that strengthens the bound to hold for the latter event applies to
all the concentration inequalities derived in this chapter.

\begin{theorem}[A generalization of the Azuma-Hoeffding Inequality]
Let $\{X_k, \mathcal{F}_k\}_{k=0}^n$ be a real-valued martingale
sequence. Suppose that $a_1, b_1, \ldots, a_n, b_n$ are constants
such that $a_k \le X_k - X_{k-1} \le b_k$ holds a.s. for every
$k \in \{1, \ldots, n\}$. Then, for every $r \ge 0$,
\begin{equation}
\label{eq: my generalization of the Azuma-Hoeffding Inequality}
\pr \biggl( \max_{k \in \{1, \ldots, n\}} |X_k - X_0| \ge r \biggl)
\le 2 \exp \left(-\frac{r^2}{4 \sum_{k=1}^n c_k \, (b_k - a_k)^2} \right)
\end{equation}
where $c_k = c(p_k)$ with $$p_k = -\frac{a_k}{b_k-a_k} \in [0,1], \quad \forall
\, k \in \{1, \ldots, n\},$$ and $c(\cdot) = c_{\text{BK}}(\cdot)$ is introduced
in \eqref{eq:c of Berend and Kontorovich inequality} over the interval $[0,1]$.
\label{theorem: my generalization of the Azuma-Hoeffding Inequality}
\end{theorem}

\begin{remark}
In the following, it is shown that the Azuma-Hoeffding inequality
(Theorem~\ref{theorem: Azuma--Hoeffding inequality}) is a special case of
Theorem~\ref{theorem: my generalization of the Azuma-Hoeffding Inequality}.
Consider the setting in the Azuma-Hoeffding inequality where the intervals
$[a_k, b_k]$ in
Theorem~\ref{theorem: my generalization of the Azuma-Hoeffding Inequality}
are symmetric around zero, i.e., $b_k = -a_k = d_k$ for every
$k \in \{1, \ldots, n\}$, and for some non-negative reals $d_1, \ldots, d_n$.
In this special case, it follows from
Theorem~\ref{theorem: my generalization of the Azuma-Hoeffding Inequality}
that $p_k = \frac{1}{2}$, and $c(p_k) = \frac{1}{8}$ for every $k$. Hence,
from \eqref{eq: my generalization of the Azuma-Hoeffding Inequality}, we have
\begin{align*}
\pr(|X_n - X_0| \geq r)
& \le \pr \biggl( \max_{k \in \{1, \ldots, n\}} |X_k - X_0| \ge r \biggl) \\
& \le  2 \exp\left(-\frac{r^2}{2 \sum_{k=1}^n d_k^2}\right), \quad \forall \, r \ge 0,
\end{align*}
which gives the Azuma-Hoeffding inequality in
\eqref{eq: Azuma--Hoeffding concentration inequality - general case}.
\end{remark}

\begin{proof}
In the following, the proof of the Azuma-Hoeffding inequality is modified
for a derivation of the generalized inequality in
\eqref{eq: my generalization of the Azuma-Hoeffding Inequality}. As a first
step, the equality in \eqref{eq: union of disjoint events} is replaced by
the equality
\begin{align}
& \pr \biggl( \max_{k \in \{1, \ldots, n\}} |X_k - X_0| \ge r \biggl) \nonumber \\
& = \pr \biggl( \max_{k \in \{1, \ldots, n\}} (X_k - X_0) \ge r \biggl) +
\pr \biggl( \max_{k \in \{1, \ldots, n\}} (X_0 - X_k) \ge r \biggl).
\label{eq: 2nd union of disjoint events}
\end{align}
Let $\xi_k = X_k - X_{k-1}$ be the differences of the martingale sequence,
then $\expectation[\xi_k | \mathcal{F}_{k-1}]=0$ and $a_k \le \xi_k \le b_k$
hold a.s. for every $k \in \{1, \ldots, n\}$.

Recall that a composition of a convex function with a martingale gives a
sub-martingale with respect to the same filtration
(see Theorem~\ref{theorem: mappings of martingales or sub/ super martingales}).
Since $\{X_k - X_0, \mathcal{F}_k\}_{k=1}^n$ is a martingale and
$f_t(x)=\exp(tx)$ is a convex function over $\reals$ for every
$t \in \reals$, it follows that
$\left\{\exp\bigl(t(X_k-X_0)\bigr), \mathcal{F}_k\right\}_{k=1}^n$ is a
sub-martingale for every $t \in \reals$. From the maximal inequality for
sub-martingales (a.k.a. the Doob-Kolmogorov inequality), which states
that if $\{Y_k, \mathcal{F}_k\}_{k=1}^n$ is a sub-martingale then
$$\pr\Bigl( \max_{1 \leq k \leq n} Y_k \geq \lambda \Bigr) \leq
\frac{\expectation[|Y_n|]}{\lambda}, \quad \forall \, \lambda > 0$$
(see, e.g., \cite[Theorem~14.3.1]{Rosenthal_book}), it follows that for
every $t \ge 0$
\begin{align}
& \pr \biggl( \max_{k \in \{1, \ldots, n\}} (X_k - X_0) \ge r \biggl) \nonumber \\
& = \pr \biggl( \max_{k \in \{1, \ldots, n\}} \exp\bigl(t(X_k - X_0)\bigr)
\ge \exp(tr) \biggl) \nonumber \\
& \le \exp(-tr) \, \expectation\left[\exp \bigl(t (X_k - X_0) \bigr) \right] \nonumber \\
& = \exp(-tr) \, \expectation\left[\exp \left(t \sum_{k=1}^n \xi_k \right) \right].
\label{eq: Doob's maximal inequality}
\end{align}
Hence, by applying the maximal inequality for sub-martingales instead of the
Chernoff bound, inequality \eqref{eq: Chernoff's inequality} is replaced with
the stronger result in \eqref{eq: Doob's maximal inequality}.
Similarly to the proof of the Azuma-Hoeffding inequality, by the law
of iterated expectations, we have from \eqref{eq: smoothing theorem}
\begin{equation*}
\expectation \biggl[ \exp \biggl(t \sum_{k=1}^n \xi_k \biggr) \biggr]
= \expectation \Biggl[ \exp \biggl(t \sum_{k=1}^{n-1} \xi_k \biggr) \,
\expectation \bigl[ \exp(t \xi_n) \, | \, \mathcal{F}_{n-1} \bigr] \Biggr].
\end{equation*}
In the following, Lemma~\ref{lm:Berend and Kontorovich} is applied
with the conditioning on $\cF_{n-1}$. Based on the information that
$\expectation[\xi_n|\cF_{n-1}] = 0$ and $\xi_n \in [a_n,b_n]$ a.s.,
it follows that
\begin{equation}
\label{eq:an improved bound on the conditional moment generating function of xi_n}
\expectation \bigl[ \exp(t \xi_n) \, | \, \mathcal{F}_{n-1} \bigr] \le
\exp \Bigl(c_n (b_n-a_n)^2 t^2 \Bigr)
\end{equation}
where $c_n \triangleq c_{\text{BK}}(p_n)$ is given in
\eqref{eq:c of Berend and Kontorovich inequality} with
(see \eqref{eq:p of Berend and Kontorovich inequality})
\begin{equation*}
p_n = \frac{\expectation[\xi_n | \mathcal{F}_{n-1}] - a_n}{b_n - a_n} = -\frac{a_n}{b_n-a_n}.
\end{equation*}
(If $b_n = -a_n \triangleq d_n$ for a non-negative real number $d_n$ then $p_n = \frac{1}{2}$
and $c_n = c_{\text{BK}}(p_n) = \frac{1}{8}$, and
inequality~\eqref{eq:an improved bound on the conditional moment generating function of xi_n}
is particularized to \eqref{eq:bound on the conditional moment generating function of xi_n};
the latter inequality can be obtained by applying Hoeffding's lemma, as in the proof of the
Azuma-Hoeffding lemma.) Continuing recursively in a similar manner, in parallel to
\eqref{eq:bound on the moment generating function of Xn-X0}, the quantity in
\eqref{eq: smoothing theorem} is upper-bounded by
\begin{equation*}
\expectation \biggl[ \exp \biggl(t \sum_{k=1}^n \xi_k \biggr)
\biggr] \leq \exp \left(t^2 \, \sum_{k=1}^n c_k (b_k-a_k)^2 \right).
\end{equation*}
The combination of this bound with \eqref{eq: Doob's maximal inequality}
gives that, for every $r \ge 0$,
\begin{align}\label{eq:parametric_generalized inequality}
& \pr \biggl( \max_{k \in \{1, \ldots, n\}} (X_k - X_0) \ge r \biggl) \nonumber \\
& \leq \exp\left(-tr + t^2 \, \sum_{k=1}^n c_k (b_k-a_k)^2 \right), \quad \forall \, t \geq 0.
\end{align}
An optimization with respect to the non-negative parameter $t$ gives
$$t = \frac{r}{2 \sum_{k=1}^n c_k (b_k-a_k)^2}$$ and the substitution of this
optimized value into \eqref{eq:parametric_generalized inequality} yields that,
for every $r \ge 0$,
\begin{equation}
\pr \biggl( \max_{k \in \{1, \ldots, n\}} (X_k - X_0) \ge r \biggl) \leq
\exp \left(-\frac{r^2}{4 \sum_{k=1}^n c_k (b_k-a_k)^2} \right).
\label{eq: one-sided generalization of the Azuma-Hoeffding inequality}
\end{equation}
The same bound as in
\eqref{eq: one-sided generalization of the Azuma-Hoeffding inequality}
holds for $\pr \biggl( \max_{k \in \{1, \ldots, n\}} (X_0 - X_k) \ge r \biggl)$.
Using these two bounds on the right-hand side of
\eqref{eq: 2nd union of disjoint events} completes the proof of
Theorem~\ref{theorem: my generalization of the Azuma-Hoeffding Inequality}.
\end{proof}

\begin{example} \label{example:feedback scheme}
The advantage of the inequality in
Theorem~\ref{theorem: my generalization of the Azuma-Hoeffding Inequality}
over the Azuma-Hoeffding inequality is exemplified in the following.

Let $\{X_k\}$ be a real-valued sequence of random variables, defined on a
probability space $(\Omega, \mathcal{F}, \pr)$, that is generated
by the recursion
\begin{equation} \label{eq: feedback scheme for this example}
X_k = X_{k-1} + \xi_k, \quad \forall \, k \ge 1
\end{equation}
where $X_k = 0$ for $k \le 0$. The
differences $\xi_k = X_k - X_{k-1}$ are defined as follows: Let $g \colon
\reals^m \rightarrow [0,1]$ be an arbitrary measurable function,
for $m \ge 1$, and let $\{\Theta_k\}$ be i.i.d.\ random variables where
for some $\alpha \in (0,1]$
\begin{equation} \label{eq: distribution of Theta}
\pr(\Theta_k = 1) = \frac{1}{1+\alpha}, \qquad
\pr\left(\Theta_k = -\frac{1}{\alpha}\right) = \frac{\alpha}{1+\alpha}
\end{equation}
and $\Theta_k$ is independent of $X_{k-1}, X_{k-2}, \ldots$ for every
$k \ge 1$. Let us define
\begin{equation} \label{eq: definition of xi_k in this example}
\xi_k = \frac{\Theta_k \, g(X_{k-1}, \ldots, X_{k-m})}{k^2}, \quad \forall \, k \ge 1.
\end{equation}
The sequence $\{X_k\}$ is generated by the following feedback scheme:
\begin{figure}[here!]  
\begin{center}
\epsfig{file=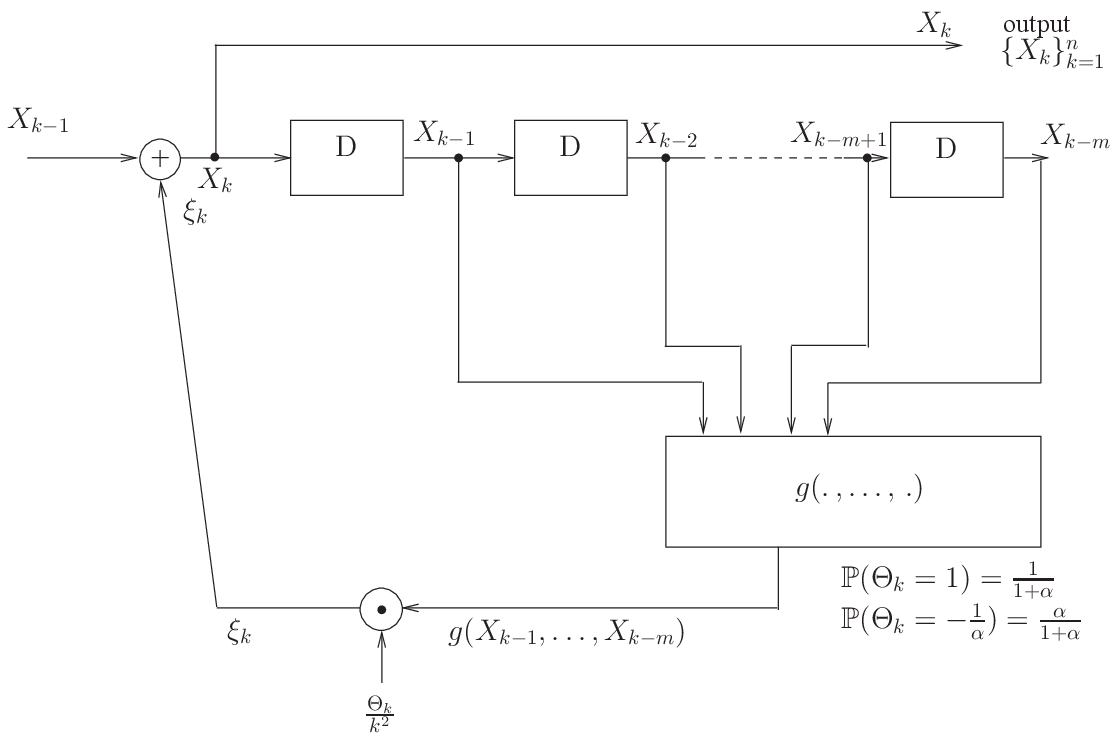,scale=0.63}
\end{center}
\caption{\label{Figure:feedback_scheme_example}
The feedback scheme in Example~\ref{example:feedback scheme} (see
\eqref{eq: feedback scheme for this example}--\eqref{eq: definition of xi_k in this example}).}
\end{figure}

Let $\cF_k = \sigma(X_0, X_1, \ldots, X_k)$, for $k \ge 0$, be the $\sigma$-algebra
that is generated by the random variables $X_0, X_1, \ldots, X_k$ (recall its
definition in Example~\ref{example: sum of zero-mean and independent Rvs}), so $\{\cF_k\}$
is a filtration. The random variable $X_k$ is $\cF_k$-measurable for every $k \ge 0$,
so $\xi_k = X_k - X_{k-1}$ is also $\cF_k$-measurable (since $\cF_{k-1} \subseteq \cF_k$).
We have $X_k \in L^1(\Omega, \cF_k, \pr)$ since
\begin{align*}
\expectation[|X_k|] & \le \sum_{i=1}^k \expectation[|\xi_i|]
\le \sum_{i=1}^k \frac{\expectation[|\Theta_i|]}{i^2}
= \frac{2}{1+\alpha} \sum_{i=1}^k \frac{1}{i^2} \\
& < \frac{2}{1+\alpha} \sum_{i=1}^{\infty} \frac{1}{i^2}
= \frac{\pi^2}{3(1+\alpha)} < \infty
\end{align*}
where the last equality holds since $\zeta(2) \triangleq \sum_{k=1}^{\infty} \frac{1}{k^2} = \frac{\pi^2}{6}$.
Furthermore,
\begin{align*}
\expectation[\xi_k | \cF_{k-1}]
& = \frac{1}{k^2} \, \expectation[\Theta_k \, g(X_{k-1}, \ldots, X_{k-m}) \, | \, \cF_{k-1}] \\
& = \frac{g(X_{k-1}, \ldots, X_{k-m}) \; \expectation[\Theta_k | \cF_{k-1}]}{k^2} \\
& = \frac{g(X_{k-1}, \ldots, X_{k-m}) \; \expectation[\Theta_k]}{k^2} = 0
\end{align*}
where the third equality holds since $\Theta_k$ is independent of the past inputs $X_{k-1}, X_{k-2}, \ldots$
for $k \ge 1$. It therefore follows that $\{X_k, \cF_k\}$ is a martingale. From
\eqref{eq: feedback scheme for this example}--\eqref{eq: definition of xi_k in this example}, together with
the assumption that $0 \le g \le 1$, it follows that the differences of the martingale sequence
(i.e., $\xi_k = X_k - X_{k-1}$ for $k \ge 1$) satisfy the inequality
\begin{equation} \label{eq:bounds on the martingale differences in this example}
-\frac{1}{\alpha k^2} \triangleq a_k \le \xi_k \le b_k \triangleq \frac{1}{k^2}, \quad \forall \, k \ge 1.
\end{equation}

From the Azuma-Hoeffding inequality, since $|\xi_k| \le \frac{1}{\alpha k^2}$ for $\alpha \in (0,1]$
(see \eqref{eq:bounds on the martingale differences in this example}), it follows that
\begin{align}
\pr \left( \max_{1 \le k \le n} | X_k| \ge r \right)
& \le 2 \exp\left(-\frac{\alpha^2 r^2}{2 \sum_{k=1}^n \frac{1}{k^4}} \right) \nonumber \\[0.1cm]
& < 2 \exp\left( -\frac{45 \alpha^2 r^2}{\pi^4} \right), \quad \forall \, r>0
\label{eq: AH inequality for this example}
\end{align}
where the last inequality holds since $\zeta(4) \triangleq \sum_{k=1}^{\infty} \frac{1}{k^4} = \frac{\pi^4}{90}$.
On the other hand, from Theorem~\ref{theorem: my generalization of the Azuma-Hoeffding Inequality} and
\eqref{eq:bounds on the martingale differences in this example}, we have for every $k \in \naturals$
\begin{align*}
& a_k = -\frac{1}{\alpha k^2}, \quad b_k = \frac{1}{k^2}, \\
& p_k = -\frac{a_k}{b_k - a_k} = \frac{1}{1+\alpha} \in \Bigl[\frac{1}{2}, 1\Bigr),
\quad \forall \, \alpha \in (0,1], \\
& c_k = c(p_k) = \frac{p_k (1-p_k)}{2} = \frac{\alpha}{2(1+\alpha)^2}, \\
\Rightarrow & \sum_{k=1}^n c_k (b_k - a_k)^2 = \frac{1}{2 \alpha} \sum_{k=1}^n \frac{1}{k^4}
< \frac{\zeta(4)}{2 \alpha} = \frac{\pi^4}{180 \alpha}.
\end{align*}
Consequently, it follows from \eqref{eq: my generalization of the Azuma-Hoeffding Inequality} that
\begin{align}
\pr \left( \max_{1 \le k \le n} | X_k| \ge r \right)
\le 2 \exp\left( -\frac{45 \alpha r^2}{\pi^4} \right), \quad \forall \, r>0.
\label{eq: generalized AH inequality for this example}
\end{align}
A comparison of the bounds in \eqref{eq: AH inequality for this example} and
\eqref{eq: generalized AH inequality for this example} shows an improvement by
a factor of $\frac{1}{\alpha}$ in the exponent of the latter bound. This shows
the advantage of the concentration inequality in
Theorem~\ref{theorem: my generalization of the Azuma-Hoeffding Inequality}
over the Azuma-Hoeffding inequality, where this improvement is more pronounced
as the value of $\alpha$ in \eqref{eq: distribution of Theta} becomes closer to
zero (which then enhances the asymmetry of the distribution of $\{\Theta_k\}$
in \eqref{eq: distribution of Theta}).

In the following, we examine numerically the bounds in \eqref{eq: AH inequality for this example}
and \eqref{eq: generalized AH inequality for this example}. Let us require
that the peak-to-average power ratio of $\{\Theta_k\}$ does not exceed a certain level,
e.g., 20~dB = 100. For $\alpha \in (0, 1]$,
$$ \| \Theta_k \|_{\infty} = \expectation[\Theta_k^2] = \frac{1}{\alpha}
\quad \Longrightarrow \quad \frac{\| \Theta_k \|_{\infty}^2}{\expectation[\Theta_k^2]}
= \frac{1}{\alpha}$$ hence, in order to satisfy this requirement, let
$\alpha = \alpha_{\min} = \frac{1}{100}$. Let us find the minimal value of $r$
such that each of the bounds in \eqref{eq: AH inequality for this example}
and \eqref{eq: generalized AH inequality for this example} assures that, irrespectively of $n$,
$$ \pr \left( \max_{1 \le k \le n} | X_k| \ge r \right) \leq \varepsilon \triangleq 10^{-10}.$$
The Azuma-Hoeffding inequality in \eqref{eq: AH inequality for this example}
gives $r = r_{\min} = 716.54$, whereas the improved bound in
\eqref{eq: generalized AH inequality for this example} implies that
$r = r_{\min} = 71.654$. The improved value of $r$ is reduced by a factor
of $\sqrt{\frac{1}{\alpha}} = 10$, so the concentration result for the
sequence $\{X_k\}$ is significantly strengthened by the use of
Theorem~\ref{theorem: my generalization of the Azuma-Hoeffding Inequality}.
\end{example}

\subsection{On martingales with uniformly bounded differences}
\label{subsection: Martingales with a uniform bound on the differences}

Example~\ref{example2} in the preceding section serves to
motivate a derivation of another improvement of the Azuma-Hoeffding
inequality with an additional constraint on the conditional variance of
the martingale sequence. In the following, assume that
$|X_k - X_{k-1}| \leq d$ holds a.s. for every $k$ (note that
$d$ does not depend on $k$, so it is a global bound on the differences
of the martingale). A new condition is added for the derivation
of the next concentration inequality, where it is assumed that a.s.
$$ \var(X_k \, | \, \mathcal{F}_{k-1}) =
\expectation\bigl[(X_k-X_{k-1})^2 \, | \, \mathcal{F}_{k-1}\bigr]
\leq \gamma d^2$$ for some constant $\gamma \in (0,1]$.

One of the disadvantages of the Azuma--Hoeffding inequality
(Theorem~\ref{theorem: Azuma--Hoeffding inequality}) and
McDiarmid's inequality (Theorem~\ref{theorem: McDiarmid's inequality})
is their insensitivity to the variance, which leads to suboptimal
exponents compared to the central limit theorem (CLT) and moderate
deviation principle (MDP).
The following theorem, which appears in
\cite{McDiarmid_bounded_differences_Martingales_1989}
(see also \cite[Corollary~2.4.7]{Dembo_Zeitouni}), makes use of
the variance:

\begin{theorem}
Let $\{X_k, \mathcal{F}_k\}_{k=0}^n$ be a discrete-time real-valued
martingale. Assume that, for some constants $d, \sigma > 0$, the
following two requirements are satisfied a.s.\ for every $k \in \{1,\ldots,n\}$:
\begin{eqnarray*}
&& | X_k - X_{k-1} | \leq d, \\
&& \var (X_k | \mathcal{F}_{k-1}) = \expectation \bigl[(X_k
- X_{k-1})^2 \, | \, \mathcal{F}_{k-1} \bigr] \leq \sigma^2
\end{eqnarray*}
Then, for every $\alpha \geq 0$,
\begin{equation}
\hspace*{-0.2cm} \pr(|X_n-X_0| \geq \alpha n) \leq 2 \exp\left(-n
\, H\bigg(\frac{\delta+\gamma}{1+\gamma} \bigg\|
\frac{\gamma}{1+\gamma}\bigg) \right) \label{eq: first refined
concentration inequality}
\end{equation}
where
\begin{equation}
\gamma \triangleq \frac{\sigma^2}{d^2}, \quad \delta \triangleq
\frac{\alpha}{d}  \label{eq: notation}
\end{equation}
and
\begin{equation}
H(p \| q) \triangleq p \ln\Bigl(\frac{p}{q}\Bigr) + (1-p)
\ln\Bigl(\frac{1-p}{1-q}\Bigr), \quad \forall \, p, q \in [0,1]
\label{eq: divergence}
\end{equation}
is the divergence between the ${\rm Bernoulli}(p)$ and ${\rm Bernoulli}(q)$
probability measures.
If $\delta>1$, the probability on the left-hand side of
\eqref{eq: first refined concentration inequality} is equal to zero.
\label{theorem: first refined concentration inequality}
\end{theorem}

\begin{proof}
The proof of this bound goes along the same lines as the proof of the
Azuma--Hoeffding inequality, up to \eqref{eq: smoothing theorem}.
The new ingredient in this proof is the use of the so-called Bennett's
inequality (see, e.g., \cite[Lemma~2.4.1]{Dembo_Zeitouni}), which
improves upon Lemma~\ref{lm:Hoeffding} by incorporating a bound on
the variance: Let $X$ be a real-valued random variable with
$\overline{x} = \expectation(X)$ and
$\expectation[(X-\overline{x})^2] \leq \sigma^2$ for some
$\sigma>0$. Furthermore, suppose that $X \leq b$ a.s. for
some $b \in \reals$. Then, for every $\lambda \geq 0$, Bennett's
inequality states that
\begin{equation}
\expectation\bigl[e^{\lambda X}\bigr] \leq
\frac{e^{\lambda \overline{x}} \left[(b-\overline{x})^2 e^{-\frac{\lambda
\sigma^2}{b-\overline{x}}}+\sigma^2
e^{\lambda(b-\overline{x})}\right]}{(b-\overline{x})^2+\sigma^2}.
\label{eq: Bennett's inequality for unconditional expectation}
\end{equation}
The proof of \eqref{eq: Bennett's inequality for unconditional expectation}
is provided in Appendix~\ref{appendix: proof of Bennett's inequality} for
completeness.

We now apply Bennett's inequality
\eqref{eq: Bennett's inequality for unconditional expectation} to the
conditional law of $\xi_k$ given the $\sigma$-algebra
$\mathcal{F}_{k-1}$. Since
$\expectation[\xi_k | \mathcal{F}_{k-1}] = 0$, $\var[\xi_k |
\mathcal{F}_{k-1}] \leq \sigma^2$ and $\xi_k \leq d$ a.s.\ for $k
\in \naturals$, we have
\begin{equation}
\expectation \left[ \exp(t \xi_k) \, | \, \mathcal{F}_{k-1}
\right] \leq \frac{\sigma^2 \exp(td) + d^2 \exp\left(-\frac{t
\sigma^2}{d}\right)}{d^2 + \sigma^2}, \qquad \text{a.s.}. \label{eq: Bennett's
inequality for the conditional law of xi_k}
\end{equation}
From \eqref{eq: smoothing theorem} and
\eqref{eq: Bennett's inequality for the conditional law of xi_k} it follows
that, for every $t \geq 0$,
\begin{eqnarray*}
\expectation \biggl[ \exp \biggl(t
\sum_{k=1}^n \xi_k \biggr) \biggr]
\leq \left(\frac{\sigma^2 \exp(td) + d^2
\exp\left(-\frac{t \sigma^2}{d}\right)}{d^2 + \sigma^2}\right)
\expectation \biggl[ \exp \biggl(t \sum_{k=1}^{n-1} \xi_k \biggr)
\biggr].
\end{eqnarray*}
Repeating this argument recursively, we conclude that, for every $t \geq 0$,
\begin{equation*}
\expectation \biggl[ \exp \biggl(t \sum_{k=1}^n \xi_k \biggr)
\biggr] \leq \left(\frac{\sigma^2 \exp(td)
+ d^2 \exp\left(-\frac{t \sigma^2}{d}\right)}{d^2 + \sigma^2}\right)^n.
\end{equation*}
Using the definition of $\gamma$ in \eqref{eq: notation}, we can rewrite this inequality as
\begin{equation}
\expectation \biggl[ \exp \biggl(t \sum_{k=1}^n
\xi_k \biggr) \biggr] \leq \left(\frac{\gamma \exp(td) +
\exp(-\gamma td)}{1+\gamma}\right)^n, \quad \forall \, t \geq 0.
\label{eq: important inequality used for the derivation of Theorem 2}
\end{equation}
Let $x \triangleq td$ (so $x \geq 0$). We can now use
\eqref{eq: important inequality used for the derivation of Theorem 2}
with the Chernoff bounding technique to get that for every $\alpha \geq 0$
(from the definition of $\delta$ in \eqref{eq: notation},
$\alpha t = \delta x$)
\begin{align}
& \pr(X_n-X_0 \geq \alpha n) \nonumber \\
& \leq \exp(-\alpha n t) \, \expectation \biggl[ \exp \biggl(t \sum_{k=1}^n
\xi_k \biggr) \biggr] \nonumber \\
& \leq \left(\frac{\gamma \exp\bigl((1-\delta)
x\bigr) + \exp\bigl(-(\gamma+\delta) x\bigr)}{1+\gamma}\right)^n,
\quad \forall \, x \geq 0. \label{eq: first concentration inequality
before the optimization over the non-negative parameter x}
\end{align}
Consider first the case where $\delta=1$ (i.e., $\alpha = d$).
Then \eqref{eq: first concentration inequality before the
optimization over the non-negative parameter x} becomes
\begin{equation*}
\pr(X_n-X_0 \geq d n) \leq \left( \frac{\gamma +
\exp\bigl(-(\gamma+1)x\bigr)}{1+\gamma} \right)^n, \quad \forall
\, x \geq 0
\end{equation*}
and the expression on the right-hand side is minimized in the limit
as $x \rightarrow \infty$. This gives the inequality
\begin{equation}
\pr(X_n-X_0 \geq d n) \leq \left( \frac{\gamma}{1+\gamma}
\right)^n. \label{eq: first concentration inequality for delta=1}
\end{equation}
Otherwise, if $\delta \in [0, 1)$, we minimize the base of
the exponent on the right-hand side of \eqref{eq: first
concentration inequality before the optimization over the
non-negative parameter x} with respect to\ the free parameter
$x \ge 0$. Setting the derivative of this exponent to zero
yields that the optimal value of $x$ is given by
\begin{equation}
x = \left(\frac{1}{1+\gamma}\right) \ln
\left(\frac{\gamma+\delta}{\gamma (1-\delta)}\right). \label{eq:
optimized non-negative x for the first refined concentration
inequality}
\end{equation}
Substituting \eqref{eq:
optimized non-negative x for the first refined concentration
inequality} into the right-hand side of
\eqref{eq: first concentration inequality before
the optimization over the non-negative parameter x} gives that,
for every $\alpha \ge 0$,
\begin{align}
 \pr(X_n-X_0 \geq \alpha n) \nonumber & \leq \left[ \left(\frac{\gamma+\delta}
{\gamma}\right)^{-\frac{\gamma+\delta}{1+\gamma}}
(1-\delta)^{-\frac{1-\delta}{1+\gamma}} \right]^n \nonumber \\
&  = \exp \left( -n \, H\bigg(\frac{\delta+\gamma}
{1+\gamma} \bigg\| \frac{\gamma}{1+\gamma}\bigg) \right)
\label{eq: one-sided concentration inequality of the first refined bound}
\end{align}
where $H(\cdot \| \cdot)$ is introduced in \eqref{eq: divergence}.
Finally, if $\delta > 1$ (i.e., $\alpha > d$), the exponent is equal to
$+\infty$. The application of inequality~\eqref{eq: one-sided
concentration inequality of the first refined bound} to the
martingale $\{-X_k, \mathcal{F}_k\}_{k=0}^{\infty}$ gives the same
upper bound for the other tail probability $\pr(X_n-X_0 \leq
-\alpha n)$. Overall, we get the bound \eqref{eq: first refined
concentration inequality}, which completes the proof of
Theorem~\ref{theorem: first refined concentration inequality}.
\end{proof}

\begin{remark}
The divergence (a.k.a. Kullback-Leibler distance or relative entropy)
between two probability measures $P$ and $Q$ is denoted, throughout this
monograph, by $D(P\|Q)$. The notation $H(p\|q)$ is used in \eqref{eq: divergence}
for the divergence in the special case where $P$ and $Q$ are
${\rm Bernoulli}(p)$ and ${\rm Bernoulli}(q)$, respectively. In this
case, where $P = {\rm Bernoulli}(p)$ and $Q = {\rm Bernoulli}(q)$, we
have $D(P\|Q) \triangleq H(p\|q)$.
\end{remark}

\noindent Here is an illustration of how one can use Theorem~\ref{theorem: first
refined concentration inequality} for getting better bounds in comparison to
the Azuma--Hoeffding inequality:
\begin{example}
Let $d>0$ and $\varepsilon \in (0, \frac{1}{2}]$ be some
constants. Consider a discrete-time real-valued martingale $\{X_k,
\mathcal{F}_k\}_{k=0}^{\infty}$ where a.s.\ $X_0 = 0$, and for
every $m \in \naturals$
\begin{eqnarray*}
&& \pr(X_m - X_{m-1} = d
\, | \, \mathcal{F}_{m-1}) = \varepsilon \, , \\[0.1cm]
&& \pr\left(X_m - X_{m-1} = -\frac{\varepsilon d}{1-\varepsilon}
\, \Big| \, \mathcal{F}_{m-1}\right) = 1-\varepsilon \, .
\end{eqnarray*}
This implies that
$\expectation[X_m - X_{m-1} \, | \, \mathcal{F}_{m-1}] = 0$
a.s.\ for every $m \in \naturals$, and, since $X_{m-1}$ is
$\mathcal{F}_{m-1}$-measurable, we have
$\expectation[X_m \, | \, \mathcal{F}_{m-1}] = X_{m-1}$ almost surely.
Moreover, since $\varepsilon \in (0, \frac{1}{2}]$,
$$|X_m - X_{m-1}| \leq \max \left\{d,
\frac{\varepsilon d}{1-\varepsilon} \right\} = d \qquad \text{a.s.}$$
so the Azuma--Hoeffding inequality gives
\begin{equation}
\pr(X_k \geq kx) \leq \exp\left(-\frac{kx^2}{2d^2}\right), \qquad \forall \, x \ge 0
\label{example: Azuma--Hoeffding inequality}
\end{equation}
independently of the value of $\varepsilon$ (note that $X_0=0$
a.s.). However, we can use Theorem~\ref{theorem: first refined concentration inequality}
to get a better bound; since for every $m \in \naturals$
\begin{eqnarray*}
\expectation \bigl[ (X_m
- X_{m-1})^2 \, | \, \mathcal{F}_{m-1} \bigr] = \frac{d^2 \varepsilon}{1-\varepsilon}, \qquad \text{a.s.}
\end{eqnarray*}
it follows from
\eqref{eq: one-sided concentration
inequality of the first refined bound} that
\begin{equation}
\pr(X_k \geq kx) \leq
\exp\Bigg(-k \, H\bigg(\frac{x(1-\varepsilon)}{d} + \varepsilon
\; \Big\| \; \varepsilon \bigg) \Bigg), \qquad \forall \, x \geq 0.
\label{example: one-sided concentration inequality in Theorem2}
\end{equation}
Consider the case where $\varepsilon \rightarrow 0$. Then,
for arbitrary $x>0$ and $k \in \naturals$, the Azuma--Hoeffding inequality in
\eqref{example: Azuma--Hoeffding inequality} provides an upper bound that
is strictly positive independently of $\varepsilon$,
whereas the one-sided concentration inequality of
Theorem~\ref{theorem: first refined concentration inequality}
implies a bound in \eqref{example: one-sided concentration
inequality in Theorem2} that tends to zero.
\end{example}

\begin{corollary}
Let $\{X_k, \mathcal{F}_k\}_{k=0}^n$ be a discrete-time
real-valued martingale, and assume that $| X_k - X_{k-1} | \leq d$
holds a.s. for some constant $d > 0$ and for every
$k \in \{1, \ldots, n\}$. Then, for every $\alpha \geq 0$,
\begin{equation}
\pr(|X_n-X_0| \geq \alpha n) \leq 2 \exp \left(-n f(\delta)
\right) \label{eq: the first refined concentration inequality with
no constraint on the conditional variance}
\end{equation}
where $\delta \triangleq \frac{\alpha}{d}$,
\begin{equation}
f(\delta) = \left\{
\begin{array}{ll}
\ln(2) \Bigl[1 - h_2\left(\frac{1-\delta}{2} \right) \Bigr], \quad &0
\leq \delta \leq 1
\\[0.2cm]
+\infty, \quad & \delta > 1
\end{array}
\right. \label{eq: f}
\end{equation}
and $h_2(x) \triangleq -x \log_2(x) - (1-x) \log_2(1-x)$ for $0
\leq x \leq 1$ is the binary entropy function (base $2$).
\label{corollary: a tightened version of the Azuma--Hoeffding inequality}
\end{corollary}
\begin{proof}
By substituting $\gamma=1$ in Theorem~\ref{theorem: first refined
concentration inequality} (since
there is no constraint on the conditional variance, one can
take $\sigma^2=d^2$), the corresponding exponent in
\eqref{eq: first refined concentration inequality} is equal to
\begin{equation}
H\bigg(\frac{1+\delta}{2} \Big\| \frac{1}{2}\bigg) = f(\delta),
\label{eq: f as divergence}
\end{equation}
since, from \eqref{eq: divergence}, it is easy to verify that
$H(p \| \frac{1}{2}) = \ln 2 \, \bigl[1-h_2(p)\bigr]$ for every
$p \in [0,1]$.
\end{proof}

An alternative proof of
Corollary~\ref{corollary: a tightened version of the Azuma--Hoeffding inequality},
which provides some further insight, is suggested in the following.
\begin{proof}
As a first step, a refined version of Hoeffding's lemma is provided (cf. Lemma~\ref{lm:Hoeffding}).
\begin{lemma}\label{lm:refined Hoeffding} Let $U \in \reals$ be a random
variable, such that $U \in [a,b]$ a.s.\ for some finite $a < b$, and $\expectation U = \frac{a+b}{2}$.
Then, for every $t \ge 0$,
\begin{align}
\label{eq:refined Hoeffding}
\expectation\left[ \exp\big(t(U- \expectation U)\big)\right]
\le \cosh\left(\frac{t(b-a)}{2}\right).
\end{align}
\end{lemma}
\begin{proof}
This refinement of \eqref{eq:Hoeffding}, if $\expectation U = \frac{a+b}{2}$,
follows from \eqref{eq:Hoeffding_bound_0}.
\end{proof}
The proof of Corollary~\ref{corollary: a tightened version of the Azuma--Hoeffding inequality}
continues by following the proof of the Azuma--Hoeffding inequality.
Recall that $\xi_k = X_k - X_{k-1}$, for all $k \in \naturals$, form the differences
of the martingale sequence with $|\xi_k| \le d$ (in the case where $d_k = d$, independently of $k$)
and $\expectation[\xi_k | \mathcal{F}_{k-1}] = 0$.
Using a conditional version of Lemma~\ref{lm:refined Hoeffding}, the bound in
\eqref{eq:bound on the conditional moment generating function of xi_n}
is improved to
\begin{align} \label{eq:improved bound on the conditional moment generating function of xi_n}
\expectation \bigl[\exp(t \xi_n) \, | \, \mathcal{F}_{n-1}\bigr]
\leq \cosh(td), \quad \forall \, t \geq 0
\end{align}
and continuing recursively, the quantity in \eqref{eq: smoothing theorem}
is upper bounded by
\begin{equation*}
\expectation \biggl[ \exp \biggl(t \sum_{k=1}^n \xi_k \biggr)
\biggr] \leq \cosh^n(td), \quad \forall \, t \ge 0.
\end{equation*}
Based on Chernoff's inequality, the following refinement of
\eqref{eq:parametric_Azuma} holds
\begin{align} \label{eq:refined parametric_Azuma}
\pr(X_n - X_0 \ge \alpha n) & \le \exp(-\alpha n t) \, \cosh^n(td) \nonumber \\
& = \exp\Bigl(-n \bigl[\alpha t - \ln \cosh(td)\bigr]\Bigr), \quad \forall \, t \ge 0.
\end{align}
Due to the bounded differences assumption, we have (a.s.)
$$|X_n - X_0| \le \sum_{k=1}^n |X_k - X_{k-1}| \le nd$$
so, if $\alpha > d$, we have
$\pr(X_n - X_0 \ge \alpha n)=0$.
If $0 \le \alpha < d$, an optimization of the free parameter $t$ on the
right-hand side of \eqref{eq:refined parametric_Azuma} gives
$ t = \frac{1}{d} \, \tanh^{-1}\left(\frac{\alpha}{d}\right).$
Substituting this optimized value of $t$ into \eqref{eq:refined parametric_Azuma},
combined with the use of the following two identities for hyperbolic functions:
\begin{align*}
& \tanh^{-1}(x) = \frac{1}{2} \, \ln \left(\frac{1+x}{1-x}\right), \quad \forall \, |x|<1, \\
& \cosh(x) = \frac{1}{\sqrt{1-\tanh^2(x)}}, \quad \forall \, x \in \reals,
\end{align*}
yield that the exponent on the right-hand side of \eqref{eq:refined parametric_Azuma}
is equal to
\begin{align*}
& \alpha t - \ln \cosh(td) \\
& = \frac{\alpha}{2d} \, \ln \left(\frac{1+\frac{\alpha}{d}}{1-\frac{\alpha}{d}}\right)
+ \frac{1}{2} \, \ln \left(1-\frac{\alpha^2}{d^2}\right)\\
& = \frac{1}{2} \left(1+\frac{\alpha}{d}\right) \ln\left(1+\frac{\alpha}{d}\right) +
\frac{1}{2} \left(1-\frac{\alpha}{d}\right) \ln\left(1-\frac{\alpha}{d}\right) \\
&= \ln 2 \, \left[ 1 - h_2\left(\frac{1}{2} \, \left(1-\frac{\alpha}{d}\right)\right) \right] \\
&= f(\delta)
\end{align*}
where the last equality follows from \eqref{eq: notation} and \eqref{eq: f}.
This gives the exponential bound in
Corollary~\ref{corollary: a tightened version of the Azuma--Hoeffding inequality}
for $\alpha \in [0,d)$. Finally, the result of this corollary
for $\alpha = d$ is obtained by letting $t$ tend to infinity in
the exponential bound on the right-hand side of \eqref{eq:refined parametric_Azuma}.
This gives
$$\lim_{t \rightarrow \infty} \bigl( td - \ln \cosh(td) \bigr)
= \ln 2, \quad \forall \, d>0$$
and, consequently, $$\pr(X_n - X_0 \ge d n) \leq 2^{-n}$$
which proves
Corollary~\ref{corollary: a tightened version of the Azuma--Hoeffding inequality}
for $\alpha = d$. Note that the factor~2 in the bound of
\eqref{eq: the first refined concentration inequality with
no constraint on the conditional variance} was justified in
the proof of Theorem~\ref{theorem: Azuma--Hoeffding inequality}.
\end{proof}

\begin{remark}
Corollary~\ref{corollary: a tightened version of the Azuma--Hoeffding
inequality}, which is a special case of
Theorem~\ref{theorem: first refined concentration inequality}
with $\gamma = 1$, forms a tightening of the Azuma--Hoeffding
inequality for the case where $d_k = d$ (independently of $k$).
This follows from Pinsker's inequality, which implies that
$f(\delta) > \frac{\delta^2}{2}$ for $\delta > 0$.
Figure~\ref{Figure: compare_exponents_theorem2} plots the two
exponents of the Azuma--Hoeffding inequality and its improvement
in Corollary~\ref{corollary: a tightened version of the Azuma--Hoeffding
inequality}, and they nearly coincide for $\delta \leq 0.4$.
The exponential bound of
Theorem~\ref{theorem: first refined concentration inequality} is
improved as the value of $\gamma \in (0,1)$ is reduced (see
Figure~\ref{Figure: compare_exponents_theorem2});
this holds since the additional constraint on the conditional
variance in Theorem~\ref{theorem: first refined concentration inequality}
has a growing effect by reducing the value of $\gamma$.
\begin{figure}[here!]  
\begin{center}
\epsfig{file=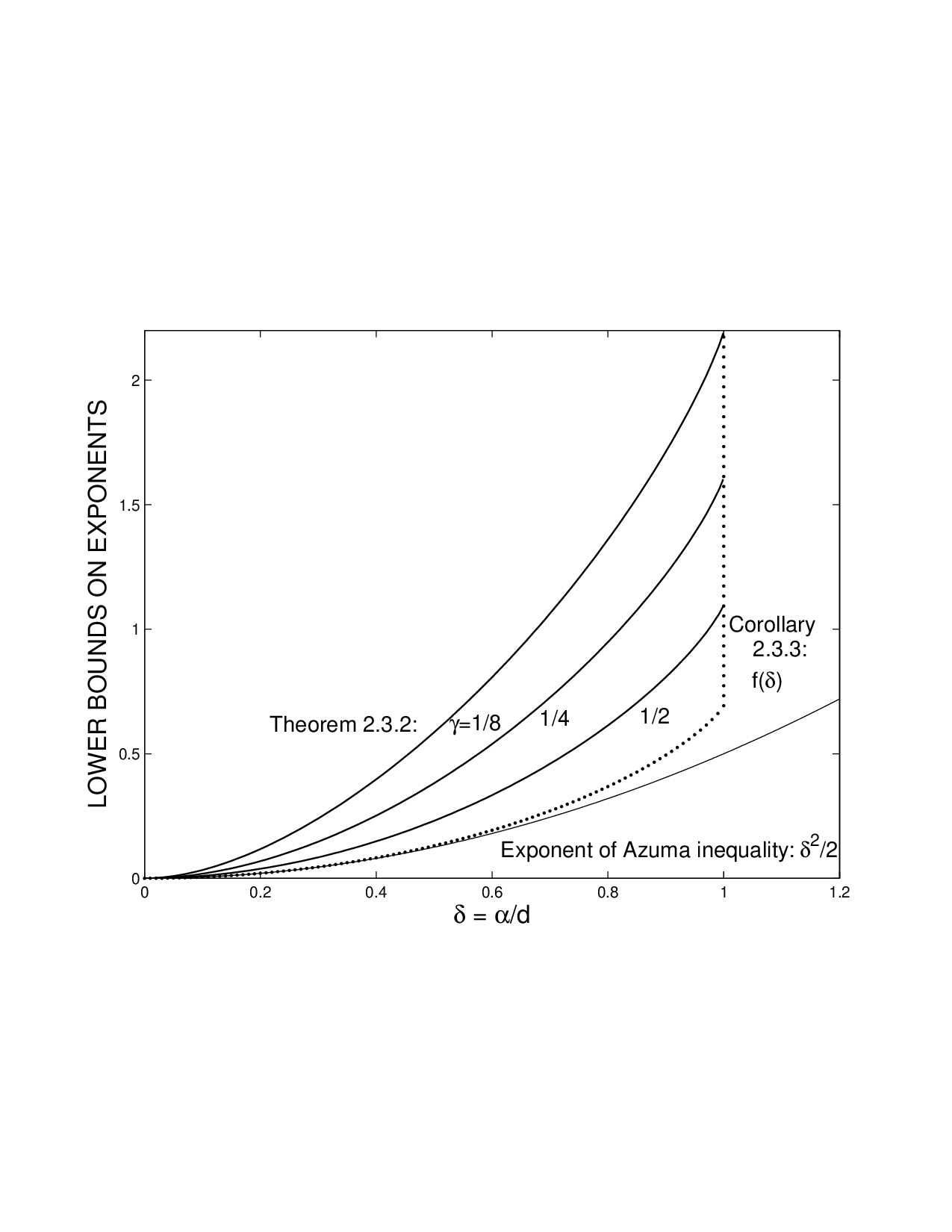,scale=0.5}
\end{center}
\caption{\label{Figure: compare_exponents_theorem2} Plot of the
lower bounds on the exponents in the Azuma--Hoeffding inequality
and the improved bounds in
Theorem~\ref{theorem: first refined concentration inequality} and
Corollary~\ref{corollary: a tightened version of the Azuma--Hoeffding
inequality}. The pointed line
refers to the exponent in Corollary~\ref{corollary: a tightened
version of the Azuma--Hoeffding inequality},
and the three solid lines for $\gamma = \frac{1}{8}, \frac{1}{4}$
and $\frac{1}{2}$ refer to the exponents in Theorem~\ref{theorem:
first refined concentration inequality}.}
\end{figure}
\label{remark: comparison of exponents}
\end{remark}

Theorem~\ref{theorem: first refined concentration inequality} can
also be used to analyze the probabilities of {\em small deviations},
i.e., events of the form $\{ |X_n - X_0| \geq \alpha \sqrt{n}\}$
for $\alpha \geq 0$ (in contrast to
{\em large-deviation} events of the form $\{ |X_n - X_0| \ge \alpha n \}$):

\begin{proposition}
Let $\{X_k, \mathcal{F}_k\}$ be a discrete-time real-valued martingale
that satisfies the conditions of
Theorem~\ref{theorem: first refined concentration inequality}. Then, for
every $\alpha \geq 0$,
\begin{equation}
\pr(|X_n-X_0| \geq \alpha \sqrt{n}) \leq 2
\exp\Bigl(-\frac{\delta^2}{2\gamma}\Bigr)
\Bigl(1+ O\bigl(n^{-\frac{1}{2}}\bigr)\Bigr).
\label{eq: concentration1}
\end{equation}
\label{proposition: a similar scaling of the
concentration inequalities}
\end{proposition}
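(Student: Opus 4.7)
The plan is to apply Theorem~\ref{theorem: first refined concentration inequality} in the ``small deviation" regime by rescaling the deviation parameter, and then to Taylor-expand the resulting Bernoulli divergence around its zero, which is the mechanism by which the large-deviation rate function becomes Gaussian on the CLT scale.

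Concretely, I would set $\alpha_n \deq \alpha/\sqrt{n}$ and $\delta_n \deq \alpha_n/d = \delta/\sqrt{n}$, and apply \eqref{eq: first refined concentration inequality} to the event $\{|X_n - X_0| \geq \alpha_n \cdot n\} = \{|X_n - X_0| \geq \alpha\sqrt{n}\}$. This yields
$$
\pr(|X_n - X_0| \geq \alpha\sqrt{n}) \leq 2\exp\!\left(-n\, d\!\left(\frac{\delta_n + \gamma}{1+\gamma}\,\bigg\|\,\frac{\gamma}{1+\gamma}\right)\right).
$$
For $n$ sufficiently large, $\delta_n < 1$, so the divergence is finite and the bound is non-trivial.

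Next, I would abbreviate $q \deq \gamma/(1+\gamma)$ and $p_n \deq q + \delta_n/(1+\gamma)$, and Taylor-expand $p\mapsto d(p\|q)$ about $p=q$. Direct differentiation of \eqref{eq: divergence} gives $\partial_p d(p\|q)\bigr|_{p=q} = 0$ and $\partial_p^2 d(p\|q)\bigr|_{p=q} = 1/\bigl(q(1-q)\bigr) = (1+\gamma)^2/\gamma$, while $\partial_p^3 d(p\|q) = -1/p^2 + 1/(1-p)^2$ is bounded on any closed neighbourhood of $q$ strictly inside $(0,1)$. By Taylor's theorem with remainder,
$$
d(p_n\|q) \; = \; \frac{(1+\gamma)^2}{2\gamma}\,(p_n-q)^2 + O\!\bigl((p_n-q)^3\bigr) \; = \; \frac{\delta_n^2}{2\gamma} + O(\delta_n^3) \; = \; \frac{\delta^2}{2\gamma\,n} + O\!\bigl(n^{-3/2}\bigr).
$$
Multiplying by $n$ gives $n\cdot d(p_n\|q) = \delta^2/(2\gamma) + O(n^{-1/2})$, and since $e^{O(n^{-1/2})} = 1 + O(n^{-1/2})$, the claimed estimate \eqref{eq: concentration1} follows upon exponentiating.

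The only real care is in controlling the cubic remainder uniformly: one must ensure that $p_n$ remains in a fixed compact subinterval of $(0,1)$ on which the third derivative of $d(\cdot\|q)$ is bounded. This is automatic because $\delta_n \to 0$ forces $p_n \to q \in (0,1)$; the implicit constants in the $O(\cdot)$ notation then depend only on $\gamma$ and $\delta$ (through the chosen neighbourhood), but not on $n$. No other step requires more than routine bookkeeping.
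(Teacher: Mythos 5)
Your argument is correct and coincides in substance with the proof in the paper: both start from Theorem~\ref{theorem: first refined concentration inequality} with $\delta_n = \delta/\sqrt{n}$, and both extract the Gaussian rate by a second-order expansion of the binary divergence at its vanishing point. The paper expands via the power series $(1+u)\ln(1+u) = u + \sum_{k\ge 2}\frac{(-u)^k}{k(k-1)}$ applied to the two logarithmic terms, whereas you Taylor-expand $p\mapsto d(p\|q)$ directly and bound the cubic remainder; these are the same computation organized differently, and the derivative values you record are correct ($d''(q\|q)=1/(q(1-q))=(1+\gamma)^2/\gamma$, $d'''$ bounded near $q\in(0,1)$). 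No gap.
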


\begin{remark}
From Proposition~\ref{proposition: a similar scaling of the
concentration inequalities}, for an arbitrary $\alpha
\geq 0$, the upper bound on $\pr(|X_n-X_0| \geq \alpha \sqrt{n})$
improves the exponent of the Azuma--Hoeffding inequality by a factor
of $\frac{1}{\gamma}$.
\end{remark}

\begin{proof}
Let $\{X_k, \mathcal{F}_k\}_{k=0}^{\infty}$ be a discrete-time martingale
that satisfies the conditions in Theorem~\ref{theorem: first
refined concentration inequality}. From \eqref{eq: first refined
concentration inequality}, for every $\alpha \geq 0$ and $n \in \naturals$,
\begin{equation}
\pr(|X_n-X_0| \geq \alpha \sqrt{n}) \leq 2 \exp\left(-n \,
H\bigg(\frac{\delta_n+\gamma}{1+\gamma} \bigg\|
\frac{\gamma}{1+\gamma}\bigg) \right)
\label{eq: exponent from first refined version with delta'}
\end{equation}
where, following \eqref{eq: notation},
\begin{equation}
\gamma \triangleq \frac{\sigma^2}{d^2}, \quad
\delta_n \triangleq \frac{\frac{\alpha}{\sqrt{n}}}{d} =
\frac{\delta}{\sqrt{n}} \, . \label{eq: new delta}
\end{equation}
With these definitions, we have
\begin{align}
H\bigg(\frac{\delta_n+\gamma}{1+\gamma}
\bigg\| \frac{\gamma}{1+\gamma}\bigg)
& = \frac{\gamma}{1+\gamma}
\Bigg[\left(1+\frac{\delta}{\gamma \sqrt{n}} \right)
\ln\left(1+\frac{\delta}{\gamma \sqrt{n}} \right)
\nonumber\\
& \qquad \qquad + \frac{1}{\gamma}
\left(1-\frac{\delta}{\sqrt{n}} \right)
\ln\left(1-\frac{\delta}{\sqrt{n}} \right)
\Bigg]. \label{eq: intermediate step1}
\end{align}
Using the power series expansion
\begin{equation*}
(1+u) \ln(1+u) = u + \sum_{k=2}^{\infty}
\frac{(-u)^k}{k(k-1)} \, , \quad -1 < u \leq 1
\end{equation*}
in \eqref{eq: intermediate step1}, it follows that for every
$n > \frac{\delta^2}{\gamma^2}$
\begin{eqnarray*}
&& n H\bigg(\frac{\delta_n+\gamma}{1+\gamma} \bigg\|
\frac{\gamma}{1+\gamma}\bigg)
= \frac{\delta^2}{2 \gamma} - \frac{\delta^3 (1-\gamma)}{6
\gamma^2} \, \frac{1}{\sqrt{n}} + \ldots \\
&& \hspace*{3.5cm} = \frac{\delta^2}{2 \gamma} +
O\left(\frac{1}{\sqrt{n}}\right).
\end{eqnarray*}
Substituting this into the exponent on the right-hand side of
\eqref{eq: exponent from first refined version with delta'}
gives \eqref{eq: concentration1}.
\end{proof}

\subsection{Inequalities for sub- and super-martingales}
\label{subsection: concentration inequalities for sub and super
martingales} Upper bounds on the probability $\pr( X_n - X_0 \geq
r)$ for $r \geq 0$, derived earlier in this section for
martingales, can be adapted to super-martingales (similarly to,
e.g., \cite[Chapter~2]{Chung_LU2006} or
\cite[Section~2.7]{survey2006}). Alternatively, by replacing $\{X_k,
\mathcal{F}_k\}_{k=0}^n$ with $\{-X_k, \mathcal{F}_k\}_{k=0}^n$, we may
obtain upper bounds on the probability $\pr( X_n - X_0 \leq -r)$
for sub-martingales. For example, the adaptation of
Theorem~\ref{theorem: first refined concentration inequality} to
sub- and super-martingales gives the following inequality:

\begin{corollary}
Let $\{X_k, \mathcal{F}_k\}_{k=0}^{\infty}$ be a
discrete-time real-valued super-martingale. Assume that, for
some constants $d, \sigma > 0$, the following two requirements are
satisfied a.s.:
\begin{eqnarray*}
&& X_k - \expectation[X_k \, | \, \mathcal{F}_{k-1}] \leq d, \\
&& \var (X_k | \mathcal{F}_{k-1}) \triangleq \expectation
\Bigl[\bigl(X_k - \expectation[X_k \, | \,
\mathcal{F}_{k-1}]\bigr)^2 \, | \, \mathcal{F}_{k-1} \Bigr] \leq
\sigma^2
\end{eqnarray*}
for every $k \in \{1, \ldots, n\}$. Then, for every $\alpha \geq
0$,
\begin{equation}
\hspace*{-0.2cm} \pr(X_n-X_0 \geq \alpha n) \leq \exp\left(-n \,
H\bigg(\frac{\delta+\gamma}{1+\gamma} \bigg\|
\frac{\gamma}{1+\gamma}\bigg) \right) \label{eq: first refined
concentration inequality for super-martingales}
\end{equation}
where $\gamma$ and $\delta$ are defined in \eqref{eq:
notation}, and the binary divergence $H(p\|q)$ is introduced in
\eqref{eq: divergence}. Alternatively, if $\{X_k,
\mathcal{F}_k\}_{k=0}^{\infty}$ is a sub-martingale, the same
upper bound in \eqref{eq: first refined concentration inequality
for super-martingales} holds for the probability $\pr(X_n-X_0 \leq
-\alpha n)$. If $\delta>1$, these two probabilities are zero.
\label{corollary: first refined concentration inequality for sub and super martingales}
\end{corollary}
\begin{proof}
It is similar to the proof of
Theorem~\ref{theorem: first refined concentration inequality};
the only difference is that, for a super-martingale,
$X_n - X_0 = \sum_{k=1}^n (X_k-X_{k-1}) \leq \sum_{k=1}^n
\xi_k$ a.s., where $\xi_k \triangleq X_k-\expectation[X_k \, | \,
\mathcal{F}_{k-1}]$ is $\mathcal{F}_k$-measurable. Therefore, we
have $\pr(X_n-X_0 \geq \alpha n) \leq \pr\bigl(\sum_{k=1}^n \xi_k \geq
\alpha n\bigr)$ where, a.s., $\xi_k \leq d$, $\expectation[\xi_k \,
| \, \mathcal{F}_{k-1}]=0$, and $\var(\xi_k \, | \,
\mathcal{F}_{k-1}) \leq \sigma^2$. The rest of the proof
coincides with the proof of Theorem~\ref{theorem: first refined
concentration inequality} (starting from \eqref{eq: Chernoff's inequality}).
The other inequality for sub-martingales holds due to the fact
that if $\{X_k, \mathcal{F}_k\}$ is a sub-martingale then $\{-X_k,
\mathcal{F}_k\}$ is a super-martingale.
\end{proof}

The reader is referred to \cite{FanGL_paper1} for an extension of Hoeffding's
inequality to super-martingales with differences bounded from above (or sub-martingales
with differences bounded from below), and to \cite{FanGL_paper2} for large deviation
exponential inequalities for super-martingales.

\section{Relations to classical results in probability theory}
\label{section: Relations to Results in Probability Theory}

\subsection{The martingale central limit theorem}
\label{subsection: relation between the martingale CLT and Proposition 4.3}
A relation between
Proposition~\ref{proposition: a similar scaling of the concentration inequalities}
and the martingale central limit theorem (CLT) is considered in the following.

Let $(\Omega, \mathcal{F}, \pr)$ be a probability space. Given a
filtration $\{\mathcal{F}_k\}$, we say that $\{Y_k, \mathcal{F}_k\}_{k=0}^{\infty}$
is a martingale-difference sequence if, for every $k$,
\begin{enumerate}
\item $Y_k$ is $\mathcal{F}_k$-measurable,
\item $ \expectation [|Y_k|] < \infty$,
\item $ \expectation \bigl[Y_k \, | \,
\mathcal{F}_{k-1}\bigr] = 0.$
\end{enumerate}
Let $$S_n = \sum_{k=1}^n Y_k, \quad \forall \, n \in \naturals$$
and $S_0 = 0$; then $\{S_k, \mathcal{F}_k\}_{k=0}^{\infty}$ is a
martingale. Assume that the sequence of random variables $\{Y_k\}$
is bounded, i.e., there exists a constant $d$ such that
$|Y_k| \leq d$ a.s., and furthermore, assume that the limit
$$ \sigma^2 \triangleq \lim_{n \rightarrow \infty} \frac{1}{n}
\sum_{k=1}^n \expectation\bigl[Y_k^2 \, | \, \mathcal{F}_{k-1}\bigr]$$
exists in probability and is positive. The martingale CLT asserts that,
under the above conditions, $\left\{\frac{S_n}{\sqrt{n}}\right\}$ converges
in distribution (or weakly) to the Gaussian distribution
$\mathcal{N}(0, \sigma^2)$; we denote this convergence by
$\frac{S_n}{\sqrt{n}} \Rightarrow \mathcal{N}(0, \sigma^2)$.
(There exist more general versions of this statement --- see, e.g.,
\cite[pp.~475--478]{Billingsley}).

Let $\{X_k, \mathcal{F}_k\}_{k=0}^{\infty}$ be a
real-valued martingale with bounded differences where there exists a constant
$d$ such that a.s.
$$|X_k - X_{k-1}| \leq d, \quad \forall \, k \in \naturals.$$
Define, for every $k \in \naturals$,
$$Y_k \triangleq X_k - X_{k-1}$$ and $Y_0 \triangleq 0$.
Then $\{Y_k, \mathcal{F}_k\}_{k=0}^{\infty}$ is a
martingale-difference sequence, and $|Y_k| \leq d$ a.s.\ for every
$k \in \naturals \cup \{0\}$.
Assume also that there exists a constant $\sigma > 0$, such that, for all $k$,
$$\expectation[Y_k^2 \, | \, \mathcal{F}_{k-1}] =
\expectation[(X_k - X_{k-1})^2 \, | \, \mathcal{F}_{k-1}] =
\sigma^2, \qquad \text{a.s.}$$
Consequently, from the martingale CLT, it follows that
$$ \frac{X_n - X_0}{\sqrt{n}} \Longrightarrow \mathcal{N}(0,
\sigma^2),$$ so, for every $\alpha \geq 0$,
\begin{equation*}
\lim_{n \rightarrow \infty} \pr\Big(|X_n - X_0| \geq \alpha \sqrt{n}\Big)
= 2 \, Q\Bigl(\frac{\alpha}{\sigma}\Bigr)
\end{equation*}
where the $Q$-function is defined in \eqref{eq: Q function}. In terms of
the notation in \eqref{eq: notation}, we have
$\frac{\alpha}{\sigma} = \frac{\delta}{\sqrt{\gamma}}$, so that
\begin{equation}
\lim_{n \rightarrow \infty} \pr\Big(|X_n - X_0| \geq \alpha \sqrt{n}\Big)
= 2 \, Q\biggl(\frac{\delta}{\sqrt{\gamma}}\biggr). \label{eq:
consequence of the martingale CLT}
\end{equation}
From the fact that
$$
Q(x) \leq \frac{1}{2} \,
\exp\left(-\frac{x^2}{2}\right), \qquad \forall \, x \ge 0$$
it follows that, for every
$\alpha \geq 0$,
\begin{equation*}
\lim_{n \rightarrow \infty} \pr(|X_n - X_0| \geq \alpha \sqrt{n})
\leq \exp\left(-\frac{\delta^2}{2\gamma}\right).
\end{equation*}
This inequality coincides with the large-$n$ limit of the
inequality in Proposition~\ref{proposition: a similar scaling of
the concentration inequalities}, except for the additional factor
of~$2$ in the pre-exponent (see the right-hand side of
\eqref{eq: concentration1}). Note also that the proof of
Proposition~\ref{proposition: a similar scaling of the
concentration inequalities} is applicable for
finite $n$, and not only in the asymptotic regime $n \to \infty$.
Furthermore, from the exponential upper and lower
bounds on the Q-function in \eqref{eq: upper and lower bounds for
the Q function} and from \eqref{eq: consequence of
the martingale CLT}, it follows that the exponent in the concentration
inequality \eqref{eq: concentration1} cannot be improved without imposing
additional conditions on the martingale sequence.

\subsection{The moderate deviations principle}
\label{subsection: MDP for real-valued i.i.d. RVs}

The moderate deviations principle (MDP) on the real line (see, e.g.,
\cite[Theorem~3.7.1]{Dembo_Zeitouni}) states the following: Let
$\{X_i\}_{i=1}^n$ be a sequence of real-valued i.i.d.\ random variables such
that $\Lambda_X(\lambda) \triangleq \ln \expectation[e^{\lambda X_i}] < \infty$
in some neighborhood of zero, and also assume that
$\expectation[X_i] = 0$ and $\sigma^2 = \var(X_i) > 0$. Let
$\{a_n\}_{n=1}^{\infty}$ be a non-negative sequence such that $a_n
\rightarrow 0$ and $n a_n \rightarrow \infty$ as $n \rightarrow
\infty$, and let
\begin{equation}
Z_n \triangleq \sqrt{\frac{a_n}{n}} \sum_{i=1}^n X_i,
\quad \forall \, n \in \naturals.
\label{eq: Z sequence}
\end{equation}
Then, for every measurable set $\Gamma \subseteq \reals$,
\begin{align}
-\frac{1}{2 \sigma^2} \inf_{x \in \Gamma^0} x^2
&\leq \liminf_{n \rightarrow \infty} a_n \ln \pr(Z_n \in \Gamma)  \nonumber\\
& \leq \limsup_{n \rightarrow \infty} a_n \ln \pr(Z_n \in \Gamma) \nonumber \\
& \leq -\frac{1}{2 \sigma^2} \inf_{x \in \overline{\Gamma}} x^2
\end{align}
where $\Gamma^0$ and $\overline{\Gamma}$ denote, respectively,
the interior and the closure of $\Gamma$.

Let $\eta \in (\frac{1}{2}, 1)$ be an arbitrary fixed number, and
let $\{a_n\}_{n=1}^{\infty}$ be the non-negative sequence
$$a_n = n^{1-2\eta}, \quad \forall \, n \in \naturals$$ so that
$a_n \rightarrow 0$ and $n a_n \rightarrow \infty$ as $n
\rightarrow \infty$. Let $\alpha \in \reals^+$, and $\Gamma
\triangleq (-\infty, -\alpha] \cup [\alpha, \infty)$. Note that,
from \eqref{eq: Z sequence},
\begin{equation} \label{eq:probability of a sum of i.i.d. RVs}
\pr\left( \left|\sum_{i=1}^n X_i \right| \geq \alpha n^{\eta} \right)
= \pr(Z_n \in \Gamma)
\end{equation}
so, by the MDP,
\begin{equation}
\hspace*{-0.4cm} \lim_{n \rightarrow \infty} n^{1-2\eta} \;
\ln \pr\left( \left|\sum_{i=1}^n X_i \right| \geq \alpha n^{\eta} \right)
= -\frac{\alpha^2}{2 \sigma^2}, \qquad \forall \, \alpha \geq 0.
\label{eq: MDP for i.i.d. real-valued RVs}
\end{equation}
We show in Appendix~\ref{appendix: MDP} that, in
contrast to the Azuma--Hoeffding inequality, Theorem~\ref{theorem: first
refined concentration inequality} provides an upper bound on the left-hand
side of \eqref{eq:probability of a sum of i.i.d. RVs} which
coincides with the asymptotic limit in \eqref{eq: MDP for i.i.d.
real-valued RVs}. The analysis in Appendix~\ref{appendix: MDP}
provides another interesting link between Theorem~\ref{theorem:
first refined concentration inequality} and a classical result
in probability theory, and thus emphasizes the significance
of the refinements of the Azuma--Hoeffding inequality.

\subsection{Functions of discrete-time Markov chains}

An interesting relation between discrete-time Markov
chains and martingales is the following (see, e.g.,
\cite[p.~473]{GrimmettS_book}): Let $\{X_n\}^\infty_{n=0}$ be a
discrete-time Markov chain taking values in a countable state space
$\mathcal{S}$ with transition matrix ${\bf{P}}$. Let
$\psi \colon \mathcal{S} \to \reals$  be a {\em harmonic function}
of the Markov chain, i.e.,
\begin{equation}
\sum_{s \in \mathcal{S}} p_{s',s} \psi(s) = \psi(s'), \quad
\forall \, s' \in \mathcal{S}
\label{eq: harmonic function}
\end{equation}
and assume also that $\psi$ is a measurable and bounded
function. Let $Y_n \deq \psi(X_n)$ for every $n \ge 0$,
and let $\{\cF_n\}$ be the natural filtration where
$\cF_n = \sigma(X_0, \ldots, X_n)$. It is a remarkable
fact that $\{Y_n, \cF_n\}$ is a martingale; this
property holds since $Y_n$ is $\cF_n$-measurable,
$\expectation[|Y_n|] < \infty$ (due to the requirement that
$\psi$ is bounded), and from \eqref{eq: harmonic function}
\begin{align} \label{eq:martingality of the composed Markov chain}
\expectation[Y_n \, | \, \cF_{n-1}] = \sum_{s \in \mathcal{S}}
p_{X_{n-1},s} \, \psi(s) = \psi(X_{n-1}) = Y_{n-1}, \quad
\forall \, n \in \naturals.
\end{align}
This relation between Markov chains and martingales enables
to apply the concentration inequalities of this chapter
to the composition of a bounded harmonic function and a Markov
chain; note that the boundedness of $\psi$ implies
that the differences of the martingale sequence are uniformly
bounded (this holds since, for every $n$, we have
$|Y_n - Y_{n-1}| \le 2 \|\psi\|_{\infty} < \infty$).

More generally, let $\underline{\psi}$ be a right eigenvector
of the transition matrix $P$ such that
$\|\underline{\psi}\|_{\infty} < \infty$, and let $\lambda$ be
its corresponding eigenvalue such that $|\lambda| \ge 1$. Let
$\mathcal{S} = \{s_1, s_2, \ldots\}$ be the countable state space
of the Markov chain, and
let $\psi \colon \mathcal{S} \to \reals$ be a real-valued
function such that $\psi(s_i)$ is equal to the $i$-th entry
of the vector $\underline{\psi}$. Then, the following equality holds:
$$\sum_{s \in \mathcal{S}} p_{s',s} \psi(s) = \lambda \, \psi(s'),
\quad \forall \, s' \in \mathcal{S}$$
which generalizes \eqref{eq: harmonic function} (i.e., if $\lambda=1$,
the function $\psi$ is harmonic).
Similarly to \eqref{eq:martingality of the composed Markov chain},
for every $n \ge 1$,
$$\expectation[\psi(X_n) \, | \, \cF_{n-1}] = \lambda \, \psi(X_{n-1}).$$
Defining $Y_n = \lambda^{-n} \, \psi(X_n)$, for $n \ge 0$,
implies that $\expectation[Y_n | \cF_{n-1}] = Y_{n-1}$. Since
$|\lambda| \ge 1$ and $\|\underline{\psi}\|_{\infty} < \infty$
then $\expectation[|Y_n|] < \infty$. Consequently,
$\{Y_n, \cF_n\}$ is a martingale sequence, and its differences are
uniformly bounded. The latter property holds since, for every $n \ge 1$,
\begin{align*}
& |Y_n - Y_{n-1}| \\[0.1cm]
& \le |\lambda|^{-n} \, |\psi(X_n)|
+ |\lambda|^{-(n-1)} \, |\psi(X_{n-1})| \\[0.1cm]
& \le |\psi(X_n)| + |\psi(X_{n-1})| \\[0.1cm]
& \le 2 \| \underline{\psi} \|_{\infty} < \infty.
\end{align*}
Since $\{Y_n, \cF_n\}$ is demonstrated to be a discrete-time
martingale with uniformly bounded differences, the concentration
inequalities of this chapter are applicable here as well.

Exponential deviation bounds for an important class of Markov
chains, so-called Doeblin chains, were derived by Kontoyiannis
\cite{Kontoyiannis_ISIT05}. These bounds are essentially identical
to the Hoeffding inequality in the special case of i.i.d.\ random
variables (see \cite[Remark~1]{Kontoyiannis_ISIT05}).

\section{Applications in information theory and coding}
\label{section: Applications}

This section is focused on applications of the concentration inequalities,
derived in this chapter via the martingale approach, in information theory,
communications and coding.

\subsection{Minimum distance of binary linear block codes}
Consider the ensemble of binary linear block codes of length $n$ and
rate $R$, where the codes are chosen uniformly at random. The
asymptotic average value of the normalized minimum distance is
equal to (see \cite[Section~2.C]{BargF_IT2002})
\begin{equation*}
\lim_{n \rightarrow \infty}
\frac{ \expectation[d_{\min}(\mathcal{C})]}{n} = h_2^{-1}(1-R)
\end{equation*}
where $h_2^{-1} \colon [0,1] \to [0, \frac{1}{2}]$ denotes the inverse
of the binary entropy function to the base $2$.

Let ${\bf{H}}$ denote an $n(1-R) \times n$ parity-check matrix of a
linear block code $\mathcal{C}$ from this ensemble. The minimum
distance of the code is equal to the minimal number of columns in
${\bf{H}}$ that are linearly dependent. Note that the minimum distance
is a property of the code, and it does not depend on the choice of the
particular parity-check matrix which represents the code.

Let us construct a sequence of integer-valued random variables
$\{X_i\}_{i=0}^n$ where $X_i$ is defined to be the minimal number
of linearly dependent columns of a parity-check matrix ${\bf{H}}$,
chosen uniformly at random from the ensemble, given that the first
$i$ columns of ${\bf{H}}$ are already revealed; this refers to
a random process where sequentially, at every time instant, a new
column of the parity-check matrix ${\bf{H}}$ is revealed.

Recalling Fact~\ref{fact: construction of martingales} from
Section~\ref{section: Discrete-Time Martingales}, we see that this
is a martingale sequence with the natural filtration $\{\cF_i\}_{i=0}^n$
where $\mathcal{F}_i$ is the $\sigma$-algebra that is generated by all
subsets of $n(1-R) \times n$ binary parity-check matrices whose first
$i$~columns are fixed. This martingale sequence has bounded differences,
and it satisfies $|X_i - X_{i-1}| \leq 1$ for $i \in \{1, \ldots, n\}$;
this can be verified by noticing that the observation of a new column
of ${\bf{H}}$ can change the minimal number of linearly dependent columns
by at most~$1$. Note that the random variable $X_0$ is the expected minimum
Hamming distance of the ensemble, and $X_n$ is the minimum distance of a
particular code from the ensemble (since once all the $n$ columns of
${\bf{H}}$ are revealed, the code is known exactly). Hence, by the
Azuma--Hoeffding inequality,
\begin{equation*}
\pr\Big( | d_{\min}(\mathcal{C}) - \expectation [d_{\min}(\mathcal{C})]
| \geq \alpha \sqrt{n} \Big) \leq 2
\exp\left(-\frac{\alpha^2}{2}\right), \; \forall \, \alpha
> 0.
\end{equation*}

This leads to the following concentration theorem of the minimum
distance around the expected value:
\begin{theorem}
Let $\mathcal{C}$ be chosen uniformly at random from the ensemble of
binary linear block codes of length $n$ and rate $R$. Then for every
$\alpha > 0$, with probability at least $1-2
\exp\left(-\frac{\alpha^2}{2}\right)$, the minimum distance of
$\mathcal{C}$ lies in the interval
$[n \, h_2^{-1}(1-R) - \alpha \sqrt{n}, \; n \, h_2^{-1}(1-R) +
\alpha \sqrt{n}].$
\end{theorem}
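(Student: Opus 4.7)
The plan is to combine the Azuma--Hoeffding concentration established in the discussion immediately preceding the theorem with the asymptotic formula $\expectation[d_{\min}(\mathcal{C})]/n \to h_2^{-1}(1-R)$ quoted at the start of the subsection; essentially, all the nontrivial work has already been carried out above the theorem statement.

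First, I would invoke the Doob martingale construction spelled out in the paragraph above the theorem: for $i = 0, 1, \ldots, n$, let $X_i = \expectation[d_{\min}(\mathcal{C}) \mid \mathcal{F}_i]$, where $\mathcal{F}_i$ is the $\sigma$-algebra generated by the first $i$ columns of a uniformly random parity-check matrix $\mathbf{H}$. By Fact~\ref{fact: construction of martingales}, $\{X_i, \mathcal{F}_i\}_{i=0}^n$ is a martingale; by construction, $X_0 = \expectation[d_{\min}(\mathcal{C})]$ and $X_n = d_{\min}(\mathcal{C})$ almost surely. Revealing one additional column of $\mathbf{H}$ can change the minimal number of linearly dependent columns by at most one, so $|X_i - X_{i-1}| \le 1$ a.s.\ for every $i$.

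Second, I would apply Theorem~\ref{theorem: Azuma's concentration inequality} with $d_k = 1$ for all $k$ and deviation parameter $r = \alpha \sqrt{n}$ to obtain
$$
\pr\bigl(|d_{\min}(\mathcal{C}) - \expectation[d_{\min}(\mathcal{C})]| \geq \alpha \sqrt{n}\bigr) \leq 2 \exp\!\left(-\frac{\alpha^2}{2}\right).
$$
Taking complements, with probability at least $1 - 2\exp(-\alpha^2/2)$ the random minimum distance $d_{\min}(\mathcal{C})$ lies in the interval $[\expectation[d_{\min}(\mathcal{C})] - \alpha \sqrt{n},\, \expectation[d_{\min}(\mathcal{C})] + \alpha \sqrt{n}]$.

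Third, to replace the (random ensemble's) expectation $\expectation[d_{\min}(\mathcal{C})]$ by the deterministic centering $n\, h_2^{-1}(1-R)$ appearing in the theorem, I would invoke the cited asymptotic identity $\expectation[d_{\min}(\mathcal{C})]/n \to h_2^{-1}(1-R)$ as $n \to \infty$. This is the only delicate point in the proof: the concentration bound is genuinely around the mean, and the theorem's interval is centered at a deterministic approximation to that mean, so the statement is correct asymptotically (with the two centers differing by $o(n)$, and in fact by $o(\sqrt{n})$ under standard refinements). Modulo this harmless shift of the centering, the theorem follows immediately. There is no real obstacle to overcome; the entire content of the statement is already contained in the martingale construction plus Azuma's inequality, and the only thing to watch is the (asymptotic) identification of the two centers.
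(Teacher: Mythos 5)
Your proposal is correct and takes essentially the same route as the paper: the Doob martingale exposing parity-check columns, the $|X_i - X_{i-1}| \le 1$ increment bound, and Azuma's inequality with $r = \alpha\sqrt{n}$. Your third step also correctly flags a point that the paper itself glosses over: the Azuma bound concentrates $d_{\min}(\mathcal{C})$ around $\expectation[d_{\min}(\mathcal{C})]$, whereas the stated interval is centered at the deterministic $n\,h_2^{-1}(1-R)$, and the cited asymptotic $\expectation[d_{\min}(\mathcal{C})]/n \to h_2^{-1}(1-R)$ only guarantees the two centers differ by $o(n)$, not $o(\sqrt{n})$. Your parenthetical ``under standard refinements'' is doing real work there, since without such a refinement the theorem as literally worded would need a slightly larger interval; the paper does not address this either, so your proof is, if anything, more honest about the issue than the original.
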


\begin{remark}
Note that some well-known capacity-approaching families
of binary linear block codes have a minimum Hamming distance
that grows sublinearly with the block length $n$. For example,
the class of parallel concatenated convolutional (turbo) codes was
proved to have minimum distance that grows at most as the
logarithm of the interleaver length \cite{Breiling04}.
\end{remark}

\subsection{Expansion properties of random regular bipartite graphs}
\label{section: Expansion in Random Regular Bipartite Graphs}
The Azuma--Hoeffding inequality is useful for analyzing the expansion
properties of random bipartite graphs. The following theorem was
proved by Sipser and Spielman \cite[Theorem~25]{SipserS96} in the context
of bit-flipping decoding algorithms for expander codes.
It is stated, in the following, in a more precise form that captures the
relation between the deviation from the expected value and the
exponential convergence rate of the resulting probability:
\begin{theorem}
Let $\mathcal{G}$ be a bipartite graph that is chosen uniformly at random from
the ensemble of bipartite graphs with $n$ vertices on the left, a left degree $l$,
and a right degree $r$.
Let $\alpha \in (0,1)$ and $\delta > 0$ be fixed numbers.
Then, with probability at least $1-\exp(-\delta n)$, all sets of
$\alpha n$ vertices on the left side of $\mathcal{G}$
are connected to at least
\begin{equation}
n \left[ \frac{l \bigl(1-(1-\alpha)^r\bigr)}{r} - \sqrt{2l \alpha \,
\bigl(h(\alpha)+\delta\bigr)} \, \right] \label{eq: number of
neighbors}
\end{equation}
vertices (neighbors) on the right side of $\mathcal{G}$,
where $h$ is the binary entropy function to base~$e$
(i.e., $h(x) = -x \ln(x) - (1-x) \ln(1-x)$ for $x \in [0,1]$).
\label{theorem: expansion}
\end{theorem}
\begin{proof} The proof starts by looking at the expected number of
neighbors, and then exposing one neighbor at a time to bound the
probability that the number of neighbors deviates significantly from
this mean.

Let $\mathcal{V}$ denote a given set of $n \alpha$ vertices on the
left side of the selected bipartite graph $\mathcal{G}$.
The set $\mathcal{V}$ has $n \alpha l$ outgoing edges in $\mathcal{G}$.
Let $X(\mathcal{G})$ be a random variable which denotes the number of
neighbors of $\mathcal{V}$ on the right side of $\mathcal{G}$, and let
$\expectation[X(\mathcal{G})]$ be the expected value of neighbors of $\mathcal{V}$
where all the bipartite graphs are chosen uniformly at random from the ensemble.
This expected number is equal to
\begin{align}
\label{eq:expected number of ineighbors}
\expectation[X(\mathcal{G})] = \frac{n l \bigl(1-(1-\alpha)^r\bigr)}{r}
\end{align}
since, for each of the $\frac{nl}{r}$ vertices on the right side of $\mathcal{G}$,
the probability that it has at least one edge in the subset of $n \alpha$ chosen
vertices on the left side of $\mathcal{G}$ is $1-(1-\alpha)^r$.

Let us form a martingale sequence to estimate, via the Azuma--Hoeffding
inequality, the probability that the actual number of neighbors
deviates by a certain amount from the expected value in
\eqref{eq:expected number of ineighbors}.

The set of $n \alpha$ vertices in $\mathcal{V}$ has
$n \alpha l$ outgoing edges. Let us reveal the destination of each
of these edges one at a time. More precisely, let $S_i$ be the random variable
denoting the vertex on the right side of $\mathcal{G}$ which the
$i$-th edge is connected to, where $i \in \left\{1, \ldots, n \alpha l\right\}$.
Let us define, for $i \in \{0, \ldots, n \alpha l\}$,
\begin{equation*}
X_i = \expectation [X(\mathcal{G}) | S_1, \ldots, S_{i-1}].
\end{equation*}
Note that this forms a martingale sequence where $X_0 =
\expectation[X(\mathcal{G})]$ and $X_{n \alpha l} =
X(\mathcal{G})$. For every $i \in \left\{1, \ldots, n \alpha l\right\}$,
we have $|X_i - X_{i-1}| \leq 1$ since every time only one
connected vertex on the right side of $\mathcal{G}$ is revealed,
so the number of neighbors of the
chosen set $\mathcal{V}$ cannot change by more than~1 at every
single time. Hence, from the one-sided Azuma--Hoeffding inequality in
Section~\ref{subsection: Azuma--Hoeffding inequality},
\begin{equation} \label{eq: number of neigbors with free parameter lambda}
\pr \Big( \expectation[X(\mathcal{G})] - X(\mathcal{G}) \geq
\lambda \sqrt{l \alpha n} \Big) \leq
\exp\left(-\frac{\lambda^2}{2}\right), \quad \forall \, \lambda
> 0.
\end{equation}
Since there are ${n}\choose{n\alpha}$ choices for the set
$\mathcal{V}$, the event that there
exists a set of size $n \alpha$ with less than
$\expectation[X(\mathcal{G})] - \lambda \sqrt{l \alpha n}$ neighbors
occurs with probability at most $ {{n}\choose{n\alpha}} \,
\exp\bigl(-\frac{\lambda^2}{2}\bigr)$, by the union bound. Based on the
inequality ${{n}\choose{n\alpha}} \leq e^{n h(\alpha)}$, we get the
exponential upper bound $\exp\bigl(n h(\alpha) -\frac{\lambda^2}{2} \bigr)$.
Finally, choosing $\lambda = \sqrt{2n \bigl( h(\alpha) + \delta
\bigr)}$ in \eqref{eq: number of neigbors with free parameter lambda}
gives the bound in \eqref{eq: number of neighbors}.
\end{proof}

\subsection{Concentration of the crest factor for OFDM signals}
\label{subsection: Concentration of the Crest-Factor for OFDM Signals}

Orthogonal-frequency-division-multiplexing (OFDM) is a widely used
modulation scheme that converts a high-rate data stream into a large
number of closely spaced orthogonal sub-carrier signals. These sub-carriers
are used to transmit data steams over parallel narrow-band channels.
OFDM signals are used in various international standards for digital
television and audio broadcasting, DSL internet access, wireless networks,
and the fourth generation (4G) mobile communications. For a textbook
treatment of OFDM, the reader is referred to, e.g., \cite[Chapter~19]{Molisch_book}.

The primary advantage of OFDM signals over single-carrier modulation
schemes is in their immunity to severe channel conditions (e.g.,
attenuation of high frequencies in a long copper wire, narrowband
interference and frequency-selective fading due to multipath
propagation) without using complex equalization filters. This
important advantage arises from the fact that channel equalization
is significantly simplified due to the fact that the OFDM modulation
scheme can be viewed as using many slowly-varying modulated narrowband
signals rather than one rapidly-varying modulated wideband signal.
Nevertheless, one of the significant problems of OFDM signals is that
the peak amplitude of such a signal is typically much larger than its
average amplitude. The high  peak-to-average power ratio (PAPR) of OFDM
signals makes their transmission sensitive to non-linear devices in the
communication path, such as digital-to-analog converters, mixers and
high-power amplifiers. As a result of this drawback, linear transmitter
circuitry is required for OFDM signals, which suffers from a poor power
efficiency. For a recent comprehensive tutorial that considers this
long-lasting problem of the high PAPR, and some related issues, the reader
is referred to \cite{WunderFBLN_OFDM_tutorial_2012}.

Given an $n$-length codeword $\{X_i\}_{i=0}^{n-1}$, a single OFDM
baseband symbol is described by \vspace*{-0.2cm}
\begin{equation}
s(t) = \frac{1}{\sqrt{n}} \sum_{i=0}^{n-1} X_i \exp\Bigl(\frac{j
\, 2\pi i t}{T}\Bigr), \quad 0 \leq t \leq T. \label{eq: OFDM
signal}
\end{equation}
Let us assume that $X_0, \ldots, X_{n-1}$ are complex random variables, and
$|X_i|=1$ a.s. (for the moment, these random variables may be dependent; however,
later in this section, some concentration inequalities are derived for the case where
these random variables are independent). Since the sub-carriers are orthonormal
over $[0,T]$,  the signal power over the interval $[0,T]$ is~1 a.s.:
\begin{equation}
\frac{1}{T} \int_0^T |s(t)|^2 dt = 1.  \label{eq: energy of OFDM
signal}
\end{equation}
The crest factor (CF) of the signal $s$, composed of $n$ sub-carriers,
is defined as
\begin{equation}
\text{CF}_n(s) \triangleq \max_{0 \leq t \leq T} | s(t) |.
\label{eq: CF}
\end{equation}
Commonly, the impact of nonlinearities is described by the distribution
of the CF of the transmitted signal \cite{LitsynW06},
but its calculation involves time-consuming simulations even for a small
number of sub-carriers.
From \cite[Section~4]{SalemZ} and \cite{WunderB_IT}, it follows
that the CF scales with high probability like $\sqrt{\ln n}$ for
large $n$. In \cite[Theorem~3 and Corollary~5]{LitsynW06}, a
concentration inequality was derived for the CF of OFDM signals.
It states that, for an arbitrary $c \geq 2.5$,
\begin{equation*} \pr \biggl(\Bigl| \text{CF}_n(s) -
\sqrt{\ln n} \Bigr| < \frac{c \ln \ln n}{\sqrt{\ln n}} \biggr) = 1
- O\Biggl( \frac{1}{\bigl(\ln n \bigr)^4} \Biggr).
\end{equation*}

\begin{remark}
The analysis used to derive this rather strong concentration
inequality (see \cite[Appendix~C]{LitsynW06}) requires some
assumptions on the distribution of the $X_i$'s (see the two
conditions in \cite[Theorem~3]{LitsynW06} followed by
\cite[Corollary~5]{LitsynW06}). These requirements are not needed
in the following analysis, and the derivation of concentration
inequalities that are introduced in this subsection is much simpler
and provides some insight into the problem, although the resulting
concentration result is weaker than the one in \cite[Theorem~3]{LitsynW06}.
\end{remark}

\vspace*{0.2cm} In the following, the concentration of the crest factor
of OFDM signals is studied via the Azuma--Hoeffding inequality,
its refinement in Proposition~\ref{proposition: a similar scaling of the
concentration inequalities}, and McDiarmid's inequality. It is assumed
in the following that the symbols $\{X_j\}_{j=0}^{n-1}$ are independent
complex-valued random variables with magnitude~1, attaining the $M$ points
of an $M$-ary PSK constellation with equal probability. The material in
this section presents in part the work in \cite{Sason_ISWCS_OFDM}.

\bigskip
{\em Concentration via the Azuma--Hoeffding inequality}:
Let us define the random variables
\begin{equation}
Y_i = \expectation[ \, \text{CF}_n(s) \, | \, X_0, \ldots,
X_{i-1}], \quad i =0, \ldots, n.  \label{eq: martingale sequence
for OFDM}
\end{equation}
Based on a standard construction of martingales, $\{Y_i,
\mathcal{F}_i\}_{i=0}^n$ is a martingale, where $\mathcal{F}_i$ is
the $\sigma$-algebra generated by the first $i$ symbols
$(X_0, \ldots, X_{i-1})$ in \eqref{eq: OFDM signal}. Hence,
$\mathcal{F}_0 \subseteq \mathcal{F}_1 \subseteq \ldots \subseteq
\mathcal{F}_n$ is a filtration. This martingale also has bounded
differences:
$$|Y_i - Y_{i-1}| \leq \frac{2}{\sqrt{n}}, \qquad i \in \{1,\ldots,n\}$$
since revealing the additional $i$th
coordinate $X_i$ affects the CF, as defined in \eqref{eq: CF},
by at most $\frac{2}{\sqrt{n}}$ (see the first part of
Appendix~\ref{appendix: OFDM}). It therefore follows from the
Azuma--Hoeffding inequality that, for every $\alpha > 0$,
\begin{equation}
\hspace*{-0.2cm} \pr( | \text{CF}_n(s) -
\expectation [\text{CF}_n(s)] | \geq \alpha) \leq 2
\exp\left(-\frac{\alpha^2}{8}\right),
\label{eq: Azuma--Hoeffding inequality for OFDM}
\end{equation}
which demonstrates concentration around the expected value.

\bigskip
{\em Concentration of the crest factor via
Proposition~\ref{proposition: a similar scaling of the
concentration inequalities}}: We will now use
Proposition~\ref{proposition: a similar scaling of the
concentration inequalities} to derive an improved concentration
result. For the martingale sequence $\{Y_i\}_{i=0}^n$ in
\eqref{eq: martingale sequence for OFDM}, Appendix~\ref{appendix:
OFDM} gives that a.s.
\begin{equation}
|Y_i - Y_{i-1}| \leq \frac{2}{\sqrt{n}} \, , \quad
\expectation\bigl[(Y_i-Y_{i-1})^2  | \mathcal{F}_{i-1}\bigr] \leq
\frac{2}{n} \label{eq: properties for OFDM signals}
\end{equation}
for every $i \in \{1, \ldots, n\}$. Note that the conditioning on
the $\sigma$-algebra $\mathcal{F}_{i-1}$ is equivalent to
conditioning on the symbols $X_0, \ldots, X_{i-2}$, and there is
no conditioning for $i=1$. Further, let $Z_i = \sqrt{n} Y_i$ for
$0 \leq i \leq n$. Proposition~\ref{proposition: a similar scaling
of the concentration inequalities} therefore implies that, for an
arbitrary $\alpha
> 0$,
\begin{align}
& \pr( | \text{CF}_n(s) - \expectation [\text{CF}_n(s)] | \geq \alpha ) \nonumber \\
& = \pr( |Y_n - Y_0| \geq \alpha) \nonumber\\
& = \pr( |Z_n - Z_0| \geq \alpha \sqrt{n}) \nonumber\\
& \leq 2 \exp \left( -\frac{\alpha^2}{4} \, \Biggl(1 +
O\biggl(\frac{1}{\sqrt{n}}\biggr) \Biggr) \right) \label{eq:
OFDM-inequality1}
\end{align}
(since $\delta = \frac{\alpha}{2}$ and $\gamma = \frac{1}{2}$ in
the setting of Proposition~\ref{proposition: a similar scaling of
the concentration inequalities}). Note that the exponent in the
last inequality is doubled as compared to the bound that was
obtained in \eqref{eq: Azuma--Hoeffding inequality for OFDM} via the
Azuma--Hoeffding inequality, and the term that scales like
$O\Bigl(\frac{1}{\sqrt{n}}\Bigr)$ on the right-hand side of
\eqref{eq: OFDM-inequality1} is expressed explicitly for finite
$n$ (see the proof of Proposition~\ref{proposition: a similar scaling of the
concentration inequalities}).

\bigskip
{\em Establishing concentration via McDiarmid's inequality}:
We use in the following McDiarmid's inequality (see
Theorem~\ref{theorem: McDiarmid's inequality}) in order to prove a
concentration inequality for the crest factor of OFDM signals. To
this end, let us define
\begin{eqnarray*}
&& U \triangleq \max_{0 \leq t \leq T} \bigl|s(t; X_0, \ldots,
X_{i-1},
X_i, \ldots, X_{n-1})\bigr| \\
&& V \triangleq \max_{0 \leq t \leq T} \bigl|s(t; X_0, \ldots,
X'_{i-1}, X_i, \ldots, X_{n-1})\bigr|
\end{eqnarray*}
where the two vectors $(X_0, \ldots, X_{i-1}, X_i, \ldots,
X_{n-1})$ and $(X_0, \ldots, X'_{i-1}, X_i, \ldots, X_{n-1})$ may
only differ in their $i$-th coordinate. This then implies that
\begin{eqnarray*}
&& |U-V| \leq \max_{0 \leq t \leq T} \bigl|s(t; X_0, \ldots,
X_{i-1}, X_i, \ldots, X_{n-1}) \\
&& \hspace*{2.5cm} - s(t; X_0, \ldots, X'_{i-1}, X_i, \ldots,
X_{n-1})\bigr| \nonumber \\[0.15cm]
&& \hspace*{1.3cm} = \max_{0 \leq t \leq T} \frac{1}{\sqrt{n}} \,
\Bigr|\bigl(X_{i-1} - X'_{i-1}\bigr) \exp\Bigl(\frac{j \, 2\pi i
t}{T}\Bigr)\Bigr| \\[0.15cm]
&& \hspace*{1.3cm} = \frac{|X_{i-1} - X'_{i-1}|}{\sqrt{n}} \leq
\frac{2}{\sqrt{n}}
\end{eqnarray*}
where the last inequality holds since $|X_{i-1}|=|X'_{i-1}|=1$.
Hence, McDiarmid's inequality in
Theorem~\ref{theorem: McDiarmid's inequality}
implies that, for every $\alpha \geq 0$,
\begin{eqnarray}
\pr( | \text{CF}_n(s) - \expectation [ \text{CF}_n(s)] | \geq
\alpha ) \leq 2 \exp\Bigl(-\frac{\alpha^2}{2}\Bigr) \label{eq:
McDiarmid's inequality for OFDM}
\end{eqnarray}
which demonstrates concentration of the CF around its
expected value. The improvement
of McDiarmid's inequality is by a factor of~2 in comparison
to the refined version of the Azuma--Hoeffding inequality in
Proposition~\ref{proposition: a similar scaling of the
concentration inequalities}. As will be seen in
Chapter~\ref{chapter: entropy method}, there are some deep connections
between McDiarmid's inequality and information-theoretic aspects;
McDiarmid's inequality will be proved in Chapter~\ref{chapter: entropy method}
by the use of the entropy method and information-theoretic tools, and it
will be proved useful in information-theoretic problems.

\bigskip
To conclude, three concentration inequalities
for the crest factor (CF) of OFDM signals have been derived in this section
under the assumption that the symbols are independent. The first
two concentration inequalities rely on the Azuma--Hoeffding inequality
and its refinement in Proposition~\ref{proposition: a similar scaling of the
concentration inequalities},
whereas the third bound is based on McDiarmid's inequality. Although these
concentration results are weaker than some existing results in the literature
(see \cite{LitsynW06} and \cite{WunderB_IT}), they establish concentration
in a rather simple way and provide some additional insight to the problem.
McDiarmid's inequality improves the exponent of the Azuma--Hoeffding inequality
by a factor of~$4$, and the exponent of the refined version of the Azuma--Hoeffding
inequality from Proposition~\ref{proposition: a similar scaling of
the concentration inequalities} by a factor of~$2$. Note, however,
that Proposition~\ref{proposition: a similar scaling of the
concentration inequalities} may, in general, be tighter than
McDiarmid's inequality (this happens to be the case if $\gamma < \frac{1}{4}$
in the setting of Proposition~\ref{proposition: a similar scaling of the
concentration inequalities}).

\subsection{Concentration of the cardinality of the fundamental
system of cycles for LDPC code ensembles} \label{subsection:
Concentration of the Cardinality of the Fundamental System of
Cycles}
Low-density parity-check (LDPC) codes are linear block
codes that are represented by sparse parity-check matrices
\cite{Gallager_1962}. A sparse parity-check matrix allows one to
represent the corresponding linear block code by a sparse
bipartite graph, and to use this graphical representation for
implementing low-complexity iterative message-passing decoding.
The low-complexity decoding algorithms used for LDPC codes and
some of their variants are remarkable in that they achieve rates
close to the Shannon capacity limit for properly designed code
ensembles (see, e.g., \cite{RiU_book}). As a result of their
remarkable performance under practical decoding algorithms, these
coding techniques have revolutionized the field of channel coding,
and have been incorporated in various digital communication
standards during the last decade.

In the following, we consider ensembles of binary LDPC codes. The
codes are represented by bipartite graphs, where the variable nodes
are located on the left side of the graph and the parity-check
nodes are on the right. The parity-check equations that define the
linear code are represented by edges connecting each check node
with the variable nodes that are involved in the corresponding
parity-check equation. The bipartite graphs representing these
codes are sparse in the sense that the number of edges in the
graph scales linearly with the block length $n$ of the code.
Following standard notation, let $\lambda_i$ and $\rho_i$ denote
the fraction of edges attached, respectively, to variable and
parity-check nodes of degree~$i$. The LDPC code ensemble is
denoted by $\text{LDPC}(n,\lambda,\rho)$, where $n$ is the block
length of the codes, and the pair $\lambda(x) \triangleq \sum_i
\lambda_i x^{i-1}$ and $\rho(x) \triangleq \sum_i \rho_i x^{i-1}$
represents, respectively, the left and right degree distributions
of the ensemble from the edge perspective. It is well-known that
linear block codes that can be represented by cycle-free bipartite
(Tanner) graphs have poor performance even under ML decoding
\cite{Cycle_free_codes}. The bipartite graphs of capacity-approaching
LDPC codes should therefore have cycles. Thus, we need to examine
the cardinality of the {\em fundamental system of cycles} of a bipartite
graph. For preliminary material, the reader is referred to
Sections~II-A and II-E of \cite{Sason_LDPC09}. In \cite{Sason_LDPC09}
and \cite{Sason_Eshel_ISIT11}, the following question is addressed:

\begin{quote} Consider an LDPC ensemble whose transmission takes
place over a memoryless binary-input output-symmetric channel, and
refer to the bipartite graphs which represent codes from this
ensemble, where every code is chosen uniformly at random from the
ensemble. How does the average cardinality of the fundamental
system of cycles of these bipartite graphs scale as a function of
the achievable gap to capacity? \end{quote}

An information-theoretic lower bound on
the average cardinality of the fundamental system of cycles was
derived in \cite[Corollary~1]{Sason_LDPC09}. This bound was expressed
in terms of the achievable gap to capacity (even under ML decoding)
when the communication takes place over a memoryless binary-input
output-symmetric channel. More explicitly, it was shown that the
number of fundamental cycles should grow at least like $\log
\frac{1}{\varepsilon}$, where
$\varepsilon$ denotes the gap in rate to capacity. This lower bound
diverges as the gap to capacity tends to zero, which is consistent
with the findings in \cite{Cycle_free_codes} on cycle-free codes,
and expresses quantitatively the necessity of cycles in bipartite
graphs that represent good LDPC code ensembles.
As a continuation of this work, we will now provide a
large-deviations analysis of the cardinality of
the fundamental system of cycles for LDPC code ensembles.

Let the triplet $(n, \lambda, \rho)$ represent an LDPC code
ensemble, and let $\mathcal{G}$ be a bipartite graph that
corresponds to a code from this ensemble. Then the cardinality of
the fundamental system of cycles of $\mathcal{G}$, denoted by
$\beta(\mathcal{G})$, is equal to
\begin{equation*}
\beta(\mathcal{G}) = |E(\mathcal{G})| - |V(\mathcal{G})| +
c(\mathcal{G})
\end{equation*}
where $E(\mathcal{G})$ and $V(\mathcal{G})$ are the edge and the
vertex sets of ${\cal G}$, and $c(\mathcal{G})$ denotes the number
of connected components of $\mathcal{G}$, and $|A|$ denotes the
cardinality of a set $A$. Let $R_{\rm d} \in [0,1)$ denote the
{\em design rate} of the ensemble. Then, in every bipartite graph
${\mathcal G}$ drawn from the ensemble, there are $n$ variable nodes
and $m = n(1-R_{\text{d}})$ parity-check nodes, for a total of
$|V(\mathcal{G})| = n(2-R_{\text{d}})$ nodes. If we let $a_{\text{R}}$
designate the average right degree (i.e., the average degree of
the parity-check nodes), then the number of edges in $\mathcal{G}$
is given by $|E(\mathcal{G})| = m a_{\text{R}}$. Therefore, for a
code from the $(n, \lambda, \rho)$ LDPC code ensemble, the
cardinality of the fundamental system of cycles satisfies the
equality
\begin{equation}
\beta(\mathcal{G}) = n \bigl[(1-R_{\text{d}}) a_{\text{R}} -
(2-R_{\text{d}}) \bigr] + c(\mathcal{G}) \label{eq: cardinality}
\end{equation}
where the design rate and the average right degree can be computed from
the degree distributions $\lambda$ and $\rho$ as
\begin{equation*}
R_{\text{d}} = 1 - \frac{\int_0^1 \rho(x) \; \mathrm{d}x}{\int_0^1
\lambda(x) \; \mathrm{d}x}, \quad a_{\text{R}} = \frac{1}{\int_0^1
\rho(x) \; \mathrm{d}x}.
\end{equation*}

Let
\begin{equation}
E \triangleq |E(\mathcal{G})| = n(1-R_{\text{d}}) a_{\text{R}}
\label{eq: number of edges}
\end{equation}
denote the number of edges of an arbitrary bipartite graph
$\mathcal{G}$ from the ensemble (for a fixed ensemble, we
will use the terms ``code" and ``bipartite graph" interchangeably).
Let us arbitrarily assign numbers $1, \ldots, E$ to the $E$ edges
of $\mathcal{G}$. Based on
Fact~\ref{fact: construction of martingales}, let us construct a martingale
sequence $X_0, \ldots, X_E$, where $X_i$ (for $i=0, 1, \ldots, E$)
is a random variable that denotes the conditional expected number of components
of a bipartite graph $\mathcal{G}$ chosen uniformly at random
from the ensemble, given that the first $i$ edges of the graph
$\mathcal{G}$ have been revealed. Note that the corresponding filtration
$\mathcal{F}_0 \subseteq \mathcal{F}_1 \subseteq \ldots \subseteq
\mathcal{F}_E$ in this case is defined so that $\mathcal{F}_i$ is
the $\sigma$-algebra generated by all the sets of
bipartite graphs from the considered ensemble whose first $i$
edges are fixed. For this martingale sequence, $$ X_0 =
\expectation_{\text{LDPC}(n,\lambda,\rho)}[\beta(\mathcal{G})],
\quad X_E = \beta(\mathcal{G})$$ and (a.s.) $|X_k - X_{k-1}| \leq
1$ for $k=1, \ldots, E$ (since revealing a new edge of
$\mathcal{G}$ can change the number of components in the graph
by at most~$1$). By Corollary~\ref{corollary: a tightened version
of the Azuma--Hoeffding inequality},
it follows that for every $\alpha \geq 0$
\begin{eqnarray}
&& \hspace*{-0.9cm} \pr \left( |c(\mathcal{G}) -
\expectation_{\text{LDPC}(n,\lambda,\rho)}[c(\mathcal{G})]| \geq
\alpha E \right) \leq 2 e^{-f(\alpha) E} \nonumber \\
&& \hspace*{-1.4cm} \Rightarrow \pr \left( |\beta(\mathcal{G}) -
\expectation_{\text{LDPC}(n,\lambda,\rho)}[\beta(\mathcal{G})]|
\geq \alpha E \right) \leq 2 e^{-f(\alpha) E} \label{eq:
Corollary 2 for the cardinality of the set of fundamental system
of cycles}
\end{eqnarray}
where the implication is a consequence of \eqref{eq: cardinality},
and the function $f$ was defined in \eqref{eq: f}. Hence, for
$\alpha > 1$, this probability is zero (since $f(\alpha) =
+\infty$ for $\alpha > 1$).
Note that, from \eqref{eq: cardinality},
$\expectation_{\text{LDPC}(n,\lambda,\rho)}[\beta(\mathcal{G})]$
scales linearly with $n$.
The combination of Eqs.~\eqref{eq: f}, \eqref{eq: number of edges},
\eqref{eq: Corollary 2 for the cardinality of the set of fundamental
system of cycles} gives the following statement:

\begin{theorem}
Let $\text{LDPC}(n,\lambda,\rho)$ be the LDPC code ensemble with block
length $n$ and a pair $(\lambda,\rho)$ of left and right degree
distributions (from the edge perspective).
Let $\mathcal{G}$ be a bipartite graph chosen uniformly at random
from this ensemble. Then, for every $\alpha \geq 0$, the
cardinality of the fundamental system of cycles of $\mathcal{G}$,
denoted by $\beta(\mathcal{G})$, satisfies the following inequality:
\begin{equation}\label{eq: LDPC concentration bound}
\pr \left( \bigl|\beta(\mathcal{G}) -
\expectation_{\text{LDPC}(n,\lambda,\rho)}[\beta(\mathcal{G})]\bigr|
\geq \alpha n \right) \leq 2 \cdot 2^{- \left[ 1 -
h_2\left(\frac{1-\eta}{2}\right) \right] \, \frac{\alpha n}{\eta}}
\end{equation}
where $h_2$ is the binary entropy function to the base~2,
$\eta \triangleq \frac{\alpha}{(1-R_{\text{d}}) \,
a_{\text{R}}}$, and $R_{\text{d}}$ and $a_{\text{R}}$ are,
respectively, the design rate and average right degree of the
ensemble. Consequently, if $\eta > 1$,
this probability is zero. \label{theorem: Concentration
result for the cardinality of the fundamental system of cycles}
\end{theorem}

\begin{remark}
We can obtain the following weakened version of \eqref{eq: LDPC concentration bound}
from the Azuma--Hoeffding inequality: for every $\alpha \geq 0$,
\begin{eqnarray*}
&& \pr \left( |\beta(\mathcal{G}) -
\expectation_{\text{LDPC}(n,\lambda,\rho)}[\beta(\mathcal{G})]|
\geq \alpha n \right) \leq 2 e^{-\frac{\alpha \eta n}{2}}
\end{eqnarray*}
where $\eta$ is defined in
Theorem~\ref{theorem: Concentration result for the cardinality
of the fundamental system of cycles} (note that
$\frac{\alpha}{\eta} = \frac{E}{n}$ is equal to the average degree
of the variable nodes). The exponential decay of the last two bounds
is similar for values of $\alpha$ close to zero (see the exponents of
the Azuma--Hoeffding inequality
and Corollary~\ref{corollary: a tightened version of the Azuma--Hoeffding
inequality} in Figure~\ref{Figure: compare_exponents_theorem2}).
\end{remark}

\begin{remark}
For various capacity-achieving sequences of LDPC code ensembles on
the binary erasure channel, the average right degree scales like
$\log \frac{1}{\varepsilon}$ where $\varepsilon$ denotes the
fractional gap to capacity under belief-propagation decoding
(i.e., $R_{\text{d}} = (1-\varepsilon)C$) \cite{LubyMSS_IT01}.
Therefore, for small values of $\alpha$, the exponential decay
rate in the inequality of Theorem~\ref{theorem: Concentration
result for the cardinality of the fundamental system of cycles}
scales like $ \left(\log \frac{1}{\varepsilon} \right)^{-2}$. This
large-deviations result complements the result in
\cite[Corollary~1]{Sason_LDPC09}, which provides a lower bound on the
average cardinality of the fundamental system of cycles that
scales like $\log \frac{1}{\varepsilon}$.
\end{remark}

\begin{remark}
Consider small deviations from the expected value that scale like
$\sqrt{n}$. Note that Corollary~\ref{corollary: a tightened
version of the Azuma--Hoeffding inequality}
is a special case of Theorem~\ref{theorem: first refined
concentration inequality} when $\gamma = 1$ (i.e., when only an
upper bound on the differences of the martingale sequence is available,
but there is no non-trivial upper bound on the conditional
variance). Hence, it follows from Proposition~\ref{proposition: a
similar scaling of the concentration inequalities} that, in this case,
Corollary~\ref{corollary: a tightened version of the Azuma--Hoeffding inequality}
does not provide any improvement in the exponent of the concentration
inequality (in comparison to the Azuma--Hoeffding inequality) when small
deviations are considered.
\end{remark}

\subsection{Concentration theorems for LDPC code ensembles over ISI channels}
\label{section: Theorems- message errors with ISI}
Concentration analysis of the number of erroneous variable-to-check messages
for random ensembles of LDPC codes was introduced in \cite{RichardsonU2001}
and \cite{LubyMSS_IT01_paper2} for memoryless channels. It was shown that the
performance of an individual code from the ensemble concentrates around the
expected (average) value over this ensemble  when the length of the block length
of the code tends to infinity, and that this average performance converges asymptotically
to the performance in the cycle-free case (when the bipartite graph that represents
a linear code contains no cycles, the messages that are delivered by the message-passing
decoder through the edges of the graph are statistically independent \cite{RiU_book}).
These concentration results were later generalized in~\cite{KavcicMM_IT2003}
for intersymbol-interference (ISI) channels. The proofs of
\cite[Theorems~1 and 2]{KavcicMM_IT2003}, which refer to regular LDPC code ensembles,
are revisited in the following in order to derive an explicit expression for the
exponential rate of the concentration inequality. It is then shown that particularizing
the expression for memoryless channels provides a tightened concentration inequality
in comparison to \cite{RichardsonU2001} and \cite{LubyMSS_IT01_paper2}. The presentation
in the following is based on \cite{Eshel_MSc2011}.

\subsubsection{The ISI channel and its message-passing decoding}
\label{subsection: channel and its decoding }
We start by briefly describing the ISI channel and the graph used
for its message-passing decoding. For a detailed description, the
reader is referred to \cite{KavcicMM_IT2003}.
Consider a binary discrete-time ISI channel with a finite memory length,
denoted by~$I$. The channel output $Y_j$ at time instant $j$ is given by
\begin{equation*}
Y_j = \sum_{i=0}^I h_i X_{j-i} \, + N_j, \quad \forall \, j \in \integers
\end{equation*}
where $\{X_j\}$ is a sequence of $\{-1,+1\}$-valued binary inputs,
$\{h_i\}_{i=0}^I$ is the input response of the ISI channel, and
$\{N_j\}$ is a sequence of i.i.d.\ Gaussian random variables with zero
mean and variance $\sigma^2$. It is assumed that an information block of
length $k$ is encoded by using a regular $(n,d_{\text{v}},d_{\text{c}})$
LDPC code, and the resulting $n$ coded bits are converted into a channel
input sequence before its transmission over the channel.
For decoding, we consider the windowed version of the sum-product algorithm
when applied to ISI channels (for specific details about this decoding algorithm,
the reader is referred to \cite{KavcicMM_IT2003} and \cite{Douillard_sept95};
in general, it is an iterative message-passing decoding algorithm). The
variable-to-check and check-to-variable messages are computed as in the
sum-product algorithm for the memoryless case with the difference that a
message that is received from the channel at a variable node is not only
a function of the channel output that corresponds to the considered symbol,
but it is also a function of the $2W$ neighboring channel outputs and $2W$ neighboring
variables nodes (as is illustrated in Fig.~\ref{Fig :ISI message flow- depth 1}).

\begin{figure}
\begin{center}
\epsfig{file=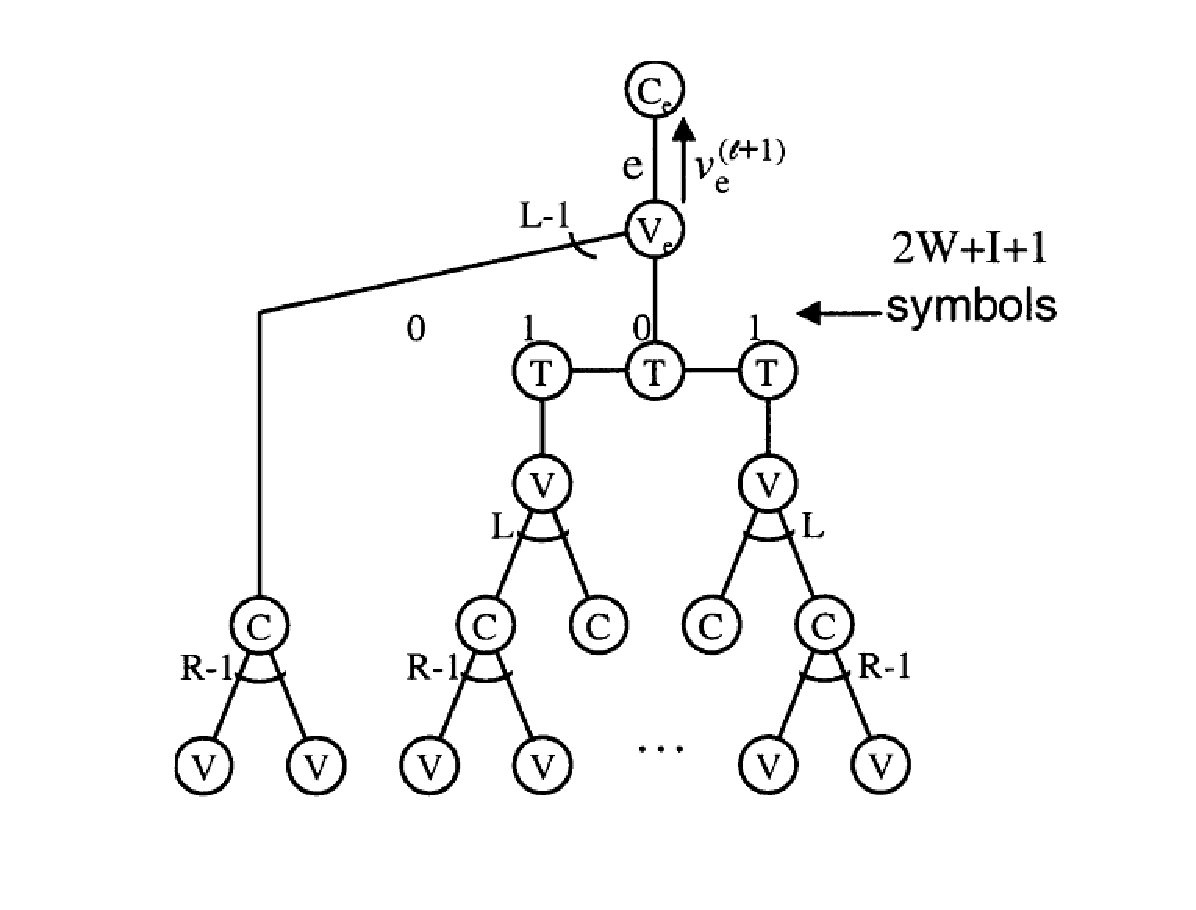,scale=0.4}
\end{center}
\caption{Message flow neighborhood of depth 1.
This figure corresponds to the parameters
$(I,W,d_{\text{v}}=L,d_{\text{c}}=R)=(1,1,2,3).$}
\label{Fig :ISI message flow- depth 1}
\end{figure}

\subsubsection{Concentration}
We prove that, for a large $n$, a neighborhood of depth $\ell$
of a variable-to-check node message is tree-like with high probability. Using this
result in conjunction with the Azuma--Hoeffding inequality, we will then show that,
for most graphs and channel realizations, if $\underline{s}$ is the transmitted
codeword, then the probability of a variable-to-check message being erroneous after
$\ell$ rounds of message-passing decoding is highly concentrated around its expected
value. This expected value is shown to converge to the value of
$p^{(\ell)}(\underline{s})$ that corresponds to the cycle-free case.

In the following theorems, we consider an ISI channel and windowed message-passing
decoding algorithm, where the code graph is chosen uniformly at random from the
ensemble of graphs with variable and check node degrees $d_{\text{v}}$ and
$d_{\text{c}}$, respectively. Let $\mathcal{N}^{(\ell)}_{\vec{e}}$ denote the
neighborhood of depth $\ell$ of an edge $\vec{e}=(\text{v},\text{c})$ between a
variable-to-check node. Let ${N}_{\text{c}}^{(\ell)}$, ${N}_{\text{v}}^{(\ell)}$
and ${N}_{\text{e}}^{(\ell)}$ denote, respectively, the total number of check nodes, variable
nodes and code-related edges in this neighborhood. Similarly, let ${N}_Y^{(\ell)}$
denote the number of variable-to-check node messages in the directed neighborhood
of depth $\ell$ of a received symbol of the channel (explicit expressions are given in
Appendix~\ref{appendix: proof of concentration results on the performance of LDPC code ensembles}).

\begin{theorem}
Let $P^{(\ell)}_{\overline{\text{t}}}\equiv \Pr\left\{ \mathcal{N}^{(\ell)}_{\vec{e}}
\text{ not a tree}\right\}$ denote the probability that the sub-graph $\mathcal{N}^{(\ell)}_{\vec{e}} $
is not a tree (i.e., it contains cycles).
Then, there exists a positive constant $\gamma \triangleq \gamma(d_{\text{v}},d_{\text{c}},\ell)$
that does not depend on the block-length $n$, such that $P^{(\ell)}_{\overline{\text{t}}} \leq \frac{\gamma}{n}$.
More explicitly, one can choose
$\gamma(d_{\text{v}},d_{\text{c}},\ell) \triangleq \bigl(N_{\text{v}}^{(\ell)}\bigr)^2
+ \bigl(\frac{d_{\text{c}}}{d_{\text{v}}} \cdot N_{\text{c}}^{(\ell)} \bigr)^2$.
\label{theorem: probability of a subgraph that is not a tree}
\end{theorem}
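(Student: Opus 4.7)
The plan is to build the depth-$\ell$ neighborhood $\mathcal{N}^{(\ell)}_{\vec{e}}$ by breadth-first search (BFS) from the root edge $\vec{e}$, and simultaneously reveal the bipartite graph via the configuration (socket-pairing) model for the regular $(d_{\text{v}},d_{\text{c}})$ ensemble. In this model, the $n d_{\text{v}}$ variable-side sockets and $m d_{\text{c}} = n d_{\text{v}}$ check-side sockets are matched uniformly at random. A cycle appears in $\mathcal{N}^{(\ell)}_{\vec{e}}$ precisely when, during the BFS, a newly revealed pairing lands on a node that has already been visited. Hence, by the union bound, $P^{(\ell)}_{\overline{t}}$ is bounded above by the sum, over all exposure steps of the BFS, of the conditional probability of such a collision.

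First, I would condition on the root pairing $\vec{e}=(\text{v},\text{c})$ being a fixed socket pair; by exchangeability of the configuration model, the remaining matching is still uniform over the remaining unmatched sockets. Next, I would run BFS outward, alternating variable-node expansions (a visited variable node $v'$ reveals its $d_{\text{v}}-1$ still-unmatched sockets and their check-side partners) with check-node expansions. Since a depth-$\ell$ neighborhood contains at most $N_{\text{v}}^{(\ell)}$ variable nodes and $N_{\text{c}}^{(\ell)}$ check nodes, at most $N_{\text{v}}^{(\ell)} d_{\text{v}}$ variable-side and $N_{\text{c}}^{(\ell)} d_{\text{c}}$ check-side sockets are touched during the whole exploration, and at most $N_e^{(\ell)}$ edges are revealed.

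At each exposure step, the conditional distribution of the newly revealed endpoint is uniform over the yet-unmatched sockets on the opposite side. For a variable-side reveal, the probability of hitting an already-visited variable node is at most $N_{\text{v}}^{(\ell)} d_{\text{v}}/(n d_{\text{v}} - k)$, where $k \leq N_e^{(\ell)}$ counts the previously revealed edges; analogously, a check-side reveal hits an already-visited check node with probability at most $N_{\text{c}}^{(\ell)} d_{\text{c}}/(m d_{\text{c}} - k)$. Since $k$ is bounded by a constant independent of $n$, for $n$ large enough each such probability is within a constant factor of $N_{\text{v}}^{(\ell)}/n$ or $N_{\text{c}}^{(\ell)}/m$, respectively. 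Using $m d_{\text{c}} = n d_{\text{v}}$ to write $N_{\text{c}}^{(\ell)}/m = (d_{\text{c}}/d_{\text{v}}) N_{\text{c}}^{(\ell)}/n$, and applying the elementary inequality $2ab \leq a^2 + b^2$ with $a = N_{\text{v}}^{(\ell)}$ and $b = (d_{\text{c}}/d_{\text{v}}) N_{\text{c}}^{(\ell)}$, the two contributions combine into the symmetric form $\gamma(d_{\text{v}},d_{\text{c}},\ell) = \bigl(N_{\text{v}}^{(\ell)}\bigr)^2 + \bigl((d_{\text{c}}/d_{\text{v}}) N_{\text{c}}^{(\ell)}\bigr)^2$, giving $P^{(\ell)}_{\overline{t}} \leq \gamma/n$.

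The main obstacle lies in the bookkeeping of the step-(iii) collision estimate: one must verify that the denominators $n d_{\text{v}} - k$ stay within a constant factor of $n d_{\text{v}}$ uniformly over the BFS (which holds because $k$ is upper-bounded by $N_e^{(\ell)}$, a quantity independent of $n$), and that the conditional independence structure used in the union bound is indeed the one supplied by the sequential socket-pairing construction. The conversion $1/m \mapsto d_{\text{c}}/(n d_{\text{v}})$ via $n d_{\text{v}} = m d_{\text{c}}$, together with the AM-GM packaging, then produces the precise constant in the statement without any hidden dependence on $n$.
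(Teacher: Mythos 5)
Your overall strategy --- expose the socket pairing of the configuration model along a BFS from $\vec{e}$ and union-bound the probability that a newly revealed edge closes back on an already-visited node --- is indeed the approach the paper relies on via \cite{RichardsonU2001} and \cite{Eshel_MSc2011}. However, the step that actually produces the stated constant $\gamma$ is not carried out. You record per-step collision probabilities $N_{\text{v}}^{(\ell)} d_{\text{v}}/(n d_{\text{v}} - k)$ and $N_{\text{c}}^{(\ell)} d_{\text{c}}/(m d_{\text{c}} - k)$, but you never say how many reveals of each type occur in the BFS, nor how the sum over all steps yields the claimed bound. Your invocation of $2ab \le a^2 + b^2$ with $a = N_{\text{v}}^{(\ell)}$, $b = (d_{\text{c}}/d_{\text{v}})N_{\text{c}}^{(\ell)}$ presupposes that each of ``the two contributions'' is the cross-term $ab/n$, which you assert but do not derive and which does not fall out of your own setup. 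If the number of variable-side reveals is bounded by $N_{\text{v}}^{(\ell)}$ and that of check-side reveals by $N_{\text{c}}^{(\ell)}$ (each non-collision reveal discovers a new node), the union bound already gives $\bigl[(N_{\text{v}}^{(\ell)})^2 + (d_{\text{c}}/d_{\text{v}})(N_{\text{c}}^{(\ell)})^2\bigr]/n$ --- a sum of squares, with no $2ab$ term and no need for the AM--GM step --- and this is $\le \gamma/n$ only via the additional inequality $d_{\text{c}}/d_{\text{v}} \le (d_{\text{c}}/d_{\text{v}})^2$, which you neither state nor invoke. If instead you count all socket reveals ($N_{\text{c}}^{(\ell)} d_{\text{c}}$ variable-side, $N_{\text{v}}^{(\ell)} d_{\text{v}}$ check-side), you do get a cross term, but it is $2d_{\text{v}}\,ab/n$, a factor of $d_{\text{v}}$ larger than the AM--GM rearrangement can repackage into $\gamma/n$. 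Compounding this, writing ``within a constant factor of $N_{\text{v}}^{(\ell)}/n$'' is too coarse to establish the explicit $\gamma$ rather than an unspecified $O(1/n)$: the finite-$n$ correction coming from $n d_{\text{v}} - k$ has to be absorbed exactly, not estimated.

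You also never engage with the ISI structure, which is the substance of the generalization this theorem asserts over the memoryless result of \cite{RichardsonU2001}. Under the windowed message-passing decoder, $\mathcal{N}^{(\ell)}_{\vec{e}}$ fans out through both the random code edges and deterministic channel-window edges ($2W$ adjacent received symbols and their variable nodes at every level; see Fig.~\ref{Fig :ISI message flow- depth 1}), which is precisely why $N_{\text{v}}^{(\ell)}$ and $N_{\text{c}}^{(\ell)}$ have their particular values and why the set of already-visited nodes grows faster than in a plain bipartite BFS. Your exploration treats only the random bipartite edges and does not explain how the deterministic detours enter the collision set at each code-edge reveal, so the per-step bound you write down is not obviously computed against the correct $N_{\text{v}}^{(\ell)}$, $N_{\text{c}}^{(\ell)}$ of the ISI neighborhood.
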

\begin{proof}
This proof is a straightforward generalization of the proof in~\cite{RichardsonU2001}
(for binary-input output-symmetric memoryless channels) to binary-input ISI channels.
A detailed proof is available in \cite{Eshel_MSc2011}.
\end{proof}

The following concentration inequalities follow from
Theorem~\ref{theorem: probability of a subgraph that is not a tree}
and the Azuma--Hoeffding inequality:
\begin{theorem}
Let $\underline{s}$ be the transmitted codeword, and let $Z^{(\ell)}(\underline{s})$
be the number of erroneous variable-to-check messages after $\ell$ rounds of the
windowed message-passing decoding algorithm.
Let $p^{(\ell)}(\underline{s})$ be the expected fraction of incorrect messages passed
through an edge with a tree-like directed neighborhood of depth $\ell$.
Then there exist some positive constants $\beta$ and $\gamma$ that do not depend on the
block-length $n$, such that the following statements hold:
\medskip

\noindent\textbf{Concentration around the expected value.}
For any $\eps>0$,
\begin{equation} \label{eq : th1-Concentration around expectation}
\pr \left(\left|\frac{Z^{(\ell)}(\underline{s})}{nd_{\text{v}}}-
\frac{\mathbb{E}[Z^{(\ell)}(\underline{s})]}{nd_{\text{v}}}\right|>\eps/2\right)
\leq 2 e^{-\beta{\eps^2}n}.
\end{equation}

\noindent\textbf{Convergence of the expected value to the cycle-free case.}
For any $\eps>0$ and $n>\frac{2\gamma}{\eps}$, we have a.s.
\begin{equation} \label{eq : th1-Convergence to cycle-free case}
\left|\frac{\mathbb{E}[Z^{(\ell)}(\underline{s})]}{nd_{\text{v}}}
-p^{(\ell)}(\underline{s})\right| \leq \eps/2.
\end{equation}

\noindent\textbf{Concentration around the cycle-free case.}
For any $\eps>0$ and $n>\frac{2\gamma}{\eps}$,
\begin{equation} \label{eq : th1-Concentration around cycle-free case}
\pr \left(\left|\frac{Z^{(\ell)}(\underline{s})}{nd_{\text{v}}}
-p^{(\ell)}(\underline{s})\right| > \eps \right) \leq 2 e^{-\beta{\eps^2}n}.
   \end{equation}

More explicitly, the above statements hold for $$\beta \triangleq
\beta(d_{\text{v}},d_{\text{c}},\ell)=
\frac{d_{\text{v}}^2}{8\left(4d_{\text{v}}{({N}_{\text{e}}^{(\ell)})^2}
+{({N}_Y^{(\ell)})^2}\right)},$$ and
$$\gamma \triangleq \gamma(d_{\text{v}},d_{\text{c}},\ell)=
\bigl(N_{\text{v}}^{(\ell)}\bigr)^2+ \left(\frac{d_{\text{c}}}{d_{\text{v}}}
\cdot {N_{\text{c}}^{(\ell)}}\right)^2.$$
\label{theorem: concentration results on the performance of LDPC code ensembles}
\end{theorem}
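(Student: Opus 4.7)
The plan is to prove the three assertions in order: first establish (i) concentration around the expected value via a two-stage Doob martingale combined with the Azuma--Hoeffding bound of Theorem~\ref{theorem: Azuma's concentration inequality}; then (ii) convergence of the expected value to $p^{(\ell)}(\underline{s})$ via the tree-probability estimate of Theorem~\ref{theorem: probability of a subgraph that is not a tree}; and finally obtain (iii) concentration around the cycle-free value by combining (i) and (ii) via the triangle inequality.

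For (i), the randomness determining $Z^{(\ell)}(\underline{s})$ is carried by (a) the $nd_{\text{v}}$ edges of the Tanner graph and (b) the $n$ channel noise samples. I would construct a Doob martingale whose filtration exposes these sequentially: the first $nd_{\text{v}}$ steps reveal one graph edge at a time (using a socket-interchange convention so that the degree-regularity is preserved), and the remaining $n$ steps reveal one noise sample at a time. The key bounded-difference estimate is that swapping a single edge can alter only those variable-to-check messages whose depth-$\ell$ directed neighborhood contains the swapped edge, of which there are at most $2N_e^{(\ell)}$; likewise, replacing one noise sample affects at most $N_Y^{(\ell)}$ messages through the depth-$\ell$ decoding window. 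Plugging $d_k \le 2N_e^{(\ell)}$ for the first $nd_{\text{v}}$ steps and $d_k \le N_Y^{(\ell)}$ for the last $n$ steps into Theorem~\ref{theorem: Azuma's concentration inequality} with deviation threshold $r = \epsilon n d_{\text{v}}/2$ yields \eqref{eq : th1-Concentration around expectation} with the stated $\beta$, since $\sum_k d_k^2 \le n\bigl[4d_{\text{v}}(N_e^{(\ell)})^2 + (N_Y^{(\ell)})^2\bigr]$ and dividing $r^2$ by $2\sum d_k^2$ reproduces the claimed exponent.

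For (ii), I would observe that, by linearity of expectation and the edge symmetry of the ensemble, $\expectation[Z^{(\ell)}(\underline{s})]/(nd_{\text{v}})$ equals the probability that a generic directed edge $\vec{e}$ carries an erroneous message after $\ell$ iterations. Conditioning on whether $\cN^{(\ell)}_{\vec{e}}$ is a tree: on the tree event, the message distribution is governed entirely by ideal density evolution and the edge-error probability equals $p^{(\ell)}(\underline{s})$ exactly; on the complement, which has probability at most $\gamma/n$ by Theorem~\ref{theorem: probability of a subgraph that is not a tree}, the conditional error probability is trivially at most $1$. This gives $\bigl|\expectation[Z^{(\ell)}(\underline{s})]/(nd_{\text{v}}) - p^{(\ell)}(\underline{s})\bigr| \le \gamma/n$, which drops below $\epsilon/2$ whenever $n > 2\gamma/\epsilon$, establishing \eqref{eq : th1-Convergence to cycle-free case}.

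Step (iii) is then immediate: for $n > 2\gamma/\epsilon$ the triangle inequality embeds the event in \eqref{eq : th1-Concentration around cycle-free case} into the event of \eqref{eq : th1-Concentration around expectation}, so the same exponential bound applies. I expect the main obstacle to lie in the bounded-difference bookkeeping of step (i): one must organize the edge exposure as a degree-preserving socket interchange that perturbs only messages inside the depth-$\ell$ neighborhood of the swapped edge, and must carefully propagate the ISI memory (through the $2W+1$ symbol positions influencing each channel message) so that the Lipschitz constants are correctly captured by $N_e^{(\ell)}$ and $N_Y^{(\ell)}$. Once this is in place, the rest of the argument is essentially the Richardson--Urbanke concentration proof \cite{RichardsonU2001} transported to the ISI setting, as in \cite{KavcicMM_IT2003}.
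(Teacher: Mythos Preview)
Your proposal is correct and follows essentially the same approach as the paper's proof: the paper likewise exposes the $nd_{\text{v}}$ graph edges followed by the $n$ received symbols to build a Doob martingale, bounds the increments by $2N_e^{(\ell)}$ and $N_Y^{(\ell)}$ respectively, applies Azuma--Hoeffding to get \eqref{eq : th1-Concentration around expectation}, then uses edge symmetry plus Theorem~\ref{theorem: probability of a subgraph that is not a tree} to obtain \eqref{eq : th1-Convergence to cycle-free case}, and combines the two via the triangle inequality for \eqref{eq : th1-Concentration around cycle-free case}.
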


\begin{proof}
See Appendix~\ref{appendix: proof of concentration results on the performance of LDPC code ensembles}.
\end{proof}

The concentration inequalities in
Theorem~\ref{theorem: concentration results on the performance of LDPC code ensembles}
extend the results in~\cite{RichardsonU2001} from the special setting of memoryless binary-input
output-symmetric (MBIOS) channels to ISI channels. One can particularize the above
expression for $\beta$ to MBIOS channels by setting $W=0$ and $I=0$. Since the proof
of Theorem~\ref{theorem: concentration results on the performance of LDPC code ensembles}
uses exact expressions for ${N}_{\text{e}}^{(\ell)}$ and ${N}_Y^{(\ell)}$,
one would expect a tighter bound in comparison to the value of $\beta$ in~\cite{RichardsonU2001},
which is given by $\frac{1}{\beta}=544d_{\text{v}}^{2\ell-1}d_{\text{c}}^{2\ell}$.
As an example, for $(d_{\text{v}},d_{\text{c}},\ell)=(3,4,10)$,
one gets an improvement by a factor of about $1$~million. However, even with
this improvement, the required size of $n$ according to the analysis in this section
can be absurdly large. This is because the proof is very pessimistic in the sense that it assumes
that any change in an edge or the decoder's input introduces an error in every message
it affects. This is especially pessimistic if a large $\ell$ is considered, because the
neighborhood grows with $\ell$, so each message is a function of many edges and received
output symbols from the channel.

The same concentration phenomena that are established above for regular LDPC
code ensembles can be extended to irregular LDPC code ensembles as well. In the
special case of MBIOS channels, the
following theorem was proved by Richardson and Urbanke in \cite[pp.~487--490]{RiU_book},
based on the Azuma--Hoeffding inequality (we use here the same notation for LDPC
code ensembles as in Section~\ref{subsection: Concentration of the Cardinality of
the Fundamental System of Cycles}):
\begin{theorem}
Let $\mathcal{C}$, a code chosen uniformly at
random from the ensemble $\text{LDPC}(n, \lambda, \rho)$, be used
for transmission over an MBIOS channel characterized by its L-density
$a_{\text{MBIOS}}$ (this denotes the conditional {\em pdf} of the
log-likelihood ratio $L \triangleq l(Y) =
\ln \left(\frac{p_{Y|X}(Y|1)}{p_{Y|X}(Y|-1)}\right)$,
given that $X=1$ is the transmitted symbol).
Assume that the decoder performs $l$ iterations of message-passing
decoding, and let $P_{\text{b}}(\mathcal{C}, a_{\text{MBIOS}}, l)$
denote the resulting bit error probability. Then, for every $\delta
> 0$, there exists a positive $\alpha$ where $\alpha = \alpha(\lambda,
\rho, \delta, l)$ is {\em independent of the block length $n$}, such that
the following concentration inequality holds:
\begin{equation*}
\pr\left( |P_{\text{b}}(\mathcal{C},
a_{\text{MBIOS}}, l) - \expectation_{\text{LDPC}(n,\lambda,\rho)}
[P_{\text{b}}(\mathcal{C}, a_{\text{MBIOS}}, l)] | \geq \delta
\right) \leq \exp(-\alpha n).
\end{equation*}
\end{theorem}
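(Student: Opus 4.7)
The plan is to follow the classical Richardson--Urbanke strategy and construct a Doob martingale by sequentially exposing the two sources of randomness that define the pair (code, channel realization), then invoke the Azuma--Hoeffding inequality (Theorem~\ref{theorem: Azuma's concentration inequality}) to obtain the exponential concentration. The starting observation is that a code $\mathcal{C}$ from $\text{LDPC}(n,\lambda,\rho)$ is specified by a uniformly random permutation of the $E = n \int_0^1 \lambda$-normalized edge sockets between variable and check nodes, so the ensemble randomness can be encoded as a sequence of $E = \Theta(n)$ socket-pairings, and the channel randomness as $n$ channel outputs $Y_1,\ldots,Y_n$. Concatenating these yields a sequence of $N = E + n = \Theta(n)$ ``reveals'', indexed $1,\ldots,N$, generating a filtration $\{\mathcal{F}_k\}_{k=0}^N$. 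Define
\begin{equation*}
X_k \triangleq \expectation\bigl[P_{\text{b}}(\mathcal{C}, a_{\text{MBIOS}}, l) \,\big|\, \mathcal{F}_k\bigr], \quad k = 0,1,\ldots,N,
\end{equation*}
so that by Fact~\ref{fact: construction of martingales} the sequence $\{X_k,\mathcal{F}_k\}_{k=0}^N$ is a martingale with $X_0 = \expectation_{\text{LDPC}(n,\lambda,\rho)}[P_{\text{b}}(\mathcal{C},a_{\text{MBIOS}},l)]$ and $X_N = P_{\text{b}}(\mathcal{C},a_{\text{MBIOS}},l)$.

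The core estimate is a bounded-difference bound $|X_k - X_{k-1}| \leq d_k$ that is uniform in $n$ up to a $1/n$ factor. Since $P_{\text{b}}$ is the average of the $n$ per-bit error probabilities and each per-bit error probability after $l$ iterations is a function only of the channel outputs and edges within the depth-$l$ computation tree of that bit, swapping a single socket-pair (edge reveal) or modifying a single channel output $Y_j$ can only affect those bits whose depth-$l$ neighborhood contains the modified edge or symbol. The number of such bits is upper bounded by a constant $c = c(\lambda,\rho,l)$ that depends only on the degree distributions and $l$, and is independent of $n$; indeed the depth-$l$ neighborhoods have deterministic size bounds in terms of the maximum variable and check degrees appearing in $\lambda$ and $\rho$. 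Combined with the normalization by $n$ in the definition of $P_{\text{b}}$, this yields $|X_k - X_{k-1}| \leq c/n$ a.s.\ for every $k \in \{1,\ldots,N\}$.

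Plugging the uniform bound $d_k = c/n$ and $N = \Theta(n)$ into Theorem~\ref{theorem: Azuma's concentration inequality} gives
\begin{equation*}
\pr\bigl(|X_N - X_0| \geq \delta\bigr)
\leq 2\exp\!\left(-\frac{\delta^2}{2 N (c/n)^2}\right)
= 2\exp\bigl(-\alpha(\lambda,\rho,\delta,l)\, n\bigr)
\end{equation*}
for a suitable constant $\alpha>0$ depending only on $\lambda,\rho,\delta,l$ (not on $n$), which, after absorbing the factor of $2$ into a slight shrinking of $\alpha$ for all $n$ sufficiently large, yields the stated bound.

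The main obstacle is the bounded-difference step: one must argue cleanly that exchanging a single pair of sockets (rather than modifying a single independent coordinate) has only a local effect on the decoder's output. This requires a ``swapping'' argument--one exposes the matching one socket at a time, conditioning on partial matchings, and uses the exchangeability of the remaining sockets to show that the conditional expectation changes by at most $c/n$ when one additional socket-pair is revealed, even though the uncovered portion of the permutation is not independent of the revealed portion. Once this local-effect lemma is in place (its proof is essentially a careful bookkeeping of the variable-nodes contained in the depth-$l$ neighborhood of a given edge), the rest of the argument is a direct invocation of Azuma--Hoeffding as outlined above.
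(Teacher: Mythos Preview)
Your proposal is correct and follows essentially the same approach as the paper. The paper itself does not give a proof of this particular theorem but attributes it to Richardson and Urbanke; however, the paper's proof of the more general ISI-channel version (Theorem~\ref{theorem: concentration results on the performance of LDPC code ensembles}, proved in Appendix~\ref{appendix: proof of concentration results on the performance of LDPC code ensembles}) proceeds exactly as you describe: form a Doob martingale by first exposing the $nd_{\text{v}}$ edge sockets one at a time and then the $n$ channel outputs, bound each increment by a constant over $n$ via the observation that swapping an edge endpoint or a received symbol can only affect messages whose depth-$\ell$ neighborhood contains the modified element (a quantity bounded by a constant depending only on $(\lambda,\rho,\ell)$), and finish with Azuma--Hoeffding.
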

This theorem asserts that the performance of all codes, except for
a fraction which is exponentially small in the block length $n$, is
with high probability arbitrarily close to the ensemble average.
Hence, assuming a sufficiently large block length, the ensemble
average is a good indicator for the performance of individual codes;
it is therefore reasonable to focus on the design and analysis of
capacity-approaching ensembles (via the density evolution
technique \cite{RichardsonU2001}). This forms a fundamental result
in the theory of codes on graphs and iterative decoding.

\subsection{On the concentration of the conditional entropy for LDPC code ensembles}
\label{section: A Tightened Concentration Result for the Conditional Entropy}
A large-deviation analysis of the conditional entropy for random
ensembles of LDPC codes was introduced by M\'easson, Montanari and
Urbanke in \cite[Theorem~4]{MeassonMU08} and \cite[Theorem~1]{Montanari05}.
The following theorem is proved in \cite[Appendix~I]{MeassonMU08},
based on the Azuma--Hoeffding inequality (although here we rephrase
it to consider small deviations of order $\sqrt{n}$, instead of large
deviations of order $n$):
\begin{theorem}
Let $\mathcal{C}$ be chosen uniformly at random from the ensemble
$\text{LDPC}(n,\lambda,\rho)$. Assume that the transmission of the
code $\mathcal{C}$ takes place over an MBIOS channel. Let
$H({\bf{X}}|{\bf{Y}})$ denote the conditional entropy of the
transmitted codeword ${\bf{X}}$ given the received sequence
${\bf{Y}}$ from the channel. Then, for every $\xi > 0$,
\begin{equation*}
\pr\bigl( \big|H({\bf{X}}|{\bf{Y}}) -
\expectation_{\text{LDPC}(n,\lambda,\rho)}[H({\bf{X}}|{\bf{Y}})] \big|
\geq \xi \, \sqrt{n} \, \bigr) \leq 2 \exp(-B \xi^2)
\end{equation*}
where $B \triangleq \frac{1}{2(d_{\text{c}}^{\max}+1)^2 (1-R_d)}$,
$d_{\text{c}}^{\max}$ is the maximal check-node degree, and
$R_{\text{d}}$ is the design rate of the ensemble. \label{theorem:
Concentration of Conditional Entropy}
\end{theorem}

In this section, we revisit the proof of Theorem~\ref{theorem:
Concentration of Conditional Entropy}, originally given in
\cite[Appendix~I]{MeassonMU08}, in order to derive a tightened
version of this bound. To that end, let $\mathcal{G}$ be a
bipartite graph that represents a code chosen uniformly at random
from the ensemble LDPC$(n, \lambda, \rho)$. Define the random variable
\begin{equation*}
Z = H_{\mathcal{G}}({\bf{X}}|{\bf{Y}}),
\end{equation*}
i.e., the conditional entropy when the transmission is
over an MBIOS channel with transition probabilities
$P_{{\bf{Y}}|{\bf{X}}}({\bf{y}} | {\bf{x}}) = \prod_{i=1}^n
p_{Y|X}(y_i | x_i)$, where (by output symmetry)
$p_{Y|X}(y|1) = p_{Y|X}(-y|0)$. Fix an arbitrary order for
the $m = n(1-R_{\text{d}})$ parity-check nodes,
where $R_{\text{d}}$ is the design rate of the LDPC code
ensemble. Let $\left\{\mathcal{F}_t\right\}_{t \in \left\{0, 1, \ldots, m\right\}}$
form a filtration of $\sigma$-algebras $\mathcal{F}_0 \subseteq
\mathcal{F}_1 \subseteq \ldots \subseteq \mathcal{F}_m$ where
$\mathcal{F}_t$ (for $t=0,1,\ldots,m$) is the $\sigma$-algebra
generated by all the subsets of $m \times n$ parity-check
matrices that are characterized by the pair of degree
distributions $(\lambda, \rho)$, and whose first $t$ parity-check
equations are fixed (for $t=0$ nothing is fixed, and therefore
$\mathcal{F}_0 = \left\{\emptyset, \Omega\right\}$ where $\emptyset$ denotes
the empty set, and $\Omega$ is the whole sample space of $m \times n$ binary
parity-check matrices that are characterized by the pair of degree
distributions $(\lambda, \rho)$). Accordingly, based on
Fact~\ref{fact: construction of martingales} in
Section~\ref{section: Discrete-Time Martingales}, let us define the following
martingale sequence:
\begin{equation*}
Z_t = \expectation[Z | \mathcal{F}_t] \, \quad t \in \left\{0, 1, \ldots,
m\right\}.
\end{equation*}
By construction, $Z_0 =
\expectation[H_{\mathcal{G}}({\bf{X}}|{\bf{Y}})]$ is the expected
value of the conditional entropy with respect to\ the LDPC code ensemble,
and $Z_m$ is the random variable that is equal a.s.\ to the conditional entropy of
the particular code from the ensemble. Similarly to
\cite[Appendix~I]{MeassonMU08}, we obtain upper bounds on the
differences $|Z_{t+1} - Z_{t}|$ and then rely on the Azuma--Hoeffding
inequality in Theorem~\ref{theorem: Azuma--Hoeffding inequality}.

Without loss of generality, we can order the parity-check nodes
by increasing degree, as done in \cite[Appendix~I]{MeassonMU08}.
Let ${\bf{r}} = (r_1, r_2, \ldots)$ be the set of parity-check
degrees in ascending order, and $\Gamma_i$ be the fraction of
parity-check nodes of degree~$i$. Hence, the first
$m_1 = n(1-R_{\text{d}}) \Gamma_{r_1}$ parity-check nodes are
of degree $r_1$, the successive
$m_2 = n(1-R_{\text{d}}) \Gamma_{r_2}$ parity-check
nodes are of degree $r_2$, and so on. The $(t+1)$th parity-check
will therefore have a well-defined degree, which we denote by $r$.
From the proof in \cite[Appendix~I]{MeassonMU08},
\begin{equation}
|Z_{t+1} - Z_{t}| \leq (r+1) \,
H_{\mathcal{G}}(\tilde{X}|{\bf{Y}}) \label{eq: result from the
original proof}
\end{equation}
where $H_{\mathcal{G}}(\tilde{X}|{\bf{Y}})$ is a random variable that is
equal to the conditional entropy of a parity-bit $\tilde{X} = X_{i_1} \oplus
\ldots \oplus X_{i_r}$ (i.e., $\tilde{X}$ is equal to the modulo-2
sum of some $r$ bits in the codeword ${\bf{X}}$) given the received
sequence ${\bf{Y}}$ at the channel output. The proof in
\cite[Appendix~I]{MeassonMU08} was then completed by upper-bounding
the parity-check degree $r$ by the maximal parity-check degree
$d_{\text{c}}^{\max}$, and also by upper-bounding the conditional
entropy of the parity-bit $\tilde{X}$ by~1. This gives
\begin{equation}
|Z_{t+1} - Z_{t}| \leq d_{\text{c}}^{\max}+1 \quad t = 0, 1,
\ldots, m-1 \label{eq: bound on the differences of Z_t}
\end{equation}
which, together with the Azuma--Hoeffding inequality, completes the
proof of Theorem~\ref{theorem: Concentration of Conditional Entropy}.
Note that the $d_i$'s in Theorem~\ref{theorem: Azuma--Hoeffding inequality}
are equal to $d_{\text{c}}^{\max}+1$, and $n$ in
Theorem~\ref{theorem: Azuma--Hoeffding inequality} is
replaced with the length $m=n(1-R_{\text{d}})$ of the martingale
sequence $\left\{Z_t\right\}$ (that is equal to the number of the
parity-check nodes in the graph).

Based on \cite{Sason_Eshel_ISIT11}, a refined analysis is provided;
it departs from the analysis in \cite[Appendix~I]{MeassonMU08} in
two respects:
\begin{itemize}
\item The first difference is related to the upper bound on the
conditional entropy $H_{\mathcal{G}}(\tilde{X}|{\bf{Y}})$ in
\eqref{eq: result from the original proof}, where $\tilde{X}$ is the
modulo-2 sum of some $r$ bits of the transmitted codeword ${\bf{X}}$
given the channel output ${\bf{Y}}$. Instead of taking the most
trivial upper bound that is equal to~$1$, as was done in
\cite[Appendix~I]{MeassonMU08}, we derive a simple upper bound that
depends on the parity-check degree $r$ and the channel capacity $C$
(see Proposition~\ref{proposition: upper bounds on the conditional
entropy}).
\item The second difference is minor, but it proves to be helpful for
tightening the concentration inequality for LDPC code ensembles
that are not right-regular (i.e., the case where the degrees of
the parity-check nodes are not fixed to a certain value). Instead
of upper-bounding the term $r+1$ on the right-hand side of
\eqref{eq: result from the original proof} with
$d_{\text{c}}^{\max}+1$, we propose to leave it as is, since the
Azuma--Hoeffding inequality applies to the case when the bounded
differences of the martingale sequence are not fixed (see
Theorem~\ref{theorem: Azuma--Hoeffding inequality}), and
since the number of the parity-check nodes of degree $r$ is equal
to $n (1-R_{\text{d}}) \Gamma_{r}$. The effect of this simple
modification will be shown in Example~\ref{example: tornado
codes}.
\end{itemize}

The following upper bound is related to the first item above:
\begin{proposition}
Let $\mathcal{G}$ be a bipartite graph which corresponds to a
binary linear block code used for transmission over an
MBIOS channel. Let ${\bf{X}}$ and ${\bf{Y}}$ designate the
transmitted codeword and received sequence at the channel output.
Let $\tilde{X} = X_{i_1} \oplus \ldots \oplus X_{i_r}$ be a
parity-bit of some $r$ code bits of ${\bf{X}}$. Then, the
conditional entropy of $\tilde{X}$ given ${\bf{Y}}$ satisfies
\begin{equation}
H_{\mathcal{G}}(\tilde{X}|{\bf{Y}}) \leq h_2
\left(\frac{1-C^{\frac{r}{2}}}{2} \right). \label{eq: tightened
bound on the conditional entropy of a parity-bit for an MBIOS
channel}
\end{equation}
Furthermore, for a binary symmetric channel (BSC) or a binary erasure
channel (BEC), this bound can be improved to
\begin{equation}
H_{\mathcal{G}}(\tilde{X}|{\bf{Y}}) \leq h_2 \left(\frac{1 -
\bigl[1-2 h_2^{-1}(1-C) \bigr]^r}{2} \right)
\label{eq: tightened bound on the conditional entropy of a parity-bit for a BSC}
\end{equation}
and
\begin{equation}
H_{\mathcal{G}}(\tilde{X}|{\bf{Y}}) \leq 1-C^r
\label{eq: tightened bound on the conditional entropy of a parity-bit for a BEC}
\end{equation}
respectively, where $h_2^{-1}$ in \eqref{eq: tightened bound on the
conditional entropy of a parity-bit for a BSC} denotes the
inverse of the binary entropy function to base~2.
\label{proposition: upper bounds on the conditional entropy}
\end{proposition}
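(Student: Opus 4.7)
My plan is to unify the three bounds by first reducing the problem to an unconstrained canonical model in which $X_{i_1},\ldots,X_{i_r}$ are independent uniform $\{0,1\}$ random variables transmitted through $r$ i.i.d.\ copies of the underlying MBIOS channel, with outputs $Y_{i_1},\ldots,Y_{i_r}$. Combining ``conditioning reduces entropy'' with the observation that the additional linear-code constraints on $\mathbf{X}$ only provide extra side information about $\tilde X$ beyond the $r$ coordinates of $\mathbf Y$, I would argue that $H_{\mathcal{G}}(\tilde X|\mathbf Y)\le H_{\mathrm{ind}}(\tilde X|Y_{i_1},\ldots,Y_{i_r})$, where $H_{\mathrm{ind}}$ refers to the entropy in the unconstrained model. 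The remainder of the proof bounds this latter entropy under the three sets of hypotheses.

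For the BEC with erasure probability $\varepsilon=1-C$, the computation is direct: $\tilde X$ is exactly known iff none of the $Y_{i_j}$'s is erased (an event of probability $C^r$), and $\tilde X|\mathbf Y_r$ is uniform otherwise, yielding $H_{\mathrm{ind}}(\tilde X|\mathbf Y_r)=1-C^r$ exactly. For the BSC with crossover $p=h_2^{-1}(1-C)$, the classical XOR-of-BSC identity shows that $\tilde X$ is effectively transmitted through a single BSC with crossover $p_r=(1-(1-2p)^r)/2$, giving $H_{\mathrm{ind}}(\tilde X|\mathbf Y_r)=h_2(p_r)$.

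For a general MBIOS channel, I would work with the log-likelihood ratio $L_j\deq\ln\frac{p_{Y|X}(Y_{i_j}|0)}{p_{Y|X}(Y_{i_j}|1)}$ and invoke the standard tanh-product identity $\tanh(L^{(r)}/2)=\prod_{j=1}^r\tanh(L_j/2)$, where $L^{(r)}$ is the LLR of $\tilde X$ given $\mathbf Y_r$. Setting $T\deq\expectation[\tanh^2(L/2)]$, independence gives $\expectation[\tanh^2(L^{(r)}/2)]=T^r$. Expressing $H_{\mathrm{ind}}(\tilde X|\mathbf Y_r)=\expectation\bigl[h_2\bigl(\tfrac{1-|\tanh(L^{(r)}/2)|}{2}\bigr)\bigr]$ and applying Jensen's inequality to the concave map $x\mapsto h_2\bigl(\tfrac{1-\sqrt x}{2}\bigr)$ on $[0,1]$ (concavity follows from the elementary estimate $\tfrac{2u}{1-u^2}\ge\ln\tfrac{1+u}{1-u}$ for $u\in[0,1)$, itself verified by noting that the difference vanishes at $u=0$ and has non-negative derivative) produces $H_{\mathrm{ind}}(\tilde X|\mathbf Y_r)\le h_2\bigl(\tfrac{1-T^{r/2}}{2}\bigr)$.

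The heart of the argument, and the step I expect to be the main obstacle, is the MBIOS inequality $T\ge C$, which combined with the monotonicity of $h_2\bigl(\tfrac{1-\cdot}{2}\bigr)$ on $[0,1]$ upgrades the Jensen bound to the general bound $h_2\bigl(\tfrac{1-C^{r/2}}{2}\bigr)$. Writing $v=|\tanh(L/2)|\in[0,1]$ and using output symmetry, the inequality is equivalent to $\expectation[f(v)]\ge 0$ for $f(v)\deq v^2-1+h_2\bigl(\tfrac{1-v}{2}\bigr)$; since $f(0)=f(1)=0$, $f'(0)=0$, and a direct computation shows that $f''(v)=2-\tfrac{1}{\ln 2}\cdot\tfrac{1}{1-v^2}$ has exactly one sign change on $(0,1)$ (positive to negative), $f$ is confined to non-negative values on the whole interval, and the pointwise estimate gives the expectation-wise one. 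A complementary delicate point, which also merits care, is rigorously justifying the reduction to the independent-bits model in the first paragraph, since the marginal law of $r$ specified coordinates in a generic linear code can be a non-trivial subspace law rather than the product law.
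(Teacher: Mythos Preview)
Your proposal is correct, and it takes a genuinely different route from the paper for the general MBIOS bound. The paper does not use Jensen's inequality at all: after the same conditioning step $H(\tilde X\mid\mathbf Y)\le H(\tilde X\mid Y_{i_1},\ldots,Y_{i_r})$, it invokes from \cite{Sason_LDPC09} the exact series representation
\[
H(\tilde X\mid Y_{i_1},\ldots,Y_{i_r}) \;\le\; 1-\frac{1}{2\ln 2}\sum_{k\ge 1}\frac{(g_k)^r}{k(2k-1)},\qquad g_k\deq\expectation\bigl[\tanh^{2k}(L/2)\bigr],
\]
bounds \emph{every} moment via $g_k\ge C^k$ (also cited from \cite{Sason_LDPC09}), and then recognizes the resulting series as the power-series expansion of $h_2\bigl((1-C^{r/2})/2\bigr)$. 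The BSC and BEC refinements come from the exact values $g_k=(1-2p)^{2k}$ and $g_k=C$ plugged into the same series. Your argument, by contrast, collapses everything to the single second moment $T=g_1$ via concavity of $x\mapsto h_2\bigl((1-\sqrt x)/2\bigr)$ and then proves $T\ge C$ from scratch with the pointwise sign analysis of $f(v)=v^2-1+h_2\bigl((1-v)/2\bigr)$. Your route is more self-contained (no external lemmas, no infinite series) and reveals that only one moment inequality is needed; the paper's route is algebraically cleaner once the cited machinery is in hand, and it makes the BSC/BEC specializations drop out of one formula rather than requiring separate arguments. Regarding the reduction to the independent-bits model that you flag as delicate: it can be justified by noting that the uniform law on the projected subspace $V\subseteq\{0,1\}^r$ is the i.i.d.\ law conditioned on the syndrome $Z$ (the $V^\perp$-constraints) being zero; by MBIOS channel symmetry the conditional entropy $H_{\mathrm{ind}}(\tilde X\mid \mathbf Y_r,Z=z)$ is the same for every $z$, so it equals $H_{\mathrm{ind}}(\tilde X\mid \mathbf Y_r,Z)\le H_{\mathrm{ind}}(\tilde X\mid \mathbf Y_r)$. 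The paper sidesteps this by citing the inequality directly from \cite{Sason_LDPC09}.
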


Note that if the MBIOS channel is perfect (i.e., its capacity is
$C=1$ bit per channel use), then \eqref{eq: tightened bound on the
conditional entropy of a parity-bit for an MBIOS channel} holds
with equality (where both sides of \eqref{eq: tightened bound on
the conditional entropy of a parity-bit for an MBIOS channel} are
zero), whereas the trivial upper bound is 1.
\begin{proof}
Since conditioning reduces the entropy, we have $$H(\tilde{X} \big{|}
{\bf{Y}} ) \leq H(\tilde{X} \big{|} Y_{i_1}, \ldots, Y_{i_r} ).$$
Note that $Y_{i_1}, \ldots, Y_{i_r}$ are the channel outputs that correspond
to the channel inputs $X_{i_1}, \ldots X_{i_r}$, where these $r$ bits are
used to calculate the parity-bit $\tilde{X}$. Hence, by combining the
last inequality with \cite[Eq.~(17) and Appendix~I]{Sason_LDPC09}, we can show that
\begin{equation}
H(\tilde{X} \big{|} {\bf{Y}}) \leq 1 - \frac{1}{2 \ln 2}
\sum_{k=1}^{\infty} \frac{(g_k)^r}{k(2k-1)} \label{eq: bound on
conditional entropy}
\end{equation}
where (see \cite[Eq.~(19)]{Sason_LDPC09})
\begin{equation}
g_k \triangleq \int_{0}^{\infty} a(l) (1+e^{-l})
\tanh^{2k}\left(\frac{l}{2}\right) {\mathrm d}l, \quad \forall \, k \in
\naturals \label{eq: g_p}
\end{equation}
and $a(\cdot)$ denotes the symmetric pdf of the log-likelihood
ratio at the output of the MBIOS channel, given that the channel
input is equal to zero. From \cite[Lemmas~4 and 5]{Sason_LDPC09}, it
follows that $g_k \geq C^k$ for every $k \in \naturals$.
Substituting this inequality in \eqref{eq: bound on conditional
entropy} gives
\begin{eqnarray}
H(\tilde{X} \big{|} {\bf{Y}}) &\leq& 1 - \frac{1}{2 \ln 2}
\sum_{k=1}^{\infty} \frac{C^{kr}}{k(2k-1)} \nonumber\\
&=& h_2 \left(\frac{1-C^{\frac{r}{2}}}{2} \right) \label{eq: h_2
bound}
\end{eqnarray}
where the last equality follows from the power series expansion
of the binary entropy function:
\begin{equation}
h_2(x) = 1 - \frac{1}{2 \ln 2} \sum_{k=1}^{\infty}
\frac{(1-2x)^{2k}}{k(2k-1)}, \quad 0 \leq x \leq 1. \label{eq:
power series expansion of the binary entropy function}
\end{equation}
This proves the result in \eqref{eq: tightened
bound on the conditional entropy of a parity-bit for an MBIOS
channel}.

The tightened bound on the conditional entropy for the BSC is
obtained from \eqref{eq: bound on conditional entropy} and the
equality
\begin{equation*}
g_k = \bigl(1-2 h_2^{-1}(1-C)\bigr)^{2k}, \quad \forall \, k \in
\naturals
\end{equation*}
that holds for the BSC (see \cite[Eq.~(97)]{Sason_LDPC09}). This replaces
$C$ on the right-hand side of \eqref{eq: h_2 bound} with $\bigl(1-2
h_2^{-1}(1-C)\bigr)^{2}$, thus leading to the tightened bound in
\eqref{eq: tightened bound on the conditional entropy of a
parity-bit for a BSC}.

The tightened result for the BEC follows from
\eqref{eq: bound on conditional entropy} where,
from \eqref{eq: g_p},
\begin{equation*}
g_k = C, \quad \forall \, k \in \naturals
\end{equation*}
(see \cite[Appendix~II]{Sason_LDPC09}). Substituting $g_k$
into the right-hand side of \eqref{eq: bound on conditional entropy}
gives \eqref{eq: tightened bound on the conditional entropy of a
parity-bit for a BSC} (note that $\sum_{k=1}^{\infty}
\frac{1}{k(2k-1)}=2 \ln 2$). This completes the proof of
Proposition~\ref{proposition: upper bounds on the conditional
entropy}.
\end{proof}

From Proposition~\ref{proposition: upper bounds on the conditional
entropy} and \eqref{eq: result from the original proof}, we get
\begin{equation}
|Z_{t+1}-Z_t| \leq (r+1) \, h_2 \left(\frac{1-C^{\frac{r}{2}}}{2}
\right), \label{eq: second tightened inequality}
\end{equation}
where the two improvements for the BSC and BEC are obtained by replacing
the second term, $h_2(\cdot)$, on the right-hand side of \eqref{eq: second
tightened inequality} by \eqref{eq: tightened bound on
the conditional entropy of a parity-bit for a BSC} and \eqref{eq:
tightened bound on the conditional entropy of a parity-bit for a
BEC}, respectively. This improves upon the earlier bound of
$(d_{\text{c}}^{\max}+1)$ in \cite[Appendix~I]{MeassonMU08}.
From \eqref{eq: second tightened inequality} and
Theorem~\ref{theorem: Azuma--Hoeffding inequality}, we obtain
the following tightened version of the concentration inequality
in Theorem~\ref{theorem: Concentration of Conditional Entropy}:

\begin{theorem}
Let $\mathcal{C}$ be chosen uniformly at random from the ensemble
$\text{LDPC}(n,\lambda,\rho)$. Assume that the transmission of the
code $\mathcal{C}$ takes place over an MBIOS channel.
Let $H({\bf{X}}|{\bf{Y}})$ designate the conditional entropy of
the transmitted codeword ${\bf{X}}$ given the received sequence
${\bf{Y}}$ at the channel output. Then, for every $\xi > 0$,
\begin{equation}
\pr \big( \big|H({\bf{X}}|{\bf{Y}}) -
\expectation_{\text{LDPC}(n,\lambda,\rho)}[H({\bf{X}}|{\bf{Y}})] \big|
\geq \xi \sqrt{n} \, \big) \leq 2 \exp(-B \xi^2),
\label{eq: large deviations result for the conditional entropy}
\end{equation}
where
\begin{equation}
B \triangleq \frac{1}{2(1-R_{\text{d}})
\dsum_{i=1}^{d_{\text{c}}^{\max}} \left\{ (i+1)^2 \, \Gamma_i \; \biggl[
h_2\left(\frac{1-C^{\frac{i}{2}}}{2} \right) \biggr]^2 \right\}} \, , \label{eq: B}
\end{equation}
$d_{\text{c}}^{\text{max}}$ is the maximal check-node
degree, $R_{\text{d}}$ is the design rate of the ensemble, and $C$
is the channel capacity (in bits per channel use).
Furthermore, for a binary symmetric channel (BSC) or a binary erasure
channel (BEC), the parameter $B$ on the right-hand side of
\eqref{eq: large deviations result for the conditional entropy}
can be improved (i.e., increased), respectively, to
\begin{equation*}
B_{\text{BSC}} \triangleq \frac{1}{2(1-R_{\text{d}})
\dsum_{i=1}^{d_{\text{c}}^{\max}} \left\{ (i+1)^2 \, \Gamma_i \; \biggl[
h_2\left(\frac{1-[1-2h_2^{-1}(1-C)]^i}{2} \right) \biggr]^2 \right\} }
\end{equation*}
and
\begin{equation}
B_{\text{BEC}} \triangleq \frac{1}{2(1-R_{\text{d}})
\dsum_{i=1}^{d_{\text{c}}^{\max}} \left\{ (i+1)^2 \, \Gamma_i \; (1-C^i)^2 \right\}} \, .
\label{eq: B for BEC}
\end{equation}
\label{theorem: Tightened Concentration of Conditional Entropy}
\end{theorem}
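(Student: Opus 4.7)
The plan is to follow the blueprint already laid out in the discussion preceding the theorem, combining the Doob-martingale construction of \cite{MeassonMU08} with the sharpened per-parity-check entropy bound provided by Proposition~\ref{proposition: upper bounds on the conditional entropy}, and then feeding the resulting non-uniform bounded-difference estimates into the Azuma--Hoeffding inequality (Theorem~\ref{theorem: Azuma's concentration inequality}) rather than uniformizing them to $d_{\text{c}}^{\max}+1$.

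First, I would fix the random bipartite graph $\mathcal{G}$ drawn uniformly from $\text{LDPC}(n,\lambda,\rho)$ and write $Z = H_{\mathcal{G}}(\mathbf{X}|\mathbf{Y})$. Order the $m=n(1-R_{\text{d}})$ parity-check nodes by increasing degree so that the first $m_i = n(1-R_{\text{d}})\Gamma_i$ of them have degree $i$. Let $\{\mathcal{F}_t\}_{t=0}^m$ be the filtration described just before the theorem, where $\mathcal{F}_t$ fixes the first $t$ parity-check equations, and define the Doob martingale $Z_t = \expectation[Z|\mathcal{F}_t]$ (using Fact~\ref{fact: construction of martingales}). Then $Z_0 = \expectation_{\text{LDPC}(n,\lambda,\rho)}[H(\mathbf{X}|\mathbf{Y})]$ and $Z_m = H(\mathbf{X}|\mathbf{Y})$ almost surely.

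Next, I would import the per-step bound \eqref{eq: result from the original proof} from \cite[Appendix~I]{MeassonMU08}, namely $|Z_{t+1}-Z_t| \le (r+1)\,H_{\mathcal{G}}(\tilde X|\mathbf{Y})$ when the $(t+1)$-st revealed parity-check has degree $r$, and $\tilde X$ is the mod-$2$ sum of the $r$ involved code bits. Here comes the key sharpening: rather than crudely bounding $H_{\mathcal{G}}(\tilde X|\mathbf{Y})\le 1$, apply Proposition~\ref{proposition: upper bounds on the conditional entropy} to obtain
\begin{equation*}
|Z_{t+1}-Z_t| \le (r+1)\, h_2\!\left(\frac{1-C^{r/2}}{2}\right) \deq D_r
\end{equation*}
for an MBIOS channel, and analogously with the BSC and BEC refinements. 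Because exactly $m_i=n(1-R_{\text{d}})\Gamma_i$ of the martingale steps have $r=i$, the accumulated quantity that governs Azuma's inequality is
\begin{equation*}
\sum_{t=0}^{m-1} D_{r(t)}^{\,2} = n(1-R_{\text{d}}) \sum_{i=1}^{d_{\text{c}}^{\max}} \Gamma_i\, (i+1)^2 \left[h_2\!\left(\frac{1-C^{i/2}}{2}\right)\right]^2 .
\end{equation*}

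Finally, I would apply Theorem~\ref{theorem: Azuma's concentration inequality} to $\{Z_t,\mathcal{F}_t\}_{t=0}^m$ with these non-uniform bounds $D_{r(t)}$ and with deviation $\xi\sqrt{n}$, which yields \eqref{eq: large deviations result for the conditional entropy} with $B$ as in \eqref{eq: B}; the BSC and BEC variants follow by substituting the stronger bounds \eqref{eq: tightened bound on the conditional entropy of a parity-bit for a BSC} and \eqref{eq: tightened bound on the conditional entropy of a parity-bit for a BEC} into $D_r$. The only real obstacle is of a bookkeeping nature: one must be careful to retain the $(r+1)^2$ factor under the sum rather than pulling out the worst-case $(d_{\text{c}}^{\max}+1)^2$, so that ensembles with check-degree distributions concentrated well below $d_{\text{c}}^{\max}$ genuinely benefit (as will be illustrated in Example~\ref{example: tornado codes}); beyond that, the construction just reuses the martingale machinery from Theorem~\ref{theorem: Concentration of Conditional Entropy} verbatim.
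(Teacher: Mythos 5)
Your proposal is correct and matches the paper's own argument essentially step for step: the same Doob-martingale on parity-check exposures ordered by degree, the same per-step bound \eqref{eq: result from the original proof} from \cite[Appendix~I]{MeassonMU08}, the same sharpening via Proposition~\ref{proposition: upper bounds on the conditional entropy}, and the same application of Azuma's inequality with the non-uniform increments $(i+1)h_2\bigl((1-C^{i/2})/2\bigr)$ retained rather than uniformized. The BSC/BEC variants are handled identically by substituting the corresponding refined entropy bounds.
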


\begin{remark}
From \eqref{eq: B}, Theorem~\ref{theorem: Tightened Concentration of
Conditional Entropy} indeed yields a stronger concentration
inequality than the one in Theorem~\ref{theorem: Concentration of Conditional
Entropy}.
\end{remark}

\begin{remark}
In the limit where $C \rightarrow 1$ bit per channel use, it follows
from \eqref{eq: B} that, if $d_{\text{c}}^{\max} < \infty$, then $B
\rightarrow \infty$. This is in contrast to the value of $B$ in
Theorem~\ref{theorem: Concentration of Conditional Entropy}, which
does not depend on the channel capacity and is finite. Note that $B$
should indeed be infinity for a perfect channel, and therefore
Theorem~\ref{theorem: Tightened Concentration of Conditional
Entropy} is tight in this case. Moreover, in the case where
$d_{\text{c}}^{\max}$ is not finite, we prove the following:
\begin{lemma}
If $d_{\text{c}}^{\max} = \infty$ and $\rho'(1) < \infty$, then $B
\rightarrow \infty$ in the limit where $C \rightarrow 1$.
\label{lemma: lemma for the case of an infinite maximal degree}
\end{lemma}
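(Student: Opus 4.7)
The plan is to show that the denominator of $B$, namely
\[
S(C) \;\deq\; \sum_{i=1}^{\infty} (i+1)^2 \, \Gamma_i \, \Bigl[ h_2\!\Bigl(\tfrac{1-C^{i/2}}{2}\Bigr) \Bigr]^2,
\]
tends to zero as $C \to 1$. Since $(1-R_{\text{d}}) \in (0,1]$ does not depend on $C$, this will immediately give $B = \bigl(2(1-R_{\text{d}}) S(C)\bigr)^{-1} \to \infty$, as required.

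First, I would establish pointwise convergence: for every fixed $i \in \naturals$, we have $C^{i/2} \to 1$ as $C \to 1$, so the argument $(1-C^{i/2})/2 \to 0$ and, by continuity of $h_2$, each summand in $S(C)$ tends to $0$. Next, I would seek a summable envelope that dominates the summands uniformly in $C$. Since $h_2(p) \le 1$ for every $p \in [0,1]$, we have
\[
(i+1)^2 \, \Gamma_i \, \Bigl[ h_2\!\Bigl(\tfrac{1-C^{i/2}}{2}\Bigr) \Bigr]^2 \le (i+1)^2 \, \Gamma_i, \qquad \forall\, C \in [0,1].
\]

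The key step is to verify that $\sum_i (i+1)^2 \Gamma_i < \infty$ under the assumption $\rho'(1) < \infty$. Recall that $\Gamma_i$ (node perspective) and $\rho_i$ (edge perspective) are related by
\[
\Gamma_i \;=\; \frac{\rho_i/i}{\sum_{j} \rho_j/j} \;=\; \frac{\rho_i/i}{\int_0^1 \rho(x)\,\d x},
\]
so that
\[
\sum_{i} i^2 \Gamma_i \;=\; \frac{\sum_i i\,\rho_i}{\int_0^1 \rho(x)\,\d x} \;=\; \frac{\rho'(1)+1}{\int_0^1 \rho(x)\,\d x}.
\]
Since $\rho$ has nonnegative coefficients summing to one, $\int_0^1 \rho(x)\,\d x \in (0,1]$, and the hypothesis $\rho'(1) < \infty$ makes the numerator finite. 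Expanding $(i+1)^2 = i^2 + 2i + 1$ and using $\sum_i i \Gamma_i \le \sum_i i^2 \Gamma_i$ and $\sum_i \Gamma_i = 1$, the envelope $\sum_i (i+1)^2 \Gamma_i$ is finite.

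Having both pointwise convergence to $0$ and a summable envelope, the discrete dominated convergence theorem gives $S(C) \to 0$ as $C \to 1$, hence $B \to \infty$. The only slightly delicate point --- and the main obstacle --- is the bookkeeping between the edge and node perspectives needed to convert the hypothesis $\rho'(1) < \infty$ into summability of $\sum_i i^2 \Gamma_i$; once that relation is in hand, the rest is routine dominated convergence.
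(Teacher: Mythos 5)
Your proof is correct and takes essentially the same route as the paper: reduce the claim to showing that the denominator series $S(C)$ tends to zero as $C\to 1$, dominate the summands by $(i+1)^2\Gamma_i$ (using $h_2\le 1$), convert $\rho'(1)<\infty$ into $\sum_i(i+1)^2\Gamma_i<\infty$ via the edge-to-node degree relation, and then interchange limit and sum (you invoke dominated convergence where the paper phrases it as uniform convergence of the series — the same mechanism). The only cosmetic difference is that you compute $\sum_i i^2\Gamma_i=(\rho'(1)+1)/\int_0^1\rho$ directly, whereas the paper organizes the same bound through $d_{\text{c}}^{\text{avg}}$ and a Cauchy--Schwarz step to ensure $\int_0^1\rho>0$; both are correct and equivalent.
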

\begin{proof}
See Appendix~\ref{appendix: proof of the lemma for the case of an
infinite maximal degree}.
\end{proof}
This is in contrast to the value of $B$ in Theorem~\ref{theorem:
Concentration of Conditional Entropy}, which vanishes when
$d_{\text{c}}^{\max} = \infty$, making it useless in this
case (see Example~\ref{example: tornado codes}).
\end{remark}

\begin{example}[Comparison of Theorems~\ref{theorem: Concentration of Conditional
Entropy} and~\ref{theorem: Tightened Concentration of Conditional
Entropy} for right-regular LDPC code ensembles] Let us examine
the improvement resulting from the tighter bounds in
Theorem~\ref{theorem: Tightened Concentration of Conditional
Entropy} for right-regular LDPC code ensembles. Consider the case
where the communication takes place over a binary-input additive
white Gaussian noise channel (BIAWGNC) or a BEC. Let us consider
the $(2, 20)$ regular LDPC code ensemble whose design rate is
equal to $0.900$ bits per channel use. For a BEC, the threshold of
the channel bit erasure probability under belief-propagation (BP)
decoding is given by
\begin{equation*}
p_{\text{BP}} = \inf_{x \in (0,1]} \frac{x}{1-(1-x)^{19}} = 0.0531,
\end{equation*}
which corresponds to a channel capacity of $C=0.9469$ bits per
channel use (note that the above calculation of $p_{\text{BP}}$
for the BEC follows from the fixed-point characterization of the
threshold in \cite[Theorem~3.59]{RiU_book} with the pair of degree
distributions $\lambda(x)=x$ and $\rho(x)=x^{19}$).
For the BIAWGNC, the threshold under BP decoding is
equal to $\sigma_{\text{BP}} = 0.4156590$ (this numerical
result is based on a computation that follows from
\cite[Example~11]{RichardsonU2001}). From
\cite[Example~4.38]{RiU_book} that expresses the capacity of the
BIAWGNC in terms of the standard deviation $\sigma$ of the Gaussian
noise, the minimum capacity of a BIAWGNC over which it is possible
to communicate with vanishing bit error probability under BP
decoding is~$C = 0.9685$ bits per channel use. Accordingly, let us
assume that, for reliable communications over both channels, the
capacity of the BEC and BIAWGNC is set to $0.98$ bits per channel use.
Since the considered code ensemble is right-regular with
$d_{\text{c}}=20$, the value of $B$ in
Theorem~\ref{theorem: Tightened Concentration of Conditional
Entropy} is improved by a factor of
\begin{equation*}
\left[h_2\left(\frac{1-C^{\frac{d_{\text{c}}}{2}}}{2}
\right)\right]^{-2} = 5.134.
\end{equation*}
For the BEC, the result is improved by a factor of
$\bigl(1-C^{d_{\text{c}}}\bigr)^{-2} = 9.051$;
this follows from the tightened value of $B$ in
\eqref{eq: B for BEC}, which improves the concentration
inequality in Theorem~\ref{theorem: Concentration of Conditional Entropy}.
\label{example 1}
\end{example}

\begin{example}[Comparison of Theorems~\ref{theorem: Concentration of Conditional
Entropy} and~\ref{theorem: Tightened Concentration of Conditional
Entropy} for a heavy-tail Poisson distribution (Tornado codes)]
In this example, we compare Theorems~\ref{theorem: Concentration of
Conditional Entropy} and~\ref{theorem: Tightened Concentration of
Conditional Entropy} for Tornado codes. This
capacity-achieving sequence for the BEC refers to the heavy-tail
Poisson distribution, and it was introduced in
\cite[Section~IV]{LubyMSS_IT01}, \cite{Shokrollahi-IMA2000}
(see also \cite[Problem~3.20]{RiU_book}). We rely in the following
on the analysis in \cite[Appendix~VI]{Sason_LDPC09}.

Suppose that we wish to design Tornado code ensembles that achieve a
fraction $1-\varepsilon$ of the capacity of a BEC under iterative
message-passing decoding (where $\varepsilon$ can be set arbitrarily
small). Let $p$ denote the bit erasure probability of the
channel. The parity-check degree is Poisson-distributed, and
therefore the maximal degree of the parity-check nodes is infinity.
Hence, $B=0$ according to Theorem~\ref{theorem: Concentration of
Conditional Entropy}, which renders this theorem useless for the
considered code ensemble. On the other hand, from
Theorem~\ref{theorem: Tightened Concentration of Conditional
Entropy},
\begin{align*}
&\sum_i (i+1)^2 \Gamma_i \left[h_2
\left(\frac{1-C^{\frac{i}{2}}}{2} \right)\right]^2 \\[0.1cm]
& \stackrel{\mathrm{(a)}}{\leq} \sum_i (i+1)^2 \Gamma_i \\
& \stackrel{\mathrm{(b)}}{=} \frac{\sum_i \rho_i (i+2)}{\dint_0^1
\rho(x) \; d{\text{x}}} + 1 \\
& \stackrel{\mathrm{(c)}}{=} (\rho'(1)+3) d_{\text{c}}^{\text{avg}} + 1\\
& \stackrel{\mathrm{(d)}}{=} \left(\frac{\lambda'(0)
\rho'(1)}{\lambda_2} + 3 \right)
d_{\text{c}}^{\text{avg}} + 1\\
& \stackrel{\mathrm{(e)}}{\leq} \left(\frac{1}{p \lambda_2} + 3 \right)
d_{\text{c}}^{\text{avg}} + 1\\
& \stackrel{\mathrm{(f)}}{=} O\left(\log^2
\biggl(\frac{1}{\varepsilon}\biggr) \right)
\end{align*}
with the following justification:
{\begin{itemize}
\item inequality~(a) holds since the binary entropy function to
base~$2$ is bounded between zero and one;
\item equality~(b) holds since
\begin{equation*}
\Gamma_i = \frac{\frac{\rho_i}{i}}{\dint_0^1 \rho(x)\,
\mathrm{d}x},
\end{equation*}
where $\Gamma_i$ and $\rho_i$ denote the fraction of parity-check
nodes and the fraction of edges that are connected to parity-check
nodes of degree~$i$ respectively (and also since $\sum_i \Gamma_i =
1$);
\item equality~(c) holds since
\begin{equation*}
d_{\text{c}}^{\text{avg}} = \frac{1}{\dint_0^1 \rho(x) \,
\mathrm{d}x},
\end{equation*}
where $d_{\text{c}}^{\text{avg}}$ denotes the average parity-check
node degree;
\item equality~(d) holds since $\lambda'(0) = \lambda_2$;
\item inequality~(e) is due to the stability condition for a BEC 
with an erasure probability $p$, which states that satisfying the 
inequality $p \lambda'(0) \rho'(1) < 1$ is a necessary condition 
for reliable communication under BP decoding (see
\cite[Theorem~3.65]{RiU_book});
\item equality~(f) follows from the analysis in \cite[Appendix~VI]{Sason_LDPC09}
(an upper bound on $\lambda_2$ is derived in
\cite[Eq.~(120)]{Sason_LDPC09}, and the average parity-check node degree
scales like $ \log \frac{1}{\varepsilon} $).
\end{itemize}}
It therefore follows from the above
chain of inequalities and \eqref{eq: B} that, for a small
gap to capacity, the parameter $B$ in Theorem~\ref{theorem:
Tightened Concentration of Conditional Entropy} scales (at least)
like
\begin{equation*}
B = O \left( \frac{1}{\log^2 \bigl(\frac{1}{\varepsilon}\bigr)}
\right). \\[0.1cm]
\end{equation*}
Theorem~\ref{theorem: Tightened Concentration of Conditional
Entropy} is therefore useful for the analysis of
this LDPC code ensemble. As is shown above, the parameter $B$ in
\eqref{eq: B} tends to zero rather slowly as we let the fractional
gap $\varepsilon$ tend to zero (which therefore demonstrates a
rather fast concentration in Theorem~\ref{theorem: Tightened
Concentration of Conditional Entropy}). \label{example: tornado
codes}
\end{example}

\begin{example} Here, we continue with the setting of Example~\ref{example 1}
on the $(n,d_{\text{v}},d_{\text{c}})$ regular LDPC code ensemble,
where $d_{\text{v}}=2$ and $d_{\text{c}}=20$. With the setting of
this example, Theorem~\ref{theorem: Concentration of Conditional
Entropy} gives
\begin{align}
& \pr\bigl( \big|H({\bf{X}}|{\bf{Y}}) -
\expectation_{\text{LDPC}(n,\lambda,\rho)}[H({\bf{X}}|{\bf{Y}})] \big|
\geq \xi \sqrt{n} \, \bigr) \nonumber\\[0.1cm]
& \leq 2 \exp(-0.0113 \, \xi^2), \quad \forall \, \xi > 0.
\label{eq: first numerical exponent}
\end{align}
As was mentioned already in Example~\ref{example 1}, the exponential
inequalities in Theorem~\ref{theorem: Tightened
Concentration of Conditional Entropy} achieve an improvement in the
exponent of Theorem~\ref{theorem: Concentration of Conditional
Entropy} by factors of~5.134 and 9.051 for the BIAWGNC and BEC,
respectively. One therefore obtains from the concentration inequalities in
Theorem~\ref{theorem: Tightened Concentration of Conditional
Entropy} that, for every $\xi > 0$,
\begin{align}
&\pr\bigl( \big|H({\bf{X}}|{\bf{Y}}) -
\expectation_{\text{LDPC}(n,\lambda,\rho)}[H({\bf{X}}|{\bf{Y}})] \big|
\geq \xi \sqrt{n} \, \bigr) \nonumber\\[0.1cm]
&\leq \left\{ \begin{array}{ll} 2 \exp(-0.0580 \, \xi^2), \quad &
\text{(BIAWGNC)}
\\[0.25cm]
2 \exp(-0.1023 \, \xi^2), \quad & \text{(BEC)}
\end{array} \right.  .
\label{eq: second numerical exponent}
\end{align}
\end{example}

\section{Summary}
\label{Section: Summary}
This chapter introduces several classical concentration
inequalities for discrete-time martingales with bounded
differences, and some of their applications in information
theory, communications and coding.

The exposition starts with the martingale decomposition
of Doob, the Chernoff bound, and the Hoeffding Lemma (see
Section~\ref{section: Two Basic Concentration Inequalities});
these form basic ingredients for the derivation of concentration
inequalities via the martingale approach.
This chapter derives the Azuma--Hoeffding inequality for discrete-time
martingales with bounded differences (\cite{Azuma}, \cite{Hoeffding}),
and some of its refined versions (see
Sections~\ref{subsection: Azuma--Hoeffding inequality}
and~\ref{section: Refined Versions of the Azuma--Hoeffding Inequality}).
The martingale approach also serves as a useful tool for establishing
concentration of a function $f \colon \reals^n \rightarrow \reals$
whose value changes by a bounded amount whenever any of its
$n$ input variables is changed arbitrarily while the other variables
are held fixed. A common method for proving concentration of such a
function of $n$ independent random variables around its expected
value $\expectation[f]$ revolves around McDiarmid's inequality or the
``independent bounded-differences inequality'' \cite{McDiarmid_tutorial}.
McDiarmid's inequality was originally proved via the martingale approach, as it
is derived in Section~\ref{subsection: McDiarmid's inequality}.
Although the proof of this inequality has some similarity to the
proof of the well-known Azuma--Hoeffding inequality, the bounded-difference
assumption on $f$ yields an improvement by a factor of~$4$ in the exponent.

The presentation of the concentration inequalities in this chapter is followed
by a short discussion on their relations to some selected classical results
in probability theory (see Section~\ref{section: Relations to Results in Probability Theory});
these include the central limit theorem for discrete-time martingales, the
moderate deviations principle, and the suitability of the concentration
inequalities derived in this chapter for harmonic and bounded functions of
discrete-time Markov chains.

Section~\ref{section: Applications} is focused on the applications of the
concentration inequalities in information theory,
communication, and coding theory. These include the establishment of
concentration results for the minimum distance of random binary linear codes,
expansion properties of random bipartite graphs, the crest factor (or peak to average
power ratio) of OFDM signals, and concentration results for LDPC code ensembles.
Additional concentration results have been established by Richardson and Urbanke
for LDPC code ensembles under MAP and iterative message-passing decoding
\cite{RichardsonU2001}. These martingale inequalities also prove to be useful
for the derivation of achievable rates and random coding error exponents, under
ML decoding, when transmission takes place over linear or nonlinear additive white
Gaussian noise channels with or without memory (\cite{Volterra_ITW09}--\cite{ISIT2012_Voltera}).
Nice and interesting applications of these concentration inequalities to
discrete mathematics and random graphs were provided, e.g., in
\cite[Section~3]{McDiarmid_tutorial}, \cite[Chapter~7]{AlonS_tpm3} and
\cite[Chapters~1 and~2]{Steele_book}.

A recent interesting avenue that follows from the inequalities that are introduced in
this chapter is their generalization to random matrices (see, e.g., \cite{Tropp_FoCM_2011}
and \cite{Tropp_ECP_2011}). The interested reader is also referred to \cite{GodboleH_1998}
for a derivation of concentration inequalities that refer to martingales whose differences
are not necessarily bounded, followed by some applications to graph theory.

\begin{subappendices}	
	
\section{Proof of Bennett's inequality}
\label{appendix: proof of Bennett's inequality}

The inequality in \eqref{eq: Bennett's inequality for unconditional expectation}
is trivial for $\lambda=0$, so we prove it for $\lambda > 0$.
Let $Y \triangleq \lambda (X-\overline{x})$ for $\lambda > 0$. Then,
by assumption, $Y \leq \lambda (b-\overline{x}) \triangleq b_{Y}$ a.s.\ and
$\var(Y) \leq \lambda^2 \sigma^2 \triangleq \sigma_Y^2$. It is
therefore required to show that, if $\expectation[Y] = 0$, $Y \leq b_{Y}$,
and $\var(Y) \leq \sigma_Y^2$, then
\begin{equation}
\expectation[e^Y] \leq \left(\frac{b_Y^2}{b_Y^2 + \sigma_Y^2} \right)
\, e^{-\frac{\sigma_Y^2}{b_Y}} + \left(\frac{\sigma_Y^2}{b_Y^2 + \sigma_Y^2}
\right) \, e^{b_Y}.
\label{eq: Bennett's inequality for a zero-mean RV}
\end{equation}
Let $Y_0$ be a random variable that takes two possible values $-\frac{\sigma_Y^2}{b_Y}$
and $b_Y$ with probabilities
\begin{equation}
\pr\left(Y_0 = -\frac{\sigma_Y^2}{b_Y}\right) = \frac{b_Y^2}{b_Y^2 + \sigma_Y^2}, \qquad
\pr(Y_0 = b_Y) = \frac{\sigma_Y^2}{b_Y^2 + \sigma_Y^2}.
\label{eq: probability distribution to achieve equality in Bennett's inequality}
\end{equation}
Then inequality \eqref{eq: Bennett's inequality for a zero-mean RV} is equivalent to
\begin{equation}
\expectation[e^Y] \leq \expectation[e^{Y_0}],
\label{eq: simplified form of Bennett's inequality}
\end{equation}
which is what we will prove.
To that end, let $\phi$ be the unique parabola such that the function
$$f(y) \triangleq \phi(y) - e^y, \quad \forall \, y \in \reals$$
is zero at $y = b_Y$, and has $f(y) = f'(y) = 0$ at $y = -\frac{\sigma_Y^2}{b_Y}$.
Since $\phi''$ is constant, $f''(y)=0$ at exactly one value of $y$, say,
 $y_0$. Furthermore, since $f(-\frac{\sigma_Y^2}{b_Y}) = f(b_Y)$ (both are
equal to zero), we must have $f'(y)=0$ for some
$y_1 \in \bigl(-\frac{\sigma_Y^2}{b_Y}, b_Y \bigr)$. By the same argument
applied to $f'$ on $\bigl[-\frac{\sigma_Y^2}{b_Y}, y_1 \bigr]$, it follows
that $y_0 \in \bigl(-\frac{\sigma_Y^2}{b_Y}, y_1 \bigr)$. The function $f$
is convex on $(-\infty, y_0]$ (since, on this interval, $f''(y) =
\phi''(y) - e^y \geq \phi''(y) - e^{y_0} = \phi''(y_0) - e^{y_0} = f''(y_0) = 0$),
and its minimal value on this interval is attained at $y = -\frac{\sigma_Y^2}{b_Y}$
(since at this point $f'$ is zero); this minimal value is zero. Furthermore,
$f$ is concave on $[y_0, \infty)$ (since its second derivative is non-positive on this interval)
and it attains its maximal value on this interval at $y=y_1$. By construction, $f(b_Y)=0$;
this implies that $f \geq 0$ on the interval $(-\infty, b_Y]$, so $\expectation[f(Y)] \geq 0$
for an arbitrary random variable $Y$ such that $Y \leq b_Y$ a.s., which therefore gives
$$\expectation[e^Y] \leq \expectation[\phi(Y)],$$ with equality if
$\pr(Y \in \{-\frac{\sigma_Y^2}{b_Y}, b_Y\}) = 1$. Since $f''(y) \geq 0$
for $y < y_0$, it must be the case that $\phi''(y) - e^y = f''(y) \geq 0$ for $y < y_0$,
so $\phi''(0) = \phi''(y) > 0$
(recall that $\phi''$ is constant since $\phi$ is a parabola). Hence, for every random
variable $Y$ of zero mean, $\expectation[\phi(Y)]$, which only depends on $\expectation[Y^2]$,
is a non-decreasing function of $\expectation[Y^2]$. The random variable $Y_0$ that
takes values in $\{-\frac{\sigma_Y^2}{b_Y}, b_Y\}$, and whose distribution is given in
\eqref{eq: probability distribution to achieve equality in Bennett's inequality}, is of
zero mean and variance $\expectation[Y_0^2] = \sigma_Y^2$, so
$$\expectation[\phi(Y)] \leq \expectation[\phi(Y_0)].$$ Note also that
$$\expectation[\phi(Y_0)] = \expectation[e^{Y_0}]$$ since $f(y)=0$ (i.e., $\phi(y) = e^y$)
if $y = -\frac{\sigma_Y^2}{b_Y}$ or $b_Y$, and $Y_0$ only takes these two values.
Combining the last two inequalities with the last equality gives
inequality~\eqref{eq: simplified form of Bennett's inequality}, which therefore
completes the proof of Bennett's inequality in \eqref{eq: Bennett's inequality for unconditional expectation}.

\section{On the moderate deviations principle
in Section~\ref{subsection: MDP for real-valued i.i.d. RVs}}
\label{appendix: MDP}

Here we show that, in contrast to the Azuma--Hoeffding
inequality, Theorem~\ref{theorem: first refined
concentration inequality}
provides an upper bound on $$\pr\left( \Big|\sum_{i=1}^n X_i \Big|
\geq \alpha n^{\eta} \right), \quad \forall \, \alpha \geq 0$$
which coincides with the exact asymptotic limit in \eqref{eq: MDP
for i.i.d. real-valued RVs} under an extra
assumption that there exists some constant $d > 0$ such that
$|X_k| \leq d$ a.s.\ for every $k \in \naturals$. Let us define the
martingale sequence $\{S_k, \mathcal{F}_k\}_{k=0}^n$ where
\begin{eqnarray*}
S_k \triangleq \sum_{i=1}^k X_i, \quad \mathcal{F}_k \triangleq
\sigma(X_1, \ldots, X_k)
\end{eqnarray*}
for every $k \in \{1, \ldots, n\}$ with $S_0 = 0$ and
$\mathcal{F}_0 = \{\emptyset, \mathcal{F}\}$. This martingale sequence has
uniformly bounded differences: $|S_k - S_{k-1}| = |X_k| \leq d$
a.s. for every $k \in \{1, \ldots, n\}$. Hence, it follows from
the Azuma--Hoeffding inequality that, for every $\alpha \geq 0$,
\begin{equation*}
\pr\left( |S_n| \geq \alpha n^{\eta} \right)
\leq 2 \exp\left(-\frac{\alpha^2 n^{2\eta-1}}{2d^2}\right)
\end{equation*}
and therefore
\begin{equation}
\lim_{n \rightarrow \infty} n^{1-2 \eta} \; \ln \pr\bigl(
|S_n| \geq \alpha n^{\eta} \bigr) \leq
-\frac{\alpha^2}{2d^2}. \label{eq: MDP scaling from Azuma--Hoeffding
inequality for the sum of i.i.d. real-valued RVs}
\end{equation}
This differs from the limit in \eqref{eq: MDP for i.i.d.
real-valued RVs} where $\sigma^2$ is replaced by $d^2$, so the
Azuma--Hoeffding inequality does not provide the asymptotic limit
in \eqref{eq: MDP for i.i.d. real-valued RVs} (unless
$\sigma^2 = d^2$, i.e., $|X_k|=d$ a.s. for every $k$).

{\em An analysis that follows from
Theorem~\ref{theorem: first refined concentration inequality}}:
The following analysis is a slight modification of the analysis in
the proof of Proposition~\ref{proposition: a similar scaling of the
concentration inequalities}, with the
required adaptation of the calculations for $\eta \in
(\frac{1}{2}, 1)$. It follows from Theorem~\ref{theorem: first
refined concentration inequality} that, for every $\alpha \geq 0$,
\begin{equation*}
\pr(|S_n| \geq \alpha n^{\eta}) \leq 2 \exp\left(-n \,
H\left(\frac{\delta_n+\gamma}{1+\gamma} \Big\|
\frac{\gamma}{1+\gamma}\right) \right)
\end{equation*}
where $\gamma$ is introduced in \eqref{eq: notation},
$H(p \| q)$ is the divergence in \eqref{eq: divergence}
between the ${\rm Bernoulli}(p)$ and ${\rm Bernoulli}(q)$
probability measures, and $\delta_n$ in
\eqref{eq: new delta} is replaced with
\begin{equation}
\delta_n \triangleq \frac{\frac{\alpha}{n^{1-\eta}}}{d} =
\delta n^{-(1-\eta)}
\label{eq: new delta'}
\end{equation}
due to the definition of $\delta$ in \eqref{eq: notation}.
Following the same analysis as in the proof of
Proposition~\ref{proposition: a similar scaling of the
concentration inequalities}, it follows that for every
$n \in \naturals$
\begin{equation*}
\pr(|S_n| \geq \alpha n^{\eta})
\leq 2 \exp\left( -\frac{\delta^2 n^{2\eta-1}}{2\gamma} \left[
1 + \frac{\alpha (1-\gamma)}{3 \gamma d} \cdot n^{-(1-\eta)} +
\ldots \right] \right)
\end{equation*}
and therefore (since, from \eqref{eq: notation},
$\frac{\delta^2}{\gamma} = \frac{\alpha^2}{\sigma^2})$
\begin{equation*}
\lim_{n \rightarrow \infty} n^{1-2 \eta} \;
\ln \pr\bigl(|S_n| \geq \alpha n^{\eta} \bigr) \leq
-\frac{\alpha^2}{2 \sigma^2}.
\end{equation*}
Hence, this upper bound coincides with the exact asymptotic
result in \eqref{eq: MDP for i.i.d. real-valued RVs}.

\section{Proof of the properties in \eqref{eq: properties for OFDM signals}
for OFDM signals}
\label{appendix: OFDM} Consider an OFDM signal from
Section~\ref{subsection: Concentration of the Crest-Factor for
OFDM Signals}. The sequence in \eqref{eq: martingale sequence for
OFDM} is a martingale. From
\eqref{eq: CF}, for every $i \in \{0, \ldots, n\}$,
\begin{eqnarray*}
&& \hspace*{-0.7cm} Y_i = \expectation \Bigl[ \, \max_{0 \leq t
\leq T} \bigl|s(t; X_0, \ldots, X_{n-1}) \bigr| \Big| \, X_0,
\ldots, X_{i-1}\Bigr].
\end{eqnarray*}
The conditional expectation for the random variable $Y_{i-1}$ refers to the
case where only $X_0, \ldots, X_{i-2}$ are revealed. Let
$X'_{i-1}$ and $X_{i-1}$ be independent copies, which are also
independent of $X_0, \ldots, X_{i-2}, X_i, \ldots, X_{n-1}$. Then,
for every $1 \leq i \leq n$,
\begin{eqnarray*} && \hspace*{-0.7cm} Y_{i-1} =
\expectation \Bigl[ \, \max_{0 \leq t \leq T} \bigl|s(t; X_0,
\ldots, X'_{i-1}, X_i, \ldots,
X_{n-1})\bigr| \, \Big| \, X_0, \ldots, X_{i-2} \Bigr] \\
&& \hspace*{0.2cm} = \expectation \Bigl[ \, \max_{0 \leq t \leq T} \bigl|s(t; X_0,
\ldots, X'_{i-1}, X_i, \ldots,
X_{n-1})\bigr| \, \Big| \, X_0, \ldots, X_{i-2}, X_{i-1} \Bigr].
\end{eqnarray*}
Since $|\expectation(Z)| \leq \expectation(|Z|)$, then for $i \in \{1,
\ldots, n\}$
\begin{equation}
\hspace*{-0.2cm} |Y_i - Y_{i-1}| \leq \expectation_{X'_{i-1}, X_i,
\ldots, X_{n-1}} \Bigl[|U-V| \; \Big| \; X_0, \ldots, X_{i-1}
\Bigr] \label{eq: bounded differences for the sequence Y}
\end{equation}
where
\begin{eqnarray*}
&& U \triangleq \max_{0 \leq t \leq T} \bigl|s(t; X_0, \ldots,
X_{i-1},
X_i, \ldots, X_{n-1})\bigr| \\
&& V \triangleq \max_{0 \leq t \leq T} \bigl|s(t; X_0, \ldots,
X'_{i-1}, X_i, \ldots, X_{n-1})\bigr|.
\end{eqnarray*}
From \eqref{eq: OFDM signal}
\begin{align}
& \hspace*{-1.5cm} |U-V| \leq \max_{0 \leq t \leq T} \bigl|s(t;
X_0, \ldots,
X_{i-1}, X_i, \ldots, X_{n-1}) \nonumber\\
& \qquad \qquad \qquad - s(t; X_0, \ldots, X'_{i-1}, X_i, \ldots,
X_{n-1})\bigr| \nonumber \\
&  = \max_{0 \leq t \leq T} \frac{1}{\sqrt{n}} \,
\Bigr|\bigl(X_{i-1} - X'_{i-1}\bigr) \exp\Bigl(\frac{j \, 2\pi i
t}{T}\Bigr)\Bigr| \nonumber \\
& = \frac{|X_{i-1} - X'_{i-1}|}{\sqrt{n}}.
\label{eq: bound on |U-V|}
\end{align}
By assumption, $|X_{i-1}| = |X'_{i-1}| = 1$, and therefore a.s.
$$|X_{i-1} - X'_{i-1}| \leq 2 \Longrightarrow |Y_i - Y_{i-1}| \leq \frac{2}{\sqrt{n}}.$$
We now obtain an upper bound on the conditional variance
$\var(Y_i \, | \, \mathcal{F}_{i-1}) = \expectation \bigl[
(Y_i - Y_{i-1})^2 \, | \, \mathcal{F}_{i-1} \bigr]$.
Since $\bigl(\expectation(Z)\bigr)^2 \leq \expectation(Z^2)$ for a
real-valued random variable $Z$,  from \eqref{eq: bounded differences for
the sequence Y}, \eqref{eq: bound on |U-V|} and the tower property
for conditional expectations, it follows that
\begin{equation*}
\vspace*{-0.1cm} \expectation\bigl[(Y_i - Y_{i-1})^2 \, |
\mathcal{F}_{i-1}\bigr] \leq \frac{1}{n} \cdot
\expectation_{X'_{i-1}} \bigl[|X_{i-1} - X'_{i-1}|^2 \, | \,
\mathcal{F}_{i-1} \bigr]
\end{equation*}
where $\mathcal{F}_{i-1}$ is the $\sigma$-algebra generated by
$X_0, \ldots, X_{i-2}$. Due to the symmetry in the PSK constellation, and
the independence of $X_{i-1}, X'_{i-1}$ in $X_0, \ldots, X_{i-2}$, we have
\begin{align*}
\expectation \bigl[ (Y_i - Y_{i-1})^2 \, |
\, \mathcal{F}_{i-1} \bigr] & \leq \frac{1}{n} \, \expectation \bigl[
|X_{i-1} - X'_{i-1}|^2 \, | \, X_0, \ldots, X_{i-2} \bigr] \\
& = \frac{1}{n} \, \expectation \bigl[
|X_{i-1} - X'_{i-1}|^2 \bigr] \\
& = \frac{1}{n} \, \expectation \Bigl[ |X_{i-1}
- X'_{i-1}|^2 \, | \, X_{i-1} = e^{\frac{j \pi}{M}} \Bigr] \\
& = \frac{1}{nM} \sum_{l=0}^{M-1}
\Big| \, e^{\frac{j \pi}{M}}-e^{\frac{j (2l+1)\pi}{M}}\Big|^2 \\
& = \frac{4}{nM} \sum_{l=1}^{M-1} \sin^2 \Bigl(
\frac{\pi l}{M} \Bigr) = \frac{2}{n}.
\end{align*}
The last equality holds since
\begin{align*}
 \sum_{l=1}^{M-1} \sin^2 \Bigl(\frac{\pi l}{M} \Bigr)
&= \frac{1}{2} \sum_{l=0}^{M-1} \Bigl(1-\cos\Bigl(\frac{2\pi l}{M} \Bigr)\Bigr) \\
& = \frac{M}{2} - \frac{1}{2} \, \text{Re}
\biggl\{ \sum_{l=0}^{M-1} e^{j 2l\pi/M} \biggr\} \\
& = \frac{M}{2} - \frac{1}{2} \, \text{Re}
\biggl\{\frac{1-e^{2j \pi}}{1-e^{j2\pi/M}} \biggr\} = \frac{M}{2}.
\end{align*}

\section{Proof of Theorem~\ref{theorem: concentration results on the performance of LDPC code ensembles}}
\label{appendix: proof of concentration results on the performance of LDPC code ensembles}

From the triangle inequality, we have
\begin{align}
\label{eq : th1 proof- connection between sub-theorems}
& \pr\left(\left|\frac{Z^{(\ell)}(\underline{s})}{nd_{\text{v}}}-p^{(\ell)}(
\underline{s})\right|>\eps\right) \\[0.1cm]
& \leq \pr\left(\left|\frac{Z^{(\ell)}(\underline{s})}{nd_{\text{v}}}-
\frac{\mathbb{E}[Z^{(\ell)}(\underline{s})]}{nd_{\text{v}}}\right|>\eps/2\right)
+ \pr\left(\left|\frac{\mathbb{E}[Z^{(\ell)}(\underline{s})]}{nd_{\text{v}}}
- p^{(\ell)}(\underline{s})\right|>\eps/2\right). \nonumber
\end{align}
If inequality \eqref{eq : th1-Convergence to cycle-free case} holds a.s.,
then $\pr\left(\left|\frac{Z^{(\ell)}(\underline{s})}{nd_{\text{v}}}
- p^{(\ell)}(\underline{s})\right|>\eps/2\right)=0$; therefore,
using~\eqref{eq : th1 proof- connection between sub-theorems}, we deduce that
\eqref{eq : th1-Concentration around cycle-free case} follows from
\eqref{eq : th1-Concentration around expectation} and
\eqref{eq : th1-Convergence to cycle-free case}
for any $\eps>0$ and $n>\frac{2\gamma}{\eps}$.
We start by proving~\eqref{eq : th1-Concentration around expectation}.
For an arbitrary sequence $\underline{s}$, the random variable
$Z^{(\ell)}(\underline{s})$ denotes the number of incorrect variable-to-check
node messages among all $nd_{\text{v}}$ variable-to-check node messages passed
in the $\ell$th iteration for a particular graph $\mathcal{G}$, and decoder-input
$\underline{Y}$. Let us form a martingale by first exposing the $nd_{\text{v}}$
edges of the graph one by one, and then exposing the $n$ received symbols $Y_i$
one by one.
Let $\underline{a}$ denote the sequence of the $nd_{\text{v}}$ variable-to-check
node edges of the graph, followed by the sequence of the $n$ received symbols at
the channel output. For $i=0,...n(d_{\text{v}}+1)$, let the random variable
$\widetilde{Z}_{i} \triangleq \expectation[Z^{(\ell)}(\underline{s})|a_1,\ldots,a_i]$
be defined as the conditional expectation of $Z^{(\ell)}(\underline{s})$ given the
first $i$ elements of the sequence $\underline{a}$. Note that it forms a martingale
sequence (see Fact~\ref{fact: construction of martingales} in
Section~\ref{section: Discrete-Time Martingales}),
where $\widetilde{Z}_{0} = \expectation[Z^{(\ell)}(\underline{s})]$ and
$\widetilde{Z}_{n(d_{\text{v}}+1)}=Z^{(\ell)}(\underline{s})$. Hence, getting an
upper bound on the sequence of differences $|\widetilde{Z}_{i+1}-\widetilde{Z}_{i}|$
enables to apply the Azuma--Hoeffding inequality for proving concentration around the expected
value $\widetilde{Z}_{0}$.
To this end, let us consider the effect of exposing an edge of the graph. Consider two graphs
$\mathcal{G}$ and $\mathcal{\widetilde{G}}$ whose edges are identical except for an exchange
of an endpoint of two edges. A variable-to-check message is affected by this change if at
least one of these edges is included in its directed neighborhood of depth $\ell$.

Consider a neighborhood of depth $\ell$ of a variable-to-check node message. Since at each
level, the graph expands by a factor of
$$\alpha \triangleq (d_{\text{v}}-1+2Wd_{\text{v}})(d_{\text{c}}-1),$$
there are a total of
\begin{equation*}
\label{eq : th1 proof-number of V to C edges }
{N}_{\text{e}}^{(\ell)}=1+d_{\text{c}}(d_{\text{v}}-1+2Wd_{\text{v}})\sum\limits_{i=0}^{\ell-1}{\alpha^i}
\end{equation*}
edges related to the code structure (variable-to-check node edges or vice versa) in the neighborhood
$\mathcal{N}^{(\ell)}_{\vec{e}}$. By symmetry, the two edges can affect at most $2{N}_{\text{e}}^{(\ell)}$
neighbors (alternatively, we could directly sum the number of variable-to-check node edges in
a neighborhood of a variable-to-check node edge, and in a neighborhood of a check-to-variable node edge).
The change in the number of incorrect variable-to-check node messages is bounded by the extreme case,
where each change in the neighborhood of a message introduces an error. In a similar manner, when we
reveal a received output symbol, the variable-to-check node messages whose directed neighborhood includes
that channel input can be affected. We consider a neighborhood of depth $\ell$ of a received output symbol.
By counting, it can be shown that this neighborhood includes
\begin{equation*}
\label{eq : th1 proof-number of channel inputs }
{N}_Y^{(\ell)}=(2W+1) \, d_{\text{v}} \, \sum\limits_{i=0}^{\ell-1}{\alpha^i}
\end{equation*}
variable-to-check node edges.
Therefore, a change of a received output symbol can affect up to $N_Y^{(\ell)}$
variable-to-check node messages.
We conclude that $|\widetilde{Z}_{i+1}-\widetilde{Z}_{i}| \leq 2{N}_{\text{e}}^{(\ell)}$ for
the first $nd_{\text{v}}$ exposures, and
$|\widetilde{Z}_{i+1}-\widetilde{Z}_{i}| \leq {N}_Y^{(\ell)}$ for the last $n$ exposures.
Applying the Azuma--Hoeffding inequality, we get
\begin{align*}
\label{eq : th1 proof-Azuma's inequality}
&\pr\left( \left| \frac{Z^{(\ell)}(\underline{s})}{nd_{\text{v}}}
-\frac{\expectation[Z^{(\ell)}(\underline{s})]}{nd_{\text{v}}} \right| > \frac{\eps}{2} \right) \nonumber \\
&\leq 2 \exp \left(-\frac{\left(nd_{\text{v}}\eps/2\right)^2}{2 \Bigl(nd_{\text{v}}{\big(2{N}_{\text{e}}^{(\ell)}\big)}^2
+n{\big({N}_Y^{(\ell)}\big)}^2 \Bigr)} \right)
\end{align*}
and a comparison of this concentration inequality with~\eqref{eq : th1-Concentration around expectation} gives that
\begin{equation}
\label{eq : th1 proof-beta expression}
\frac{1}{\beta} = \frac{8\left(4d_{\text{v}}{({N}_{\text{e}}^{(\ell)})^2}+{({N}_Y^{(\ell)})^2}\right)}{d_{\text{v}}^2}.
\end{equation}

Next, proving inequality~\eqref{eq : th1-Convergence to cycle-free case} relies on concepts from
\cite{RichardsonU2001} and \cite{KavcicMM_IT2003}. Let $\expectation[Z_i^{(\ell)}(\underline{s})]$,
for $i \in \{1, \ldots, nd_{\text{v}}\}$, be the expected number of incorrect messages passed along
edge $\overrightarrow{e_i}$ after $\ell$ rounds, where the average is with respect to all realizations
of graphs and all output symbols from the channel. Then, by symmetry in the graph construction and by
linearity of  expectation, it follows that
\begin{equation}
\label{eq : th1 proof- E[Z] expension}
\mathbb{E}[Z^{(\ell)}(\underline{s})]=
\sum_{i=1}^{nd_{\text{v}}} {\mathbb{E}[Z_i^{(\ell)}(\underline{s})]}
=nd_{\text{v}}\mathbb{E}[Z_1^{(\ell)}(\underline{s})],
\end{equation}
and
\begin{align*}
&\mathbb{E}[Z_1^{(\ell)}(\underline{s})] \nonumber\\
&=\mathbb{E}[Z_1^{(\ell)}(\underline{s}) \, | \,
\mathcal{N}^{(\ell)}_{\vec{e}}\text{\, is a tree}] \, P^{(\ell)}_{{\text{t}}} +
\mathbb{E}[Z_1^{(\ell)}(\underline{s}) \, | \, \mathcal{N}^{(\ell)}_{\vec{e}} \text{\, not a tree}] \,
P^{(\ell)}_{\overline{\text{t}}}
\end{align*}
where $P^{(\ell)}_{{\text{t}}}$ and $P^{(\ell)}_{\overline{\text{t}}} \triangleq 1 - P^{(\ell)}_{{\text{t}}}$
denote the probabilities that the sub-graph $\mathcal{N}^{(\ell)}_{\vec{e}} $ is or, respectively, is not a tree.
From Theorem~\ref{theorem: probability of a subgraph that is not a tree},
we have $P^{(\ell)}_{\overline{\text{t}}}\le\frac{\gamma}{n}$,
where $\gamma$ is a positive constant which is independent of $n$.
Furthermore,
$\mathbb{E}[Z_1^{(\ell)}(\underline{s}) \, | \,
\text{neighborhood is a tree}]=p^{(\ell)}(\underline{s})$, so
\begin{eqnarray}
\label{eq : EZ inequalities set}
  \nonumber \mathbb{E}[Z_1^{(\ell)}(\underline{s})] &\le&
  (1-P^{(\ell)}_{\overline{\text{t}}})p^{(\ell)}(\underline{s})
  +P^{(\ell)}_{\overline{\text{t}}}\le p^{(\ell)}(\underline{s})
  +P^{(\ell)}_{\overline{\text{t}}}\\
   \mathbb{E}[Z_1^{(\ell)}(\underline{s})] &\ge&
   (1-P^{(\ell)}_{\overline{\text{t}}})p^{(\ell)}(\underline{s})
   \ge p^{(\ell)}(\underline{s})-P^{(\ell)}_{\overline{\text{t}}}.
\end{eqnarray}
Using \eqref{eq : th1 proof- E[Z] expension}, \eqref{eq : EZ inequalities set}
and the inequality $P^{(\ell)}_{\overline{\text{t}}}\le\frac{\gamma}{n}$ gives that
\begin{equation*}
\left|\frac{\mathbb{E}[Z^{(\ell)}(\underline{s})]}{nd_{\text{v}}}
-p^{(\ell)}(\underline{s}) \right|\le
P^{(\ell)}_{\overline{\text{t}}} \le \frac{\gamma}{n}.
\end{equation*}
Hence, if $n>\frac{2\gamma}{\eps}$, then~\eqref{eq : th1-Convergence to cycle-free case} holds.

\section{Proof of Lemma~\ref{lemma: lemma for the case of an infinite maximal degree}}
\label{appendix: proof of the lemma for the case of an infinite
maximal degree} For proving Lemma~\ref{lemma: lemma for the
case of an infinite maximal degree}, one needs to show that, if
$\rho'(1) < \infty$, then
\begin{equation}
\lim_{C \rightarrow 1} \, \sum_{i=1}^{\infty} (i+1)^2 \Gamma_i \,
\left[h_2 \left(\frac{1-C^{\frac{i}{2}}}{2} \right)\right]^2 = 0
\label{eq: limit in appendix 1}
\end{equation}
which, from \eqref{eq: B}, yields that $B \rightarrow \infty$ in
the limit where $C \rightarrow 1$.

By the assumption in Lemma~\ref{lemma: lemma for the case of an
infinite maximal degree}, since $\rho'(1) < \infty$,
$$\sum_{i=1}^{\infty} i \rho_i = \rho'(1)+1 < \infty,$$ and it follows
from the Cauchy--Schwarz inequality that
\begin{equation*}
\sum_{i=1}^{\infty} \frac{\rho_i}{i} \geq
\left( \, \dsum_{i=1}^{\infty} i \rho_i\right)^{-1} > 0.
\end{equation*}
Hence, the {\em average} degree of the parity-check nodes is finite:
\begin{equation*}
d_{\text{c}}^{\text{avg}} = \left( \, \dsum_{i=1}^{\infty}
\frac{\rho_i}{i} \right)^{-1} < \infty
\end{equation*}
and
\begin{align*}
 \sum_{i=1}^{\infty} (i+1)^2 \Gamma_i & = \sum_{i=1}^{\infty} i^2
 \Gamma_i + 2 \sum_{i=1}^{\infty} i \Gamma_i + \sum_i \Gamma_i \\
& = d_{\text{c}}^{\text{avg}} \left( \, \sum_{i=1}^{\infty} i \rho_i +
2 \right) + 1 < \infty
\end{align*}
where $\Gamma_i$ denotes the fraction of parity-check nodes of degree~$i$
and $\rho_i$ denotes the fraction of edges that are connected to parity-check
nodes of degree~$i$, and the last equality holds since
\begin{align*}
\Gamma_i = \frac{\frac{\rho_i}{i}}{\dsum_{j=1}^{\infty} \frac{\rho_j}{j}}
= \frac{\frac{\rho_i}{i}}{\dint_0^1 \rho(x) \,
\mathrm{d}x}
= d_{\text{c}}^{\text{avg}} \, \left( \frac{\rho_i}{i}\right),
\quad \forall \, i \in \naturals.
\end{align*}
This therefore implies that the infinite series in \eqref{eq: limit in appendix 1}
converges uniformly for $C \in [0,1]$, so the order of the limit
and the infinite sum can be exchanged. Every term of the infinite
series in \eqref{eq: limit in appendix 1} converges to zero in the
limit where $C \rightarrow 1$, so the limit in \eqref{eq: limit
in appendix 1} is zero. This completes the proof of
Lemma~\ref{lemma: lemma for the case of an infinite maximal degree}.

\end{subappendices} 
\setcounter{section}{3}

\chapter{The Entropy method, Log-Sobolev and Transportation-Cost Inequalities}
\label{chapter: entropy method}
\chaptermark{The Entropy Method, LSI and TC Inequalities}

This chapter introduces the entropy method for deriving concentration inequalities for functions of a large number of independent random variables, and exhibits its multiple connections to information theory. The chapter is divided into four parts. Sections~\ref{sec:ingredients}--\ref{sec:LSI} introduce the basic ingredients of the entropy method and closely related topics, such as logarithmic Sobolev inequalities. These topics underlie the so-called functional approach to deriving concentration inequalities. Section~\ref{sec:transportation} is devoted to a related viewpoint based on probability in metric spaces. This viewpoint centers around the so-called transportation-cost inequalities, which have been introduced into the study of concentration by Marton. Section~\ref{sec:nonproduct} gives a brief summary of some results on concentration for dependent random variables, emphasizing the connections to information-theoretic ideas. Section~\ref{sec:applications} lists several applications of concentration inequalities and the entropy method to problems in information theory, including strong converses for several source and channel coding problems, empirical distributions of good channel codes with non-vanishing error probability, and an information-theoretic converse for concentration of measure.

\section{The main ingredients of the entropy method}
\label{sec:ingredients}

As a reminder, we are interested in the following question. Let $X_1,\ldots,X_n$ be $n$ independent random variables, each taking values in a set $\cX$. Given a function $f \colon \cX^n \to \reals$, we would like to find tight upper bounds on the {\em deviation probabilities} for the random variable $U = f(X^n)$, i.e., we wish to bound from above the probability $\pr (|U - \expectation U| \ge r)$ for each $r > 0$. Of course, if $U$ has finite variance, then Chebyshev's inequality already gives
\begin{align}\label{eq:Chebyshev_deviation_bound}
	\pr (|U - \expectation U| \ge r) \le \frac{\var(U)}{r^2}, \quad \forall \, r > 0.
\end{align}
However, in many instances a bound like \eqref{eq:Chebyshev_deviation_bound} is not nearly as tight as one would like, so ideally we aim for Gaussian-type bounds
\begin{align}\label{eq:Gaussian_deviation_bound}
	\pr (|U - \expectation U| \ge r) \le K \exp \left(-\kappa r^2 \right), \quad \forall \, r > 0
\end{align}
for some constants $K, \kappa > 0$. Whenever such a bound is available, $K$ is typically a small constant (usually, $K=2$), while $\kappa$ depends on the sensitivity of the function $f$ to variations in its arguments.

In the preceding chapter, we have demonstrated the martingale method for deriving Gaussian concentration bounds of the form \eqref{eq:Gaussian_deviation_bound}, such as the inequalities of Azuma--Hoeffding (Theorem~\ref{theorem: Azuma--Hoeffding inequality}) and McDiarmid (Theorem~\ref{theorem: McDiarmid's inequality}). In this chapter, our focus is on the so-called ``entropy method,'' an information-theoretic technique that has become increasingly popular starting with the work of Ledoux \cite{Ledoux_paper} (see also \cite{Ledoux}). In the following, we will always assume (unless specified otherwise) that the function $f \colon \cX^n \to \reals$ and the probability distribution $P$ of $X^n$ are such that
\begin{itemize}
	\item $U = f(X^n)$ has zero mean: $\expectation U = \expectation f(X^n) = 0$
	\item $U$ is {\em exponentially integrable}:
	\begin{align}\label{eq:U_expint}
		\expectation[\exp(\lambda U)] = \expectation\left[\exp\big(\lambda f(X^n)\big)\right] < \infty, \qquad \forall \, \lambda \in \reals
	\end{align}
	[another way of writing this is $\exp(\lambda f) \in L^1(P)$ for all $\lambda \in \reals$].
\end{itemize}
In a nutshell, the entropy method has three basic ingredients:
\begin{enumerate}
	\item {\bf The Chernoff bound} --- using Markov's inequality, the problem of bounding the deviation probability $\pr (|U - \expectation U| \ge r)$ is reduced to the analysis of the logarithmic moment-generating function $\Lambda(\lambda) \deq \ln \expectation[\exp(\lambda U)]$, $\lambda \in \reals$. (This is also the starting point of the martingale approach, see Chapter~\ref{chapter: martingales}.)
	\item {\bf The Herbst argument} --- the function $\Lambda(\lambda)$ is related through a simple first-order differential equation to the relative entropy (information divergence)
$$D(P^{(\lambda f)} \| P) \triangleq \expectation_{P^{(\lambda f)}} \left[ \ln \frac{\d P^{(\lambda f)}}{\d P} \right]
= \expectation_{P} \left[ \frac{\d P^{(\lambda f)}}{\d P} \, \ln \frac{\d P^{(\lambda f)}}{\d P} \right],$$
where $P = P_{X^n}$ is the probability distribution of $X^n$, and $P^{(\lambda f)}$ is the {\em tilted probability distribution} defined by
	\begin{align}\label{eq:tilted_distribution}
		\frac{\d P^{(\lambda f)}}{\d P} = \frac{\exp(\lambda f)}{\expectation[\exp(\lambda f)]} = \exp\big(\lambda f - \Lambda(\lambda)\big).
		\end{align}
If the function $f$ and the probability distribution $P$ are such that
\begin{align}\label{eq:divergence_quadratic_bound}
D(P^{(\lambda f)} \| P) \le \frac{c\lambda^2}{2}
\end{align}
for some $c > 0$, then the Gaussian bound \eqref{eq:Gaussian_deviation_bound} holds with $K = 2$ and $\kappa = \frac{1}{2c}$. The standard way to establish  \eqref{eq:divergence_quadratic_bound} is through the so-called {\em logarithmic Sobolev inequalities}.
\item {\bf Tensorization of the entropy} --- with few exceptions, it is difficult to derive a bound like \eqref{eq:divergence_quadratic_bound} directly. Instead, one typically takes a divide-and-conquer approach: Using the fact that $P_{X^n}$ is a product distribution (by the assumed independence of the $X_i$'s), the divergence $D(P^{(\lambda f)} \| P)$ is bounded from above by a sum of ``one-dimensional'' (or ``local'') conditional divergence\footnote{Recall the usual definition of the conditional divergence:
$$
D(P_{V|U} \| Q_{V|U} | P_U) \deq \int P_U(\d u) D(P_{V|U=u} \| Q_{V|U=u}).
$$
} terms
\begin{align}\label{eq:local_divergence_term}
D\Big(P^{(\lambda f)}_{X_i|\bar{X}^i} \big\| P_{X_i} \big| P^{(\lambda f)}_{\bar{X}^i} \Big),
\qquad i = 1,\ldots, n
\end{align}
where, for each $i$, $\bar{X}^i \in \cX^{n-1}$ denotes the $(n-1)$-tuple obtained from $X^n$ by removing the $i$th coordinate, i.e., $\bar{X}^i = (X_1,\ldots,X_{i-1},X_{i+1},\ldots,X_n)$. Despite their formidable appearance, the conditional divergences in \eqref{eq:local_divergence_term} are easier to handle because, for each given realization $\bar{X}^i = \bar{x}^i$, the $i$th such term involves a single-variable function $f_i(\cdot|\bar{x}^i) \colon \cX \to \reals$ defined by $f_i(y|\bar{x}^i) \deq f(x_1,\ldots,x_{i-1},y,x_{i+1},\ldots,x_n)$ and the corresponding tilted distribution $P^{(\lambda f)}_{X_i|\bar{X}^i = \bar{x}^i}$, where
\begin{align}\label{eq:local_tilting}
	\frac{\d P^{(\lambda f)}_{X_i|\bar{X}^i = \bar{x}^i}}{\d P_{X_i}} = \frac{\exp\big(\lambda f_i(\cdot|\bar{x}^i)\big)}{\expectation \left[\exp\big(\lambda f_i(X_i|\bar{x}^i)\big)\right]}, \qquad \forall \, \bar{x}^i \in \cX^{n-1}.
\end{align}
In fact, from \eqref{eq:tilted_distribution} and \eqref{eq:local_tilting}, it is easy to see that the conditional distribution $P^{(\lambda f)}_{X_i|\bar{X}^i = \bar{x}^i}$ is nothing but the tilted distribution $P^{(\lambda f_i(\cdot|\bar{x}^i))}_{X_i}$. This simple observation translates into the following: If the function $f$ and the probability distribution $P = P_{X^n}$ are such that there exist constants $c_1,\ldots,c_n > 0$ so that
\begin{align}\label{eq:local_divergence_quadratic_bound}
	D\Big(P^{(\lambda f_i(\cdot|\bar{x}^i))}_{X_i} \Big\| P_{X_i} \Big) \le \frac{c_i \lambda^2}{2}, \,\, \forall \, i \in \{1,\ldots,n\}, \bar{x}^i \in \cX^{n-1}
\end{align}
then \eqref{eq:divergence_quadratic_bound} holds with $c = \sum^n_{i=1}c_i$ (to be shown explicitly later),
which in turn gives the bound
\begin{align}
	\pr \Big( \left|f(X^n) - \expectation f(X^n) \right| \ge r \Big) \le 2 \exp\left( - \frac{r^2}{2 \sum^n_{i=1}c_i} \right),
\end{align}
for all $r > 0$. Again, one would typically use logarithmic Sobolev inequalities to verify \eqref{eq:local_divergence_quadratic_bound}. Conceptually, the tensorization step is similar to ``single-letter'' techniques common in information theory.
\end{enumerate}
In the remainder of this section, we shall elaborate on these three ingredients. Logarithimic Sobolev inequalities and their applications to concentration bounds are described in detail in Sections~\ref{sec:Gaussian_LSI} and \ref{sec:LSI}.

\subsection{The Chernoff bounding technique revisited}

We start by recalling the Chernoff bounding technique (see Section~\ref{subsection: Chernoff}), which reduces the problem of bounding the deviation probability $\pr (U \ge r)$ to the analysis of the logarithmic moment-generating function $\Lambda(\lambda) = \ln \expectation[\exp(\lambda U)]$:
\begin{align*}
	\pr(U \ge r) \le \exp\Big(\Lambda(\lambda) - \lambda r\Big), \qquad \forall \, \lambda > 0.
\end{align*}
The following properties of $\Lambda(\lambda)$ will be useful later on:
\begin{itemize}
	\item $\Lambda(0) = 0$
	\item Because of the exponential integrability of $U$ [cf.~\eqref{eq:U_expint}], $\Lambda(\lambda)$ is infinitely differentiable, and  one can interchange derivative and expectation. In particular,
	\begin{align}
		\Lambda'(\lambda) &= \frac{\expectation[U \exp(\lambda U)]}{\expectation[\exp(\lambda U)]}, \nonumber\\
		\Lambda''(\lambda) &= \frac{\expectation[U^2 \exp(\lambda U)]}{\expectation[\exp(\lambda U)]} - \left( \frac{\expectation[U \exp(\lambda U)]}{\expectation[\exp(\lambda U)]} \right)^2. \label{eq:LMGF_derivatives}
	\end{align}
	Since we have assumed that $\expectation U = 0$, we have $\Lambda'(0) = 0$ and $\Lambda''(0) = \var (U)$.
	\item Since $\Lambda(0) = \Lambda'(0) = 0$, we get
	\begin{align}\label{eq:zero_lambda_limit}
		\lim_{\lambda \to 0} \frac{\Lambda(\lambda)}{\lambda} = 0.
	\end{align}
\end{itemize}

\subsection{The Herbst argument}
\label{ssec:Herbst}

The second ingredient of the entropy method consists in relating the logarithmic moment-generating function to a certain relative entropy. The underlying technique is often referred to as the {\em Herbst argument} because its basic idea  had been described in an unpublished 1975 letter from I.~Herbst to L.~Gross (the first explicit mention of this letter appears in a paper by Davies and Simon \cite{DaviesSimon}).

Given an arbitrary function $g \colon \cX^n \to \reals$ which is exponentially integrable with respect to $P$, i.e., $\expectation[\exp(g(X^n))] < \infty$, let us denote by $P^{(g)}$ the {\em $g$-tilting} of $P$:
\begin{align*}
	\frac{\d P^{(g)}}{\d P} = \frac{\exp(g)}{\expectation[\exp(g)]}.
\end{align*}
Then
\begin{align*}
	D\big( P^{(g)} \big\| P \big) &= \int_{\cX^n} \ln \biggl(\frac{\d P^{(g)}}{\d P} \biggr) \, \d P^{(g)} \\[0.1cm]
    &= \int_{\cX^n} \frac{\d P^{(g)}}{\d P} \, \ln \biggl(\frac{\d P^{(g)}}{\d P} \biggr) \, \d P \\[0.1cm]
	&= \frac{\expectation[g \exp(g)]}{\expectation[\exp(g)]} - \ln \expectation[\exp(g)].
\end{align*}
In particular, if we let $g = t f$ for some $t \neq 0$, then
\begin{align}
	D \big (P^{(t f)} \big\| P \big) &= \frac{t \cdot \expectation[f \exp(t f)]}{\expectation[\exp(t f)]} - \ln \expectation[\exp(t f)] \nonumber \\
	&= t \Lambda'(t) - \Lambda(t) \nonumber \\
	&= t^2 \frac{\d}{\d t}\left(\frac{\Lambda(t)}{t}\right), \label{eq: differential form of Herbst's bound}
\end{align}
where in the second line we have used \eqref{eq:LMGF_derivatives}. Integrating from $t=0$ to
$t = \lambda$ and using \eqref{eq:zero_lambda_limit}, we get
\begin{align}\label{eq:entropy_LMGF}
	\Lambda(\lambda) = \lambda \int^\lambda_0 \frac{D\big(P^{(t f)} \big \| P \big)}{t^2} \; \d t.
\end{align}
Combining \eqref{eq:entropy_LMGF} with \eqref{eq:upper_tail}, we have proved the following:
\begin{proposition}\label{prop:herbst_bound} Let $U = f(X^n)$ be a zero-mean random variable that is
exponentially integrable. Then, for every $r \ge 0$,
	\begin{align}\label{eq:herbst_bound}
	   \pr \big( U \ge r \big) \le \exp\left( \lambda \int^\lambda_0 \frac{D(P^{(tf)} \| P)}{t^2} \; \d t
        - \lambda r \right), \qquad \forall \, \lambda > 0.
	\end{align}
\end{proposition}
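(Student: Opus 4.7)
The plan is to combine the Chernoff bound \eqref{eq:upper_tail} with an integral representation of the logarithmic moment-generating function $\Lambda(\lambda) = \ln\expectation[\exp(\lambda U)]$ in terms of the relative entropies $D(P^{(tf)} \| P)$, for $t \in (0,\lambda]$. Since \eqref{eq:upper_tail} gives $\pr(U \ge r) \le \exp(\Lambda(\lambda) - \lambda r)$ for every $\lambda > 0$, the entire task reduces to establishing the identity
\begin{equation*}
\Lambda(\lambda) \;=\; \lambda \int_0^\lambda \frac{D(P^{(tf)} \| P)}{t^2}\, \d t.
\end{equation*}

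To derive this identity, I would first unfold the definition of the tilted distribution $P^{(tf)}$ with $\d P^{(tf)}/\d P = \exp(tf)/\expectation[\exp(tf)]$ and compute
\begin{equation*}
D\big(P^{(tf)} \,\big\|\, P\big) \;=\; \expectation_{P^{(tf)}}\!\left[\ln \tfrac{\d P^{(tf)}}{\d P}\right] \;=\; \frac{t\,\expectation[f\exp(tf)]}{\expectation[\exp(tf)]} - \ln\expectation[\exp(tf)].
\end{equation*}
Recognizing the first term as $t\Lambda'(t)$ via the formula for $\Lambda'$ in \eqref{eq:LMGF_derivatives}, and the second as $\Lambda(t)$, yields $D(P^{(tf)}\|P) = t\Lambda'(t) - \Lambda(t)$. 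The key algebraic observation is that this expression is exactly $t^2 \tfrac{\d}{\d t}(\Lambda(t)/t)$, as in \eqref{eq: differential form of Herbst's bound} --- this is what lets the integration in the next step collapse.

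Dividing by $t^2$ and integrating from $0$ to $\lambda$ then gives
\begin{equation*}
\int_0^\lambda \frac{D(P^{(tf)}\|P)}{t^2}\,\d t \;=\; \frac{\Lambda(\lambda)}{\lambda} - \lim_{t\to 0^+}\frac{\Lambda(t)}{t},
\end{equation*}
and the limit vanishes by \eqref{eq:zero_lambda_limit}, which itself follows from $\Lambda(0)=0$ and $\Lambda'(0)=\expectation U = 0$ together with the assumed exponential integrability. Multiplying through by $\lambda$ delivers the desired identity, and plugging it into the Chernoff bound yields \eqref{eq:herbst_bound}.

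The only delicate point --- and the main obstacle worth flagging --- is justifying the differentiation under the expectation needed to obtain the formulas for $\Lambda'$ and $\Lambda''$ in \eqref{eq:LMGF_derivatives}, and the continuity needed to pass the limit $t \to 0^+$ through. Both are handled cleanly by the standing assumption \eqref{eq:U_expint} that $\exp(\lambda U) \in L^1(P)$ for \emph{every} $\lambda \in \reals$: this ensures $\Lambda$ is infinitely differentiable on all of $\reals$ (via a standard dominated-convergence argument on a neighborhood of each $\lambda$), so the integrand $D(P^{(tf)}\|P)/t^2 = \tfrac{\d}{\d t}(\Lambda(t)/t)$ is continuous on $(0,\lambda]$ with a finite limit at $0$, and the fundamental theorem of calculus applies without issue.
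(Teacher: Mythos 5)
Your proof is correct and is essentially identical to the paper's: compute $D(P^{(tf)}\|P) = t\Lambda'(t) - \Lambda(t) = t^2\frac{\d}{\d t}\big(\Lambda(t)/t\big)$, integrate from $0$ to $\lambda$ using $\lim_{t\to 0}\Lambda(t)/t = 0$, and plug the resulting identity $\Lambda(\lambda) = \lambda\int_0^\lambda t^{-2}D(P^{(tf)}\|P)\,\d t$ into the Chernoff bound \eqref{eq:upper_tail}. Your discussion of why the exponential integrability assumption \eqref{eq:U_expint} licenses the differentiation and the limit is a fair elaboration of the regularity facts the paper records just before this proposition.
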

Thus, we have reduced the problem of bounding the deviation probabilities $\pr (U \ge r)$ to the problem of bounding the relative entropies $D(P^{(tf)} \| P)$. In particular, we have
\begin{corollary}\label{cor:Herbst} Suppose that the function $f$ and the probability distribution $P$ of $X^n$ are such that
	\begin{align}\label{eq:quadratic_divergence_bound}
		D \big( P^{(t f)} \big\| P \big) \le \frac{ct^2}{2}, \qquad \, \forall \, t > 0
	\end{align}
	for some constant $c > 0$. Then
	\begin{align}\label{eq:one_sided_Gaussian_bound}
		\pr \big( U \ge r \big) \le \exp\left( - \frac{r^2}{2c} \right), \quad \forall \, r \geq 0.
	\end{align}
\end{corollary}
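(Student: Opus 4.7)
The plan is to apply Proposition~\ref{prop:herbst_bound} directly and then optimize over the free parameter $\lambda > 0$, since all the heavy lifting (the differential identity relating $\Lambda$ to the tilted-distribution divergences) has already been done there.

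First, I would substitute the hypothesis \eqref{eq:quadratic_divergence_bound} into the integrand appearing in \eqref{eq:herbst_bound}. Because $D(P^{(tf)}\|P)/t^2 \le c/2$ for every $t > 0$, the inner integral is bounded by
\begin{align*}
\int_0^\lambda \frac{D(P^{(tf)}\|P)}{t^2}\,\d t \le \int_0^\lambda \frac{c}{2}\,\d t = \frac{c\lambda}{2},
\end{align*}
so that Proposition~\ref{prop:herbst_bound} yields $\pr(U \ge r) \le \exp\bigl(\tfrac{c\lambda^2}{2} - \lambda r\bigr)$ for every $\lambda > 0$.

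The second step is the standard Chernoff-style optimization: the quadratic $\lambda \mapsto \tfrac{c\lambda^2}{2} - \lambda r$ attains its minimum on $(0,\infty)$ at $\lambda^\star = r/c$ (which is strictly positive precisely when $r > 0$), and its minimum value is $-r^2/(2c)$. Plugging this back in gives the desired Gaussian bound \eqref{eq:one_sided_Gaussian_bound}. The boundary case $r = 0$ is trivial, since the right-hand side equals $1$. There is no genuine obstacle here --- the result is essentially a corollary in the strict sense, with the only minor care needed being to check that the integral $\int_0^\lambda t^{-2}\,D(P^{(tf)}\|P)\,\d t$ is well-defined near $t = 0$, which follows from the hypothesis bound $D(P^{(tf)}\|P)\le ct^2/2$ that forces the integrand to stay bounded as $t\downarrow 0$.
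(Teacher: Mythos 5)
Your proof is correct and follows exactly the same route as the paper: substitute the quadratic divergence bound into the integrand of \eqref{eq:herbst_bound} to obtain the intermediate estimate $\pr(U\ge r)\le \exp\bigl(\tfrac{c\lambda^2}{2}-\lambda r\bigr)$, then optimize at $\lambda = r/c$. The additional remarks about the $r=0$ edge case and integrability near $t=0$ are minor but sound.
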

\begin{proof} Using \eqref{eq:quadratic_divergence_bound} to upper-bound the integrand on the right-hand side of \eqref{eq:herbst_bound}, we get
\begin{align}\label{eq:proto_Gaussian_tail}
\pr \big(U \ge r \big) \le \exp\left(\frac{c\lambda^2}{2} - \lambda r \right), \qquad \forall \, \lambda > 0.
\end{align}
Optimizing over $\lambda > 0$ to get the tightest bound gives $\lambda = \frac{r}{c}$, and its substitution
in \eqref{eq:proto_Gaussian_tail} gives the bound in \eqref{eq:one_sided_Gaussian_bound}.
\end{proof}

\subsection{Tensorization of the (relative) entropy}
\label{ssec:tensorization}

The relative entropy $D(P^{(tf)} \| P)$ involves two probability measures on the Cartesian product space $\cX^n$, so bounding this quantity directly is generally very difficult. This is where the third ingredient of the entropy method, the so-called {\em tensorization step}, comes in. The name ``tensorization'' reflects the fact that this step involves bounding $D(P^{(tf)} \| P)$ by a sum of ``one-dimensional'' relative entropy terms, each involving a conditional distribution of one of the variables given the rest.
The tensorization step hinges on the following simple bound:
\begin{proposition}\label{prop:erasure_entropy_bound} Let $P$ and $Q$ be two probability measures on the product space $\cX^n$, where $P$ is a product measure. For every $i \in \{1,\ldots,n\}$, let $\bar{X}^i$ denote the $(n-1)$-tuple $(X_1,\ldots,X_{i-1},X_{i+1},\ldots,X_n)$ obtained by removing $X_i$ from $X^n$. Then
	\begin{align}\label{eq:erasure_entropy_bound}
		D(Q \| P) \le \sum^n_{i=1} D\big( Q_{X_i|\bar{X}^i} \big\| P_{X_i} \big| Q_{\bar{X}^i} \big).
	\end{align}
\end{proposition}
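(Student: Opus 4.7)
The plan is to derive the bound in \eqref{eq:erasure_entropy_bound} by combining the standard chain rule for relative entropy with the nonnegativity of conditional mutual information, exploiting crucially the fact that $P$ is a product measure. First, I would apply the chain rule (in its natural left-to-right form) to write
$$
D(Q \| P) = \sum_{i=1}^n D\big(Q_{X_i | X^{i-1}} \big\| P_{X_i | X^{i-1}} \,\big|\, Q_{X^{i-1}}\big),
$$
where $X^{i-1} \deq (X_1, \ldots, X_{i-1})$ (with the convention $X^0 = \emptyset$, so the first term is an ordinary divergence $D(Q_{X_1} \| P_{X_1})$). Since $P$ is a product measure, $P_{X_i | X^{i-1}} = P_{X_i}$ for each $i$, so the identity above reduces to
$$
D(Q \| P) = \sum_{i=1}^n D\big(Q_{X_i | X^{i-1}} \big\| P_{X_i} \,\big|\, Q_{X^{i-1}}\big).
$$

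The result will then follow by a term-by-term comparison, namely by showing that for every $i \in \{1, \ldots, n\}$,
$$
D\big(Q_{X_i | X^{i-1}} \big\| P_{X_i} \,\big|\, Q_{X^{i-1}}\big) \;\le\; D\big(Q_{X_i | \bar X^i} \big\| P_{X_i} \,\big|\, Q_{\bar X^i}\big).
$$
To establish this, I would expand the difference of the two sides as an expectation under $Q$. Because $P_{X_i}$ does not depend on the conditioning variables, its contribution cancels, leaving
$$
D\big(Q_{X_i | \bar X^i} \| P_{X_i} \,|\, Q_{\bar X^i}\big) - D\big(Q_{X_i | X^{i-1}} \| P_{X_i} \,|\, Q_{X^{i-1}}\big) = \expectation_Q\!\left[ \log \frac{Q_{X_i | X^{i-1}, X_{i+1}^n}(X_i | X^{i-1}, X_{i+1}^n)}{Q_{X_i | X^{i-1}}(X_i | X^{i-1})} \right].
$$
The right-hand side is precisely the conditional mutual information $I_Q(X_i ; X_{i+1}^n | X^{i-1})$ computed under $Q$, which is nonnegative. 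Summing the termwise inequality over $i$ yields \eqref{eq:erasure_entropy_bound}.

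The main obstacle, as far as I can see, is not conceptual but bookkeeping: one has to keep careful track of which conditioning $\sigma$-algebra each term refers to, and be sure that the cancellation of $P_{X_i}$ and the identification of the residual expression as a conditional mutual information are valid (this is where the product structure of $P$ is genuinely used --- the argument would fail for a general reference measure). Under the standing assumption that the relevant regular conditional distributions exist, which is automatic on standard Borel spaces and holds throughout the chapter, no further measure-theoretic care is needed and the proof reduces to a two-line information-theoretic calculation.
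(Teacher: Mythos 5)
Your proof is correct and follows essentially the same route as the paper: expand $D(Q\|P)$ by the chain rule, use $P_{X_i|X^{i-1}}=P_{X_i}$ to simplify, and then compare term-by-term with the desired sum, observing that the difference is nonnegative. The only cosmetic difference is that you name the residual quantity $I_Q(X_i;X_{i+1}^n\mid X^{i-1})$, whereas the paper writes the same expression as the conditional divergence $D\big(Q_{X_i|\bar X^i}\,\big\|\,Q_{X_i|X^{i-1}}\,\big|\,Q_{\bar X^i}\big)$; these are identical.
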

\begin{proof}
From the relative entropy chain rule
\begin{eqnarray}
&& D(Q||P) = \sum_{i=1}^n D\bigl(Q_{X_i \, | \, X^{i-1}} \, \| \, P_{X_i | X^{i-1}} \, | \, Q_{X^{i-1}}\bigr) \nonumber \\
&& \hspace*{1.5cm} = \sum_{i=1}^n D\bigl(Q_{X_i \, | \, X^{i-1}} \, \| \, P_{X_i} \, | \, Q_{X^{i-1}}\bigr)
\label{eq: chain rule for relative entropy}
\end{eqnarray}
where the last equality holds since $X_1, \ldots, X_n$ are independent random variables under $P$ (which implies that
$P_{X_i|X^{i-1}} = P_{X_i|\bar{X}^i} = P_{X_i}$). Furthermore, for every $i \in \{1, \ldots, n\}$,
\begin{align}
&	D\big( Q_{X_i|\bar{X}^i} \big\| P_{X_i} \big| Q_{\bar{X}^i} \big) - D\big( Q_{X_i|X^{i-1}} \big\| P_{X_i} \big| Q_{X^{i-1}} \big) \nonumber \\
& \qquad = \expectation_Q \left[ \ln \frac{\d Q_{X_i|\bar{X}^i}}{\d P_{X_i}}\right] - \expectation_Q \left[ \ln \frac{\d Q_{X_i|X^{i-1}}}{\d P_{X_i}}\right] \nonumber \\
& \qquad = \expectation_Q \left[ \ln \frac{\d Q_{X_i|\bar{X}^i}}{\d Q_{X_i|X^{i-1}}}\right] \nonumber \\
& \qquad = D\big(Q_{X_i|\bar{X}^i} \big\| Q_{X_i|X^{i-1}} \big| Q_{\bar{X}^i}\big) \ge 0.
\label{eq: difference between conditional relative entropies}
\end{align}
Hence, by combining \eqref{eq: chain rule for relative entropy} and \eqref{eq: difference between conditional relative entropies}, we get the inequality in \eqref{eq:erasure_entropy_bound}.
\end{proof}

\begin{remark} The quantity on the right-hand side of \eqref{eq:erasure_entropy_bound} is actually the so-called {\em erasure divergence} $D^-(Q\|P)$ between $Q$ and $P$ (see \cite[Definition~4]{Verdu_Weissman_erasures}), which in the case of arbitrary $Q$ and $P$ is defined by
	\begin{equation}
		D^-(Q\|P) \deq \sum^n_{i=1}D(Q_{X_i|\bar{X}^i} \| P_{X_i|\bar{X}^i} | Q_{\bar{X}^i}).
    \label{eq: erasure divergence}
	\end{equation}
	Because in the inequality \eqref{eq:erasure_entropy_bound} $P$ is assumed to be a product measure, we can replace $P_{X_i|\bar{X}^i}$ by $P_{X_i}$. For a general (non-product) measure $P$, the erasure divergence $D^-(Q \| P)$ may be strictly larger or smaller than the ordinary divergence $D(Q \| P)$. For example, if $n=2$, $P_{X_1} = Q_{X_1}$, $P_{X_2} = Q_{X_2}$, then
\begin{align*}
\frac{\d Q_{X_1|X_2}}{\d P_{X_1|X_2}} = \frac{\d Q_{X_2|X_1}}{\d P_{X_2|X_1}} =  \frac{\d Q_{X_1,X_2}}{\d P_{X_1,X_2}},
\end{align*}
so, from \eqref{eq: erasure divergence},
\begin{align*}
	&D^-(Q_{X_1, X_2} \| P_{X_1, X_2}) \\
	&\qquad = D(Q_{X_1|X_2} \| P_{X_1|X_2} | Q_{X_2}) + D(Q_{X_2|X_1} \| P_{X_2|X_1} | Q_{X_1}) \\
	&\qquad  = 2 D(Q_{X_1, X_2} \| P_{X_1, X_2}).
\end{align*}
On the other hand, if $X_1 = X_2$ under both $P$ and $Q$, then $D^-(Q \| P) = 0$, but $D(Q \| P) > 0$ whenever $P \neq Q$, so $D(Q \| P) > D^-(Q \| P)$ in this case.
\end{remark}

Applying Proposition~\ref{prop:erasure_entropy_bound} with $Q = P^{(tf)}$ to bound the divergence in the integrand in \eqref{eq:herbst_bound}, we obtain from Proposition~\ref{prop:herbst_bound} the following:
\begin{proposition}\label{prop:herbst_bound_tensorized} For every $r \ge 0$, we have
	\begin{align}\label{eq:herbst_bound_tensorized}
		\pr\big(U \ge r) &\le \exp\left( \lambda \sum^n_{i=1}\int^\lambda_0 \frac{D\big(P^{(tf)}_{X_i|\bar{X}^i} \big\| P_{X_i} \big| P^{(tf)}_{\bar{X}^i}\big)}{t^2} \; \d t - \lambda r\right), \,\, \forall \, \lambda > 0.
	\end{align}
\end{proposition}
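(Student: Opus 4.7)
The proposal is to simply combine the two results that have just been established, namely Proposition~\ref{prop:herbst_bound} (the Herbst-style integral representation of the upper tail) and Proposition~\ref{prop:erasure_entropy_bound} (the tensorization of divergence against a product measure). Because $P = P_{X^n}$ is by assumption a product distribution, Proposition~\ref{prop:erasure_entropy_bound} applies directly to $Q = P^{(tf)}$ and yields, for every $t > 0$,
$$
D\big(P^{(tf)} \,\big\|\, P\big) \;\le\; \sum_{i=1}^{n} D\Big(P^{(tf)}_{X_i|\bar{X}^i} \,\Big\|\, P_{X_i} \,\Big|\, P^{(tf)}_{\bar{X}^i}\Big).
$$

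Next, I would substitute this inequality into the right-hand side of \eqref{eq:herbst_bound}. Since all the integrand terms $D(P^{(tf)}_{X_i|\bar{X}^i} \| P_{X_i} | P^{(tf)}_{\bar{X}^i})/t^2$ are nonnegative and measurable in $t$, Tonelli's theorem allows me to interchange the finite sum over $i$ with the integral over $t \in (0,\lambda]$, producing the right-hand side of \eqref{eq:herbst_bound_tensorized}. Exponentiating preserves the direction of the inequality, giving the claimed bound for every $\lambda > 0$.

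There is essentially no technical obstacle beyond checking that the integrability conditions that justified Proposition~\ref{prop:herbst_bound} (the exponential integrability assumption \eqref{eq:U_expint}) also ensure that each of the conditional divergence terms is finite and integrable in $t$ on $(0,\lambda]$; this follows from the same exponential integrability, which guarantees that the tilted distributions $P^{(tf)}$, and hence their conditionals and marginals, are well-defined and have finite divergence against $P_{X_i}$. Once this is noted, the proof is a one-line assembly of the two preceding propositions.
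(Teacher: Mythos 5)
Your proof is correct and is essentially the same as the paper's: the paper likewise obtains \eqref{eq:herbst_bound_tensorized} by applying Proposition~\ref{prop:erasure_entropy_bound} with $Q = P^{(tf)}$ to bound the divergence inside the integral in \eqref{eq:herbst_bound} and then interchanging the finite sum with the integral. The interchange is trivially justified by nonnegativity of the integrands, just as you note.
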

The conditional divergences in the integrand in \eqref{eq:herbst_bound_tensorized} may look formidable, but the remarkable thing is that, for each $i$ and a given $\bar{X}^i = \bar{x}^i$, the corresponding term involves a tilting of the marginal distribution $P_{X_i}$. Indeed, let us fix some $i \in \{1,\ldots,n\}$, and for each choice of $\bar{x}^i \in \cX^{n-1}$ let us define a function $f_i(\cdot|\bar{x}^i) \colon \cX \to \reals$ by setting
\begin{align}\label{eq:f_i}
	f_i(y|\bar{x}^i) \deq f(x_1,\ldots,x_{i-1},y,x_{i+1},\ldots,x_n), \qquad \forall \, y \in \cX.
\end{align}
Then
\begin{align}\label{eq:tilting_at_i}
\frac{\d P^{(f)}_{X_i|\bar{X}^i = \bar{x}^i}}{\d P_{X_i}} = \frac{\exp\big(  f_i(\cdot|\bar{x}^i)\big)}{\expectation\left[\exp\big( f_i(X_i|\bar{x}^i)\big)\right]}.
\end{align}
In other words, $P^{(f)}_{X_i|\bar{X}^i=\bar{x}^i}$ is the $f_i(\cdot|\bar{x}^i)$-tilting of $P_{X_i}$, the marginal distribution of $X_i$. This is the essence of tensorization: we have effectively decomposed the $n$-dimensional problem of bounding $D(P^{(tf)}\| P)$ into $n$ one-dimensional problems, where the $i$th problem involves the tilting of the marginal distribution $P_{X_i}$ by functions of the form $f_i(\cdot|\bar{x}^i), \forall \, \bar{x}^i$. In particular, we get the following:

\begin{corollary} \label{corollary: from an inequality on the divergence to a concentration inequality}
Suppose that the function $f$ and the probability distribution $P$ of $X^n$ are such that there exist some constants $c_1,\ldots,c_n > 0$, so that, for every $t > 0$,
\begin{align}\label{eq:local_quadratic_bound}
D \Big( P^{(t f_i(\cdot|\bar{x}^i))}_{X_i} \big\| P_{X_i} \Big) \le \frac{c_i t^2}{2},
\qquad \forall \, i \in \{1,\ldots,n\}, \; \bar{x}^i \in \cX^{n-1}.
	\end{align}
	Then
	\begin{align}\label{eq:deviation_bound_tensorized}
		\pr \Big( f(X^n) - \expectation f(X^n) \ge r \Big)
        \le \exp\left( - \frac{r^2}{2 \sum^n_{i=1}c_i} \right), \quad \forall \, r > 0.
	\end{align}
\end{corollary}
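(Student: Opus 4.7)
The plan is to combine the tensorized Herbst bound in Proposition~\ref{prop:herbst_bound_tensorized} with the hypothesis \eqref{eq:local_quadratic_bound}, exploiting the identification between conditional tiltings of the joint distribution and one-dimensional tiltings of the marginals recorded in \eqref{eq:tilting_at_i}. Starting from \eqref{eq:herbst_bound_tensorized}, the task reduces to controlling, for each $t > 0$, the sum
\begin{align*}
\sum^n_{i=1} D\Big(P^{(tf)}_{X_i|\bar{X}^i} \big\| P_{X_i} \big| P^{(tf)}_{\bar{X}^i}\Big).
\end{align*}

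The key observation is that because $P = P_{X^n}$ is a product measure, the conditional distribution $P^{(tf)}_{X_i|\bar{X}^i = \bar{x}^i}$ is exactly the tilted distribution $P^{(t f_i(\cdot|\bar{x}^i))}_{X_i}$ of the $i$th marginal with respect to the one-variable function $f_i(\cdot|\bar{x}^i)$ defined in \eqref{eq:f_i}; this is \eqref{eq:tilting_at_i} applied with $tf$ in place of $f$. Applying the hypothesis \eqref{eq:local_quadratic_bound} pointwise in $\bar{x}^i$ and then integrating against $P^{(tf)}_{\bar{X}^i}$ gives
\begin{align*}
D\Big(P^{(tf)}_{X_i|\bar{X}^i} \big\| P_{X_i} \big| P^{(tf)}_{\bar{X}^i}\Big) \le \frac{c_i t^2}{2}, \qquad \forall\, i \in \{1,\ldots,n\},\; t > 0,
\end{align*}
so that dividing by $t^2$ and summing over $i$ yields the uniform bound $\frac{1}{2}\sum_{i=1}^n c_i$ on the integrand in \eqref{eq:herbst_bound_tensorized}.

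Integrating this uniform bound from $0$ to $\lambda$ and substituting into \eqref{eq:herbst_bound_tensorized}, I obtain
\begin{align*}
\pr\bigl(f(X^n) - \expectation f(X^n) \ge r\bigr) \le \exp\left(\frac{\lambda^2 \sum_{i=1}^n c_i}{2} - \lambda r\right), \qquad \forall\, \lambda > 0,
\end{align*}
and optimizing the right-hand side over $\lambda > 0$ with the choice $\lambda = r/\sum_{i=1}^n c_i$ produces the asserted Gaussian bound \eqref{eq:deviation_bound_tensorized}. Alternatively, and slightly more cleanly, one can simply invoke Corollary~\ref{cor:Herbst} with $c = \sum_{i=1}^n c_i$ after verifying via the tensorization step that $D(P^{(tf)} \| P) \le \frac{t^2}{2}\sum_{i=1}^n c_i$ for every $t > 0$.

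There is no real obstacle here: all the nontrivial work has already been packaged into Propositions~\ref{prop:erasure_entropy_bound} and \ref{prop:herbst_bound_tensorized} and into Corollary~\ref{cor:Herbst}. The only point that deserves a careful sentence is the conditional-tilting identity, which is what allows the hypothesis \eqref{eq:local_quadratic_bound}---a statement about tiltings of one-dimensional marginals---to be plugged directly into the conditional divergences arising from the tensorization of $D(P^{(tf)} \| P)$.
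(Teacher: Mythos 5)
Your proof is correct and follows essentially the same route as the paper: tensorize the divergence, use the identification $P^{(tf)}_{X_i|\bar{X}^i=\bar{x}^i} = P_{X_i}^{(tf_i(\cdot|\bar{x}^i))}$ to invoke the marginal hypothesis \eqref{eq:local_quadratic_bound}, and then apply the Herbst argument. The only cosmetic difference is that you begin from the packaged Proposition~\ref{prop:herbst_bound_tensorized} before noting the cleaner route through Corollary~\ref{cor:Herbst}, which is precisely the path the paper takes directly.
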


\begin{remark} Note the obvious similarity between the bound \eqref{eq:deviation_bound_tensorized} and McDiarmid's inequality \eqref{eq: McDiarmid's inequality}. Indeed, as we will show later on in Section~\ref{ssec:bounded_differences_revisited}, it is possible to derive McDiarmid's inequality using the entropy method.
\end{remark}

\begin{proof}
For every $t > 0$
\begin{align}
D(P^{(t f)} \| P) & \leq \sum_{i=1}^n D\Bigl(P^{(t f)}_{X_i|\bar{X}^i} \big\| P_{X_i} \, | \, P^{(t f)}_{\bar{X}^i} \Bigr)
\label{eq1} \\
& = \sum_{i=1}^n \int_{\cX^{n-1}} D\Bigl(P^{(t f)}_{X_i|\bar{X}^i = \bar{x}^i} \big\| P_{X_i} \Bigr) \, P^{(t f)}_{\bar{X}^i}(\d \bar{x}^i) \label{eq2} \\
& = \sum_{i=1}^n \int_{\cX^{n-1}} D\Bigl(P_{X_i}^{(t f_i(\cdot | \bar{x}^i))} \big\| P_{X_i} \Bigr) \,
P^{(t f)}_{\bar{X}^i}(\d \bar{x}^i) \label{eq3} \\
& \leq  \frac{t^2}{2} \cdot \sum_{i=1}^n c_i \label{eq4}
\end{align}
where \eqref{eq1} follows from the tensorization of the relative entropy, \eqref{eq2} holds since
$P$ is a product measure (so $P_{X_i} = P_{X_i | \bar{X}^i}$) and by the definition of
the conditional relative entropy, \eqref{eq3} follows from \eqref{eq:f_i} and \eqref{eq:tilting_at_i}
which implies that $P^{(t f)}_{X_i|\bar{X}^i = \bar{x}^i} = P_{X_i}^{(t f_i(\cdot | \bar{x}^i))}$,
and inequality \eqref{eq4} holds by the assumption in \eqref{eq:local_quadratic_bound}.
Finally, the inequality in \eqref{eq:deviation_bound_tensorized} follows from \eqref{eq4} and Corollary~\ref{cor:Herbst}.
\end{proof}

\subsection{Preview: logarithmic Sobolev inequalities}
\label{ssec:LSI_preview}

Ultimately, the success of the entropy method hinges on demonstrating that the bounds in
\eqref{eq:local_quadratic_bound} hold for the function $f \colon \cX^n \to \reals$ and the probability distribution
$P = P_{X^n}$ of interest. In the next two sections, we will show how to derive such bounds using the so-called
{\em logarithmic Sobolev inequalities}. Here, we give a quick preview of this technique.

Let $\mu$ be a probability measure on $\cX$, and let $\cA$ be a family of real-valued functions $g \colon \cX \to \reals$, such that for every $a \ge 0$ and $g \in \cA$, we also have $a g \in \cA$. Let $E \colon \cA \to \reals^+$ be a non-negative
functional that is homogeneous of degree~2, i.e., for every $a \ge 0$ and $g \in \cA$, we have $E(ag) = a^2 E(g)$. We are interested in the case when there exists a constant $c > 0$, such that the inequality
\begin{align}\label{eq:PM_LSI}
	D(\mu^{(g)} \| \mu) \le \frac{c E(g)}{2}
\end{align}
holds for every $g \in \cA$. Now suppose that, for each $i \in \{1,\ldots,n\}$, inequality~\eqref{eq:PM_LSI}
holds with $\mu = P_{X_i}$ and some constant $c_i > 0$. Let $f \colon \cX^n \to \reals$ be a function
such that, for every $\bar{x}^i \in \cX^{n-1}$ and $i \in \{1, \ldots, n\}$,
\begin{enumerate}
	\item $f_i(\cdot|\bar{x}^i) \in \cA$
	\item $E\big( f_i(\cdot|\bar{x}^i) \big) \le 1$
\end{enumerate}
where $f_i \colon \cX \to \reals$ is defined in \eqref{eq:f_i}.
Then, the bounds in \eqref{eq:local_quadratic_bound} hold, since from \eqref{eq:PM_LSI} and the above
properties of the functional $E$ it follows that for every $t>0$ and $\bar{x}^i \in \cX^{n-1}$
\begin{align*}
D\Bigl(P^{(t f)}_{X_i|\bar{X}^i = \bar{x}^i} \big\| P_{X_i} \Bigr)
& = D\Bigl( P_{X_i}^{(t f_i(\cdot | \bar{x}^i))} \big\| P_{X_i} \Bigr) \\
& \leq \frac{c_i \, E\bigl(t \, f_i(\cdot|\bar{x}^i) \bigr)}{2} \\
& = \frac{c_i t^2 \, E\bigl(f_i(\cdot|\bar{x}^i) \bigr)}{2} \\
& \leq \frac{c_i t^2}{2}, \quad \forall \, i \in \{1, \ldots, n\}.
\end{align*}
Consequently, the Gaussian concentration inequality in \eqref{eq:deviation_bound_tensorized} follows from
Corollary~\ref{corollary: from an inequality on the divergence to a concentration inequality}.

\section{The Gaussian logarithmic Sobolev inequality}
\label{sec:Gaussian_LSI}

Before turning to the general scheme of logarithmic Sobolev inequalities in the next section, we will
illustrate the basic ideas in the particular case when $X_1,\ldots,X_n$ are i.i.d.\ standard Gaussian
random variables. The relevant log-Sobolev inequality in this instance comes from a seminal paper of
Gross \cite{Gross}, and it connects two key information-theoretic quantities, namely the relative entropy
and the relative Fisher information. In addition, there are deep links between Gross's log-Sobolev inequality
and other fundamental information-theoretic inequalities, such as Stam's inequality and the entropy power inequality.
Some of these fundamental links are considered in this section.

For every $n \in \naturals$ and every positive semidefinite matrix $K \in \reals^{n \times n}$, we will
denote by $G^n_K$ the Gaussian distribution with zero mean and covariance matrix $K$. When $K = sI_n$
for some $s \ge 0$ (where $I_n$ denotes the $n \times n$ identity matrix), we will write $G^n_s$;
it will be written $G_s$ for $n=1$.
We will also write $G^n$ for $G^n_1$ when $n \ge 2$, and $G$ for $G^1_1$. We will denote
by $\gamma^n_K$, $\gamma^n_s$, $\gamma_s$, $\gamma^n$, and $\gamma$ the corresponding densities.

We first state Gross's inequality in its (more or less) original form:

\begin{theorem}[Log-Sobolev inequality for the Gaussian measure]\label{thm:Gross_LSI} For $Z \sim G^n$ and for every smooth\footnote{Here and elsewhere, we will use the term ``smooth'' somewhat loosely to mean ``satisfying enough regularity conditions to make sure that all relevant quantities are well-defined.'' In the present context, smooth means that both $\phi$ and $\nabla \phi$ should be square-integrable with respect to the standard Gaussian measure $G^n$.} function $\phi \colon \reals^n \to \reals$,
we have
	\begin{align}\label{eq:Gross_LSI}
		\expectation[\phi^2(Z) \ln \phi^2(Z)] - \expectation[\phi^2(Z)] \ln \expectation[ \phi^2(Z)] \le 2\, \expectation\left[ \| \nabla \phi(Z) \|^2\right],
	\end{align}
	where $\| \cdot \|$ denotes the usual Euclidean norm on $\reals^n$.
\end{theorem}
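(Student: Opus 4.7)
The plan is to prove Gross's Gaussian LSI by the Ornstein--Uhlenbeck semigroup (``heat flow'') method, which gives a clean, self-contained proof in arbitrary dimension without needing to pass through a two-point inequality and the central limit theorem. Writing $f = \phi^2$, the goal reduces to bounding the entropy functional $\Ent_{G^n}(f) \deq \expectation[f \ln f] - \expectation[f]\ln \expectation[f]$ (all expectations under $G^n$) by $2\,\expectation[\|\nabla\phi\|^2]$.

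The first step is to set up the OU semigroup
\begin{equation*}
P_t g(x) = \expectation\big[g(e^{-t} x + \sqrt{1-e^{-2t}}\, Z)\big], \quad Z \sim G^n, \; t \ge 0,
\end{equation*}
and record the four properties I will need: (i) $G^n$ is invariant under $P_t$, and $P_t f \to \expectation[f]$ as $t\to\infty$; (ii) the generator is $L = \Delta - x\cdot\nabla$ with $\tfrac{\d}{\d t}P_t g = LP_t g$; (iii) the integration-by-parts identity $\expectation[(Lg)h] = -\expectation[\nabla g \cdot \nabla h]$ under $G^n$; and (iv) the commutation relation $\nabla P_t g = e^{-t}\,P_t(\nabla g)$, obtained by differentiating the Mehler representation above with respect to $x$.

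The second step, in the spirit of the Herbst-style argument already used in Subsection~\ref{ssec:Herbst}, is to express the entropy as a time integral along the flow. Setting $\Phi(t) = \expectation[(P_t f)\ln P_t f]$, property~(i) gives $\Phi(0) - \Phi(\infty) = \Ent_{G^n}(f)$; differentiating under the expectation and applying (ii)--(iii) yields
\begin{equation*}
\Phi'(t) \; = \; \expectation\big[(LP_t f)\ln P_t f\big] \; = \; -\,\expectation\left[\frac{\|\nabla P_t f\|^2}{P_t f}\right],
\end{equation*}
so that $\Ent_{G^n}(f) = \int_0^\infty \expectation[\|\nabla P_t f\|^2/P_t f]\,\d t$. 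The third and decisive step is to bound the integrand by combining (iv) with Cauchy--Schwarz applied inside $P_t$ (viewed as an expectation against the Mehler kernel): since $\nabla f = 2\phi\nabla\phi$, componentwise Cauchy--Schwarz gives $\|P_t(\phi\nabla\phi)\|^2 \le (P_t\phi^2)(P_t\|\nabla\phi\|^2)$, whence
\begin{equation*}
\frac{\|\nabla P_t f\|^2}{P_t f} \; = \; \frac{4 e^{-2t}\,\|P_t(\phi\nabla\phi)\|^2}{P_t \phi^2} \; \le \; 4\,e^{-2t}\, P_t\|\nabla\phi\|^2 .
\end{equation*}
Integrating in $t$ and invoking invariance of $G^n$ under $P_t$ collapses the right-hand side to $4\,\expectation[\|\nabla\phi\|^2]\int_0^\infty e^{-2t}\,\d t = 2\,\expectation[\|\nabla\phi\|^2]$, which is the desired bound.

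The main obstacle is analytic rather than conceptual: justifying the interchange of derivative and expectation defining $\Phi'(t)$, together with the boundary limit $\Phi(\infty) = \expectation[f]\ln\expectation[f]$, requires $f$ and $\|\nabla\phi\|^2$ to be integrable enough under $G^n$, and one must also rule out degeneracy of $P_t f$ at $t=0$ where $\phi$ could vanish. This is standard to handle by first proving the inequality for the regularized $f_{\eps,M} = (\phi_M^2) + \eps$, where $\phi_M$ is a bounded, smooth truncation of $\phi$ so that everything in sight is bounded, smooth and strictly positive, and then passing to the limit $\eps \downarrow 0$, $M \uparrow \infty$ by monotone and dominated convergence on both sides. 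The structural mechanism driving the proof, and what pins down the sharp constant $2$, is the exponential contraction $e^{-t}$ in the commutation (iv), reflecting the unit spectral gap / Bakry--\'Emery curvature lower bound of the standard Gaussian measure.
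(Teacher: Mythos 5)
Your proposal is correct, and it takes a genuinely different route from the paper's. The paper first invokes the tensorization property (Proposition~\ref{prop:erasure_entropy_bound}) to reduce the statement to dimension one, then proves the one-dimensional case by establishing Carlen's equivalence between the Gaussian LSI and Stam's inequality $N(Y)J(Y)\ge 1$, and finally proves Stam via Verd\'u's formula expressing $D(P\|G)$ as an integral of $\lmmse - \mmse$ together with the van Trees bound $\mmse(Y,s)\ge 1/(J(Y)+s)$. That route is ``information-theoretic'' in flavor: it connects the LSI to entropy power, Fisher information, and estimation theory, and it leans on the paper's recurring theme of tensorizing the relative entropy. Your route is the classical Bakry--\'Emery semigroup argument: you write $\Ent_{G^n}(\phi^2)$ as the time integral of the relative Fisher information along the Ornstein--Uhlenbeck flow, use the commutation relation $\nabla P_t = e^{-t}P_t\nabla$, and then apply Cauchy--Schwarz inside the Mehler kernel to pull out the factor $P_t\|\nabla\phi\|^2$. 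This argument works directly in $\reals^n$ with no separate tensorization step, and it makes transparent where the sharp constant $2$ comes from (the exponential rate $e^{-t}$ of the curvature/spectral-gap contraction). It is worth noting that the OU-semigroup tools you use --- the flow PDE $\dot h_t = \cL h_t$ and the integration-by-parts identity $\expectation[g\cL h]=-\expectation[g'h']$ --- are exactly the ones the paper develops in Section~\ref{sec:Gaussian_LSI} (and in Appendix~\ref{app:OU_semigroup}) when proving the equivalence of LSI and hypercontractivity, so your proof meshes well with the surrounding material even though the paper itself does not follow this path for Theorem~\ref{thm:Gross_LSI}. The regularization $\phi\mapsto\phi_M$, $\phi_M^2\mapsto \phi_M^2+\eps$ you append to justify differentiation under the integral and the $t\to\infty$ limit is the standard and correct way to handle the degeneracy issues you flag.
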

\begin{remark} As shown by Carlen \cite{Carlen_LSI}, equality in \eqref{eq:Gross_LSI} holds if and only if $\phi$ is of the form $\phi(z) = \exp{\langle a,z \rangle}$ for some $a \in \reals^n$, where $\langle \cdot, \cdot \rangle$ denotes the standard Euclidean inner product.\end{remark}
\begin{remark} There is no loss of generality in assuming that $\expectation[\phi^2(Z)] = 1$. Then \eqref{eq:Gross_LSI} can be rewritten as
	\begin{align}\label{eq:Gross_LSI_a}
			\expectation[\phi^2(Z) \ln \phi^2(Z)] \le 2\, \expectation\left[ \| \nabla \phi(Z) \|^2\right] \text{ if } \expectation[\phi^2(Z)] = 1, \; Z \sim G^n.
		\end{align}
Moreover, a simple rescaling argument shows that, for $Z \sim G^n_s$ and an arbitrary smooth function $\phi$ with $\expectation[\phi^2(Z)]=1$,
	\begin{align}\label{eq:Gross_LSI_rescaled}
		\expectation[\phi^2(Z)\ln \phi^2(Z)] \le 2s\, \expectation\left[ \| \nabla \phi(Z) \|^2\right].
	\end{align}
\end{remark}

We give an information-theoretic proof of the Gaussian LSI  (Theorem~\ref{thm:Gross_LSI}) later in this section; we refer the reader to \cite{Adams_Clarke} as an example of a typical proof using techniques from functional analysis.

\sloppypar From an information-theoretic point of view, the Gaussian LSI \eqref{eq:Gross_LSI} relates two measures of (dis)similarity between probability measures --- the relative entropy (or divergence) and the {\em relative
Fisher information} (or {\em Fisher information distance}). The latter is defined as follows. Let $P_1$ and
$P_2$ be two Borel probability measures on $\reals^n$ with differentiable densities $p_1$ and $p_2$, and suppose that the Radon--Nikodym derivative $\d P_1/\d P_2 \equiv p_1/p_2$ is differentiable $P_2$-a.e. Then
the {\em relative Fisher information} (or {\em Fisher information distance}) between $P_1$ and $P_2$ is
defined as (see \cite[Eq.~(6.4.12)]{Blower})
\begin{align}\label{eq:Fisher_dist}
	I(P_1 \| P_2) \deq \int_{\reals^n} \left\| \nabla \ln \frac{p_1(z)}{p_2(z)}\right\|^2 p_1(z) \d z = \expectation_{P_1}\left[\left\| \nabla \ln \frac{ \d P_1}{\d P_2} \right\|^2 \right],
	\end{align}
whenever the above integral converges. Under suitable regularity conditions, $I(P_1 \| P_2)$ admits the
equivalent form (see \cite[Eq.~(1.108)]{Johnson_IT_CLT})
\begin{align}
	I(P_1 \| P_2) &= 4 \int_{\reals^n} p_2(z) \left\| \nabla \sqrt{\frac{p_1(z)}{p_2(z)}}\right\|^2 \d z \nonumber\\
	&= 4\, \expectation_{P_2}\left[ \left\| \nabla \sqrt{\frac{\d P_1}{\d P_2}} \right\|^2\right].\label{eq:Fisher_dist_a}
\end{align}
\begin{remark} One condition under which \eqref{eq:Fisher_dist_a} holds is as follows. Let $\xi \colon \reals^n \to \reals^n$ be the {\em distributional} (or {\em weak}) {\em gradient} of $\sqrt{\d P_1/\d P_2} = \sqrt{p_1/p_2}$, so that the equality
	$$
	\int_{-\infty}^{\infty} \sqrt{\frac{p_1(z)}{p_2(z)}} \partial_i \psi(z) \d z =
    - \int_{-\infty}^{\infty} \xi_i(z) \psi(z) \d z
	$$
	holds for all $i = 1,\ldots,n$ and all test functions $\psi \in C^\infty_c(\reals^n)$ \cite[Sec.~6.6]{Lieb_Loss_Analysis}. (Here, $\partial_i \psi$ denotes the $i$th coordinate of $\nabla \psi$.) Then \eqref{eq:Fisher_dist_a} holds, provided $\xi \in L^2(P_2)$.
\end{remark}
	
Now let us fix a smooth function $\phi \colon \reals^n \to \reals$ satisfying the normalization condition $\int_{\reals^n} \phi^2\, \d G^n = 1$; we can assume w.l.o.g.\ that $\phi \ge 0$. Let $Z$ be a standard $n$-dimensional Gaussian random variable, i.e., $P_Z = G^n$, and let $Y \in \reals^n$ be a random vector with distribution $P_Y$ satisfying
\begin{align} \label{eq: a setting for phi}
	\frac{\d P_Y}{\d P_Z} = \frac{\d P_Y}{\d G^n} = \phi^2.
\end{align}
Then, on the one hand, we have
\begin{align}
	\expectation\left[ \phi^2(Z) \ln \phi^2(Z) \right] &= \expectation\left[ \left(\frac{\d P_Y}{\d P_Z}(Z)\right) \ln \left(\frac{\d P_Y}{\d P_Z}(Z)\right)\right] \nonumber\\
	& = D(P_Y \| P_Z),\label{eq: equality for the relative entropy}
\end{align}
and on the other, from \eqref{eq:Fisher_dist_a},
\begin{align} \label{eq: equality for the relative Fisher information}
\expectation \left[ \| \nabla \phi(Z) \|^2\right] = \expectation
\left[ \left\| \nabla \sqrt{\frac{\d P_Y}{\d P_Z}(Z)}\right\|^2\right]
= \frac{1}{4} I(P_Y \| P_Z).
\end{align}
Substituting \eqref{eq: equality for the relative entropy} and
\eqref{eq: equality for the relative Fisher information} into
\eqref{eq:Gross_LSI_a}, we obtain the inequality
\begin{align}\label{eq:Gross_LSI_1}
	D(P_Y \| P_Z) \le \frac{1}{2} I(P_Y \| P_Z), \qquad P_Z = G^n
\end{align}
which holds for every $P_Y$ such that $P_Y \ll G^n$ and $\nabla \sqrt{\d P_Y/\d G^n} \in L^2(G^n)$.
Conversely, for every $P_Y \ll G^n$ satisfying \eqref{eq:Gross_LSI_1}, we can derive
\eqref{eq:Gross_LSI_a} by letting $\phi = \sqrt{\d P_Y / \d G^n}$, provided
$\nabla \phi$ exists (e.g., in the distributional sense). Similarly, for every
$s > 0$, \eqref{eq:Gross_LSI_rescaled} can be written as
\begin{align} \label{eq: a rescaled inequality}
	D(P_Y \| P_Z) \le \frac{s}{2} I(P_Y \| P_Z), \qquad P_Z = G^n_s.
\end{align}
Now let us apply the Gaussian LSI \eqref{eq:Gross_LSI} to functions of the form $\phi = \exp( g/2 )$ for all suitably well-behaved $g \colon \reals^n \to \reals$. Then we obtain
	\begin{align}\label{eq:Gross_LSI_2a}
		\expectation\left[ \exp\big(g(Z)\big) \ln \frac{\exp\big(g(Z)\big)}{\expectation[\exp\big(g(Z)\big)]}\right] \le \frac{1}{2} \expectation \left[ \| \nabla g(Z) \|^2 \exp\big(g(Z)\big) \right],
	\end{align}
where $Z \sim G^n$. If we let $P = G^n$ and denote by $P^{(g)}$ the $g$-tilting of $P$,
the left-hand side of \eqref{eq:Gross_LSI_2a} is recognized as $\expectation[\exp\big(g(Z)\big)] \cdot D(P^{(g)} \| P)$.
Similarly, the right-hand side is equal to $\expectation[\exp\big(g(Z)\big)] \cdot \expectation^{(g)}_{P}[ \| \nabla g \|^2]$ with $\expectation^{(g)}_P[\cdot]$ denoting expectation with respect to $P^{(g)}$.  We therefore obtain the so-called {\em modified log-Sobolev inequality} for the standard Gaussian measure:
	\begin{align}\label{eq:Gross_LSI_2}
		D(P^{(g)} \| P) \le \frac{1}{2}\expectation^{(g)}_{P}\left[ \| \nabla g \|^2 \right], \qquad P = G^n
	\end{align}
which holds for all smooth functions $g \colon \reals^n \to \reals$ that are exponentially integrable with respect to $G^n$.
Observe that \eqref{eq:Gross_LSI_2} implies \eqref{eq:PM_LSI} with $\mu = G^n$, $c=1$, and
$E(g) = \| \nabla g \|^2_\infty$.

In the remainder of this section, we first present a proof of Theorem~\ref{thm:Gross_LSI}, and then discuss several applications of the modified log-Sobolev inequality \eqref{eq:Gross_LSI_2} to derivation of Gaussian concentration inequalities via the Herbst argument.

\subsection{An information-theoretic proof of Gross's log-Sobolev inequality}
\label{ssec:Gaussian_LSI_proof}

In accordance with our general theme, we will prove Theorem~\ref{thm:Gross_LSI} via tensorization:
We first show that the satisfiability of the theorem for $n=1$ yields that it holds for all $n \geq 2$
by scaling up to general $n$ using suitable (sub)additivity properties, and then establish the $n=1$ case.
Indeed, suppose that \eqref{eq:Gross_LSI} holds in dimension $1$. For $n \ge 2$, let $X = (X_1,\ldots,X_n)$ be an $n$-tuple of i.i.d.\ $\cN(0,1)$ variables and consider a smooth function $\phi \colon \reals^n \to \reals$, such that $\expectation_P[\phi^2(X)] = 1$, where $P = P_X = G^n$ is the product of $n$ copies of the standard Gaussian distribution $G$. If we define a probability measure $Q = Q_X$ with $\d Q_X/\d P_X = \phi^2$, then using Proposition~\ref{prop:erasure_entropy_bound} we can write
\begin{align}
	\expectation_P\left[\phi^2(X)\ln \phi^2(X)\right] &= \expectation_P \left[\frac{\d Q}{\d P} \ln \frac{\d Q}{\d P}\right]\nonumber \\
	&= D(Q \| P) \nonumber\\
	&\le \sum^n_{i=1} D\big(Q_{X_i|\bar{X}^i} \big\| P_{X_i} \big| Q_{\bar{X}^i} \big). \label{eq:Gaussian_erasure_entropy_bound}
\end{align}
Following the same steps as the ones that led to \eqref{eq:f_i}, we can define for each $i=1,\ldots,n$ and each $\bar{x}^i = (x_1,\ldots,x_{i-1},x_{i+1},\ldots,x_n) \in \reals^{n-1}$ the function $\phi_i(\cdot|\bar{x}^i) \colon \reals \to \reals$ via
$$\phi_i(y|\bar{x}^i) \deq \phi(x_1,\ldots,x_{i-1},y,x_{i+1},\ldots,x_n), \qquad \forall \,
\bar{x}^i \in \reals^{n-1}, \, y \in \reals.$$
Then
\begin{align*}
	\frac{\d Q_{X_i|\bar{X}^i=\bar{x}^i}}{\d P_{X_i}} &= \frac{\phi^2_i(\cdot|\bar{x}^i)}{\expectation_P[\phi^2_i(X_i|\bar{x}^i)]}
\end{align*}
for all $i \in \{1,\ldots,n\}, \bar{x}^i \in \reals^{n-1}$. With this, we can write
\begin{align}
	& D\big(Q_{X_i|\bar{X}^i} \big\| P_{X_i} \big| Q_{\bar{X}^i}\big) \nonumber\\
	&\qquad= \expectation_Q\left[ \ln \frac{\d Q_{X_i|\bar{X}^i}}{\d P_{X_i}}\right]  \nonumber\\
	&\qquad= \expectation_P\left[\frac{\d Q}{\d P}\ln \frac{\d Q_{X_i|\bar{X}^i}}{\d P_{X_i}}\right] \nonumber\\
	&\qquad= \expectation_P\left[\phi^2(X) \ln \frac{\phi^2_i(X_i|\bar{X}^i)}{\expectation_P[\phi^2_i(X_i|\bar{X}^i)|\bar{X}^i]}\right] \nonumber\\
	&\qquad= \expectation_P\left[\phi^2_i(X_i|\bar{X}^i)  \ln \frac{\phi^2_i(X_i|\bar{X}^i)}{\expectation_P[\phi^2_i(X_i|\bar{X}^i)|\bar{X}^i]}\right] \nonumber\\
	&\qquad= \int_{\reals^{n-1}} \expectation_P\left[ \phi^2_i(X_i|\bar{x}^i)\ln \frac{\phi^2_i(X_i|\bar{x}^i)}{\expectation_P[\phi^2_i(X_i|\bar{x}^i)]}\right] \,
P_{\bar{X}^i}(\d\bar{x}^i).
\label{eq:erasure_change_of_measure}
\end{align}
Since each $X_i \sim G$, we can apply the Gaussian LSI \eqref{eq:Gross_LSI} to the univariate functions $\phi_i(\cdot|\bar{x}^i)$
(note that we currently assume that the Gaussian LSI holds for $n=1$) to get
\begin{align}
	\expectation_P\left[ \phi^2_i(X_i|\bar{x}^i) \ln\frac{\phi^2_i(X_i|\bar{x}^i)}{\expectation_P[\phi^2_i(X_i|\bar{x}^i)]}\right] &\le 2\,\expectation_P \left[ \left(\phi'_i(X_i|\bar{x}^i)\right)^2\right]
	\label{eq:LSI_i}
\end{align}
for all $i = 1,\ldots,n$ and all $\bar{x}^i \in \reals^{n-1}$, where the prime denotes the derivative of $\phi_i(y|\bar{x}^i)$ with respect to $y$:
$$
\phi'_i(y|\bar{x}^i) = \frac{\d \phi_i(y|\bar{x}^i)}{\d y} = \frac{\partial \phi(\bar{x})}{\partial x_i}\Big|_{x_i = y}.
$$
Since $X_1,\ldots,X_n$ are i.i.d.\ under $P$, we can express \eqref{eq:LSI_i} as
\begin{align*}
	\expectation_P\left[ \phi^2_i(X_i|\bar{x}^i) \ln\frac{\phi^2_i(X_i|\bar{x}^i)}{\expectation_P[\phi^2_i(X_i|\bar{x}^i)]}\right]	 &\le 2\, \expectation_P \left[ \big( \partial_i \phi(X)\big)^2 \Big| \bar{X}^i = \bar{x}^i \right],
\end{align*}
where $\partial_i \phi$ denotes the $i$th coordinate of the gradient $\nabla \phi$. Substituting this bound into \eqref{eq:erasure_change_of_measure}, we have
\begin{align*}
	D\big(Q_{X_i|\bar{X}^i} \big\| P_{X_i} \big| Q_{\bar{X}^i}\big) \le 2\, \expectation_P\left[ \big(\partial_i \phi(X)\big)^2 \right].
\end{align*}
Using this to bound each term in the sum on the right-hand side of \eqref{eq:Gaussian_erasure_entropy_bound} together with the equality $\sum^n_{i=1}\big( \partial_i \phi(x)\big)^2 = \left\| \nabla \phi(x) \right\|^2$, we get
\begin{align}
	\expectation_P\left[ \phi^2(X)\ln \phi^2(X)\right] \le
    2\, \expectation_P\left[ \left\| \nabla \phi(X) \right\|^2\right],
\end{align}
which is precisely the Gaussian LSI \eqref{eq:Gross_LSI_a} in $\reals^n$. Thus, if the Gaussian LSI holds for $n=1$, it also holds for all $n \ge 2$.

Based on the above argument, we will now focus on proving the Gaussian LSI  for $n=1$. To that end, it will be convenient to express it in a different but equivalent form that relates the Fisher information and the entropy power of a real-valued random variable with a sufficiently regular density. In this form, the Gaussian LSI was first derived by Stam \cite{Stam}, and the equivalence between Stam's inequality and \eqref{eq:Gross_LSI} was only noted much later by Carlen \cite{Carlen_LSI}. We will first establish this equivalence following Carlen's argument, and then give a new information-theoretic proof of Stam's inequality that, unlike existing proofs \cite{Dembo_Cover_Thomas,Costa_Cover_EPI}, does not directly rely on de Bruijn's identity or on the entropy-power inequality.

First, let us start with some definitions. Let $Y$ be a real-valued random variable with density $p_Y$.
The {\em differential entropy} of $Y$ is given by
\begin{align}
	h(Y) = h(p_Y) \deq - \int_{-\infty}^{\infty} p_Y(y)\ln p_Y(y) \d y,
\end{align}
provided the integral exists. If it does, the {\em entropy power} of $Y$ is given by
\begin{align}
	N(Y) \deq \frac{\exp(2h(Y))}{2\pi e}.
\end{align}
Moreover, if the density $p_Y$ is differentiable, the {\em Fisher information} is given by
\begin{align}\label{eq:Fisher_info}
J(Y) = J(p_Y) = \int_{-\infty}^{\infty} \left( \frac{\d}{\d y}\ln p_Y(y)\right)^2 p_Y(y) \d y = \expectation [\rho^2_Y(Y)],
\end{align}
where $\rho_Y(y) \deq (\d/\d y)\ln p_Y(y) = \frac{p_Y'(y)}{p_Y(y)}$ is known as the {\em score function}.

\begin{remark} An alternative definition of the Fisher information of a real-valued random variable $Y$ is (see \cite[Definition~4.1]{Huber})
	\begin{align}\label{eq:Fisher_Huber}
		J(Y) \deq \sup \left\{ \left(\expectation \psi'(Y)\right)^2 \colon \psi \in C^1, \expectation[\psi^2(Y)] = 1 \right\}
	\end{align}
where the supremum is taken over the set of all continuously differentiable functions $\psi$ with compact support, such that $\expectation[\psi^2(Y)] = 1$.
	Note that this definition does not involve derivatives of any functions of the density of $Y$ (nor assumes that such a density even exists). It can be shown that the quantity defined in \eqref{eq:Fisher_Huber} exists and is finite if and only if $Y$ has an absolutely continuous density $p_Y$, in which case $J(Y)$ is equal to \eqref{eq:Fisher_info} (see \cite[Theorem~4.2]{Huber}).
\end{remark}
We will need the following facts:
\begin{enumerate}
	\item If $D(P_Y \| G_s) < \infty$, then
	\begin{align}\label{eq:divergence_Gs}
		D(P_Y \| G_s) =  \frac{1}{2}\ln \frac{1}{N(Y)} + \frac{1}{2} \ln s - \frac{1}{2} + \frac{1}{2s} \; \expectation Y^2.
	\end{align}
	This is proved by direct calculation: Since $D(P_Y \| G_s) < \infty$, we have $P_Y \ll G_s$ and $\d P_Y / \d G_s = p_Y/\gamma_s$.
    Consequently,
	\begin{align*}
		D(P_Y \| G_s) &= \int_{-\infty}^{\infty} p_Y(y) \ln \frac{p_Y(y)}{\gamma_s(y)} \d y \\
		&= -h(Y) + \frac{1}{2}\ln (2\pi s) + \frac{1}{2s} \; \expectation Y^2 \\
		&= -\frac{1}{2} \left(2 h(Y) - \ln (2\pi e)\right) + \frac{1}{2}\ln s - \frac{1}{2} + \frac{1}{2s} \; \expectation Y^2 \\
		&=  \frac{1}{2}\ln \frac{1}{N(Y)} + \frac{1}{2}\ln s - \frac{1}{2} + \frac{1}{2s} \; \expectation Y^2,
	\end{align*}
	which is \eqref{eq:divergence_Gs}.
	\item If $J(Y) < \infty$ and $\expectation Y^2 < \infty$, then for every $s > 0$
\begin{align}\label{eq:Fisher_dist_Gs}
	I(P_Y \| G_s) = J(Y) + \frac{1}{s^2} \; \expectation Y^2 - \frac{2}{s} < \infty,
\end{align}
where $I( \cdot \| \cdot)$ is the relative Fisher information, cf.~\eqref{eq:Fisher_dist}. This equality is verified as follows:
\begin{align}\label{eq:Fisher_dist_Gs_step1}
I(P_Y \| G_s) &= \int_{-\infty}^{\infty} p_Y(y) \left( \frac{\d}{\d y}\ln p_Y(y)
- \frac{\d}{\d y}\ln \gamma_s(y)\right)^2 \d y \nonumber \\
&= \int_{-\infty}^{\infty} p_Y(y)\left( \rho_Y(y) + \frac{y}{s}\right)^2 \d y \nonumber \\
&= \expectation [\rho^2_Y(Y)] + \frac{2}{s} \; \expectation[Y \rho_Y(Y)] + \frac{1}{s^2} \; \expectation Y^2 \nonumber\\
&= J(Y) + \frac{2}{s} \; \expectation[Y \rho_Y(Y)] + \frac{1}{s^2} \, \expectation Y^2.
\end{align}
	Since $\expectation Y^2 < \infty$, we have $\expectation |Y| < \infty$, so
    $\lim_{y \rightarrow \pm \infty} y \, p_Y(y) = 0$. Furthermore, integration by parts gives
    \begin{align*}
 \expectation[Y \rho_Y(Y)] & = \int_{-\infty}^{\infty} y \, \rho_Y(y) \, p_Y(y) \, \d y \\
    & = \int_{-\infty}^{\infty} y \, p_Y'(y) \, \d y \\
    & = \left(\lim_{y \rightarrow \infty} y \, p_Y(y) - \lim_{y \rightarrow -\infty} y \, p_Y(y) \right) - \int_{-\infty}^{\infty} p_Y(y) \, \d y \\
    & = -1
    \end{align*}
(see \cite[Lemma~A1]{Johnson_Barron_CLT} for another proof). Substituting this into \eqref{eq:Fisher_dist_Gs_step1}, we get \eqref{eq:Fisher_dist_Gs}.
\end{enumerate}
We are now in a position to prove the following result of Carlen \cite{Carlen_LSI}:

\begin{proposition} The following statements are equivalent to hold for the class of real-valued random variables $Y$
with a smooth density $p_Y$, such that $J(Y) < \infty$ and $\expectation Y^2 < \infty$:
\begin{enumerate}
	\item Gaussian log-Sobolev inequality, $D(P_Y \| G) \le \frac{1}{2} \; I(P_Y \| G)$.
	\item Stam's inequality, $N(Y)J(Y) \ge 1$.
\end{enumerate}
\label{proposition: equivalence of Gaussian LSI and Stam's inequality}
\end{proposition}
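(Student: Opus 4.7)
The plan is to translate both statements, via the identities \eqref{eq:divergence_Gs} and \eqref{eq:Fisher_dist_Gs} applied at $s=1$, into statements involving only $N(Y)$ and $J(Y)$, and then to connect them using the elementary bound $\ln x \leq x - 1$ for $x > 0$ together with a scaling argument that exploits the scale-invariance of Stam's inequality.

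First, I would substitute \eqref{eq:divergence_Gs} and \eqref{eq:Fisher_dist_Gs} with $s = 1$ into the Gaussian LSI $D(P_Y \| G) \leq \tfrac{1}{2} I(P_Y \| G)$. The $\tfrac{1}{2}\, \expectation Y^2$ terms on either side cancel exactly, and the remainder simplifies to
\begin{equation*}
-\ln N(Y) + 1 \leq J(Y), \qquad (\ast)
\end{equation*}
so that the Gaussian LSI for $Y$ is equivalent to $(\ast)$. The task therefore reduces to showing equivalence of $(\ast)$ with Stam's inequality $N(Y) J(Y) \geq 1$, which in logarithmic form reads $-\ln N(Y) \leq \ln J(Y)$.

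For the implication Stam $\Rightarrow$ Gaussian LSI, I would apply the elementary bound $\ln x \leq x - 1$ with $x = J(Y) > 0$ to get $\ln J(Y) \leq J(Y) - 1$, and then chain this with Stam to obtain
\begin{equation*}
-\ln N(Y) \leq \ln J(Y) \leq J(Y) - 1,
\end{equation*}
which is exactly $(\ast)$.

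For the converse Gaussian LSI $\Rightarrow$ Stam, the key point is that $(\ast)$ on its own is strictly weaker than Stam (since $J(Y)-1 \geq \ln J(Y)$), so one must exploit the fact that the Gaussian LSI holds for every sufficiently regular random variable, including every rescaling $\lambda Y$ with $\lambda > 0$. A direct change of variables in the densities yields the scaling identities $N(\lambda Y) = \lambda^2 N(Y)$ and $J(\lambda Y) = J(Y)/\lambda^2$; under the standing hypotheses $\expectation Y^2 < \infty$ and $J(Y) < \infty$, the variable $\lambda Y$ inherits these regularity properties, so $(\ast)$ applies to $\lambda Y$ and gives
\begin{equation*}
-\ln N(Y) - 2 \ln \lambda + 1 \leq \frac{J(Y)}{\lambda^2}, \qquad \forall \, \lambda > 0.
\end{equation*}
Optimizing over $\lambda$ by the choice $\lambda^2 = J(Y)$ (which minimizes $J(Y)/\lambda^2 + 2\ln \lambda$) collapses this to $-\ln N(Y) \leq \ln J(Y)$, i.e., Stam. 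The whole argument is routine; no real obstacle stands in the way once \eqref{eq:divergence_Gs} and \eqref{eq:Fisher_dist_Gs} are available. The only points meriting care are checking the scaling identities for $N$ and $J$ from the definitions of $h$ and \eqref{eq:Fisher_info}, and verifying that $\lambda Y$ still satisfies the smoothness/integrability hypotheses needed to invoke the Gaussian LSI.
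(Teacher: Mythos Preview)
Your proof is correct and is essentially the same as the paper's, just with a different but equivalent choice of free parameter. The paper applies \eqref{eq:divergence_Gs} and \eqref{eq:Fisher_dist_Gs} at a general variance $s>0$, obtaining $\ln\frac{1}{N(Y)} \le sJ(Y) - \ln s - 1$ for all $s$, and then optimizes over $s$ using $1+\ln a = \inf_{s>0}(as - \ln s)$; for the converse it simply runs this backwards. You instead fix $s=1$ and vary the scale of $Y$, which via the substitution $s = 1/\lambda^2$ produces exactly the same one-parameter family of inequalities and the same optimization. Both arguments rely in the direction LSI $\Rightarrow$ Stam on the LSI being available for every rescaling of $Y$ (equivalently, for every reference variance), a point you correctly flag.
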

\begin{remark} Carlen's original derivation in \cite{Carlen_LSI} requires $p_Y$ to be in the Schwartz space $\cS(\reals)$ of infinitely differentiable functions, all of whose derivatives vanish sufficiently rapidly at infinity. In comparison, the regularity conditions of the above proposition are much weaker, requiring only that
$P_Y$ has a differentiable and absolutely continuous density, as well as a finite second moment.
\end{remark}

\begin{proof} We first show the implication $1) \Rightarrow 2)$. If $1)$ holds for every real-valued random variable $Y$
as in Proposition~\ref{proposition: equivalence of Gaussian LSI and Stam's inequality}, it follows that
	\begin{align}\label{eq:Gross_LSI_rescaled_1d}
		D(P_Y \| G_s) \le \frac{s}{2} \, I(P_Y \| G_s), \qquad \forall \, s > 0.
	\end{align}
Inequality~\eqref{eq:Gross_LSI_rescaled_1d} can be verified from equalities
\eqref{eq: a setting for phi}--\eqref{eq: equality for the relative Fisher information},
together with the equivalence of \eqref{eq:Gross_LSI_a} and \eqref{eq:Gross_LSI_rescaled},
which gives \eqref{eq: a rescaled inequality} (or \eqref{eq:Gross_LSI_rescaled_1d}).
Since $J(Y)$ and $\expectation Y^2$ are finite by assumption, the right-hand side of
\eqref{eq:Gross_LSI_rescaled_1d} is finite and equal to \eqref{eq:Fisher_dist_Gs}.
Therefore, $D(P_Y \| G_s)$ is also finite, and it is equal to \eqref{eq:divergence_Gs}.
Hence, we can rewrite \eqref{eq:Gross_LSI_rescaled_1d} as
	\begin{align*}
		\frac{1}{2}\ln \frac{1}{N(Y)} + \frac{1}{2}\ln s - \frac{1}{2} + \frac{1}{2s} \; \expectation Y^2 \le  \frac{s}{2}J(Y) + \frac{1}{2s} \; \expectation Y^2 - 1.
	\end{align*}
	Since $\expectation Y^2 < \infty$, we can cancel the corresponding term from both sides and, upon rearranging, obtain
	\begin{align*}
		\ln \frac{1}{N(Y)} \le  s J(Y) - \ln s - 1.
	\end{align*}
Importantly, this bound holds for {\em every} $s > 0$. Therefore, using the fact that
\begin{align*}
	1 + \ln a = \inf_{s > 0}(as -  \ln s), \qquad \forall \, a > 0
\end{align*}
we obtain Stam's inequality $N(Y)J(Y) \ge 1$.

To establish the converse implication $2) \Rightarrow 1)$, we simply run the above proof backwards.
Note that it is first required to show that $D(P_Y \| G_s) < \infty$.
Since by assumption $J(Y)$ is finite and $2)$ holds, also $\frac{1}{N(Y)}$
is finite; since both $\expectation[Y^2]$ and $\frac{1}{N(Y)}$ are finite, it follows
from \eqref{eq:divergence_Gs} that $D(P_Y \| G_s)$ is finite.
\end{proof}

We now turn to the proof of Stam's inequality. Without loss of generality, we may assume that $\expectation Y = 0$ and $\expectation Y^2 = 1$. Our proof will exploit the formula, due to Verd\'u \cite{Verdu_mismatched_estimation}, that expresses the divergence between two probability distributions in terms of an integral of the excess mean squared error (MSE) in a certain estimation problem with additive Gaussian noise. Specifically, consider the problem of estimating a real-valued random variable $Y$ on the basis of a noisy observation $\sqrt{s} Y + Z$, where $s > 0$ is the signal-to-noise ratio (SNR) and the additive standard Gaussian noise $Z \sim G$ is independent of $Y$. If $Y$ has distribution $P$, the minimum MSE (MMSE) at SNR $s$ is defined as
\begin{align}\label{eq:MMSE}
	\mmse(Y,s) \deq \inf_{\varphi}\expectation[(Y - \varphi(\sqrt{s}Y + Z))^2],
\end{align}
where the infimum is taken over all measurable functions (estimators) $\varphi \colon \reals \to \reals$. It is well-known that the infimum in \eqref{eq:MMSE} is achieved by the conditional expectation $u \mapsto \expectation[Y|\sqrt{s}Y+Z = u]$, so
\begin{align*}
	\mmse(Y,s) = \expectation\left[ \left(Y - \expectation[Y|\sqrt{s}Y + Z] \right)^2\right].
\end{align*}
On the other hand, suppose we assume that $Y$ has distribution $Q$ and therefore use the
{\em mismatched estimator} $u \mapsto \expectation_Q[Y|\sqrt{s}Y + Z = u]$, where the conditional
expectation is computed assuming that $Y \sim Q$. Then, the resulting {\em mismatched} MSE is given by
\begin{align} \label{eq:mismatched MSE}
	\mse_Q(Y,s) = \expectation\left[ \left( Y - \expectation_Q[Y|\sqrt{s}Y + Z]\right)^2 \right],
\end{align}
where the outer expectation on the right-hand side is computed using the correct distribution $P$ of $Y$. Then, the following relation holds for the divergence between $P$ and $Q$ (see \cite[Theorem~1]{Verdu_mismatched_estimation}):
\begin{align}\label{eq:Verdu_mismatched_estimation}
	D(P \| Q) = \frac{1}{2}\int^\infty_0 \left[\mse_Q(Y,s) - \mmse(Y,s)\right] \d s.
\end{align}
We will apply the formula \eqref{eq:Verdu_mismatched_estimation} to $P = P_Y$ and $Q = G$, where $P_Y$ satisfies $\expectation Y = 0$ and $\expectation Y^2 = 1$. In that case it can be shown that, for every $s>0$,
\begin{align} \label{eq:optimal estimator in the Gaussian setting is linear}
	\mse_Q(Y,s) = \mse_G(Y,s) = \lmmse(Y,s),
\end{align}
where $\lmmse(Y,s)$ is the {\em linear} MMSE, i.e., the MMSE attainable by an arbitrary {\em affine} estimator $u \mapsto a u + b$, $a, b \in \reals$:
\begin{align}\label{eq:LMMSE}
	\lmmse(Y,s) = \inf_{a,b \in \reals}\expectation\left[\left(Y - a(\sqrt{s} Y + Z) - b\right)^2\right].
\end{align}
The infimum in \eqref{eq:LMMSE} is achieved by $a^* = \sqrt{s}/(1+s)$ and $b=0$, giving
\begin{align}\label{eq:Gaussian_LMMSE}
	\lmmse(Y,s) = \frac{1}{1+s}.
\end{align}
Moreover, $\mmse(Y,s)$ can be bounded from below using the so-called {\em van Trees inequality} \cite{VanTrees}
(see also Appendix~\ref{app:VanTrees}):
\begin{align}\label{eq:VanTrees}
	\mmse(Y,s) \ge \frac{1}{J(Y)+s}.
\end{align}
Then
\begin{align}
	D(P_Y \| G) &= \frac{1}{2}\int^\infty_0 \left(\lmmse(Y,s) - \mmse(Y,s)\right) \d s \nonumber \\
	&\le \frac{1}{2}\int^\infty_0 \left( \frac{1}{1+s} - \frac{1}{J(Y)+s}\right)\d s \nonumber \\
    &= \frac{1}{2} \lim_{\lambda \rightarrow \infty} \int^\lambda_0 \left( \frac{1}{1+s}
    - \frac{1}{J(Y)+s}\right)\d s \nonumber \\
    &= \frac{1}{2} \lim_{\lambda \rightarrow \infty} \ln \left(\frac{J(Y) \, (1+\lambda)}{J(Y)+\lambda} \right)
    \nonumber \\
	&=\frac{1}{2}\ln J(Y), \label{eq:Stam_from_mismatch}
\end{align}
where the second step uses \eqref{eq:Gaussian_LMMSE} and \eqref{eq:VanTrees}. On the other hand, using \eqref{eq:divergence_Gs} with $s=\expectation Y^2=1$, we get $D(P_Y \| G) = \frac{1}{2}\ln \frac{1}{N(Y)}$. Combining this equality with \eqref{eq:Stam_from_mismatch}, we recover Stam's inequality $N(Y)J(Y) \ge 1$. Moreover, the van Trees bound \eqref{eq:VanTrees} is achieved with equality if and only if $Y$ is a standard Gaussian random variable.

\subsection{From Gaussian log-Sobolev inequality to Gaussian concentration inequalities}

We are now ready to apply the log-Sobolev machinery to establish Gaussian concentration for random variables of the form $U = f(X^n)$, where $X_1,\ldots,X_n$ are i.i.d.\ standard normal random variables and $f \colon \reals^n \to \reals$ is an arbitrary Lipschitz function. We start by considering the special case when $f$ is also differentiable.

\begin{proposition}\label{prop:Gauss_smooth_Lipschitz} Let $X_1,\ldots,X_n$ be i.i.d.\ $\cN(0,1)$ random variables. Then, for every differentiable function $f \colon \reals^n \to \reals$ such that $\| \nabla f(X^n) \| \le 1$ almost surely, we have
\begin{align}\label{eq:Gauss_smooth_Lipschitz}
\pr \Big( f(X^n) \ge \expectation f(X^n) + r\Big) \le \exp\left( - \frac{r^2}{2} \right), \quad \forall \, r \ge 0
\end{align}
\end{proposition}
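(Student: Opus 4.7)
The plan is to combine the modified Gaussian log-Sobolev inequality \eqref{eq:Gross_LSI_2} with the Herbst argument packaged in Corollary~\ref{cor:Herbst}. Concretely, I would first reduce to the centered case $\expectation f(X^n)=0$ (which preserves both differentiability and the gradient bound $\|\nabla f\|\le 1$), and then apply the modified LSI to the tilting function $g = tf$ for arbitrary $t>0$. Since $\nabla(tf) = t\nabla f$ and $\|\nabla f(X^n)\|\le 1$ almost surely, we get $\|\nabla g(X^n)\|^2 \le t^2$ $P$-a.s., and hence also $P^{(tf)}$-a.s. (because $P^{(tf)}\ll P$). Plugging this into \eqref{eq:Gross_LSI_2} yields
\begin{align*}
D\bigl(P^{(tf)}\,\big\|\,P\bigr)\;\le\;\tfrac{1}{2}\,\expectation^{(tf)}_P\bigl[\|\nabla(tf)\|^2\bigr]\;\le\;\tfrac{t^2}{2},
\end{align*}
which is precisely the hypothesis \eqref{eq:quadratic_divergence_bound} of Corollary~\ref{cor:Herbst} with $c=1$. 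Applying that corollary gives the one-sided Gaussian tail bound \eqref{eq:Gauss_smooth_Lipschitz} directly.

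The only non-routine point in this plan is ensuring the framework applies rigorously, i.e., that $U=f(X^n)$ is exponentially integrable so that $\Lambda(\lambda)=\ln\expectation[\exp(\lambda U)]$ is finite for all $\lambda\in\reals$, and that the modified LSI \eqref{eq:Gross_LSI_2} may legitimately be invoked at $g=tf$. For the former, note that the hypothesis $\|\nabla f\|\le 1$ together with differentiability implies (by integrating the gradient along straight-line paths) that $f$ is $1$-Lipschitz with respect to the Euclidean norm on $\reals^n$; a standard Gaussian moment bound then gives $\expectation[\exp(\lambda f(X^n))]<\infty$ for all $\lambda$. For the latter, if strict smoothness/integrability is a concern one can first approximate $f$ by a sequence of bounded, smooth, compactly modified functions $f_k$ satisfying $\|\nabla f_k\|\le 1$ (e.g., by truncating and mollifying), apply the argument above to each $f_k$ to obtain \eqref{eq:Gauss_smooth_Lipschitz} with $f_k$ in place of $f$, and then pass to the limit by dominated convergence.

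The main conceptual obstacle --- controlling the tilted expectation $\expectation^{(tf)}_P[\|\nabla(tf)\|^2]$ --- is entirely trivialized here by the almost sure pointwise bound $\|\nabla f\|\le 1$: no tilting-dependent estimate is needed, because the integrand is already bounded by $t^2$ pointwise. This is precisely the mechanism discussed in Section~\ref{ssec:LSI_preview}, with $\mu = G^n$, $c=1$, and $E(g)=\|\nabla g\|_\infty^2$. In summary, the proof collapses to: modified Gaussian LSI $\Rightarrow$ $D(P^{(tf)}\|P)\le t^2/2$ $\Rightarrow$ Herbst argument $\Rightarrow$ \eqref{eq:Gauss_smooth_Lipschitz}, with the only care required being the standard verification of exponential integrability and smoothness needed to justify interchanging differentiation with expectation in the Herbst step.
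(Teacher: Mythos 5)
Your proposal is correct and follows essentially the same route as the paper's proof: apply the modified Gaussian log-Sobolev inequality \eqref{eq:Gross_LSI_2} to $g=tf$, use the almost-sure bound $\|\nabla f\|\le 1$ (carried over to $P^{(tf)}$ by mutual absolute continuity) to get $D(P^{(tf)}\|P)\le t^2/2$, and then invoke the Herbst argument via Corollary~\ref{cor:Herbst}. The extra remarks on exponential integrability and mollification are reasonable precautions but are not part of the paper's argument for this proposition (the paper defers the approximation argument to Theorem~\ref{thm:Gauss_Lipschitz}).
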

\begin{proof} Let $P = G^n$ denote the distribution of $X^n$. If $Q$ is an arbitrary probability measure such that
$P$ and $Q$ are mutually absolutely continuous (i.e., $Q \ll P$ and $P \ll Q$), then every event that has $P$-probability $1$ will also have $Q$-probability $1$ and vice versa. Since the function $f$ is differentiable,
it is everywhere finite, so $P^{(f)}$ and $P$ are mutually absolutely continuous. Hence, every event that occurs
$P$-a.s.\ also occurs $P^{(tf)}$-a.s.\ for all $t \in \reals$. In particular, $\|\nabla f(X^n) \| \le 1$
$P^{(tf)}$-a.s.\ for all $t>0$. Therefore, applying the modified log-Sobolev inequality \eqref{eq:Gross_LSI_2}
to $g = tf$ for some $t>0$, we get
\begin{align}
	D(P^{(tf)} \| P) \le \left( \frac{t^2}{2} \right) \expectation^{(tf)}_P\left[ \| \nabla f(X^n) \|^2 \right] \le \frac{t^2}{2}.
\end{align}
Now for the Herbst argument: using Corollary~\ref{cor:Herbst} with $U = f(X^n) - \expectation f(X^n)$, we get \eqref{eq:Gauss_smooth_Lipschitz}.
\end{proof}
\begin{remark} Corollary~\ref{cor:Herbst} and inequality \eqref{eq:Gross_LSI_2} with $g = tf$ imply that,
for every smooth function $f$ with $\| \nabla f(X^n) \|^2 \le L$ a.s.,
\begin{align}
\pr \Big( f(X^n) \ge \expectation f(X^n) + r \Big) \le \exp\left(- \frac{r^2}{2L} \right), \quad \forall \, r \ge 0.
\end{align}
Thus, the constant $\kappa$ in the corresponding Gaussian concentration bound \eqref{eq:Gaussian_deviation_bound} is controlled by the sensitivity of $f$ to modifications of its coordinates.
\end{remark}

Having established concentration for smooth $f$, we can now proceed to the general case:

\begin{theorem}\label{thm:Gauss_Lipschitz} Let $X^n$ be as before, and let $f \colon \reals^n \to \reals$ be a
$1$-Lipschitz function, i.e.,
$$
|f(x^n) - f(y^n)| \le \| x^n - y^n \|, \qquad \forall \, x^n, y^n \in \reals^n.
$$
Then
\begin{align}\label{eq:Gauss_Lipschitz}
\pr \Big( f(X^n) \ge \expectation f(X^n) + r\Big) \le \exp\left( - \frac{r^2}{2} \right), \quad \forall \, r \ge 0.
\end{align}
\end{theorem}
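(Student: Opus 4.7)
The plan is to extend Proposition~\ref{prop:Gauss_smooth_Lipschitz} from smooth Lipschitz functions to arbitrary $1$-Lipschitz functions via a standard mollification argument. The core idea: approximate $f$ by a family of smooth $1$-Lipschitz functions $\{f_\eps\}_{\eps > 0}$, apply the concentration bound \eqref{eq:Gauss_smooth_Lipschitz} to each $f_\eps$, and then pass to the limit $\eps \to 0$.

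First I would construct the approximants. Let $\phi_\eps$ be a smooth mollifier on $\reals^n$, for instance the density of $\cN(0,\eps^2 I_n)$, and set $f_\eps(x) \deq (f * \phi_\eps)(x) = \expectation[f(x + \eps W)]$ where $W \sim G^n$ is independent of $X^n$. Then $f_\eps$ inherits smoothness from $\phi_\eps$, and its Lipschitz constant satisfies
\begin{align*}
|f_\eps(x) - f_\eps(y)| \le \expectation|f(x+\eps W) - f(y+\eps W)| \le \|x - y\|,
\end{align*}
so $f_\eps$ is $1$-Lipschitz and hence $\|\nabla f_\eps(x)\| \le 1$ for every $x \in \reals^n$ (this follows either from Rademacher's theorem applied to $f_\eps$, or directly by differentiating under the integral sign and bounding the resulting expression using that $f$ is $1$-Lipschitz). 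Thus $f_\eps$ meets the hypotheses of Proposition~\ref{prop:Gauss_smooth_Lipschitz}, yielding
\begin{align*}
\pr\bigl(f_\eps(X^n) \ge \expectation f_\eps(X^n) + r\bigr) \le \exp\Bigl(-\frac{r^2}{2}\Bigr), \qquad \forall \, r \ge 0.
\end{align*}

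Next I would pass to the limit. Since $f$ is $1$-Lipschitz, we have the pointwise bound $|f_\eps(x) - f(x)| \le \expectation[\eps \|W\|] \le \eps \sqrt{n}$, so $f_\eps \to f$ uniformly on $\reals^n$ as $\eps \to 0$. This immediately gives $\expectation f_\eps(X^n) \to \expectation f(X^n)$ (both quantities are finite, since the $1$-Lipschitz property together with $\expectation\|X^n\| < \infty$ yields $\expectation|f(X^n)| < \infty$). Uniform convergence also implies that, for any fixed $r > 0$ and any $\delta \in (0, r)$, the event $\{f(X^n) \ge \expectation f(X^n) + r\}$ is contained in $\{f_\eps(X^n) \ge \expectation f_\eps(X^n) + r - \delta\}$ for all sufficiently small $\eps$. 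Applying the smooth bound gives
\begin{align*}
\pr\bigl(f(X^n) \ge \expectation f(X^n) + r\bigr) \le \exp\Bigl(-\frac{(r-\delta)^2}{2}\Bigr),
\end{align*}
and letting $\delta \downarrow 0$ yields \eqref{eq:Gauss_Lipschitz}.

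There is no serious obstacle here; the only mild subtlety is verifying $\|\nabla f_\eps\| \le 1$ rigorously for the mollified function (a one-line computation using dominated convergence to move the gradient inside the expectation, then the Lipschitz bound on difference quotients of $f$), together with the harmless $\delta$-perturbation needed to convert uniform convergence of $f_\eps$ into convergence of the deviation events. One could alternatively bypass the $\delta$ argument by invoking the fact that $\expectation f_\eps(X^n) \to \expectation f(X^n)$ and $f_\eps(X^n) \to f(X^n)$ in probability, combined with the Portmanteau theorem applied to the closed half-line $[r, \infty)$.
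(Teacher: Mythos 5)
Your proposal is correct and takes essentially the same route as the paper: Gaussian mollification $f_\eps(x)=\expectation[f(x+\eps W)]$ (the paper writes $f_\delta(x)=\expectation[f(x+\sqrt{\delta}Z^n)]$, i.e.\ $\eps=\sqrt{\delta}$), the Jensen/Rademacher argument for $\|\nabla f_\eps\|\le 1$ everywhere, Proposition~\ref{prop:Gauss_smooth_Lipschitz} applied to each $f_\eps$, and a passage to the limit. The only difference is the final limiting step: the paper invokes Fatou's lemma directly on the indicators $1_{\{f_\delta(X^n)\ge \expectation f_\delta(X^n)+r\}}$, which is slightly loose at the boundary event $\{f(X^n)-\expectation f(X^n)=r\}$ (the indicator of a closed half-line is not lower semicontinuous, so Fatou alone gives the bound for the strict inequality $>r$), whereas your explicit $\delta$-perturbation — applying the smooth bound at level $r-\delta$ and then letting $\delta\downarrow 0$ — handles this cleanly and is, if anything, a touch more rigorous than what appears in the paper.
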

\begin{proof}
By Rademacher's theorem (see, e.g., \cite[Section~3.1.2]{Evans_Gariepy}), the assumption that $f$ is $1$-Lipschitz
implies that it is differentiable almost everywhere and $\| \nabla f \| \le 1$ almost everywhere. This further
implies that $\| \nabla f(X^n) \| \le 1$ almost surely ($X_1, \ldots, X_n$ are i.i.d.\
standard Gaussian random variables). The result of this theorem follows from Proposition~\ref{prop:Gauss_smooth_Lipschitz}.

\end{proof}

\subsection{Hypercontractivity, Gaussian log-Sobolev inequality, and R\'enyi divergence}

We close our treatment of the Gaussian log-Sobolev inequality with a striking result, proved by Gross in his original paper \cite{Gross}, that this inequality is equivalent to a very strong contraction property (dubbed {\em hypercontractivity}) of a certain class of stochastic transformations. The original motivation behind the work of Gross \cite{Gross} came from problems in quantum field theory. However, we will take an information-theoretic point of view and relate it to data processing inequalities for a certain class of channels with additive Gaussian noise, as well as to the rate of convergence in the second law of thermodynamics for Markov processes \cite{Mackey_book}.

Consider a pair $(X,Y)$ of real-valued random variables that are related through the stochastic transformation
\begin{align}\label{eq:OU_channel}
Y = e^{-t} X + \sqrt{1-e^{-2t}} Z
\end{align}
for some $t \ge 0$, where the additive noise $Z \sim G$ is independent of $X$. For reasons that will become clear shortly, we will refer to the channel that implements the transformation \eqref{eq:OU_channel} for a given $t \ge 0$ as the {\em Ornstein--Uhlenbeck channel with noise parameter $t$} and denote it by $\OU(t)$. Similarly, we will refer to the collection of channels $\{ \OU(t) \}^\infty_{t=0}$ indexed by all $t \ge 0$ as the {\em Ornstein--Uhlenbeck channel family}. We immediately note the following properties:
\begin{enumerate}
\item $\OU(0)$ is the ideal channel, $Y = X$.
\item If $X \sim G$, then $Y \sim G$ as well, for every $t$.
\item Using the terminology of \cite[Chapter~4]{RiU_book}, the channel family $\{ \OU(t) \}^\infty_{t=0}$ is {\em ordered by degradation}: for every $t_1,t_2 \ge 0$ we have
\begin{align}\label{eq:OU_degradation}
\OU(t_1 + t_2) &= \OU(t_2) \circ \OU(t_1) = \OU(t_1) \circ \OU(t_2),
\end{align}
which is shorthand for the following statement: for every input random variable $X$, every standard Gaussian random variable $Z$ independent of $X$, and every $t_1,t_2 \ge 0$, we can always find independent standard Gaussian random variables $Z_1,Z_2$ that are also independent of $X$, such that
\begin{align}
& e^{-(t_1 + t_2)} X + \sqrt{1-e^{-2(t_1 + t_2)}} Z \nonumber\\
&\qquad\eqdist e^{-t_2} \left[e^{-t_1} X + \sqrt{1-e^{-2t_1}} Z_1 \right] + \sqrt{1-e^{-2t_2}} Z_2 \nonumber \\
&\qquad\eqdist e^{-t_1} \left[e^{-t_2} X + \sqrt{1-e^{-2t_2}} Z_1 \right] + \sqrt{1-e^{-2t_1}} Z_2
\label{eq: ordering by degradation of the OU channel}
\end{align}
where $\eqdist$ denotes equality of distributions. In other words, we can always define real-valued random variables $X,Y_1,Y_2,Z_1,Z_2$ on a common probability space $(\Omega,\cF,\pr)$, such that $Z_1,Z_2 \sim G$, $(X,Z_1,Z_2)$ are mutually independent,
\begin{align*}
Y_1 &\eqdist e^{-t_1} X + \sqrt{1-e^{-2t_1}} Z_1 \\
Y_2 &\eqdist e^{-(t_1 + t_2)} X + \sqrt{1-e^{-2(t_1+t_2)}} Z_2
\end{align*}
and $X \longrightarrow Y_1 \longrightarrow Y_2$ is a Markov chain. Even more generally, given an arbitrary real-valued random variable $X$, we can construct a continuous-time Markov process $\{Y_t \}^\infty_{t=0}$ with $Y_0 \eqdist X$ and $Y_t \eqdist e^{-t} X + \sqrt{1-e^{-2t}} \cN(0,1)$ for all $t \ge 0$. One way to do this is to let $\{Y_t\}^\infty_{t=0}$ be governed by the It\^o stochastic differential equation (SDE)
\begin{align}\label{eq:OU_SDE}
\d Y_t &= - Y_t\, \d t + \sqrt{2}\, \d B_t, \qquad t \ge 0
\end{align}
with the initial condition $Y_0 \eqdist X$, where $\{B_t\}$ denotes the standard one-dimensional Wiener process (Brownian motion). The solution of this SDE (which is known as the {\em Langevin equation} \cite[p.~75]{Oksendal_book}) is given by the so-called {\em Ornstein--Uhlenbeck process}
$$Y_t = X e^{-t} + \sqrt{2} \int_0^t e^{-(t-s)} \, \d B_s, \qquad t \ge 0$$
where, by the It\^o isometry, the variance of the (zero-mean) additive Gaussian noise
is indeed
\begin{align*}\expectation \left[ \left(\sqrt{2}\int_0^t e^{-(t-s)} \, \d B_s \right)^2 \right]
&= 2 \int_0^t e^{-2(t-s)} \d s \\
&= 1 - e^{-2t}, \quad \forall \, t \geq 0
\end{align*}
(see, e.g., \cite[p.~358]{Karatzas_Shreve_book} or \cite[p.~127]{Klebaner_book}). This explains our choice of the name ``Ornstein--Uhlenbeck channel" for the random transformation \eqref{eq:OU_channel}.
\end{enumerate}
In order to state the main result to be proved in this section, we need the following definition: the {\em R\'enyi divergence} of order $\alpha \in \reals^+ \backslash \{0,1\}$ between two probability measures, $P$ and $Q$, is defined as
\begin{align}\label{eq:Renyi_div_0}
	D_\alpha(P \| Q) \deq \frac{1}{\alpha-1} \; \ln \left( \int \d\mu\left(\frac{\d P}{\d\mu}\right)^\alpha \left(\frac{\d Q}{\d \mu}\right)^{1-\alpha} \right),
\end{align}
where $\mu$ is an arbitrary $\sigma$-finite measure that dominates both $P$ and $Q$. If $P \ll Q$, we have the equivalent form
\begin{align}\label{eq:Renyi_div}
D_\alpha (P \| Q) =
\dfrac{1}{\alpha - 1} \; \ln \left( \expectation_Q \left[ \left(\dfrac{\d P}{\d Q}\right)^\alpha \right] \right).
\end{align}
We recall several key properties of the R\'enyi divergence (see, for example, \cite{Erven_Harremoes_IT14}):
\begin{enumerate}
\item The Kullback-Leibler divergence $D(P \| Q)$ is the limit of $D_\alpha( P \| Q)$ as $\alpha$ tends to~1
from {\em below}:
$$D(P \| Q) = \lim_{\alpha \uparrow 1} D_\alpha( P \| Q).$$
In addition,
$$D(P\|Q) = \sup_{0 < \alpha < 1} D_\alpha( P \| Q) \leq \inf_{\alpha > 1} D_\alpha( P \| Q)$$
and, if $D( P \| Q) = \infty$ or there exists some $\beta > 1$ such that $D_\beta( P \| Q) < \infty$,
then also
\begin{equation}
D(P \| Q) = \lim_{\alpha \downarrow 1} D_\alpha( P \| Q).
\label{eq: limit from above for the Renyi divergence}
\end{equation}
\item If we {\em define} $D_1(P \| Q)$ as $D(P \| Q)$, then the function $\alpha \mapsto D_\alpha(P \| Q)$ is nondecreasing.
\item For all $\alpha > 0$, $D_\alpha(\cdot \| \cdot)$ satisfies the {\em data processing inequality}: if we have two possible distributions $P$ and $Q$ for a random variable $U$, then for every channel (stochastic transformation) $T$ that takes $U$ as input we have
\begin{align}\label{eq:DPI_Renyi}
D_\alpha(\tilde{P} \| \tilde{Q}) \le D_\alpha(P \| Q), \qquad \forall \, \alpha > 0
\end{align}
where $\tilde{P}$ or $\tilde{Q}$ is the distribution of the output of $T$ when the input has distribution $P$ or $Q$, respectively.
\item The R\'enyi divergence is non-negative for every order $\alpha > 0$.
\end{enumerate}
Now consider the following set-up. Let $X$ be a real-valued random variable with distribution $P$, such that $P \ll G$. For every $t \ge 0$, let $P_t$ denote the output distribution of the $\OU(t)$ channel with input $X \sim P$. Then, using the fact that the standard Gaussian distribution $G$ is left invariant by the Ornstein--Uhlenbeck channel family together with the data processing inequality \eqref{eq:DPI_Renyi}, we have
\begin{align}
D_\alpha(P_t \| G) \le D_\alpha(P \| G), \qquad \forall \, t \ge 0, \; \alpha > 0.
\end{align}
This is, of course, nothing but the second law of thermodynamics for Markov chains (see, e.g., \cite[Section~4.4]{Cover_and_Thomas} or \cite{Mackey_book}) applied to the continuous-time Markov process governed by the Langevin equation \eqref{eq:OU_SDE}. We will now show, however, that the Gaussian log-Sobolev inequality of Gross (see Theorem~\ref{thm:Gross_LSI}) implies a stronger statement: For every $\alpha > 1$ and $\eps \in (0,1)$, there exists a positive constant $\tau = \tau(\alpha,\eps)$, such that
\begin{align}\label{eq:Renyi_contraction}
D_\alpha(P_t \| G) \le \eps D_\alpha(P \| G), \qquad  \forall \, t \ge \tau.
\end{align}
In other words, as we increase the noise parameter $t$, the output distribution $P_t$ starts to resemble the invariant distribution $G$ more and more, where the measure of resemblance is given by a R\'enyi divergence of an arbitrary order.
Here is the precise result:

\begin{theorem}
	\label{thm:hypercontractive} The Gaussian log-Sobolev inequality of Theorem~\ref{thm:Gross_LSI} is equivalent to the following statement: For every $\alpha, \beta$ such that $1 < \beta < \alpha  < \infty$
\begin{align}\label{eq:Renyi_hypercontractivity}
D_\alpha ( P_t \| G ) \le \left(\frac{\alpha(\beta - 1)}{\beta(\alpha-1)} \right) \,
D_\beta( P \| G), \,\, \forall \, t \ge \frac{1}{2} \ln \left( \frac{\alpha-1}{\beta-1} \right).
\end{align}
\end{theorem}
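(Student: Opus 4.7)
My plan is to reduce the Rényi divergence statement to the classical hypercontractivity of the Ornstein--Uhlenbeck semigroup, and then to recognize the latter as equivalent to the Gaussian LSI via a differentiation argument originally due to Gross. First, let $f = \d P/\d G$ and let $K_t$ denote the Markov operator $(K_tg)(x) = \expectation[g(e^{-t}x + \sqrt{1-e^{-2t}}\,Z)]$ with $Z \sim G$. Because $G$ is invariant under $K_t$ and $K_t$ is self-adjoint on $L^2(G)$, the density of $P_t$ with respect to $G$ is $\d P_t/\d G = K_t f$. Using \eqref{eq:Renyi_div} with $Q = G$, a direct calculation gives
\begin{equation*}
D_\alpha(P_t \| G) = \frac{\alpha}{\alpha-1}\ln \|K_t f\|_{L^\alpha(G)}, \qquad D_\beta(P \| G) = \frac{\beta}{\beta-1}\ln \|f\|_{L^\beta(G)}.
\end{equation*}
Substituting into \eqref{eq:Renyi_hypercontractivity} and simplifying, the desired bound becomes equivalent to the Nelson-type hypercontractivity statement
\begin{equation*}
\|K_t f\|_{L^\alpha(G)} \le \|f\|_{L^\beta(G)}, \qquad t \ge \tfrac{1}{2}\ln\tfrac{\alpha-1}{\beta-1}.
\end{equation*}

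The second step is the core of the argument: I would fix $\beta > 1$, pick the smooth curve $q(t) = 1 + (\beta-1)e^{2t}$ (so that $q(0)=\beta$ and the endpoint $q(t_0)=\alpha$ matches exactly $t_0=\tfrac{1}{2}\ln\tfrac{\alpha-1}{\beta-1}$), and consider
\begin{equation*}
F(t) \deq \|K_t f\|_{L^{q(t)}(G)} = \bigl(\expectation_G[(K_tf)^{q(t)}]\bigr)^{1/q(t)},
\end{equation*}
assuming $f \ge 0$ (otherwise work with $|f|$, noting $|K_t f| \le K_t|f|$). Logarithmic differentiation yields, after standard manipulations using the OU generator $L g = g'' - x g'$ and the identity $\tfrac{\d}{\d t}K_tf = LK_tf$,
\begin{equation*}
\frac{q(t)^2\, F(t)^{q(t)}}{q'(t)} \cdot \frac{\d}{\d t}\ln F(t) = \Ent_G\bigl((K_tf)^{q(t)}\bigr) - \frac{q(t)^2}{q'(t)}\cdot q(t)(q(t)-1)\,\expectation_G\!\left[(K_tf)^{q(t)-2}\bigl|(K_tf)'\bigr|^2\right],
\end{equation*}
where $\Ent_G(h) \deq \expectation_G[h\ln h] - \expectation_G[h]\ln\expectation_G[h]$. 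With the choice $q'(t) = 2(q(t)-1)$ prescribed above, the coefficient on the gradient term collapses to $2 \cdot (q(t)/2)^2$, and after the substitution $h = (K_tf)^{q(t)/2}$ the inequality $F'(t)\le 0$ becomes exactly
\begin{equation*}
\Ent_G(h^2) \le 2\,\expectation_G\bigl[|h'|^2\bigr],
\end{equation*}
which is the one-dimensional Gaussian LSI \eqref{eq:Gross_LSI_a}. Thus $F(t_0) \le F(0)$ is equivalent to applying the LSI along the whole trajectory, and integrating this differential inequality gives the desired norm bound.

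For the converse direction, I would evaluate at $t=0$: since $F'(0) \le 0$ must hold for every admissible input $f$, choosing an arbitrary smooth $h$ with $\expectation_G[h^2]$ finite and setting $f = h^{2/\beta}$ recovers the Gaussian LSI from the identity above (the freedom in choosing $\beta > 1$ is irrelevant since the LSI does not depend on $\beta$). This establishes the claimed equivalence.

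The main obstacle is the rigorous justification of the differentiation under the expectation in the definition of $F(t)$, together with the commutation of $\d/\d t$ with $\expectation_G$ and the application of the OU generator. One needs $f$ to be sufficiently regular (e.g., bounded and bounded away from zero, or in a suitable Sobolev-type class with respect to $G$) so that $K_t f$ is smooth, positive, and has enough integrability for the $L^{q(t)}$-norm to be differentiable in $t$; the general case then follows by a density/truncation argument exploiting the monotone convergence of R\'enyi divergences and the fact that $D_\alpha(\cdot\|G)$ is lower semicontinuous. The algebraic identity tying $q'(t)=2(q(t)-1)$ to the factor $2$ on the right side of the LSI is the linchpin that forces exactly the threshold $t_0 = \tfrac{1}{2}\ln\tfrac{\alpha-1}{\beta-1}$.
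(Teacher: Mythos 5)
Your proposal is correct and follows essentially the same route as the paper: both pass through the $L^p$-norm representation of Rényi divergence, define the exponent curve $q(t)=1+(\beta-1)e^{2t}$, differentiate the $L^{q(t)}$-norm of the Ornstein--Uhlenbeck flow using the OU generator and integration by parts, and observe that the LSI is exactly what makes the derivative nonpositive, with the converse recovered by evaluating at $t=0$. One minor slip: in your displayed identity the gradient-term coefficient should be $\frac{q(t)^2(q(t)-1)}{q'(t)}$, not $\frac{q(t)^2}{q'(t)}\cdot q(t)(q(t)-1)$ (an extra factor of $q(t)$ has crept in), though your stated conclusion that it collapses to $2(q(t)/2)^2$ under $q'(t)=2(q(t)-1)$ corresponds to the correct coefficient, so the reasoning downstream is sound.
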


The proof of Theorem~\ref{thm:hypercontractive} is provided in Appendix~\ref{app:hypercontractive} (with a certain
equality, involved in this proof, that is proved separately in Appendix~\ref{app:OU_semigroup}).

\begin{remark} The original hypercontractivity result of Gross is stated as an inequality relating suitable norms of $g_t = \d P_t/\d G$ and $g = \d P / \d G$; we refer the reader to the original paper \cite{Gross} or to the lecture notes of Guionnet and Zegarlinski \cite{Guionnet_Zegarlinski} for the traditional treatment of hypercontractivity.
\end{remark}
\begin{remark} To see that Theorem~\ref{thm:hypercontractive} implies \eqref{eq:Renyi_contraction}, fix $\alpha > 1$ and $\eps \in (0,1)$. Let $$ \beta = \beta(\varepsilon, \alpha) \triangleq
\frac{\alpha}{\alpha - \eps(\alpha-1)}.
$$
It is easy to verify that $1 < \beta < \alpha$ and $\frac{\alpha(\beta-1)}{\beta(\alpha-1)} = \eps$. Hence, Theorem~\ref{thm:hypercontractive} implies that
$$
D_{\alpha}(P_t \| P) \leq \varepsilon D_{\beta}(P\|G), \quad \forall \, t \geq \frac{1}{2} \ln \left(1 + \frac{\alpha(1-\varepsilon)}{\varepsilon}\right) \deq \tau(\alpha, \varepsilon).
$$
Since the R\'enyi divergence $D_{\alpha}(\cdot \| \cdot)$ is non-decreasing in the parameter $\alpha$, and
$1 < \beta < \alpha$, it follows that $D_{\beta}(P||G) \leq D_{\alpha}(P||G)$. Therefore, the last inequality implies that
$$D_{\alpha}(P_t || P) \leq \varepsilon  D_{\alpha}(P||G), \quad \forall \, t \ge \tau(\alpha, \varepsilon).
$$
\end{remark}

As a consequence, we can establish a strong version of the data processing inequality for the ordinary divergence:

\begin{corollary} In the notation of Theorem~\ref{thm:hypercontractive}, we have for every $t \ge 0$
\begin{align}\label{eq:OU_data_proc}
D(P_t \| G) \le e^{-2t} D(P \| G).
\end{align}
\end{corollary}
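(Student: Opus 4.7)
The plan is to obtain \eqref{eq:OU_data_proc} as a limiting case of the R\'enyi hypercontractivity bound \eqref{eq:Renyi_hypercontractivity} by letting both orders $\alpha$ and $\beta$ decrease to $1$ simultaneously while preserving the relationship between the orders and the noise parameter $t$. First I would dispose of the trivial cases: if $t=0$ the bound is an equality, and if $D(P\|G)=\infty$ there is nothing to prove. So assume $t>0$ and $D(P\|G)<\infty$.

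Next, for each $\beta>1$ I would set $\alpha = \alpha(\beta) \deq 1+(\beta-1)e^{2t}$, which satisfies $\alpha>\beta$ and $t = \tfrac{1}{2}\ln\frac{\alpha-1}{\beta-1}$, so the hypothesis of Theorem~\ref{thm:hypercontractive} is met with equality. The prefactor in \eqref{eq:Renyi_hypercontractivity} then simplifies algebraically to
\begin{equation*}
\frac{\alpha(\beta-1)}{\beta(\alpha-1)} = \frac{\alpha}{\beta}\,e^{-2t},
\end{equation*}
so the theorem yields
\begin{equation*}
D_{\alpha(\beta)}(P_t\|G) \le \frac{\alpha(\beta)}{\beta}\,e^{-2t}\,D_\beta(P\|G).
\end{equation*}
Combining this with the monotonicity of $\alpha \mapsto D_\alpha(\cdot\|\cdot)$ gives the key inequality
\begin{equation*}
D(P_t\|G) \le D_{\alpha(\beta)}(P_t\|G) \le \frac{\alpha(\beta)}{\beta}\,e^{-2t}\,D_\beta(P\|G), \qquad \forall\,\beta>1.
\end{equation*}

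Finally, I would take the limit as $\beta\downarrow 1$. By construction $\alpha(\beta)\downarrow 1$ as well, so $\alpha(\beta)/\beta \to 1$. The delicate step is passing the limit through $D_\beta(P\|G)$ on the right: by the property \eqref{eq: limit from above for the Renyi divergence}, we have $\lim_{\beta\downarrow 1}D_\beta(P\|G)=D(P\|G)$ whenever there exists some $\beta_0>1$ with $D_{\beta_0}(P\|G)<\infty$. If such $\beta_0$ exists, taking the limit yields \eqref{eq:OU_data_proc} directly. The potential obstacle, and essentially the only delicate point of the argument, is the pathological regime where $D(P\|G)<\infty$ but $D_\beta(P\|G)=\infty$ for every $\beta>1$; in that case the right-hand side of \eqref{eq:OU_data_proc} is finite while the hypercontractivity bound becomes vacuous. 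I would handle this by an approximation argument: replace $P$ by the truncated/regularized density $P^{(M)}$ with $\d P^{(M)}/\d G \propto \min\{\d P/\d G,M\}$, for which all R\'enyi divergences of orders $\beta\in(1,\infty)$ are finite, apply the above limiting argument to $P^{(M)}$, and then let $M\to\infty$ using monotone convergence together with the data processing inequality to transfer the bound back to $P$ and $P_t$.
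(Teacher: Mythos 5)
Your proof follows essentially the same route as the paper: the paper also sets $\alpha = 1+\eps e^{2t}$, $\beta = 1+\eps$ (which is your $\alpha(\beta)$ parametrization after relabeling $\eps = \beta-1$), applies \eqref{eq:Renyi_hypercontractivity} at the threshold time, and sends $\eps\downarrow 0$ using \eqref{eq: limit from above for the Renyi divergence}. Your prefactor simplification $\frac{\alpha(\beta-1)}{\beta(\alpha-1)} = \frac{\alpha}{\beta}e^{-2t}$ agrees with the paper's $\frac{e^{-2t}+\eps}{1+\eps}$.

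You do add two genuine refinements. First, by invoking monotonicity $D(P_t\|G) = D_1(P_t\|G) \le D_{\alpha(\beta)}(P_t\|G)$, you only need to take the $\beta\downarrow 1$ limit on the right-hand side; the paper instead says "taking the limit of both sides," which also requires the limit-from-above property for $P_t$ (it holds, via data processing, but your one-sided version is cleaner). Second, and more substantively, you flag that the paper's parenthetical remark "note that $D_\alpha(P\|G)<\infty$ for $\alpha>1$" is not actually true for arbitrary $P\ll G$ with $D(P\|G)<\infty$; one can build densities $g=\d P/\d G$ whose tails grow just slowly enough that $\int g\ln g\,\d G<\infty$ while $\int g^\alpha\,\d G=\infty$ for every $\alpha>1$. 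Your truncation-and-limit argument closes this gap: for the truncated $P^{(M)}$ all R\'enyi divergences are finite, and as $M\to\infty$ the right-hand side converges to $e^{-2t}D(P\|G)$ by monotone convergence while the left-hand side is controlled by lower semicontinuity of $D(\cdot\|G)$ under total-variation convergence (what you need is lower semicontinuity rather than the data processing inequality per se, but the conclusion is the same). So your proof is not just correct — it is slightly more careful than the paper's.
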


\begin{proof} Let $\alpha = 1 + \eps e^{2t}$ and $\beta = 1 + \eps$ for some $\eps > 0$. Then
using Theorem~\ref{thm:hypercontractive}, we have
\begin{align}\label{eq:near_1_hypercontractive}
D_{1+\eps e^{2t}}( P_t \| G) \le \left(\frac{e^{-2t} + \eps}{1 + \eps}\right) D_{1+\eps}( P \| G),
\qquad \forall \, t \ge 0.
\end{align}
Taking the limit of both sides of \eqref{eq:near_1_hypercontractive} as $\eps \downarrow 0$ and using
\eqref{eq: limit from above for the Renyi divergence} (note that $D_\alpha( P \| G) < \infty$ for
$\alpha > 1$), we get \eqref{eq:OU_data_proc}.
\end{proof}

\section{Logarithmic Sobolev inequalities: the general scheme}
\label{sec:LSI}

Now that we have seen the basic idea behind log-Sobolev inequalities in the concrete case of i.i.d.\ Gaussian random variables, we are ready to take a more general viewpoint. To that end, we adopt the framework of  Bobkov and G\"otze \cite{Bobkov_Gotze_expint} and consider a probability space $(\Omega,\cF,\mu)$ together with a pair $(\cA,\Gamma)$ that satisfies the following requirements:
\begin{itemize}
 	\item {\bf (LSI-1)} $\cA$ is a family of bounded measurable functions on $\Omega$, such that if $f \in \cA$, then $a f + b \in \cA$ as well for every $a \ge 0$ and $b \in \reals$.
	\item {\bf (LSI-2)} $\Gamma$ is an operator that maps functions in $\cA$ to nonnegative measurable functions on $\Omega$.
	\item {\bf (LSI-3)} For every $f \in \cA$, $a \ge 0$, and $b \in \reals$, $\Gamma(a f + b) = a \, \Gamma f$.
\end{itemize}
Then we say that $\mu$ satisfies a {\em logarithmic Sobolev inequality} with constant $c \ge 0$, or $\LSI(c)$ for short, if
\begin{align}\label{eq:LSI}
	D(\mu^{(f)} \| \mu) \le \frac{c}{2} \, \expectation^{(f)}_\mu\left[(\Gamma f)^2\right], \qquad \forall \, f \in \cA.
\end{align}
Here, as before, $\mu^{(f)}$ denotes the $f$-tilting of $\mu$, i.e.,
\begin{align*}
	\frac{\d\mu^{(f)}}{\d\mu} = \frac{\exp(f)}{\expectation_\mu[\exp(f)]},
\end{align*}
and $\expectation^{(f)}_\mu[\cdot]$ denotes expectation with respect to $\mu^{(f)}$.

\begin{remark} We have expressed the log-Sobolev inequality using standard information-theoretic notation. Most of the mathematical literature dealing with the subject, however, uses a different notation, which we briefly summarize for the reader's benefit. Given a probability measure $\mu$ on $\Omega$ and a nonnegative function $g \colon \Omega \to \reals$, define the {\em entropy functional}
	\begin{align} \label{eq: entropy functional}
	\Ent_\mu(g) &\deq \int g \ln g\, \d \mu - \int g\, \d \mu \cdot \ln \left(\int g \, \d \mu\right) \nonumber \\
	&\equiv \expectation_\mu[g \ln g] - \expectation_\mu[g]\, \ln \expectation_\mu[g]
	\end{align}
    with the convention that $0 \ln 0 \triangleq 0$. Due to the convexity of the function $f(t) = t \ln t \; (t \ge 0)$,
    Jensen's inequality implies that $\Ent_\mu(g) \ge 0$.
	The $\LSI(c)$ condition in \eqref{eq:LSI} can be equivalently written as (cf.\ \cite[p.~2]{Bobkov_Gotze_expint})
	\begin{align}\label{eq:LSI_1}
		\Ent_\mu\big(\exp(f)\big) \le \frac{c}{2} \int (\Gamma f)^2 \exp(f)\, \d\mu.
	\end{align}
	To see the equivalence of \eqref{eq:LSI} and
    \eqref{eq:LSI_1}, note that
	\begin{align} \label{eq:ent_vs_D}
	 \Ent_\mu\big(\exp(f)\big)
    & = \int \exp(f) \ln \left(\frac{\exp(f)}{\int \exp(f) \d\mu} \right) \d\mu \nonumber \\
    & = \expectation_\mu[\exp(f)] \, \int \Bigl(\frac{\d \mu^{(f)}}{\d\mu} \Bigl)
    \ln \Bigl( \frac{\d \mu^{(f)}}{\d \mu} \Bigr) \, \d\mu \nonumber \\
    & = \expectation_\mu[\exp(f)] \cdot D(\mu^{(f)} \| \mu)
	\end{align}
    and
    \begin{align} \label{eq:another equality with Gamma f}
     \int (\Gamma f)^2 \exp(f)\, \d\mu
    & = \expectation_\mu[\exp(f)] \, \int (\Gamma f)^2 \, \d\mu^{(f)} \nonumber \\[0.1cm]
    & = \expectation_\mu[\exp(f)] \cdot \expectation^{(f)}_\mu\left[(\Gamma f)^2\right].
    \end{align}
	Substituting \eqref{eq:ent_vs_D} and \eqref{eq:another equality with Gamma f} into \eqref{eq:LSI_1}, we
    obtain \eqref{eq:LSI}. We note that the entropy functional $\Ent$ is homogeneous of degree~1:
    for every $g$ such that $\Ent_\mu(g) < \infty$ and $a > 0$, we have
	\begin{align*}
		\Ent_\mu(ag) &= a\, \expectation_\mu\left[g \ln \frac{g}{\expectation_\mu[g]}\right] = a\, \Ent_\mu(g).
	\end{align*}
\label{remark: an equivalent form of LSI}
\end{remark}

\begin{remark} Strictly speaking, \eqref{eq:LSI} should be called a modified (or exponential) logarithmic Sobolev inequality. The ordinary log-Sobolev inequality takes the form
	\begin{align}\label{eq:plain_LSI}
		\Ent_\mu(g^2) \le 2c \int (\Gamma g)^2\, \d\mu
	\end{align}
for all strictly positive $g \in \cA$. If the pair $(\cA,\Gamma)$ is such that $\psi \circ g \in \cA$ for every $g \in \cA$ and for every $C^\infty$ function $\psi \colon \reals \to \reals$, and $\Gamma$ obeys the chain rule
\begin{align}\label{eq:product_rule}
	\Gamma(\psi \circ g) = | \psi' \circ g| \; \Gamma g, \qquad \forall \, g \in \cA, \, \psi \in C^\infty
\end{align}
then \eqref{eq:LSI} and \eqref{eq:plain_LSI} are equivalent. In order to show this, recall the equivalence of
\eqref{eq:LSI} and \eqref{eq:LSI_1} (see Remark~\ref{remark: an equivalent form of LSI});
the equivalence of \eqref{eq:LSI_1} and \eqref{eq:plain_LSI} is proved in the following when the mapping
$\Gamma$ satisfies the chain rule in \eqref{eq:product_rule}. Indeed, if
\eqref{eq:plain_LSI} holds then using it with $g = \exp(f/2)$ gives
\begin{align*}
	\Ent_\mu\big(\exp(f)\big) &\le 2c \int \Bigl(\Gamma \bigl(\exp(f/2) \bigr) \Bigr)^2\, \d\mu \\
	& = \frac{c}{2} \int (\Gamma f)^2 \exp(f)\, \d\mu
\end{align*}
which is \eqref{eq:LSI_1}. The last equality in the above display follows from \eqref{eq:product_rule} which implies that
$$\Gamma \bigl(\exp(f/2) \bigr) = \frac{1}{2} \, \exp(f/2) \cdot \Gamma f.$$
Conversely, using \eqref{eq:LSI_1} with $f = 2\ln g$ gives
\begin{align*}
	\Ent_\mu\big( g^2 \big) &\le \frac{c}{2} \int \bigl( \Gamma(2 \ln g)\bigr)^2 g^2\, \d\mu \\
	&= 2c \int (\Gamma g)^2 \d\mu,
\end{align*}
which is \eqref{eq:plain_LSI}. Again, the last equality is a consequence of \eqref{eq:product_rule}, which gives $\Gamma(2 \ln g) = \frac{2 \, \Gamma g}{g}$ for
all strictly positive $g \in \cA$). In fact, the Gaussian log-Sobolev inequality we have looked at in Section~\ref{sec:Gaussian_LSI} is an instance in which this equivalence holds with $\Gamma f = ||\nabla f||$
clearly satisfying the product rule \eqref{eq:product_rule}.
\end{remark}

Recalling the discussion of Section~\ref{ssec:LSI_preview}, we now show how we can pass from a log-Sobolev
inequality to a concentration inequality via the Herbst argument. Indeed, let $\Omega = \cX^n$ and $\mu = P$,
and suppose that $P$ satisfies $\LSI(c)$ on an appropriate pair $(\cA,\Gamma)$. Suppose, furthermore, that the function of interest $f$ is an element of $\cA$ and that $\|\Gamma f\|_\infty < \infty$ (otherwise, $\LSI(c)$ is vacuously true for every $c>0$). Then $tf \in \cA$ for every $t \ge 0$, so applying \eqref{eq:LSI} to $g = tf$ we get
\begin{align}
	D \big( P^{(tf)} \big\| P \big) &\le \frac{c}{2} \; \expectation^{(tf)}_P\left[ \left( \Gamma(tf)\right)^2 \right] \nonumber \\
	&= \frac{ct^2}{2} \; \expectation^{(tf)}_P\left[ \left(\Gamma f\right)^2 \right] \nonumber \\
	& \le \frac{c \| \Gamma f \|^2_\infty t^2}{2}, \label{eq:quadratic_divergence}
\end{align}
where the second step uses the fact that $\Gamma(tf) = t\, \Gamma f$ for every $f \in \cA$ and $t \ge 0$. In other words, $P$ satisfies the bound \eqref{eq:PM_LSI} for every $g \in \cA$ with $E(g) = \| \Gamma g \|^2_\infty$. Therefore, using the bound \eqref{eq:quadratic_divergence} together with Corollary~\ref{cor:Herbst}, we arrive at
\begin{align}\label{eq:LSI_to_concentration}
	\pr \big( f(X^n) \ge \expectation f(X^n) + r \big) \le  \exp\left( -\frac{r^2}{2 c \|\Gamma f\|^2_\infty} \right), \qquad \forall \, r \ge 0.
\end{align}

\subsection{Tensorization of the logarithmic Sobolev inequality}

In the above demonstration, we have capitalized on an appropriate log-Sobolev inequality in order to derive a concentration inequality. Showing that a log-Sobolev inequality holds can be very difficult for reasons discussed in Section~\ref{ssec:tensorization}. However, when the probability measure $P$ is a product measure, i.e., the $\cX$-valued random variables $X_1,\ldots,X_n$ are independent under $P$, we can use once again the ``divide-and-conquer'' tensorization strategy: we break the original $n$-dimensional problem into $n$ one-dimensional subproblems, demonstrate that each marginal distribution $P_{X_i} \, (i = 1,\ldots,n)$ satisfies a log-Sobolev inequality for a suitable class of real-valued functions on $\cX$, and finally appeal to the tensorization bound for the relative entropy.

Let us provide the abstract scheme first. Suppose that, for each $i \in \{1,\ldots,n\}$, we have a pair $(\cA_i,\Gamma_i)$ defined on $\cX$ that satisfies the requirements (LSI-1)--(LSI-3) listed at the beginning of Section~\ref{sec:LSI}. Recall that for an arbitrary function $f \colon \cX^n \to \reals$, for $i \in \{1,\ldots,n\}$, and for an arbitrary $(n-1)$-tuple $\bar{x}^i = (x_1,\ldots,x_{i-1},x_{i+1},\ldots,x_n)$, we have defined a function $f_i(\cdot|\bar{x}^i) \colon \cX \to \reals$ via $f_i(x_i|\bar{x}^i) \deq f(x^n)$. Then, we have the following:

\begin{theorem}\label{thm:LSI_tensorization} Let $X_1,\ldots,X_n$ be $n$ independent $\cX$-valued random variables, and let $P = P_{X_1} \otimes \ldots \otimes P_{X_n}$ be their joint probability distribution. Let $\cA$ consist of all functions $f \colon \cX^n \to \reals$ such that, for every $i \in \{1, \ldots, n\}$,
	\begin{align}\label{eq:projections_in_A}
		f_i(\cdot|\bar{x}^i) \in \cA_i, \qquad \forall \, \bar{x}^i \in \cX^{n-1}.
	\end{align}
Define the operator $\Gamma$ that maps each $f \in \cA$ to
\begin{align}\label{eq:gamma_def}
	\Gamma f = \sqrt{\sum^n_{i=1} (\Gamma_i f_i)^2},
\end{align}
which is shorthand for
\begin{align}\label{eq:gamma_def_long}
\Gamma f(x^n) = \sqrt{\sum^n_{i=1} \Bigl(\Gamma_i f_i(x_i|\bar{x}^i)\Bigr)^2},
\qquad \forall \, x^n \in \cX^n.
\end{align}
Then, the following statements hold:
\begin{enumerate}
	\item\sloppypar If there exists a constant $c \ge 0$ such that, for every $i \in \{1, \ldots, n\}$, $P_{X_i}$ satisfies $\LSI(c)$ with respect to $(\cA_i,\Gamma_i)$, then $P$ satisfies $\LSI(c)$ with respect to $(\cA,\Gamma)$.
	\item\sloppypar For every $f \in \cA$ with $\expectation[f(X^n)] = 0$, and every $r \ge 0$,
	\begin{align}
		\pr\big( f(X^n) \ge r \big) \le \exp\left( - \frac{r^2}{2c \| \Gamma f \|^2_\infty}\right).
	\end{align}
\end{enumerate}
\end{theorem}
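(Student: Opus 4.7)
The plan is to establish part~1 by combining the tensorization bound on relative entropy from Proposition~\ref{prop:erasure_entropy_bound} with the one-dimensional log-Sobolev inequalities, and then deduce part~2 by a direct application of the Herbst argument as already codified in Corollary~\ref{cor:Herbst}. First I would fix $f \in \cA$ and, setting $Q = P^{(f)}$, invoke Proposition~\ref{prop:erasure_entropy_bound} (which applies since $P$ is a product measure) to write
\begin{align*}
D(P^{(f)} \| P) \le \sum_{i=1}^n D\big(P^{(f)}_{X_i|\bar{X}^i} \big\| P_{X_i} \big| P^{(f)}_{\bar{X}^i}\big).
\end{align*}
The key observation, already used in Corollary~\ref{corollary: from an inequality on the divergence to a concentration inequality}, is that for every $\bar{x}^i \in \cX^{n-1}$ the conditional tilted distribution $P^{(f)}_{X_i|\bar{X}^i=\bar{x}^i}$ coincides with the one-dimensional tilting $P_{X_i}^{(f_i(\cdot|\bar{x}^i))}$ of the marginal $P_{X_i}$ by $f_i(\cdot|\bar{x}^i)$; this follows exactly as in \eqref{eq:f_i}--\eqref{eq:tilting_at_i} from the product structure of $P$.

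Next, the hypothesis \eqref{eq:projections_in_A} guarantees that $f_i(\cdot|\bar{x}^i) \in \cA_i$, so the assumed $\LSI(c)$ for $P_{X_i}$ applied to $f_i(\cdot|\bar{x}^i)$ yields
\begin{align*}
D\big(P_{X_i}^{(f_i(\cdot|\bar{x}^i))} \big\| P_{X_i}\big) \le \frac{c}{2}\, \expectation^{(f_i(\cdot|\bar{x}^i))}_{P_{X_i}}\!\left[\big(\Gamma_i f_i(\cdot|\bar{x}^i)\big)^2\right].
\end{align*}
Summing over $i$ and averaging over $\bar{x}^i$ with respect to $P^{(f)}_{\bar{X}^i}$, the right-hand side becomes $\tfrac{c}{2}\sum_{i=1}^n \expectation^{(f)}_P\big[(\Gamma_i f_i)^2\big]$, which by the definition \eqref{eq:gamma_def_long} of $\Gamma$ is exactly $\tfrac{c}{2}\expectation^{(f)}_P[(\Gamma f)^2]$. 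Combining with the Proposition~\ref{prop:erasure_entropy_bound} bound gives the claimed $\LSI(c)$ for $P$ on $(\cA,\Gamma)$.

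For part~2, I would verify that $tf \in \cA$ for every $t \ge 0$ (this follows from (LSI-1) applied coordinate-wise since $(tf)_i(\cdot|\bar{x}^i) = t\, f_i(\cdot|\bar{x}^i) \in \cA_i$) and that $\Gamma(tf) = t\,\Gamma f$ from \eqref{eq:gamma_def} together with property (LSI-3) of each $\Gamma_i$. Applying the $\LSI(c)$ just established to $tf$ and bounding pointwise by $\|\Gamma f\|_\infty$ yields
\begin{align*}
D(P^{(tf)} \| P) \le \frac{c t^2}{2}\, \expectation^{(tf)}_P\!\left[(\Gamma f)^2\right] \le \frac{c\,\|\Gamma f\|_\infty^2\, t^2}{2}, \qquad \forall\, t \ge 0,
\end{align*}
which is precisely the hypothesis \eqref{eq:quadratic_divergence_bound} of Corollary~\ref{cor:Herbst} with constant $c\,\|\Gamma f\|_\infty^2$. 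An immediate application of that corollary to the zero-mean variable $f(X^n)$ gives the Gaussian tail bound stated in the theorem.

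The arguments are essentially assembly of previously established pieces, and there is no single hard step; the one point that needs care is the identification of $P^{(f)}_{X_i|\bar{X}^i}$ with the correct one-dimensional tilting, so that the conditional divergence appearing in the tensorization bound matches exactly the quantity controlled by the marginal log-Sobolev inequality. Once that identification is made, the averaging that converts $\sum_i \expectation^{(f_i)}_{P_{X_i}}[(\Gamma_i f_i)^2]$ (under $P^{(f)}_{\bar{X}^i}$) into $\expectation^{(f)}_P[(\Gamma f)^2]$ is immediate from the tower property, and the Herbst step for part~2 is a direct invocation of Corollary~\ref{cor:Herbst}.
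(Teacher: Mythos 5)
Your proof is correct and follows essentially the same route as the paper: tensorize the divergence via Proposition~\ref{prop:erasure_entropy_bound}, identify $P^{(f)}_{X_i|\bar{X}^i=\bar{x}^i}$ with the one-dimensional tilting $P_{X_i}^{(f_i(\cdot|\bar{x}^i))}$, apply the marginal $\LSI(c)$ and reassemble using \eqref{eq:gamma_def_long}, then invoke the Herbst argument for part 2. The only small thing the paper does that you largely leave implicit is an upfront check that the pair $(\cA,\Gamma)$ itself satisfies (LSI-1)--(LSI-3) (you verify closure under $t \ge 0$ and homogeneity of $\Gamma$, but not closure under $af+b$), a routine verification that is needed so that ``$P$ satisfies $\LSI(c)$ w.r.t.\ $(\cA,\Gamma)$'' is a well-posed statement.
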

\begin{proof} We first verify that the pair $(\cA,\Gamma)$, defined in the statement of the theorem, satisfies the requirements (LSI-1)--(LSI-3). Thus, consider some $f \in \cA$, choose some $a \ge 0$ and $b \in \reals$, and let $g = af + b$. Then, for every $i \in \{1, \ldots, n\}$ and an arbitrary $\bar{x}^i$,
	\begin{align*}
		g_i(\cdot|\bar{x}^i) &= g(x_1,\ldots,x_{i-1},\cdot,x_{i+1},\ldots,x_n) \\
		&= a f(x_1,\ldots,x_{i-1},\cdot,x_{i+1},\ldots,x_n) + b \\
		&= a f_i(\cdot|\bar{x}^i) + b \in \cA_i,
	\end{align*}
	where the last step relies on \eqref{eq:projections_in_A} and the property (LSI-1) of the pair $(\cA_i,\Gamma_i)$. Hence, $f \in \cA$ implies that $g = a f + b \in \cA$ for every $a \ge 0$ and $b \in \reals$, so (LSI-1) holds. From the definition of $\Gamma$ in \eqref{eq:gamma_def} and \eqref{eq:gamma_def_long}, it is readily seen that (LSI-2) and (LSI-3) hold as well.
	
Next, for every $f \in \cA$ and $t \ge 0$, we have
	\begin{align}
		D \big( P^{(tf)} \big \| P \big) &\le \sum^n_{i=1} D\Big( P^{(tf)}_{X_i|\bar{X}^i} \Big\| P_{X_i} \Big| P_{\bar{X}^i}^{(t f)} \Big) \nonumber \\
		&= \sum^n_{i=1} \int P^{(tf)}_{\bar{X}^i}(\d \bar{x}^i) \; D\Big(P^{(tf)}_{X_i|\bar{X}^i = \bar{x}^i} \Big\| P_{X_i}\Big) \nonumber \\
		&= \sum^n_{i=1} \int P^{(tf)}_{\bar{X}^i}(\d \bar{x}^i) \; D\Big(P^{(t f_i(\cdot|\bar{x}^i))}_{X_i} \Big\| P_{X_i} \Big) \nonumber \\
		&\le \frac{c t^2}{2} \sum^n_{i=1}\int P^{(tf)}_{\bar{X}^i}(\d \bar{x}^i) \; \expectation_{P_{X_i}}^{(tf_i(\cdot|\bar{x}^i))} \left[ \left(\Gamma_i f_i(X_i|\bar{x}^i)\right)^2 \right] \nonumber \\
		&= \frac{c t^2}{2} \sum^n_{i=1} \expectation_{P_{\bar{X}^i}}^{(tf)}\left\{  \expectation_{P_{X_i|\bar{X}^i}}^{(tf)}\Big[ \left(\Gamma_i f_i(X_i|\bar{X}^i)\right)^2 \Big]\right\} \nonumber \\
		&= \frac{c t^2}{2} \cdot \expectation_P^{(tf)}\left[ (\Gamma f)^2\right], \label{eq:LSI_tensorization}
\end{align}
where the first step uses Proposition~\ref{prop:erasure_entropy_bound} with $Q = P^{(tf)}$, the second is by the definition of conditional divergence where $P_{X_i} = P_{X_i | \bar{X}^i}$, the third is due to \eqref{eq:tilting_at_i}, the fourth uses the fact that (a) $f_i(\cdot|\bar{x}^i) \in \cA_i$ for all $\bar{x}^i$ and (b) $P_{X_i}$ satisfies $\LSI(c)$ with respect to $(\cA_i,\Gamma_i)$, and the last step uses the tower property of the conditional expectation, as well as \eqref{eq:gamma_def}. We have thus proved the first part of the theorem, i.e., that $P$ satisfies $\LSI(c)$ with respect to the pair $(\cA,\Gamma)$. The second part follows from the same argument that was used to prove \eqref{eq:LSI_to_concentration}.
\end{proof}

\subsection{Maurer's thermodynamic method}

With Theorem~\ref{thm:LSI_tensorization} at our disposal, we can now establish concentration inequalities in product spaces whenever an appropriate log-Sobolev inequality can be shown to hold for each individual variable. Thus, the bulk of the effort is in showing that this is, indeed, the case for a given probability measure $P$ and a given class of functions. Ordinarily, this is done on a case-by-case basis. However, as shown recently by A.~Maurer in an insightful paper \cite{Maurer_thermo}, it is possible to derive log-Sobolev inequalities in a wide variety of settings by means of a single unified method. This method has two basic ingredients:
\begin{enumerate}
	\item A certain ``thermodynamic'' representation of the divergence $D(\mu^{(f)}\|\mu)$, $f \in \cA$, as an integral of the {\em variances} of $f$ with respect to the tilted measures $\mu^{(tf)}$ for all $t \in (0,1)$.
	\item Derivation of upper bounds on these variances in terms of an appropriately chosen operator $\Gamma$ acting on $\cA$, where $\cA$ and $\Gamma$ are the objects satisfying the conditions (LSI-1)--(LSI-3).
\end{enumerate}
In this section, we will state two lemmas that underlie these two ingredients and then describe the overall method in broad strokes. Several detailed demonstrations of the method in action will be given in the sections that follow.

Once again, consider a probability space $(\Omega,\cF,\mu)$ and recall the definition of the $g$-tilting of $\mu$:
$$
\frac{\d \mu^{(g)}}{\d\mu} = \frac{\exp(g)}{\expectation_\mu[\exp(g)]}.
$$
The variance of an arbitrary $h \colon \Omega \to \reals$ with respect to $\mu^{(g)}$ is then given by
\begin{align*}
	\var^{(g)}_\mu[h] \deq \expectation^{(g)}_\mu[h^2] - \left(\expectation^{(g)}_\mu[h]\right)^2.
\end{align*}
The first ingredient of Maurer's method is encapsulated in the following
(see \cite[Theorem~3]{Maurer_thermo}):

\begin{lemma}
	\label{lm:entropy_via_fluctuations} Let $f \colon \Omega \to \reals$ be a function such that $\expectation_\mu[\exp(\lambda f)] < \infty$ for all $\lambda > 0$. Then, the following equality holds:
	\begin{align}\label{eq:entropy_via_fluctuations}
		D\big( \mu^{(\lambda f)} \big\| \mu \big) = \int^\lambda_0 \int^\lambda_t \var^{(s f)}_\mu[f]\, \d s\, \d t, \quad \forall \, \lambda > 0.
	\end{align}
\end{lemma}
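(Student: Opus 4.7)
The plan is to exploit the thermodynamic identity already derived in the Herbst argument, namely equation \eqref{eq: differential form of Herbst's bound}, which expresses the divergence $D(\mu^{(\lambda f)} \| \mu)$ in terms of the logarithmic moment generating function $\Lambda(\lambda) \deq \ln \expectation_\mu[\exp(\lambda f)]$. The representation \eqref{eq:entropy_via_fluctuations} should then follow by identifying $\Lambda''(\lambda)$ as a tilted variance and integrating twice.

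Concretely, I would set $\phi(\lambda) \deq D(\mu^{(\lambda f)} \| \mu)$ and first observe that, by the computation leading to \eqref{eq: differential form of Herbst's bound},
\begin{align*}
\phi(\lambda) = \lambda \Lambda'(\lambda) - \Lambda(\lambda).
\end{align*}
The exponential integrability hypothesis $\expectation_\mu[\exp(\lambda f)] < \infty$ for all $\lambda > 0$ justifies differentiating under the expectation, as was already noted in \eqref{eq:LMGF_derivatives}, giving
\begin{align*}
\Lambda'(\lambda) = \expectation^{(\lambda f)}_\mu[f], \qquad \Lambda''(\lambda) = \var_\mu^{(\lambda f)}[f].
\end{align*}
Differentiating $\phi$ once more, the two $\Lambda'(\lambda)$ terms cancel and I obtain the clean identity
\begin{align*}
\phi'(\lambda) = \lambda\, \Lambda''(\lambda) = \lambda\, \var_\mu^{(\lambda f)}[f].
\end{align*}

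Since $\phi(0) = 0$, integrating from $0$ to $\lambda$ yields $\phi(\lambda) = \int_0^\lambda s\, \var_\mu^{(sf)}[f]\, ds$. The final step is simply to rewrite the factor $s$ as $\int_0^s dt$ and swap the order of integration by Fubini (the integrand is nonnegative, so this is unconditional):
\begin{align*}
\int_0^\lambda s\, \var_\mu^{(sf)}[f]\, ds = \int_0^\lambda \int_0^s \var_\mu^{(sf)}[f]\, dt\, ds = \int_0^\lambda \int_t^\lambda \var_\mu^{(sf)}[f]\, ds\, dt,
\end{align*}
which is \eqref{eq:entropy_via_fluctuations}.

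I do not anticipate any serious obstacle: the main work is really the one-line computation of $\phi'$. The only point requiring mild care is verifying that the exponential integrability assumption does allow the two differentiations under the expectation sign, but this is standard and already implicitly invoked in Section~\ref{ssec:Herbst}. Everything else is bookkeeping via Fubini.
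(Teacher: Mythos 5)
Your proof is correct and relies on exactly the same key identities as the paper's: the Herbst relation $D(\mu^{(\lambda f)}\|\mu)=\lambda\Lambda'(\lambda)-\Lambda(\lambda)$ and the identification $\Lambda''(s)=\var^{(sf)}_\mu[f]$. The only difference is in the final bookkeeping: where the paper expands both $\lambda\Lambda'(\lambda)$ and $\Lambda(\lambda)$ as iterated integrals and subtracts, you differentiate once to obtain $\phi'(\lambda)=\lambda\Lambda''(\lambda)$, integrate, and convert the single integral into the stated double integral via Fubini --- a slightly more streamlined route to the same conclusion.
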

\begin{remark} The thermodynamic interpretation of the above result stems from the fact that the tilted measures $\mu^{(tf)}$ can be viewed as the {\em Gibbs measures} that are used in statistical mechanics as a probabilistic description of physical systems in thermal equilibrium. In this interpretation, the underlying space $\Omega$ is the state (or configuration) space of some physical system $\Sigma$, the elements $x \in \Omega$ are the states (or configurations) of $\Sigma$, $\mu$ is some base (or reference) measure, and $f$ is the energy function. We can view $\mu$ as some initial distribution of the system state. According to the postulates of statistical physics, the thermal equilibrium of $\Sigma$ at absolute temperature $\theta$ corresponds to that distribution $\nu$ on $\Omega$ that will globally minimize the {\em free energy functional}
	\begin{align} 
		\Psi_\theta(\nu) \deq  \expectation_\nu[f] + \theta D(\nu \| \mu).
	\end{align}
	Then we claim that $\Psi_\theta(\nu)$ is uniquely minimized by $\nu^* = \mu^{(-tf)}$, where $t = 1/\theta$ is the {\em inverse temperature}. To see this, consider an arbitrary $\nu$, where we may assume, without loss of generality, that $\nu \ll \mu$. Let $\psi \deq \d\nu/\d\mu$. Then
    \begin{equation*}
    \frac{\d \nu}{\d \mu^{(-tf)}} = \frac{\frac{\d \nu}{\d \mu}}{\frac{\d \mu^{(-tf)}}{\d \mu}}
    = \frac{\psi}{\frac{\exp(-tf)}{\expectation_{\mu}[\exp(-tf)]}} = \psi \, \exp(tf) \, \expectation_{\mu}[\exp(-tf)]
    \end{equation*}
    and
	\begin{align*}
		\Psi_\theta(\nu) &= \frac{1}{t} \, \expectation_\nu[t f + \ln \psi] \\[0.05cm]
        &= \frac{1}{t} \, \expectation_\nu\left[\ln \bigl(\psi \exp(tf) \bigr) \right] \\
		&= \frac{1}{t} \, \expectation_\nu \left[ \ln \frac{\d \nu}{\d \mu^{(-tf)}} - \Lambda(-t) \right] \\
		&= \frac{1}{t} \left[ D(\nu \| \mu^{(-tf)}) - \Lambda(-t)\right],
	\end{align*}
	where, as before, $\Lambda(-t) \deq \ln \bigl(\expectation_\mu[\exp(-t f)] \bigr)$.
Therefore, we have $\Psi_\theta(\nu) = \Psi_{1/t}(\nu) \ge - \Lambda(-t)/t$ with equality if and only if $\nu = \mu^{(-tf)}$.

We refer the reader to a recent monograph by Merhav \cite{Merhav_FnT2009} that
highlights some interesting relations between information theory and statistical physics. This monograph
relates thermodynamic potentials (like the thermodynamical entropy and free energy) to information measures
(like the Shannon entropy and information divergence); it also provides some rigorous mathematical tools that
were inspired by the physical point of view and were proved to be useful in dealing with information-theoretic problems.
\end{remark}
Now we give the proof of Lemma~\ref{lm:entropy_via_fluctuations}:
\begin{proof} We start by noting that (see \eqref{eq:LMGF_derivatives})
	\begin{align}\label{eq:MGF_derivatives}
		\Lambda'(t) = \expectation^{(tf)}_\mu[f] \qquad \text{and} \qquad \Lambda''(t) = \var^{(tf)}_\mu[f],
	\end{align}
and, in particular, $\Lambda'(0) = \expectation_\mu[f]$.
Moreover, from \eqref{eq: differential form of Herbst's bound}, we get
	\begin{align}
	D\big( \mu^{(\lambda f)} \big\| \mu \big) = \lambda^2 \; \frac{\d}{\d \lambda} \left(\frac{\Lambda(\lambda)}{\lambda} \right) = \lambda \Lambda'(\lambda) - \Lambda(\lambda).\label{eq:entropy_overall}
	\end{align}
Now, using \eqref{eq:MGF_derivatives}, we get
\begin{align}
	\lambda \Lambda'(\lambda) &= \int^\lambda_0 \Lambda'(\lambda)\, \d t \nonumber \\
	&= \int^\lambda_0 \left( \int^\lambda_0 \Lambda''(s)\, \d s + \Lambda'(0)\right) \d t \nonumber \\
	&= \int^\lambda_0 \left(\int^\lambda_0 \var^{(sf)}_\mu[f]\, \d s  + \expectation_\mu[f]\right) \d t \label{eq:entropy_1st_term}
\end{align}
and
\begin{align}
	\Lambda(\lambda) &= \int^\lambda_0 \Lambda'(t) \, \d t \nonumber\\
	&= \int^\lambda_0 \left( \int^t_0 \Lambda''(s) \, \d s + \Lambda'(0)\right) \d t \nonumber\\
	&= \int^\lambda_0 \left( \int^t_0 \var^{(sf)}_\mu[f]\, \d s + \expectation_\mu[f]\right) \d t. \label{eq:entropy_2nd_term}
\end{align}
Substituting \eqref{eq:entropy_1st_term} and \eqref{eq:entropy_2nd_term} into \eqref{eq:entropy_overall}, we get \eqref{eq:entropy_via_fluctuations}.  \end{proof}

Now the whole affair hinges on the second step, which involves bounding the variances $\var^{(tf)}_\mu[f]$,
for $t > 0$, from above in terms of expectations $\expectation^{(tf)}_\mu\left[(\Gamma f)^2\right]$ for an appropriately chosen $\Gamma$. The following is sufficiently general for our needs:

\begin{theorem}\label{thm:Maurer_method} Let the objects $(\cA,\Gamma)$ and $\{(\cA_i,\Gamma_i)\}^n_{i=1}$ be constructed as in the statement of Theorem~\ref{thm:LSI_tensorization}. Furthermore, suppose that for each
$i \in \{1, \ldots, n\}$, the operator $\Gamma_i$ maps each $g \in \cA_i$ to a constant (which may depend on
$g$), and there exists a constant $c > 0$ such that the bound
\begin{align}\label{eq:Maurer_var_bound}
\var^{(sg)}_i[g(X_i)|\bar{X}^i = \bar{x}^i] \le c \left( \Gamma_i g \right)^2, \qquad \forall \, \bar{x}^i \in \cX^{n-1}
\end{align}
holds for all $i \in \{1, \ldots, n\}$, $s > 0$, and $g \in \cA_i$, where $\var^{(g)}_i[\cdot|\bar{X}^i = \bar{x}^i]$ denotes the (conditional) variance with respect to $P^{(g)}_{X_i|\bar{X}^i = \bar{x}^i}$. Then, the pair $(\cA,\Gamma)$ satisfies $\LSI(c)$ with respect to $P_{X^n}$.
\end{theorem}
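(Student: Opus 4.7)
The plan is to reduce the $n$-dimensional statement to $n$ one-dimensional statements via the tensorization machinery already in place, and then derive each one-dimensional log-Sobolev inequality from the variance bound \eqref{eq:Maurer_var_bound} using the thermodynamic identity of Lemma~\ref{lm:entropy_via_fluctuations}. Concretely, by Theorem~\ref{thm:LSI_tensorization}, it suffices to show that for every $i \in \{1,\ldots,n\}$ the marginal $P_{X_i}$ satisfies $\LSI(c)$ with respect to the pair $(\cA_i,\Gamma_i)$. So fix $i$ and $g \in \cA_i$; the target inequality is
\begin{equation*}
D\bigl(P_{X_i}^{(g)}\,\bigl\|\,P_{X_i}\bigr) \le \frac{c}{2}\,\expectation_{i}^{(g)}\!\left[(\Gamma_i g)^2\right].
\end{equation*}

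First, I would apply Lemma~\ref{lm:entropy_via_fluctuations} with $\mu=P_{X_i}$, $f=g$, and $\lambda=1$ to obtain the ``thermodynamic'' representation
\begin{equation*}
D\bigl(P_{X_i}^{(g)}\,\bigl\|\,P_{X_i}\bigr) = \int_{0}^{1}\!\int_{t}^{1} \var_{i}^{(sg)}[g(X_i)]\, \d s\, \d t.
\end{equation*}
Because $X_1,\ldots,X_n$ are independent under $P$, the conditional law $P_{X_i|\bar{X}^i=\bar{x}^i}$ coincides with the marginal $P_{X_i}$, and so does its $sg$-tilting; this identifies the unconditional variance above with the conditional variance appearing in the hypothesis \eqref{eq:Maurer_var_bound}. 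Hence the hypothesis gives, for every $s>0$,
\begin{equation*}
\var_{i}^{(sg)}[g(X_i)] \le c\,(\Gamma_i g)^2.
\end{equation*}

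Next, I would exploit the stated assumption that $\Gamma_i g$ is a \emph{constant} (depending on $g$ only), which allows pulling $(\Gamma_i g)^2$ out of the double integral. Computing
\begin{equation*}
\int_{0}^{1}\!\int_{t}^{1} \d s\, \d t = \int_{0}^{1}(1-t)\,\d t = \frac{1}{2},
\end{equation*}
this yields
\begin{equation*}
D\bigl(P_{X_i}^{(g)}\,\bigl\|\,P_{X_i}\bigr) \le \frac{c}{2}\,(\Gamma_i g)^2 = \frac{c}{2}\,\expectation_{i}^{(g)}\!\left[(\Gamma_i g)^2\right],
\end{equation*}
where the last equality uses once more that $\Gamma_i g$ is constant. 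This establishes $\LSI(c)$ for $P_{X_i}$ with respect to $(\cA_i,\Gamma_i)$, and Theorem~\ref{thm:LSI_tensorization}(1) then propagates it to $\LSI(c)$ for $P_{X^n}$ with respect to $(\cA,\Gamma)$, as desired.

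I do not expect a substantive obstacle: the only delicate point is the bookkeeping that justifies replacing the conditional variance in \eqref{eq:Maurer_var_bound} by the unconditional variance of $g(X_i)$ under a tilt of $P_{X_i}$, which is immediate from the product structure of $P$. The constancy of $\Gamma_i g$ is what makes the argument clean, since it obviates any exchange-of-integration subtleties and lets the variance bound come out of the $(s,t)$-integral as a single factor.
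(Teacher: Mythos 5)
Your proposal is correct and follows essentially the same route as the paper's proof: apply Lemma~\ref{lm:entropy_via_fluctuations} to express the (conditional) divergence as a double integral of tilted variances, bound the integrand by $c(\Gamma_i g)^2$ using \eqref{eq:Maurer_var_bound}, integrate to get the factor $1/2$, and invoke Theorem~\ref{thm:LSI_tensorization} to tensorize. The only cosmetic difference is that you work directly with $g \in \cA_i$ to establish $\LSI(c)$ for each marginal $P_{X_i}$ before tensorizing, whereas the paper starts from $f \in \cA$ and its restrictions $f_i(\cdot|\bar{x}^i)$; the underlying argument is the same.
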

\begin{proof} Consider an arbitrary function $f \in \cA$. Then, by construction, $f_i \colon \cX_i \to \reals$ is in $\cA_i$ for each
$i \in \{1, \ldots, n\}$. We can write
	\begin{align*}
		D \Big(P^{(f)}_{X_i|\bar{X}^i = \bar{x}^i} \Big\| P_{X_i}\Big)
        &= D \Big( P^{(f_i(\cdot|\bar{x}^i))}_{X_i} \Big\| P_{X_i} \Big) \\
		&= \int^1_0 \int^1_t \var^{(s f_i(\cdot|\bar{x}^i))}_i[f_i(X_i|\bar{X}^i)|\bar{X}^i = \bar{x}^i]\,
        \d s\, \d t \\
		&\le c \left( \Gamma_i f_i \right)^2 \int^1_0 \int^1_t \d s\, \d t \\
		&= \frac{c (\Gamma_i f_i)^2}{2}
	\end{align*}
where the first step uses the fact that $P^{(f)}_{X_i|\bar{X}^i = \bar{x}^i}$ is equal to the $f_i(\cdot|\bar{x}^i)$-tilting of $P_{X_i}$, the second step uses Lemma~\ref{lm:entropy_via_fluctuations}, and the third step uses \eqref{eq:Maurer_var_bound} with $g = f_i(\cdot|\bar{x}^i)$. We have therefore established that, for each $i$, the pair $(\cA_i,\Gamma_i)$ satisfies $\LSI(c)$. Therefore, the pair $(\cA,\Gamma)$ satisfies $\LSI(c)$ by Theorem~\ref{thm:LSI_tensorization}.
\end{proof}

The following two lemmas from \cite{Maurer_thermo} will be useful for establishing bounds like \eqref{eq:Maurer_var_bound}:

\begin{lemma}\label{lm:Hoeffding_bound} Let $U$ be a random variable such that
$U \in [a,b]$ a.s.\ for some $-\infty < a \le b < +\infty$. Then
	\begin{align}\label{eq:Hoeffding_bound}
		\var[U] \le \frac{(b-a)^2}{4}.
	\end{align}
\end{lemma}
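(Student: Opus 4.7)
The plan is to exploit the well-known variational characterization of the variance, namely that $\var[U] = \min_{c \in \reals} \expectation[(U-c)^2]$, so that $\var[U] \le \expectation[(U-c)^2]$ for \emph{any} choice of constant $c$. This reduces the problem to picking a good $c$, and the natural candidate --- motivated by the fact that we want a bound that depends only on the length $b-a$ of the support interval --- is the midpoint $c = (a+b)/2$.

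First I would record the elementary identity $\var[U] = \expectation[(U-\expectation U)^2] \le \expectation[(U-c)^2]$ for every $c \in \reals$, which follows because expanding the square on the right-hand side yields $\expectation[(U-c)^2] = \var[U] + (\expectation U - c)^2 \ge \var[U]$. Then, specializing to $c = (a+b)/2$, I would observe that the hypothesis $U \in [a,b]$ a.s.\ implies the deterministic bound
\begin{align*}
\left| U - \frac{a+b}{2} \right| \le \frac{b-a}{2} \qquad \text{a.s.},
\end{align*}
so that $(U - (a+b)/2)^2 \le (b-a)^2/4$ a.s. Taking expectations of both sides gives $\expectation[(U-c)^2] \le (b-a)^2/4$, and combining this with the previous display yields the claimed inequality \eqref{eq:Hoeffding_bound}.

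There is no real obstacle here; the only subtlety worth noting is that the midpoint choice is optimal among \emph{deterministic} uniform bounds, but of course the bound is not tight for every distribution on $[a,b]$ --- equality is attained precisely by the symmetric two-point distribution putting mass $1/2$ on each endpoint, which is consistent with the fact that this is the distribution maximizing variance subject to the support constraint.
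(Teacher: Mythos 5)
Your proof is correct, and it actually supplies more detail than the paper's proof, which is only a brief appeal to an extremal principle: the paper asserts that among all distributions supported on $[a,b]$ the variance is maximized by the symmetric two-point law on the endpoints, at which point equality holds, and leaves it at that. Your argument instead proceeds directly via the variational identity $\var[U] = \min_{c \in \reals} \expectation\bigl[(U-c)^2\bigr]$, the choice $c = (a+b)/2$, and the deterministic bound $\bigl|U - (a+b)/2\bigr| \le (b-a)/2$ a.s.; this gives a short, fully elementary derivation that does not presuppose the identification of the maximizing distribution. The two routes are of course closely related --- the midpoint is precisely the mean of the extremal two-point law, and your final remark identifies that law as the equality case --- but yours has the advantage of being self-contained, while the paper's phrasing is really a pointer to a known fact rather than a proof.
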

\begin{proof} Since the support of $U$ is the interval $[a,b]$, the maximal variance of $U$
is attained when the random variable $U$ is binary and equiprobable on the endpoints of this interval. The bound in \eqref{eq:Hoeffding_bound} is achieved with equality in this case.
\end{proof}

\begin{lemma}\label{lm:var_dominance} Let $f$ be a real-valued function
such that $f - \expectation_\mu[f] \le C$ for some $C \in \reals$. Then, for every $t > 0$,
	\begin{align*}
		\var^{(tf)}_\mu[f] \le \exp(tC)\, \var_\mu[f].
	\end{align*}
\end{lemma}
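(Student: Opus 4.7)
The plan is to reduce everything to a comparison between the moments of $f$ under the tilted measure $\mu^{(tf)}$ and under $\mu$ itself, exploiting the one-sided bound $f - \expectation_\mu[f] \le C$ to control the tilting factor. Centering at $\expectation_\mu[f]$ rather than at $\expectation^{(tf)}_\mu[f]$ is the key simplification, and since the variance is the minimum over all constants of the mean squared deviation, this centering only weakens the bound by a controlled amount.

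First, I would introduce $g \deq f - \expectation_\mu[f]$, so that $\expectation_\mu[g] = 0$, $\var_\mu[f] = \expectation_\mu[g^2]$, and by hypothesis $g \le C$ almost surely. Writing out the tilted density, we get
\begin{align*}
\var^{(tf)}_\mu[f] \le \expectation^{(tf)}_\mu\big[(f - \expectation_\mu[f])^2\big] = \frac{\expectation_\mu[g^2 \exp(tf)]}{\expectation_\mu[\exp(tf)]} = \frac{\expectation_\mu[g^2 \exp(tg)]}{\expectation_\mu[\exp(tg)]},
\end{align*}
where the first inequality uses the variational characterization of the variance, and the last equality uses the fact that the factor $\exp(t \expectation_\mu[f])$ cancels between numerator and denominator.

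Next, I would bound the numerator and denominator separately. For the numerator, since $g \le C$ and $t > 0$, we have $\exp(tg) \le \exp(tC)$ pointwise, so $\expectation_\mu[g^2 \exp(tg)] \le \exp(tC) \, \expectation_\mu[g^2] = \exp(tC) \var_\mu[f]$. For the denominator, Jensen's inequality applied to the convex function $u \mapsto \exp(tu)$ gives $\expectation_\mu[\exp(tg)] \ge \exp(t\expectation_\mu[g]) = \exp(0) = 1$. Combining these two bounds yields $\var^{(tf)}_\mu[f] \le \exp(tC) \var_\mu[f]$, which is the claim.

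There is no real obstacle here; the proof is essentially a one-line computation once the correct centering is chosen. The only point worth double-checking is the first step, namely that replacing $\expectation^{(tf)}_\mu[f]$ by $\expectation_\mu[f]$ inside the squared deviation can only increase the quantity, which is a standard property of the variance as the minimizer of $c \mapsto \expectation[(X-c)^2]$.
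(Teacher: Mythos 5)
Your proof is correct and is essentially the same argument as the paper's: both center $f$ at $\expectation_\mu[f]$, use the variational characterization of variance, apply Jensen's inequality to handle the normalizing factor $\expectation_\mu[\exp(tf)]$, and invoke the one-sided bound $f - \expectation_\mu[f]\le C$. The only cosmetic difference is that you bound the numerator and denominator separately, whereas the paper combines the two $\exp$-factors before applying the one-sided bound; the underlying algebra is identical.
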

\begin{proof}
	\begin{align}
		\var^{(tf)}_\mu[f] &= \var^{(tf)}_\mu\Big\{f - \expectation_\mu\left[ f \right] \Big\}
\label{eq:var_dominance_0} \\
		&\le \expectation^{(tf)}_\mu \left[ \left(f - \expectation_\mu[f] \right)^2 \right] \label{eq:var_dominance_1}\\
		&= \expectation_\mu \left[ \frac{\exp(tf) \left(f - \expectation_\mu[f]\right)^2}{\expectation_\mu[\exp(tf)]}\right] \label{eq:var_dominance_2}\\
		&\le \expectation_\mu \left\{  \left(f - \expectation_\mu[f]\right)^2 \exp\left[t\left(f - \expectation_\mu[f]\right)\right] \right\} \label{eq:var_dominance_3} \\
		&\le \exp(t C) \, \expectation_\mu \left[ \left(f - \expectation_\mu\left[f\right]\right)^2 \right] \label{eq:var_dominance_4},
	\end{align}
	where:
	\begin{itemize}
        \item \eqref{eq:var_dominance_0} holds since $\var[f] = \var[f + c]$ for every constant $c \in \reals$;
		\item \eqref{eq:var_dominance_1} uses the bound $\var[U] \le \expectation[U^2]$;
		\item \eqref{eq:var_dominance_2} is by definition of the tilted distribution $\mu^{(tf)}$;
		\item \eqref{eq:var_dominance_3} is verified by applying Jensen's inequality to the denominator, and
		\item \eqref{eq:var_dominance_4} relies on the assumption that $f - \expectation_\mu[f] \le C$, and the monotonicity of the exponential  function (note that $t>0$).
	\end{itemize}
	This completes the proof of Lemma~\ref{lm:var_dominance}.
\end{proof}

\subsection{Discrete logarithmic Sobolev inequalities on the Hamming cube}

We now use Maurer's method to derive log-Sobolev inequalities for functions of $n$ i.i.d.\ Bernoulli random variables. Let $\cX$ be the two-point set $\{0,1\}$, and let $e_i \in \cX^n$ denote the binary string that has $1$ in the $i$th position and zeros elsewhere. Finally, for every $f \colon \cX^n \to \reals$, define
\begin{align}\label{eq:Hamming_Gamma}
\Gamma f(x^n) \deq \sqrt{\sum^n_{i=1} \bigl( f(x^n \oplus e_i) - f(x^n) \bigr)^2}, \qquad \forall \, x^n \in \cX^n,
\end{align}
where the modulo-$2$ addition $\oplus$ is defined componentwise. In other words, $\Gamma f$ measures the sensitivity of $f$ to local bit flips. We consider the symmetric, i.e., $\Bernoulli(1/2)$, case first:

\begin{theorem}[Discrete log-Sobolev inequality for the symmetric Bernoulli measure]
\label{log-Sobolev inequality for the symmetric Bernoulli measure} Let $\cA$ be the set of all functions $f \colon \cX^n \to \reals$. Then, the pair $(\cA,\Gamma)$ with $\Gamma$ defined in \eqref{eq:Hamming_Gamma} satisfies the conditions (LSI-1)--(LSI-3). Let $X_1,\ldots,X_n$ be $n$ i.i.d.\ $\Bernoulli(1/2)$ random variables, and let $P$ denote their joint distribution. Then, $P$ satisfies $\LSI(1/4)$ with respect to $(\cA,\Gamma)$. In other words, for every $f \colon \cX^n \to \reals$,
	\begin{align}\label{eq:Hamming_LSI}
	D \big( P^{(f)} \big \| P \big) \le \frac{1}{8} \,
    \expectation_P^{(f)}\left[ \left(\Gamma f \right)^2\right].
	\end{align}
\end{theorem}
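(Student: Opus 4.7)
The plan is to apply Maurer's thermodynamic method (Theorem~\ref{thm:Maurer_method}) after setting up appropriate one-dimensional pairs $(\cA_i,\Gamma_i)$ that tensorize to the given $(\cA,\Gamma)$. For each $i \in \{1,\ldots,n\}$, let $\cA_i$ be the set of all real-valued functions on $\cX = \{0,1\}$ and define the one-dimensional operator by
\begin{align*}
\Gamma_i g \deq |g(1) - g(0)|, \qquad g \in \cA_i.
\end{align*}
Conditions (LSI-1)--(LSI-3) for each $(\cA_i,\Gamma_i)$ are immediate: $\cA_i$ is trivially closed under affine transformations with nonnegative slope, $\Gamma_i g \geq 0$, and $\Gamma_i(ag+b) = a\,\Gamma_i g$ for $a \geq 0$. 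Note also that $\Gamma_i g$ is a constant (depending only on $g$), as required by Theorem~\ref{thm:Maurer_method}.

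Next, I would verify that the tensorized pair $(\cA,\Gamma)$ produced by Theorem~\ref{thm:LSI_tensorization} coincides with the one in the statement. Since every $f \colon \{0,1\}^n \to \reals$ satisfies $f_i(\cdot|\bar{x}^i) \in \cA_i$ for every $\bar{x}^i$, the class $\cA$ from Theorem~\ref{thm:LSI_tensorization} consists of all real-valued functions on $\cX^n$. For the operator, a direct check shows that
\begin{align*}
\bigl(\Gamma_i f_i(\cdot|\bar{x}^i)\bigr)^2 = \bigl(f_i(1|\bar{x}^i) - f_i(0|\bar{x}^i)\bigr)^2 = \bigl(f(x^n \oplus e_i) - f(x^n)\bigr)^2,
\end{align*}
since flipping the $i$th bit of $x^n$ just swaps the two possible values of $f_i(\cdot|\bar{x}^i)$. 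Summing over $i$ and taking square roots recovers exactly \eqref{eq:Hamming_Gamma}.

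The key estimate to establish is the variance bound \eqref{eq:Maurer_var_bound} with $c = 1/4$. Fix any $g \in \cA_i$, any $\bar{x}^i \in \cX^{n-1}$, and any $s > 0$. Under the tilted conditional law $P^{(sg)}_{X_i|\bar{X}^i = \bar{x}^i}$, the random variable $g(X_i)$ takes only the two values $g(0)$ and $g(1)$, so it is almost surely confined to an interval of length $|g(1) - g(0)| = \Gamma_i g$. Lemma~\ref{lm:Hoeffding_bound} immediately yields
\begin{align*}
\var_i^{(sg)}[g(X_i)|\bar{X}^i = \bar{x}^i] \leq \frac{(\Gamma_i g)^2}{4},
\end{align*}
which is \eqref{eq:Maurer_var_bound} with $c = 1/4$.

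With this in hand, Theorem~\ref{thm:Maurer_method} concludes that $P$ satisfies $\LSI(1/4)$ with respect to the tensorized pair $(\cA,\Gamma)$, i.e.,
\begin{align*}
D\bigl(P^{(f)} \bigl\| P\bigr) \leq \frac{1/4}{2} \, \expectation_P^{(f)}\bigl[(\Gamma f)^2\bigr] = \frac{1}{8} \, \expectation_P^{(f)}\bigl[(\Gamma f)^2\bigr],
\end{align*}
which is precisely \eqref{eq:Hamming_LSI}. There is no real obstacle here: the proof is essentially a bookkeeping exercise once one recognizes that the correct one-dimensional operator is $\Gamma_i g = |g(1)-g(0)|$, and the mildly delicate point is only checking that (a) the tensorized $\Gamma$ matches \eqref{eq:Hamming_Gamma} and (b) the Hoeffding variance bound is blind to the tilting because the support of $g(X_i)$ remains the two-point set $\{g(0),g(1)\}$ regardless of how the measure is reweighted.
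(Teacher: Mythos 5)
Your proposal is correct and follows essentially the same route as the paper: define the one-dimensional operator $\Gamma_i g = |g(0)-g(1)|$, invoke Lemma~\ref{lm:Hoeffding_bound} to get the $c=1/4$ variance bound (valid for every tilting since the support of $g(X_i)$ remains $\{g(0),g(1)\}$), and feed this into Theorem~\ref{thm:Maurer_method}. The only point you make a bit more explicit than the paper is the check that the tensorized $\Gamma$ from \eqref{eq:gamma_def} reproduces \eqref{eq:Hamming_Gamma}, which is a harmless addition.
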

\begin{proof} Let $\cA_0$ be the set of all functions $g \colon \{0,1\} \to \reals$, and let $\Gamma_0$ be the operator that maps every $g \in \cA_0$ to
	\begin{align}\label{eq:Hamming_Gamma_1d}
		\Gamma_0 \, g \deq |g(0) - g(1)| = |g(x) - g(x \oplus 1)|, \quad \forall \, x \in \{0, 1\}.
	\end{align}
For each $i \in \{1,\ldots,n\}$, let $(\cA_i,\Gamma_i)$ be a copy of $(\cA_0,\Gamma_0)$. Then, each $\Gamma_i$ maps every function $g \in \cA_i$ to the constant $|g(0)-g(1)|$. Moreover, for every $g \in \cA_i$, the random variable $U_i = g(X_i)$ is bounded between $g(0)$ and $g(1)$, where we can assume without loss of generality that $g(0) \le g(1)$. Hence, by Lemma~\ref{lm:Hoeffding_bound}, we have
\begin{align}
	\var^{(sg)}_{i}[g(X_i)|\bar{X}^i = \bar{x}^i] \le \frac{\bigl(g(0) - g(1)\bigr)^2}{4} = \frac{(\Gamma_i g)^2}{4}
\end{align}
for all $g \in \cA_i, \, \bar{x}^i \in \cX^{n-1}$. In other words, the condition \eqref{eq:Maurer_var_bound} of Theorem~\ref{thm:Maurer_method} holds with $c=1/4$. In addition, it is easy to see that the operator $\Gamma$ constructed from $\Gamma_1,\ldots,\Gamma_n$ according to \eqref{eq:gamma_def} is precisely the one in \eqref{eq:Hamming_Gamma}. Therefore, by Theorem~\ref{thm:Maurer_method}, the pair $(\cA,\Gamma)$ satisfies $\LSI(1/4)$ with respect to $P$, which proves \eqref{eq:Hamming_LSI}. This completes
the proof of Theorem~\ref{log-Sobolev inequality for the symmetric Bernoulli measure}.
\end{proof}

Now let us consider the case when $X_1,\ldots,X_n$ are i.i.d.\ $\Bernoulli(p)$ random variables with some $p \neq 1/2$. We will use Maurer's method to give an alternative, simpler proof of the following result of Ledoux \cite[Corollary~5.9]{Ledoux_lecture_notes} (it actually suggests a sharpened version of the latter result, as it is explained in Remark~\ref{remark: improving the bound of Ledoux}):

\begin{theorem}\label{thm:LSI_Bernoulli_p} Consider an arbitrary function $f \colon \{0,1\}^n \to \reals$ with the property that
there exists some $c>0$ such that
	\begin{align}\label{eq:bit_flips}
		\max_{i \in \{1,\ldots,n\}} \left|f(x^n \oplus e_i) - f(x^n)\right| \le c
	\end{align}
	for all $x^n \in \{0,1\}^n$. Let $X_1,\ldots,X_n$ be $n$ i.i.d.\ $\Bernoulli(p)$ random variables, and let $P$ be their joint distribution. Then
	\begin{align}\label{eq:LSI_Bernoulli_p}
		D\big(P^{(f)} \big\| P \big) \le pq \left( \frac{(qc-1) \exp(qc) + 1}{(qc)^2} \right) \expectation_P^{(f)}\left[ (\Gamma f)^2 \right],
	\end{align}
	where $q \deq 1-p$.
\end{theorem}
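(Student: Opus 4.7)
The plan is to follow Maurer's thermodynamic method, but with a key refinement: instead of bounding the tilted variance $\var^{(sg)}[g(X_i)]$ uniformly in $s$, I will use an $s$-dependent bound from Lemma~\ref{lm:var_dominance} and then integrate it explicitly, since the thermodynamic representation of Lemma~\ref{lm:entropy_via_fluctuations} only requires control of the variance on $s \in [0,1]$. First, by Proposition~\ref{prop:erasure_entropy_bound} applied to $Q = P^{(f)}$ together with the identification \eqref{eq:tilting_at_i} of conditional tilted distributions,
\begin{equation*}
D(P^{(f)} \| P) \le \sum_{i=1}^n \expectation_{P^{(f)}_{\bar{X}^i}}\!\left[D\bigl(P_{X_i}^{(f_i(\cdot|\bar{X}^i))} \big\| P_{X_i}\bigr)\right],
\end{equation*}
which reduces the problem to bounding one-dimensional divergences of the form $D(P_{X_i}^{(g)} \| P_{X_i})$ for arbitrary $g \colon \{0,1\} \to \reals$ satisfying $|g(1) - g(0)| \le c$.

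For such a $g$, set $d = g(1) - g(0)$. Under $X_i \sim \Bernoulli(p)$ a direct computation gives $\var[g(X_i)] = pq\, d^2$ and $\expectation[g(X_i)] = g(0) + pd$, so the centered variable $g(X_i) - \expectation[g(X_i)]$ takes values in $\{qd, -pd\}$ and is therefore bounded above by $|d| \le c$ almost surely. Applying Lemma~\ref{lm:var_dominance} with this upper bound yields
\begin{equation*}
\var^{(sg)}[g(X_i)] \le pq\, d^2\, \exp(sc), \qquad \forall \, s > 0.
\end{equation*}
Substituting into the thermodynamic identity of Lemma~\ref{lm:entropy_via_fluctuations} with $\lambda = 1$ and switching the order of integration then produces
\begin{equation*}
D\bigl(P_{X_i}^{(g)} \big\| P_{X_i}\bigr) = \int_0^1 \!\! \int_t^1 \var^{(sg)}[g(X_i)]\,\d s\,\d t \le pq\, d^2 \int_0^1 s\exp(sc)\,\d s = pq \cdot \frac{(c-1)\exp(c) + 1}{c^2}\, d^2,
\end{equation*}
where the last equality is a straightforward integration by parts. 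Since $d^2 = \bigl(f(x^n \oplus e_i) - f(x^n)\bigr)^2 = (\Gamma_i f_i)^2$ and $\sum_{i=1}^n (\Gamma_i f_i)^2 = (\Gamma f)^2$ by \eqref{eq:Hamming_Gamma}, plugging this one-dimensional bound back into the tensorization step produces the desired inequality \eqref{eq:LSI_Bernoulli_p}.

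The main subtlety is that Theorem~\ref{thm:Maurer_method} cannot be applied as a black box: its hypothesis demands a bound on $\var^{(sg)}[g]$ that is uniform in $s > 0$, and the best such uniform bound for a Bernoulli random variable (via Lemma~\ref{lm:Hoeffding_bound}) is $d^2/4$, which is independent of $p$ and merely reproduces the symmetric LSI of Theorem~\ref{log-Sobolev inequality for the symmetric Bernoulli measure}. Capturing the factor $pq$, which is what makes the inequality informative when $p$ is close to $0$ or $1$, forces one to exploit the sharper $s$-dependent estimate of Lemma~\ref{lm:var_dominance}; this is harmless because $s\exp(sc)$ is integrable on $[0,1]$, and the closed-form constant $\frac{(c-1)\exp(c)+1}{c^2}$ is simply the value of $\int_0^1 s\exp(sc)\,\d s$.
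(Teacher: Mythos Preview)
Your proof is correct and follows essentially the same route as the paper: tensorize via Proposition~\ref{prop:erasure_entropy_bound}, then handle the one-dimensional problem with Lemma~\ref{lm:var_dominance} plugged into the thermodynamic representation of Lemma~\ref{lm:entropy_via_fluctuations}. The one minor difference is that the paper applies Lemma~\ref{lm:var_dominance} with the sharper constant $C = q|d|$ (after a WLOG reduction), integrates to obtain $pq\,d^2\cdot\frac{(q|d|-1)e^{q|d|}+1}{(q|d|)^2}$, and only then invokes the monotonicity of $u \mapsto \frac{(u-1)e^u+1}{u^2}$ together with $q|d| \le c$ to reach the stated constant; you instead take $C = c$ from the outset, which is slightly looser as an intermediate estimate but lands on the same final bound and spares you the monotonicity step.
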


\begin{proof} Following the usual route, we will establish the $n=1$ case first, and then scale up to an arbitrary $n$ by tensorization. In order to capture the correct dependence on the Bernoulli parameter $p$, we will use a more refined, distribution-dependent variance bound of Lemma~\ref{lm:var_dominance}, as opposed to the cruder bound of Lemma~\ref{lm:Hoeffding_bound} that does not depend on the underlying distribution. Maurer's paper \cite{Maurer_thermo} has other examples.
	
Let $a = |\Gamma f| = |f(0) - f(1)|$, where $\Gamma$ is defined as in \eqref{eq:Hamming_Gamma_1d}. Without loss of generality, let $f(0) = 0$ and $f(1) = a$. Then
\begin{align}\label{eq:Bern_p_moments}
	\expectation[f] &= pa \qquad \text{and} \qquad \var[f] = pqa^2.
\end{align}
Using \eqref{eq:Bern_p_moments} and Lemma~\ref{lm:var_dominance}, since $f - \expectation[f] \le a - pa = qa$, it follows that for every $t > 0$
\begin{align*}
	\var_P^{(tf)}[f] \le pqa^2 \exp(tqa).
\end{align*}
Therefore, by Lemma~\ref{lm:entropy_via_fluctuations} we have
\begin{align*}
	D \big( P^{(f)} \big\| P \big) &\le pqa^2 \int^1_0 \int^1_t \exp(sqa)\, \d s\, \d t \\	
	&= pqa^2 \left( \frac{(qa-1) \exp(qa) + 1}{(qa)^2} \right) \\
	&\le pqa^2 \left(\frac{(qc-1) \exp(qc) + 1}{(qc)^2}\right),
\end{align*}
where the last step follows from the fact that the function $$u \mapsto u^{-2}[(u-1) \exp(u) + 1]$$ (defined, for continuity, to be $\frac{1}{2}$ at $u=0$) is monotonic increasing in $[0, \infty)$, and $0 \leq qa \leq qc$. Since $a^2 = (\Gamma f)^2$, we can write
\begin{align*}
	D \big( P^{(f)} \big\| P \big) \le pq \left(\frac{(qc-1) \exp(qc)+ 1}{(qc)^2}\right) \expectation_P^{(f)}
\left[ (\Gamma f)^2 \right],
\end{align*}
so we have established \eqref{eq:LSI_Bernoulli_p} for $n=1$.

Now consider an arbitrary $n \in \naturals$. Since the condition in \eqref{eq:bit_flips} can be expressed as
\begin{align*}
\left| f_i(0|\bar{x}^i) - f_i(1|\bar{x}^i) \right| \le c, \qquad \forall \, i \in \{1, \ldots, n\}, \,
\bar{x}^i \in \{0,1\}^{n-1},
\end{align*}
we can use \eqref{eq:LSI_Bernoulli_p} to write
\begin{align*}
& D\Big( P^{(f_i(\cdot|\bar{x}^i))}_{X_i} \Big\| P_{X_i} \Big) \\
& \le pq \left( \frac{(qc-1) \exp(qc) + 1}{(qc)^2} \right) \expectation_{P_{X_i}}^{(f_i(\cdot|\bar{x}^i))}
\Big[ \left(\Gamma_i f_i(X_i|\bar{x}^i) \right)^2 \Big]
\end{align*}
for every $i=1,\ldots,n$ and all $\bar{x}^i \in \{0,1\}^{n-1}$. With this, the same sequence of steps that
led to \eqref{eq:LSI_tensorization} in the proof of Theorem~\ref{thm:LSI_tensorization} can be used to
complete the proof of \eqref{eq:LSI_Bernoulli_p} for an arbitrary $n$.
\end{proof}
In Appendix~\ref{appendix: binary to Gaussian LSI}, we comment on the relations between the log-Sobolev
inequalities for the Bernoulli and the Gaussian measures.

\begin{remark} \label{remark: improving the bound of Ledoux}
Note that \eqref{eq:LSI_Bernoulli_p} improves the bound of Ledoux in \cite[Corollary~5.9]{Ledoux_lecture_notes},
which is equivalent to (see \eqref{eq:ent_vs_D} and \eqref{eq:another equality with Gamma f})
\begin{align}\label{eq:Ledoux's LSI_Bernoulli_p}
		D\big(P^{(f)} \big\| P \big) \le pq \left( \frac{(c-1) \exp(c) + 1}{c^2} \right) \expectation_P^{(f)}\left[ (\Gamma f)^2 \right].
	\end{align}
The improvement in \eqref{eq:LSI_Bernoulli_p} follows from a replacement of $c$ on the right-hand side of
\eqref{eq:Ledoux's LSI_Bernoulli_p} with $qc$; this can be verified due the fact that the function
$$u \mapsto u^{-2}[(u-1) \exp(u) + 1], \quad u > 0$$ is monotonic increasing.
\end{remark}

\subsection{The method of bounded differences revisited}
\label{ssec:bounded_differences_revisited}

As our second illustration of the use of Maurer's method, we will give an information-theoretic proof
of McDiarmid's inequality (recall that the original proof in \cite{McDiarmid_tutorial,McDiarmid_1997}
used the martingale method; the reader is referred to the
derivation of McDiarmid's inequality via the martingale approach in
Theorem~\ref{theorem: McDiarmid's inequality} of the preceding chapter).
Following the exposition in \cite[Section~4.1]{Maurer_thermo}, we have the following re-statement of
McDiarmid's inequality in Theorem~\ref{theorem: McDiarmid's inequality}:

\begin{theorem}\label{thm:bounded_differences_revisited} Let $X_1,\ldots,X_n$ be independent $\cX$-valued
random variables. Consider a function $f \colon \cX^n \to \reals$ with $\expectation [f(X^n)] = 0$, and
also suppose that there exist some constants $0 \le c_1,\ldots,c_n < + \infty$ such that, for each
$i \in \{1,\ldots,n\}$,
\begin{align}\label{eq:bounded_differences}
	\left|f_i(x|\bar{x}^i) - f_i(y|\bar{x}^i)\right| \le c_i, \qquad \forall \, x,y \in \cX, \;
    \bar{x}^i \in \cX^{n-1}.
\end{align}
Then, for every $r \ge 0$,
\begin{align}\label{eq:McDiarmid}
	\pr \Big( f(X^n) \ge r \Big) \le \exp\left( - \frac{2r^2}{\sum^n_{i=1}c^2_i}\right).
\end{align}
\end{theorem}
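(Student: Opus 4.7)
The plan is to derive McDiarmid's inequality by plugging the bounded-differences assumption into Maurer's thermodynamic framework (Theorem~\ref{thm:Maurer_method}), producing a product-space logarithmic Sobolev inequality to which the Herbst argument (Corollary~\ref{cor:Herbst}) applies. Specifically, for each $i \in \{1,\ldots,n\}$, I would let $\cA_i$ be the family of bounded measurable functions $g \colon \cX \to \reals$ whose oscillation $\sup g - \inf g$ is finite, and define the (constant-valued) operator
\begin{equation*}
\Gamma_i g \deq \sup_{x \in \cX} g(x) - \inf_{x \in \cX} g(x).
\end{equation*}
It is straightforward to verify (LSI-1)--(LSI-3) for each pair $(\cA_i,\Gamma_i)$. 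Under the bounded-differences hypothesis \eqref{eq:bounded_differences}, each ``slice'' $f_i(\cdot|\bar{x}^i)$ has oscillation at most $c_i$, hence lies in $\cA_i$ with $\Gamma_i f_i(\cdot|\bar{x}^i) \le c_i$.

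The key input is a uniform variance bound of the form \eqref{eq:Maurer_var_bound} with $c = 1/4$. This is where Lemma~\ref{lm:Hoeffding_bound} is deployed: for any $g \in \cA_i$, the random variable $g(X_i)$ is almost surely contained in an interval of length $\Gamma_i g$, and this containment is preserved under any tilting $P_{X_i}^{(sg)}$, since tilting does not enlarge the support. Therefore Lemma~\ref{lm:Hoeffding_bound} gives, uniformly in $s > 0$ and $\bar{x}^i \in \cX^{n-1}$,
\begin{equation*}
\var_i^{(sg)}\bigl[g(X_i) \,\big|\, \bar{X}^i = \bar{x}^i\bigr] \le \frac{(\Gamma_i g)^2}{4}.
\end{equation*}
Invoking Theorem~\ref{thm:Maurer_method}, the tensorized pair $(\cA,\Gamma)$ with $\Gamma f = \sqrt{\sum_{i=1}^n (\Gamma_i f_i)^2}$ then satisfies $\LSI(1/4)$ under $P = P_{X^n}$.

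To finish, I would apply this $\LSI(1/4)$ to $tf$ for $t > 0$. Since $(\Gamma_i f_i(\cdot|\bar{x}^i))^2 \le c_i^2$ for every $i$ and every $\bar{x}^i$, one has $\|\Gamma f\|_\infty^2 \le \sum_{i=1}^n c_i^2$, and hence
\begin{equation*}
D\bigl(P^{(tf)} \,\big\|\, P\bigr) \le \frac{t^2}{8} \, \expectation_P^{(tf)}\bigl[(\Gamma f)^2\bigr] \le \frac{t^2}{8} \sum_{i=1}^n c_i^2.
\end{equation*}
This is precisely the quadratic-in-$t$ divergence bound required by Corollary~\ref{cor:Herbst} with constant $c^\star = \tfrac{1}{4}\sum_{i=1}^n c_i^2$. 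Applying that corollary to $U = f(X^n)$ (which has zero mean by hypothesis) yields
\begin{equation*}
\pr\bigl(f(X^n) \ge r\bigr) \le \exp\!\left(-\frac{r^2}{2c^\star}\right) = \exp\!\left(-\frac{2r^2}{\sum_{i=1}^n c_i^2}\right),
\end{equation*}
which is \eqref{eq:McDiarmid}.

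The main conceptual step is the variance bound, but because tilting preserves supports this is essentially immediate from Lemma~\ref{lm:Hoeffding_bound} and requires no distribution-dependent refinement like Lemma~\ref{lm:var_dominance}; the one mild subtlety to verify carefully is that the $1/4$ constant in Hoeffding's variance bound, combined with the factor $1/2$ in the definition of $\LSI(c)$, produces exactly the factor $2$ in the exponent of \eqref{eq:McDiarmid}, thereby matching (and, importantly, not losing to) the sharp constant obtained via the martingale argument in Theorem~\ref{theorem: McDiarmid's inequality} of the preceding chapter.
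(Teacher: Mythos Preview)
Your proposal is correct and follows essentially the same approach as the paper: define $\Gamma_i g = \sup g - \inf g$, use Lemma~\ref{lm:Hoeffding_bound} to verify the variance bound \eqref{eq:Maurer_var_bound} with $c = 1/4$, invoke Theorem~\ref{thm:Maurer_method} to obtain $\LSI(1/4)$, bound $\|\Gamma f\|_\infty^2 \le \sum_i c_i^2$, and finish with the Herbst argument. The only cosmetic difference is that the paper cites the packaged concentration inequality \eqref{eq:LSI_to_concentration} directly, whereas you unwind it to Corollary~\ref{cor:Herbst}; the content is identical.
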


\begin{proof} Let $\cA_0$ be the set of all bounded measurable functions $g \colon \cX \to \reals$,
and let $\Gamma_0$ be the operator that maps every
$g \in \cA_0$ to $$\Gamma_0 \, g \deq \sup_{x \in \cX} g(x) - \inf_{x \in \cX} g(x).$$
It is easy to verify that properties (LSI-1)--(LSI-3) hold for the pair $(\cA_0, \Gamma_0)$ since
in particular
$$\Gamma_0(a g + b) = a \, \Gamma_0 \, g, \quad \forall \, a \ge 0, \; b \in \reals.$$
Now, for each $i \in \{1,\ldots,n\}$, let $(\cA_i,\Gamma_i)$ be a copy of $(\cA_0,\Gamma_0)$.
Then, each $\Gamma_i$ maps every function $g \in \cA_i$ to a non-negative constant.
Moreover, for every $g \in \cA_i$, the random variable $U_i = g(X_i)$ is bounded between
$\inf_{x \in \cX}g(x)$ and $\sup_{x \in \cX} g(x) \equiv \inf_{x \in \cX} g(x) + \Gamma_i g$.
Therefore, Lemma~\ref{lm:Hoeffding_bound} gives
\begin{align*}
\var^{(sg)}_i[g(X_i)|\bar{X}^i = \bar{x}^i] \le \frac{(\Gamma_i g)^2}{4},
\qquad \forall \, g \in \cA_i, \; \bar{x}^i \in \cX^{n-1}.
\end{align*}
Hence, the condition \eqref{eq:Maurer_var_bound} of Theorem~\ref{thm:Maurer_method} holds
with $c=1/4$. Now let $\cA$ be the set of all bounded measurable functions $f \colon \cX^n \to \reals$.
Then, for every $f \in \cA$, $i \in \{1, \ldots, n\}$ and $x^n \in \cX^n$, we have
\begin{align*}
&\sup_{x_i \in \cX_i} f(x_1,\ldots,x_i,\ldots,x_n) - \inf_{x_i \in \cX_i} f(x_1,\ldots,x_i,\ldots,x_n) \\
&= \sup_{x_i \in \cX_i} f_i(x_i|\bar{x}^i) - \inf_{x_i \in \cX_i} f_i(x_i|\bar{x}^i) \\
&= \Gamma_i f_i(\cdot|\bar{x}^i).
\end{align*}
Thus, if we construct an operator $\Gamma$ on $\cA$ from $\Gamma_1,\ldots,\Gamma_n$ according to
\eqref{eq:gamma_def}, the pair $(\cA,\Gamma)$ will satisfy the conditions of
Theorem~\ref{thm:LSI_tensorization}. Therefore, by Theorem~\ref{thm:Maurer_method}, it follows
that the pair $(\cA,\Gamma)$ satisfies $\LSI(1/4)$ for {\em every} product probability measure on
$\cX^n$. Hence, inequality~\eqref{eq:LSI_to_concentration} implies that
\begin{align}\label{eq:McDiarmid_1}
	\pr \Big( f(X^n) \ge r\Big) \le \exp\left(- \frac{2r^2}{\left\|\Gamma f\right\|^2_\infty}\right)
\end{align}
holds for every $r \ge 0$ and bounded $f$ with $\expectation[f(X^n)] = 0$. Now, if $f$ satisfies
\eqref{eq:bounded_differences}, then
\begin{align*}
	\| \Gamma f \|^2_\infty &= \sup_{x^n \in \cX^n}\sum^n_{i=1} \big( \Gamma_i f_i(x_i|\bar{x}^i) \big)^2 \\
	&\le \sum^n_{i=1}\sup_{x^n \in \cX^n} \big(\Gamma_i f_i(x_i|\bar{x}^i)\bigr)^2 \\
	&= \sum^n_{i=1}\sup_{x^n \in \cX^n, \, y \in \cX}|f_i(x_i|\bar{x}^i) - f_i(y|\bar{x}^i)|^2 \\
	&\le \sum^n_{i=1}c^2_i.
\end{align*}
Substituting this bound into the right-hand side of \eqref{eq:McDiarmid_1} gives \eqref{eq:McDiarmid}.
\end{proof}
Note that Maurer's method gives the correct constant in the exponent of McDiarmid's inequality; it
is instructive to compare it to an earlier approach in \cite{Boucheron_Lugosi_Massart} which, by also using
the entropy method, gave an exponent that is smaller by a factor of~8.

\subsection{Log-Sobolev inequalities for Poisson and compound Poisson measures}
\label{subsection: Log-Sobolev inequalities for Poisson and compound Poisson measures}

Let $\sP_\lambda$ denote, for an arbitrary $\lambda > 0$, the $\Poisson(\lambda)$ measure, i.e.,
$\sP_\lambda(n) \deq \frac{e^{-\lambda} \, \lambda^n}{n!}$ for every $n \in \naturals_0$,
where $\naturals_0 \deq \naturals \cup \{0\}$ is the set of the non-negative integers.
Bobkov and Ledoux \cite{Bobkov_Ledoux} have established the following log-Sobolev inequality: for every function $f \colon \naturals_0 \to \reals$,
\begin{align}\label{eq:Bobkov_Ledoux_LSI}
	D\Big(\sP^{(f)}_\lambda \Big\| \sP_\lambda \Big) \le \lambda\, \expectation^{(f)}_{\sP_\lambda}\left[ (\Gamma f) \; e^{\Gamma f} - e^{\Gamma f} + 1\right],
\end{align}
where $\Gamma$ is the modulus of the discrete gradient:
\begin{align}\label{eq:discrete_grad}
	\Gamma f(x) \deq |f(x) - f(x+1)|, \qquad \forall \, x \in \naturals_0.
\end{align}
(The inequality \eqref{eq:Bobkov_Ledoux_LSI} can be obtained by combining the log-Sobolev inequality in \cite[Corollary~7]{Bobkov_Ledoux}
with equality \eqref{eq:ent_vs_D}.) Using tensorization of \eqref{eq:Bobkov_Ledoux_LSI}, Kontoyiannis and Madiman \cite{Kontoyiannis_Madiman_CP}
gave a simple proof of a log-Sobolev inequality for the {\em compound Poisson distribution}. We recall that a
compound Poisson distribution is defined as follows: given $\lambda > 0$ and a probability measure $\mu$ on $\naturals$, the compound Poisson distribution
$\CP_{\lambda,\mu}$ is the distribution of the random sum
\begin{equation} \label{eq: CP}
Z = \sum^N_{i=1}X_i,
\end{equation}
where $N \sim \sP_\lambda$ and $X_1, X_2,\ldots$ are i.i.d.\ random variables with
distribution $\mu$, independent of $N$ (if $N$ takes the value zero, then $Z$ is defined to be zero).

\begin{theorem}[Log-Sobolev inequality for compound Poisson measures \cite{Kontoyiannis_Madiman_CP}]\label{thm:Kontoyiannis_Madiman} For an arbitrary
probability measure $\mu$ on $\naturals$ and an arbitrary bounded function $f \colon \naturals_0 \to \reals$,
and for every $\lambda > 0$,
	\begin{align}\label{eq:CP_LSI}
		D\Big( \CP_{\lambda,\mu}^{(f)} \Big\| \CP_{\lambda,\mu}\Big) \le \lambda \, \sum^\infty_{k=1} \mu(k)\, \expectation^{(f)}_{\CP_{\lambda,\mu}}\left[(\Gamma_k f) \, e^{\Gamma_k f} - e^{\Gamma_k f} + 1 \right],
	\end{align}
	where $\Gamma_k f(x) \deq |f(x) - f(x+k)|$ for each $k \in \naturals$ and $x \in \naturals_0$.
\label{theorem: Kontoyiannis_Madiman_CP}
\end{theorem}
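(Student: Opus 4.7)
The plan is to deduce the compound Poisson log-Sobolev inequality from the scalar Bobkov--Ledoux inequality \eqref{eq:Bobkov_Ledoux_LSI} by combining three ingredients: (i) a thinning/product representation of $\CP_{\lambda,\mu}$, (ii) the data-processing inequality for divergence, and (iii) tensorization of the relative entropy (Proposition~\ref{prop:erasure_entropy_bound}).

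Step 1: the representation. For each $k \in \naturals$, let $N_k \sim \sP_{\lambda\mu(k)}$ be independent, and set $\phi(n_1,n_2,\ldots) \deq \sum_{k=1}^\infty k\, n_k$. A standard Poisson thinning argument shows that $Z \deq \phi(N_1,N_2,\ldots)$ has law $\CP_{\lambda,\mu}$; indeed, in the sum \eqref{eq: CP} the counts $N_k \deq |\{i\le N:X_i=k\}|$ of i.i.d.\ draws from $\mu$ up to a $\sP_\lambda$-random stopping time are independent Poisson with means $\lambda\mu(k)$. Let $P \deq \bigotimes_{k\ge 1}\sP_{\lambda\mu(k)}$ denote the product law of $(N_k)_{k\ge 1}$, and set $\tilde f \deq f\circ\phi$. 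By construction the pushforward of the tilted measure $P^{(\tilde f)}$ under $\phi$ is exactly $\CP_{\lambda,\mu}^{(f)}$, because the Radon--Nikodym derivative $\d P^{(\tilde f)}/\d P = \exp(\tilde f)/\expectation_P[\exp(\tilde f)]$ depends on $(N_k)$ only through $Z$.

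Step 2: reduction. The data-processing inequality for divergence under the map $\phi$ yields
\begin{equation*}
D\!\left(\CP_{\lambda,\mu}^{(f)}\,\big\|\,\CP_{\lambda,\mu}\right) \le D\!\left(P^{(\tilde f)}\,\big\|\,P\right).
\end{equation*}
Now apply the tensorization bound of Proposition~\ref{prop:erasure_entropy_bound} (using that $P$ is a product) to get
\begin{equation*}
D\!\left(P^{(\tilde f)}\,\big\|\,P\right) \le \sum_{k=1}^\infty D\!\left(P^{(\tilde f)}_{N_k\mid\bar N^k}\,\big\|\,\sP_{\lambda\mu(k)}\,\big|\,P^{(\tilde f)}_{\bar N^k}\right).
\end{equation*}
For each fixed $\bar n^k$, the conditional distribution $P^{(\tilde f)}_{N_k\mid \bar N^k=\bar n^k}$ is precisely the $\tilde f_k(\cdot|\bar n^k)$-tilting of $\sP_{\lambda\mu(k)}$, where $\tilde f_k(n|\bar n^k) \deq f\bigl(kn+\sum_{j\ne k}jn_j\bigr)$. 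The scalar Bobkov--Ledoux inequality \eqref{eq:Bobkov_Ledoux_LSI}, applied at parameter $\lambda\mu(k)$ to $\tilde f_k(\cdot|\bar n^k)$, then bounds each term. The crucial computation is that the one-step discrete gradient in the $N_k$ direction equals $\Gamma_k f$ evaluated at $Z$:
\begin{equation*}
\bigl|\tilde f_k(n+1|\bar n^k)-\tilde f_k(n|\bar n^k)\bigr| = \bigl|f(z+k)-f(z)\bigr| = \Gamma_k f(z), \qquad z=\phi(n_1,\ldots,n_k=n,\ldots).
\end{equation*}

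Step 3: reassembly. After substituting this identity, each summand becomes $\lambda\mu(k)\,\expectation^{(\tilde f)}_P\bigl[(\Gamma_k f)(Z)e^{(\Gamma_k f)(Z)}-e^{(\Gamma_k f)(Z)}+1\bigr]$. Because the integrand depends only on $Z$, the expectation under $P^{(\tilde f)}$ equals the expectation under the pushforward $\CP_{\lambda,\mu}^{(f)}$, giving exactly \eqref{eq:CP_LSI}. The main technical obstacle I foresee is the handling of the (possibly) countably infinite product when $\mu$ has infinite support: one must justify tensorization and Fubini-type interchanges in this setting. The cleanest route is to first prove the bound for $\mu$ supported on $\{1,\ldots,M\}$, where everything is finite-dimensional and $f$ bounded makes all quantities finite, and then pass to the limit $M\to\infty$ using monotone/dominated convergence on the right-hand side (the summands are nonnegative, since $ue^u-e^u+1\ge 0$ for $u\ge 0$) and lower semicontinuity of divergence on the left.
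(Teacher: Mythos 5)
Your proposal is correct and follows essentially the same route as the paper's proof: represent $\CP_{\lambda,\mu}$ as a (countable) sum $\sum_k kY_k$ of independent Poisson counts, pass to a product space, tensorize via Proposition~\ref{prop:erasure_entropy_bound}, apply the scalar Bobkov--Ledoux inequality \eqref{eq:Bobkov_Ledoux_LSI} to each conditional factor, and observe that the one-step gradient in the $k$-th coordinate equals $\Gamma_k f$ at the sum. The only cosmetic differences are that the paper (a) obtains equality $D(\bar{P}^{(f)}_n \| \bar{P}_n) = D(P^{(g)}_n \| P_n)$ rather than invoking data-processing as an inequality (both are valid, since the likelihood ratio factors through the sum), and (b) handles the infinite-dimensional issue by working with the partial sums $S_n = \sum_{k=1}^n k Y_k$ and letting $n\to\infty$ with the bounded convergence theorem on both sides, which is equivalent to your truncation-of-$\mu$ scheme (provided you truncate the intensity $\lambda\mu\mathbf{1}_{\{1,\ldots,M\}}$ without renormalizing, so that the truncated law is exactly that of $S_M$). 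You correctly flagged the need for this finite-to-infinite passage as the main technical point.
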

\begin{proof} The proof relies on the following alternative representation of the $\CP_{\lambda,\mu}$ probability measure:
    \begin{lemma} \label{lemma: compound Poisson distribution}
    If $Z \sim \CP_{\lambda,\mu}$, then
	\begin{align}\label{eq:CP_alt}
		Z \eqdist \sum^\infty_{k=1} kY_k, \qquad Y_k \sim \sP_{\lambda \mu(k)}, \; \; \forall \, k \in \naturals
	\end{align}
	where $\{Y_k\}_{k=1}^{\infty}$ are independent random variables, and $\eqdist$ means equality in distribution.
    \end{lemma}
    \begin{proof}
    The characteristic function of $Z$ in \eqref{eq:CP_alt} is equal to
    $$\varphi_Z(\nu) \deq \expectation[\exp(j \nu Z)] = \exp \left\{\lambda \left(\sum_{k=1}^{\infty} \mu(k) \exp(j \nu k) - 1 \right) \right\}, \quad \forall \, \nu \in \reals$$
    which coincides with the characteristic function of $Z \sim \CP_{\lambda,\mu}$ in \eqref{eq: CP}.
    The statement of the lemma follows from the fact that two
    random variables are equal in distribution if and only if their characteristic functions coincide.
    \end{proof}
        For each $n \in \naturals$, let $P_n$ denote the product distribution of $Y_1,\ldots,Y_n$. Consider an arbitrary bounded function
    $f \colon \naturals_0 \to \reals$, and define the function $g \colon (\naturals_0)^n \to \reals$ by
	\begin{align*}
		g(y_1,\ldots,y_n) \deq f\left(\sum^n_{k=1}ky_k\right), \qquad \forall \, y_1,\ldots,y_n \in \naturals_0.
	\end{align*}
If we now denote by $\bar{P}_n$ the distribution of the sum $S_n \deq \sum^n_{k=1}kY_k$, then
\begin{align} \label{eq: upper bound on the divergence between tilted P and P}
	D\Big(\bar{P}^{(f)}_n \big\| \bar{P}_n \Big) &= \expectation_{\bar{P}_n} \left[ \left( \frac{\exp\big(f(S_n)\big)}{\expectation_{\bar{P}_n}[\exp\big(f(S_n)\big)]} \right) \; \ln \left( \frac{\exp\big(f(S_n)\big)}{\expectation_{\bar{P}_n}[\exp\big(f(S_n)\big)]} \right) \right] \nonumber \\[0.1cm]
	&= \expectation_{P_n} \left[ \left(\frac{\exp\big(g(Y^n)\big)}{\expectation_{P_n}[\exp\big(g(Y^n)\big)]} \right) \; \ln \left(\frac{\exp\big(g(Y^n)\big)}{\expectation_{P_n}[\exp\big(g(Y^n)\big)]}\right) \right] \nonumber \\[0.1cm]
	&= D\big(P^{(g)}_n \big\| P_n \big) \nonumber \\
	&\le \sum^n_{k=1} D\Big(P^{(g)}_{Y_k|\bar{Y}^k} \Big\| P_{Y_k} \Big| P^{(g)}_{\bar{Y}^k}\Big),
\end{align}
where the last line uses Proposition~\ref{prop:erasure_entropy_bound} and the fact that $P_n$ is a product distribution. Using the fact that
$$
\frac{\d P^{(g)}_{Y_k|\bar{Y}^k=\bar{y}^k}}{\d P_{Y_k}} = \frac{\exp\big(g_k(\cdot|\bar{y}^k)\big)}{\expectation_{\sP_{\lambda\mu(k)}} [\exp\big(g_k(Y_k|\bar{y}^k)\big)]}, \qquad P_{Y_k}= \sP_{\lambda\mu(k)}
$$
and applying the Bobkov--Ledoux inequality \eqref{eq:Bobkov_Ledoux_LSI} to $P_{Y_k}$ and all functions of the form $g_k(\cdot|\bar{y}^k)$, we can write
\begin{align}
& D\Big(P^{(g)}_{Y_k|\bar{Y}^k} \big\| P_{Y_k} \, \big| \, P^{(g)}_{\bar{Y}^k}\Big) \nonumber\\
& \le \lambda \mu(k)\, \expectation^{(g)}_{P_n}\left[ \big(\Gamma g_k(Y_k|\bar{Y}^k)\big) \, e^{\Gamma g_k(Y_k|\bar{Y}^k)} -
e^{\Gamma g_k(Y_k|\bar{Y}^k)} + 1\right] \label{eq:CP_LSI_1}
\end{align}
where $\Gamma$ is the absolute value of the ``one-dimensional'' discrete gradient in \eqref{eq:discrete_grad}.
For every $y^n \in (\naturals_0)^n$, we have
\begin{align*}
	\Gamma g_k(y_k|\bar{y}^k) &= \left|g_k(y_k|\bar{y}^k) - g_k(y_k + 1|\bar{y}^k)\right| \\
	&= \Bigg|f\left( ky_k + \sum_{j \in \{1,\ldots,n\}\backslash\{k\}} jy_j\right) \\
	& \qquad -  f\left( k(y_k+1) + \sum_{j \in \{1,\ldots,n\}\backslash\{k\}} jy_j\right)\Bigg| \\
	&= \left|f\left( \sum_{j=1}^n jy_j\right) -  f\left( \sum_{j=1}^n jy_j + k\right)\right| \\
	&= \Gamma_k f\left( \sum^n_{j=1}jy_j\right) = \Gamma_k f(S_n).
\end{align*}
Using this in \eqref{eq:CP_LSI_1} and performing the reverse change of measure from $P_n$ to $\bar{P}_n$, we can write
\begin{align}
	&D\Big(P^{(g)}_{Y_k|\bar{Y}^k} \Big\| P_{Y_k} \, \Big| \, P^{(g)}_{\bar{Y}^k}\Big) \nonumber\\
	&\le \lambda \mu(k)\, \expectation^{(f)}_{\bar{P}_n}\left[ \big(\Gamma_k f(S_n)\big) e^{\Gamma_k f(S_n)} - e^{\Gamma_k f(S_n)} + 1\right].\label{eq: upper bound on the considered conditional divergence}
\end{align}
Therefore, the combination of \eqref{eq: upper bound on the divergence between tilted P and P} and
\eqref{eq: upper bound on the considered conditional divergence} gives
\begin{align}
D\big(\bar{P}^{(f)}_n \big\| \bar{P}_n \big) &\le \lambda \sum^n_{k=1}\mu(k) \,
\expectation^{(f)}_{\bar{P}_n}\left[ (\Gamma_k f) \, e^{\Gamma_k f} - e^{\Gamma_k f} + 1\right] \nonumber \\
&\le \lambda \sum^\infty_{k=1} \mu(k) \, \expectation^{(f)}_{\bar{P}_n}\left[ (\Gamma_k f) \, e^{\Gamma_k f} - e^{\Gamma_k f} + 1\right] \label{eq:CP_LSI_2}
\end{align}
where the second line follows from the inequality $xe^x - e^x + 1 \ge 0$ that holds for all $x \ge 0$.

Now we will take the limit as $n \to \infty$ of both sides of \eqref{eq:CP_LSI_2}. For the left-hand side, we use the fact that, by \eqref{eq:CP_alt}, $\bar{P}_n$ converges in distribution to $\CP_{\lambda,\mu}$ as $n \to \infty$. Since $f$ is bounded, $\bar{P}^{(f)}_n \to \CP^{(f)}_{\lambda,\mu}$ in distribution. Therefore, by the bounded convergence theorem, we have
\begin{align} \label{eq: an implication of the bounded convergence theorem}
	\lim_{n \to \infty}D\big( \bar{P}^{(f)}_n \big\| \bar{P}_n \big) = D\Big(\CP^{(f)}_{\lambda,\mu} \, \Big\| \, \CP_{\lambda,\mu}\Big).
\end{align}
For the right-hand side of \eqref{eq:CP_LSI_2}, we have
\begin{align} \label{eq: limit of the right-hand side where n tends to infinity}
& \sum^\infty_{k=1}\mu(k) \, \expectation^{(f)}_{\bar{P}_n}\left[ (\Gamma_k f) \, e^{\Gamma_k f} - e^{\Gamma_k f} + 1\right] \nonumber\\
& = \expectation^{(f)}_{\bar{P}_n}\left\{\sum^\infty_{k=1}\mu(k) \left[ (\Gamma_k f) \, e^{\Gamma_k f} - e^{\Gamma_k f} + 1\right]\right\} \nonumber \\
& \xrightarrow{n \to \infty} \expectation^{(f)}_{\CP_{\lambda,\mu}}\left[\sum^\infty_{k=1}\mu(k) \left(
(\Gamma_k f) \, e^{\Gamma_k f} - e^{\Gamma_k f} + 1\right) \right] \nonumber \\
& = 	\sum^\infty_{k=1}\mu(k) \, \expectation^{(f)}_{\CP_{\lambda,\mu}}\left[ (\Gamma_k f) \, e^{\Gamma_k f} - e^{\Gamma_k f} + 1\right]
\end{align}
where the first and last steps follow from Fubini's theorem, and the second step follows from
the bounded convergence theorem. Putting
\eqref{eq:CP_LSI_2}--\eqref{eq: limit of the right-hand side where n tends to infinity} together,
we get the inequality in \eqref{eq:CP_LSI}.
This completes the proof of Theorem~\ref{theorem: Kontoyiannis_Madiman_CP}.
\end{proof}

\subsection{Bounds on the variance: Efron--Stein--Steele and Poincar\'e inequalities}

As we have seen, tight bounds on the {\em variance} of a function $f(X^n)$ of independent random variables $X_1,\ldots,X_n$ are key to obtaining tight bounds on the deviation probabilities $\pr \big( f(X^n) \ge \expectation f(X^n) + r \big)$ for $r \ge 0$. It turns out that the reverse is also true: assuming that $f$ has Gaussian-like concentration behavior,
\begin{align*}
	\pr \big( f(X^n) \ge \expectation f(X^n) + r \big) \le K \exp\big(-\kappa r^2 \big), \qquad \forall \, r \ge 0
\end{align*}
it is possible to derive tight bounds on the variance of $f(X^n)$.

We start by deriving a version of a well-known inequality due to Efron and Stein \cite{Efron_Stein}, with subsequent refinements by Steele \cite{Steele}:

\begin{theorem}[Efron--Stein--Steele inequality]\label{thm:ESS} Let $X_1,\ldots,X_n$ be $n$ independent $\cX$-valued random variables. Consider an arbitrary function $f \colon \cX^n \to \reals$ such that its scaled versions $tf$ are exponentially integrable for all sufficiently small
$t > 0$. Then
	\begin{align}\label{eq:Efron_Stein_Steele}
		\var[f(X^n)] \le \sum^n_{i=1} \expectation\left\{ \var\big[f(X^n)\big|\bar{X}^i\big]\right\}.
	\end{align}
\end{theorem}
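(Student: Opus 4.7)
My plan is to derive the Efron--Stein--Steele inequality using the entropy-method machinery already developed in this section, thereby explaining the exponential-integrability hypothesis in the statement (the classical proof via the Doob martingale decomposition does not require this hypothesis, so its presence strongly suggests that the intended route is through Lemma~\ref{lm:entropy_via_fluctuations} and Proposition~\ref{prop:erasure_entropy_bound}). The overall strategy is to apply the tensorization bound for the relative entropy to the tilted measure $P^{(\lambda f)}$ for small $\lambda>0$, extract the leading-order behavior in $\lambda$ on both sides using Maurer's ``thermodynamic'' identity, and then divide by $\lambda^2$ and let $\lambda\downarrow 0$ to recover a relation purely in terms of variances.

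First, I would fix $\lambda>0$ small enough that $\lambda f$ is exponentially integrable under $P=P_{X^n}$. Applying Proposition~\ref{prop:erasure_entropy_bound} to $Q=P^{(\lambda f)}$ gives
\begin{align*}
D\!\left(P^{(\lambda f)}\,\big\|\,P\right) \;\le\; \sum_{i=1}^n D\!\left(P^{(\lambda f)}_{X_i|\bar X^i}\,\Big\|\,P_{X_i}\,\Big|\,P^{(\lambda f)}_{\bar X^i}\right),
\end{align*}
and, as already observed in the derivation leading to \eqref{eq:LSI_tensorization}, the $i$th conditional divergence reduces, for each fixed $\bar x^i$, to $D\!\left(P^{(\lambda f_i(\cdot|\bar x^i))}_{X_i}\,\big\|\,P_{X_i}\right)$, i.e.\ to an ordinary divergence against a \emph{one-dimensional} tilt of $P_{X_i}$. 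Thus both sides are divergences of tiltings of a base measure by a parameter $\lambda$, which is exactly the setup of Lemma~\ref{lm:entropy_via_fluctuations}.

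Second, I would extract the $\lambda\to 0$ asymptotics. Lemma~\ref{lm:entropy_via_fluctuations} gives
\begin{align*}
D\!\left(\mu^{(\lambda g)}\,\big\|\,\mu\right) \;=\; \int_0^\lambda\!\int_t^\lambda \mathrm{var}^{(sg)}_\mu[g]\,\d s\,\d t,
\end{align*}
and since $s\mapsto \mathrm{var}^{(sg)}_\mu[g]$ is continuous at $s=0$ with value $\mathrm{var}_\mu[g]$, a double integration shows that
\begin{align*}
D\!\left(\mu^{(\lambda g)}\,\big\|\,\mu\right) \;=\; \tfrac{\lambda^2}{2}\,\mathrm{var}_\mu[g] \;+\; o(\lambda^2)\qquad\text{as }\lambda\downarrow 0.
\end{align*}
Applying this with $\mu=P$, $g=f$ on the left yields $D(P^{(\lambda f)}\|P)=\tfrac{\lambda^2}{2}\mathrm{var}[f(X^n)]+o(\lambda^2)$. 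Applying it with $\mu=P_{X_i}$ and $g=f_i(\cdot|\bar x^i)$ for each $\bar x^i$ on the right yields, after integrating against $P^{(\lambda f)}_{\bar X^i}$,
\begin{align*}
D\!\left(P^{(\lambda f)}_{X_i|\bar X^i}\,\Big\|\,P_{X_i}\,\Big|\,P^{(\lambda f)}_{\bar X^i}\right) \;=\; \tfrac{\lambda^2}{2}\,\expectation_{P^{(\lambda f)}_{\bar X^i}}\!\left[\mathrm{var}\!\left(f(X^n)\,\big|\,\bar X^i\right)\right] + o(\lambda^2).
\end{align*}
Finally, since $P^{(\lambda f)}_{\bar X^i}$ converges to $P_{\bar X^i}$ as $\lambda\to 0$ (the Radon--Nikodym derivative is $\expectation[\exp(\lambda f)|\bar X^i]/\expectation[\exp(\lambda f)]\to 1$), the outer expectation converges to $\expectation\{\mathrm{var}(f(X^n)|\bar X^i)\}$. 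Dividing the tensorization inequality by $\lambda^2/2$ and sending $\lambda\downarrow 0$ produces exactly \eqref{eq:Efron_Stein_Steele}.

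The main obstacle is the rigorous justification of the small-$\lambda$ asymptotic expansions and the passage to the limit: one must control the $o(\lambda^2)$ remainders \emph{uniformly} in $\bar x^i$ (to allow exchange of limit and expectation), show that the map $s\mapsto \mathrm{var}^{(sf_i(\cdot|\bar x^i))}_{P_{X_i}}[f_i(\cdot|\bar x^i)]$ is integrable near $0$ uniformly in $\bar x^i$, and verify convergence of the tilted marginal $P^{(\lambda f)}_{\bar X^i}$ to $P_{\bar X^i}$ in a sufficiently strong sense (e.g.\ by a dominated convergence argument). All of these are straightforward consequences of the exponential integrability of $tf$ for all small $t>0$, which is precisely why that hypothesis appears in the statement. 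Once these technicalities are in place, the proof reduces to the two-line asymptotic calculation above.
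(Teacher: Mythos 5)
Your proposal is correct and follows essentially the same route as the paper: tensorize the divergence $D(P^{(tf)}\|P)$ via Proposition~\ref{prop:erasure_entropy_bound}, rewrite both sides using the thermodynamic identity of Lemma~\ref{lm:entropy_via_fluctuations}, divide by $t^2$, and let $t\downarrow 0$. The paper's proof is simply a more terse version of yours (it states the key limit $\lim_{t\to 0}\frac{1}{t^2}\int_0^t\int_s^t \var^{(\tau f)}[f]\,\d\tau\,\d s=\frac{\var[f]}{2}$ without elaborating on the regularity details you correctly flag, such as the convergence $P^{(\lambda f)}_{\bar X^i}\to P_{\bar X^i}$ and uniform control of the remainders, all of which are indeed handled by the exponential integrability hypothesis).
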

\begin{proof} Let $P = P_{X_1} \otimes \ldots \otimes P_{X_n}$ be the joint probability distribution of $X_1, \ldots, X_n$.
By Proposition~\ref{prop:erasure_entropy_bound}, for every $t > 0$, we have
	\begin{align*}
		D \big( P^{(tf)} \big\| P \big) \le \sum^n_{i=1} D\big(P^{(tf)}_{X_i|\bar{X}^i} \big\| P_{X_i} \big| P_{\bar{X}^i}^{(tf)} \big).
	\end{align*}
Using Lemma~\ref{lm:entropy_via_fluctuations}, we can rewrite this inequality as
	\begin{align} \label{eq:inequlity from the erasure_entropy_bound and Maurer's method}
		& \int^t_0 \int^t_s \var^{(\tau f)}_{P}[f] \,\d \tau\,\d s  \nonumber \\
        & \le \sum^n_{i=1}\expectation_{P^{(tf)}_{\bar{X}^i}}\left[ \int^t_0 \int^t_s \var^{(\tau f_i(\cdot|\bar{X}^i))}_{P_{X_i|\bar{X}^i}}[f]\,\d \tau\, \d s\right].
	\end{align}
Dividing both sides by $t^2$, and passing to the limit as $t \to 0$, we get from L'H\^{o}pital's rule
\begin{align} \label{eq:limit1}
	\lim_{t \to 0} \frac{1}{t^2}\int^t_0 \int^t_s \var^{(\tau f)}_{P}[f]\, \d\tau\, \d s = \frac{\var_P[f]}{2} = \frac{\var[f(X^n)]}{2},
\end{align}
and
\begin{align}  \label{eq:limit2}
    & \lim_{t \to 0} \frac{1}{t^2} \, \sum^n_{i=1} \expectation_{P^{(tf)}_{\bar{X}^i}}\left[ \int^t_0 \int^t_s \var^{(\tau f_i(\cdot|\bar{X}^i))}_{P_{X_i|\bar{X}^i}}[f]\,\d \tau\, \d s\right] \nonumber \\
    & = \sum^n_{i=1} \expectation_{P_{\bar{X}^i}} \left\{ \lim_{t \to 0} \frac{1}{t^2} \int^t_0 \int^t_s \var^{(\tau f_i(\cdot|\bar{X}^i))}_{P_{X_i|\bar{X}^i}}[f]\,\d \tau\, \d s\right\} \nonumber \\
    & = \sum^n_{i=1} \expectation_{P_{\bar{X}^i}} \left\{ \frac{\var_{P_{X_i|\bar{X}^i}}[f]}{2} \right\} \nonumber \\
    & = \frac{1}{2} \sum^n_{i=1} \expectation \left\{ \var \big[f(X^n)\big|\bar{X}^i\big] \right\}
\end{align}
where the first equality in \eqref{eq:limit2} is justified by invoking the dominated convergence theorem (recall the pointwise
convergence of $P^{(tf)}_{\bar{X}^i}$ to $P_{\bar{X}^i}$, as $t \to 0$,
which holds under the assumption that the scaled functions $tf$ are exponentially integrable for all sufficiently small $t > 0$), and the second equality holds due to
L'H\^{o}pital's rule.
Inequality \eqref{eq:Efron_Stein_Steele} finally follows from \eqref{eq:inequlity from the erasure_entropy_bound and Maurer's method}--\eqref{eq:limit2}.
\end{proof}

Next, we discuss the connection between log-Sobolev inequalities and another class of functional inequalities, the so-called {\em Poincar\'e inequalities}. Consider, as before, a probability space $(\Omega,\cF,\mu)$ and a pair $(\cA,\Gamma)$ satisfying the conditions (LSI-1)--(LSI-3). Then, we say that $\mu$ satisfies a {\em Poincar\'e inequality} with constant $c \ge 0$ if
\begin{align}\label{eq:Poincare}
	\var_\mu[f] &\le c\,\expectation_\mu\left[\left(\Gamma f\right)^2\right], \qquad \forall \, f \in \cA.
\end{align}
\begin{theorem} Suppose that $\mu$ satisfies $\LSI(c)$ with respect to $(\cA,\Gamma)$. Then $\mu$ also satisfies a Poincar\'e inequality with constant $c$.
\end{theorem}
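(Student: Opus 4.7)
The plan is to linearize the logarithmic Sobolev inequality around $t=0$ by applying it to the rescaled function $tf$ and then letting $t \to 0$. The homogeneity property (LSI-3) and Lemma~\ref{lm:entropy_via_fluctuations} are the two ingredients that make this work cleanly.

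First, I would fix an arbitrary $f \in \cA$ and, for any $t > 0$, apply $\LSI(c)$ to the function $tf \in \cA$. Using $\Gamma(tf) = t\,\Gamma f$ from (LSI-3), this yields
\begin{align*}
D\bigl(\mu^{(tf)} \big\| \mu\bigr) \le \frac{c t^2}{2}\, \expectation_\mu^{(tf)}\!\left[(\Gamma f)^2\right].
\end{align*}
Next, I would rewrite the left-hand side using the thermodynamic identity of Lemma~\ref{lm:entropy_via_fluctuations}, namely
\begin{align*}
D\bigl(\mu^{(tf)} \big\| \mu\bigr) = \int_0^t \int_s^t \var_\mu^{(rf)}[f]\, dr\, ds,
\end{align*}
so that, after dividing both sides by $t^2/2$, the LSI becomes
\begin{align*}
\frac{2}{t^2}\int_0^t \int_s^t \var_\mu^{(rf)}[f]\, dr\, ds \;\le\; c\, \expectation_\mu^{(tf)}\!\left[(\Gamma f)^2\right].
\end{align*}

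The crux is then to pass to the limit $t \downarrow 0$ on both sides. For the left-hand side I would use continuity of the map $r \mapsto \var_\mu^{(rf)}[f]$ at $r=0$ together with the elementary computation $\int_0^t\int_s^t dr\, ds = t^2/2$, giving
\begin{align*}
\lim_{t \downarrow 0} \frac{2}{t^2}\int_0^t \int_s^t \var_\mu^{(rf)}[f]\, dr\, ds = \var_\mu[f].
\end{align*}
For the right-hand side, since $\mu^{(tf)} \to \mu$ setwise as $t \downarrow 0$ (the Radon--Nikodym derivative $\exp(tf)/\expectation_\mu[\exp(tf)] \to 1$ pointwise), and since $f,\Gamma f \in \cA$ are bounded by the standing assumptions (LSI-1)--(LSI-2), bounded convergence yields $\expectation_\mu^{(tf)}[(\Gamma f)^2] \to \expectation_\mu[(\Gamma f)^2]$. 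Combining these two limits produces the Poincar\'e inequality \eqref{eq:Poincare}.

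The main obstacle — and it is really the only subtlety — is the justification of both limits, which relies on the exponential integrability of $f$ (guaranteeing that $\Lambda(t) = \ln \expectation_\mu[\exp(tf)]$ is smooth near $t=0$, so that $t \mapsto \var_\mu^{(tf)}[f] = \Lambda''(t)$ is continuous at $0$) and on boundedness of $\Gamma f$ to invoke dominated convergence. Both hold under the hypotheses (LSI-1)--(LSI-3), so no additional regularity is needed.
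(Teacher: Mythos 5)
Your proof is correct and follows essentially the same route as the paper's: apply $\LSI(c)$ to $tf$, rewrite the divergence via Lemma~\ref{lm:entropy_via_fluctuations}, divide by $t^2/2$, and let $t \downarrow 0$. The paper simply delegates the limit argument to its earlier proof of the Efron--Stein--Steele inequality (Theorem~\ref{thm:ESS}), whereas you spell out the continuity and dominated-convergence justifications explicitly; the substance is identical.
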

\begin{proof} For every $f \in \cA$ and $t >0$, we can use Lemma~\ref{lm:entropy_via_fluctuations} to express the corresponding $\LSI(c)$ for the function $tf$ as
	\begin{align}
		\int^t_0 \int^t_s \var^{(\tau f)}_\mu[f] \,\d\tau\,\d s &\le
        \frac{c t^2}{2} \cdot \expectation^{(t f)}_\mu\left[(\Gamma f)^2 \right].
	\end{align}
Proceeding exactly as in the proof of Theorem~\ref{thm:ESS} above (i.e., by dividing both sides of the
above inequality by $t^2$ and passing to the limit as $t \rightarrow 0$), we obtain
\begin{align*}
	\frac{1}{2}\var_\mu[f] \le \frac{c}{2} \cdot \expectation_\mu\left[(\Gamma f)^2\right].
\end{align*}
Multiplying both sides by $2$, we see that $\mu$ indeed satisfies \eqref{eq:Poincare}.
\end{proof}

Moreover, Poincar\'e inequalities tensorize, as the following analogue of Theorem~\ref{thm:LSI_tensorization} shows:
\begin{theorem}\label{thm:Poincare_tensorization} Let $X_1,\ldots,X_n$ be independent $\cX$-valued random variables, and let $P = P_{X_1} \otimes \ldots \otimes P_{X_n}$ be their joint distribution. Let $\cA$ consist of all functions $f \colon \cX^n \to \reals$, such that, for every $i$,
	\begin{align}\label{eq:Poin_projections_in_A}
		f_i(\cdot|\bar{x}^i) \in \cA_i, \qquad \forall \, \bar{x}^i \in \cX^{n-1}
	\end{align}
Define the operator $\Gamma$ that maps each $f \in \cA$ to $\Gamma f$ in \eqref{eq:gamma_def} and \eqref{eq:gamma_def_long}.
Suppose that, for every $i \in \{1, \ldots, n\}$, $P_{X_i}$ satisfies a Poincar\'e inequality with constant $c \ge 0$ with respect to $(\cA_i,\Gamma_i)$ (see \eqref{eq:Poincare}). Then $P$ satisfies a Poincar\'e inequality with constant $c$ with respect to $(\cA,\Gamma)$.
\end{theorem}
\begin{proof} The proof is conceptually similar to the proof of Theorem~\ref{thm:LSI_tensorization} (which
refers to the tensorization of the logarithmic Sobolev inequality), except that now we use the Efron--Stein--Steele inequality of Theorem~\ref{thm:ESS} to tensorize the variance of $f$.
\end{proof}

\section{Transportation-cost inequalities}
\label{sec:transportation}

So far, we have been looking at concentration of measure through the lens of various {\em functional} inequalities, primarily log-Sobolev inequalities. In a nutshell, if we are interested in the concentration properties of a given function $f(X^n)$ of a random $n$-tuple $X^n \in \cX^n$, we seek to control the divergence $D(P^{(f)} \| P)$, where $P$ is the distribution of $X^n$ and $P^{(f)}$ is its $f$-tilting, $\d P^{(f)}/\d P \propto \exp(f)$, by some quantity related to the sensitivity of $f$ to modifications of its arguments (e.g., the squared norm of the gradient of $f$, as in the Gaussian log-Sobolev inequality of Gross \cite{Gross}). The common theme underlying these functional inequalities is that every such measure of sensitivity is tied to a particular {\em metric structure} on the underlying product space $\cX^n$. To see this, suppose that $\cX^n$ is equipped with a metric $d(\cdot,\cdot)$, and consider the following generalized definition of the modulus of the gradient of an arbitrary function $f \colon \cX^n \to \reals$:
\begin{align}\label{eq:generalized_gradient}
	|\nabla f|(x^n) \deq \limsup_{y^n: d(x^n,y^n) \downarrow 0} \frac{|f(x^n)-f(y^n)|}{d(x^n,y^n)}.
\end{align}
If we also define the Lipschitz constant of $f$ by
\begin{align} \label{eq:Lipschitz constant}
	\| f \|_{\rm Lip} \deq \sup_{x^n \neq y^n} \frac{|f(x^n)-f(y^n)|}{d(x^n,y^n)}
\end{align}
and consider the class $\cA$ of all functions $f$ with $\| f \|_{\rm Lip} < \infty$, then it is easy to see
that the pair $(\cA,\Gamma)$ with $\Gamma f(x^n) \deq |\nabla f|(x^n)$ satisfies the conditions (LSI-1)--(LSI-3)
listed in Section~\ref{sec:LSI}. Consequently, suppose that a given probability distribution $P$ for a random $n$-tuple
$X^n \in \cX^n$ satisfies $\LSI(c)$ with respect to the pair $(\cA,\Gamma)$. The use of \eqref{eq:LSI_to_concentration}
and the inequality $\| \Gamma f \|_{\infty} \leq \| f \|_{\rm Lip}$, which follows
directly from \eqref{eq:generalized_gradient} and \eqref{eq:Lipschitz constant}, gives the concentration inequality
\begin{align}\label{eq:Lipschitz_conc}
	\pr \Big( f(X^n) \ge \expectation f(X^n) + r \Big) \le \exp\left( - \frac{r^2}{2c \| f \|^2_{\rm Lip}}\right),
    \quad \forall \, r > 0.
\end{align}
Some examples of concentration we have discussed so far in this chapter can be seen to fit this theme. Consider,
for instance, the following case:
\begin{example}[Euclidean metric]
For $\cX = \reals$, equip the product space $\cX^n = \reals^n$ with the ordinary Euclidean metric:
\begin{align*}
	d(x^n,y^n) = \| x^n - y^n \| = \sqrt{\sum^n_{i=1}(x_i - y_i)^2}.
\end{align*}
Then, from \eqref{eq:Lipschitz constant}, the Lipschitz constant $\| f \|_{\rm Lip}$ of an arbitrary function
$f \colon \cX^n \to \reals$ is given by
\begin{align*}
	\| f \|_{\rm Lip} = \sup_{x^n \neq y^n} \frac{|f(x^n) - f(y^n)|}{\| x^n - y^n \|},
\end{align*}
and, for every probability measure $P$ on $\reals^n$ that satisfies $\LSI(c)$, we have the concentration inequality
\eqref{eq:Lipschitz_conc}. We have already seen in \eqref{eq:Gross_LSI_2} a particular instance
of this with $P = G^n$, which satisfies $\LSI(1)$.
\end{example}
The above example suggests that the metric structure plays the primary role, while the functional concentration inequalities like \eqref{eq:Lipschitz_conc} are simply a consequence. In this section, we describe an alternative approach to concentration that works directly on the level of {\em probability measures}, rather than functions, and that makes this intuition precise. The key tool underlying this approach is the notion of {\em transportation cost}, which can be used to define a metric on probability distributions over the space of interest in terms of a given base metric on this space. This metric on distributions can then be related to the divergence via the so-called {\em transportation-cost inequalities}. The pioneering work by K.~Marton in \cite{Marton_dbar} and \cite{Marton_blowup} has shown that one can use these inequalities to deduce concentration.

\subsection{Concentration and isoperimetry}

We start by giving rigorous meaning to the notion that the concentration of measure phenomenon is fundamentally geometric in nature. In order to talk about concentration, we need the notion of a {\em metric probability space} in the sense of M.~Gromov \cite{Gromov_book}. Specifically, we say that a triple $(\cX,d,\mu)$ is a metric probability space if $(\cX,d)$ is a Polish space (i.e., a complete and separable metric space) and $\mu$ is a probability measure on the Borel sets of $(\cX,d)$.

For an arbitrary set $A \subseteq \cX$ and every $r > 0$, define the {\em $r$-blowup of $A$} by
\begin{align}\label{eq:blowup_set}
	A_r \deq \left\{ x \in \cX \colon d(x,A) < r \right\},
\end{align}
where $d(x,A) \deq \inf_{y \in A}d(x,y)$ is the distance from the point $x$ to the set $A$. We then say that the probability measure $\mu$ has {\em normal} (or {\em Gaussian}) {\em concentration} on $(\cX,d)$ if there exist positive constants $K,\kappa$, such that
\begin{align}\label{eq:isoperimetry}
	\mu(A) \ge 1/2 \qquad \Longrightarrow \qquad \mu(A_r) \ge 1 - K e^{-\kappa r^2}, \; \forall \, r > 0.
\end{align}
\begin{remark} Of the two constants $K$ and $\kappa$ in \eqref{eq:isoperimetry}, it is $\kappa$ that is more important. For that reason, sometimes we will say that $\mu$ has normal concentration with constant $\kappa > 0$ to mean that \eqref{eq:isoperimetry} holds with that value of $\kappa$ and some $K > 0$.
\end{remark}
\begin{remark}\label{rem:isoperimetry_weakened} The concentration condition \eqref{eq:isoperimetry} is often weakened to the following: there exists some $r_0 > 0$, such that
	\begin{align}\label{eq:isoperimetry_weakened}
		\mu(A) \ge 1/2 \qquad \Longrightarrow \qquad \mu(A_r) \ge 1 - Ke^{-\kappa(r-r_0)^2},\, \forall \, r \ge r_0
	\end{align}
(see, for example,  \cite[Remark~22.23]{Villani_newbook} or \cite[Proposition~3.3]{Gozlan}). It is not hard to pass from \eqref{eq:isoperimetry_weakened} to the stronger statement \eqref{eq:isoperimetry}, possibly with degraded constants (i.e.,
larger $K$ and/or smaller $\kappa$). However, since we mainly care about sufficiently large values of $r$,
\eqref{eq:isoperimetry_weakened} with {\em sharper constants} is preferable. In the sequel, therefore, whenever we talk
about Gaussian concentration with constant $\kappa > 0$, we will normally refer to \eqref{eq:isoperimetry_weakened}, unless
stated otherwise.
\end{remark}

Here are a few standard examples (see \cite[Section~1.1]{Ledoux}):
\begin{enumerate}
	\item {\bf Standard Gaussian distribution} --- if $\cX = \reals^n$, $d(x,y) = \| x - y \|$ is the standard Euclidean metric, and $\mu = G^n$ is the standard Gaussian distribution, then for every Borel set $A \subseteq \reals^n$ with $G^n(A) \ge 1/2$ we have
    \begin{align}\label{eq:Gaussian_iso}
	G^n(A_r) & \ge \frac{1}{\sqrt{2\pi}} \, \int_{-\infty}^r \exp\left(-\frac{t^2}{2}\right) \, \d t \nonumber \\
             & \ge 1 - \frac{1}{2} \, \exp\left(-\frac{r^2}{2}\right), \qquad \forall \, r > 0
	\end{align}	
	i.e., \eqref{eq:isoperimetry} holds with $K=\frac{1}{2}$ and $\kappa = \frac{1}{2}$.
	\item {\bf Uniform distribution on the unit sphere} --- if $\cX = \sphere^n \equiv \left\{ x \in \reals^{n+1} : \| x \| = 1 \right\}$, $d$ is given by the geodesic distance on $\sphere^n$, and $\mu = \sigma^n$ (the uniform distribution on $\sphere^n$), then for every Borel set $A \subseteq \sphere^n$ with $\sigma^n(A) \ge 1/2$ we have
	\begin{align}
		\sigma^n(A_r) \ge 1 - \exp\left(-\frac{(n-1)r^2}{2}\right), \qquad \forall \, r > 0.
	\end{align}
	In this instance, \eqref{eq:isoperimetry} holds with $K = 1$ and $\kappa = (n-1)/2$. Notice that $\kappa$ is increasing with the ambient dimension $n$.
	\item {\bf Uniform distribution on the Hamming cube} --- if $\cX = \{0,1\}^n$, $d$ is the normalized Hamming metric
	\begin{align*}
		d(x,y) = \frac{1}{n}\sum^n_{i=1}1_{\{x_i \neq y_i \}}
	\end{align*}
	for all $x = (x_1, \ldots, x_n), y = (y_1, \ldots, y_n) \in \{0,1\}^n$, and $\mu = B^n$ is the uniform distribution on $\{0,1\}^n$ (which is equal to the product of $n$ copies of a $\Bernoulli(1/2)$ measure on $\{0,1\}$, i.e., $B^n(A)=\frac{|A|}{2^n}$ where $|A|$ denotes the cardinality of an arbitrary set $A \subseteq \{0, 1\}^n$). Then, for every $A \subseteq \{0,1\}^n$ with $B^n(A) \ge 1/2$, we have
	\begin{align}
		B^n(A_r) \ge 1 - \exp\left(-2nr^2\right), \qquad \forall \, r > 0
	\end{align}
	so \eqref{eq:isoperimetry} holds with $K=1$ and $\kappa = 2n$.
\end{enumerate}

\begin{remark} Gaussian concentration of the form \eqref{eq:isoperimetry} is often discussed in the context of the so-called {\em isoperimetric inequalities}, which relate the full measure of a set to the measure of its boundary. To be more specific, consider a metric probability space $(\cX,d,\mu)$, and for an arbitrary Borel set $A \subseteq \cX$ define its {\em surface measure} as (see \cite[Section~2.1]{Ledoux})
\begin{align}
	\mu^+(A) \deq \liminf_{r \rightarrow 0} \frac{\mu(A_r \setminus A)}{r} =
    \liminf_{r \rightarrow 0} \frac{\mu(A_r) - \mu(A)}{r}.
\end{align}
Then, the classical Gaussian isoperimetric inequality can be stated as follows: If $H$ is a half-space in $\reals^n$, i.e., $H = \{ x \in \reals^n: \langle x,u \rangle < c \}$ for some $u \in \reals^n$ with $\| u \| =1$ and some $c \in [-\infty,+\infty]$, and if $A \subseteq \reals^n$ is a Borel set with $G^n(A) = G^n(H)$, then
\begin{align}\label{eq:Gaussian_isoperimetric_inequality}
	(G^n)^+(A) \ge (G^n)^+(H),
\end{align}
with equality if and only if $A$ is a half-space. In other words, the Gaussian isoperimetric inequality \eqref{eq:Gaussian_isoperimetric_inequality} says that, among all Borel subsets of $\reals^n$ with a given Gaussian volume, the half-spaces have the smallest surface measure. An equivalent integrated version of \eqref{eq:Gaussian_isoperimetric_inequality} says the following (see, e.g., \cite{Bobkov_iso}): Consider a Borel set $A$ in $\reals^n$ and a half-space $H = \{ x : \langle x,u \rangle < c \}$ with $\| u \| =1$, $c \ge 0$ and $G^n(A) = G^n(H)$. Then, for every $r > 0$, we have
\begin{align*}
G^n(A_r) \ge G^n(H_r),
\end{align*}
with equality if and only if $A$ is itself a half-space. Moreover, an easy calculation shows that
\begin{align*}
G^n(H_r) &= \frac{1}{\sqrt{2\pi}} \int^{c+r}_{-\infty} \exp\left(-\frac{\xi^2}{2}\right) \d\xi \\
&\ge 1 - \frac{1}{2} \, \exp\left(-\frac{(r+c)^2}{2} \right), \quad \forall \, r > 0.
\end{align*}
So, if $G(A) \ge 1/2$, we can always choose $c=0$ and get \eqref{eq:Gaussian_iso}.
\end{remark}

Intuitively, what \eqref{eq:isoperimetry} says is that, if $\mu$ has normal concentration on $(\cX,d)$, then most of the probability mass in $\cX$ is concentrated around any set with probability at least $1/2$. At first glance, this seems to have nothing to do with what we have been looking at all this time, namely the concentration of Lipschitz functions on $\cX$ around their mean. However, as we will now show, the geometric and the functional pictures of the concentration of measure phenomenon are, in fact, equivalent. To that end, let us define the {\em median} of a function $f \colon \cX \to \reals$: we say that a real number $m_f$ is a median of $f$ with respect to $\mu$ (or a {\em $\mu$-median} of $f$) if
\begin{align}\label{eq:median}
	\pr_\mu \big( f(X) \ge m_f \big) \ge \frac{1}{2} \qquad \text{and} \qquad \pr_\mu \big( f(X) \le m_f \big) \ge \frac{1}{2}
\end{align}
(note that a median of $f$ may not be unique). The precise result is as follows:

\begin{theorem}\label{thm:Lipschitz_isoperimetry} Let $(\cX,d,\mu)$ be a metric probability space. Then $\mu$ has the normal concentration property \eqref{eq:isoperimetry} (with arbitrary constants $K, \kappa > 0$) if and only if for every Lipschitz function $f \colon \cX \to \reals$ (where the Lipschitz property is defined with respect to the metric $d$) we have
	\begin{align}\label{eq:Lipschitz_isoperimetry}
		\pr_\mu \Big( f(X) \ge m_f + r \Big) \le K \exp\biggl(-\frac{\kappa r^2}{ \| f \|^2_{\rm Lip}} \biggr), \qquad \forall \, r > 0
	\end{align}
	where $m_f$ is a $\mu$-median of $f$.
\end{theorem}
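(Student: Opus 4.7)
The plan is to prove the two implications separately, exploiting in each direction the distance function $x \mapsto d(x,A)$ as the bridge between the geometric object (the blowup $A_r$) and the functional object (a Lipschitz function).

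For the forward direction (isoperimetry $\Rightarrow$ Lipschitz concentration), I would fix a Lipschitz $f$ with $L = \|f\|_{\mathrm{Lip}}$ and consider the sub-median set $A = \{x \in \cX : f(x) \le m_f\}$, which by definition of the median satisfies $\mu(A) \ge 1/2$. The key geometric observation is that for any $x \in A_{r/L}$ there exists $y \in A$ with $d(x,y) < r/L$, whence the Lipschitz property gives
\begin{equation*}
f(x) \le f(y) + L\, d(x,y) < m_f + r,
\end{equation*}
so $A_{r/L} \subseteq \{f < m_f + r\}$. Applying \eqref{eq:isoperimetry} with blowup parameter $r/L$ yields $\mu(A_{r/L}) \ge 1 - K\exp(-\kappa r^2/L^2)$, which is exactly \eqref{eq:Lipschitz_isoperimetry}.

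For the reverse direction (Lipschitz concentration $\Rightarrow$ isoperimetry), I would fix a Borel set $A$ with $\mu(A) \ge 1/2$ and apply \eqref{eq:Lipschitz_isoperimetry} to the distance function $f(x) \deq d(x,A)$. This $f$ is $1$-Lipschitz by the triangle inequality, so $\|f\|_{\mathrm{Lip}} \le 1$. Moreover $m_f = 0$ is a median: $\pr_\mu(f \ge 0) = 1 \ge 1/2$ trivially, and $\pr_\mu(f \le 0) \ge \mu(A) \ge 1/2$ since $f$ vanishes on $A$. Because $\{f < r\} = A_r$ by definition \eqref{eq:blowup_set}, the bound \eqref{eq:Lipschitz_isoperimetry} gives
\begin{equation*}
\mu(\cX \setminus A_r) = \pr_\mu(f \ge r) \le K\exp(-\kappa r^2),
\end{equation*}
which is \eqref{eq:isoperimetry}.

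I expect no serious obstacle: both directions are short and rest on the same two ingredients (the $1$-Lipschitz property of $x \mapsto d(x,A)$ and the identification $\{d(\cdot,A) < r\} = A_r$). The only point requiring care is bookkeeping around strict versus non-strict inequalities in the definitions of $A_r$ and of the median --- specifically, making sure that in the forward direction $A$ is taken as the closed sub-level set $\{f \le m_f\}$ (so that $\mu(A) \ge 1/2$ follows from the second inequality in \eqref{eq:median}), and in the reverse direction that $m_f = 0$ is verified using both inequalities in \eqref{eq:median}.
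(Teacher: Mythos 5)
Your proposal is correct and follows essentially the same route as the paper's proof: in the forward direction it uses the sub-median level set $\{f \le m_f\}$ and the Lipschitz bound to compare with its blowup, and in the reverse direction it applies the hypothesis to the distance function $d(\cdot,A)$, which is $1$-Lipschitz with median zero. The only cosmetic difference is that the paper normalizes to $\|f\|_{\rm Lip} = 1$ at the outset, whereas you carry the general Lipschitz constant $L$ through the argument; the substance is identical.
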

\begin{proof} Suppose that $\mu$ satisfies \eqref{eq:isoperimetry}. Fix an arbitrary Lipschitz function $f$, where, without loss of generality, we may assume that $\| f \|_{\rm Lip} = 1$. Let $m_f$ be a $\mu$-median of $f$, and define the set
$$A^f \deq \Big\{ x \in \cX \colon f(x) \le m_f \Big\}.$$
By definition of the median in \eqref{eq:median}, $\mu(A^f) \ge 1/2$. Consequently, by \eqref{eq:isoperimetry}, we have
	\begin{align}\label{eq:Lipschitz_isoperimetry_1}
		\mu(A^f_r) &\equiv \pr_\mu \left( d(X,A^f) < r\right) \nonumber\\
		&\ge 1 - K \exp(-\kappa r^2), \qquad \forall \, r > 0.
	\end{align}
By the Lipschitz property of $f$, for every $y \in A^f$ we have $f(X) - m_f \le f(X) - f(y) \le d(X,y)$, so $f(X) - m_f \le d(X,A^f)$. This, together with \eqref{eq:Lipschitz_isoperimetry_1}, implies that
\begin{align*}
\pr_\mu \Big( f(X) - m_f < r \Big) &\ge \pr_\mu \Big( d(X,A^f) < r \Big) \\
&\ge 1 - K \exp(-\kappa r^2),
\qquad \forall \, r > 0
\end{align*}
which is \eqref{eq:Lipschitz_isoperimetry}.

Conversely, suppose \eqref{eq:Lipschitz_isoperimetry} holds for every Lipschitz $f$. Choose an arbitrary Borel set $A$ with $\mu(A) \ge 1/2$, and define the function $f_A(x) \deq d(x,A)$ for every $x \in \cX$. Then $f_A$ is $1$-Lipschitz, since
\begin{align*}
	|f_A(x)-f_A(y)| &= \Bigl|\inf_{u \in A}d(x,u) - \inf_{u \in A}d(y,u)\Bigr| \\
	&\le \sup_{u \in A}\left|d(x,u)-d(y,u)\right| \\
	&\le d(x,y),
\end{align*}
where the last step is by the triangle inequality. Moreover, zero is a median of $f_A$, since
\begin{align*}
	\pr_\mu \big( f_A(X) \le 0 \big) = \pr_\mu\big(X \in A\big) \ge \frac{1}{2} \qquad \text{and}\qquad
	\pr_\mu \big( f_A(X) \ge 0 \big) \ge \frac{1}{2},
\end{align*}
where the second bound is vacuously true since $f_A \ge 0$ everywhere. Consequently, with $m_f=0$, we get
\begin{align*}
	1-\mu(A_r) &= \pr_\mu \big( d(X,A) \ge r \big) \\
	&= \pr_\mu \big( f_A(X) \ge m_f + r \big) \\
	&\le K \exp( - \kappa r^2), \qquad \forall \, r > 0
\end{align*}
which gives \eqref{eq:isoperimetry}.
\end{proof}
In fact, for Lipschitz functions, normal concentration around the mean also implies
normal concentration around every median, but possibly with worse constants \cite[Proposition~1.7]{Ledoux}:
\begin{theorem} \label{thm: Lipschitz_mean_conc}
Let $(\cX,d,\mu)$ be a metric probability space, such that for every $1$-Lipschitz function
$f \colon \cX \to \reals$ we have
\begin{align}\label{eq:Lipschitz_mean_conc}
\pr_\mu \Big( f(X) \ge \expectation_\mu [f(X)] + r\Big)
\le K_0 \exp\big(-\kappa_0 r^2\big), \qquad \forall \, r > 0
\end{align}
with some constants $K_0,\kappa_0 > 0$. Then, $\mu$ has the normal concentration property \eqref{eq:isoperimetry} with $K = K_0$ and $\kappa = \frac{\kappa_0}{4}$. Consequently, the concentration inequality in \eqref{eq:Lipschitz_isoperimetry}
around every median $m_f$ is satisfied with the same constants of $\kappa$ and $K$.
\end{theorem}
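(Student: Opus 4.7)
The plan is to deduce the geometric concentration bound \eqref{eq:isoperimetry} (with constants $K=K_0$, $\kappa = \kappa_0/4$) directly from the hypothesis \eqref{eq:Lipschitz_mean_conc}, and then invoke the ``only if'' direction of Theorem~\ref{thm:Lipschitz_isoperimetry} to obtain the concentration around any median with the same constants. So the real work is in establishing \eqref{eq:isoperimetry}.

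First, I would upgrade the hypothesis to a two-sided bound: for any $1$-Lipschitz $f$, the function $-f$ is also $1$-Lipschitz with mean $-\expectation_\mu f$, so applying \eqref{eq:Lipschitz_mean_conc} to $-f$ yields
\begin{equation*}
\pr_\mu \Bigl( f(X) \le \expectation_\mu[f(X)] - r \Bigr) \le K_0 \exp(-\kappa_0 r^2), \qquad \forall\, r > 0.
\end{equation*}
Next, fix an arbitrary Borel set $A \subseteq \cX$ with $\mu(A) \ge 1/2$ and set $f_A(x) \deq d(x,A)$; as in the proof of Theorem~\ref{thm:Lipschitz_isoperimetry}, $f_A$ is $1$-Lipschitz, non-negative, and $\pr_\mu (f_A(X) = 0) = \mu(A) \ge 1/2$. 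Applying the two-sided bound above to $f_A$ with $r = \expectation_\mu f_A$ gives
\begin{equation*}
\tfrac{1}{2} \le \pr_\mu \Bigl( f_A(X) \le 0 \Bigr) \le K_0 \exp\bigl(-\kappa_0 (\expectation_\mu f_A)^2\bigr),
\end{equation*}
so $\expectation_\mu f_A \le C \deq \sqrt{\ln(2K_0)/\kappa_0}$, i.e., the mean of $f_A$ cannot be far from its median $0$.

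The final step is a case split on $r$. For $r \ge 2 \expectation_\mu f_A$, the triangle-like inequality $r - \expectation_\mu f_A \ge r/2$ together with \eqref{eq:Lipschitz_mean_conc} gives
\begin{equation*}
1 - \mu(A_r) = \pr_\mu(f_A(X) \ge r) \le K_0 \exp\bigl(-\kappa_0 (r - \expectation_\mu f_A)^2\bigr) \le K_0 \exp(-\kappa_0 r^2/4),
\end{equation*}
which is exactly the desired bound. For $r < 2 \expectation_\mu f_A \le 2C$, the bound $K_0 \exp(-\kappa_0 r^2/4)$ exceeds $K_0 \exp(-\kappa_0 C^2) = 1/2$, so
\begin{equation*}
1 - K_0\exp(-\kappa_0 r^2/4) \le \tfrac{1}{2} \le \mu(A) \le \mu(A_r),
\end{equation*}
and the inequality holds trivially. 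In both cases we obtain \eqref{eq:isoperimetry} with the stated constants, and Theorem~\ref{thm:Lipschitz_isoperimetry} then yields the claimed concentration around any median.

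The only mildly subtle point is the small-$r$ regime, where the Gaussian tail bound is vacuous as a probability bound but still must be shown to hold; this is handled cleanly by the choice of $C$ making $K_0 e^{-\kappa_0 C^2} = 1/2$, so that the crossover between the two cases lines up precisely with the trivial bound $\mu(A_r) \ge \mu(A) \ge 1/2$. Everything else is a straightforward application of the hypothesis to the distance function $f_A$ and of Theorem~\ref{thm:Lipschitz_isoperimetry} to translate between the functional and geometric formulations.
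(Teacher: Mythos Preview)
Your argument is correct, but the paper takes a different and somewhat slicker route. Instead of working with the unbounded distance function $f_A(x)=d(x,A)$ and controlling its mean via the lower-tail bound, the paper applies the hypothesis to the \emph{truncated} function $f_{A,r}(x)=\min\{d(x,A),r\}$. The key observation is that $\expectation_\mu f_{A,r}\le (1-\mu(A))r$ (since $f_{A,r}$ vanishes on $A$ and is at most $r$ elsewhere), so a single application of \eqref{eq:Lipschitz_mean_conc} with deviation $r-\expectation_\mu f_{A,r}\ge r\mu(A)$ immediately yields $1-\mu(A_r)\le K_0\exp\bigl(-\kappa_0(\mu(A))^2 r^2\bigr)$, and the assumption $\mu(A)\ge 1/2$ gives the result. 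This avoids your case split entirely and sidesteps two minor issues in your approach: first, your argument tacitly assumes $\expectation_\mu f_A<\infty$, which the truncation makes automatic; second, your definition $C=\sqrt{\ln(2K_0)/\kappa_0}$ only makes sense when $K_0\ge 1/2$ (for smaller $K_0$ your lower-tail step actually forces $\expectation_\mu f_A=0$, so only Case~1 is needed, but you would need to say this). The paper's truncation trick is worth remembering: it simultaneously handles integrability and gives a mean bound that scales with $r$, removing the need to treat small $r$ separately.
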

\begin{proof} Let $A \subseteq \cX$ be an arbitrary Borel set with $\mu(A) \ge \frac{1}{2}$, and fix some $r > 0$. Define the function $f_{A,r}(x) \deq \min \left\{ d(x,A), r\right\}$. From the triangle inequality,
$\|f_{A,r}\|_{\rm Lip} \le 1$ and
	\begin{align}
		\expectation_\mu [f_{A,r}(X)] &= \int_\cX \min\left\{ d(x,A),r\right\} \mu(\d x) \nonumber \\
		&= \underbrace{\int_A \min\left\{ d(x,A),r\right\} \mu(\d x)}_{=0}
           + \int_{A^c} \min\left\{ d(x,A),r\right\} \mu(\d x) \nonumber \\
		&\le r \, \mu(A^c) = \bigl(1-\mu(A)\bigr) \, r. \label{eq:mean_vs_mass}
	\end{align}
Then
\begin{align*}
	1-\mu(A_r) &= \pr_\mu \Big( d(X,A) \ge r \Big) \\
	&= \pr_\mu \Big( f_{A,r}(X) \ge r \Big) \\
	&\le \pr_\mu \Big( f_{A,r}(X) \ge \expectation_\mu [f_{A,r}(X)] + r\mu(A)\Big) \\
	&\le K_0 \exp\left( - \kappa_0 \left(\mu(A)r\right)^2\right) \\
    &\le K_0 \exp\left( - \frac{\kappa_0 \, r^2}{4}\right)
\end{align*}
where the first two steps use the definition of $f_{A,r}$, the third step uses \eqref{eq:mean_vs_mass}, the fourth step uses \eqref{eq:Lipschitz_mean_conc}, and the last step holds since by assumption $\mu(A) \ge \frac{1}{2}$. Consequently, we get \eqref{eq:isoperimetry} with $K=K_0$ and $\kappa=\frac{\kappa_0}{4}$. Theorem~\ref{thm:Lipschitz_isoperimetry} therefore implies that the concentration
inequality in \eqref{eq:Lipschitz_isoperimetry} holds for every median $m_f$ with the same constants of $\kappa$ and $K$.
\end{proof}
\begin{remark}
Let $(\cX,d,\mu)$ be a metric probability space, and suppose that $\mu$ has the normal concentration property
\eqref{eq:isoperimetry} (with arbitrary constants $K, \kappa > 0$). Let $f \colon \cX \to \reals$ be an arbitrary
Lipschitz function (with respect to the metric $d$). Then we can upper-bound the distance between the mean and an arbitrary
$\mu$-median of $f$ in terms of the parameters $K,\kappa$ and the Lipschitz constant of $f$. From
Theorem~\ref{thm:Lipschitz_isoperimetry}, we have
\begin{align*}
\big| \expectation_\mu [f(X)] - m_f \big| & \le \expectation_\mu \big[|f(X) - m_f|\big]  \\
& = \int_0^{\infty} \pr_\mu ( |f(X) - m_f| \ge r) \, \d r  \\
& \le \int_0^{\infty} 2K \exp\biggl(-\frac{\kappa r^2}{ \| f \|^2_{\rm Lip}} \biggr) \, \d r  \\
& = \sqrt{\frac{\pi}{\kappa}} \, K \| f \|_{\rm Lip}
\end{align*}
where the first equality holds due to the fact that if $U$ is a non-negative random variable then
$\expectation[U] = \int_0^{\infty} \pr(U \ge r) \, \d r$ (this equality follows as a consequence of
Fubini's theorem), and the second inequality follows from the (one-sided) concentration inequality in
\eqref{eq:Lipschitz_isoperimetry} applied to $f$ and $-f$ (both functions have the same Lipschitz
constant).
\end{remark}

\subsection{Marton's argument: from transportation to concentration}

As we have just seen, the phenomenon of concentration is fundamentally geometric in nature, as captured by the isoperimetric inequality \eqref{eq:isoperimetry}. Once we have established \eqref{eq:isoperimetry} on a given metric probability space $(\cX,d,\mu)$, we immediately obtain Gaussian concentration for all Lipschitz functions $f \colon \cX \to \reals$ by Theorem~\ref{thm:Lipschitz_isoperimetry}.

There is a powerful information-theoretic technique for deriving concentration inequalities like \eqref{eq:isoperimetry}. This technique, first introduced by Marton (see \cite{Marton_dbar} and \cite{Marton_blowup}), hinges on a certain type of inequality that relates the divergence between two probability measures to a quantity called the {\em transportation cost}. Let $(\cX,d)$ be a Polish space. Given $p \ge 1$, let $\cP_p(\cX)$ denote the space of all Borel probability measures $\mu$ on $\cX$, such that the moment bound
\begin{align}\label{eq:pth_moment}
	\expectation_\mu[d^p(X,x_0)] < \infty
\end{align}
holds for some (and hence all) $x_0 \in \cX$.
\begin{definition} Given $p \ge 1$, the {\em $L^p$ Wasserstein distance} (a.k.a. the Wasserstein distance of order $p$) between a pair
$\mu,\nu \in \cP_p(\cX)$ is defined as
\begin{align}\label{eq:W_p}
	W_p(\mu,\nu) \deq \inf_{\pi \in \Pi(\mu,\nu)} \left( \int_{\cX \times \cX} d^p(x,y) \, \pi(\d x, \d y)\right)^{1/p},
\end{align}
where $\Pi(\mu,\nu)$ is the set of all probability measures $\pi$ on the product space $\cX \times \cX$ with marginals $\mu$ and $\nu$.
\end{definition}
\begin{remark}
Another equivalent way of writing down the definition of $W_p(\mu,\nu)$ is
\begin{align}\label{eq:W_p_2}
	W_p(\mu,\nu) = \inf_{X \sim \mu, \, Y \sim \nu} \bigl\{\expectation[d^p(X,Y)]\bigr\}^{1/p},
\end{align}
where the infimum is over all pairs $(X,Y)$ of jointly distributed random variables with values in $\cX$, such that $P_X = \mu$ and $P_Y = \nu$.
\end{remark}
The name ``transportation cost'' comes from the following interpretation: Let $\mu$ (resp., $\nu$) represent the initial (resp., desired) distribution of some matter (say, sand) in space, such that the total mass in both cases is normalized to one. Thus, both $\mu$ and $\nu$ correspond to sand piles of some given shapes. The objective is to rearrange the initial sand pile with shape $\mu$ into one with shape $\nu$ with minimum cost, where the cost of transporting a grain of sand from location $x$ to location $y$ is given by $c(x,y)$ for a measurable function $c \colon \cX \times \cX \to \reals$. If we allow randomized transportation policies, i.e., those that associate with each location $x$ in the initial sand pile a conditional probability distribution $\pi(\d y|x)$ for its destination in the final sand pile, then the minimum transportation cost is given by
\begin{align}\label{eq:optimal_transport}
	C^*(\mu,\nu) \deq \inf_{\pi \in \Pi(\mu,\nu)}\int_{\cX \times \cX} c(x,y) \, \pi(\d x, \d y).
\end{align}
When the cost function is given by $c = d^p$ for some $p \ge 1$ and $d$ is a metric on $\cX$, we will have $C^*(\mu,\nu) = W^p_p(\mu,\nu)$. The optimal transportation problem \eqref{eq:optimal_transport} has a rich history, dating back to a 1781 essay by Gaspard Monge, who has considered a particular special case of the problem
	\begin{align}\label{eq:Monge_problem}
		C^*_0(\mu,\nu) \deq \inf_{\varphi \colon \cX \to \cX} \left\{ \int_{\cX} c(x,\varphi(x)) \, \mu(\d x) \colon \mu \circ \varphi^{-1} = \nu \right\}.
	\end{align}
	Here, the infimum is over all {\em deterministic} transportation policies, i.e., measurable mappings $\varphi \colon \cX \to \cX$, such that the desired final measure $\nu$ is the image of $\mu$ under $\varphi$, or, in other words, if $X \sim \mu$, then $Y = \varphi(X) \sim \nu$. The problem \eqref{eq:Monge_problem} (or the {\em Monge optimal transportation problem}, as it has now come to be called) does not always admit a solution (incidentally, an optimal mapping does exist in the case considered by Monge, namely $\cX = \reals^3$ and $c(x,y) = \| x - y \|$). A stochastic relaxation of Monge's problem, given by \eqref{eq:optimal_transport}, was considered in 1942 by Leonid Kantorovich
(see \cite{Kantorovich} for a recent reprint). We recommend the books by Villani \cite{Villani_TOT,Villani_newbook} for a detailed historical overview and rigorous treatment of optimal transportation.

The following lemma introduces properties of the Wasserstein distances. For a proof, the reader is
referred to \cite[Chapter~6]{Villani_newbook}.
\begin{lemma}\label{lm:W_properties} The Wasserstein distances have the following properties:
	\begin{enumerate}
		\item For each $p \geq 1$, $W_p(\cdot,\cdot)$ is a metric on $\cP_p(\cX)$.
		\item If $1 \le p \le q$, then $\cP_p(\cX) \supseteq \cP_q(\cX)$, and $W_p(\mu,\nu) \le W_q(\mu,\nu)$ for every $\mu,\nu \in \cP_q(\cX)$.
		\item $W_p$ metrizes weak convergence plus convergence of $p$th-order moments: a sequence $\{\mu_n\}^\infty_{n=1}$ in $\cP_p(\cX)$ converges to $\mu \in \cP_p(\cX)$ in $W_p$, i.e., $W_p(\mu_n,\mu) \xrightarrow{n \to \infty} 0$, if and only if:
		\begin{enumerate}
			\item $\{\mu_n\}$ converges to $\mu$ weakly, i.e., $\expectation_{\mu_n}[\varphi] \xrightarrow{n \to \infty} \expectation_\mu[\varphi]$ for every continuous and bounded function $\varphi \colon \cX \to \reals$.
			\item For some (and hence all) $x_0 \in \cX$,
			$$
			\int_\cX d^p(x,x_0) \, \mu_n(\d x) \xrightarrow{n \to \infty} \int_\cX d^p(x,x_0) \, \mu(\d x).
			$$
		\end{enumerate}
		If the above two statements hold, then we say that $\{\mu_n\}$ converges to $\mu$ {\em weakly in $\cP_p(\cX)$}.
		\item The mapping $(\mu,\nu) \mapsto W_p(\mu,\nu)$ is continuous on $\cP_p(\cX)$, i.e., if $\mu_n \to \mu$ and $\nu_n \to \nu$ converge weakly in $\cP_p(\cX)$, then $W_p(\mu_n,\nu_n) \to W_p(\mu,\nu)$. However, it is only {\em lower semicontinuous} in the usual weak topology (without the convergence of $p$th-order moments): if $\mu_n \to \mu$ and $\nu_n \to \nu$ converge weakly, then
		$$
		\liminf_{n \to \infty} W_p(\mu_n,\nu_n) \ge W_p(\mu,\nu).
		$$
		\item The infimum in \eqref{eq:W_p} [and therefore in \eqref{eq:W_p_2}] is actually a minimum; i.e., there exists an {\em optimal coupling} $\pi^* \in \Pi(\mu,\nu)$, such that
		$$
		W^p_p(\mu,\nu) = \int_{\cX \times \cX} d^p(x,y) \, \pi^*(\d x, \d y).
		$$
		Equivalently, there exists a pair $(X^*,Y^*)$ of jointly distributed $\cX$-valued random variables with $P_{X^*} = \mu$ and $P_{Y^*} = \nu$, such that
		$$
		W^p_p(\mu,\nu) = \expectation[d^p(X^*,Y^*)].
		$$
		\item If $p=2$, $\cX = \reals$ with $d(x,y) = |x-y|$, and $\mu$ is atomless (i.e., $\mu(\{x\})= 0$ for all $x \in \reals$), then the optimal coupling between $\mu$ and every $\nu$ is given by the deterministic mapping
		\begin{align*}
			Y = \sF^{-1}_\nu \circ \sF_\mu (X)
		\end{align*}
		for $X \sim \mu$, where $\sF_\mu$ denotes the cumulative distribution function (cdf) of $\mu$, i.e., $\sF_\mu(x) = \pr_\mu ( X \le x)$, and $\sF^{-1}_\nu$ is the {\em quantile function} of $\nu$, i.e., $\sF^{-1}_\nu(\alpha) \deq \inf \left\{ x \in \reals \colon \sF_\nu(x) \ge \alpha \right\}$.
	\end{enumerate}
\end{lemma}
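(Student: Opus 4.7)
The plan is to prove the six properties in an order that reflects their logical dependencies rather than their stated order, relying on three standard tools throughout: the gluing lemma, which constructs a probability measure on $\cX\times\cX\times\cX$ given two couplings sharing a common marginal; Prokhorov's theorem, via the observation that for any $\mu,\nu$ on a Polish space $\cX$ the coupling set $\Pi(\mu,\nu)$ is tight (both marginals are tight on $\cX$) and hence weakly relatively compact on $\cX\times\cX$; and lower semicontinuity of the functional $\pi\mapsto\int d^p\,d\pi$ under weak convergence, which follows from writing $d^p$ as the supremum of an increasing sequence of bounded continuous functions.

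I would first establish property 5 (attainment), since several later arguments rely on being able to pick a genuine optimal coupling: tightness of $\Pi(\mu,\nu)$ plus lower semicontinuity of the cost immediately yields a minimizer along any minimizing sequence. Properties 1 and 2 then follow with standard arguments. Nonnegativity, symmetry, and $W_p(\mu,\mu)=0$ (via the diagonal coupling) are immediate; if $W_p(\mu,\nu)=0$, the optimal coupling supplied by property 5 must concentrate on the diagonal, forcing $\mu=\nu$. For the triangle inequality I would glue near-optimal couplings $\pi_{12}\in\Pi(\mu_1,\mu_2)$ and $\pi_{23}\in\Pi(\mu_2,\mu_3)$ into a probability measure on $\cX^3$, extract its $(1,3)$-marginal as a candidate coupling in $\Pi(\mu_1,\mu_3)$, and combine the pointwise inequality $d(X_1,X_3)\le d(X_1,X_2)+d(X_2,X_3)$ with Minkowski's inequality in $L^p$. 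Property 2 is Jensen's inequality in the form that, for a probability measure $\pi$, the map $p\mapsto\bigl(\int d^p\,d\pi\bigr)^{1/p}$ is nondecreasing, applied to the reference point $x_0$ (giving $\cP_q(\cX)\subseteq\cP_p(\cX)$) and to any candidate coupling of $\mu,\nu$ (giving $W_p\le W_q$).

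Property 4 follows from the machinery already set up: if $\mu_n\to\mu$ and $\nu_n\to\nu$ weakly and $\pi_n^\star\in\Pi(\mu_n,\nu_n)$ is optimal, then $\{\pi_n^\star\}$ is tight, any weak subsequential limit belongs to $\Pi(\mu,\nu)$, and lower semicontinuity of the cost gives $\liminf_{n\to\infty} W_p(\mu_n,\nu_n)\ge W_p(\mu,\nu)$; continuity in the $\cP_p$-topology is then a consequence of the reverse triangle inequality $|W_p(\mu_n,\nu_n)-W_p(\mu,\nu)|\le W_p(\mu_n,\mu)+W_p(\nu_n,\nu)$ together with property 3. For property 3, the ``only if'' direction is the easier one: $W_p$-convergence implies weak convergence by testing bounded continuous functions against near-optimal couplings, and it forces convergence of $p$-th moments via $d(x,x_0)\le d(x,y)+d(y,x_0)$ combined with Minkowski. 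The converse is the main obstacle in the whole lemma: given weak convergence $\mu_n\to\mu$ and convergence of $p$-th moments, I would invoke the Skorokhod representation theorem on a common probability space to obtain $X_n\sim\mu_n$ and $X\sim\mu$ with $X_n\to X$ almost surely, then upgrade this to $d^p(X_n,X)\to 0$ in $L^1$ by establishing uniform integrability of $\{d^p(X_n,X)\}$ from the hypothesized moment convergence; the coupling $(X_n,X)$ then yields $W_p(\mu_n,\mu)\to 0$.

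Finally, for property 6 with $p=2$, $\cX=\reals$, and $d(x,y)=|x-y|$, I would expand $(x-y)^2=x^2-2xy+y^2$; since the first and third terms are fixed by the marginals, minimizing $\int(x-y)^2\,d\pi$ over $\Pi(\mu,\nu)$ reduces to maximizing $\int xy\,d\pi$. Atomlessness of $\mu$ ensures that $U:=\sF_\mu(X)$ is uniform on $[0,1]$ when $X\sim\mu$, so $Y^\star:=\sF_\nu^{-1}(U)=\sF_\nu^{-1}\circ\sF_\mu(X)$ has law $\nu$ and defines a deterministic coupling. To verify its optimality, I would use a cyclic-monotonicity swapping argument: if some competing coupling places positive mass on points $(x,y)$ and $(x',y')$ with $x<x'$ but $y>y'$, swapping to $(x,y')$ and $(x',y)$ preserves the marginals but strictly increases $\int xy\,d\pi$, contradicting optimality; the optimizer must therefore be supported on a nondecreasing graph, which under atomlessness of $\mu$ is forced to coincide with $\sF_\nu^{-1}\circ\sF_\mu$.
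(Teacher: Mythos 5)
The paper does not give a proof of this lemma: it is stated as a catalogue of standard facts from the optimal transportation literature (the surrounding text points the reader to Villani's monographs \cite{Villani_TOT,Villani_newbook} for a systematic treatment), so there is no ``paper's approach'' to compare against. Your sketch is, however, a correct and essentially complete route to establishing all six items, and it uses exactly the standard toolkit one would expect: tightness of $\Pi(\mu,\nu)$ plus Prokhorov for existence of an optimal coupling; gluing plus Minkowski for the triangle inequality; Lyapunov (Jensen) for monotonicity in $p$; lower semicontinuity of the transport cost for part 4; Skorokhod representation plus uniform integrability for the hard direction of part 3; and quadratic expansion plus a cyclic-monotonicity swap for part 6.

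Two small points are worth spelling out to make the sketch airtight. First, in the converse direction of part 3, the uniform integrability of $\{d^p(X_n,X)\}$ is not automatic; the cleanest route is to note that $d^p(X_n,x_0)\to d^p(X,x_0)$ a.s.\ with convergence of the corresponding expectations, so a Scheff\'e-type argument yields $L^1$ convergence and hence uniform integrability of $\{d^p(X_n,x_0)\}$; since $d^p(X_n,X)\le 2^{p-1}\bigl(d^p(X_n,x_0)+d^p(X,x_0)\bigr)$ and a sequence dominated pointwise by a uniformly integrable one is itself uniformly integrable, Vitali's theorem applies. Second, you have an apparent ordering issue: you prove part 4 before part 3 but invoke part 3 in the continuity half of part 4. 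This is harmless --- the lower-semicontinuity half of part 4 does not use part 3, and the continuity half can simply be postponed until after part 3 --- but the dependency should be made explicit so the argument is visibly non-circular.
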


\begin{definition}\label{def:TC} We say that a probability measure $\mu$ on $(\cX,d)$ satisfies an $L^p$ {\em transportation-cost inequality with constant} $c > 0$, or a ${\rm T}_p(c)$ inequality for short, if for every probability measure $\nu \ll \mu$ we have
	\begin{align}\label{eq:TCI}
		W_p(\mu,\nu) \le \sqrt{2c\, D(\nu \| \mu)}.
	\end{align}
\end{definition}

\begin{example}[Total variation distance and Pinsker's inequality] Here is a specific example illustrating this abstract machinery, which should be a familiar territory to information theorists. Let $\cX$ be a discrete set, equipped with the Hamming metric $d(x,y) = 1_{\{x\neq y\}}$. In this case, the corresponding $L^1$ Wasserstein distance between every two probability measures $\mu$ and $\nu$ on $\cX$ takes the simple form
	\begin{align*}
		W_1(\mu,\nu) &= \inf_{X \sim \mu,Y \sim \nu} \pr \left( X \neq Y \right).
	\end{align*}
As we will now show, this turns out to be the total variation distance
\begin{equation}
	\| \mu - \nu \|_{\rm TV} \deq \sup_{A \subseteq \cX} |\mu(A)-\nu(A)|. \label{eq:TV-def}
\end{equation}

\begin{proposition}
\begin{align}
W_1(\mu, \nu) &= \| \mu - \nu \|_{\rm TV} \label{eq: the Wasserstein distance of 1st order equiped with the Hamming distance} \\
& = \frac{1}{2}\sum_{x \in \cX} |\mu(x) - \nu(x)|   \label{eq: total variation distance with factor 1/2}
\end{align}
(we are slightly abusing notation here, writing $\mu(x)$ for the $\mu$-probability of the singleton $\{x\}$).
\label{proposition: the Wasserstein distance of 1st order equiped with the Hamming distance}
\end{proposition}
\begin{proof}
Consider a probability measure $\pi \in \Pi(\mu,\nu)$. For every $x \in \cX$, we have
\begin{align*}
	\mu(x) = \sum_{y \in \cX}\pi(x,y) \ge \pi(x,x),
\end{align*}
and the same goes for $\nu$. Thus, $\pi(x,x) \le \min \left\{ \mu(x),\nu(x)\right\}$, and so
\begin{align}
	\expectation_\pi [d(X, Y)] &= \expectation_\pi [1_{\{X \neq Y\}}] \label{eq:Hamming_cost1} \\
    &= \pr(X \neq Y) \label{eq:Hamming_cost2} \\
    &= 1 - \sum_{x \in \cX} \pi(x,x) \label{eq:Hamming_cost3} \\
	&\ge 1 - \sum_{x \in \cX} \min \left\{ \mu(x),\nu(x)\right\}. \label{eq:Hamming_cost4}
\end{align}
From \eqref{eq:W_p_2}, \eqref{eq:Hamming_cost1} and \eqref{eq:Hamming_cost4}, we have
\begin{align}
W_1(\mu, \nu) \geq 1 - \sum_{x \in \cX} \min \left\{ \mu(x),\nu(x)\right\}.
\label{eq:lower bound on W_1}
\end{align}

In the following, equality \eqref{eq: total variation distance with factor 1/2} is proved first.
For an arbitrary $A \subseteq \cX$, we have
\begin{align}
\mu(A) - \nu(A)
& = \big(1-\mu(A^c)\big) - \big(1-\nu(A^c)\big) \nonumber \\
& = \nu(A^c) - \mu(A^c)  \label{eq:step1 - TV}
\end{align}
and, from the triangle inequality,
\begin{align}
& \big|\mu(A) - \nu(A)\big| + \big|\mu(A^c) - \nu(A^c)\big| \nonumber \\
& \leq \sum_{x \in A} \bigl| \mu(x) - \nu(x) \bigr| + \sum_{x \in A^c} \bigl| \mu(x) - \nu(x) \bigr| \nonumber \\
& = \sum_{x \in \cX} \bigl| \mu(x) - \nu(x) \bigr|. \label{eq:step2 - TV}
\end{align}
Combining \eqref{eq:step1 - TV} and \eqref{eq:step2 - TV} gives that, for every $A \subseteq \cX$,
\begin{align}
& \big|\mu(A) - \nu(A)\big| \leq \frac{1}{2} \, \sum_{x \in \cX} \bigl| \mu(x) - \nu(x) \bigr|.  \label{eq:step3 - TV}
\end{align}
Since \eqref{eq:step3 - TV} holds for every $A \subseteq \cX$, we can take the supremum over all such subsets $A$
and get (see \eqref{eq:TV-def}) that
\begin{align}
\| \mu - \nu \|_{\rm TV} \leq \frac{1}{2} \, \sum_{x \in \cX} \bigl| \mu(x) - \nu(x) \bigr|.  \label{eq:step4 - TV}
\end{align}
On the other hand, if we define
\begin{align}
A \deq \{ x \in \cX \colon \mu(x) \ge \nu(x)\}
\label{eq:definition of subset A}
\end{align}
we have from \eqref{eq:step1 - TV} and \eqref{eq:definition of subset A}
\begin{align}
    & \mu(A) - \nu(A) \nonumber \\
    & = \frac{1}{2} \, \Big[ \big(\mu(A) - \nu(A)\big) + \big(\nu(A^c) - \mu(A^c)\big) \Big] \nonumber \\
    & = \frac{1}{2} \, \left[ \sum_{x \in A} \, \big( \mu(x) - \nu(x) \big) + \sum_{x \in A^c} \big( \nu(x) - \mu(x) \big) \right] \nonumber \\
    & = \frac{1}{2} \, \sum_{x \in \cX} \, \big| \mu(x) - \nu(x) \big|  \label{eq:step5 - TV}
\end{align}
so, from \eqref{eq:TV-def} and \eqref{eq:step5 - TV},
\begin{align}
\| \mu - \nu \|_{\rm TV} \geq \frac{1}{2} \, \sum_{x \in \cX} \bigl| \mu(x) - \nu(x) \bigr|.  \label{eq:step6 - TV}
\end{align}
Equality \eqref{eq: total variation distance with factor 1/2} follows by combining
\eqref{eq:step4 - TV} and \eqref{eq:step6 - TV}, and the equality
\begin{align}
\| \mu - \nu \|_{\rm TV} = \mu(A) - \nu(A)  \label{eq:step7 - TV}
\end{align}
holds for the subset $A \subseteq \cX$ defined by \eqref{eq:definition of subset A}.
From \eqref{eq:definition of subset A} and \eqref{eq:step7 - TV}
\begin{align}
	\sum_{x \in \cX} \min\left\{ \mu(x),\nu(x) \right\}
	&= \sum_{x \in A}\nu(x) + \sum_{x \in A^c} \mu(x) \nonumber \\
	&= \nu(A) + \mu(A^c) \nonumber \\
	&= 1 - \Big(\mu(A) - \nu(A)\Big) \nonumber \\
	&= 1 - \| \mu - \nu \|_{\rm TV}.  \label{eq:Hamming_cost5}
\end{align}
Consequently, it follows from \eqref{eq:lower bound on W_1} and \eqref{eq:Hamming_cost5} that
\begin{align}\label{eq:Hamming_cost_3}
	W_1(\mu, \nu) \ge \| \mu - \nu \|_{\rm TV}.
\end{align}
Furthermore, \eqref{eq:Hamming_cost_3} holds with equality for the probability measure
$\pi^* \colon \cX \times \cX \to \reals$ which is defined as follows:
\begin{align}\label{eq:TV_optimal_coupling}
\pi^*(x,y) &= \min \left\{ \mu(x),\nu(x) \right\} 1_{\{x = y\}} \nonumber\\
& \quad+ \frac{\big( \mu(x)-\nu(x)\big)1_{\{ x \in A\}} \big(\nu(y) - \mu(y)\big)1_{\{ y \in A^c\}}}{\mu(A) - \nu(A)}
\end{align}
with the set $A$ in \eqref{eq:definition of subset A}. This can be verified by noticing that
$$\pi^*(x,x) = \min \bigl\{ \mu(x), \, \nu(x) \bigr\}, \quad \forall \, x \in \cX$$
which is the necessary and sufficient condition to satisfy an equality in \eqref{eq:Hamming_cost4};
furthermore, $\pi^*$ is indeed a probability measure (this follows from \eqref{eq:step7 - TV}
and \eqref{eq:Hamming_cost5}) with marginals $\mu$ and $\nu$. To verify this, note that for every $x \in A$
\begin{align*}
\sum_{y \in A} \pi^*(x,y) & = \min \left\{\mu(x), \nu(x)\right\} + \frac{ \bigl( \mu(x)-\nu(x) \bigr)
\, \sum_{y \in A^c} \bigl( \nu(y) - \mu(y) \bigr)}{\mu(A) - \nu(A)} \nonumber \\
& = \nu(x) + \frac{ \bigl( \mu(x)-\nu(x) \bigr) \, \bigl( \nu(A^c) - \mu(A^c) \bigr)}{\mu(A) - \nu(A)} \nonumber \\
& = \nu(x) + \bigl(\mu(x)-\nu(x) \bigr) = \mu(x)
\end{align*}
where the third equality follows from \eqref{eq:step1 - TV}, and for every $x \in A^c$
\begin{align*}
\sum_{y \in A} \pi^*(x,y) = \min \left\{ \mu(x), \nu(x) \right\} = \mu(x).
\end{align*}
A similar result holds for the second marginal distribution $\nu$. This proves
that \eqref{eq:Hamming_cost_3} holds with equality, which gives
\eqref{eq: the Wasserstein distance of 1st order equiped with the Hamming distance}.
\end{proof}

Now that we have expressed the total variation distance $\| \mu - \nu \|_{\rm TV}$ as the $L^1$ Wasserstein
distance induced by the Hamming metric on $\cX$, the well-known Pinsker's inequality
\begin{align}\label{eq:Pinsker}
	\| \mu - \nu \|_{\rm TV} \le \sqrt{\frac{1}{2}D(\nu \| \mu)}
\end{align}
can be identified as a ${\rm T}_1(1/4)$ inequality that holds for every probability measure $\mu$ on $\cX$.
\end{example}

\begin{remark} It should be pointed out that the constant $c=1/4$ in Pinsker's inequality \eqref{eq:Pinsker} is not necessarily the best possible {\em for a given distribution} $\mu$. Ordentlich and Weinberger \cite{Ordentlich_Weinberger_Pinsker} have obtained the following {\em distribution-dependent} refinement of Pinsker's inequality. Let the function $\varphi \colon [0,1/2] \to \reals^+$ be defined by
	\begin{align}\label{eq:Hoeffding_phi}
		\varphi(p) \deq \begin{cases}
		\displaystyle \left(\frac{1}{1-2p}\right) \ln \left(\frac{1-p}{p}\right), & \mbox{if} \;
        p \in \big[0, \frac{1}{2}\big) \\[0.3cm]
		\hspace*{1.5cm} 2, & \mbox{if} \; p = \frac{1}{2}
	\end{cases}
	\end{align}
(in fact, $\varphi(p) \to 2$ as $p \uparrow 1/2$, $\varphi(p) \to \infty$ as $p \downarrow 0$, and
$\varphi$ is a monotonically decreasing and convex function). Let $\cX$ be a discrete set. For every
$P \in \cP(\cX)$, where $\cP(\cX)$ is the set of all probability distributions defined on the set $\cX$,
let the {\em balance coefficient} be defined as
\begin{align*}
	\pi_P \deq \max_{A \subseteq \cX} \, \min\left\{ P(A),1-P(A) \right\} \quad \Longrightarrow \quad
    \pi_P \in \Big[0, \frac{1}{2}\Big].
\end{align*}
Then, for every $Q \in \cP(\cX)$,
\begin{align}\label{eq:Pinsker_refined}
	\| P - Q \|_{\rm TV} \le \sqrt{\frac{1}{\varphi(\pi_P)} \cdot D(Q \| P)}
\end{align}
(see \cite[Theorem~2.1]{Ordentlich_Weinberger_Pinsker}; related results have been considered in ~\cite{Pinsker_inequality_HP_techreport}).
From the above properties of the function $\varphi$, it follows that the distribution-dependent
refinement of Pinsker's inequality is more pronounced when the balance coefficient is small
(i.e., $\pi_P \ll 1$).
Moreover, this bound is optimal for a given $P$, in the sense that
\begin{align}\label{eq:Pinsker_refined_optimality}
	\varphi(\pi_P) = \inf_{Q \in \cP(\cX)} \frac{D(Q \| P)}{\| P - Q \|^2_{\rm TV}}.
\end{align}
For instance, if $\cX = \{0,1\}$ and $P$ is the distribution of a ${\rm Bernoulli}(p)$ random variable, then $\pi_P = \min\{p,1-p\} \in \left[0,\frac{1}{2}\right]$,
\begin{align*}
	\varphi(\pi_P) = \begin{cases}
	\displaystyle \left(\frac{1}{1-2p}\right) \ln \left(\frac{1-p}{p}\right), & \mbox{if} \; p \neq \frac{1}{2} \\[0.3cm]
	\hspace*{1.5cm} 2, & \mbox{if} \; p = \frac{1}{2}
\end{cases}
\end{align*}
and for every other $Q \in \cP(\{0,1\})$ we have, from \eqref{eq:Pinsker_refined},
\begin{align}\label{eq:Pinsker_Bernoulli_p}
	\| P - Q \|_{\rm TV} \le \begin{cases}
	\displaystyle\sqrt{\frac{1-2p}{\ln\left(\frac{1-p}{p}\right)} \cdot D(Q \| P)},
    & \mbox{if} \; p \neq \frac{1}{2} \\[0.7cm]
	\displaystyle\sqrt{\frac{1}{2} \; D(Q \| P)}, & \mbox{if} \; p = \frac{1}{2}.
\end{cases}
\end{align}

Inequality~\eqref{eq:Pinsker_Bernoulli_p} provides an upper bound on the total variation distance in terms of the divergence.
In general, a bound in the reverse direction cannot be derived since it is easy to come up with examples where the total variation
distance is arbitrarily close to zero, whereas the divergence is equal to infinity. However, consider an i.i.d.\ sample
of size $n$ drawn from a probability distribution $P$. Sanov's theorem implies that the
probability that the empirical distribution of the generated sample deviates in total variation from $P$
by at least some $\varepsilon \in (0,1]$ scales asymptotically like $\exp\bigl(-n \, D^*(P, \varepsilon)\bigr)$,
where $$D^*(P, \varepsilon) \triangleq \inf_{Q \colon \|P-Q\|_{\rm TV} \ge \varepsilon}  \; D(Q \| P).$$
Although a reverse form of Pinsker's inequality (or its probability-dependent refinement in
\cite{Ordentlich_Weinberger_Pinsker}) cannot be derived, it was recently proved in
\cite{Berend_Harremoes_Kontorovich} that
$$D^*(P, \varepsilon) \le \varphi(\pi_P) \, \varepsilon^2 + O(\varepsilon^3).$$
This inequality shows that the probability-dependent refinement of Pinsker's inequality in
\eqref{eq:Pinsker_refined} is actually tight for $D^*(P, \varepsilon)$ when $\varepsilon$ is small,
since both upper and lower bounds scale like $\varphi(\pi_P) \, \varepsilon^2$ if $\varepsilon \ll 1$.
\end{remark}

\begin{remark}
Apart of providing a refined upper bound on the total variation distance between two discrete probability
distributions, the refinement of Pinsker's inequality in \eqref{eq:Pinsker_refined} enables to derive a
refined lower bound on the relative entropy when a lower bound on the total variation distance is available.
This approach was studied in \cite{Sason_ITA13} in the context of the Poisson approximation,
where \eqref{eq:Pinsker_refined} was combined with a new lower bound on the total variation distance
(using the so-called Chen--Stein method) between the distribution of a sum of independent Bernoulli
random variables and the Poisson distribution with the same mean (see \cite{Sason_SPL13b}). Note that,
for a sum of i.i.d.\ Bernoulli random variables, the lower bound on this relative entropy (see
\cite{Sason_ITA13}) scales similarly to the upper bound on this relative
entropy derived by Kontoyiannis et al.\ (see \cite[Theorem 1]{KontoyiannisHJ_2005}) using the Bobkov--Ledoux
logarithmic Sobolev inequality for the Poisson distribution  \cite{Bobkov_Ledoux} (see also
Section~\ref{subsection: Log-Sobolev inequalities for Poisson and compound Poisson measures} here).
\end{remark}

\bigskip
Marton's procedure for deriving Gaussian concentration from a transportation-cost inequality \cite{Marton_dbar,Marton_blowup}
can be distilled as follows:

\begin{proposition}\label{prop:Marton_technique} Suppose $\mu$ satisfies a ${\rm T}_1(c)$ inequality. Then,
the Gaussian concentration inequality in \eqref{eq:isoperimetry_weakened} holds with $\kappa = 1/(2c)$, $K=1$, and $r_0 = \sqrt{2 c \ln 2}$.
\end{proposition}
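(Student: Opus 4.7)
The plan is to follow Marton's classical argument. The key observation is that a $\mathrm{T}_1(c)$ inequality provides a quantitative upper bound on the $L^1$ Wasserstein distance between $\mu$ and any probability measure of the form $\mu_A(\cdot) \deq \mu(\cdot \cap A)/\mu(A)$ obtained by conditioning $\mu$ on a Borel set $A$ with $\mu(A) > 0$. First I would compute $D(\mu_A \| \mu)$ explicitly: since $\d\mu_A/\d\mu = 1_A/\mu(A)$, a one-line calculation gives $D(\mu_A\|\mu) = \ln(1/\mu(A))$. Applying the assumed $\mathrm{T}_1(c)$ inequality therefore yields
\begin{align*}
W_1(\mu,\mu_A) \le \sqrt{2c \ln \tfrac{1}{\mu(A)}}.
\end{align*}

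Next, I would pick two Borel sets $A, B$ with positive $\mu$-measure and use the triangle inequality for the Wasserstein metric (item~1 of Lemma~\ref{lm:W_properties}) together with the previous step:
\begin{align*}
W_1(\mu_A,\mu_B) \le W_1(\mu_A,\mu) + W_1(\mu,\mu_B) \le \sqrt{2c \ln \tfrac{1}{\mu(A)}} + \sqrt{2c \ln \tfrac{1}{\mu(B)}}.
\end{align*}
Now I would specialize to $B = (A_r)^c$, assuming $\mu(B) > 0$ (otherwise $\mu(A_r) = 1$ and the desired bound is trivial). The crucial geometric observation is that for any coupling $\pi \in \Pi(\mu_A, \mu_B)$ and any $(x,y) \in \mathrm{supp}(\pi)$, we have $x \in A$ and $y \notin A_r$, so by the definition of $A_r$ in \eqref{eq:blowup_set} we get $d(x,y) \ge r$. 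Hence $W_1(\mu_A, \mu_B) \ge r$, and combining with the previous display gives
\begin{align*}
r \le \sqrt{2c \ln \tfrac{1}{\mu(A)}} + \sqrt{2c \ln \tfrac{1}{\mu(A_r^c)}}.
\end{align*}

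Finally I would use the hypothesis $\mu(A) \ge 1/2$, which gives $\sqrt{2c \ln(1/\mu(A))} \le \sqrt{2c \ln 2} = r_0$. Rearranging produces $\sqrt{2c \ln(1/\mu(A_r^c))} \ge r - r_0$ whenever $r \ge r_0$, which upon squaring and exponentiating yields
\begin{align*}
\mu(A_r^c) \le \exp\!\left(-\frac{(r-r_0)^2}{2c}\right), \qquad \forall\, r \ge r_0,
\end{align*}
i.e., the Gaussian concentration inequality \eqref{eq:isoperimetry_weakened} with $\kappa = 1/(2c)$, $K = 1$, and $r_0 = \sqrt{2c \ln 2}$. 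There is no real obstacle here; the only mild subtlety is handling the degenerate cases $\mu(A_r) = 1$ or $\mu(A) = 1$ separately (where the bound is immediate), and confirming that the $\mathrm{T}_1(c)$ hypothesis applies to $\mu_A$ and $\mu_B$, which it does since both are absolutely continuous with respect to $\mu$.
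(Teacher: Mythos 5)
Your proposal is correct and follows exactly the same Marton-style argument as the paper: condition $\mu$ on $A$ and $B = A_r^c$, compute $D(\mu_A\|\mu) = \ln(1/\mu(A))$, apply ${\rm T}_1(c)$ and the Wasserstein triangle inequality to bound $W_1(\mu_A,\mu_B)$ from above, bound it from below by $r$ via the coupling support argument, and rearrange. No differences worth noting.
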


\begin{proof} Fix two Borel sets $A,B \subset \cX$ with $\mu(A),\mu(B) > 0$. Define the conditional probability measures
	$$
	\mu_A(C) \deq \frac{\mu(C \cap A)}{\mu(A)} \qquad \text{and} \qquad \mu_B(C) \deq \frac{\mu(C \cap B)}{\mu(B)},
	$$
	where $C$ is an arbitrary Borel set in $\cX$. Then $\mu_A, \mu_B \ll \mu$, and
	\begin{align}
		W_1(\mu_A,\mu_B) &\le W_1(\mu,\mu_A) + W_1(\mu,\mu_B) \label{eq:Marton_step1}\\
		&\le \sqrt{2 c D(\mu_A \| \mu)} + \sqrt{2 c D(\mu_B \| \mu)}, \label{eq:Marton_step2}
	\end{align}
	where \eqref{eq:Marton_step1} is by the triangle inequality, while \eqref{eq:Marton_step2} is because $\mu$ satisfies ${\rm T}_1(c)$. Now, for an arbitrary Borel set $C$, we have
	$$
	\mu_A(C) = \int_C \frac{1_A(x)}{\mu(A)} \; \mu(\d x),
	$$
so it follows that $\frac{\d \mu_A}{\d\mu} = \frac{1_A}{\mu(A)}$, and the same holds for $\mu_B$. Therefore,
\begin{align}\label{eq:probability_as_divergence}
	D(\mu_A \| \mu) = \expectation_\mu \left[\frac{\d\mu_A}{\d\mu} \ln \frac{\d\mu_A}{\d\mu} \right] = \ln \frac{1}{\mu(A)},
\end{align}
and an analogous formula holds for $\mu_B$ in place of $\mu_A$. Substituting this into \eqref{eq:Marton_step2} gives
\begin{align}\label{eq:Marton_step3}
	W_1(\mu_A,\mu_B) \le \sqrt{2 c \ln \frac{1}{\mu(A)}} + \sqrt{2 c \ln \frac{1}{\mu(B)}}.
\end{align}
We now obtain a lower bound on $W_1(\mu_A,\mu_B)$. Since $\mu_A$ (resp., $\mu_B$) is supported on $A$ (resp., $B$), every $\pi \in \Pi(\mu_A,\mu_B)$ is supported on $A \times B$. Consequently, for every such $\pi$ we have
\begin{align}
	\int_{\cX \times \cX}d(x,y) \; \pi(\d x, \d y) &= \int_{A \times B} d(x,y) \; \pi(\d x, \d y) \nonumber\\
	&\ge  \int_{A \times B}\inf_{y \in B} d(x,y) \; \pi(\d x, \d y) \nonumber\\
	&= \int_A d(x,B) \; \mu_A(\d x) \nonumber\\
	&\ge \inf_{x \in A} d(x,B) \; \mu_A(A) \nonumber\\
	&= d(A,B),\label{eq:coupling_distance}
\end{align}
where $\mu_A(A) = 1$, and $d(A,B) \deq \inf_{x \in A, y \in B}d(x,y)$ is the distance between $A$ and $B$. Since \eqref{eq:coupling_distance} holds for every $\pi \in \Pi(\mu_A,\mu_B)$, we can take the infimum over all such
$\pi$ and get $W_1(\mu_A,\mu_B) \ge d(A,B)$. Combining this with \eqref{eq:Marton_step3} gives the inequality
\begin{align}\label{eq:Marton_step4}
	d(A,B) \le \sqrt{2 c \ln \frac{1}{\mu(A)}} + \sqrt{2 c \ln \frac{1}{\mu(B)}},
\end{align}
which holds for all Borel sets $A$ and $B$ that have nonzero $\mu$-probability.

Let $B = A^c_r$. Then $\mu(B) = 1 - \mu(A_r)$ and $d(A,B) \ge r$. Consequently, \eqref{eq:Marton_step4} gives
\begin{align}\label{eq:Marton_conc_1}
	r \le \sqrt{2 c \ln \frac{1}{\mu(A)}} + \sqrt{2 c \ln \frac{1}{1-\mu(A_r)}}.
\end{align}
If $\mu(A) \ge 1/2$ and $r \ge \sqrt{2 c \ln 2}$, then \eqref{eq:Marton_conc_1} gives
\begin{align} \label{eq: consequence of Marton_conc_1}
	\mu(A_r) \ge 1 - \exp\left( - \frac{1}{2c} \left(r - \sqrt{2 c \ln 2}\right)^2 \right).
\end{align}
Hence, the Gaussian concentration inequality in \eqref{eq:isoperimetry_weakened} indeed holds with $\kappa = 1/(2c)$ and $K=1$ for all $r \ge r_0 = \sqrt{2 c \ln 2}$.
\end{proof}
\begin{remark} The exponential inequality \eqref{eq: consequence of Marton_conc_1} has appeared earlier in the work of McDiarmid \cite{McDiarmid_bounded_differences_Martingales_1989} and Talagrand \cite{Talagrand95}. The major innovation that came from Marton's work was her use of optimal transportation ideas to derive a more general ``symmetric'' form \eqref{eq:Marton_step4}.
\end{remark}
\begin{remark} The formula \eqref{eq:probability_as_divergence}, apparently first used explicitly by Csisz\'ar \cite[Eq.~(4.13)]{Csiszar_Sanov}, is actually quite remarkable: it states that the probability of an arbitrary event can be expressed as an exponential of a divergence.
\end{remark}

While the method described in the proof of Proposition~\ref{prop:Marton_technique} does not produce optimal concentration estimates (which typically have to be derived on a case-by-case basis), it hints at the potential power of the transportation-cost inequalities. To make full use of this power, we first establish an important fact that, for $p \in [1,2]$, the ${\rm T}_p$ inequalities tensorize (see, for example, \cite[Proposition~22.5]{Villani_newbook}):
\begin{proposition}[Tensorization of transportation-cost inequalities]\label{prop:TC_tensorization} If $\mu$ satisfies ${\rm T}_p(c)$ on $(\cX,d)$ for an arbitrary $p \in [1,2]$, then, for every $n \in \naturals$, the product measure $\mu^{\otimes n}$ satisfies ${\rm T}_p(cn^{2/p-1})$ on $(\cX^n,d_{p,n})$ with the metric
	\begin{align}\label{eq:product_metric}
	d_{p,n}(x^n,y^n) \deq \left(\sum^n_{i=1}d^p(x_i,y_i)\right)^{1/p}, \qquad \forall \, x^n,y^n \in \cX^n.
\end{align}
\end{proposition}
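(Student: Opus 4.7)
The plan is to fix an arbitrary Borel probability measure $\nu \ll \mu^{\otimes n}$ on $\cX^n$ and construct explicitly a coupling $\pi \in \Pi(\mu^{\otimes n}, \nu)$ whose transportation cost is controlled by $D(\nu \| \mu^{\otimes n})$. The idea, going back to Marton, is to paste the couplings supplied by the one-dimensional ${\rm T}_p(c)$ hypothesis for $\mu$ along the chain-rule factorization of $\nu$. Decompose $\nu(\d x^n) = \prod_{i=1}^n \nu_{X_i|X^{i-1}}(\d x_i|x^{i-1})$. For each $i$ and each $x^{i-1} \in \cX^{i-1}$, part (5) of Lemma~\ref{lm:W_properties} yields an optimal coupling $\pi^*_i(\cdot,\cdot|x^{i-1}) \in \Pi(\mu, \nu_{X_i|X^{i-1}=x^{i-1}})$ attaining $W_p(\mu, \nu_{X_i|X^{i-1}=x^{i-1}})$, which (by a standard measurable selection argument) can be taken jointly measurable in $x^{i-1}$. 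Disintegrate $\pi^*_i$ along its first marginal to extract a Markov kernel $K_i(\d y_i|x_i, x^{i-1})$ with $\pi^*_i(\d x_i, \d y_i|x^{i-1}) = \nu_{X_i|X^{i-1}}(\d x_i|x^{i-1})\, K_i(\d y_i|x_i, x^{i-1})$.

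Now set
\[
\pi(\d x^n, \d y^n) \deq \nu(\d x^n) \prod_{i=1}^n K_i(\d y_i|x_i, x^{i-1}).
\]
By construction its $x^n$-marginal equals $\nu$. For the $y^n$-marginal, integrate out $x_n, x_{n-1}, \ldots, x_1$ in that order: at step $i$ the inner integral is $\int \nu_{X_i|X^{i-1}}(\d x_i|x^{i-1})\, K_i(\d y_i|x_i, x^{i-1}) = \mu(\d y_i)$, which is independent of $x^{i-1}$. A short induction then yields that the $y^n$-marginal of $\pi$ is $\mu^{\otimes n}$, so $\pi \in \Pi(\mu^{\otimes n}, \nu)$. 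This marginal verification is the only genuinely delicate step; the remainder of the proof is purely algebraic.

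Using additivity of $d_{p,n}^p$ together with the definition of $\pi$,
\[
\expectation_\pi \left[d_{p,n}^p(X^n,Y^n) \right] = \sum_{i=1}^n \expectation_{\nu_{X^{i-1}}}\!\Big[W_p^p\big(\mu, \nu_{X_i|X^{i-1}}\big)\Big],
\]
and applying ${\rm T}_p(c)$ pointwise (raised to the $p$th power) bounds each $W_p^p(\mu, \nu_{X_i|X^{i-1}=x^{i-1}})$ by $(2c)^{p/2}\, D^{p/2}(\nu_{X_i|X^{i-1}=x^{i-1}} \| \mu)$. Since $p \in [1,2]$ forces $p/2 \le 1$, the concavity of $u \mapsto u^{p/2}$ combined with Jensen's inequality gives $\expectation_{\nu_{X^{i-1}}}[D^{p/2}(\nu_{X_i|X^{i-1}} \| \mu)] \le a_i^{p/2}$, where $a_i \deq D(\nu_{X_i|X^{i-1}} \| \mu \,|\, \nu_{X^{i-1}})$. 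The chain rule for relative entropy together with $(\mu^{\otimes n})_{X_i|X^{i-1}} = \mu$ yields $\sum_{i=1}^n a_i = D(\nu \| \mu^{\otimes n})$.

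A final application of H\"older (equivalently, the power-mean inequality for exponent $p/2 \le 1$) gives $\sum_i a_i^{p/2} \le n^{1-p/2} \bigl(\sum_i a_i\bigr)^{p/2}$. Chaining the estimates,
\[
W_p^p(\mu^{\otimes n}, \nu) \le \expectation_\pi[d_{p,n}^p] \le (2c)^{p/2}\, n^{1-p/2}\, D(\nu \| \mu^{\otimes n})^{p/2},
\]
and taking $p$th roots yields $W_p(\mu^{\otimes n}, \nu) \le \sqrt{2 c n^{2/p - 1}\, D(\nu \| \mu^{\otimes n})}$, which is precisely ${\rm T}_p(cn^{2/p-1})$ for $\mu^{\otimes n}$ on $(\cX^n, d_{p,n})$. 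Note that the restriction $p \le 2$ enters only to ensure $p/2 \le 1$ so that both convexity steps (Jensen and power-mean) point in the favorable direction; for $p = 2$ both reduce to equalities, recovering the dimension-free constant $c' = c$, while for $p = 1$ they yield the sharp growth $c' = cn$.
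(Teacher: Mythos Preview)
Your proof is correct and follows essentially the same approach as the paper: both construct a coupling of $\mu^{\otimes n}$ and $\nu$ by gluing together, along the chain-rule factorization of $\nu$, the one-dimensional optimal couplings of $\mu$ with $\nu_{X_i|X^{i-1}=x^{i-1}}$, and then apply ${\rm T}_p(c)$ coordinatewise, Jensen's inequality for $u \mapsto u^{p/2}$, H\"older's inequality, and the chain rule for divergence in the same order. (One cosmetic slip: you write $\pi^*_i \in \Pi(\mu, \nu_{X_i|X^{i-1}=x^{i-1}})$ but then disintegrate with $\nu_{X_i|X^{i-1}}$ as the first marginal; the order in $\Pi(\cdot,\cdot)$ should be swapped to match your subsequent formulas.)
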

\begin{proof} Suppose $\mu$ satisfies ${\rm T}_p(c)$. Fix $n \in \naturals$, and fix an arbitrary probability measure $\nu$ on $(\cX^n,d_{p,n})$. Let $X^n,Y^n \in \cX^n$ be two independent random $n$-tuples, such that
	\begin{align}
		P_{X^n} &= P_{X_1} \otimes P_{X_2|X_1} \otimes \ldots \otimes P_{X_n|X^{n-1}} = \nu \label{eq: P_X^n} \\
		P_{Y^n} &= P_{Y_1} \otimes P_{Y_2} \otimes \ldots \otimes P_{Y_n} = \mu^{\otimes n} \label{eq: P_Y^n}.
	\end{align}
	For each $i \in \{ 1,\ldots,n\}$, let us define the ``conditional'' $W_p$ distance
	\begin{align}
		&W_p(P_{X_i|X^{i-1}},P_{Y_i} | P_{X^{i-1}}) \nonumber \\
		&\qquad \deq \left(\int_{\cX^{i-1}} W^p_p(P_{X_i|X^{i-1}=x^{i-1}},P_{Y_i}) P_{X^{i-1}}( \d x^{i-1})\right)^{1/p}.
        \label{eq:conditional W_p}
	\end{align}
	We will now prove that
	\begin{align}\label{eq:W_subadditivity}
		W^p_p(\nu,\mu^{\otimes n}) &= W^p_p(P_{X^n},P_{Y^n}) \nonumber\\
		&\le \sum^n_{i=1} W^p_p(P_{X_i|X^{i-1}},P_{Y_i}|P_{X^{i-1}}),
	\end{align}
where the $L^p$ Wasserstein distance on the left-hand side is computed with respect to the $d_{p,n}$ metric. By Lemma~\ref{lm:W_properties}, there exists an optimal coupling of $P_{X_1}$ and $P_{Y_1}$, i.e., a pair $(X^*_1,Y^*_1)$ of jointly distributed $\cX$-valued random variables such that $P_{X^*_1} = P_{X_1}$, $P_{Y^*_1} = P_{Y_1}$, and
$$
W^p_p(P_{X_1},P_{Y_1}) = \expectation[d^p(X^*_1,Y^*_1)].
$$
Now for each $i = 2,\ldots,n$ and each choice of $x^{i-1} \in \cX^{i-1}$, again by Lemma~\ref{lm:W_properties}, there exists an optimal coupling of $P_{X_i|X^{i-1}=x^{i-1}}$ and $P_{Y_i}$, i.e., a pair $(X^*_i(x^{i-1}),Y^*_i(x^{i-1}))$ of jointly distributed $\cX$-valued random variables such that $P_{X^*_i(x^{i-1})} = P_{X_i|X^{i-1}=x^{i-1}}$,  $P_{Y^*_i(x^{i-1})} = P_{Y_i}$, and
$$
W^p_p(P_{X_i|X^{i-1}=x^{i-1}},P_{Y_i}) = \expectation[d^p(X^*_i(x^{i-1}),Y^*_i(x^{i-1}))].
$$
Moreover, because $(\cX, d)$ is a Polish space, all couplings can be constructed in such a way that the mapping
$$x^{i-1} \mapsto \pr\big( (X^*_i(x^{i-1}),Y^*_i(x^{i-1})) \in C\big)$$ is measurable for each Borel set
$C \subseteq \cX \times \cX$ \cite{Villani_newbook}. In other words, for each $i$, we can define the regular
conditional distributions
$$
P_{X^*_iY^*_i|{X^*}^{(i-1)}=x^{i-1}} \deq P_{X^*_i(x^{i-1})Y^*_i(x^{i-1})}, \qquad \forall \, x^{i-1} \in \cX^{i-1}
$$
such that
$$
P_{{X^*}^n {Y^*}^n} = P_{X^*_1Y^*_1} \otimes P_{X^*_2 Y^*_2 | X^*_1} \otimes \ldots \otimes P_{X^*_n Y^*_n|{X^*}^{(n-1)}}
$$
is a coupling of $P_{X^n} = \nu$ and $P_{Y^n} = \mu^{\otimes n}$, and for all $x^{i-1} \in \cX^{i-1}$ and $i \in \{1, \ldots, n\}$
\begin{align}   \label{eq:conditional_coupling}
	W^p_p(P_{X_i|X^{i-1}=x^{i-1}},P_{Y_i}) = \expectation[d^p(X^*_i,Y^*_i)|X^{*(i-1)}=x^{i-1}].
\end{align}
	 By definition of $W_p$, we then have
\begin{align}
	W^p_p(\nu,\mu^{\otimes n}) &\le \expectation[d^p_{p,n}(X^{*n},Y^{*n})] \label{eq:W_subadd_1}\\
	&= \sum^n_{i=1}\expectation[d^p(X^*_i,Y^*_i)] \label{eq:W_subadd_2}\\
	&= \sum^n_{i=1}\expectation\Big[ \expectation\big[d^p(X^*_i,Y^*_i)|X^{*(i-1)}\big] \Big] \label{eq:W_subadd_3}\\
	&= \sum^n_{i=1} W^p_p(P_{X_i|X^{i-1}},P_{Y_i}|P_{X^{i-1}}), \label{eq:W_subadd_4}
\end{align}
where:
\begin{itemize}
	\item \eqref{eq:W_subadd_1} is due to the facts that $W_p(\nu,\mu^{\otimes n})$ is the $L^p$ Wasserstein distance with respect to
    the $d_{p,n}$ metric, and $(X^{*n},Y^{*n})$ is a (not necessarily optimal) coupling of $P_{X^n} = \nu$ and $P_{Y^n} = \mu^{\otimes n}$;
	\item \eqref{eq:W_subadd_2} is by the definition \eqref{eq:product_metric} of $d_{p,n}$;
	\item \eqref{eq:W_subadd_3} is by the law of iterated expectations; and
	\item \eqref{eq:W_subadd_4} is by \eqref{eq:conditional W_p} and \eqref{eq:conditional_coupling}.
\end{itemize}
We have thus proved \eqref{eq:W_subadditivity}. By hypothesis, $\mu$ satisfies ${\rm T}_p(c)$ on $(\cX,d)$. Therefore, since $P_{Y_i} = \mu$ for every $i$, we can write
\begin{align}
	&W^p_p(P_{X_i|X^{i-1}},P_{Y_i}|P_{X^{i-1}}) \nonumber\\
	&\quad= \int_{\cX^{i-1}} W^p_p(P_{X_i|X^{i-1}=x^{i-1}},P_{Y_i}) \; P_{X^{i-1}}(\d x^{i-1}) \nonumber \\
	&\quad\le \int_{\cX^{i-1}} \left( 2 c D(P_{X_i|X^{i-1}=x^{i-1}} \| P_{Y_i})\right)^{p/2} P_{X^{i-1}}(\d x^{i-1}) \nonumber \\
	&\quad\le (2c)^{p/2}\left(\int_{\cX^{i-1}}  D(P_{X_i|X^{i-1}=x^{i-1}} \| P_{Y_i}) \; P_{X^{i-1}}(\d x^{i-1})\right)^{p/2} \nonumber \\
	&\quad= (2c)^{p/2} \left( D(P_{X_i|X^{i-1}}\|P_{Y_i}|P_{X^{i-1}})\right)^{p/2},
    \label{eq: upper bound on the conditional p-th Wasserstein distance}
\end{align}
where the second inequality follows from Jensen's inequality and the concavity of the function $t \mapsto t^{p/2}$ for $p \in [1,2]$.
Consequently, it follows that
\begin{align*}
	W^p_p(\nu,\mu^{\otimes n}) &\le (2c)^{p/2} \sum^n_{i=1} \left( D(P_{X_i|X^{i-1}}\|P_{Y_i}|P_{X^{i-1}})\right)^{p/2} \\
	&\le (2c)^{p/2}n^{1-p/2} \left( \sum^n_{i=1} D(P_{X_i|X^{i-1}}\|P_{Y_i}|P_{X^{i-1}})\right)^{p/2} \\
	&= (2c)^{p/2}n^{1-p/2} \left( D(P_{X^n} \| P_{Y^n}) \right)^{p/2} \\
	&= (2c)^{p/2}n^{1-p/2} \left( D(\nu \| \mu^{\otimes n}) \right)^{p/2},
\end{align*}
where the first line holds by using \eqref{eq:W_subadditivity}
and \eqref{eq: upper bound on the conditional p-th Wasserstein distance},
the second line is by H\"older's inequality, the third line is by the chain rule for the
divergence and since $P_{Y^n}$ is a product probability measure, and the fourth line is by
\eqref{eq: P_X^n} and \eqref{eq: P_Y^n}. This finally gives
\begin{align*}
	W_p(\nu, \mu^{\otimes n}) \le \sqrt{2c n^{2/p-1} D(\nu \| \mu^{\otimes n})},
\end{align*}
i.e., $\mu^{\otimes n}$ indeed satisfies the ${\rm T}_p(c n^{2/p-1})$ inequality.
\end{proof}
Since $W_2$ dominates $W_1$ (cf.\ Item~2 of Lemma~\ref{lm:W_properties}), a ${\rm T}_2(c)$ inequality is stronger than a ${\rm T}_1(c)$ inequality (for an arbitrary $c>0$). Moreover, as Proposition~\ref{prop:TC_tensorization} above shows, ${\rm T}_2$ inequalities tensorize {\em exactly}: if $\mu$ satisfies ${\rm T}_2$ with a constant $c > 0$, then $\mu^{\otimes n}$ also satisfies ${\rm T}_2$ for every $n$ with the {\em same} constant $c$. By contrast, if $\mu$ only satisfies ${\rm T}_1(c)$, then the product measure $\mu^{\otimes n}$ satisfies ${\rm T}_1$ with the much worse constant $cn$. As we shall shortly see, this sharp difference between the ${\rm T}_1$ and ${\rm T}_2$ inequalities actually has deep consequences. In a nutshell, in the two sections that follow, we will show that, for $p \in \{1, 2\}$, a given probability measure $\mu$ satisfies a ${\rm T}_p(c)$ inequality on $(\cX,d)$ if and only if it has Gaussian concentration with constant $1/(2c)$. Suppose now that we wish to show Gaussian concentration for the product measure $\mu^{\otimes n}$ on the product space $(\cX^n, d_{1,n})$. Following our tensorization programme, we could first show that $\mu$ satisfies a transportation-cost inequality for some $p \in [1,2]$, then apply Proposition~\ref{prop:TC_tensorization} and consequently also apply
Proposition~\ref{prop:Marton_technique}. If we go through with this approach, we will see that:
	\begin{itemize}
		\item If $\mu$ satisfies ${\rm T}_1(c)$ on $(\cX,d)$, then $\mu^{\otimes n}$ satisfies ${\rm T}_1(cn)$ on $(\cX^n,d_{1,n})$, which is equivalent to Gaussian concentration with constant $1/(2cn)$. Hence, in this
case, the concentration phenomenon is weakened by increasing the dimension $n$.
		\item If, on the other hand, $\mu$ satisfies ${\rm T}_2(c)$ on $(\cX,d)$, then $\mu^{\otimes n}$ satisfies ${\rm T}_2(c)$ on $(\cX^n,d_{2,n})$, which is equivalent to Gaussian concentration with the same constant $1/(2c)$, and this constant is {\em independent} of the dimension $n$.
\end{itemize}
These two results give the same constants in concentration inequalities for sums of independent random variables: if $f$ is a Lipschitz function
on $(\cX,d)$, then from the fact that
		\begin{align*}
			d_{1,n}(x^n,y^n) &= \sum^n_{i=1} d(x_i,y_i) \\[-0.1cm]			
			&\le \sqrt{n} \left( \sum^n_{i=1}d^2(x_i,y_i)\right)^{\frac{1}{2}} \\[-0.1cm]
			&= \sqrt{n}\, d_{2,n}(x^n,y^n)
		\end{align*}
	we can conclude that, for $f_n(x^n) \deq (1/n)\sum^n_{i=1}f(x_i)$,
		\begin{align*}
			\| f_n \|_{\rm Lip,1} & \deq \sup_{x^n \neq y^n} \frac{|f_n(x^n) - f_n(y^n)|}{d_{1,n}(x^n,y^n)} \le \frac{\| f \|_{\rm Lip}}{n} \,
		\end{align*}
		and
		\begin{align*}
				\| f_n \|_{\rm Lip,2} & \deq \sup_{x^n \neq y^n} \frac{|f_n(x^n) - f_n(y^n)|}{d_{2,n}(x^n,y^n)} \le \frac{\| f \|_{\rm Lip}}{\sqrt{n}} \, .
		\end{align*}
		Therefore, both ${\rm T}_1(c)$ and ${\rm T}_2(c)$ give
		\begin{align*}
			\pr \left( \frac{1}{n}\sum^n_{i=1} f(X_i) \ge r \right) \le \exp\left(- \frac{nr^2}{2c \| f \|^2_{\rm Lip}}\right), \quad \forall \, r>0
		\end{align*}
		where $X_1,\ldots,X_n$ are i.i.d.\ $\cX$-valued random variables whose common marginal $\mu$ satisfies either ${\rm T}_2(c)$ or ${\rm T}_1(c)$, and $f$ is a Lipschitz function on $\cX$ with $\expectation[f(X_1)] = 0$. However, the difference between concentration inequalities
that are derived from ${\rm T}_1$ and ${\rm T}_2$ inequalities becomes quite pronounced in general. Note that, in practice, it is often easier to work with ${\rm T}_1$ inequalities than with ${\rm T}_2$ inequalities.	
		
	The same strategy as above can be used to prove the following generalization of Proposition~\ref{prop:TC_tensorization}:	
	\begin{proposition}\label{prop:TC_tensorization_2} Let $\mu_1,\ldots,\mu_n$ be $n$ Borel probability measures on a Polish
space $(\cX,d)$, such that $\mu_i$ satisfies ${\rm T}_p(c_i)$ for some $c_i > 0$, for each $i \in \{1,\ldots,n\}$.
Let $c \triangleq \max_{1 \le i \le n} c_i$. Then, for an arbitrary $p \in [1,2]$, the probability measure
$\mu = \mu_1 \otimes \ldots \otimes \mu_n$ satisfies ${\rm T}_p(cn^{2/p-1})$ on $(\cX^n,d_{p,n})$ (with the metric $d_{p,n}$
in \eqref{eq:product_metric}).
	\end{proposition}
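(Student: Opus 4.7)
The plan is to mimic the proof of Proposition~\ref{prop:TC_tensorization} almost verbatim, tracking where the uniform constant $c$ is used and replacing it with $c = \max_i c_i$ only where necessary. Fix $p \in [1,2]$ and an arbitrary Borel probability measure $\nu$ on $\cX^n$. Let $X^n \sim \nu$ with the chain-rule decomposition $P_{X^n} = P_{X_1} \otimes P_{X_2|X_1} \otimes \ldots \otimes P_{X_n|X^{n-1}}$, and let $Y^n \sim \mu_1 \otimes \ldots \otimes \mu_n$, so that $P_{Y_i} = \mu_i$ for each $i$. The first step is to re-derive the coupling inequality
\begin{align*}
W^p_p(\nu,\mu) \le \sum^n_{i=1} W^p_p(P_{X_i|X^{i-1}},\mu_i \,|\, P_{X^{i-1}})
\end{align*}
by the same recursive optimal-coupling construction as in the proof of Proposition~\ref{prop:TC_tensorization}; this step does not use the ${\rm T}_p$ hypothesis at all and therefore goes through without change when the marginals are distinct.

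Next, for each $i$ and each $x^{i-1} \in \cX^{i-1}$, I would apply the ${\rm T}_p(c_i)$ hypothesis to $\mu_i$ and the conditional law $P_{X_i|X^{i-1}=x^{i-1}}$ to obtain
\begin{align*}
W_p\bigl(P_{X_i|X^{i-1}=x^{i-1}},\mu_i\bigr) \le \sqrt{2 c_i \, D\bigl(P_{X_i|X^{i-1}=x^{i-1}} \,\big\|\, \mu_i\bigr)} \le \sqrt{2 c \, D\bigl(P_{X_i|X^{i-1}=x^{i-1}} \,\big\|\, \mu_i\bigr)},
\end{align*}
where the second inequality uses only $c_i \le c$. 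Raising to the $p$-th power, integrating against $P_{X^{i-1}}$, and invoking the concavity of $t \mapsto t^{p/2}$ (valid since $p \in [1,2]$) via Jensen's inequality, exactly as in \eqref{eq: upper bound on the conditional p-th Wasserstein distance}, yields
\begin{align*}
W^p_p(P_{X_i|X^{i-1}},\mu_i \,|\, P_{X^{i-1}}) \le (2c)^{p/2} \, \bigl(D(P_{X_i|X^{i-1}} \| \mu_i \,|\, P_{X^{i-1}})\bigr)^{p/2}.
\end{align*}

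Finally, summing over $i$, applying H\"older's inequality (again using $p \in [1,2]$, so that $2/p \ge 1$), and using the chain rule for relative entropy together with the product structure of $\mu$ to identify $\sum_{i=1}^n D(P_{X_i|X^{i-1}} \| \mu_i \,|\, P_{X^{i-1}}) = D(\nu \| \mu)$, I obtain
\begin{align*}
W^p_p(\nu,\mu) \le (2c)^{p/2} \, n^{1-p/2} \, D(\nu \| \mu)^{p/2},
\end{align*}
which, upon taking $p$-th roots, is precisely the ${\rm T}_p(c n^{2/p-1})$ inequality. There is really no serious obstacle here: the only subtle point is to notice that the uniform upper bound $c = \max_i c_i$ enters exactly once per conditional, so the tensorization constant depends on the worst marginal but not on all the $c_i$ individually; everything else (chain rule, Jensen, H\"older) is insensitive to the marginals being distinct.
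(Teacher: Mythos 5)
Your proof is correct and is exactly the argument the paper has in mind: the paper only remarks that ``the same strategy as above can be used'' and does not write out the details, and your write-up supplies them faithfully, correctly noting that the coupling subadditivity step and the chain-rule/H\"older step are insensitive to the marginals being distinct, while the ${\rm T}_p(c_i)$ hypothesis and the bound $c_i \le c$ enter once per conditional.
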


\subsection{Gaussian concentration and ${\rm T}_1$ inequalities}
\label{subsection: Gaussian concentration and $T_1$ inequalities}

As we have shown above, Marton's argument can be used to deduce Gaussian concentration from a transportation-cost inequality. As we will demonstrate here and in the following section, in certain cases these properties are {\em equivalent}. We will consider first the case when $\mu$ satisfies a ${\rm T}_1$ inequality. The first proof of equivalence between ${\rm T}_1$ and Gaussian concentration was obtained by Bobkov and G\"otze \cite{Bobkov_Gotze_expint}, and it relies on
the following variational representations of the $L^1$ Wasserstein distance and the divergence:
\begin{enumerate}
	\item {\bf Kantorovich--Rubinstein theorem} \cite[Theorem~1.14]{Villani_TOT} and \cite[Theorem~5.10]{Villani_newbook}: For every $\mu,\nu \in \cP_1(\cX)$ on a Polish probability space $(\cX, d)$,
	\begin{align}\label{eq:Kantorovich_Rubinstein}
	W_1(\mu,\nu) = \sup_{f \colon \| f \|_{\rm Lip} \le 1} \bigl| \expectation_\mu[f] - \expectation_\nu[f] \bigr|.
	\end{align}
	\item {\bf Donsker--Varadhan lemma} \cite[Lemma~6.2.13]{Dembo_Zeitouni}: For every two Borel probability measures $\mu,\nu$ on a Polish probability space $(\cX,d)$ such that $\nu \ll \mu$, the following variational representation of the divergence holds:
	\begin{align}\label{eq:Donsker_Varadhan}
		D(\nu \| \mu) = \sup_{g \in C_{\text{b}}(\cX)}\bigl\{ \expectation_\nu[g]-
\ln \expectation_\mu[\exp(g)]\bigr\}
	\end{align}
where the supremization in \eqref{eq:Donsker_Varadhan} is over the set $C_{\text{b}}(\cX)$ of
continuous and bounded real-valued functions on $\cX$. Furthermore, for every measurable function $g$ such
that $\expectation_\mu[\exp(g)] < \infty$,
\begin{align}  \label{eq: Donsker-Varadhan inequality}
\expectation_\nu[g] \le D(\nu \| \mu) + \ln \expectation_\mu[\exp(g)].
\end{align}
(In fact, the supremum in \eqref{eq:Donsker_Varadhan} can be extended to bounded Borel-measurable functions $g$ \cite[Lemma~1.4.3]{Dupuis_Ellis_book}.)
\end{enumerate}

The following theorem was introduced by Bobkov and G\"otze \cite[Theorem~3.1]{Bobkov_Gotze_expint}:
\begin{theorem}[Bobkov and G\"otze]\label{thm:Bobkov_Gotze} Let $\mu \in \cP_1(\cX)$ be a Borel probability measure, and
assume that there exists some $x_0 \in \cX$ such that $\expectation_\mu[d(X,x_0)] < \infty$. Then,
$\mu$ satisfies ${\rm T}_1(c)$ if and only if the inequality
\begin{align}\label{eq:expint}
	\expectation_\mu\left\{\exp[tf(X)]\right\} \le \exp\left(\frac{ct^2}{2}\right)
\end{align}
holds for all $1$-Lipschitz functions $f \colon \cX \to \reals$ with $\expectation_\mu[f(X)] = 0$, and all $t \in \reals$.
\end{theorem}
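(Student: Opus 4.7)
The plan is to prove both directions using duality, exploiting the two variational formulas supplied just before the theorem: the Kantorovich--Rubinstein representation of $W_1$ and the Donsker--Varadhan representation of the divergence. The linchpin is to rewrite the Donsker--Varadhan identity in its dual form
\begin{align*}
\Lambda_f(t) \deq \ln \expectation_\mu[\exp(tf(X))] = \sup_{\nu \ll \mu} \bigl\{t\,\expectation_\nu[f] - D(\nu\|\mu)\bigr\},
\end{align*}
which holds because the supremum defining $D(\nu\|\mu)$ in \eqref{eq:Donsker_Varadhan} can equivalently be taken over all bounded measurable functions. With this reformulation, each direction of the theorem becomes a short convex-duality computation.

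For the forward direction ($\Rightarrow$), I would assume $\mu$ satisfies ${\rm T}_1(c)$, fix an arbitrary $1$-Lipschitz $f$ with $\expectation_\mu[f]=0$ and $t \in \reals$, and bound each term in the supremum above. Since $f$ is $1$-Lipschitz and has zero $\mu$-mean, the Kantorovich--Rubinstein formula \eqref{eq:Kantorovich_Rubinstein} yields $\expectation_\nu[f] = \expectation_\nu[f] - \expectation_\mu[f] \le W_1(\mu,\nu)$, and similarly with $-f$, so $|t\,\expectation_\nu[f]| \le |t|\,W_1(\mu,\nu)$. Combining this with ${\rm T}_1(c)$ gives $t\,\expectation_\nu[f] - D(\nu\|\mu) \le |t|\sqrt{2c\,D(\nu\|\mu)} - D(\nu\|\mu)$. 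Writing $u = \sqrt{D(\nu\|\mu)} \ge 0$, the right-hand side is $|t|\sqrt{2c}\,u - u^2$, a concave quadratic in $u$ maximized at $u^\star = |t|\sqrt{c/2}$, where it equals $ct^2/2$. Taking the supremum over $\nu$ yields $\Lambda_f(t) \le ct^2/2$, which is \eqref{eq:expint}.

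For the converse ($\Leftarrow$), I would assume the exponential-integrability bound \eqref{eq:expint} and fix an arbitrary $\nu \ll \mu$ with $D(\nu\|\mu) < \infty$ (otherwise ${\rm T}_1(c)$ is vacuous). The Donsker--Varadhan inequality \eqref{eq: Donsker-Varadhan inequality} applied to $g = tf$ for any $t > 0$ and any $1$-Lipschitz $f$ with $\expectation_\mu[f]=0$ gives
\begin{align*}
\expectation_\nu[f] \le \frac{D(\nu\|\mu) + \Lambda_f(t)}{t} \le \frac{D(\nu\|\mu)}{t} + \frac{ct}{2},
\end{align*}
where in the second inequality I use the hypothesis \eqref{eq:expint}. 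Optimizing over $t>0$ with the choice $t^\star = \sqrt{2D(\nu\|\mu)/c}$ produces $\expectation_\nu[f] \le \sqrt{2c\,D(\nu\|\mu)}$. Since an arbitrary $1$-Lipschitz function differs from a mean-zero one only by a constant (which does not affect $\expectation_\nu[f]-\expectation_\mu[f]$), taking the supremum over all $1$-Lipschitz $f$ and invoking Kantorovich--Rubinstein \eqref{eq:Kantorovich_Rubinstein} yields $W_1(\mu,\nu) \le \sqrt{2c\,D(\nu\|\mu)}$, i.e.\ ${\rm T}_1(c)$.

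The only real subtlety I expect concerns minor technical hygiene: verifying that $\Lambda_f(t)$ is finite for every $t \in \reals$ so that the Donsker--Varadhan duality applies without qualification, and checking that the Kantorovich--Rubinstein formula may legitimately be restricted to $1$-Lipschitz functions with zero $\mu$-mean. The first point follows either from \eqref{eq:expint} itself (in the converse direction) or from $\mu \in \cP_1(\cX)$ together with the Lipschitz property of $f$ via a truncation argument (in the forward direction, where one first establishes \eqref{eq:expint} for bounded Lipschitz $f$ and then passes to the limit). These are genuinely routine and should not obscure the clean convex-duality structure that drives the equivalence.
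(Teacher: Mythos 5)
Your proof is correct, and the converse direction is identical to the paper's. The forward direction differs cosmetically: the paper establishes the pointwise inequality $\int tf\, \d\nu - ct^2/2 \le D(\nu\|\mu)$ for all $\nu \ll \mu$ and then \emph{plugs in} the specific choice $\nu = \mu^{(tf)}$, using the explicit formula $D(\mu^{(tf)}\|\mu) = \int tf\, \d\mu^{(tf)} - \Lambda_f(t)$ to make the $\int tf$ terms cancel; you instead invoke the Gibbs variational formula $\Lambda_f(t) = \sup_{\nu}\{t\int f\, \d\nu - D(\nu\|\mu)\}$ and bound the supremum directly. These are two presentations of the same idea --- the tilted measure is precisely the maximizer in the Gibbs variational principle, so the paper's ``plug in $\mu^{(tf)}$'' step is implicitly verifying the attainment of the supremum you quote abstractly. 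Your variant is arguably a bit cleaner since it does not single out a special measure, and you are right to flag that the variational formula requires $\Lambda_f(t) < \infty$, which must be checked (e.g.\ by truncating $f$, applying the bound, and passing to the limit); the paper quietly assumes $\mu^{(tf)}$ is well-defined, which amounts to the same integrability hypothesis.
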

\begin{remark} The condition $\expectation_\mu[d(X,x_0)] < \infty$ is needed to ensure that every Lipschitz function $f \colon \cX \to \reals$ is $\mu$-integrable:
\begin{align*}
\expectation_\mu\big[|f(X)|\big] &\leq |f(x_0)| + \expectation_\mu\big[|f(X)-f(x_0)|\big] \\
&\le |f(x_0)| +
\| f \|_{\rm Lip} \; \expectation_\mu\big[d(X,x_0)\big] < \infty.
	\end{align*}
\end{remark}

\begin{proof} Without loss of generality, we may consider \eqref{eq:expint} only for $t \ge 0$.
	
	Suppose first that $\mu$ satisfies ${\rm T}_1(c)$. Consider some $\nu \ll \mu$. Using the ${\rm T}_1(c)$ property of $\mu$ together with the Kantorovich--Rubinstein formula \eqref{eq:Kantorovich_Rubinstein}, we can write
	\begin{align*}
		\int_{\cX} f \d \nu \le W_1(\mu, \nu) \le \sqrt{2 c D(\nu \| \mu)}
	\end{align*}
	for every $1$-Lipschitz $f \colon \cX \to \reals$ with $\expectation_\mu[f] = 0$. Next, from the fact that
	\begin{align}\label{eq:sqrt_var}
		\inf_{t > 0}\left( \frac{a}{t} + \frac{bt}{2}\right) = \sqrt{2ab}
	\end{align}
	for every $a,b \ge 0$, we see that every such $f$ must satisfy
	\begin{align*}
		\int_{\cX} f \, \d \nu \le \frac{D(\nu \| \mu)}{t} + \frac{ct}{2}, \qquad \forall \, t > 0.
	\end{align*}
Rearranging, we obtain
\begin{align*}
	\int_{\cX} t f \, \d \nu - \frac{ct^2}{2} \le D(\nu \| \mu), \qquad \forall \, t > 0.
\end{align*}
Applying this inequality to $\nu = \mu^{(g)}$ (the $g$-tilting of $\mu$) where $g \triangleq tf$,
and using the fact that
\begin{align*}
D(\mu^{(g)} \| \mu) &= \int_{\cX} g \,\d\mu^{(g)} - \ln \int_{\cX} \exp(g)\, \d \mu \\
& = \int_{\cX} tf \, \d \nu - \ln \int_{\cX} \exp(tf)\, \d \mu
\end{align*}
we deduce that
\begin{align*}
\ln \left(\int_{\cX} \exp(tf)\, \d \mu \right) \leq \frac{c t^2}{2}
\end{align*}
for all $t \ge 0$, and all $f$ with $\| f \|_{\rm Lip} \le 1$ and $\expectation_\mu[f] = 0$, which is precisely \eqref{eq:expint}.

Conversely, assume that $\mu$ satisfies \eqref{eq:expint} for all $1$-Lipschitz functions
$f \colon \cX \to \reals$ with $\expectation_\mu[f(X)] = 0$ and all $t \in \reals$, and let
$\nu$ be an arbitrary Borel probability measure such that $\nu \ll \mu$.
Consider an arbitrary function of the form $g \deq tf$ where $t > 0$. By the assumption in
\eqref{eq:expint}, $\expectation_\mu[\exp(g)] < \infty$; furthermore, $g$ is a Lipschitz
function, so it is also measurable. Hence, \eqref{eq: Donsker-Varadhan inequality} gives
\begin{align*}
	D(\nu \| \mu) &\ge \int_{\cX} tf\, \d\nu - \ln \int_{\cX} \exp(tf)\, \d \mu \\
	&\ge \int_{\cX} tf\, \d\nu - \int_{\cX} tf\, \d\mu - \frac{ct^2}{2}
\end{align*}
where in the second step we have used the fact that $\int_{\cX} f\,\d\mu = 0$ by hypothesis, as well as \eqref{eq:expint}. Rearranging gives
\begin{align}
	\left|\int_{\cX} f\, \d\nu - \int_{\cX} f\,\d\mu\right|
    \le \frac{D(\nu \| \mu)}{t} + \frac{ct}{2}, \qquad \forall \, t > 0
\end{align}
(the absolute value in the left-hand side is a consequence of the fact that exactly the same argument goes through with $-f$ instead of $f$). Applying \eqref{eq:sqrt_var}, we see that the inequality
\begin{align}\label{eq:BG_converse}
	\left|\int_{\cX} f\, \d\nu - \int_{\cX} f\, \d\mu\right| \le \sqrt{2 c D(\nu \| \mu)}
\end{align}
holds for all $1$-Lipschitz $f$ with $\expectation_\mu[f] = 0$. In fact, we may now drop the condition that $\expectation_\mu[f] = 0$ by replacing $f$ with $f - \expectation_\mu[f]$. Thus, taking the supremum over all $1$-Lipschitz functions $f$ on the left-hand side of \eqref{eq:BG_converse} and using the Kantorovich--Rubinstein formula \eqref{eq:Kantorovich_Rubinstein}, we conclude that $W_1(\mu,\nu) \le \sqrt{2c D(\nu \| \mu)}$ for every $\nu \ll \mu$, i.e., $\mu$ satisfies ${\rm T}_1(c)$. This completes the proof of Theorem~\ref{thm:Bobkov_Gotze}.
\end{proof}

Theorem~\ref{thm:Bobkov_Gotze} gives us an alternative way of deriving Gaussian concentration for Lipschitz functions (compare with earlier derivations using the entropy method):

\begin{corollary}\label{cor:Lipschitz_Gauss_con_revisited} Let $\cA$ be the space of all Lipschitz functions
on $\cX$, and let $\mu \in \cP_1(\cX)$ be a Borel probability measure that satisfies ${\rm T}_1(c)$. Then, the following inequality holds for every $f \in \cA$:
\begin{align} \label{eq:Lipschitz_Gauss_con_revisited}
	\pr \Big( f(X) \ge \expectation_\mu [f(X)] + r \Big) \le \exp\left( - \frac{r^2}{2c \| f \|_{\rm Lip}^2} \right), \qquad \forall \, r > 0.
	\end{align}
\end{corollary}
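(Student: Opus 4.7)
The plan is to deduce the concentration inequality \eqref{eq:Lipschitz_Gauss_con_revisited} from the Bobkov--G\"otze characterization of ${\rm T}_1(c)$ in Theorem~\ref{thm:Bobkov_Gotze} via a standard Chernoff-type argument, with only a rescaling/centering reduction in front of it.

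First I would reduce to the canonical case. Given an arbitrary Lipschitz function $f \in \cA$ (we may assume $\| f \|_{\rm Lip} > 0$, else the claim is trivial), I would set
\begin{align*}
\widetilde{f} \deq \frac{f - \expectation_\mu[f(X)]}{\| f \|_{\rm Lip}},
\end{align*}
so that $\widetilde{f}$ is $1$-Lipschitz and satisfies $\expectation_\mu[\widetilde{f}(X)] = 0$. Note that $\widetilde{f}$ is $\mu$-integrable by the remark following Theorem~\ref{thm:Bobkov_Gotze}, since $\mu \in \cP_1(\cX)$.

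Next, since $\mu$ satisfies ${\rm T}_1(c)$, Theorem~\ref{thm:Bobkov_Gotze} applied to $\widetilde{f}$ yields
\begin{align*}
\expectation_\mu\bigl[\exp(t\widetilde{f}(X))\bigr] \le \exp\!\left(\frac{c t^2}{2}\right), \qquad \forall\, t \in \reals.
\end{align*}
Now the Chernoff bounding technique (cf.\ Section~\ref{subsection: Chernoff}) gives, for every $t > 0$ and $r > 0$,
\begin{align*}
\pr\Bigl(f(X) \ge \expectation_\mu[f(X)] + r\Bigr)
&= \pr\!\left(\widetilde{f}(X) \ge \frac{r}{\| f \|_{\rm Lip}}\right) \\
&\le \exp\!\left(-\frac{tr}{\| f \|_{\rm Lip}}\right) \expectation_\mu\bigl[\exp(t\widetilde{f}(X))\bigr] \\
&\le \exp\!\left(\frac{c t^2}{2} - \frac{tr}{\| f \|_{\rm Lip}}\right).
\end{align*}

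Finally, I would optimize the right-hand side in $t > 0$. The minimizer is $t^* = r/(c \| f \|_{\rm Lip})$, and plugging it in gives exactly $\exp\!\bigl(-r^2 / (2c \| f \|_{\rm Lip}^2)\bigr)$, which is \eqref{eq:Lipschitz_Gauss_con_revisited}. There is no real obstacle here: the only substantive ingredient is the Bobkov--G\"otze theorem, which has already been proved in the excerpt; the rest is bookkeeping (homogeneity of the Lipschitz constant, invariance of the tail probability under centering, and a one-parameter optimization). If anything, the only subtlety worth mentioning is that $\widetilde{f}$ is well-defined and integrable, which is precisely guaranteed by the assumption $\mu \in \cP_1(\cX)$.
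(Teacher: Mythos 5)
Your proof is correct and takes precisely the same route as the paper: the paper's own proof simply says the result follows from the Chernoff bound and \eqref{eq:expint}, which is exactly the rescaling-plus-Chernoff argument you carry out in detail.
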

\begin{proof}
The result follows from the Chernoff bound and \eqref{eq:expint}.
\end{proof}

As another illustration, we prove the following bound, which includes the Kearns--Saul inequality (cf.\ Theorem~\ref{theorem: Kearns-Saul inequality}) as a special case:
\begin{theorem} Let $\cX$ be the Hamming space $\{0,1\}^n$, equipped with the metric
	\begin{align}\label{eq:product_Hamming_metric}
		d(x^n,y^n) = \sum^n_{i=1}1_{\{x_i \neq y_i\}}.
	\end{align}
Let $X_1,\ldots,X_n$ be i.i.d.\ ${\rm Bernoulli}(p)$ random variables.
Then, for every Lipschitz function $f \colon \{0,1\}^n \to \reals$,
\begin{align}\label{eq:sharp_Bernoulli_concentration}
\pr \Bigg( f(X^n) - \expectation[f(X^n)] \ge r \Bigg) \le
\exp\left(-\frac{\ln\left(\frac{1-p}{p}\right) \, r^2}{n \| f \|^2_{\rm Lip} (1-2p)} \right),
\qquad \forall \, r > 0.
\end{align}
\end{theorem}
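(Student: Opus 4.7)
The plan is to chain together three results that have already been developed: (i) the Ordentlich--Weinberger refinement of Pinsker's inequality to obtain a sharp one-dimensional $T_1$ inequality for $\mathrm{Bernoulli}(p)$, (ii) the tensorization of $T_1$ inequalities (Proposition~\ref{prop:TC_tensorization} with $p=1$), and (iii) the Bobkov--G\"otze equivalence in the form of Corollary~\ref{cor:Lipschitz_Gauss_con_revisited}.

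First, I would establish the one-dimensional step. On the two-point Hamming space $(\{0,1\}, d)$ with $d(x,y) = 1_{\{x \neq y\}}$, the computation that led to \eqref{eq:Hamming_cost_3} shows that $W_1(\mu,\nu) = \|\mu - \nu\|_{\mathrm{TV}}$ for any $\mu,\nu$. Applying the Ordentlich--Weinberger bound \eqref{eq:Pinsker_Bernoulli_p} to $\mu_p = \mathrm{Bernoulli}(p)$ (which has balance coefficient $\pi_{\mu_p} = \min\{p,1-p\}$) gives
\[
W_1(\mu_p,\nu) \;=\; \|\mu_p-\nu\|_{\mathrm{TV}} \;\le\; \sqrt{\, 2 c_p\, D(\nu\|\mu_p)\,}, \qquad c_p \deq \frac{1-2p}{2\ln[(1-p)/p]}
\]
for any $\nu$ on $\{0,1\}$ (and $c_p = 1/4$ when $p=1/2$, by continuous extension). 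Thus $\mu_p$ satisfies $T_1(c_p)$ on $(\{0,1\}, d)$.

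Second, the product metric in \eqref{eq:product_Hamming_metric} is precisely the $d_{1,n}$ metric built from the single-coordinate Hamming distance, so Proposition~\ref{prop:TC_tensorization} (with $p=1$) yields that $\mu_p^{\otimes n}$ satisfies $T_1(n c_p)$ on $(\{0,1\}^n, d_{1,n})$. Finally, Corollary~\ref{cor:Lipschitz_Gauss_con_revisited} applied to any Lipschitz $f \colon \{0,1\}^n \to \reals$ gives, for every $r>0$,
\[
\pr\bigl(f(X^n) - \expectation[f(X^n)] \ge r\bigr) \;\le\; \exp\!\left(-\frac{r^2}{2 n c_p \, \|f\|_{\mathrm{Lip}}^2}\right) \;=\; \exp\!\left(-\frac{\ln[(1-p)/p]\, r^2}{n\,(1-2p)\, \|f\|_{\mathrm{Lip}}^2}\right),
\]
which is exactly \eqref{eq:sharp_Bernoulli_concentration}.

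There is no genuine obstacle here, since every ingredient has been assembled earlier in the chapter; the only point that requires care is matching the transportation constant $c$ (defined via $W_1 \le \sqrt{2cD}$) to the form in which the Ordentlich--Weinberger inequality is stated, and handling the indeterminate form at $p=1/2$ by the limit $(1-2p)/\ln[(1-p)/p] \to 1/2$, which recovers the correct McDiarmid-type constant $2$ in the exponent. This observation also confirms that \eqref{eq:sharp_Bernoulli_concentration} is strictly sharper than what the generic discrete log-Sobolev route of Theorem~\ref{log-Sobolev inequality for the symmetric Bernoulli measure} would give for asymmetric Bernoulli measures, since $c_p < 1/4$ whenever $p \ne 1/2$.
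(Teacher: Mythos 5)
Your proof is correct and follows essentially the same route as the paper: both establish the one-dimensional $T_1$ inequality from the Ordentlich--Weinberger refinement of Pinsker's inequality, tensorize via Proposition~\ref{prop:TC_tensorization}, and then invoke the Bobkov--G\"otze equivalence through Corollary~\ref{cor:Lipschitz_Gauss_con_revisited}. The only cosmetic difference is that the paper writes the transportation constant as $1/(2\varphi(p))$ while you unfold it as $c_p = (1-2p)/(2\ln[(1-p)/p])$, but these are identical.
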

\begin{remark}\label{rem:p_half} In the limit as $p \to 1/2$, the right-hand side of \eqref{eq:sharp_Bernoulli_concentration} becomes $\exp\left(-\frac{2r^2}{n \|f\|^2_{\rm Lip}}\right)$.
\end{remark}
\begin{proof} Taking into account Remark~\ref{rem:p_half}, we may assume without loss of generality that $p \neq 1/2$. From the distribution-dependent refinement of Pinsker's inequality \eqref{eq:Pinsker_Bernoulli_p}, it follows that the ${\rm Bernoulli}(p)$ measure satisfies ${\rm T}_1(1/(2\varphi(p)))$ with respect to the Hamming metric, where $\varphi(p)$ is defined in \eqref{eq:Hoeffding_phi}. By Proposition~\ref{prop:TC_tensorization}, the product of $n$ ${\rm Bernoulli}(p)$ measures satisfies ${\rm T}_1(n/(2\varphi(p)))$ with respect to the metric \eqref{eq:product_Hamming_metric}. The bound \eqref{eq:sharp_Bernoulli_concentration} then follows from Corollary~\ref{cor:Lipschitz_Gauss_con_revisited}.
\end{proof}

\begin{remark} If $\| f \|_{\rm Lip} \le \frac{C}{n}$ for an arbitrary $C > 0$, then \eqref{eq:sharp_Bernoulli_concentration}
implies that for every $r>0$
\begin{align} \label{eq:Lipschitz_Gaussian_concentration}
	\pr \Bigg( f(X^n) - \expectation[f(X^n)] \ge r \Bigg)
    \le \exp\left(-\frac{\ln\left(\frac{1-p}{p}\right)}{C^2 (1-2p)} \cdot nr^2 \right).
\end{align}
This will be the case, for instance, if $f(x^n) = (1/n)\sum^n_{i=1} f_i(x_i)$ for some functions $f_1,\ldots,f_n : \{0,1\} \to \reals$ satisfying $|f_i(0)-f_i(1)| \le C$ for all $i=1,\ldots,n$. More generally, every $f$ satisfying \eqref{eq:bounded_differences} with $c_i = c'_i/n$, $i=1,\ldots,n$, for some constants $c'_1,\ldots,c'_n \ge 0$,
satisfies \eqref{eq:Lipschitz_Gaussian_concentration}
for all $r > 0$ with $C = \max_{1 \le i \le n} c'_i$.
\end{remark}

In the following, we provide Marton's coupling inequality, which forms a slightly stronger form of the original
result of Marton \cite{Marton_blowup} (see \cite[Theorem~8.2]{Boucheron_Lugosi_Massart_book} for the following
stronger statement):
\begin{theorem}[Marton's coupling inequality] \label{thm:Marton's coupling inequality}
Let $\mu = \mu_1 \otimes \ldots \mu_n$ be a product probability measure of $X^n \in \cX^n$, and let $\nu$ (where
$\nu \ll \mu$) be a probability measure of $Y^n \in \cX^n$. Then,
\begin{align} \label{eq:Marton's coupling inequality}
\min_{\pi \in \Pi(\mu,\nu)} \sum_{i=1}^n \pr^2(X_i \neq Y_i) \le \frac{1}{2} \, D(\nu \| \mu).
\end{align}
\begin{proof}
For the sake of conciseness and for avoiding some overlap with the excellent textbook \cite{Boucheron_Lugosi_Massart_book},
the reader is referred to \cite[p.~241]{Boucheron_Lugosi_Massart_book}.
\end{proof}
\end{theorem}

We provide in the following an alternative proof of McDiarmid's inequality \eqref{eq:McDiarmid},
based on the earlier material in this chapter about transportation-cost inequalities (recall the
two previous proofs of this inequality in Sections~\ref{subsection: McDiarmid's inequality}
and~\ref{ssec:bounded_differences_revisited}).

\bigskip
{\bf{An alternative proof of McDiarmid's inequality}}:
For every $n \in \naturals$, constants $c_1,\ldots,c_n > 0$, and a measurable space $\cX$, let us
equip the product space $\cX^n$ with the weighted Hamming metric
\begin{align*}
	d(x^n,y^n) \deq \sum^n_{i=1} c_i 1_{\{x_i \neq y_i\}}.
\end{align*}
Let $f \colon \cX^n \to \reals$ be a Lipschitz function (with respect to the metric $d$), and
suppose that it satisfies the condition of the bounded differences in \eqref{eq:bounded_differences}.
The corresponding Lipschitz constant $\| f \|_{\rm Lip}$ is given by
\begin{align*}
	\| f \|_{{\rm Lip}} = \sup_{x^n \neq y^n} \frac{|f(x^n) - f(y^n)|}{d(x^n,y^n)}.
\end{align*}
It is easy to verify that the condition $\| f \|_{{\rm Lip}} \le 1$ is equivalent to the condition
in \eqref{eq:bounded_differences}.

Let $\mu_1, \ldots, \mu_n$ be arbitrary $n$ probability measures on $\cX$, and let $\mu = \mu_1 \otimes \ldots \mu_n$
be a product probability measure of $X^n \in \cX^n$. Let $\nu$ be an arbitrary (not necessarily a product)
probability measure on $\cX^n$, where $\nu \ll \mu$, and let $Y^n$ be a random vector that is drawn from $\nu$.
Using the condition of the bounded differences in \eqref{eq:bounded_differences} and the Cauchy-Schwarz inequality,
\begin{align} \label{eq:invoking Cauchy-Schwarz and the condition of the bounded differences}
\Big| \expectation_{\mu}[f] - \expectation_{\nu}[f] \Big| & = \Big| \expectation \big[f(X^n) - f(Y^n) \big] \Big|  \nonumber \\
& \leq \expectation \Big| f(X^n) - f(Y^n) \Big| \nonumber \\
& \leq \sum_{i=1}^n \expectation \big[ c_i \, 1_{\{X_i \neq Y_i\}} \big] \nonumber \\
& \leq \left( \sum_{i=1}^n c_i^2 \right)^{\frac{1}{2}} \left( \sum_{i=1}^n \expectation^2\big[ 1_{\{X_i \neq Y_i\}}\big]
\right)^{\frac{1}{2}} \nonumber \\[0.1cm]
& = \left( \sum_{i=1}^n c_i^2 \right)^{\frac{1}{2}} \left( \sum_{i=1}^n \pr^2(X_i \neq Y_i) \right)^{\frac{1}{2}}
\end{align}
where the last equality holds because the expectation of the indicator function of an event is the probability of the event.
By minimizing the right-hand side of \eqref{eq:invoking Cauchy-Schwarz and the condition of the bounded differences}
with respect to all the couplings  $\pi \in \Pi(\mu,\nu)$, it follows from \eqref{eq:Marton's coupling inequality} that
\begin{align}
\Big| \expectation_{\mu}[f] - \expectation_{\nu}[f] \big| \leq \sqrt{\frac{1}{2} \, \Bigl( \sum_{i=1}^n c_i^2 \Bigr) \; D(\nu \| \mu)}.
\label{eq:Marton's coupling and Cauchy-Schwarz}
\end{align}
By supremizing the left-hand side of \eqref{eq:Marton's coupling and Cauchy-Schwarz}, with respect to all the Lipschitz
functions $f \colon \cX^n \to \reals$ such that $\| f \|_{{\rm Lip}} \le 1$, it follows from the Kantorovich--Rubinstein theorem
(see \eqref{eq:Kantorovich_Rubinstein}) that
\begin{align*}
W_1(\mu, \nu) \leq \sqrt{\frac{1}{2} \, \Bigl( \sum_{i=1}^n c_i^2 \Bigr) \; D(\nu \| \mu)} \, .
\end{align*}
Hence, $\mu$ satisfies ${\rm T}_1(c)$ (relative to the weighted Hamming metric $d$) with the constant $c = \frac{1}{4} \sum_{i=1}^n c_i^2$.
By Theorem~\ref{thm:Bobkov_Gotze}, it is equivalent to the satisfiability of the inequality
\begin{align*}
\expectation_\mu\Big\{\exp[tf(X^n)]\Big\} \le \exp\left(\frac{1}{8} \, \sum_{i=1}^n c_i^2 \, t^2 \right), \quad \forall \, t \in \reals
\end{align*}
for all Lipschitz functions $f \colon \cX^n \to \reals$ with $\expectation_\mu[f(X^n)] = 0$, and $\| f \|_{{\rm Lip}} \le 1$.
Following Corollary~\ref{cor:Lipschitz_Gauss_con_revisited}, it provides an alternative proof
of McDiarmid's inequality \eqref{eq:McDiarmid}.

\subsection{Dimension-free Gaussian concentration and ${\rm T}_2$ inequalities}

So far, we have mostly confined our discussion to the ``one-dimensional'' case of a probability measure $\mu$ on a Polish space $(\cX,d)$. Recall, however, that in most applications our interest is in functions of $n$ independent random variables taking values in $\cX$. Proposition~\ref{prop:TC_tensorization} shows that the transportation-cost inequalities tensorize, so in principle this property can be used to derive concentration inequalities for
such functions. However, as suggested by Proposition~\ref{prop:TC_tensorization} and the discussion following it, ${\rm T}_1$ inequalities are not very useful in this regard, since the resulting concentration bounds will deteriorate as $n$ increases. Indeed, if $\mu$ satisfies ${\rm T}_1(c)$ on $(\cX,d)$, then the product measure $\mu^{\otimes n}$ satisfies ${\rm T}_1(cn)$ on the product space $(\cX^n,d_{1,n})$, which is equivalent to the Gaussian concentration property
$$
\pr\Big( f(X^n) \ge \expectation f(X^n) + r\Big) \le K\exp\left( - \frac{r^2}{2cn}\right)
$$
for every $f \colon \cX^n \to \reals$ with Lipschitz constant $1$ with respect to $d_{1,n}$. Since the exponent is inversely proportional to the dimension $n$, we need to have $r$ grow at least as $\sqrt{n}$ in order to guarantee a given value for the deviation probability. In particular, the higher the dimension $n$ is, the more we will need to ``inflate'' a given set $A \subset \cX^n$ to capture most of the probability mass. For these reasons, we seek a direct characterization of a much stronger concentration property, the so-called {\em dimension-free Gaussian concentration}.

Once again, let $(\cX,d,\mu)$ be a metric probability space. We say that $\mu$ has {\em dimension-free Gaussian concentration} if there exist constants $K,\kappa > 0$, such that for every $k \in \naturals$ and $r>0$
\begin{align}\label{eq:dimfree_concentration}
	A \subseteq \cX^k \text{ and } \mu^{\otimes k}(A) \ge 1/2 \quad \Longrightarrow \quad \mu^{\otimes k}(A_r) \ge 1 - Ke^{-\kappa r^2}
\end{align}
where the isoperimetric enlargement $A_r$ of a Borel set $A \subseteq \cX^k$ is defined in \eqref{eq:blowup_set} with respect to the metric
$d_k \equiv d_{2,k}$ defined according to \eqref{eq:product_metric}:
\begin{align*}
	A_r \deq \left\{ y^k \in \cX^k \colon \exists x^k \in A \; \text{s.t.} \; \sum^k_{i=1}d^2(x_i,y_i) < r^2\right\}.
\end{align*}

\begin{remark} As before, we are mainly interested in the constant $\kappa$ in the exponent. Thus, it is said that $\mu$ has dimension-free Gaussian concentration with constant $\kappa > 0$ if \eqref{eq:dimfree_concentration} holds with that $\kappa$ and some $K > 0$.
\end{remark}

\begin{remark} In the same spirit as Remark~\ref{rem:isoperimetry_weakened}, it may be desirable to relax  \eqref{eq:dimfree_concentration} to the following: there exists some $r_0 > 0$, such that for every $k \in \naturals$ and $r \geq r_0$,
\begin{align}\label{eq:dimfree_concentration_weakened}
	A \subseteq \cX^k \text{ and } \mu^{\otimes k}(A) \ge 1/2 \qquad \Longrightarrow \qquad \mu^{\otimes k}(A_r) \ge 1 - Ke^{-\kappa (r-r_0)^2}
\end{align}
(see, for example,  \cite[Remark~22.23]{Villani_newbook} or \cite[Proposition~3.3]{Gozlan}). The same considerations about (possibly) sharper constants that were stated in Remark~\ref{rem:isoperimetry_weakened} also apply here.
\end{remark}

In this section, we will show that dimension-free Gaussian concentration and ${\rm T}_2$ inequalities are equivalent. Before we get to that, here is an example of a ${\rm T}_2$ inequality:

\begin{theorem}[Talagrand \cite{Talagrand_Gaussian_T2}]\label{thm:Gaussian_T2} Let $\cX = \reals^n$ and
$d(x,y) = \| x - y \|$. Then $\mu =G^n$ satisfies a ${\rm T}_2(1)$ inequality.
\end{theorem}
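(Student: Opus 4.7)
The plan is to reduce the $n$-dimensional statement to the one-dimensional case and then establish the one-dimensional ${\rm T}_2(1)$ inequality by an explicit computation using the optimal monotone coupling.

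First I would invoke Proposition~\ref{prop:TC_tensorization} with $p=2$: since $n^{2/p-1} = n^0 = 1$ for $p=2$, it suffices to prove that the standard Gaussian $G$ on $\reals$ satisfies ${\rm T}_2(1)$ with respect to the Euclidean metric, because the product measure $G^n = G^{\otimes n}$ then automatically satisfies ${\rm T}_2(1)$ with respect to $d_{2,n}$, which is precisely the Euclidean metric on $\reals^n$. This is the main conceptual advantage of ${\rm T}_2$ over ${\rm T}_1$: the constant does not degrade with the dimension.

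For the one-dimensional case, fix $\nu \ll G$ with $D(\nu \| G) < \infty$; otherwise the inequality is vacuous. Writing $\Phi$ for the cdf of $G$ and $F_\nu$ for the cdf of $\nu$, Item~6 of Lemma~\ref{lm:W_properties} provides the optimal coupling $Y = T(X)$ with $X \sim G$, where $T \deq F_\nu^{-1} \circ \Phi$ is nondecreasing and satisfies $T(X) \sim \nu$. Assuming first that $\nu$ has a smooth strictly positive density $p$ on $\reals$, so that $T$ is $C^1$ with $T' > 0$, the change-of-variables identity $p(T(x))\, T'(x) = \gamma(x)$, where $\gamma$ denotes the standard Gaussian density, gives
\begin{equation*}
\ln \frac{\d\nu}{\d G}(T(x)) \,=\, \ln \frac{p(T(x))}{\gamma(T(x))} \,=\, \frac{T(x)^2 - x^2}{2} - \ln T'(x).
\end{equation*}
Integrating against $\gamma$ and using the elementary bound $-\ln T'(x) \ge 1 - T'(x)$, I obtain
\begin{equation*}
D(\nu \| G) \,=\, \int_\reals \left[\frac{T(x)^2-x^2}{2} - \ln T'(x)\right]\gamma(x)\,\d x \,\ge\, \int_\reals \left[\frac{T(x)^2-x^2}{2} + 1 - T'(x)\right]\gamma(x)\,\d x.
\end{equation*}

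Next I would evaluate the right-hand side. Integration by parts using $\gamma'(x) = -x\gamma(x)$ gives $\int T'(x)\gamma(x)\,\d x = \int x\, T(x)\, \gamma(x)\,\d x = \expectation[XY]$, and $\int \frac{T(x)^2 - x^2}{2}\gamma(x)\,\d x = \tfrac{1}{2}(\expectation[Y^2] - 1)$. Combining these identities, the lower bound becomes $\tfrac{1}{2}\expectation[Y^2] + \tfrac{1}{2} - \expectation[XY] = \tfrac{1}{2}\expectation[(X-Y)^2] = \tfrac{1}{2}W_2^2(G,\nu)$, where the last equality uses $\expectation[X^2]=1$ and the optimality of the coupling $(X,Y)$. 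Rearranging yields $W_2(G,\nu) \le \sqrt{2\, D(\nu\|G)}$ in the regular case, and tensorization via Proposition~\ref{prop:TC_tensorization} then upgrades this to the $n$-dimensional statement.

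The main technical obstacle will be removing the regularity assumption on $\nu$, since for a general $\nu \ll G$ the map $T$ need not be differentiable and the change-of-variables formula requires care. I would handle this by approximation: mollify $\nu$ (for instance by convolving its density with a small Gaussian or, equivalently, pushing $\nu$ forward through the Ornstein--Uhlenbeck channel $\OU(\eps)$) to obtain a sequence $\nu_\eps \ll G$ with smooth positive densities, apply the calculation above to each $\nu_\eps$, and then pass to the limit $\eps \downarrow 0$ using lower semicontinuity of $W_2$ (Item~4 of Lemma~\ref{lm:W_properties}) together with lower semicontinuity and suitable convergence of the divergence $D(\cdot \| G)$. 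A finiteness check $\expectation_\nu[X^2] < \infty$ — which follows from $D(\nu\|G)<\infty$ via a standard Gaussian integrability argument — ensures $\nu \in \cP_2(\reals)$, so all quantities involved are finite and the boundary terms in the integration by parts vanish.
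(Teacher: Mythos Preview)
Your argument is essentially identical to the paper's: reduce to $n=1$ via the tensorization Proposition~\ref{prop:TC_tensorization}, use the monotone optimal coupling $T=F_\nu^{-1}\circ\Phi$, apply the change-of-variables identity, the inequality $-\ln T'\ge 1-T'$, and integration by parts to get $D(\nu\|G)\ge\tfrac12 W_2^2(\nu,G)$. The only difference is that you add a careful approximation step to justify the regularity assumptions on $T$, which the paper simply glosses over.
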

\begin{proof} The proof starts for $n=1$: let $\mu = G$, let $\nu \in \cP(\reals)$ have
density $f$ with respect to $\mu$: $f = \frac{\d\nu}{\d\mu}$, and let $\Phi$ denote the standard Gaussian cdf, i.e.,
\begin{align*}
	\Phi(x) = \int^x_{-\infty}\gamma(y) \d y = \frac{1}{\sqrt{2\pi}} \int^x_{-\infty} \exp\left(-\frac{y^2}{2}\right)\d y, \quad \forall \, x \in \reals.
\end{align*}
If $X \sim G$, then (by Item~6 of Lemma~\ref{lm:W_properties}) the optimal coupling of $\mu = G$ and $\nu$, i.e.,
the one that achieves the infimum in
\begin{align*}
	W_2(\nu,\mu) = W_2(\nu,G) = \inf_{X \sim G, \, Y \sim \nu} \left(\expectation[(X-Y)^2]\right)^{1/2}
\end{align*}
is given by $Y = h(X)$ with $h = \sF^{-1}_\nu \circ \Phi$. Consequently,
\begin{align}
	W^2_2(\nu,G) = \expectation[(X-h(X))^2]
	= \int_{-\infty}^{\infty} \big(x-h(x)\big)^2 \gamma(x) \, \d x. \label{eq: W_2 squared}
\end{align}
Since $\d\nu = f\,\d\mu$ with $\mu = G$, and $\sF_\nu(h(x)) = \Phi(x)$ for every $x \in \reals$, we have
\begin{align}\label{eq:optimal_coupling_int}
\int_{-\infty}^x \gamma(y) \, \d y = \Phi(x)
= \sF_\nu(h(x))
= \int_{-\infty}^{h(x)} f \, \d \mu
= \int^{h(x)}_{-\infty} f(y) \gamma(y) \, \d y.
\end{align}
Differentiating both sides of \eqref{eq:optimal_coupling_int} with respect to $x$ gives
\begin{equation}
h'(x) f(h(x))\gamma(h(x)) = \gamma(x), \quad \forall \, x \in \reals.
\label{eq: resulting equation after differentiation}
\end{equation}
Since $h = \sF^{-1}_\nu \circ \Phi$, $h$ is a monotonically increasing function, and
$$ \lim_{x \rightarrow -\infty} h(x) = -\infty, \quad \lim_{x \rightarrow \infty} h(x) = \infty.$$
Moreover,
\begin{align}
	D(\nu \| G) &= D(\nu \| \mu) \nonumber \\
	&= \int_{\reals} \d\nu \, \ln \frac{\d\nu}{\d\mu} \nonumber \\
	&= \int_{-\infty}^{\infty} \ln \big(f(x)\big) \, \d \nu(x) \nonumber \\
	&= \int_{-\infty}^{\infty} f(x) \, \ln \big(f(x)\big) \, \d \mu(x) \nonumber \\
    &= \int_{-\infty}^{\infty} f(x) \, \ln \big(f(x)\big) \, \gamma(x) \, \d x \nonumber \\
    &= \int_{-\infty}^{\infty} f\big(h(x) \big) \, \ln \big(f\big(h(x)\big)\big) \, \gamma\big(h(x)\big)
        \, h'(x) \, \d x \nonumber \\
	&= \int_{-\infty}^{\infty} \ln \big(f(h(x))\big) \, \gamma(x) \, \d x, \label{eq:optimal_coupling_div}
\end{align}
where we have used
\eqref{eq: resulting equation after differentiation} to get the last equality. From
\eqref{eq: resulting equation after differentiation}
\begin{align*}
\ln \big(f\big(h(x)\big)\big) = \ln \left(\frac{\gamma(x)}{h'(x) \, \gamma\big(h(x)\big)} \right) = \frac{h^2(x)-x^2}{2}-\ln h'(x).
\end{align*}
Upon substituting this into \eqref{eq:optimal_coupling_div}, we get
\begin{align*}
	D(\nu \| \mu) &= \frac{1}{2} \int_{-\infty}^{\infty} \left[h^2(x) - x^2\right] \, \gamma(x) \, \d x - \int_{-\infty}^{\infty} \ln h'(x) \, \gamma(x) \, \d x \\
	&= \frac{1}{2} \int_{-\infty}^{\infty} \big(x-h(x)\big)^2 \, \gamma(x) \, \d x + \int_{-\infty}^{\infty} x\big(h(x)-x\big) \, \gamma(x) \, \d x \\
	& \qquad \qquad \qquad - \int_{-\infty}^{\infty} \ln h'(x) \, \gamma(x) \, \d x \\
	& \stackrel{(\text{a})}{=} \frac{1}{2} \int_{-\infty}^{\infty} \big(x-h(x)\big)^2 \, \gamma(x) \, d x
       + \int_{-\infty}^{\infty} (h'(x)-1) \, \gamma(x) \, \d x \\
    & \qquad \qquad \qquad   - \int_{-\infty}^{\infty} \ln h'(x) \, \gamma(x) \, \d x \\
	& \stackrel{(\text{b})}{\ge} \frac{1}{2} \int_{-\infty}^{\infty} \big(x - h(x)\big)^2 \, \gamma(x) \, \d x \\
	& \stackrel{(\text{c})}{=} \frac{1}{2} \; W^2_2(\nu,\mu)
\end{align*}
where equality~(a) relies on integration by parts, inequality~(b) follows from the inequality $\ln t \le t - 1$
for $t > 0$ and since $h$ is monotonic increasing and differentiable, and equality~(c) holds due to \eqref{eq: W_2 squared}.
This shows that $\mu = G$ satisfies ${\rm T}_2(1)$, so the proof of Theorem~\ref{thm:Gaussian_T2} for $n=1$ is complete.
Finally, this theorem is generalized for an arbitrary $n$ by tensorization via Proposition~\ref{prop:TC_tensorization}.
\end{proof}

We get in the following to the main result of this section, namely that dimension-free Gaussian concentration and ${\rm T}_2$ inequalities are equivalent:

\begin{theorem} Let $(\cX,d,\mu)$ be a metric probability space. Then, the following statements are equivalent:
	\begin{enumerate}
		\item $\mu$ satisfies ${\rm T}_2(c)$.
		\item $\mu$ has dimension-free Gaussian concentration with $\kappa = \frac{1}{2c}$.
	\end{enumerate}
\label{theorem: dimension-free Gaussian concentration and T_2 inequalities}
\end{theorem}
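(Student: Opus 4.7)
The plan is to handle the two implications separately, with the forward direction following easily from results already assembled in the chapter, and the reverse direction requiring the Gozlan argument via Sanov's theorem.

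For the forward direction ($\mathrm{T}_2(c) \Rightarrow$ dimension-free Gaussian concentration with $\kappa = 1/(2c)$), I would simply chain three results. First, by Proposition~\ref{prop:TC_tensorization} applied with $p=2$, if $\mu$ satisfies $\mathrm{T}_2(c)$ on $(\cX,d)$, then for every $n \in \naturals$ the product measure $\mu^{\otimes n}$ satisfies $\mathrm{T}_2(c\,n^{2/2-1}) = \mathrm{T}_2(c)$ on $(\cX^n, d_{2,n})$ with the \emph{same} constant $c$, independent of the dimension $n$. Second, since $W_1 \le W_2$ by item 2 of Lemma~\ref{lm:W_properties}, $\mu^{\otimes n}$ also satisfies $\mathrm{T}_1(c)$ on $(\cX^n,d_{2,n})$. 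Third, Proposition~\ref{prop:Marton_technique} applied to the metric probability space $(\cX^n,d_{2,n},\mu^{\otimes n})$ yields the dimension-free Gaussian concentration property \eqref{eq:dimfree_concentration_weakened} with $\kappa = 1/(2c)$, $K=1$, and $r_0 = \sqrt{2c\ln 2}$.

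For the reverse direction, the plan is to use the empirical-measure method due to Gozlan. Fix any $\nu \ll \mu$ with $D(\nu\|\mu) < \infty$, let $X_1,\ldots,X_n$ be i.i.d.\ with law $\mu$, and introduce the empirical measure $L_n(x^n) \deq \frac{1}{n}\sum_{i=1}^n \delta_{x_i}$. The key observation is that the map $F_\nu \colon x^n \mapsto W_2(L_n(x^n),\nu)$ is $\frac{1}{\sqrt{n}}$-Lipschitz with respect to the product metric $d_{2,n}$: indeed, using the coupling that matches $\delta_{x_i}$ with $\delta_{y_i}$ gives $W_2(L_n(x^n),L_n(y^n))^2 \le \frac{1}{n}\sum_{i=1}^n d(x_i,y_i)^2 = \frac{1}{n} d_{2,n}(x^n,y^n)^2$, and the triangle inequality then transfers this to $F_\nu$. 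Applying the assumed dimension-free Gaussian concentration to $F_\nu$ (together with the equivalence between concentration around the median and around the mean via Theorem~\ref{thm: Lipschitz_mean_conc}) yields, for a median $m_n$ of $F_\nu$ under $\mu^{\otimes n}$,
\begin{equation*}
\mu^{\otimes n}\!\bigl(W_2(L_n,\nu) \le m_n - r\bigr) \le K\exp\!\left(-\tfrac{n(r-r_0/\sqrt n)^2}{2c}\right), \qquad \forall\,r \ge r_0/\sqrt n.
\end{equation*}

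The crux of the argument, and where the main obstacle lies, is combining this concentration with Sanov's large deviations lower bound to force $m_n \le \sqrt{2c\,D(\nu\|\mu)} + o(1)$. Concretely, for any weak neighborhood $\cU$ of $\nu$, Sanov's theorem guarantees $\mu^{\otimes n}(L_n \in \cU) \ge \exp(-n D(\nu\|\mu) - o(n))$; if we could take $\cU = \{\rho : W_2(\rho,\nu) < \eps\}$, then any $m_n$ exceeding $\sqrt{2c\,D(\nu\|\mu)} + \eps + o(1)$ would make the event $\{W_2(L_n,\nu)<\eps\}$ have probability simultaneously at least $\exp(-n D(\nu\|\mu) - o(n))$ by Sanov and at most $K\exp(-n(m_n-\eps)^2/(2c) + o(n))$ by concentration --- a contradiction for large $n$. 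Combined with $\liminf_n m_n \ge W_2(\mu,\nu)$, which follows from $L_n \to \mu$ almost surely in the weak topology together with lower semicontinuity of $W_2$ (item 4 of Lemma~\ref{lm:W_properties}), one obtains $W_2(\mu,\nu) \le \sqrt{2c\,D(\nu\|\mu)}$, i.e.\ $\mathrm{T}_2(c)$. The delicate technical point is that $W_2$ does \emph{not} metrize weak convergence but rather weak convergence augmented with convergence of second moments (item 3 of Lemma~\ref{lm:W_properties}); so realizing the neighborhood $\cU$ above as a genuinely open set in the Sanov topology requires either a moment assumption on $\mu$ or a truncation argument that approximates $\nu$ by compactly supported probability measures, which is the most technically demanding step of the proof.
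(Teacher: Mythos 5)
The forward direction is exactly the paper's argument: tensorize $\mathrm{T}_2(c)$ to get $\mathrm{T}_2(c)$ for $\mu^{\otimes n}$ with a constant independent of $n$, weaken to $\mathrm{T}_1(c)$ via $W_1 \le W_2$, and apply Marton's argument; no comments there.

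For the reverse direction, you are on the right track (this is indeed Gozlan's argument, as the paper also follows), but your execution diverges from the paper in a way that creates exactly the gap you flag at the end, and the paper's version of the argument is specifically structured so that this gap never appears. The paper does not work with $F_\nu(x^n) = W_2(L_n(x^n),\nu)$ for an arbitrary target $\nu$; it works only with $f_k(x^k) = W_2(\emp_{x^k},\mu)$, the $W_2$-distance to the \emph{base} measure. With this choice, the two sides of the squeeze are handled by two different properties of $W_2$ from Lemma~\ref{lm:W_properties}, neither of which requires $W_2$-balls to be weakly open. On one side, the medians $m_k$ of $f_k$ tend to $0$ because $\emp_{X^k} \to \mu$ weakly a.s.\ (Varadarajan), and $W_2$ metrizes weak convergence \emph{together with} second-moment convergence (item~3 of the lemma), so $W_2(\emp_{X^k},\mu) \to 0$ a.s.; combined with the concentration inequality around $m_k$ this gives $\limsup_k \frac{1}{k}\ln\pr\bigl(W_2(\emp_{X^k},\mu)\ge r\bigr) \le -\kappa r^2$. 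On the other side, the Sanov lower bound is applied to the event $\{W_2(\emp_{X^k},\mu) > r\}$, which corresponds to the set $\{\nu : W_2(\nu,\mu) > r\}$; this set \emph{is} open in the weak topology precisely because $\nu \mapsto W_2(\nu,\mu)$ is \emph{lower} semicontinuous (item~4 of the lemma). Thus Sanov gives $\liminf_k \frac{1}{k}\ln\pr\bigl(W_2(\emp_{X^k},\mu) \ge r\bigr) \ge -\inf\{D(\nu\|\mu) : W_2(\nu,\mu) > r\}$, and combining the two bounds yields $D(\nu\|\mu) \ge \kappa\, W_2^2(\nu,\mu)$ for every $\nu$.

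Your version needs $\{\rho : W_2(\rho,\nu) < \eps\}$ to be open in the weak topology (so that Sanov's lower bound can be applied to it), which would require \emph{upper} semicontinuity of $W_2(\cdot,\nu)$; but $W_2$ is only lower semicontinuous in the weak topology, and is in general not upper semicontinuous (a weakly convergent sequence can lose mass to infinity). So the obstacle you describe is not merely ``technically demanding'' --- as stated, the Sanov lower bound simply does not apply to your choice of $\cU$, and a truncation argument would effectively amount to re-deriving the paper's formulation. The fix is to flip the quantifier: rather than lower-bounding the probability that $L_n$ is \emph{close} to a fixed $\nu$, lower-bound the probability that $L_n$ is \emph{far} from $\mu$, and let the infimum over $\nu$ in that open set range implicitly. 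Note also that the paper's ``medians tend to zero'' step does quietly use a finite second moment of $\mu$ (via item~3 of Lemma~\ref{lm:W_properties}), so the implicit standing assumption $\mu\in\cP_2(\cX)$ is present in both approaches; that part of your concern is not a spurious worry, it is just built into the definition of $\mathrm{T}_2$.
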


\begin{remark} As we will see, the implication $1) \Rightarrow 2)$ follows easily from the tensorization property of transportation-cost inequalities (Proposition~\ref{prop:TC_tensorization}). The reverse implication $2) \Rightarrow 1)$ is a nontrivial result, which was proved by Gozlan \cite{Gozlan} using an elegant probabilistic approach relying on the theory of large deviations \cite{Dembo_Zeitouni}.
\end{remark}

\begin{proof} We first prove that $1) \Rightarrow 2)$. Assume that $\mu$ satisfies ${\rm T}_2(c)$ on $(\cX,d)$. Fix some $k \in \naturals$ and consider the metric probability space $(\cX^k,d_{2,k},\mu^{\otimes k})$, where the metric $d_{2,k}$ is defined by \eqref{eq:product_metric} with $p=2$. By the tensorization property of transportation-cost inequalities (Proposition~\ref{prop:TC_tensorization}), the product measure $\mu^{\otimes k}$ satisfies ${\rm T}_2(c)$ on $(\cX^k,d_{2,k})$. Because the $L^2$ Wasserstein distance dominates the $L^1$ Wasserstein distance (by item~2 of Lemma~\ref{lm:W_properties}), $\mu^{\otimes k}$ also satisfies ${\rm T}_1(c)$ on $(\cX^k,d_{2,k})$. Therefore, by Proposition~\ref{prop:Marton_technique}, $\mu^{\otimes k}$ has Gaussian concentration \eqref{eq:isoperimetry_weakened} with respect to $d_{2,k}$ with constants $\kappa = 1/(2c), K = 1, r_0 = \sqrt{2 c \ln 2}$. Since this holds for every $k \in \naturals$, we conclude that $\mu$ indeed has dimension-free Gaussian concentration with constant
$\kappa = 1/(2c)$.
	
We now prove the converse implication $2) \Rightarrow 1)$. Suppose that $\mu$ has dimension-free Gaussian concentration with constant $\kappa > 0$, where for simplicity we assume that $r_0 = 0$ (the argument for the general case of $r_0 > 0$ is slightly more involved, and does not contribute much in the way of insight). Let $k \in \naturals$ be fixed, and consider the metric probability space $(\cX^k,d_{2,k},\mu^{\otimes k})$. Given $x^k \in \cX^k$, let $\emp_{x^k}$ be the {\em empirical measure}
\begin{align} \label{eq: empirical distribution}
	\emp_{x^k} = \frac{1}{k}\sum^k_{i=1}\delta_{x_i},
\end{align}
where $\delta_{x}$ denotes a Dirac measure (unit mass) concentrated at $x \in \cX$. Now consider a probability measure $\nu$ on $\cX$, and define the function $f_\nu \colon \cX^k \to \reals$ by
$f_\nu(x^k) \deq W_2(\emp_{x^k},\nu)$ for all $x^k \in \cX^k.$
We claim that this function is Lipschitz with respect to $d_{2,k}$ with Lipschitz constant $\frac{1}{\sqrt{k}}$.
To verify this, note that
\begin{align}
	\big|f_\nu(x^k) - f_\nu(y^k)\big| &= \left|W_2(\emp_{x^k},\nu) - W_2(\emp_{y^k},\nu)\right| \nonumber\\
	&\le W_2(\emp_{x^k},\emp_{y^k}) \label{eq:W_Lip_1}\\
	&= \inf_{\pi \in \Pi(\emp_{x^k},\emp_{y^k})}\left( \int_\cX d^2(x,y)\, \pi(\d x, \d y)\right)^{1/2} \label{eq:W_Lip_2}\\
	&\le \left( \frac{1}{k}\sum^k_{i=1}d^2(x_i,y_i) \right)^{1/2} \label{eq:W_Lip_3} \\
	&= \frac{1}{\sqrt{k}} \, d_{2,k}(x^k,y^k), \label{eq:W_Lip_5}
\end{align}
where
\begin{itemize}
	\item \eqref{eq:W_Lip_1} is by the triangle inequality;
	\item \eqref{eq:W_Lip_2} is by definition of $W_2$;
	\item \eqref{eq:W_Lip_3} uses the fact that the measure that places mass $1/k$ on each $(x_i,y_i)$
for $i \in \{1,\ldots,k\}$, is an element of $\Pi(\emp_{x^k},\emp_{y^k})$ (due to the definition of
an empirical distribution in \eqref{eq: empirical distribution}, the marginals of the above measure
are indeed $\emp_{x^k}$ and $\emp_{y^k}$); and
	\item \eqref{eq:W_Lip_5} uses the definition \eqref{eq:product_metric} of $d_{2,k}$.
\end{itemize}
Now let us consider the function $f_k(x^k) \deq W_2(\emp_{x^k},\mu)$, for which, as we have just seen, we have
\begin{align}  \label{eq:f_k Lip}
\| f_k \|_{\rm Lip,2} \le \frac{1}{\sqrt{k}}.
\end{align}
Let $X_1,\ldots,X_k$ be i.i.d.\ draws from $\mu$. Let $m_k$ denote some $\mu^{\otimes k}$-median of $f_k$. Then, by the assumed dimension-free Gaussian concentration property of $\mu$, Theorem~\ref{thm:Lipschitz_isoperimetry} yields that for every $r \ge 0$ and $k \in \naturals$
\begin{align}\label{eq:Gozlan_step1}
	\pr \Big( f_k(X^k) \ge m_k + r \Big)
    \le \exp \left(-\frac{\kappa r^2}{ \|f_k\|^2_{\text{Lip}, 2}} \right)
    \le \exp\Big( -\kappa k r^2\Big)
\end{align}
where the second inequality follows from \eqref{eq:f_k Lip}.

We now claim that every sequence $\{m_k\}^\infty_{k=1}$ of medians of the $f_k$'s converges to zero. If $X_1,X_2,\ldots$ are i.i.d.\ draws from $\mu$, then the sequence of empirical distributions $\{ \emp_{X^k}\}^\infty_{k=1}$ almost surely converges weakly to $\mu$ (this is known as Varadarajan's theorem \cite[Theorem~11.4.1]{Dudley_book}).
Therefore, since $W_2$ metrizes the topology of weak convergence together with the convergence of second moments (cf.~Lemma~\ref{lm:W_properties}), $\lim_{k \to \infty} W_2(\emp_{X^k},\mu) = 0$ almost surely. Hence, using the fact that
convergence almost surely implies convergence in probability, we have
\begin{align*}
	\lim_{k \to \infty}\pr\big( W_2(\emp_{X^k},\mu) \ge t \big) = 0, \qquad \forall \, t > 0.
\end{align*}
Consequently, every sequence $\{m_k\}$ of medians of the $f_k$'s converges to zero, as claimed. Combined with \eqref{eq:Gozlan_step1}, this implies that
\begin{align}\label{eq:Gozlan_step2}
	\limsup_{k \to \infty} \frac{1}{k}\ln \pr \Big( W_2(\emp_{X^k},\mu) \ge  r \Big) \le - \kappa r^2.
\end{align}
On the other hand, for a fixed $\mu$, the mapping $\nu \mapsto W_2(\nu,\mu)$ is lower semicontinuous in the topology of weak convergence of probability measures (cf.~Item~4 of Lemma~\ref{lm:W_properties}). Consequently, the set $\{ \mu: W_2(\emp_{X^k},\mu) > r \}$ is open in the weak topology, so by Sanov's theorem \cite[Theorem~6.2.10]{Dembo_Zeitouni}
\begin{align}\label{eq:Gozlan_step3}
	\liminf_{k \to \infty} \frac{1}{k}\ln \pr \Big( W_2(\emp_{X^k},\mu) \ge r\Big) \ge - \inf \left\{ D(\nu \| \mu) : W_2(\mu,\nu) > r \right\}.
\end{align}
Combining \eqref{eq:Gozlan_step2} and \eqref{eq:Gozlan_step3}, we get that
\begin{equation*}
\inf \bigl\{D(\nu \| \mu): \, W_2(\mu,\nu) > r \bigr\} \geq \kappa r^2
\end{equation*}
which then implies that $D(\nu \| \mu) \geq \kappa \, W^2_2(\mu,\nu)$. Upon rearranging, we obtain
$W_2(\mu,\nu) \le \sqrt{\bigl(\frac{1}{\kappa}\bigr) \, D(\nu \| \mu)}$, which is a ${\rm T}_2(c)$ inequality
with $c = \frac{1}{2\kappa}$.
This completes the proof of Theorem~\ref{theorem: dimension-free Gaussian concentration and T_2 inequalities}.
\end{proof}

\subsection{A grand unification: the HWI inequality}

At this point, we have seen two perspectives on the concentration of measure phenomenon: functional (through various log-Sobolev inequalities) and probabilistic (through transportation-cost inequalities). We now show that these two perspectives are, in a very deep sense, equivalent, at least in the Euclidean setting of $\reals^n$. This equivalence is captured by a striking inequality, due to Otto and Villani \cite{Otto_Villani}, which relates three measures of similarity between probability measures: the divergence, $L^2$ Wasserstein distance, and Fisher information distance. In the literature on optimal transport, the divergence between two probability measures $Q$ and $P$ is often denoted by $H(Q \| P)$ or $H(Q, P)$, due to its close links to the Boltzmann $H$-functional of statistical physics. For this reason, the inequality we have alluded to above has been dubbed the {\em HWI inequality}, where $H$ stands for the divergence, $W$ for the Wasserstein distance, and $I$ for the Fisher information distance (see
\eqref{eq:Fisher_dist} and \eqref{eq:Fisher_dist_a}).

As a warm-up, we first state a weaker version of the HWI inequality specialized to the Gaussian distribution, and give a self-contained information-theoretic proof following \cite{Wu}:

\begin{theorem} \label{theorem: weak HWI}
Let $G$ be the standard Gaussian probability distribution on $\reals$. Then, the inequality
\begin{align}\label{eq:weak_HWI}
D(P \| G) \le W_2(P,G) \sqrt{I(P \| G)},
\end{align}
where $W_2$ is the $L^2$ Wasserstein distance with respect to the absolute-value metric $d(x,y) = |x-y|$, holds for every Borel probability distribution $P$ on $\reals$, for which the right-hand side of \eqref{eq:weak_HWI} is finite.
\end{theorem}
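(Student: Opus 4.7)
The plan is to prove~\eqref{eq:weak_HWI} using the information-theoretic machinery that was used earlier in the excerpt for the derivation of Stam's inequality (Proposition~\ref{proposition: equivalence of Gaussian LSI and Stam's inequality}). The two main tools are Verd\'u's mismatched-estimation formula and the van Trees inequality: the $W_2(P,G)$ factor on the right-hand side of~\eqref{eq:weak_HWI} will be introduced through an optimal quadratic coupling of $P$ and $G$, while the $\sqrt{I(P\|G)}$ factor will arise through van Trees, which lower-bounds $\mmse$ in terms of the Fisher information.

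First, by a scaling argument, I would reduce to the case $\expectation_P[Y^2]=1$ for $Y\sim P$. Then, exactly as in the derivation of Stam's inequality, Verd\'u's identity gives
\begin{equation*}
D(P \| G) \,=\, \frac{1}{2}\int_0^\infty \bigl[\lmmse(Y,s) - \mmse(Y,s)\bigr] \, \d s.
\end{equation*}
Next I would fix an optimal quadratic coupling $(Y,\tilde Y)$ of $P$ and $G$, so that $Y \sim P$, $\tilde Y \sim G$, and $\expectation\bigl[(Y-\tilde Y)^2\bigr] = W_2^2(P,G)$, and rewrite the integrand so that the coupling appears explicitly. The objective is to produce a factorized upper bound $\lmmse(Y,s) - \mmse(Y,s) \le \sqrt{A(s)\,B(s)}$, in which $\int_0^\infty A(s)\,\d s$ is controlled by $W_2^2(P,G)$ (via the coupling), while $\int_0^\infty B(s)\,\d s$ is controlled by $I(P\|G)$ (via van Trees, together with the identity $I(P\|G) = J(Y)-1$ for zero-mean unit-variance $P$ recorded in~\eqref{eq:Fisher_dist_Gs}).

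Once such a factorization is in hand, the Cauchy--Schwarz inequality
\begin{equation*}
\int_0^\infty \sqrt{A(s)\,B(s)}\, \d s \,\le\, \sqrt{\int_0^\infty A(s)\,\d s}\;\sqrt{\int_0^\infty B(s)\,\d s}
\end{equation*}
would immediately deliver the bound $D(P\|G) \le W_2(P,G)\sqrt{I(P\|G)}$, up to tracking constants.

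The hardest step, and the one I expect to be the main obstacle, is identifying the correct factorization of $\lmmse(Y,s)-\mmse(Y,s)$. A natural route is to use the orthogonality principle for MMSE to express this difference as the squared $L^2$-norm of the gap between the linear and the nonlinear conditional-mean estimators of $Y$ from $Y+\sqrt{s}\,Z$, and then to split this $L^2$-norm via the $W_2$-coupling so that a transportation-cost contribution and a Fisher-information contribution are cleanly separated. Arranging the bound so that Cauchy--Schwarz produces a clean $W_2\sqrt{I}$ expression with coefficient $1$ is the delicate piece of the argument, and is where the specific form of Wu's argument does its work.
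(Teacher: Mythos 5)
Your proposal identifies the right toolkit---Verd\'u's mismatched-estimation formula, the van Trees inequality, and the $W_2$-optimal coupling---but the structure you propose differs from the paper's and, more importantly, the step you yourself flag as ``the hardest'' is not merely delicate: it is where the approach runs aground. You start from the full-line integral $D(P\|G) = \tfrac{1}{2}\int_0^\infty\bigl[\lmmse(Y,s)-\mmse(Y,s)\bigr]\,\d s$ and then ask for a pointwise factorization $\lmmse(Y,s)-\mmse(Y,s)\le\sqrt{A(s)\,B(s)}$ with $\int_0^\infty A(s)\,\d s$ controlled by $W_2^2(P,G)$. But after normalizing $\expectation_P Y^2 = 1$, the quantity $\lmmse(Y,s)-\mmse(Y,s)$ is an intrinsic nonlinearity gap for the single estimation problem $Y\mapsto \sqrt{s}Y+Z$ with $Y\sim P$: both estimators are built from $P$ alone, and $G$ enters only through the normalization. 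There is therefore no mechanism for the coupling $(Y,\tilde Y)$ with $\tilde Y\sim G$ to enter a pointwise-in-$s$ bound on this integrand; the coupling you introduce never interacts with the estimation problem.

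The paper circumvents this by an \emph{additive} split in the time parameter rather than a multiplicative split of the integrand. Setting $F(t)\deq D(P_{X+\sqrt{t}Z}\|P_{Y+\sqrt{t}Z})$ with $X\sim P$, $Y\sim G$, one has $F(0)=D(P\|G)$ and, by the fundamental theorem of calculus up to a \emph{finite} $t$, $D(P\|G) = -\int_0^t F'(s)\,\d s + F(t)$. The derivative term is handled with Verd\'u's identity and van Trees exactly as in the Stam proof, producing the Fisher-information contribution; the remainder $F(t)$ is handled by a separate elementary coupling lemma (chain rule for divergence applied to $(X,Y,X+\sqrt{t}N)$ versus $(X,Y,Y+\sqrt{t}N)$), yielding $F(t)\le W_2^2(P,G)/(2t)$. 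Combining gives $D(P\|G)\le \tfrac{1}{2}\bigl(I(P\|G)\,t + W_2^2(P,G)/t\bigr)$, and optimizing over $t>0$ delivers $W_2\sqrt{I}$. That optimization plays the role you had intended Cauchy--Schwarz to play, but applied to the two additive pieces, not to an integrand factorization---and the coupling lemma $F(t)\le W_2^2/(2t)$, which is where $W_2$ actually enters, is the ingredient missing from your outline.
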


\begin{proof} We first show the following:
\begin{lemma} \label{lemma: an upper bound on the divergence in terms of Wasserstein distance}
Let $X$ and $Y$ be a pair of real-valued random variables, and let $N \sim G$ be independent of $(X,Y)$.
Then, for every $t > 0$,
\begin{align}\label{eq:convolution_HW}
D(P_{X + \sqrt{t}N} \| P_{Y + \sqrt{t}N}) \le \frac{1}{2t} \, W^2_2(P_X,P_Y).
\end{align}
\end{lemma}
\begin{proof}
From the chain rule for divergence (see \cite[Theorem~2.5.3]{Cover_and_Thomas}), we have
\begin{align} \label{eq: first way to expand the divergence}
&D(P_{X,Y,X+\sqrt{t}N} \| P_{X,Y,Y+\sqrt{t}N} ) \ge D(P_{X+\sqrt{t}N} \| P_{Y+\sqrt{t}N})
\end{align}
and
\begin{align} \label{eq: second way to expand the divergence}
&D(P_{X,Y,X+\sqrt{t}N} \| P_{X,Y,Y+\sqrt{t}N}) \nonumber \\
&= D(P_{X + \sqrt{t}N|X,Y} \, \| \, P_{Y + \sqrt{t}N|X,Y} | P_{X,Y})
\nonumber \\
&\stackrel{\text{(a)}}{=} \expectation[D( \cN(X,t) \, \| \, \cN(Y,t)) \, | \, X, Y] \nonumber \\
&\stackrel{\text{(b)}}{=} \frac{1}{2t} \, \expectation[(X-Y)^2].
\end{align}
Note that equality~(a) holds since $N \sim G$ is independent of $(X, Y)$, and equality~(b)
is a special case of the equality
$$D\big( \cN(m_1, \sigma_1^2) \, \| \, \cN(m_2, \sigma_2^2) \big)
= \frac{1}{2} \, \ln \left(\frac{\sigma_2^2}{\sigma_1^2}\right) +
\frac{1}{2} \left( \frac{(m_1-m_2)^2}{\sigma_2^2} + \frac{\sigma_1^2}{\sigma_2^2} - 1 \right).$$
It therefore follows from \eqref{eq: first way to expand the divergence} and
\eqref{eq: second way to expand the divergence} that
\begin{align}\label{eq:pointwise_convolution_HW}
D(P_{X+\sqrt{t}N} \| P_{Y+\sqrt{t}N}) \le \frac{1}{2t} \, \expectation[(X-Y)^2]
\end{align}
where the left-hand side of \eqref{eq:pointwise_convolution_HW} only depends on the marginal distributions of $X$ and $Y$
(due to the independence of $(X,Y)$ and $N \sim G$).
Hence, taking the infimum of the right-hand side of \eqref{eq:pointwise_convolution_HW} with respect to all
$\mu \in \Pi(P_X,P_Y)$, we get \eqref{eq:convolution_HW} (see \eqref{eq:W_p_2}).
\end{proof}
We now proceed with the proof of Theorem~\ref{theorem: weak HWI}.
Let $X$ and $Y$ have distributions $P$ and $Q=G$, respectively.
For simplicity, we focus on the case where $X$ has zero mean and unit variance; the general case can be
handled similarly. Let
$F(t) \deq D(P_{X + \sqrt{t} N} \| P_{Y + \sqrt{t} N})$, for $t > 0$,
where $N \sim G$ is independent of the pair $(X,Y)$.
Then we have $F(0) = D(P \| G)$, and from \eqref{eq:convolution_HW}
\begin{align}\label{eq:Ft_bound}
F(t) \le \frac{1}{2t} \, W^2_2(P_X,P_Y) = \frac{1}{2t} \, W^2_2(P,G), \quad \forall \, t > 0.
\end{align}
Moreover, the function $F(t)$ is differentiable, and it follows from a result by Verd\'u
\cite[Eq.~(32)]{Verdu_mismatched_estimation} that
\begin{align}\label{eq:Ft_deriv_bound}
F'(t) &= \frac{1}{2t^2} \, \Big[ \mmse(X,t^{-1}) - \mse_Q(X,t^{-1}) \Big] \nonumber \\[0.1cm]
      &= \frac{1}{2t^2} \, \Big[ \mmse(X,t^{-1}) - \lmmse(X,t^{-1}) \Big], \quad \forall \, t > 0
\end{align}
where $\mmse(X,\cdot)$, $\mse_Q(X,\cdot)$ and $\lmmse(X,\cdot)$ have been defined in \eqref{eq:MMSE},
\eqref{eq:mismatched MSE} and \eqref{eq:LMMSE}, respectively. The second equality in \eqref{eq:Ft_deriv_bound}
holds due to \eqref{eq:optimal estimator in the Gaussian setting is linear} with $Q=G$ (recall that in the Gaussian
setting, the optimal estimator for minimizing the mean square error is linear). For every $t > 0$,
\begin{align}
&D(P||G) = F(0) \nonumber \\
&= -\big(F(t) - F(0)\big) + F(t) \nonumber \\
&= - \int^t_0 F'(s) \d s + F(t) \nonumber \\
&= \frac{1}{2} \int^t_0 \frac{1}{s^2} \left( \lmmse(X,s^{-1}) - \mmse(X,s^{-1}) \right) \d s + F(t) \label{eq:HWI_step1} \\
&\le \frac{1}{2} \int^t_0 \left( \frac{1}{s(s+1)} - \frac{1}{s(sJ(X)+1)} \right) \d s + \frac{1}{2t} \, W^2_2(P,G) \label{eq:HWI_step2} \\
&= \frac{1}{2} \left( \ln \frac{t J(X) + 1}{t+1} + \frac{W^2_2(P,G)}{t} \right) \label{eq:HWI_step3} \\
&\le \frac{1}{2} \left( \frac{t (J(X) - 1)}{t+1} + \frac{W^2_2(P,G)}{t} \right) \label{eq:HWI_step4} \\
& \le \frac{1}{2} \left( I(P \| G)\, t + \frac{W^2_2(P,G)}{t} \right) \label{eq:HWI_step5}
\end{align}
where
\begin{itemize}
\item \eqref{eq:HWI_step1} uses \eqref{eq:Ft_deriv_bound};
\item \eqref{eq:HWI_step2} uses \eqref{eq:Gaussian_LMMSE}, the Van Trees inequality \eqref{eq:VanTrees}, and \eqref{eq:Ft_bound};
\item \eqref{eq:HWI_step3} is an exercise in calculus;
\item \eqref{eq:HWI_step4} uses the inequality $\ln x \le x-1$ for $x>0$; and
\item \eqref{eq:HWI_step5} uses the formula \eqref{eq:Fisher_dist_Gs} (so $I(P||G) = J(X)-1$ since
$X \sim P$ has zero mean and unit variance, and one needs to substitute $s=1$ in \eqref{eq:Fisher_dist_Gs} to get
$G_s = G$), and the fact that $t \ge 0$.
\end{itemize}
Optimizing the choice of $t$ in \eqref{eq:HWI_step5}, we get \eqref{eq:weak_HWI}.
\end{proof}
\begin{remark} Note that the HWI inequality \eqref{eq:weak_HWI} together with the ${\rm T}_2$ inequality for the Gaussian distribution imply a weaker version of the log-Sobolev inequality \eqref{eq:Gross_LSI_1} (i.e., with a larger constant). Indeed, using the ${\rm T}_2$ inequality of Theorem~\ref{thm:Gaussian_T2} on the right-hand side of \eqref{eq:weak_HWI}, we get
\begin{align*}
D(P \| G) &\le W_2(P,G) \sqrt{I(P \| G)} \\
&\le \sqrt{2 D(P \| G)} \sqrt{I(P \| G)},
\end{align*}
which gives $D(P \| G) \le 2 I(P \| G)$.
It is not surprising that we end up with a suboptimal constant here as compared to \eqref{eq:Gross_LSI_1}:
the series of bounds leading up to \eqref{eq:HWI_step5} contributes a lot more slack than the single use
of the van Trees inequality \eqref{eq:VanTrees} in our proof of Stam's inequality (which, due to Proposition~\ref{proposition: equivalence of Gaussian LSI and Stam's inequality}, is equivalent to
the Gaussian log-Sobolev inequality of Gross).
\label{remark: HWI and TC implies a loosened version of Gaussian LSI}
\end{remark}
We are now ready to state the HWI inequality in its strong form:

\begin{theorem}[Otto--Villani \cite{Otto_Villani}]\label{thm:HWI} Let $P$ be a Borel probability measure on $\reals^n$ that is absolutely continuous with respect to the Lebesgue measure, and let the corresponding pdf $p$ be such that
	\begin{align}\label{eq:log_concave}
		\nabla^2 \ln \left(\frac{1}{p}\right) \succeq K I_n
	\end{align}
	for some $K \in \reals$ (where $\nabla^2$ denotes the Hessian, and the matrix inequality $A \succeq B$ means that $A-B$ is non-negative semidefinite). Then, every probability measure $Q \ll P$ satisfies
	\begin{align}\label{eq:HWI}
		D(Q \| P) \le W_2(Q,P) \sqrt{I(Q \| P)} - \frac{K}{2} \; W^2_2(Q,P).
	\end{align}
\end{theorem}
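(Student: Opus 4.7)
My plan is to prove the HWI inequality by means of \emph{optimal transport}, specifically by establishing and then exploiting geodesic $K$-convexity of the relative entropy $Q \mapsto D(Q \| P)$ along $W_2$-geodesics interpolating between $Q$ and $P$. The overall strategy follows the original Otto--Villani approach and proceeds in three main steps.

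\textbf{Step 1 (Displacement interpolation).} I would first invoke Brenier's theorem: since $P$ (and essentially without loss of generality $Q$) is absolutely continuous w.r.t.\ Lebesgue measure on $\reals^n$, there exists a convex function $\phi \colon \reals^n \to \reals$ such that the map $T \deq \nabla \phi$ pushes $Q$ forward to $P$, and $T$ is the optimal transport map for the quadratic cost, i.e.,
\begin{equation*}
W_2^2(Q,P) = \int_{\reals^n} \| x - T(x) \|^2 \, \d Q(x).
\end{equation*}
Define the interpolation $T_t(x) \deq (1-t)x + t\, T(x)$ for $t \in [0,1]$ and set $\mu_t \deq (T_t)_\# Q$, so that $\mu_0 = Q$ and $\mu_1 = P$. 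A standard fact is that $(\mu_t)_{t \in [0,1]}$ is a constant-speed geodesic in the $W_2$ metric, satisfying $W_2(\mu_s, \mu_t) = |t-s|\, W_2(Q,P)$.

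\textbf{Step 2 (Geodesic $K$-convexity of the relative entropy).} This is the heart of the argument and the main technical step. Using the change-of-variables formula for the density of $\mu_t$ under $T_t$ together with the fact that $\nabla^2 \phi \succeq 0$ (by convexity of $\phi$), one can write $D(\mu_t \| P)$ as an integral functional of the eigenvalues of $\nabla^2 \phi$. The hypothesis $\nabla^2 \ln(1/p) \succeq K I_n$ means that the potential $V \deq -\ln p$ is $K$-uniformly convex, and plugging this into the computation yields the geodesic $K$-convexity bound
\begin{equation*}
D(\mu_t \| P) \;\le\; (1-t)\, D(Q \| P) + t\, D(P \| P) - \tfrac{K}{2}\, t(1-t)\, W_2^2(Q,P),
\end{equation*}
i.e., $D(\mu_t \| P) \le (1-t) D(Q \| P) - \tfrac{K}{2} t(1-t) W_2^2(Q,P)$ since $D(P\|P) = 0$. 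I expect this to be the hardest step: justifying the differentiation under the integral, handling possible singularities of the transport map, and correctly tracking how the log-concavity lower bound $K$ appears in the second-order computation. For a rigorous treatment I would invoke the displacement convexity machinery of McCann as developed in Villani's monographs.

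\textbf{Step 3 (Derivative at $t=0$ and Cauchy--Schwarz).} Rearranging the inequality from Step 2 and dividing by $t > 0$ gives
\begin{equation*}
\frac{D(Q \| P) - D(\mu_t \| P)}{t} \;\ge\; D(Q \| P) + \tfrac{K}{2}(1-t)\, W_2^2(Q,P).
\end{equation*}
Letting $t \downarrow 0$, the left-hand side converges to $-\tfrac{\d}{\d t}\big|_{t=0^+} D(\mu_t \| P)$. A direct computation, using $\mu_t = (T_t)_\# Q$ and the relation $\d T_t / \d t = T - \mathrm{id}$ at $t=0$, gives the formula
\begin{equation*}
\frac{\d}{\d t}\bigg|_{t=0^+} D(\mu_t \| P) \;=\; \int_{\reals^n} \bigl\langle \nabla \ln(q/p)(x),\; T(x) - x \bigr\rangle \, q(x)\, \d x,
\end{equation*}
where $q \deq \d Q / \d x$. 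Applying the Cauchy--Schwarz inequality to this integral and using the definitions of $I(Q \| P)$ and $W_2(Q,P)$ yields
\begin{equation*}
\left|\frac{\d}{\d t}\bigg|_{t=0^+} D(\mu_t \| P)\right| \;\le\; \sqrt{I(Q \| P)}\, \cdot\, W_2(Q,P).
\end{equation*}
Combining this with the lower bound on $-\d D(\mu_t \| P)/\d t|_{t=0^+}$ obtained from Step 2 gives
\begin{equation*}
D(Q \| P) + \tfrac{K}{2}\, W_2^2(Q,P) \;\le\; W_2(Q,P)\sqrt{I(Q \| P)},
\end{equation*}
which is precisely \eqref{eq:HWI}. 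The one-dimensional sanity check in Theorem~\ref{theorem: weak HWI} (with $K=0$ for $P=G$) is consistent with this, and moreover the computation we already did via the Verd\'u mismatched estimation formula in that theorem can be viewed as a poor-man's substitute for the displacement-interpolation derivative used here.
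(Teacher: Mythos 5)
The paper states this theorem without proof, remarking only that the argument ``relies on deep structural properties of optimal transportation mappings'' and citing Otto--Villani \cite{Otto_Villani} (together with the later simplification of Cordero--Erausquin \cite{Cordero_Gaussian_transport}), so there is no paper proof to compare against. Your sketch faithfully reproduces the original Otto--Villani route, and the structure is sound: Step 1 (Brenier map and displacement interpolation) is standard; Step 3 is computed correctly, since the one-sided derivative formula follows from the continuity equation with velocity field $T - \mathrm{id}$ at $t=0^+$, and the Cauchy--Schwarz bound produces exactly $W_2(Q,P)\sqrt{I(Q\|P)}$ because $T$ is the \emph{optimal} map; and combining that with the displacement inequality from Step 2, rearranged and taken as $t \downarrow 0$, yields \eqref{eq:HWI} exactly. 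The only caveat, which you already flag, is that Step 2 carries essentially all the technical weight: the $K$-convexity inequality $D(\mu_t\|P) \le (1-t)D(Q\|P) - \tfrac{K}{2}t(1-t)W_2^2(Q,P)$ requires decomposing $D(Q\|P)$ into an internal-energy part $\int \rho\ln\rho$ (geodesically convex by McCann's displacement convexity, which in turn rests on the concavity of $M \mapsto (\det M)^{1/n}$ on positive semidefinite matrices and the a.e.\ Monge--Amp\`ere change-of-variables) plus a potential-energy part $\int \rho V$ with $V = -\ln p$, $K$-convex by \eqref{eq:log_concave}; doing this carefully for a transport map $T = \nabla\phi$ that is only defined and twice-differentiable $Q$-a.e.\ is exactly the delicacy the paper alludes to, and is what Cordero--Erausquin's alternative proof streamlines. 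As an outline with those caveats acknowledged, your proposal is correct.
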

We omit the proof, which relies on deep structural properties of optimal transportation mappings achieving the infimum in the definition of the $L^2$ Wasserstein metric with respect to the Euclidean norm in $\reals^n$. (An alternative simpler proof was given later by Cordero--Erausquin \cite{Cordero_Gaussian_transport}.) We can, however, highlight a couple of key consequences (see \cite{Otto_Villani}):
\begin{enumerate}
	\item Suppose that $P$, in addition to satisfying the conditions of Theorem~\ref{thm:HWI}, also satisfies a ${\rm T}_2(c)$ inequality. Using this fact in \eqref{eq:HWI}, we get
	\begin{align}\label{eq:HWI_T2}
		D(Q \| P) \le \sqrt{2 c D(Q \| P)} \sqrt{I(Q \| P)} - \frac{K}{2} \; W^2_2(Q,P).
	\end{align}
If the pdf $p$ of $P$ is log-concave, so that \eqref{eq:log_concave} holds with $K = 0$, then \eqref{eq:HWI_T2} implies the inequality
\begin{align}\label{eq:HWI_LSI}
	D(Q \| P) \le 2c\, I(Q \| P)
\end{align}
for every $Q$ such that $Q \ll P$. This is an Euclidean log-Sobolev inequality that is similar to the one satisfied by $P =G^n$
(see Remark~\ref{remark: HWI and TC implies a loosened version of Gaussian LSI}). However, note that the constant in front of the Fisher information distance $I(\cdot\|\cdot)$ on the right-hand side of \eqref{eq:HWI_LSI} is suboptimal, as can be verified by letting $P = G^n$, which satisfies ${\rm T}_2(1)$; going through the above steps, as we know from Section~\ref{sec:Gaussian_LSI} (in particular, see \eqref{eq:Gross_LSI_1}), the optimal constant should be $\frac{1}{2}$, so the one in \eqref{eq:HWI_LSI} is off by a factor of $4$. On the other hand, it is quite remarkable that, up to constants, the Euclidean log-Sobolev and ${\rm T}_2$ inequalities are equivalent.
	\item If the pdf $p$ of $P$ is {\em strongly log-concave}, i.e., if \eqref{eq:log_concave} holds with some
$K > 0$, then $P$ satisfies the Euclidean log-Sobolev inequality with constant $\frac{1}{K}$. Indeed, using the simple inequality $ab \le \frac{a^2+b^2}{2}$ for every $a, b \in \reals$, we have from \eqref{eq:HWI}
\begin{align*}
	D(Q \| P) &\le \sqrt{K}W_2(Q,P) \sqrt{\frac{I(Q \| P)}{K}} - \frac{K}{2} \; W^2_2(Q,P) \\
	&\le \frac{1}{2K} \; I(Q \| P),
\end{align*}
which shows that $P$ satisfies the Euclidean $\LSI\bigl(\frac{1}{K}\bigr)$ inequality. In particular, the standard Gaussian distribution $P = G^n$ satisfies \eqref{eq:log_concave} with $K=1$, so we even get the right constant. In fact, the statement that \eqref{eq:log_concave} with $K > 0$ implies Euclidean $\LSI\bigl(\frac{1}{K}\bigr)$
was first proved in 1985 by Bakry and Emery \cite{Bakry_Emery} using very different means.
\end{enumerate}

\section{Extension to non-product distributions}
\label{sec:nonproduct}

Our focus in this chapter has been mostly on functions of independent random variables. However, there is extensive literature on the concentration of measure for weakly dependent random variables. In this section, we describe (without proof) a few results along this direction that explicitly use information-theoretic methods. The examples we give are by no means exhaustive, and are only intended to show that, even in the case of dependent random variables, the underlying ideas are essentially the same as in the independent case.

The basic scenario is exactly as before: We have $n$ random variables $X_1,\ldots,X_n$ with a given joint distribution $P$ (which is now not necessarily of a product form, i.e., $P = P_{X^n}$ may not be equal to
$P_{X_1} \otimes \ldots \otimes P_{X_n}$), and we are interested in the concentration properties of some function $f(X^n)$.

\subsection{Samson's transportation-cost inequalities for dependent random variables}

Samson \cite{Samson} has developed a general approach for deriving transportation-cost inequalities for dependent random variables that revolves around a certain $L^2$ measure of dependence. Given the distribution $P = P_{X^n}$ of $(X_1,\ldots,X_n)$, consider an upper triangular matrix $\Delta \in \reals^{n \times n}$, such that $\Delta_{i,j} = 0$ for $i > j$, $\Delta_{i,i} = 1$ for all $i$, and for $i < j$
\begin{align}  \label{eq: elements of Delta in the upper triangular}
\Delta_{i,j} &= \sup_{x_i,x'_i} \, \sup_{x^{i-1}} \sqrt{\left\| P_{X^n_j|X_i = x_i,X^{i-1}=x^{i-1}} - P_{X^n_j|X_i = x'_i,X^{i-1}=x^{i-1}} \right\|_{\rm TV}}.
\end{align}
Note that in the special case where $P$ is a product measure, the matrix $\Delta$ is equal to the $n \times n$
identity matrix.
Let $\| \Delta \|$ denote the operator norm of $\Delta$, i.e.,
\begin{align*}
	\|\Delta \| \deq \sup_{v \in \reals^n \setminus \{0\}} \frac{\| \Delta v \|}{\| v \|} =
\sup_{v \in \reals^n \colon \|v\| =1} \| \Delta v \|.
\end{align*}
Following Marton \cite{Marton_contracting_MCs}, Samson \cite{Samson} considered a Wasserstein-type distance on the space of probability measures on $\cX^n$. For every pair of probability measures $Q$ and $R$ on $\cX^n$, let $\Pi(Q,R)$ denote the set of all probability measures on $\cX^n \times \cX^n$ with marginals $Q$ and $R$; the following non-negative quantity is defined in \cite{Samson}
\begin{align} \label{eq:Samson's quantity}
	d_2(Q,R) \deq \inf_{\pi \in \Pi(Q,R)} \sup_{\alpha} \int \sum^n_{i=1}\alpha_i(y) 1_{\{x_i \neq y_i\}} \pi(\d x^n, \d y^n),
\end{align}
where $\sup_{\alpha}$ refers to the supremum over all vector-valued functions $\alpha \colon \cX^n \to \reals^n$ where $\alpha = (\alpha_1,\ldots,\alpha_n)$ is a vector of positive functions, and $\expectation_R \left[ \| \alpha(Y^n) \|^2\right] \le 1.$

\begin{remark}
Note that $d_2(Q,Q)=0$; however, in general, we have $d_2(Q,R) \neq d_2(R,Q)$ due to the difference in the two
conditions $\expectation_R \left[ \| \alpha(Y^n) \|^2\right] \le 1$ and $\expectation_Q \left[ \| \alpha(Y^n) \|^2\right] \le 1$
involved in the definition of $d_2(Q,R)$ and $d_2(R,Q)$, respectively. Therefore, $d_2$ is {\em not} a distance.
\end{remark}

The main result of \cite{Samson} goes as follows (see \cite[Theorem~1]{Samson}):
\begin{theorem}\label{thm:Samson} The probability distribution $P$ of $X^n$ satisfies the following transportation-cost inequality:
	\begin{align}\label{eq:Samson_TC}
		d_2(Q,P) \le \|\Delta \| \sqrt{2 D(Q \| P)}
	\end{align}
	for all $Q \ll P$. Furthermore,
    \begin{align}\label{eq:Samson_TC2}
		d_2(P,Q) \le \|\Delta \| \sqrt{2 D(Q \| P)}.
	\end{align}
\end{theorem}

\vspace*{0.2cm}
In the following, we examine some implications of Theorem~\ref{thm:Samson}.
\begin{enumerate}
	\item Let $\cX = [0,1]$. Theorem~\ref{thm:Samson} implies that every probability measure $P$ on the unit cube $\cX^n = [0,1]^n$ satisfies the following Euclidean log-Sobolev inequality: for an arbitrary smooth convex function $f \colon [0,1]^n \to \reals$,
	\begin{align}\label{eq:Samson_LSI}
		D \big( P^{(f)} \big\| P \big) \le 2 \|\Delta\|^2\, \expectation_P^{(f)}\left[ \left\| \nabla f(X^n) \right\|^2 \right]
	\end{align}
	(this follows from a combination of \cite[Eq.~(2.13)]{Samson} and equalities \eqref{eq:ent_vs_D} and \eqref{eq:another equality with Gamma f}). The same method as the one we used to prove Proposition~\ref{prop:Gauss_smooth_Lipschitz} and Theorem~\ref{thm:Gauss_Lipschitz} can be applied to obtain, from \eqref{eq:Samson_LSI}, the following concentration inequality for every convex function $f \colon [0,1]^n \to \reals$ with $\| f \|_{\rm Lip} \le 1$:
	\begin{align}\label{eq:Samson_concentration_looser}
		\pr \Big( f(X^n) \ge \expectation f(X^n) + r \Big) \le \exp\left( - \frac{r^2}{8 \|\Delta \|^2}\right), \qquad \forall \, r \ge 0.
	\end{align}
However, an adaptation of the approach by Bobkov and G\"otze \cite{Bobkov_Gotze_expint} that is used to prove
Theorem~\ref{thm:Bobkov_Gotze} and Corollary~\ref{cor:Lipschitz_Gauss_con_revisited} gives the
following improved concentration inequality for every smooth convex function $f \colon [0,1]^n \to \reals$
with $\| \nabla f \| \le 1$ $P$-a.s. (see \cite[Corollary~3]{Samson})
    \begin{align}\label{eq:Samson_concentration}
		\pr \Big( f(X^n) \ge \expectation f(X^n) + r \Big) \le \exp\left( - \frac{r^2}{2 \|\Delta \|^2}\right), \qquad \forall \, r \ge 0.
	\end{align}
Furthermore, inequality \eqref{eq:Samson_concentration} also holds for an arbitrary smooth concave function
$f \colon [0,1]^n \to \reals$ such that $\expectation_P \big[ \| \nabla f \|^2 \big] \le 1$.
	\item The operator norm $\| \Delta \|$ in \eqref{eq:Samson_TC}--\eqref{eq:Samson_concentration} is weakly dependent on $n$ whenever the dependence between the $X_i$'s is sufficiently weak. For instance, if $X_1,\ldots,X_n$ are independent then $\Delta = I_{n \times n}$, and
$\| \Delta \|=1$ independently of $n$. In this case, \eqref{eq:Samson_TC} becomes
	\begin{align*}
		d_2(Q,P) \le \sqrt{2 D(Q \| P)},
	\end{align*}
and we recover the usual concentration inequalities for Lipschitz functions. To see some examples with dependent random variables, suppose that $X_1,\ldots,X_n$ is a Markov chain, i.e., for each $i$, $X^n_{i+1}$ is conditionally independent of $X^{i-1}$ given $X_i$.
In that case, from \eqref{eq: elements of Delta in the upper triangular}, the upper triangular part of $\Delta$ is given by
\begin{align*}
\Delta_{i,j} = \sup_{x_i,x'_i} \sqrt{ \left\| P_{X_j|X_i=x_i}-P_{X_j|X_i=x'_i} \right\|_{\rm TV}} \, , \qquad i<j
\end{align*}
and $\|\Delta\|$ will be independent of $n$ under suitable ergodicity assumptions on the Markov chain $X_1,\ldots,X_n$. For instance, suppose that the Markov chain is homogeneous, i.e., the conditional probability distribution $P_{X_i|X_{i-1}}$ $(i > 1)$ is independent of $i$, and that
\begin{align*}
	\sup_{x_i,x'_i} \| P_{X_{i+1}|X_i = x_i} - P_{X_{i+1}|X_i = x'_i} \|_{\rm TV} \le 2\rho
\end{align*}
for some $\rho < 1$. Then it can be shown (see \cite[Eq.~(2.5)]{Samson}) that
\begin{align*}
	\| \Delta \| &\le \sqrt{2} \left( 1 + \sum^{n-1}_{k=1}\rho^{k/2} \right) \\
	& \le \frac{\sqrt{2}}{1-\sqrt{\rho}}.
\end{align*}
More generally, following Marton \cite{Marton_contracting_MCs}, we will say that the (not necessarily homogeneous) Markov chain $X_1,\ldots,X_n$ is {\em contracting} if, for every $i$,
\begin{align*}
	\delta_i \deq \sup_{x_i,x'_i} \| P_{X_{i+1}|X_i = x_i} - P_{X_{i+1}|X_i = x'_i} \|_{\rm TV} < 1.
\end{align*}
In this case, it is shown in \cite[pp.~422--424]{Samson} that $\| \Delta \|$ can be also bounded independently of $n$ as
\begin{align*}
\| \Delta \| \le \frac{1}{1-\sqrt{\delta}}, \qquad \text{where } \qquad \delta \triangleq \max_{i} \delta_i.
\end{align*}

\end{enumerate}

\subsection{Marton's transportation-cost inequalities for $L^2$ Wasserstein distance}

Another approach to obtaining concentration of measure inequalities for dependent random variables, due to Marton \cite{Marton_Euclidean,Marton_Euclidean_correction}, relies on another measure of dependence that pertains to the sensitivity of the conditional distributions of $X_i$ given $\bar{X}^i$ to the particular realization $\bar{x}^{i}$ of $\bar{X}^i$. The results of \cite{Marton_Euclidean,Marton_Euclidean_correction} are set in the Euclidean space $\reals^n$, and center around a transportation-cost inequality for the $L^2$ Wasserstein distance
\begin{align}\label{eq:W2}
W_2(P,Q) \deq \inf_{X^n \sim P, \, Y^n \sim Q} \sqrt{\expectation \| X^n - Y^n \|^2},
\end{align}
where $\| \cdot \|$ denotes the Euclidean norm.

We will state a special case of Marton's results (a more general development considers conditional distributions of
$(X_i \colon i \in S)$ given $(X_j \colon j \in S^c)$ for a suitable system of sets $S \subset \{1,\ldots,n\}$). Let
$P$ be a probability measure on $\reals^n$ which is absolutely continuous with respect to the Lebesgue measure. For
each $x^n \in \reals^n$ and  $i \in \{1,\ldots,n\}$, we denote by $\bar{x}^i$ the vector in $\reals^{n-1}$ obtained
by deleting the $i$th coordinate of $x^n$:
$$
\bar{x}^i = (x_1,\ldots,x_{i-1},x_{i+1},\ldots,x_n).
$$
Following Marton \cite{Marton_Euclidean}, the probability measure $P$ is {\em $(1-\delta)$-contractive}, with $\delta \in (0, 1)$,
if for every $y^n,z^n \in \reals^n$
\begin{align}\label{eq:Marton_contractive}
	\sum^n_{i=1}W^2_2(P_{X_i|\bar{X}^{i}=\bar{y}^i}, P_{X_i|\bar{X}^i = \bar{z}^i}) \le (1-\delta) \| y^n - z^n \|^2.
\end{align}
\begin{remark} Marton's contractivity condition \eqref{eq:Marton_contractive} is closely related to the so-called
{\em Dobrushin--Shlosman's strong mixing condition} \cite{Dobrushin--Shlosman} from mathematical statistical physics.
\end{remark}
\begin{theorem}[Marton \cite{Marton_Euclidean,Marton_Euclidean_correction}]\label{thm:Marton_contractive} Suppose that $P$ is absolutely continuous with respect to the Lebesgue measure on $\reals^n$ and also $(1-\delta)$-contractive, and that the conditional distributions $P_{X_i|\bar{X}^i}$, $i \in \{1,\ldots,n\}$, have the following properties:
	\begin{enumerate}
		\item For each $i$, the function $x^n \mapsto p_{X_i|\bar{X}^i}(x_i|\bar{x}^i)$ is continuous, where $p_{X_i|\bar{X}^i}(\cdot|\bar{x}^i)$ denotes the univariate probability density function of $P_{X_i|\bar{X}^i = \bar{x}^i}$.
		\item For each $i$ and each $\bar{x}^i \in \reals^{n-1}$, $P_{X_i|\bar{X}^i = \bar{x}^{i-1}}$ satisfies ${\rm T}_2(c)$ with respect to the $L^2$ Wasserstein distance \eqref{eq:W2} (cf.~Definition~\ref{def:TC}).
	\end{enumerate}
	Then, for every probability measure $Q$ on $\reals^n$, we have
	\begin{align}
		W_2(Q,P) \le \left(\frac{K}{\sqrt{\delta}}+1\right) \sqrt{2c D(Q \| P)},
	\end{align}
	where $K > 0$ is an absolute constant. In other words, every $P$ satisfying the conditions of the theorem admits a ${\rm T}_2(c')$ inequality with $$c' = \left(\frac{K}{\sqrt{\delta}}+1\right)^2 c.$$
\end{theorem}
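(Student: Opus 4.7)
The strategy will be to adapt the tensorization-style proof of transportation cost inequalities for product measures (cf.\ Proposition~\ref{prop:TC_tensorization}) to the present dependent setting, using the $\delta$-contractivity condition \eqref{eq:Marton_contractive} as a quantitative substitute for independence. Intuitively, the local $T_2(c)$ hypothesis bounds the per-coordinate transportation cost in terms of the per-coordinate divergence, while \eqref{eq:Marton_contractive} ensures that shifting the conditioning background $\bar X^i$ affects the conditional distribution only mildly; together, these two conditions should combine to give a global $T_2$-type bound with constants degraded by a factor depending only on $\delta$.

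Concretely, given an arbitrary coupling $\pi$ of $P$ and $Q$ with $X^n \sim P$ and $Y^n \sim Q$, my first step is the triangle inequality
\begin{align*}
W_2(P_{X_i|\bar X^i = \bar x^i}, Q_{X_i|\bar X^i = \bar y^i}) &\le W_2(P_{X_i|\bar X^i = \bar x^i}, P_{X_i|\bar X^i = \bar y^i}) \\
&\quad + W_2(P_{X_i|\bar X^i = \bar y^i}, Q_{X_i|\bar X^i = \bar y^i}),
\end{align*}
applied pointwise for each coordinate $i$. Squaring via the elementary bound $(a+b)^2 \le (1+\eta)a^2 + (1+\eta^{-1})b^2$ with a tuning parameter $\eta > 0$, summing over $i$, and taking expectation under $\pi$, the first term on the right sums (via \eqref{eq:Marton_contractive}) to at most $(1-\delta)\,\expectation_\pi \|X^n - Y^n\|^2$, while the second sums via the local $T_2(c)$ assumption to at most $2c \sum_i \expectation_Q[D(Q_{X_i|\bar X^i} \| P_{X_i|\bar X^i}\,|\,\bar X^i)]$. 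The continuity of the conditional densities allows the relevant optimal couplings to be selected measurably in $(\bar x^i, \bar y^i)$, so the construction can be realized (perhaps via one or more Gibbs-type sweeps) by a valid coupling of $P$ and $Q$. Infimizing over $\pi$ on the left then yields a self-bounding inequality of the schematic form
\begin{equation*}
W_2^2(Q, P) \le (1+\eta)(1-\delta)\, W_2^2(Q, P) + 2c\,(1+\eta^{-1})\,\widetilde D(Q \| P),
\end{equation*}
which can be solved for $W_2(Q,P)$ as soon as $(1+\eta)(1-\delta) < 1$, and optimizing over $\eta$ produces the stated constant $K/\sqrt{\delta} + 1$ from the geometric factor $1/(1-\sqrt{1-\delta})$.

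I expect the principal obstacle to be the relation between the sum $\widetilde D(Q\|P) \deq \sum_i \expectation_Q[D(Q_{X_i|\bar X^i} \| P_{X_i|\bar X^i}\,|\,\bar X^i)]$, which is the \emph{erasure} divergence $D^-(Q\|P)$ of \cite{Verdu_Weissman_erasures}, and the ordinary divergence $D(Q\|P)$ appearing in the statement. For product $P$ these are controlled by Proposition~\ref{prop:erasure_entropy_bound}, but for general $P$ the two quantities are not directly comparable, so one cannot simply invoke a chain rule. The natural fix is to iterate the coupling: use the $\delta$-contractivity to show that successive Gibbs-type sweeps pull the $Y$-marginal towards $P$ geometrically fast in $W_2$, and on each sweep replace the role of $Q$ by its image under one Gibbs step, so that the accumulated per-step divergences can be telescoped through the chain-rule decomposition of $D(Q\|P)$ along a single ordering of the coordinates. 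Verifying the $W_2$-convergence of this iterative procedure, its measurable construction, and the correct telescoping of conditional divergences back to a single factor $D(Q\|P)$ is where most of the technical work will concentrate, with the constant $K$ arising from an absolute geometric-series constant in this iteration.
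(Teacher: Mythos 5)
The paper states this result without proof, crediting it to Marton \cite{Marton_Euclidean,Marton_Euclidean_correction}; the second reference is an erratum correcting a gap in the first, which already signals that the argument is delicate, so your attempt must be judged on its own merits. Your opening moves are the natural ones --- the coordinate-wise triangle inequality in $W_2$ with intermediate measure $P_{X_i|\bar{X}^i = \bar{y}^i}$, a weighted AM-GM, then contractivity for one term and local $T_2(c)$ for the other --- and you have silently corrected the typo in \eqref{eq:Marton_contractive} (the right-hand side should read $\|y^n - z^n\|_2^2$). You also correctly flag the central obstruction: the conditional $T_2$ bounds produce the erasure divergence $D^-(Q\|P) = \sum_i D(Q_{X_i|\bar{X}^i}\|P_{X_i|\bar{X}^i}\,|\,Q_{\bar{X}^i})$, which for a non-product base measure $P$ need not be controlled by $D(Q\|P)$; the remark after Proposition~\ref{prop:erasure_entropy_bound} gives a two-dimensional example with $D^- = 2D$, and stronger dependence makes the ratio arbitrarily large.

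The iterative ``fix'' is where the argument breaks down, and the difficulty is structural rather than merely technical. First, the chain rule decomposes $D(Q\|P)$ along prefix conditionals $P_{X_i|X^{i-1}}$, whereas both your contractivity hypothesis and your local $T_2(c)$ hypothesis concern the full-complement conditionals $P_{X_i|\bar{X}^i}$. These do not match: $P_{X_i|X^{i-1}}$ is a mixture of the measures $P_{X_i|\bar{X}^i}$ over the unobserved tail, and a $T_2$ inequality does not survive mixing of the reference measure, so the prefix conditionals need not inherit $T_2(c)$. Your sketch of ``telescoping the per-step divergences back to a single factor $D(Q\|P)$'' would have to bridge this gap, and nothing in it does. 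Second, the constant does not come out: optimizing $\eta$ in the self-bounding inequality gives the prefactor $(1 - \sqrt{1-\delta})^{-1}$, which is $\Theta(1/\delta)$ for small $\delta$, while the theorem asserts $K/\sqrt{\delta}+1 = \Theta(1/\sqrt{\delta})$; a geometric-series accumulation over sweeps would only worsen this, not recover the missing $\sqrt{\delta}$. To reach the stated bound one must build the coupling sequentially along a fixed coordinate ordering and use contractivity to control how discrepancies in already-coupled coordinates propagate forward into the conditional laws of the ones still to come --- a genuinely different mechanism from the symmetric $\bar{X}^i$-based self-bound, and one your sketch does not reach.
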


The contractivity criterion \eqref{eq:Marton_contractive} is not easy to verify in general. Let us mention one sufficient condition \cite{Marton_Euclidean}. Let $p$ denote the probability density of $P$, and suppose that it takes the form
\begin{align}
	p(x^n) = \frac{1}{Z}\exp\left( - \Psi(x^n)\right)
\end{align}
for some $C^2$ function $\Psi \colon \reals^n \to \reals$, where $Z$ is the normalization factor. For every $x^n,y^n \in \reals^n$, let us define a matrix $B(x^n,y^n) \in \reals^{n \times n}$ by
\begin{align}
	B_{i,j}(x^n,y^n) \deq \begin{cases}
	\nabla^2_{i,j}\Psi(x_i \odot \bar{y}^i), & i \neq j \\
	0, & i = j
\end{cases}
\end{align}
where $\nabla^2_{i,j} F$ denotes the $(i,j)$ entry of the Hessian matrix of $F \in C^2(\reals^n)$, and $x_i \odot \bar{y}^i$ denotes the $n$-tuple obtained by replacing the deleted $i$th coordinate in $\bar{y}^i$ with $x_i$:
$$
x_i \odot \bar{y}^i = (y_1,\ldots,y_{i-1},x_i,y_{i+1},\ldots,y_n).
$$
For example, if $\Psi$ is a sum of one-variable and two-variable terms
\begin{align*}
	\Psi(x^n) = \sum^n_{i=1}V_i(x_i) + \sum_{i < j}b_{i,j}x_ix_j
\end{align*}
for some smooth functions $V_i : \reals \to \reals$ and some constants $b_{i,j} \in \reals$, which is often the case in statistical physics, then the matrix $B$ is independent of $x^n,y^n$, and has off-diagonal entries $b_{i,j}$, $i \neq j$. Then (see \cite[Theorem~2]{Marton_Euclidean}), the conditions of Theorem~\ref{thm:Marton_contractive} are satisfied provided the following holds:
\begin{enumerate}
	\item For each $i$ and $\bar{x}^i \in \reals^{n-1}$, the conditional probability distributions $P_{X_i|\bar{X}^i=\bar{x}^{i}}$ satisfy the Euclidean log-Sobolev inequality
	\begin{align*}
		D(Q \| P_{X_i|\bar{X}^{i}=\bar{x}^i}) \le \frac{c}{2} \; I(Q \| P_{X_i|\bar{X}^i = \bar{x}^i}),
	\end{align*}
	where $I(\cdot\|\cdot)$ is the Fisher information distance, cf.~\eqref{eq:Fisher_dist} for the definition.
	\item The operator norms of $B(x^n,y^n)$ are uniformly bounded as
	\begin{align*}
		\sup_{x^n,y^n} \left\| B(x^n,y^n) \right\|^2 \le \frac{1-\delta}{c^2}.
	\end{align*}
\end{enumerate}
We also refer the reader to more recent follow-up work by Marton \cite{Marton_arXiv2009,Marton_2013}, which further elaborates on the theme of studying the concentration properties of dependent random variables by focusing on the conditional probability distributions $P_{X_i|\bar{X}^i}$, $i=1,\ldots,n$. These papers describe sufficient conditions on the joint distribution $P$ of $X_1,\ldots,X_n$, such that, for every other distribution $Q$,
\begin{align}\label{eq:dependent_erasure_bound}
	D(Q \| P) \le K(P) \cdot D^{-}(Q \| P),
\end{align}
where $D^-(\cdot\|\cdot)$ is the erasure divergence (cf.\ \eqref{eq: erasure divergence} for the definition), and the $P$-dependent constant $K(P) > 0$ is controlled by suitable contractivity properties of $P$. At this point, the utility of a tensorization inequality like \eqref{eq:dependent_erasure_bound} should be clear: each term in the erasure divergence
$$
D^-(Q \| P) = \sum^n_{i=1} D(Q_{X_i|\bar{X}^i} \| P_{X_i|\bar{X}^i} | Q_{\bar{X}^i})
$$
can be handled by appealing to appropriate log-Sobolev inequalities or transportation-cost inequalities for probability measures on $\cX$ (indeed, one can just treat $P_{X_i|\bar{X}^i=\bar{x}^i}$ for each fixed $\bar{x}^i$ as a probability measure on $\cX$, in just the same way as with $P_{X_i}$ before), and then these ``one-dimensional'' bounds can be assembled together to derive concentration for the original ``$n$-dimensional'' distribution.

\section{Applications in information theory and related topics}
\label{sec:applications}

\subsection{The blowing-up lemma}

The first explicit invocation of the concentration of measure phenomenon in an information-theoretic context appears in the work of Ahlswede et al.~\cite{Ahlswede_Gacs_Korner,Ahlswede_Dueck}. These authors have shown that the following result, now known as the {\em blowing-up lemma} (see, e.g., \cite[Lemma~1.5.4]{Csiszar_Korner_book}), provides a versatile tool for proving strong converses in a variety of scenarios, including some multiterminal problems:
\begin{lemma}\label{lm:blowup} For every two finite sets $\cX$ and $\cY$ and every positive sequence
$\eps_n \rightarrow 0$, there exist positive sequences $\delta_n, \eta_n \rightarrow 0$, such that
the following holds: For every discrete memoryless channel (DMC) with input alphabet $\cX$, output
alphabet $\cY$, and transition probabilities $T(y|x), x \in \cX, y \in \cY$, and every
$n \in \naturals$, $x^n \in \cX^n$, and $B \subseteq \cY^n$,
\begin{align}\label{eq:blowup}
	T^n(B|x^n) \ge \exp\left(-n\eps_n\right) \qquad \Longrightarrow \qquad T^n(B_{n\delta_n}|x^n) \ge 1 - \eta_n.
\end{align}
Here, for an arbitrary $B \subseteq \cY^n$ and $r>0$, the set $B_r$ denotes the {\em $r$-blowup} of $B$ (see
the definition in \eqref{eq:blowup_set}) with respect to the Hamming metric
\begin{align}
\label{eq:Hamming metric}
d_n(y^n,u^n) \deq \sum^n_{i=1}1_{\{y_i \neq u_i\}}, \qquad \forall \, y^n,u^n \in \cY^n.
\end{align}
\end{lemma}

The proof of the blowing-up lemma, given in \cite{Ahlswede_Gacs_Korner}, was rather technical and made use of a delicate isoperimetric inequality for discrete probability measures on a Hamming space, due to Margulis \cite{Margulis}. Later, the same result was obtained by Marton \cite{Marton_blowup} using purely information-theoretic methods. We will use a sharper, ``nonasymptotic'' version of the blowing-up lemma, which is more in the spirit of the modern viewpoint on the concentration of measure (cf.\ Marton's follow-up paper \cite{Marton_dbar}):

\begin{lemma}\label{lm:my_blowup} Let $X_1,\ldots,X_n$ be $n$ independent random variables taking values in a finite set $\cX$. Then, for every $A \subseteq \cX^n$ with $P_{X^n}(A) > 0$,
	\begin{align}\label{eq:my_blowup}
		P_{X^n}(A_r) &\ge 1 - \exp\left[-\frac{2}{n}\Biggl(r - \sqrt{\frac{n}{2}
        \ln \left(\frac{1}{P_{X^n}(A)}\right)} \, \Biggr)^2\right], \nonumber\\
& \qquad \qquad \qquad \qquad \forall
        \, r > \sqrt{\frac{n}{2} \ln \left(\frac{1}{P_{X^n}(A)}\right)}.
	\end{align}
\end{lemma}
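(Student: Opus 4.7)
My plan is to obtain Lemma~\ref{lm:my_blowup} as a direct consequence of the transportation-cost machinery already developed in this chapter, specialized to the Hamming metric. The key observation is that Pinsker's inequality, rewritten via the Kantorovich--Rubinstein duality, is precisely a ${\rm T}_1(1/4)$ inequality for any probability measure on a discrete space $\cX$ equipped with the Hamming metric $d(x,y) = 1_{\{x \neq y\}}$; this was established explicitly in the excerpt. Because $X_1,\ldots,X_n$ are independent, tensorization of transportation-cost inequalities (Proposition~\ref{prop:TC_tensorization} with $p=1$) then gives that $P_{X^n}$ satisfies ${\rm T}_1(n/4)$ on $(\cX^n, d_n)$, where $d_n(y^n,u^n) = \sum_{i=1}^n 1_{\{y_i \neq u_i\}}$ is the (unnormalized) Hamming metric.

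Next I would re-run Marton's argument (the proof of Proposition~\ref{prop:Marton_technique}) verbatim, but without throwing away constants. Specifically, for any two Borel sets $A,B \subseteq \cX^n$ with $P_{X^n}(A), P_{X^n}(B) > 0$, forming the conditional probability measures $\mu_A$ and $\mu_B$, using the triangle inequality for $W_1$, applying the ${\rm T}_1(n/4)$ inequality to each of $W_1(P_{X^n},\mu_A)$ and $W_1(P_{X^n},\mu_B)$, and invoking the identity $D(\mu_A\|P_{X^n}) = \ln(1/P_{X^n}(A))$, together with the lower bound $W_1(\mu_A,\mu_B) \ge d_n(A,B)$, yields the symmetric inequality
\begin{equation*}
d_n(A,B) \;\le\; \sqrt{\tfrac{n}{2}\ln\!\tfrac{1}{P_{X^n}(A)}} \;+\; \sqrt{\tfrac{n}{2}\ln\!\tfrac{1}{P_{X^n}(B)}}.
\end{equation*}

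Finally, I would specialize by choosing $B = (A_r)^c$, so that $d_n(A,B) \ge r$ and $P_{X^n}(B) = 1 - P_{X^n}(A_r)$. Rearranging,
\begin{equation*}
\sqrt{\tfrac{n}{2}\ln\!\tfrac{1}{1 - P_{X^n}(A_r)}} \;\ge\; r - \sqrt{\tfrac{n}{2}\ln\!\tfrac{1}{P_{X^n}(A)}},
\end{equation*}
and the hypothesis $r > \sqrt{(n/2)\ln(1/P_{X^n}(A))}$ is exactly what is required to square both sides without changing the direction of the inequality. Solving for $P_{X^n}(A_r)$ yields \eqref{eq:my_blowup}.

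There is no real obstacle here: every ingredient (Pinsker as ${\rm T}_1(1/4)$, tensorization to ${\rm T}_1(n/4)$, Marton's two-set inequality, and the reduction $B = A_r^c$) is already in place, and the only care required is to avoid the loose constants $K=1$, $r_0 = \sqrt{2c\ln 2}$ produced by the special case $P_{X^n}(A) \ge 1/2$ in Proposition~\ref{prop:Marton_technique}. Retaining $\sqrt{(n/2)\ln(1/P_{X^n}(A))}$ instead of its bound $\sqrt{(n/2)\ln 2}$ gives the sharper, nonasymptotic form claimed in the lemma, and also makes clear how Lemma~\ref{lm:blowup} follows by choosing $r = n\delta_n$ and using $P_{X^n}(A) \ge \exp(-n\eps_n)$ with $\delta_n$ chosen so that the exponent in \eqref{eq:my_blowup} tends to $-\infty$.
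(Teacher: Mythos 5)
Your proposal is correct and follows essentially the same route as the paper's proof: Pinsker's inequality as a ${\rm T}_1(1/4)$ bound, tensorization to ${\rm T}_1(n/4)$, and Marton's two-set inequality with $B = A_r^c$ applied without specializing to $P_{X^n}(A) \ge 1/2$. One small note: since the $X_i$ are independent but not assumed identically distributed, the tensorization step should invoke Proposition~\ref{prop:TC_tensorization_2} (which allows distinct marginals) rather than Proposition~\ref{prop:TC_tensorization}, as the paper does; the conclusion ${\rm T}_1(n/4)$ is unchanged.
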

\begin{proof} Let $P_n$ denote the product measure $P_{X^n} = P_{X_1} \otimes \ldots \otimes P_{X_n}$. By Pinsker's inequality, every $\mu \in \cP(\cX)$ satisfies ${\rm T}_1(1/4)$ on $(\cX,d)$ where $d=d_1$ is the Hamming metric. By Proposition~\ref{prop:TC_tensorization_2}, the product measure $P_n$ satisfies ${\rm T}_1(n/4)$ on the product space $(\cX^n,d_n)$, i.e., for every $\mu_n \in \cP(\cX^n)$,
	\begin{align}\label{eq:Pinsker_tensorized}
		W_1(\mu_n ,P_n) \le \sqrt{\frac{n}{2} \, D(\mu_n \| P_n)}.
	\end{align}
The statement of the lemma follows from the proof of Proposition~\ref{prop:Marton_technique}. More precisely,
applying \eqref{eq:Marton_conc_1} to the probability measure $P_{X^n}$ with $c = \frac{n}{4}$ gives
$$r \le \sqrt{\frac{n}{2} \, \ln \frac{1}{P_{X^n}(A)}} + \sqrt{\frac{n}{2} \, \ln \frac{1}{1-P_{X^n}(A_r)}} \, , \quad \forall \, r >0$$
and \eqref{eq:my_blowup} holds by rearranging terms.
\end{proof}

We can now easily prove Lemma~\ref{lm:blowup}. To this end, given a positive sequence $\{\varepsilon_n\}_{n=1}^{\infty}$
that tends to zero, let us choose a positive sequence $\{\delta_n\}^\infty_{n=1}$ such that
\begin{align*}
\delta_n > \sqrt{\frac{\varepsilon_n}{2}}, \quad \delta_n \xrightarrow{n \to \infty} 0, \quad
\eta_n \deq \exp\left( - 2n\left(\delta_n - \sqrt{\frac{\eps_n}{2}}\right)^2\right) \xrightarrow{n \to \infty} 0.
\end{align*}
These requirements can be satisfied, e.g., by the setting
\begin{align} \label{eq:choice of sequences for the proof of the blowing-up lemma}
\delta_n \triangleq \sqrt{\frac{\varepsilon_n}{2}} + \sqrt{\frac{\alpha \, \ln n}{n}}, \quad
\eta_n = \frac{1}{n^{2\alpha}}, \quad \forall \, n \in \naturals
\end{align}
where $\alpha > 0$ can be made arbitrarily small. Using this selection for $\{\delta_n\}_{n=1}^{\infty}$ in
\eqref{eq:choice of sequences for the proof of the blowing-up lemma}, we get \eqref{eq:blowup} with the
$r_n$-blowup of the set $B$ where $r_n \triangleq n \delta_n$. Note that the above selection of $\delta_n$
does not depend on the transition probabilities of the DMC with input $\cX$ and output $\cY$ (the correspondence
between Lemmas~\ref{lm:blowup} and~\ref{lm:my_blowup} is given by $P_{X^n}=T^n(\cdot | x^n)$ where
$x^n \in \mathcal{X}^n$ is arbitrary).

\subsection{Strong converse for the degraded broadcast channel}

We are now ready to demonstrate how the blowing-up lemma can be used to obtain strong converses. Following \cite{Csiszar_Korner_book}, from this point on, we will use the  notation $T \colon \cU \to \cV$ for a DMC with input alphabet $\cU$, output alphabet $\cV$, and transition probabilities $T(v|u), u \in \cU, v \in \cV$.

Consider the problem of characterizing the capacity region of a 2-user discrete memoryless degraded broadcast
channel (DM-DBC) with independent messages, defined as follows:
\begin{definition}[DM-DBC] \label{def:DM-DBC}
Let $\cX$, $\cY$ and $\cZ$ be finite sets. A DM-DBC is specified by a pair of DMCs
$T_1 \colon \cX \to \cY$ and $T_2 \colon \cX \to \cZ$ where there exists a DMC $T_3 \colon \cY \to \cZ$ such that
\begin{align} \label{eq: stochastically degraded broadcast channel}
	T_2(z|x) &= \sum_{y \in \cY} T_1(y|x) \, T_3(z|y), \qquad \forall \, x \in \cX, \, z \in \cZ.
\end{align}
(More precisely, this is a {\em stochastically degraded} broadcast channel -- see, e.g.,
\cite[Section~15.6]{Cover_and_Thomas} and \cite[Section~5.4]{ElGamal_Kim_NIT}; a {\em physically
degraded} broadcast channel has the probability law
$$\pr(y,z|x) = T_1(y|x) \, T_3(z|y), \quad \forall \, x \in \cX, \, y \in \cY, \, z \in \cZ$$
so, to every DM-DBC, there is a corresponding physically
degraded broadcast channel with the same conditional marginal distributions.
\end{definition}
\begin{definition}[Codes] \label{def:codes for DM-DBC}
Given $n,M_1,M_2 \in \naturals$, an {\em $(n,M_1,M_2)$-code} $\cC$ for the broadcast channel consists of the following objects:
\begin{enumerate}
	\item An {\em encoding map} $f_n \colon \{1,\ldots,M_1\} \times \{1,\ldots,M_2\} \to \cX^n$;
	\item A collection $\cD_1$ of $M_1$ disjoint {\em decoding sets} for receiver~1
    $$D_{1,i}\subset \cY^n, \quad i \in \{1, \ldots, M_1\}$$ and
	a collection $\cD_2$ of $M_2$ disjoint decoding sets for receiver~2
    $$D_{2,j} \subset \cZ^n, \quad j  \in \{1, \ldots, M_2\}.$$
\end{enumerate}
Given $\eps_1,\eps_2 \in (0,1)$, we say that the code $\cC = (f_n,\cD_1,\cD_2)$ is an {\em $(n,M_1,M_2,\eps_1,\eps_2)$-code} if
\begin{align*}
	\max_{1 \le i \le M_1} \max_{1 \le j \le M_2} T^n_1\Big(D^c_{1,i}\Big|f_n(i,j)\Big) &\le \eps_1, \\
	\max_{1 \le i \le M_1} \max_{1 \le j \le M_2} T^n_2\Big(D^c_{2,j} \Big| f_n(i,j)\Big) &\le \eps_2.
\end{align*}
\end{definition}
In other words, the maximal probability of error criterion is used in Definition~\ref{def:codes for DM-DBC}.
Note that, for general multiuser channels, the capacity region with respect to the maximal probability
of error may be strictly smaller than the capacity region with respect to the average probability of error \cite{Dueck_ave_max};
nevertheless, these two capacity regions are identical for discrete memoryless broadcast channels \cite{Willems_BC}.
\begin{definition}[Achievable rates] \label{def:achievable rates}
A pair of rates $(R_1,R_2)$ (in nats per channel use) is said to be {\em $(\eps_1,\eps_2)$-achievable} if for
every $\delta > 0$, there exists an $(n,M_1,M_2,\eps_1,\eps_2)$-code (for a sufficiently large block length $n$) such that
\begin{align*}
	\frac{1}{n} \, \ln M_k \ge R_k - \delta, \qquad k = 1,2.
\end{align*}
Likewise, $(R_1,R_2)$ is said to be {\em achievable} if it is $(\eps_1,\eps_2)$-achievable for all $0 < \eps_1,\eps_2 \le 1$
(according to the criterion of the maximal probability of error in Definition~\ref{def:codes for DM-DBC}, this is equivalent
to the requirement that $(R_1,R_2)$ is $(\eps_1,\eps_2)$-achievable for arbitrarily small values of $\eps_1, \eps_2 > 0$).
Let $\cR(\eps_1,\eps_2)$ denote the set of all $(\eps_1,\eps_2)$-achievable
rates, and let $\cR$ denote the set of all achievable rates. Clearly,
\begin{align*}
	\cR = \bigcap_{(\eps_1,\eps_2) \in (0,1)^2} \cR(\eps_1,\eps_2)
\end{align*}
is the capacity region.
\end{definition}
The capacity region of a discrete memoryless broadcast channel only depends on its conditional marginal
distributions (see, e.g., \cite[Lemma~5.1]{ElGamal_Kim_NIT}). This observation implies that the capacity
region of a general DM-DBC is the same as that of a discrete memoryless physically degraded broadcast
channel when they both have the same conditional marginal distributions. Consequently, one can assume
w.l.o.g. that $X \to Y_1 \to Y_2$ forms a Markov chain (see, e.g., \cite[Section~5.4]{ElGamal_Kim_NIT}).

The capacity region of the DM-DBC is fully known. The achievability of this rate region was demonstrated by
Cover \cite{Cover72} and Bergmans \cite{Bergmans73} via the use of superposition coding. Consequently,
weak converses have been proved by Wyner \cite{Wyner73}, Gallager \cite{Gallager74}, and Ahlswede and K\"orner
\cite{Ahlswede_Korner}, and a strong converse for the capacity region of the DM-DBC has been proved by Ahlswede,
G\'acs and K\"orner \cite{Ahlswede_Gacs_Korner}.

In the absence of a common message, the capacity region of the DM-DBC is introduced in the following theorem (see, e.g., \cite[Theorem~15.6.2]{Cover_and_Thomas} or \cite[Theorem~5.2]{ElGamal_Kim_NIT}).
\begin{theorem} \label{thm: capacity region for a DM-DBC}
A rate pair $(R_1,R_2)$ is achievable for the DM-DBC $(T_1,T_2)$, characterized by
\eqref{eq: stochastically degraded broadcast channel} with $P_{Y|X} = T_1$ and $P_{Z|X} = T_2$,
if and only if
\begin{align*}
	R_1 \le I(X; Y | U), \quad R_2 \le I(U; Z)
\end{align*}
for an auxiliary random variable $U \in \cU$ such that $U \to X \to Y \to Z$ is a Markov chain,
and $|\cU| \le \min\left\{|\cX|,|\cY|,|\cZ|\right\}+1$.
\end{theorem}

The {\em strong converse} for the DM-DBC, due to Ahlswede, G\'acs and K\"orner \cite{Ahlswede_Gacs_Korner}, states that allowing for nonvanishing probabilities of error does not enlarge the achievable region:
\begin{theorem}[Strong converse for the DM-DBC]\label{thm:SC_DM-DBC}
	$$
	\cR(\eps_1,\eps_2) = \cR, \qquad \forall \, (\eps_1,\eps_2) \in (0,1)^2.
	$$
\end{theorem}
Before proceeding with the formal proof of this theorem, we briefly describe the way in which the blowing-up lemma enters the picture. The main idea is that, given an arbitrary code, one can ``blow up'' the decoding sets in such a way that the probability of decoding error can be as small as one desires (for large enough $n$). Of course, the blown-up decoding sets are no longer disjoint, so the resulting object is no longer a code according to Definition~\ref{def:codes for DM-DBC}. On the other hand, the blowing-up operation transforms the original code into a {\em list code} with a subexponential list size, and one can use a generalization of Fano's inequality for list decoding (see Appendix~\ref{app:Fano_list})
to get nontrivial converse bounds.

\begin{proof}[Proof (Theorem~\ref{thm:SC_DM-DBC})]
Given $\widetilde{\eps}_1,\widetilde{\eps}_2 \in (0,1)$, let
$\widetilde{\cC} = (f_n,\widetilde{\cD}_1,\widetilde{\cD}_2)$ be
an arbitrary $(n,M_1,M_2,\widetilde{\eps}_1,\widetilde{\eps}_2)$-code
for the DM-DBC $(T_1,T_2)$ with
	$$
	\widetilde{\cD}_{1} = \left\{ \widetilde{D}_{1,i} \right\}^{M_1}_{i=1} \qquad \text{and} \qquad \widetilde{\cD}_2 = \left\{ \widetilde{D}_{2,j} \right\}^{M_2}_{j=1}.
	$$
By hypothesis, the decoding sets in $\widetilde{\cD}_1$ and $\widetilde{\cD}_2$ satisfy
\begin{subequations}\label{eq:original_code for the DM-DBC}
\begin{align}
	\min_{1 \le i \le M_1} \min_{1 \le j \le M_2} T^n_1\Big( \widetilde{D}_{1,i} \Big| f_n(i,j) \Big) &\ge 1 - \widetilde{\eps}_1, \\
	\min_{1 \le i \le M_1} \min_{1 \le j \le M_2} T^n_2\Big( \widetilde{D}_{2,j} \Big| f_n(i,j) \Big) &\ge 1 - \widetilde{\eps}_2.
\end{align}
\end{subequations}
For an arbitrary $\alpha > 0$, define a positive sequence $\{\delta_n\}$ as
\begin{align}  \label{eq:delta for DM-DBC}
\delta_n = \sqrt{\frac{-\ln \bigl(1-\max\{\widetilde{\eps}_1, \widetilde{\eps}_2\}\bigr)}{2n}}
+ \sqrt{\frac{\alpha \ln n}{n}} \, , \quad \forall \, n \in \naturals.
\end{align}
Note that, as $n \to \infty$,
\begin{align*}
	& n^\beta \, \delta_n \to 0, \quad \forall \, \beta < \frac{1}{2}, \\
    & \sqrt{n} \, \delta_n \to \infty.
\end{align*}
For each $i \in \{1,\ldots,M_1\}$ and $j \in \{1,\ldots,M_2\}$,
define the ``blown-up'' decoding sets
\begin{align} \label{blown-up decoding sets for the DM-DBC}
	D_{1,i} \deq \left[\widetilde{D}_{1,i}\right]_{n\delta_n} \qquad \text{and} \qquad D_{2,j} \deq \left[\widetilde{D}_{2,j}\right]_{n\delta_n}.
\end{align}
We rely in the following on Lemma~\ref{lm:blowup}
with the setting in \eqref{eq:choice of sequences for the proof of the blowing-up lemma}.
From \eqref{eq:choice of sequences for the proof of the blowing-up lemma}
and \eqref{eq:original_code for the DM-DBC}, the blown-up decoding sets
in \eqref{blown-up decoding sets for the DM-DBC} with the sequence $\{\delta_n\}$ defined
in \eqref{eq:delta for DM-DBC} imply that, for every $n \in \naturals$,
\begin{subequations}\label{eq:blown_up_code}
\begin{align}
	\min_{1 \le i \le M_1} \min_{1 \le j \le M_2} T^n_1\Big( D_{1,i} \Big| f_n(i,j) \Big) & \ge 1 - n^{-2 \alpha} \\
	\min_{1 \le i \le M_1} \min_{1 \le j \le M_2} T^n_2\Big( D_{2,j} \Big| f_n(i,j) \Big) & \ge 1 - n^{-2 \alpha}.
\end{align}
\end{subequations}
Let $\cD_{1} = \left\{ D_{1,i}\right\}^{M_1}_{i=1}$, and $\cD_2= \left\{ D_{2,j}\right\}^{M_2}_{j=1}$. We have thus constructed a triple $(f_n,\cD_1,\cD_2)$ satisfying \eqref{eq:blown_up_code}. Note, however, that this new object is not a code because the blown-up sets
$\cD_{1}$ are not disjoint, and the same holds for the
blown-up sets $\cD_2$. On the other hand, each given $n$-tuple $y^n \in \cY^n$ belongs to a subexponential number of
the $D_{1,i}$'s, and the same applies to $D_{2,j}$'s. More precisely, let us define the sets
\begin{subequations}\label{eq:listsets}
\begin{align}
	& \cN_1(y^n) \deq \left\{ i \colon y^n \in D_{1,i} \right\},  \quad \forall \, y^n \in \cY^n, \\
    & \cN_2(z^n) \deq \left\{ j \colon z^n \in D_{2,j} \right\},  \quad \forall \, z^n \in \cZ^n.
\end{align}
\end{subequations}
Then, a simple combinatorial argument (see
\cite[Eq.~(37)]{Ahlswede_Gacs_Korner}) shows that there exists a positive sequence $\{\eta_n\}^\infty_{n=1}$
such that $\eta_n \to 0$ as $n \to \infty$, and
\begin{subequations}\label{eq:listsizes}
\begin{align}
	|\cN_1(y^n)| & \le \exp(n\eta_n), \qquad \forall \, y^n \in \cY^n, \\
	|\cN_2(z^n)| & \le \exp(n\eta_n), \qquad \forall \, z^n \in \cZ^n.
\end{align}
\end{subequations}
In order to get an explicit expression for $\{\eta_n\}$, for every $y^n \in \cY^n$ and $r \ge 0$, let $\cB_r(y^n) \subseteq \cY^n$
denote the ball of $d_n$-radius $r$ centered at $y^n$:
$$
\cB_{r}(y^n) \deq \left\{ v^n \in \cY^n \colon d_n(v^n,y^n) \le r\right\} \equiv  \left\{y^n\right\}_{r}
$$
where $d_n$ is the Hamming metric \eqref{eq:Hamming metric}, and $\left\{y^n\right\}_{r}$ denotes the $r$-blowup of the singleton set $\{y^n\}$.
Since $\delta_n \to 0$ as $n \to \infty$, there exists $n_0 \in \naturals$ such that $\delta_n + \frac{1}{n} \leq \frac{1}{2}$ for every
$n \ge n_0$. Consequently, it follows that for every $n \ge n_0$,
\begin{align*}
	|\cN_1(y^n)| &\le \left|\cB_{n\delta_n}(y^n)\right| \\
                 &=\sum_{i=0}^{\lceil n \delta_n \rceil} {{n}\choose{i}} |\cY|^i \\
                 &\le \bigl(\lceil n\delta_n \rceil +1 \bigr) {{n}\choose{\lceil n \delta_n \rceil}} \, |\cY|^{\lceil n \delta_n \rceil} \\
                 &\le (n \delta_n+2) \, \exp\left(n \, h\Bigl(\delta_n + \frac{1}{n}\Bigr)\right) \, |\cY|^{n \delta_n+1}, \quad
                 \forall \, y^n \in \cY^n.
\end{align*}
The second inequality holds since, for $n \ge n_0$, we have $\lceil n \delta_n \rceil \leq \lceil \frac{n}{2} \rceil$,
and the binomial coefficients $\{{{n}\choose{k}}\}$ is monotonic increasing with $k$ if $k \leq \lceil \frac{n}{2} \rceil$;
the third inequality holds since, for every $n \ge n_0$,
$${{n}\choose{k}} \leq \exp\left(n \, h\Bigl(\frac{k}{n}\Bigr)\right), \quad \text{if }   k \leq \frac{n}{2}$$
where $h$ denotes the binary entropy function; similarly, for all $n \ge n_0$,
\begin{align*}
|\cN_2(z^n)| \le (n \delta_n+2) \, \exp\left(n \, h\Bigl(\delta_n + \frac{1}{n}\Bigr)\right) \, |\cZ|^{n \delta_n+1}, \quad
                 \forall \, z^n \in \cZ^n.
\end{align*}
From \eqref{eq:listsizes}, one can define the positive sequence $\{\eta_n\}$ such that
$$\eta_n = \frac{\ln(n\delta_n+2)}{n} + h\Bigl(\delta_n + \frac{1}{n}\Bigr) + \Bigl(\delta_n + \frac{1}{n}\Bigr)
\log \Bigl(\max\bigl\{|\cY|, \, |\cZ|\bigr\}\Bigr), \; \; \forall \, n \ge n_0$$
so, we have, $\eta_n \to 0$ as $n \to \infty$.

We are now ready to apply Fano's inequality, just as in \cite{Ahlswede_Korner}. To this end, for every $j \in \{1,\ldots,M_2\}$, define
\begin{align*}
	\cT(j) \deq \left\{ f_n(i,j) \colon 1 \le i \le M_1 \right\},
\end{align*}
let $U$ be a uniformly distributed random variable over $\{1,\ldots,M_2\}$, and let $X^n \in \cX^n$ be uniformly distributed
over the set $\cT(U)$.
Finally, let $Y^n \in \cY^n$ and $Z^n \in \cZ^n$ be generated from $X^n$ via the DMCs $T^n_1$ and $T^n_2$, respectively. Now,
consider the error event of the second receiver (which corresponds to the degraded channel $T_2^n$); the error event of a list
decoder for the second receiver refers to the case where $U \notin D_{2,U}$ and, from  \eqref{eq:listsets}, it is given by
\begin{align*}
	E_n(Z^n) \deq \left\{ U \not\in \cN_2(Z^n) \right\}
\end{align*}
and let $\zeta_n \deq \pr\left(E_n(Z^n)\right)$ be the error probability of the list decoder
for the blown-up sets $\cD_2$.
Then, using a modification of Fano's inequality for list decoding (see Appendix~\ref{app:Fano_list})
together with \eqref{eq:listsizes}, we get
\begin{align}\label{eq:Fano_list_application}
	H(U|Z^n) &\le h(\zeta_n) + (1-\zeta_n) n \eta_n + \zeta_n \ln M_2.
\end{align}
On the other hand, $\ln M_2 = H(U) = I(U; Z^n) + H(U|Z^n)$, so
\begin{align*}
\frac{1}{n} \, \ln M_2 &\le \frac{1}{n} \, \Big[I(U; Z^n) + h(\zeta_n) + \zeta_n \ln M_2\Big] + (1-\zeta_n)\eta_n \\[0.1cm]
&= \frac{1}{n} \, I(U;Z^n) + o(1),
\end{align*}
where the second step uses the fact that $\eta_n \to 0$ and, by \eqref{eq:blown_up_code}, $\zeta_n \le n^{-2 \alpha}$
for some $\alpha > 0$, so also $\zeta_n \to 0$ as $n \to \infty$.
Using a similar argument, we can also prove that
\begin{align*}
	\frac{1}{n} \, \ln M_1 &\le \frac{1}{n} \, I(X^n; Y^n |U) + o(1).
\end{align*}
By the weak converse for the DM-DBC \cite{Ahlswede_Korner}, the pair $(R_1,R_2)$ with $R_1 = \frac{1}{n} \, I(X^n; Y^n | U)$ and
$R_2 = \frac{1}{n} \, I(U; Z^n)$ belongs to the achievable region $\cR$. Since every element of $\cR(\eps_1,\eps_2)$ can be expressed
as a limit of rates $\left(\frac{1}{n} \, \ln M_1, \, \frac{1}{n} \, \ln M_2\right)$ in the region $\cR$, and since the achievable
region $\cR$ is closed, we conclude that $\cR(\eps_1,\eps_2) \subseteq \cR$ for all $\eps_1,\eps_2 \in (0,1)$, and Theorem~\ref{thm:SC_DM-DBC}
is proved.
\end{proof}

\subsection{The empirical distribution of good channel codes with non-vanishing error probability}

A more recent application of concentration of measure to information theory has to do with characterizing stochastic behavior of output sequences of good channel codes. On a conceptual level, the random coding argument originally used by Shannon, and many times since, to show the existence of good channel codes suggests that the input (resp., output) sequence of such a code should resemble, as much as possible, a typical realization of a sequence of i.i.d.\ random variables sampled from a capacity-achieving input (resp., output) distribution. For capacity-achieving sequences of codes with asymptotically vanishing probability of error, this intuition has been analyzed rigorously by Shamai and Verd\'u \cite{Shamai_Verdu_empirical}, who have proved the following remarkable statement \cite[Theorem~2]{Shamai_Verdu_empirical}: given a DMC $T \colon \cX \to \cY$, every capacity-achieving sequence of channel codes with asymptotically vanishing probability of error (maximal or average) has the property that
\begin{align}\label{eq:convergence_to_caod}
	\lim_{n \to \infty} \frac{1}{n} \, D(P_{Y^n} \| P^*_{Y^n}) = 0,
\end{align}
where, for each $n$, $P_{Y^n}$ denotes the output distribution on $\cY^n$ induced by the code (assuming the messages are equiprobable), while $P^*_{Y^n}$ is the product of $n$ copies of the single-letter capacity-achieving output distribution (see below for a more detailed exposition). In fact, the convergence in \eqref{eq:convergence_to_caod} holds not just for DMCs, but for arbitrary channels satisfying the condition
\begin{align*}
	C = \lim_{n \to \infty} \frac{1}{n} \, \sup_{P_{X^n} \in \cP(\cX^n)} I(X^n; Y^n).
\end{align*}
(These ideas go back to the work of Han and Verd\'u on approximation theory of output statistics, see \cite[Theorem~15]{Han_Verdu}). In a recent paper \cite{Polyanskiy_Verdu_good_codes}, Polyanskiy and Verd\'u extended the results of \cite{Shamai_Verdu_empirical} for codes with {\em nonvanishing} probability of error, provided one uses the
maximal probability of error criterion and deterministic encoders.

In this section, we will present some of the results from \cite{Polyanskiy_Verdu_good_codes,ISIT2013_Raginsky_Sason} in the context of the material covered earlier in this chapter.
To keep things simple, we will only focus on channels with finite input and output alphabets. Thus, let $\cX$ and $\cY$ be finite sets, and consider a DMC $T \colon \cX \to \cY$. The capacity $C$ is given by solving the optimization problem
\begin{align*}
	C = \max_{P_X \in \cP(\cX)} I(X;Y),
\end{align*}
where $X$ and $Y$ are related via $T$. Let $P^*_X \in \cP(\cX)$ be a capacity-achieving input distribution (there may be several). It can be shown \cite{Topsoe_capacity,Kemperman_capacity} that the corresponding output distribution $P^*_Y \in \cP(\cY)$ is unique, and for every $n \in \naturals$, the product distribution $P^*_{Y^n} \equiv (P^*_Y)^{\otimes n}$ has the key property
\begin{align}\label{eq:info_gain_bound}
	D(P_{Y^n|X^n=x^n} \| P^*_{Y^n}) \le nC, \qquad \forall \, x^n \in \cX^n
\end{align}
where $P_{Y^n|X^n=x^n}$ is shorthand for the product distribution $T^n(\cdot|x^n)$. From the bound \eqref{eq:info_gain_bound}, we see that the capacity-achieving output distribution $P^*_{Y^n}$ dominates every output distribution $P_{Y^n}$ induced by an arbitrary input distribution $P_{X^n} \in \cP(\cX^n)$:
\begin{align*}
	P_{Y^n|X^n=x^n} \ll P^*_{Y^n}, \; \forall \, x^n \in \cX^n \; \; \Longrightarrow \; \;
P_{Y^n} \ll P^*_{Y^n}, \; \forall \, P_{X^n} \in \cP(\cX^n).
\end{align*}
This has two important consequences:
\begin{enumerate}
	\item The information density is well-defined for every $x^n \in \cX^n$ and $y^n \in \cY^n$:
	\begin{align*}
		i^*_{X^n;Y^n}(x^n;y^n) \deq \ln \frac{\d P_{Y^n|X^n=x^n}}{\d P^*_{Y^n}}(y^n).
	\end{align*}
\item For every input distribution $P_{X^n}$, the corresponding output distribution $P_{Y^n}$ satisfies
\begin{align*}
	D(P_{Y^n} \| P^*_{Y^n}) \le nC - I(X^n; Y^n).
\end{align*}
Indeed, by the chain rule for divergence, it follows that for every input distribution $P_{X^n} \in \cP(\cX^n)$
\begin{align*}
I(X^n; Y^n) &= D(P_{Y^n|X^n} \| P_{Y^n} | P_{X^n}) \\
&= D(P_{Y^n|X^n} \| P^*_{Y^n} | P_{X^n}) - D(P_{Y^n} \| P^*_{Y^n}) \\
&\le nC - D(P_{Y^n} \| P^*_{Y^n}).
\end{align*}
The claimed bound follows upon rearranging this inequality.
\end{enumerate}
Now let us bring codes into the picture. Given $n,M \in \naturals$, an {\em $(n,M)$-code} for $T$ is a pair $\cC = (f_n,g_n)$ consisting of an {\em encoding map} $f_n \colon \{1,\ldots,M\} \to \cX^n$ and a {\em decoding map} $g_n \colon \cY^n \to \{1,\ldots,M\}$. Given $0 < \eps \le 1$, we say that $\cC$ is an {\em $(n,M,\eps)$-code} if
\begin{align}\label{eq:max_det_error}
	\max_{1 \le i \le M}\pr\big(g_n(Y^n) \neq i \big| X^n = f_n(i)\big) \le \eps.
\end{align}
\begin{remark} Polyanskiy and Verd\'u \cite{Polyanskiy_Verdu_good_codes} use a more precise nomenclature and say that every such $\cC=(f_n,g_n)$ satisfying \eqref{eq:max_det_error} is an {\em $(n,M,\eps)_{\max,\det}$-code} to indicate explicitly that the encoding map $f_n$ is deterministic and that the maximal probability of error criterion is used. Here, we will only consider codes of this type, so we will adhere to our simplified terminology.\end{remark}
	
Consider an arbitrary $(n,M)$-code $\cC = (f_n,g_n)$ for $T$, and let $J$ be a random variable uniformly distributed on $\{1,\ldots,M\}$. Hence, we can think of every $i \in \{1 \ldots, M\}$ as one of $M$ equiprobable messages to be transmitted over $T$. Let $P^{(\cC)}_{X^n}$ denote the distribution of $X^n = f_n(J)$, and let $P^{(\cC)}_{Y^n}$ denote the corresponding output distribution. The central result of \cite{Polyanskiy_Verdu_good_codes} is that the output distribution $P^{(\cC)}_{Y^n}$ of every $(n,M,\eps)$-code satisfies
\begin{align}\label{eq:PV_output_bound}
	D\big(P^{(\cC)}_{Y^n} \big\| P^*_{Y^n}\big) \le nC - \ln M + o(n);
\end{align}
moreover, the $o(n)$ term was refined in \cite[Theorem~5]{Polyanskiy_Verdu_good_codes} to $O(\sqrt{n})$ for
every DMC, except those that have zeroes in their transition matrix. In the following, we present
a sharpened bound with a modified proof, in which we specify an explicit form for the term that scales like
$O(\sqrt{n})$.

Just as in \cite{Polyanskiy_Verdu_good_codes}, the proof of \eqref{eq:PV_output_bound} with the $O(\sqrt{n})$
term uses the following strong converse for channel codes due to Augustin \cite{Augustin} (see also \cite[Theorem~1]{Polyanskiy_Verdu_good_codes} and \cite[Section~2]{Ahlswede_MAC}):
\begin{theorem}[Augustin] Let $S \colon \cU \to \cV$ be a DMC with finite input and output alphabets, and let $P_{V|U}$ be the transition probability induced by $S$. For every $M \in \naturals$ and $0 < \eps \le 1$, let $f \colon \{1,\ldots,M\} \to \cU$ and $g \colon \cV \to \{1,\ldots,M\}$ be two mappings, such that
\begin{align*}
	\max_{1 \le i \le M}\pr\big(g(V) \neq i \big| U = f(i) \big) \le \eps.
\end{align*}
Let $Q_V \in \cP(\cV)$ be an auxiliary output distribution, and fix an arbitrary mapping $\gamma \colon \cU \to \reals$.
Then, the following inequality holds:
\begin{align}\label{eq:Augustin}
	M \le \frac{\exp\bigl\{\expectation[\gamma(U)]\bigr\}}{\displaystyle\inf_{u \in \cU}
P_{V|U=u}\left( \ln \frac{\d P_{V|U=u}}{\d Q_V} < \gamma(u)\right) - \eps},
\end{align}
provided the denominator is strictly positive. The expectation in the numerator is taken with respect to the distribution of $U = f(J)$ with $J \sim {\rm Uniform}\{1,\ldots,M\}$.
\end{theorem}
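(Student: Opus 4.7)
The plan is to execute a change-of-measure / meta-converse argument in the spirit of Verd\'u--Han, combined with a final application of Jensen's inequality to turn an arithmetic mean over codewords into the exponential of $\expectation[\gamma(U)]$. Let $C_i \deq g^{-1}(i) \subseteq \cV$ for $i = 1, \ldots, M$; these are pairwise disjoint by construction, and the maximal error hypothesis gives $P_{V|U=f(i)}(C_i) \ge 1 - \eps$ for every $i$. The auxiliary distribution $Q_V$ enters through the ``atypicality'' sets
\[
A_i \deq \left\{ v \in \cV : \ln \frac{\d P_{V|U=f(i)}}{\d Q_V}(v) < \gamma(f(i)) \right\},
\]
which by hypothesis satisfy $P_{V|U=f(i)}(A_i) \ge p^* \deq \inf_{u \in \cU} P_{V|U=u}\bigl( \ln \frac{\d P_{V|U=u}}{\d Q_V} < \gamma(u) \bigr)$ for every $i$. (If $P_{V|U=f(i)} \not\ll Q_V$ for some $i$, one first reduces to the dominated case, e.g.\ by replacing $Q_V$ with $\frac{1}{2}(Q_V + M^{-1}\sum_i P_{V|U=f(i)})$; this only worsens constants in a controllable way, and the final bound is then obtained by a limiting argument.)

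The first key step is to combine the two lower bounds via a union bound on the complements: since $P_{V|U=f(i)}(C_i^c) \le \eps$ and $P_{V|U=f(i)}(A_i^c) \le 1 - p^*$, one gets
\[
P_{V|U=f(i)}(C_i \cap A_i) \ge p^* - \eps
\]
(which is meaningful precisely under the nontriviality assumption in the statement). The second key step is a pointwise change of measure: on $A_i$ the density $\d P_{V|U=f(i)}/\d Q_V$ is strictly less than $\exp(\gamma(f(i)))$, so integrating over the set $C_i \cap A_i \subseteq A_i$ yields
\[
P_{V|U=f(i)}(C_i \cap A_i) \le \exp\bigl(\gamma(f(i))\bigr) \, Q_V(C_i \cap A_i),
\]
and therefore $Q_V(C_i \cap A_i) \ge (p^* - \eps)\exp(-\gamma(f(i)))$.

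The final step is to sum over $i$ and exploit disjointness of the $C_i$'s (hence of the $C_i \cap A_i$'s), which gives
\[
1 \ge \sum_{i=1}^M Q_V(C_i \cap A_i) \ge (p^* - \eps) \sum_{i=1}^M \exp\bigl(-\gamma(f(i))\bigr).
\]
Applying Jensen's inequality to the convex function $x \mapsto \exp(-x)$ under the uniform distribution on $\{1, \ldots, M\}$,
\[
\frac{1}{M}\sum_{i=1}^M \exp\bigl(-\gamma(f(i))\bigr) \ge \exp\!\left( - \frac{1}{M}\sum_{i=1}^M \gamma(f(i)) \right) = \exp\bigl(-\expectation[\gamma(U)]\bigr),
\]
where $U = f(J)$ with $J$ uniform on $\{1, \ldots, M\}$. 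Substituting this back and rearranging yields the desired bound \eqref{eq:Augustin}.

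The mechanics are essentially routine; the only genuinely delicate point is the dominance issue $P_{V|U=u} \ll Q_V$ needed to make the log-density (and hence the set $A_i$) well defined for the pointwise change of measure. In the finite-alphabet setting of the theorem this is handled by perturbing $Q_V$ to be strictly positive (or adopting the convention that $\ln(\d P/\d Q) = +\infty$ on $\{\d Q = 0, \d P > 0\}$, which simply places those $v$'s outside $A_i$ and so only helps the bound). A secondary subtlety is simply keeping the direction of each inequality straight: the hypothesis lower-bounds $P_{V|U=u}(A_i)$, whereas the change-of-measure step upper-bounds $P_{V|U=u}(\cdot)$ in terms of $Q_V(\cdot)$ on $A_i$; combining these in the right order is what makes the ``atypical'' event $A_i$ (rather than its complement) the correct choice.
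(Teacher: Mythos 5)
The paper states this theorem without proof (it cites Augustin and Polyanskiy--Verd\'u), so there is nothing to compare against line by line; your argument, however, is correct and self-contained, and it is precisely the standard meta-converse / change-of-measure argument that Augustin's bound rests on. The three steps --- (i) union bound giving $P_{V|U=f(i)}(C_i \cap A_i) \ge p^* - \eps$, (ii) pointwise change of measure on the atypical set $A_i$ giving $Q_V(C_i \cap A_i) \ge (p^* - \eps)\exp(-\gamma(f(i)))$, and (iii) summing the disjoint $Q_V$-masses and applying Jensen to $x \mapsto e^{-x}$ --- all hold, the inequality directions are kept straight, and the absolute-continuity issue is dispatched correctly by the convention that $v$ with $Q_V(v) = 0 < P_{V|U=u}(v)$ lies outside $A_i$ (which is consistent with how the infimum $p^*$ is defined in the statement, so no perturbation is actually needed in the finite-alphabet case). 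The Jensen step at the end is the one place a careless reader could fumble, but you apply it in the right direction for the convex exponential, and the final rearrangement reproduces \eqref{eq:Augustin} exactly.
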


We first establish the bound \eqref{eq:PV_output_bound} for the case when the DMC $T$ is such that
\begin{align}\label{eq:no_zeroes}
	C_1 \deq \max_{x,x' \in \cX} D(P_{Y|X=x} \| P_{Y|X=x'}) < \infty.
\end{align}
Note that $C_1 < \infty$ if and only if the transition matrix of $T$ does not have any zeroes. Consequently,
\begin{align} \label{eq: c(T)}
	c(T) \deq 2\max_{x \in \cX}\max_{y,y' \in \cY} \left| \ln \frac{P_{Y|X}(y|x)}{P_{Y|X}(y'|x)} \right| < \infty.
\end{align}
We can now establish the following sharpened version of the bound in \cite[Theorem~5]{Polyanskiy_Verdu_good_codes}:

\begin{theorem}\label{thm:PV_output_bound_1} Let $T \colon \cX \to \cY$ be a DMC with $C > 0$ satisfying \eqref{eq:no_zeroes}. Then, every $(n,M,\eps)$-code $\cC$ for $T$ with $0 < \eps < 1/2$ satisfies
\begin{align}\label{eq:PV_output_bound_1}
	D\big(P^{(\cC)}_{Y^n} \big\| P^*_{Y^n}\big) \le nC - \ln M + \ln \frac{1}{\eps} + c(T)\sqrt{\frac{n}{2}\ln \frac{1}{1-2\eps}}.
\end{align}
\end{theorem}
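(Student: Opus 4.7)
I would follow the blueprint of Polyanskiy and Verd\'u and reduce the claim to a refined strong-Fano type inequality, which is then obtained from Augustin's converse combined with a Hoeffding concentration for the one-letter information density. The starting point is the simple algebraic observation that \eqref{eq:PV_output_bound_1} is equivalent to
\[
\ln M + D(P^{(\cC)}_{Y^n}\|P^*_{Y^n}) \le nC + \ln(1/\eps) + c(T)\sqrt{\tfrac{n}{2}\ln\tfrac{1}{1-2\eps}}.
\]
Using the chain-rule identity $\expectation_{P^{(\cC)}_{X^n}}[D(X^n)] = I(X^n;Y^n) + D(P^{(\cC)}_{Y^n}\|P^*_{Y^n})$ together with the pointwise bound $D(x^n)\le nC$ from \eqref{eq:info_gain_bound}, the target inequality further reduces to the strong-Fano bound
\[
H(X^n|Y^n) \le \ln(1/\eps) + t, \qquad t \deq c(T)\sqrt{\tfrac{n}{2}\ln\tfrac{1}{1-2\eps}},
\]
and the body of the proof is devoted to establishing this refinement of Fano's inequality.

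To establish it, I would apply Augustin's inequality with $U=X^n$, $V=Y^n$, and auxiliary output distribution $Q_{V^n}=P^*_{Y^n}$, so that $\ln(\d P_{V|U=x^n}/\d Q_V)(Y^n)$ coincides with the information density $i^*(x^n;Y^n)=\sum_{k=1}^{n}\ln[P_{Y|X}(Y_k|x_k)/P^*_Y(Y_k)]$. The critical input is a one-sided concentration estimate for $i^*(x^n;Y^n)$ around its mean $D(x^n)$ under $Y^n\sim P_{Y^n|X^n=x^n}$: since the channel is memoryless and the ratios $P_{Y|X}(y|x)/P^*_Y(y)$ are bounded away from $0$ and $\infty$ by the no-zeroes assumption \eqref{eq:no_zeroes}, $i^*(x^n;Y^n)$ is a sum of $n$ independent summands each of oscillation at most $c(T)$, and Hoeffding's inequality gives
\[
\pr\bigl(i^*(x^n;Y^n)\ge D(x^n)+t \,\big|\, X^n=x^n\bigr) \le \exp\!\bigl(-2t^2/(nc(T)^2)\bigr).
\]
With the specific choice of $t$ above, the right-hand side equals $1-2\eps$, so the infimum appearing in Augustin's bound is at least $2\eps$ and the denominator is at least $\eps$.

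The plan is then to choose the threshold $\gamma(x^n)=D(x^n)-D(P^{(\cC)}_{Y^n}\|P^*_{Y^n})+t$ so that, after averaging over the code-induced input distribution $P^{(\cC)}_{X^n}$, the numerator in Augustin becomes $\exp(I(X^n;Y^n)+t)$ by the chain-rule identity above. This would immediately yield $\ln M \le I(X^n;Y^n)+t+\ln(1/\eps)$, i.e.\ exactly the strong Fano bound on $H(X^n|Y^n)$, and substituting back into $\expectation[D(X^n)]\le nC$ finishes the proof. The delicate step I expect to be the main obstacle is verifying Augustin's hypothesis for this \emph{shifted} threshold $\gamma$: Hoeffding controls only right-tail deviations of $i^*(x^n;Y^n)$ from $D(x^n)$, but once we subtract $D(P^{(\cC)}_{Y^n}\|P^*_{Y^n})$ the event $\{i^*(x^n;Y^n)<\gamma(x^n)\}$ acquires negative slack precisely in the regime $D(P^{(\cC)}_{Y^n}\|P^*_{Y^n})>t$ — which, unfortunately, is exactly the regime where the theorem is nontrivial. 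Handling this case will likely require either a case split against the trivial bound $D(P^{(\cC)}_{Y^n}\|P^*_{Y^n})\le nC$ or a slightly different choice of $\gamma$ combined with a direct lower bound on $\pr(i^*(x^n;Y^n)\le D(x^n)-s \mid X^n=x^n)$; resolving it is where I expect the bulk of the technical work to lie.
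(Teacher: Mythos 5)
Your reduction to Augustin's converse, and the idea of choosing the threshold $\gamma$ so that its $P^{(\cC)}_{X^n}$-average recovers $I(X^n;Y^n)$, are both sound, and you have correctly put your finger on the genuine obstacle: with the auxiliary output distribution set to $P^*_{Y^n}$, the shift by $D(P^{(\cC)}_{Y^n}\|P^*_{Y^n})$ in $\gamma$ destroys the one-sided Hoeffding control precisely when that divergence is large, which is the only regime in which the theorem says anything nontrivial. Neither a case split against $nC$ nor a modified $\gamma$ repairs this; the cure is to change the auxiliary output distribution, not the threshold. Take $Q_V = P^{(\cC)}_{Y^n}$ in Augustin's bound. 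The relevant log-likelihood ratio is then $h_{x^n}(y^n) = \ln\dfrac{\d P_{Y^n|X^n=x^n}}{\d P^{(\cC)}_{Y^n}}(y^n)$, whose conditional mean is $D(P_{Y^n|X^n=x^n}\|P^{(\cC)}_{Y^n})$, so one may set $\gamma(x^n) = D(P_{Y^n|X^n=x^n}\|P^{(\cC)}_{Y^n}) + t$ with \emph{positive} slack $t$ and no subtraction. The output divergence $D(P^{(\cC)}_{Y^n}\|P^*_{Y^n})$ then enters not through $\gamma$ but through the chain-rule decomposition applied \emph{after} Augustin's bound has given $\ln M \le \ln(1/\eps) + D(P_{Y^n|X^n}\|P^{(\cC)}_{Y^n}\,|\,P^{(\cC)}_{X^n}) + t$, namely
\[
D(P_{Y^n|X^n}\|P^{(\cC)}_{Y^n}\,|\,P^{(\cC)}_{X^n}) = D(P_{Y^n|X^n}\|P^*_{Y^n}\,|\,P^{(\cC)}_{X^n}) - D(P^{(\cC)}_{Y^n}\|P^*_{Y^n}) \le nC - D(P^{(\cC)}_{Y^n}\|P^*_{Y^n}),
\]
which upon rearrangement is exactly \eqref{eq:PV_output_bound_1}.

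The price of this switch is that $h_{x^n}$ is no longer a sum of independent terms, since $P^{(\cC)}_{Y^n}$ is not a product measure, so your Hoeffding step collapses. It must be replaced by McDiarmid's bounded-difference inequality. Verifying the bounded-difference hypothesis is the only new work: one shows that changing a single coordinate $y_i$ changes $\ln P^{(\cC)}_{Y^n}(y^n)$ by at most $c(T)/2$, using that $\bar{Y}^i \to X_i \to Y_i$ is a Markov chain for a memoryless channel used without feedback together with the no-zeros assumption \eqref{eq:no_zeroes}; combined with the trivial bound on the change in $\ln P_{Y^n|X^n=x^n}(y^n)$, this gives a per-coordinate oscillation of $c(T)$ for $h_{x^n}$, hence the same concentration rate $\exp(-2t^2/(nc^2(T)))$ that you obtained for $i^*(x^n;Y^n)$. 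That lemma is where the constant $c(T)$ in the theorem actually originates, and it is the piece you would need to supply to complete the proof along these lines.
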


\begin{remark} As it is shown in \cite{Polyanskiy_Verdu_good_codes}, the restriction to codes with deterministic encoders and to the maximal probability of error criterion is necessary both for this theorem and for the next one.
\end{remark}
\begin{proof} Fix an input sequence $x^n \in \cX^n$, and consider the function $h_{x^n} \colon \cY^n \to \reals$ defined by
	\begin{align*}
		h_{x^n}(y^n) &\deq \ln \frac{\d P_{Y^n|X^n=x^n}}{\d P^{(\cC)}_{Y^n}}(y^n), \quad \forall \, y^n \in \cY^n.
	\end{align*}
Then $\expectation[h_{x^n}(Y^n)|X^n=x^n] = D(P_{Y^n|X^n=x^n} \| P^{(\cC)}_{Y^n})$. Moreover, for every $i \in \{1,\ldots,n\}$, $y,y' \in \cY$, and $\overline{y}^i \in \cY^{n-1}$, we have (see the notation used in
\eqref{eq:f_i})
\begin{align}
	&\left|h_{i,x^n}(y|\overline{y}^i) - h_{i,x^n}(y'|\overline{y}^i)\right| \nonumber\\
	& \le \left|\ln P_{Y^n|X^n=x^n}(y^{i-1},y,y^n_{i+1}) - \ln  P_{Y^n|X^n=x^n}(y^{i-1},y',y^n_{i+1})\right| \nonumber\\
	&\qquad \qquad \qquad + \left|\ln P^{(\cC)}_{Y^n}(y^{i-1},y,y^n_{i+1}) - \ln P^{(\cC)}_{Y^n}(y^{i-1},y',y^n_{i+1})\right| \nonumber\\
	& \le \left| \ln \frac{P_{Y_i|X_i=x_i}(y)}{P_{Y_i|X_i=x_i}(y')}\right| + \left| \ln \frac{P^{(\cC)}_{Y_i|\overline{Y}^i}(y|\overline{y}^i)}{P^{(\cC)}_{Y_i|\overline{Y}^i}(y'|\overline{y}^i)} \right| \nonumber\\
	&\le 2\max_{x \in \cX}\max_{y,y' \in \cY} \left| \ln \frac{P_{Y|X}(y|x)}{P_{Y|X}(y'|x')}\right|
     \label{eq: see Appendix D}\\
	&= c(T) < \infty \label{eq:cond_info_bd}
\end{align}
(see Appendix~\ref{app:McDiarmid_details} for a detailed derivation of the inequality in \eqref{eq: see Appendix D}). Hence, for each fixed $x^n \in \cX^n$, the function $h_{x^n} \colon \cY^n \to \reals$ satisfies the bounded differences condition \eqref{eq:bounded_differences} with $c_1 = \ldots = c_n = c(T)$.  Theorem~\ref{thm:bounded_differences_revisited} therefore implies that, for every $r \ge 0$, we have
\begin{align}\label{eq:info_density_concentration}
&	P_{Y^n|X^n=x^n}\left( \ln \frac{\d P_{Y^n|X^n=x^n}}{\d P^{(\cC)}_{Y^n}}(Y^n) \ge D(P_{Y^n|X^n=x^n} \| P^{(\cC)}_{Y^n}) + r \right) \nonumber\\
& \qquad \qquad \qquad \qquad \qquad \le \exp\left( -\frac{2r^2}{nc^2(T)}\right).
\end{align}
(In fact, the above derivation goes through for every possible output distribution $P_{Y^n}$, not necessarily
one induced by a code.) This is where we have departed from the original proof by Polyanskiy and Verd\'u \cite{Polyanskiy_Verdu_good_codes}: we have used McDiarmid's (or bounded differences) inequality to control the deviation probability for the ``conditional'' information density $h_{x^n}$ directly, whereas they bounded the {\em variance} of $h_{x^n}$ using a suitable Poincar\'e inequality, and then derived a bound on the deviation probability using Chebyshev's inequality. As we will see shortly, the sharp concentration inequality \eqref{eq:info_density_concentration} allows us to explicitly identify the dependence of the constant multiplying $\sqrt{n}$ in \eqref{eq:PV_output_bound_1} on the channel $T$ and on the maximal error probability $\eps$.

We are now in a position to apply Augustin's strong converse. To that end, we let $\cU = \cX^n$, $\cV = \cY^n$,
and consider the DMC $S = T^n$ together with an $(n,M,\eps)$-code $(f,g) = (f_n,g_n)$. Furthermore, let
\begin{align}
\zeta_n = \zeta_n(\eps) \deq c(T)\sqrt{\frac{n}{2}\ln \frac{1}{1-2\eps}}  \label{eq: zeta}
\end{align}
and take $\gamma(x^n) = D(P_{Y^n|X^n=x^n}\|P^{(\cC)}_{Y^n}) + \zeta_n$.
Using \eqref{eq:Augustin} with the auxiliary distribution $Q_V = P^{(\cC)}_{Y^n}$, we get
\begin{equation} \label{eq: Applying Augustin's Theorem}
M \le \frac{\exp\bigl\{\expectation[\gamma(X^n)]\bigr\}}{\displaystyle\inf_{x^n \in \cX^n}
P_{Y^n|X^n=x^n}\left( \ln \frac{\d P_{Y^n|X^n=x^n}}{\d P_{Y^n}^{(\cC)}} < \gamma(x^n)\right) - \eps}
\end{equation}
where
\begin{equation} \label{eq: expected value of gamma}
\expectation[\gamma(X^n)] = D\bigl( P_{Y^n | X^n} \| P_{Y^n}^{(\cC)} \, | \, P^{(\cC)}_{X^n} \bigr) + \zeta_n.
\end{equation}
The concentration inequality in \eqref{eq:info_density_concentration} with $\zeta_n$
in \eqref{eq: zeta} therefore gives that, for every $x^n \in \cX^n$,
\begin{align*}
P_{Y^n|X^n=x^n}\left( \ln \frac{\d P_{Y^n|X^n=x^n}}{\d P_{Y^n}^{(\cC)}} \ge \gamma(x^n)\right) & \le
\exp\left(-\frac{2 \zeta_n^2}{n c^2(T)} \right) \nonumber \\
& = 1 - 2 \eps
\end{align*}
which implies that
$$\displaystyle\inf_{x^n \in \cX^n}
P_{Y^n|X^n=x^n}\left( \ln \frac{\d P_{Y^n|X^n=x^n}}{\d P_{Y^n}^{(\cC)}} < \gamma(x^n)\right) \ge 2\eps.$$
Hence, from \eqref{eq: Applying Augustin's Theorem}, \eqref{eq: expected value of gamma} and the
last inequality, it follows that
$$ M \le \frac{1}{\eps} \; \exp\left( D\bigl( P_{Y^n | X^n} \| P_{Y^n}^{(\cC)} \, | \, P^{(\cC)}_{X^n} \bigr) + \zeta_n \right)$$
so, by taking logarithms on both sides of the last inequality and rearranging terms, we get from
\eqref{eq: zeta} that
\begin{align}
	D(P_{Y^n|X^n} \| P^{(\cC)}_{Y^n} \, | \, P^{(\cC)}_{X^n}) &\ge \ln M + \ln \eps - \zeta_n \nonumber\\
	&= \ln M + \ln \eps - c(T)\sqrt{\frac{n}{2}\ln \frac{1}{1-2\eps}}.\label{eq:PV_output_bound_1_s0}
\end{align}
We are now ready to derive \eqref{eq:PV_output_bound_1}:
\begin{align}
	& D\big(P^{(\cC)}_{Y^n} \big\| P^*_{Y^n}\big) \nonumber \\
	& \qquad = D\big(P_{Y^n|X^n} \big\|P^*_{Y^n} \big| P^{(\cC)}_{X^n}\big) - D \big(P_{Y^n|X^n} \big\| P^{(\cC)}_{Y^n} \big| P^{(\cC)}_{X^n}\big) \label{eq:PV_output_bound_1_s1}\\
	& \qquad \le nC - \ln M + \ln \frac{1}{\eps} + c(T)\sqrt{\frac{n}{2}\ln \frac{1}{1-2\eps}} \label{eq:PV_output_bound_1_s2}
\end{align}
where \eqref{eq:PV_output_bound_1_s1} uses the chain rule for divergence, while
\eqref{eq:PV_output_bound_1_s2} uses \eqref{eq:info_gain_bound} and
\eqref{eq:PV_output_bound_1_s0}. This completes the proof of Theorem~\ref{thm:PV_output_bound_1}.
\end{proof}

\bigskip
For an arbitrary DMC $T$ with nonzero capacity and zeroes in its transition matrix, we have the following result
which forms a sharpened version of the bound in \cite[Theorem~6]{Polyanskiy_Verdu_good_codes}:

\begin{theorem}\label{thm:PV_output_bound_2} Let $T \colon \cX \to \cY$ be a DMC with $C > 0$. Then, for every $0 < \eps < 1$, every $(n,M,\eps)$-code $\cC$ for $T$ satisfies
	\begin{align*}
		D\big(P^{(\cC)}_{Y^n} \big\| P^*_{Y^n}\big) \le nC - \ln M + O\left(\sqrt{n} \, (\ln n)^{3/2}\right).
	\end{align*}
	More precisely, for every such code we have
	\begin{align}
		D\big(P^{(\cC)}_{Y^n} \big\| P^*_{Y^n}\big) &\le nC - \ln M \nonumber\\
		& \,\, + \sqrt{2n} \, (\ln n)^{3/2}  \left(1 + \sqrt{\frac{1}{\ln n} \,
		\ln \left(\frac{1}{1-\varepsilon}\right)} \right) \,
		\left(1 + \frac{\ln |\cY|}{\ln n} \right) \nonumber\\
		& \,\, + 3 \ln n + \ln \bigl(2 |\cX| |\cY|^2\bigr).
        \label{eq:PV_output_bound_2}
	\end{align}
\end{theorem}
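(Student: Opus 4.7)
The plan is to follow the same two-stage blueprint used for Theorem~\ref{thm:PV_output_bound_1}: invoke Augustin's strong converse with the auxiliary output distribution $Q_V = P^{(\cC)}_{Y^n}$ and a cutoff $\gamma(x^n) = D(P_{Y^n|X^n=x^n}\|P^{(\cC)}_{Y^n}) + \zeta_n$, where $\zeta_n$ is a deviation radius furnished by a concentration inequality applied to the conditional information density
\[
h_{x^n}(y^n) = \ln \frac{\d P_{Y^n|X^n=x^n}}{\d P^{(\cC)}_{Y^n}}(y^n).
\]
The whole difficulty is that, when $T$ has zeros, the constant $c(T)$ in \eqref{eq: c(T)} is infinite, so McDiarmid's inequality (Theorem~\ref{thm:bounded_differences_revisited}) cannot be applied directly to $h_{x^n}$ as was done in \eqref{eq:info_density_concentration}. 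The plan is to replace $h_{x^n}$ by a truncated variant that coincides with $h_{x^n}$ on a "typical" set of outputs and whose coordinate-wise differences are of order $\log n$.

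First I would fix $x^n$ and isolate the source of unboundedness. Since zeros of $T$ create no problem on the support of $P_{Y^n|X^n=x^n}$ (they contribute $-\infty$ only on events of zero measure under that law), the only way a single-coordinate modification can blow up is through the denominator $P^{(\cC)}_{Y_i|\bar Y^i}(y_i|\bar y^i)$. I would therefore introduce the "good" set
\[
F_{x^n} \deq \Bigl\{ y^n \in \cY^n : P^{(\cC)}_{Y_i|\bar Y^i}(y_i|\bar y^i) \geq \tfrac{1}{n \, |\cY|^2} \text{ for all } i \in \{1,\dots,n\} \Bigr\}
\]
and show, by a union bound over $i$ and $y \in \cY$ together with the fact that conditioning preserves total mass, that $P_{Y^n|X^n=x^n}(F_{x^n}^c) \le 1/n$. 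Next I would define the clipped information density $\tilde h_{x^n}$, obtained from $h_{x^n}$ by replacing each $P^{(\cC)}_{Y_i|\bar Y^i}(y_i|\bar y^i)$ by its maximum with $1/(n|\cY|^2)$ and each $P_{Y|X}(y_i|x_i)$ by its maximum with $1/|\cY|$. Repeating the bookkeeping that gave \eqref{eq: see Appendix D}, one checks that $\tilde h_{x^n}$ has bounded differences with constant $c_i = O(\ln n + \ln|\cY|)$, and that $\tilde h_{x^n} = h_{x^n}$ on $F_{x^n}$.

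With bounded differences in hand, Theorem~\ref{thm:bounded_differences_revisited} gives the concentration estimate
\[
P_{Y^n|X^n=x^n}\!\left( \tilde h_{x^n}(Y^n) \ge \expectation[\tilde h_{x^n}(Y^n)\,|\,X^n=x^n] + r \right) \le \exp\!\left( - \frac{2r^2}{n (c_1\ln n + c_2\ln|\cY|)^2} \right)
\]
for suitable absolute constants $c_1,c_2$. Choosing $r = \zeta_n$ of order $\sqrt{n}(\ln n)^{3/2}$ balances this exponential against $\ln(1/(1-\eps))$ so that the tail probability is at most $1-\eps - 1/n$; adding the $1/n$ price for $F_{x^n}^c$ and noting that $|\expectation[\tilde h_{x^n}] - D(P_{Y^n|X^n=x^n}\|P^{(\cC)}_{Y^n})| = O(\ln n + \ln|\cY|)$ (since the truncation only changes the integrand on a set of mass $\le 1/n$ by at most $O(\ln(n|\cY|))$), I would then feed everything into Augustin's inequality exactly as in the derivation of \eqref{eq:PV_output_bound_1_s0} and close with the chain rule and \eqref{eq:info_gain_bound}, exactly as in \eqref{eq:PV_output_bound_1_s1}--\eqref{eq:PV_output_bound_1_s2}.

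The main obstacle will be Step~1, namely verifying that the clipped density $\tilde h_{x^n}$ genuinely has coordinate differences of order $\ln(n|\cY|)$ and that the approximation error $|\expectation[\tilde h_{x^n}] - D(P_{Y^n|X^n=x^n}\|P^{(\cC)}_{Y^n})|$ is $O(\ln n)$. The subtlety is that modifying a single coordinate $y_i$ alters $\tilde h_{x^n}$ not only through the $i$-th summand, but also through the clipped conditional probabilities at coordinates $j > i$, whose conditioning vector $\bar y^j$ depends on $y_i$; carefully tracking this --- precisely as in the derivation of the inequality \eqref{eq: see Appendix D} --- is what produces the exponents $3\ln n$ and the additive constants $\ln(2|\cX||\cY|^2)$ in \eqref{eq:PV_output_bound_2}, as well as the overall $(\ln n)^{3/2}$ scaling. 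Everything else is a routine rerun of the proof of Theorem~\ref{thm:PV_output_bound_1}.
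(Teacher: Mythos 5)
Your strategy — truncate the conditional information density so that McDiarmid's inequality applies, then rerun the Augustin argument from Theorem~\ref{thm:PV_output_bound_1} — is genuinely different from the paper's proof, which sidesteps bounded differences entirely. The paper's route is: enlarge each decoding region $\widetilde D_i$ by a Hamming radius $n\delta_n$ with $\delta_n \sim \sqrt{\ln n/n}$, so that by the blowing-up Lemma~\ref{lm:my_blowup} the blown-up error probability drops below $1/n$; observe that the enlarged regions overlap, but each $y^n$ can lie in at most $\binom{n}{n\delta_n}|\cY|^{n\delta_n}$ of them, so a random subcode of size $N\approx M/\big(n\binom{n}{n\delta_n}|\cY|^{n\delta_n}\big)$ becomes a list code with subexponential list size; apply the list-decoding version of Fano's inequality to get a mutual-information bound; and close with the chain rule for divergence, the bound $D(P_{Y^n|X^n}\|P^*_{Y^n}|P_{X^n})\le nC$, and convexity. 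No concentration inequality for the information density is needed, which is exactly why the argument tolerates zeros in the transition matrix.

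The proposal, as written, has a genuine gap at the estimate $P_{Y^n|X^n=x^n}(F^c_{x^n})\le 1/n$, which you justify only by a union bound over $i$ and $y\in\cY$ ``together with the fact that conditioning preserves total mass.'' That union bound controls $F^c_{x^n}$ under the \emph{code output distribution} $P^{(\cC)}_{Y^n}$: for each $i$ and $\eps>0$, the $P^{(\cC)}_{Y^n}$-probability that $P^{(\cC)}_{Y_i|\bar Y^i}(Y_i|\bar Y^i)<\eps$ is at most $|\cY|\eps$. But Augustin's converse requires a tail bound under the \emph{conditional law} $P_{Y^n|X^n=x^n}$ (see the infimum in \eqref{eq:Augustin}), and you also need $P_{Y^n|X^n=x^n}(F^c_{x^n})$ to be small in order to (i) transfer the tail event from $\tilde h_{x^n}$ to $h_{x^n}$ and (ii) control the bias $|\expectation[\tilde h_{x^n}]-D(P_{Y^n|X^n=x^n}\|P^{(\cC)}_{Y^n})|$. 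There is no mechanism in the argument to move from a probability estimate under $P^{(\cC)}_{Y^n}$ to one under $P_{Y^n|X^n=x^n}$. Indeed, the two measures are far apart by design: $D(P_{Y^n|X^n=x^n}\|P^{(\cC)}_{Y^n})$ is of order $nC$, so the likelihood ratio can be exponential in $n$, and the change-of-measure bound $P(A)\le \big(D(P\|Q)+\ln 2\big)/\ln\big(1/Q(A)\big)$ becomes useful only if $Q(A)$ is exponentially small, i.e., only if the truncation threshold $\eps$ is exponentially small. But then the clipped bounded-differences constant $\ln(1/\eps)$ grows linearly in $n$ and McDiarmid's inequality produces a vacuous bound. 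A secondary point: even granting the missing step, McDiarmid with coordinate-wise differences $O(\ln n)$ would produce a deviation radius $\zeta_n=O(\sqrt{n}\,\ln n)$, not the $O\big(\sqrt{n}\,(\ln n)^{3/2}\big)$ term appearing in \eqref{eq:PV_output_bound_2}, so the final bookkeeping does not line up either.
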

\begin{proof} Given an $(n,M,\eps)$-code $\cC = (f_n,g_n)$, let $c_1,\ldots,c_M \in \cX^n$ be its codewords, and let $\widetilde{D}_1,\ldots,\widetilde{D}_M \subset \cY^n$ be the corresponding decoding regions:
	$$
	\widetilde{D}_i = g^{-1}_n(i) \equiv \left\{ y^n \in \cY^n \colon g_n(y^n) = i \right\},
    \qquad i = 1,\ldots,M.
	$$
If we choose
\begin{align}\label{eq:blow_up_radius}
	\delta_n &= \delta_n(\eps) = \frac{1}{n} \, \left\lceil n \left(\sqrt{\frac{\ln n}{2n}} + \sqrt{\frac{1}{2n} \ln \frac{1}{1-\eps}} \right) \right\rceil
\end{align}
(note that $n\delta_n$ is an integer) then, by Lemma~\ref{lm:my_blowup}, the ``blown-up'' decoding regions $D_i \deq \left[\widetilde{D}_i\right]_{n\delta_n}$ satisfy
\begin{align}
	P_{Y^n|X^n=c_i} \left(D^c_i\right) &\le \exp\left[ -2n\left(\delta_n - \sqrt{\frac{1}{2n}\ln \frac{1}{1-\eps}}\right)^2\right] \nonumber\\
	&\le \frac{1}{n} \, , \qquad \qquad \forall \, i \in \{1,\ldots,M\} \label{eq:blown_up_decoding_error}
\end{align}
where the last inequality holds since, from \eqref{eq:blow_up_radius},
$$\delta_n \ge \sqrt{\frac{\ln n}{2n}} + \sqrt{\frac{1}{2n} \ln \frac{1}{1-\eps}}.$$
We now complete the proof by a random coding argument. For
\begin{align}\label{eq:subcode_size}
	N \deq \left\lceil \frac{M}{n {n \choose n\delta_n} |\cY|^{n\delta_n}} \right\rceil,
\end{align}
let $U_1,\ldots,U_{N}$ be independent random variables, each uniformly distributed on the set $\{1,\ldots,M\}$. For each realization $V =U^N$, let $P_{X^n(V)} \in \cP(\cX^n)$ denote the induced distribution of $X^n(V) = f_n(c_J)$, where $J$ is uniformly distributed on the set $\{{U_1},\ldots,{U_N}\}$, and let $P_{Y^n(V)}$ denote the corresponding output distribution of $Y^n(V)$:
\begin{align}
	P_{Y^n(V)} &= \frac{1}{N}\sum^N_{i=1}P_{Y^n|X^n = c_{U_i}}.
\end{align}
It is easy to show that $\expectation\left[P_{Y^n(V)}\right] = P^{(\cC)}_{Y^n}$, the output distribution of the original code $\cC$, where the expectation is with respect to the distribution of $V = U^N$. Now, for $V = U^N$ and for
every $y^n \in \cY^n$, let $\cN_V(y^n)$ denote the list of all those indices in $\{U_1,\ldots,U_N\}$ such that
$y^n \in D_{U_j}$:
$$
\cN_V(y^n) \deq \left\{ j \colon y^n \in D_{U_j}\right\}.
$$
Consider the list decoder $Y^n \mapsto \cN_V(Y^n)$, and let $\eps(V)$ denote its conditional decoding
error probability: $\eps(V) \deq P(J \not\in \cN_V(Y^n)|V)$. From \eqref{eq:subcode_size}, it follows that
\begin{align}
	\ln N &\ge \ln M - \ln n - \ln {n \choose n \delta_n} - n\delta_n \ln |\cY| \nonumber\\
	&\ge \ln M - \ln n - n\delta_n \left(\ln n  + \ln |\cY|\right) \label{eq:subcode_size_1}
\end{align}
where the last inequality uses the simple inequality ${n \choose k} \le n^k$ for $k \le n$ with $k \triangleq
n \delta_n$ (we note that the gain in using instead the inequality ${n \choose n \delta_n} \le
\exp\bigl(n \, h(\delta_n)\bigr)$ is marginal, and it does not have any advantage asymptotically for large $n$).
Moreover, each $y^n \in \cY^n$ can belong to at most ${n \choose n\delta_n}|\cY|^{n\delta_n}$ blown-up decoding sets, so
\begin{align}
	\ln \left|\cN_V(Y^n = y^n)\right| &\le \ln {n \choose n\delta_n} + n\delta_n \ln |\cY|\nonumber\\
	&\le n \delta_n \left(\ln n +  \ln |\cY|\right), \quad \forall \, y^n \in \mathcal{Y}^n.
	\label{eq:list_size}
\end{align}
Now, for each realization of $V$, we have
\begin{align}
	& D\big(P_{Y^n(V)} \big\| P^*_{Y^n}\big) \nonumber\\
	& = D \big(P_{Y^n(V)|X^n(V)} \big\| P^*_{Y^n} \big| P_{X^n(V)}\big) - I(X^n(V); Y^n(V))
        \label{eq:PV_output_bound_2_s1} \\
	& \le n C - I(X^n(V); Y^n(V)) \label{eq:PV_output_bound_2_s2}\\
	& \le n C - I(J; Y^n(V)) \label{eq:PV_output_bound_2_s3} \\
	& = n C - H(J) + H(J|Y^n(V)) \nonumber \\
	& \le n C - \ln N + (1-\eps(V)) \, \max_{y^n \in \cY^n}\ln |\cN_V(y^n)| + n\eps(V) \ln |\cX| + \ln 2  \label{eq:PV_output_bound_2_s4}
\end{align}
where:
\begin{itemize}
	\item \eqref{eq:PV_output_bound_2_s1} is by the chain rule for divergence;
	\item \eqref{eq:PV_output_bound_2_s2} is by \eqref{eq:info_gain_bound};
	\item \eqref{eq:PV_output_bound_2_s3} is by the data processing inequality, and the fact that $J \to X^n(V) \to Y^n(V)$ is a Markov chain; and
	\item \eqref{eq:PV_output_bound_2_s4} is by Fano's inequality for list decoding (see Appendix~\ref{app:Fano_list}), and also since (i) $N \le |\cX|^n$, (ii) $J$ is
uniformly distributed on $\{U_1, \ldots, U_N\}$, so $H(J | U_1, \ldots, U_N) = \ln N$ and
$H(J) \ge \ln N$.
\end{itemize}
(Note that all the quantities indexed by $V$ in the above chain of estimates are actually random variables, since they depend on the realization $V = U^N$.)
Substituting \eqref{eq:subcode_size_1} and \eqref{eq:list_size} into \eqref{eq:PV_output_bound_2_s4}, we get
\begin{align}
D\big(P_{Y^n(V)} \big\| P^*_{Y^n}\big) &\le n C - \ln M \nonumber\\
& \qquad + \ln n + 2n\delta_n \left(\ln n + \ln |\cY|\right) \nonumber\\
& \qquad + n \eps(V)\ln |\cX| + \ln 2. \label{eq:PV_output_bound_2_s5}
\end{align}
Using the fact that $\expectation\left[P_{Y^n(V)}\right] = P^{(\cC)}_{Y^n}$, it follows from the convexity of the relative entropy
and Jensen's inequality that
$$\expectation\left[D\big(P_{Y^n(V)} \big\| P^*_{Y^n}\big)\right] \ge D\big(P_{Y^n}^{(\cC)} \big\| P^*_{Y^n}\big)$$
and, by taking expectations on both sides of \eqref{eq:PV_output_bound_2_s5}, we get
\begin{align}
D\big(P_{Y^n}^{(\cC)} \big\| P^*_{Y^n}\big)
&\le n C - \ln M \nonumber\\
& \qquad + \ln n + 2n\delta_n \left(\ln n + \ln |\cY|\right) \nonumber\\
& \qquad + n \, \expectation\left[\eps(V)\right] \, \ln |\cX| + \ln 2. \label{eq:PV_output_bound_2_s6}
\end{align}
To get \eqref{eq:PV_output_bound_2}, we use the fact that
$$
\expectation\left[\eps(V)\right] \le \max_{1 \le i \le M} P_{Y^n|X^n=c_i} \left(D^c_i\right) \le \frac{1}{n},
$$
which follows from \eqref{eq:blown_up_decoding_error}, as well as the substitution of \eqref{eq:blow_up_radius}
in \eqref{eq:PV_output_bound_2_s6}; note that, from \eqref{eq:blow_up_radius}, it follows that
$$\delta_n < \sqrt{\frac{\ln n}{2n}} + \sqrt{\frac{1}{2n} \ln \frac{1}{1-\eps}} + \frac{1}{n}.$$
This completes the proof of Theorem~\ref{thm:PV_output_bound_2}.
\end{proof}

\bigskip
We are now ready to examine some consequences of Theorems~\ref{thm:PV_output_bound_1} and \ref{thm:PV_output_bound_2}. To start with, consider a sequence $\{\cC_n\}^\infty_{n=1}$, where each code $\cC_n = (f_n,g_n)$ is an $(n,M_n,\eps)$-code for a DMC $T \colon \cX \to \cY$ with $C > 0$. We say that $\{\cC_n\}^\infty_{n=1}$ is {\em capacity-achieving} if
\begin{align}
	\lim_{n \to \infty}\frac{1}{n} \, \ln M_n = C.
\end{align}
Then, from Theorems~\ref{thm:PV_output_bound_1} and \ref{thm:PV_output_bound_2}, it follows that every such sequence satisfies
\begin{align}\label{eq:good_codes_converge}
	\lim_{n \to \infty} \frac{1}{n} \, D\big(P^{(\cC_n)}_{Y^n} \big\| P^*_{Y^n}\big) = 0.
\end{align}
Moreover, as shown in \cite{Polyanskiy_Verdu_good_codes}, if the restriction to either deterministic encoding maps or to the maximal probability of error criterion is lifted, then the convergence in \eqref{eq:good_codes_converge} may no longer hold. This is in sharp contrast to \cite[Theorem~2]{Shamai_Verdu_empirical}, which states that \eqref{eq:good_codes_converge} holds for {\em every} capacity-achieving sequence of codes with vanishing probability of error (maximal or average).

Another remarkable fact that follows from the above theorems is that a broad class of functions evaluated on the output of a good code concentrate sharply around their expectations with respect to the capacity-achieving output distribution. Specifically, we have the following version of \cite[Proposition~10]{Polyanskiy_Verdu_good_codes} (again, we have streamlined the statement and the proof a bit to relate them to earlier material in this chapter):

\begin{theorem}\label{thm:good_code_con} Let $T \colon \cX \to \cY$ be a DMC with $C > 0$ and $C_1 < \infty$ (see \eqref{eq:no_zeroes}).
Let $d \colon \cY^n \times \cY^n \to \reals_+$ be a metric, and suppose that there exists a constant $c > 0$, such that the conditional probability distributions $P_{Y^n|X^n=x^n}$, $x^n \in \cX^n$, as well as $P^*_{Y^n}$ satisfy ${\rm T}_1(c)$ on the metric space $(\cY^n,d)$. Then, for every $\eps \in (0,1)$, there exists a constant $a > 0$ that depends only on $T$ and on $\eps$ (to be defined explicitly in the following), such that for every $(n,M,\eps)$-code $\cC$ for $T$ and every Lipschitz function $f \colon \cY^n \to \reals$ with respect to the metric $d$
\begin{align}\label{eq:good_code_con}
&	P^{(\cC)}_{Y^n}\Big( \left|f(Y^n) - \expectation[f(Y^{*n})]\right| \ge r\Big) \nonumber\\
&\qquad \le \frac{4}{\eps} \cdot \exp\left( nC - \ln M + a\sqrt{n} - \frac{r^2}{8c\| f \|^2_{\rm Lip}} \right), \quad \forall \, r>0
\end{align}
where $\expectation[f(Y^{*n})]$ designates the expected value of $f(Y^n)$ with respect to the capacity-achieving output distribution $P^*_{Y^n}$,
\begin{align*}
	\| f \|_{\rm Lip} \deq \sup_{y^n \neq v^n} \frac{|f(y^n) - f(v^n)|}{d(y^n,v^n)}
\end{align*}
is the Lipschitz constant of $f$, and
\begin{align}  \label{eq: explicit constant a}
    a \deq c(T)\sqrt{\frac{1}{2}\ln \frac{1}{1-2\eps}}
\end{align}
with $c(T)$ in \eqref{eq: c(T)}.
\end{theorem}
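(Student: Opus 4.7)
My plan is to combine three ingredients already developed in the chapter: Gaussian concentration of Lipschitz functions under measures satisfying $T_1(c)$ via the Bobkov--G\"otze characterization (Theorem~\ref{thm:Bobkov_Gotze}); the divergence estimate $D\bigl(P^{(\cC)}_{Y^n}\,\big\|\,P^*_{Y^n}\bigr) \le nC - \ln M + \ln(1/\eps) + a\sqrt{n}$ coming out of the proof of Theorem~\ref{thm:PV_output_bound_1}; and a triangle-inequality-plus-change-of-measure argument that transfers concentration from each conditional distribution $P_{Y^n|X^n=c_j}$ to the induced mixture $P^{(\cC)}_{Y^n} = M^{-1}\sum_{j=1}^M P_{Y^n|X^n=c_j}$.

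First I would apply Corollary~\ref{cor:Lipschitz_Gauss_con_revisited} to each $P_{Y^n|X^n=c_j}$ (which satisfies $T_1(c)$ by hypothesis), obtaining
$$P_{Y^n|X^n=c_j}\bigl(|f(Y^n) - \mu_j| \ge r/2\bigr) \le 2\exp\bigl(-r^2/(8c\|f\|_{\rm Lip}^2)\bigr),$$
where $\mu_j \deq \expectation_{Y^n \sim P_{Y^n|X^n=c_j}}[f(Y^n)]$; the factor $1/2$ in $r/2$ is chosen to accommodate the triangle inequality used next. For the event $A_r = \{|f(Y^n) - \mu^*| \ge r\}$ I would observe that $|f - \mu^*| \ge r$ forces either $|f - \mu_j| \ge r/2$ or $|\mu_j - \mu^*| \ge r/2$, so averaging over $j$ yields
$$P^{(\cC)}_{Y^n}(A_r) \le 2\exp\bigl(-r^2/(8c\|f\|_{\rm Lip}^2)\bigr) + \frac{1}{M}\bigl|\{j:|\mu_j - \mu^*|\ge r/2\}\bigr|.$$
To bound the second term I would use Kantorovich--Rubinstein~\eqref{eq:Kantorovich_Rubinstein} together with the $T_1(c)$ property of $P^*_{Y^n}$ to estimate $|\mu_j - \mu^*| \le \|f\|_{\rm Lip}\sqrt{2c\,D(P_{Y^n|X^n=c_j}\,\|\,P^*_{Y^n})}$, so that the condition $|\mu_j - \mu^*|\ge r/2$ translates to a lower bound on the information divergence of the $j$th codeword's output distribution from $P^*_{Y^n}$. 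I would then control the number of such atypical codewords by reusing the same machinery that drives Theorem~\ref{thm:PV_output_bound_1}: McDiarmid-type concentration of the information density $\ln(dP_{Y^n|X^n=c_j}/dP^*_{Y^n})$ around $D(P_{Y^n|X^n=c_j}\|P^*_{Y^n})$ as in~\eqref{eq:info_density_concentration}, combined with the Augustin strong-converse bound~\eqref{eq:Augustin} applied to the auxiliary output distribution $P^*_{Y^n}$. Balancing the threshold as in the proof of Theorem~\ref{thm:PV_output_bound_1} produces a bound of the form $M^{-1}|\{j:\text{bad}\}| \le (2/\eps)\exp(nC - \ln M + a\sqrt{n} - r^2/(8c\|f\|_{\rm Lip}^2))$, which, added to the first term, yields the claimed $4/\eps$ prefactor.

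The principal obstacle is the step controlling the bad-codeword fraction. A naive Markov bound applied to the average divergence $M^{-1}\sum_j D(P_{Y^n|X^n=c_j}\|P^*_{Y^n}) = I(X^n;Y^n) + D(P^{(\cC)}_{Y^n}\|P^*_{Y^n}) \le nC$ would only give a bound of order $n/r^2$ on this fraction, too weak for a pure exponential in $r^2$. Recovering the exponential decay forces one to reuse the full McDiarmid-plus-Augustin concentration of the information density rather than the one-line Markov argument; the additive $a\sqrt{n}$ correction and the $1/\eps$ factor in the final exponent both trace back to this McDiarmid step, exactly as they did in the derivation of Theorem~\ref{thm:PV_output_bound_1}. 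A secondary technical point is that the $T_1(c)$ hypothesis is only assumed for $P^*_{Y^n}$ and the conditionals $P_{Y^n|X^n=x^n}$ (not for the mixture $P^{(\cC)}_{Y^n}$), which is precisely why the argument must pass through the conditional distributions and the triangle inequality rather than attempt a direct Bobkov--G\"otze argument on $P^{(\cC)}_{Y^n}$.
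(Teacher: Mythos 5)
Your proposal correctly identifies the core ingredients — Bobkov--G\"otze for conditional concentration, the triangle-inequality split $\{|f-\mu^*|\ge r\}\subseteq\{|f-\phi(X^n)|\ge r/2\}\cup\{|\phi(X^n)-\mu^*|\ge r/2\}$, the Kantorovich--Rubinstein formula with the ${\rm T}_1(c)$ property of $P^*_{Y^n}$, and Theorem~\ref{thm:PV_output_bound_1} — and your handling of the first term is exactly the paper's. The gap is in how you propose to control the second term, the fraction of codewords with $|\phi(c_j)-\mu^*|\ge r/2$.

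The per-codeword bound $|\phi(c_j)-\mu^*_f|\le\|f\|_{\rm Lip}\sqrt{2c\,D(P_{Y^n|X^n=c_j}\|P^*_{Y^n})}$ is valid but essentially useless here, because for any good code the per-codeword divergences $D(P_{Y^n|X^n=c_j}\|P^*_{Y^n})$ are all of order $nC$ (this is precisely the content of \eqref{eq:info_gain_bound}, nearly tight for typical codewords). The implication ``$|\phi(c_j)-\mu^*|\ge r/2 \Rightarrow D_j\ge r^2/(8c\|f\|^2_{\rm Lip})$'' therefore does not exclude any codewords unless $r$ is of order $\sqrt{n}$, and a counting argument on the $D_j$'s cannot recover the claimed exponential-in-$r^2$ decay. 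You recognize that Markov on $\sum_j D_j\le nMC$ is too weak, but your proposed replacement — running McDiarmid on the information density $\ln(\d P_{Y^n|X^n=c_j}/\d P^*_{Y^n})$ plus Augustin with auxiliary $P^*_{Y^n}$ — does not produce a bound on the number of such atypical codewords either. In the proof of Theorem~\ref{thm:PV_output_bound_1}, McDiarmid controls fluctuations of $\ln(\d P_{Y^n|X^n=x^n}/\d P^{(\cC)}_{Y^n})$ (with the \emph{mixed} output as reference) and Augustin yields a lower bound on $\ln M$; together they bound $D(P^{(\cC)}_{Y^n}\|P^*_{Y^n})$, a single number attached to the whole code, not a count of bad codewords.

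The idea you are missing is the \emph{subcode} construction. Collect all codewords $x^n$ with $\phi(x^n)\ge\mu^*_f+r$ into a subcode $\cC'$ of size $M'$; since $\cC'$ is still an $(n,M',\eps)$-code, Theorem~\ref{thm:PV_output_bound_1} gives $D(P^{(\cC')}_{Y^n}\|P^*_{Y^n})\le nC-\ln M'+a\sqrt n+\ln(1/\eps)$. Now apply Kantorovich--Rubinstein not per codeword but to the \emph{mixed} output $Q=P^{(\cC')}_{Y^n}$: since $\expectation_Q[f]=\frac{1}{M'}\sum_{x^n\in\cC'}\phi(x^n)\ge\mu^*_f+r$ by construction, one gets
\begin{align*}
r\;\le\;\expectation_Q[f]-\mu^*_f\;\le\;\|f\|_{\rm Lip}\sqrt{2c\bigl(nC-\ln M'+a\sqrt n+\ln(1/\eps)\bigr)},
\end{align*}
which upper-bounds $M'=M\cdot P^{(\cC)}_{X^n}(\phi(X^n)\ge\mu^*_f+r)$ exponentially in $r^2$. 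The key is that the divergence of the subcode's \emph{output} distribution $P^{(\cC')}_{Y^n}$ can be made small (of order $\sqrt n$) when $M'$ is large, even though every individual $D(P_{Y^n|X^n=c_j}\|P^*_{Y^n})$ is of order $nC$ --- it is the act of mixing over many codewords that makes the Kantorovich--Rubinstein bound bite. Repeating with $-f$ gives the two-sided bound on $\phi(X^n)$, which combines with your (correct) Bobkov--G\"otze bound on the conditional fluctuation of $f$ to give \eqref{eq:good_code_con}.
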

\begin{remark} Our sharpening of the corresponding result from \cite[Proposition~10]{Polyanskiy_Verdu_good_codes} consists mainly in identifying an explicit form for the constant in front of $\sqrt{n}$ in the bound \eqref{eq:good_code_con}; this provides a closed-form expression for the concentration of measure inequality.
\end{remark}

\begin{proof} For an arbitrary $f$, define
	\begin{equation} \label{eq: definitions for the proof of CAOD theorem}
		\mu^*_f \deq \expectation[f(Y^{*n})], \quad \quad
        \phi(x^n) \deq \expectation[f(Y^n)|X^n=x^n], \; \forall \, x^n \in \cX^n.
	\end{equation}
Since each $P_{Y^n|X^n=x^n}$ satisfies ${\rm T}_1(c)$, by Corollary~\ref{cor:Lipschitz_Gauss_con_revisited} we have
	\begin{align}\label{eq:channel_concentration}
		\pr\Big( \big|f(Y^n) - \phi(x^n)\big| \ge r \Big| \, X^n=x^n\Big) \le 2 \exp\left( - \frac{r^2}{2c \| f \|^2_{\rm Lip}}\right), \,\, \forall \, r \ge 0.
	\end{align}

Now, given $\cC$, consider a subcode $\cC'$ with codewords $x^n \in \cX^n$ satisfying
$\phi(x^n) \ge \mu^*_f + r$ for $r \ge 0$. The number of codewords $M'$ of $\cC'$ satisfies
\begin{align}\label{eq:good_subcode_size}
	M' &= M P^{(\cC)}_{X^n}\left(\phi(X^n) \ge \mu^*_f + r \right).
\end{align}
Let $Q = P^{(\cC')}_{Y^n}$ be the output distribution induced by $\cC'$. Then
\begin{align}
	\mu^*_f + r &\le \frac{1}{M'} \sum_{x^n \in \,{\rm codewords}(\cC')}\phi(x^n) \label{eq:good_code_con_1}\\
	&= \expectation_{Q}[f(Y^n)] \label{eq:good_code_con_2} \\
	&\le \expectation[f(Y^{*n})] + \| f \|_{\rm Lip}\sqrt{2c D(Q_{Y^n} \| P^*_{Y^n})} \label{eq:good_code_con_3}\\
	&\le \mu^*_f + \| f \|_{\rm Lip} \sqrt{2c \left(nC - \ln M' + a\sqrt{n} + \ln \frac{1}{\eps} \right)},
     \label{eq:good_code_con_4}
\end{align}
where:
\begin{itemize}
       \item \eqref{eq:good_code_con_1} is by definition of $\cC'$;
       \item \eqref{eq:good_code_con_2} is by definition of $\phi$ in
             \eqref{eq: definitions for the proof of CAOD theorem};
       \item \eqref{eq:good_code_con_3} follows from the fact that $P^*_{Y^n}$ satisfies ${\rm T}_1(c)$ and from the Kantorovich--Rubinstein formula \eqref{eq:Kantorovich_Rubinstein}; and
       \item \eqref{eq:good_code_con_4} holds for the constant $a = a(T,\eps) > 0$ in
             \eqref{eq: explicit constant a} due to Theorem~\ref{thm:PV_output_bound_1} (see
             \eqref{eq:PV_output_bound_1}) and because $\cC'$ is an $(n,M',\eps)$-code for $T$.
             The constant $\mu^*_f $ in \eqref{eq:good_code_con_4} is defined in \eqref{eq: definitions for the proof of CAOD theorem}.
\end{itemize}
From this and \eqref{eq:good_subcode_size}, we get
$$ r \le \| f \|_{\rm Lip} \sqrt{2c \left(nC - \ln M - \ln P^{(\cC)}_{X^n}\left(\phi(X^n) \ge \mu^*_f + r \right)
+ a\sqrt{n} + \ln \frac{1}{\eps} \right)}$$
so, it follows that
\begin{align*}
P^{(\cC)}_{X^n}\Big( \phi(X^n) \ge \mu^*_f + r \Big) \le \exp\left(nC - \ln M + a \sqrt{n} + \ln \frac{1}{\eps}
- \frac{r^2}{2c \| f \|^2_{\rm Lip}} \right).
\end{align*}
Following the same line of reasoning with $-f$ instead of $f$, we conclude that
\begin{align}\label{eq:good_code_con_a0}
	& P^{(\cC)}_{X^n}\Big( \big|\phi(X^n) - \mu^*_f\big| \ge r \Big) \nonumber\\
	& \qquad \le 2\exp\left(nC - \ln M + a \sqrt{n}
    + \ln \frac{1}{\eps} - \frac{r^2}{2c \| f \|^2_{\rm Lip}} \right).
\end{align}
Finally, for every $r \ge 0$,
\begin{align}
	& P^{(\cC)}_{Y^n}\Big( \big|f(Y^n) - \mu^*_f\big| \ge r \Big) \nonumber\\
	&\le P^{(\cC)}_{X^n,Y^n}\Big( \left| f(Y^n) - \phi(X^n)\right| \ge r/2\Big) + P^{(\cC)}_{X^n}\Big(\left|\phi(X^n) - \mu^*_f \right| \ge r/2 \Big) \nonumber\\
	&\le 2 \exp \left( - \frac{r^2}{8c \| f \|^2_{\rm Lip}}\right) + 2\exp\left(nC - \ln M + a \sqrt{n}
     +\ln \frac{1}{\eps} - \frac{r^2}{8c \| f \|^2_{\rm Lip}} \right) \label{eq:good_code_con_a1} \\
	&\le 4 \exp\left(nC - \ln M + a \sqrt{n} + \ln \frac{1}{\eps} - \frac{r^2}{8c \|f \|^2_{\rm Lip}} \right), \label{eq:good_code_con_a2}
\end{align}
where \eqref{eq:good_code_con_a1} is by \eqref{eq:channel_concentration} and \eqref{eq:good_code_con_a0}, while \eqref{eq:good_code_con_a2} follows from the fact that
$$
nC - \ln M + a\sqrt{n} + \ln \frac{1}{\eps} \ge D(P^{(\cC)}_{Y^n} \| P^*_{Y^n}) \ge 0
$$
by Theorem~\ref{thm:PV_output_bound_1}, and from \eqref{eq: explicit constant a}. This proves \eqref{eq:good_code_con}.
\end{proof}

As an illustration, let us consider $\cY^n$ with the Hamming metric
\begin{align}\label{eq:product_Hamming}
	d_n(y^n,v^n) = \sum^n_{i=1} 1_{\{y_i \neq v_i\}}.
\end{align}
Then, every function $f \colon \cY^n \to \reals$ of the form
\begin{align}\label{eq:ave_of_Lipschitz}
	f(y^n) = \frac{1}{n} \sum^n_{i=1}f_i(y_i), \qquad \forall \, y^n \in \cY^n
\end{align}
where $f_1,\ldots,f_n \colon \cY \to \reals$ are Lipschitz functions on $\cY$, will satisfy
\begin{align*}
	\| f \|_{\rm Lip} &\le \frac{L}{n}, \qquad L \deq \max_{1 \le i \le n} \| f_i \|_{\rm Lip}.
\end{align*}
 Every probability distribution $P$ on $\cY$ equipped with the Hamming metric satisfies ${\rm T}_1(1/4)$ (this is simply Pinsker's inequality); by Proposition~\ref{prop:TC_tensorization_2}, every product probability distribution on $\cY^n$ satisfies ${\rm T}_1(n/4)$ with respect to the product metric \eqref{eq:product_Hamming}. Consequently, for every $(n,M,\eps)$-code for $T$ and every function $f \colon \cY^n \to \reals$ of the form \eqref{eq:ave_of_Lipschitz}, Theorem~\ref{thm:good_code_con} gives the concentration inequality
	\begin{align}\label{eq:good_code_Lip_con}
	& P^{(\cC)}_{Y^n}\Big( \big|f(Y^n) - \expectation[f(Y^{*n})]\big| \ge r\Big) \nonumber\\
	&\qquad \le \frac{4}{\eps} \; \exp\left( nC - \ln M + a\sqrt{n} - \frac{nr^2}{2L^2}
    \right)
	\end{align}
for all $r > 0$. Concentration inequalities like \eqref{eq:good_code_con}, or its more specialized version \eqref{eq:good_code_Lip_con}, can be very useful for assessing various performance characteristics of good channel codes without having to explicitly construct such codes: all one needs to do is to find the capacity-achieving output distribution $P^*_Y$ and evaluate $\expectation[f(Y^{*n})]$ for an arbitrary $f$ of interest. Then, Theorem~\ref{thm:good_code_con} guarantees that $f(Y^n)$ concentrates tightly around $\expectation[f(Y^{*n})]$, which is relatively easy to compute since $P^*_{Y^n}$ is a product distribution.

The bounds presented in Theorems~\ref{thm:PV_output_bound_1} and~\ref{thm:PV_output_bound_2} quantify the trade-offs
between the minimal blocklength required for achieving a certain gap (in rate) to capacity with
a fixed block error probability, and normalized divergence between the {\em output distribution} induced by the code
and the (unique) capacity-achieving output distribution of the channel. Moreover, these bounds  sharpen the asymptotic $O(\cdot)$ terms in the results of \cite{Polyanskiy_Verdu_good_codes} for
all finite blocklengths $n$.

These results are similar in spirit to a lower bound on the rate loss with respect to fully random
block codes (with a binomial distribution) in terms of the normalized divergence between the distance
spectrum of a code and the binomial distribution. Specifically, a combination of
\cite[Eqs.~(A17) and (A19)]{Shamai_Sason_2001} provides a lower bound on the rate loss with respect
to fully random block codes in terms of the normalized divergence between the distance spectrum of the
code and the binomial distribution where the latter result refers to the empirical {\em input distribution}
of good codes.

\subsection{An information-theoretic converse for concentration of measure}

If we were to summarize the main idea behind concentration of measure, it would be this: if a subset of a metric probability space does not have a ``too small'' probability mass, then its isoperimetric enlargements (or blowups) will eventually take up most of the probability mass. On the other hand, it makes sense to ask whether a converse of this statement is true --- given a set whose blowups eventually take up most of the probability mass, how small can this set be? This question was answered precisely by Kontoyiannis \cite{Kontoyiannis_sphere_covering} using information-theoretic techniques.

The following setting is considered in \cite{Kontoyiannis_sphere_covering}: Let $\cX$ be a finite set, together with a nonnegative distortion function $d \colon \cX \times \cX \to \reals^+$ (which is not necessarily a metric) and a strictly positive mass function $M \colon \cX \to (0,\infty)$ (which is not necessarily normalized to one). As before, let us extend the ``single-letter'' distortion $d$ to $d_n \colon \cX^n \to \reals^+$, $n \in \naturals$, where
\begin{align*}
	d_n(x^n,y^n) \deq \sum^n_{i=1} d(x_i,y_i), \qquad \forall \, x^n,y^n \in \cX^n.
\end{align*}
For every $n \in \naturals$ and for every set $C \subseteq \cX^n$, let us define
\begin{align*}
	M^n(C) \deq \sum_{x^n \in C} M^n(x^n)
\end{align*}
where
\begin{align*}
	M^n(x^n) \deq \prod^n_{i=1}M(x_i), \qquad \forall \, x^n \in \cX^n.
\end{align*}
We also recall the definition of the $r$-blowup of an arbitrary set $A \subseteq \cX^n$:
\begin{align*}
	A_r \deq \left\{ x^n \in \cX^n \colon d_n(x^n,A) \le r \right\},
\end{align*}
where $d_n(x^n,A) \triangleq \min_{y^n \in A}d_n(x^n,y^n)$. Fix a probability distribution $P \in \cP(\cX)$, where we assume without loss of generality that $P$ is strictly positive. We are interested in the following question: Given a sequence of sets $\left\{A^{(n)}\right\}_{n \in \naturals}$ such that $A^{(n)} \subseteq \cX^n$ for every $n$, and
\begin{align*}
	P^{\otimes n}\left(A^{(n)}_{n\delta}\right) \xrightarrow{n \to \infty} 1,
\end{align*}
for some $\delta \ge 0$, how small can their masses $M^n(A^{(n)})$ be?

In order to state and prove the main result of \cite{Kontoyiannis_sphere_covering} that answers this question, we need a few preliminary definitions. For every $n \in \naturals$, every pair $P_n,Q_n$ of probability measures on $\cX^n$, and every $\delta \ge 0$, let us define the set
\begin{align*}
\Pi_n(P_n,Q_n,\delta) \deq \left\{ \pi_n \in \Pi_n(P_n,Q_n) \colon \frac{1}{n} \, \expectation_{\pi_n}\left[d_n(X^n,Y^n)\right] \le \delta \right\}
\end{align*}
of all couplings $\pi_n \in \cP(\cX^n \times \cX^n)$ of $P_n$ and $Q_n$, such that the per-letter expected distortion between $X^n$ and $Y^n$ with $(X^n,Y^n) \sim \pi_n$ is at most $\delta$. With this, we define
\begin{align*}
	I_n(P_n,Q_n,\delta) \deq \inf_{\pi_n \in \Pi_n(P_n,Q_n,\delta)} D(\pi_n \| P_n \otimes Q_n),
\end{align*}
and consider the following {\em rate function}:
\begin{align} \label{eq:definition of the rate function}
	R_n(\delta) &\equiv R_n(\delta; P_n,M^n) \nonumber\\
	&\deq \inf_{Q_n \in \cP(\cX^n)} \Big\{ I_n(P_n,Q_n,\delta) + \expectation_{Q_n}[\ln M^n(Y^n)] \Big\} \nonumber \\
	&\equiv \inf_{P_{X^nY^n}} \Bigg\{ I(X^n;Y^n) + \expectation[\ln M^n(Y^n)] \colon \nonumber \\
	& \qquad \qquad \; P_{X^n} = P_n,  \; \frac{1}{n} \, \expectation[d_n(X^n,Y^n)] \le \delta \Bigg\}.
\end{align}
When $n=1$, we will simply write $\Pi(P,Q,\delta)$, $I(P,Q,\delta)$ and $R(\delta)$. For the special case when each $P_n$ is the product measure $P^{\otimes n}$, we have
\begin{align}\label{eq:R_subadditivity}
	R(\delta) = \lim_{n \to \infty} \frac{1}{n} \, R_n(\delta) = \inf_{n \ge 1} \frac{1}{n} \, R_n(\delta)
\end{align}
(see \cite[Lemma~2]{Kontoyiannis_sphere_covering}). We are now ready to state the main result of \cite{Kontoyiannis_sphere_covering}:
\begin{theorem}[Kontoyiannis]\label{thm:converse_CM} Consider an arbitrary set $A^{(n)} \subseteq \cX^n$, and denote
	$\delta \deq \frac{1}{n} \, \expectation[d_{n}(X^n,A^{(n)})].$
	Then
	\begin{align}\label{eq:Konto_converse}
		\frac{1}{n} \, \ln M^n(A^{(n)}) \ge R(\delta; P,M).
	\end{align}
\end{theorem}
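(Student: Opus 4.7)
The plan is to build an explicit coupling that witnesses the lower bound, then invoke the single-letter subadditivity result \eqref{eq:R_subadditivity}.

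First, I would fix $A^{(n)} \subseteq \cX^n$ and define a nearest-neighbor projection $\phi : \cX^n \to A^{(n)}$ by choosing, for each $x^n$, some $\phi(x^n) \in A^{(n)}$ that attains the minimum in $d_n(x^n, A^{(n)}) = \min_{y^n \in A^{(n)}} d_n(x^n, y^n)$ (with an arbitrary but fixed tie-breaking rule; since $\cX$ is finite, no measurability issue arises). Let $X^n \sim P^{\otimes n}$ and set $Y^n \deq \phi(X^n)$; denote by $Q_n$ the law of $Y^n$, which is supported on $A^{(n)}$. The joint distribution $\pi_n$ of $(X^n, Y^n)$ is a coupling of $P^{\otimes n}$ and $Q_n$, and by construction
$$
\frac{1}{n}\, \expectation_{\pi_n}[d_n(X^n,Y^n)] = \frac{1}{n}\, \expectation[d_n(X^n, A^{(n)})] = \delta,
$$
so $\pi_n \in \Pi_n(P^{\otimes n}, Q_n, \delta)$.

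The next step is to bound $\ln M^n(A^{(n)})$ from below by a Jensen-type argument that introduces $Q_n$. Since $Q_n$ is supported on $A^{(n)}$ and $\ln$ is concave,
\begin{align*}
\ln M^n(A^{(n)}) &= \ln \sum_{y^n \in A^{(n)}} Q_n(y^n) \cdot \frac{M^n(y^n)}{Q_n(y^n)} \\
&\ge \sum_{y^n \in A^{(n)}} Q_n(y^n) \, \ln \frac{M^n(y^n)}{Q_n(y^n)} \\
&= H(Y^n) + \expectation_{Q_n}[\ln M^n(Y^n)].
\end{align*}
Because $Y^n$ is a deterministic function of $X^n$, we have $H(Y^n|X^n) = 0$, hence $H(Y^n) = I(X^n; Y^n) = D(\pi_n \| P^{\otimes n} \otimes Q_n)$. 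Combining this with the definition of $I_n$ and of $R_n$,
$$
\ln M^n(A^{(n)}) \;\ge\; I(X^n; Y^n) + \expectation_{Q_n}[\ln M^n(Y^n)] \;\ge\; I_n(P^{\otimes n}, Q_n, \delta) + \expectation_{Q_n}[\ln M^n(Y^n)] \;\ge\; R_n(\delta; P^{\otimes n}, M^n).
$$

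Finally, I would invoke the single-letterization identity \eqref{eq:R_subadditivity}, which gives
$R(\delta; P, M) = \inf_{n \ge 1} \frac{1}{n} R_n(\delta; P^{\otimes n}, M^n) \le \frac{1}{n} R_n(\delta; P^{\otimes n}, M^n)$, and so
$$
\frac{1}{n} \ln M^n(A^{(n)}) \ge \frac{1}{n} R_n(\delta; P^{\otimes n}, M^n) \ge R(\delta; P, M),
$$
which is \eqref{eq:Konto_converse}. The only nontrivial step is the Jensen inequality that extracts $H(Y^n)$ from $\ln M^n(A^{(n)})$; once this is in hand, the identification of $H(Y^n)$ with $I(X^n; Y^n)$ via the deterministic-map construction, together with the pre-proved subadditivity \eqref{eq:R_subadditivity}, makes the rest routine. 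I do not foresee a serious obstacle; the main thing to get right is that the nearest-neighbor coupling lives in the correct constraint set $\Pi_n(P^{\otimes n}, Q_n, \delta)$ and that the choice $Q_n = \text{law}(\phi(X^n))$ is admissible in the infimum defining $R_n$.
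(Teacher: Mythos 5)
Your proof is correct and follows essentially the same route as the paper's: the nearest-neighbor projection to build the coupling, Jensen's inequality to extract the entropy term, identification of $H(Y^n)$ with $I(X^n;Y^n)$ because $Y^n$ is a deterministic function of $X^n$, and then the single-letterization identity \eqref{eq:R_subadditivity}. The only cosmetic difference is that the paper rewrites $\sum Q_n(y^n)\ln(1/Q_n(y^n))$ directly as $D(\pi_n \| P^{\otimes n}\otimes Q_n)$ using the explicit form of $\pi_n$, while you name it as $H(Y^n)=I(X^n;Y^n)$; these are the same observation.
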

\begin{proof} Given $A^{(n)} \subseteq \cX^n$, let $\varphi_n \colon \cX^n \to A^{(n)}$ be the function that maps each $x^n \in \cX^n$ to the closest element $y^n \in A^{(n)}$, i.e.,
	\begin{align*}
		d_n(x^n,\varphi_n(x^n)) = d_n(x^n,A^{(n)}), \quad \forall \, x^n \in \cX^n
	\end{align*}
(we assume some fixed rule for resolving ties). If $X^n \sim P^{\otimes n}$, then let $Q_n \in \cP(\cX^n)$ denote the distribution of $Y^n = \varphi_n(X^n)$, and let $\pi_n \in \cP(\cX^n \times \cX^n)$ be the following joint distribution of $X^n$ and $Y^n$:
\begin{align} \label{eq:special form of coupling}
	\pi_n(x^n,y^n) = P^{\otimes n}(x^n) \, 1_{\{y^n = \varphi_n(x^n)\}}, \quad \forall \, x^n, y^n \in \cX^n.
\end{align}
This implies that $\pi_n \in \Pi_n(P^{\otimes n}, Q_n)$, and
\begin{align*}
	\expectation_{\pi_n}[d_n(X^n,Y^n)] &= \expectation_{\pi_n}[d_n(X^n,\varphi_n(X^n))] \nonumber\\
	&= \expectation_{\pi_n}\left[d_n(X^n,A^{(n)})\right] \nonumber\\
	&= n\delta, 
\end{align*}
so $\pi_n \in \Pi_n(P^{\otimes n},Q_n,\delta)$. Furthermore, we have
\begin{align}
	\ln M^n(A^{(n)}) &= \ln \sum_{y^n \in A^{(n)}} M^n(y^n) \nonumber\\
	&= \ln \sum_{y^n \in A^{(n)}} Q_n(y^n) \cdot \frac{M^n(y^n)}{Q_n(y^n)} \nonumber \\
	&\ge \sum_{y^n \in A^{(n)}} Q_n(y^n) \ln \frac{M^n(y^n)}{Q_n(y^n)}\label{eq:Konto_step1} \\
	&= \sum_{x^n \in \cX^n,y^n \in A^{(n)}} \pi_n(x^n,y^n) \ln \frac{\pi_n(x^n,y^n)}{P^{\otimes n}(x^n)Q_n(y^n)} \nonumber\\
	& \qquad \qquad \qquad + \sum_{y^n \in A^{(n)}}Q_n(y^n)\ln M^n(y^n)
\label{eq:Konto_step2}\\
	&= I(X^n; Y^n) + \expectation_{Q_n}[\ln M^n(Y^n)] \label{eq:Konto_step3} \\
	&\ge R_n(\delta),\label{eq:Konto_step4}
\end{align}
where \eqref{eq:Konto_step1} is by Jensen's inequality, \eqref{eq:Konto_step2} and \eqref{eq:Konto_step3} use the fact that $\pi_n$ is a coupling of $P^{\otimes n}$ and $Q_n$ where equality \eqref{eq:Konto_step2} uses the particular coupling in \eqref{eq:special form of coupling}, and \eqref{eq:Konto_step4} is by definition of $R_n(\delta)$ in \eqref{eq:definition of the rate function}. Using \eqref{eq:R_subadditivity}, we get \eqref{eq:Konto_converse}, and the theorem is proved.
\end{proof}
\begin{remark}\label{rem:achievability_CM} In \cite{Kontoyiannis_sphere_covering}, an achievability result was also proved: For every $\delta \ge 0$ and $\eps > 0$, there is a sequence of sets $\left\{A^{(n)}\right\}_{n \in \naturals}$ such that $A^{(n)} \subseteq \cX^n$ for every $n$,
	\begin{align}
		\frac{1}{n} \, \ln M^n(A^{(n)}) \le R(\delta) + \eps, \qquad \forall \, n \in \naturals
	\end{align}
	and
	\begin{align}
		\frac{1}{n} \, d_n(X^n,A^{(n)}) \le \delta, \qquad \text{eventually a.s.}
	\end{align}
\end{remark}

We are now ready to use Theorem~\ref{thm:converse_CM} to answer the question posed at the beginning of this section. Specifically, we consider the case when $M = P$. Defining the {\em concentration exponent} $R_{\rm c}(r; P) \deq R(r; P,P)$, we have:

\begin{corollary}[Converse concentration of measure]\label{cor:converse_CM} If $A^{(n)} \subseteq \cX^n$ is an arbitrary set, then
	\begin{align}
		P^{\otimes n}\left(A^{(n)}\right) \ge \exp\big(n \, R_{\rm c}(\delta; P)\big),
	\end{align}
where $$\delta = \frac{1}{n} \, \expectation\left[d_n\left(X^n,A^{(n)}\right)\right].$$
Moreover, if the sequence of sets $\{A^{(n)}\}^\infty_{n=1}$ is such that, for some $\delta \ge 0$, $P^{\otimes n}\left(A^{(n)}_{n\delta}\right) \to 1$ as $n \to \infty$, then
	\begin{align}\label{eq:converse_CM_exponent}
		\liminf_{n \to \infty} \frac{1}{n} \, \ln P^{\otimes n}\left(A^{(n)}\right) \ge R_{\rm c}(\delta; P).
	\end{align}
\end{corollary}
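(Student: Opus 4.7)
\noindent\textbf{Proof plan for Corollary~\ref{cor:converse_CM}.}

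The plan is to derive both parts as essentially immediate consequences of Theorem~\ref{thm:converse_CM}, applied with the mass function $M$ chosen to coincide with the probability distribution $P$. With this choice, $M^n(A^{(n)}) = P^{\otimes n}(A^{(n)})$, and the rate function $R(\delta;P,M)$ reduces to the concentration exponent $R_{\rm c}(\delta;P) \deq R(\delta;P,P)$. Hence inequality \eqref{eq:Konto_converse} of Theorem~\ref{thm:converse_CM} immediately yields the first assertion
\begin{equation*}
\frac{1}{n}\ln P^{\otimes n}(A^{(n)}) \ge R_{\rm c}(\delta_n;P),
\qquad \delta_n \deq \frac{1}{n}\expectation\bigl[d_n(X^n,A^{(n)})\bigr],
\end{equation*}
for every $n \in \naturals$.

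To obtain the second (asymptotic) assertion, I would connect the per-letter expected distortion $\delta_n$ to the blowup parameter $\delta$ appearing in the hypothesis $P^{\otimes n}(A^{(n)}_{n\delta}) \to 1$. Since $\cX$ is finite, the distortion $d$ is bounded by some constant $D_{\max} \deq \max_{x,y \in \cX} d(x,y) < \infty$. Splitting the expectation according to whether $X^n \in A^{(n)}_{n\delta}$ (equivalently, $d_n(X^n,A^{(n)}) \le n\delta$) or not, I would write
\begin{equation*}
\delta_n \;=\; \frac{1}{n}\expectation\bigl[d_n(X^n,A^{(n)})\bigr]
\;\le\; \delta \cdot P^{\otimes n}\bigl(A^{(n)}_{n\delta}\bigr) + D_{\max}\bigl(1-P^{\otimes n}\bigl(A^{(n)}_{n\delta}\bigr)\bigr).
\end{equation*}
Letting $n \to \infty$ and using the hypothesis gives $\limsup_{n \to \infty} \delta_n \le \delta$.

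The final step would invoke the monotonicity (and continuity) of the concentration exponent. From its definition as an infimum of the functional $Q \mapsto I(P,Q,\delta)+\expectation_Q[\ln P(Y)]$, the map $\delta \mapsto R_{\rm c}(\delta;P)$ is non-increasing, and it is also convex (as is standard for rate-distortion-type functions, since the constraint set $\Pi(P,Q,\delta)$ expands linearly with $\delta$). Hence $R_{\rm c}(\cdot;P)$ is continuous on the interior of its effective domain. Given any $\eps > 0$, the bound $\limsup \delta_n \le \delta$ forces $\delta_n \le \delta + \eps$ for all $n$ sufficiently large, whence $R_{\rm c}(\delta_n;P) \ge R_{\rm c}(\delta+\eps;P)$ by monotonicity, so
\begin{equation*}
\liminf_{n \to \infty} \frac{1}{n}\ln P^{\otimes n}(A^{(n)}) \;\ge\; \liminf_{n \to \infty} R_{\rm c}(\delta_n;P) \;\ge\; R_{\rm c}(\delta+\eps;P).
\end{equation*}
Sending $\eps \downarrow 0$ and appealing to right-continuity of $R_{\rm c}(\cdot;P)$ at $\delta$ would complete the argument. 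The main obstacle in this plan is not the first step, which is purely bookkeeping, but rather the last step: one must verify that $\delta$ lies in (or at least on the right boundary of) the interior of the domain where $R_{\rm c}(\cdot;P)$ is continuous. This is where the hypothesis $P^{\otimes n}(A^{(n)}_{n\delta}) \to 1$ is important --- it ensures $R_{\rm c}(\delta;P)$ is finite, and combined with convexity and monotonicity, yields the needed right-continuity. If $R_{\rm c}(\cdot;P)$ fails to be continuous at $\delta$ (a boundary pathology), one would need to justify the limit by a separate argument using the lower semicontinuity of divergence and a tightness argument on the optimal couplings.
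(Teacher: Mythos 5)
Your overall plan is the right one: the first assertion is literally Theorem~\ref{thm:converse_CM} with the mass function $M$ taken equal to $P$, and the paper does not supply any further argument for the corollary, so your task is essentially to fill in the asymptotic step yourself. Your deduction $\limsup_n \delta_n \le \delta$ from the blowup hypothesis, using boundedness of $d$ on the finite alphabet $\cX$, is correct.

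The one place that needs tightening is the continuity claim for $R_{\rm c}(\cdot\,;P)$. Convexity plus monotonicity give continuity on the interior of the domain, but they do \emph{not} by themselves give right-continuity at the left endpoint $\delta=0$ (a non-increasing convex function on $[0,D_{\max}]$ can drop discontinuously as one moves off $0$), which is precisely the boundary case you flag. In the finite-alphabet setting with $P$ strictly positive, the cleanest fix is a direct compactness argument that bypasses convexity entirely: the map
\begin{equation*}
P_{XY}\ \longmapsto\ I(X;Y)+\expectation_{P_Y}\!\left[\ln P(Y)\right]
\end{equation*}
is continuous on the compact set $\{P_{XY}\in\cP(\cX\times\cX):P_X=P\}$ (mutual information and $\expectation[\ln P(Y)]$ are both continuous there since $\cX$ is finite and $P>0$), and the feasible sets $\{P_{XY}: P_X = P,\ \expectation[d(X,Y)]\le\delta'\}$ are closed and decrease to $\{P_{XY}: P_X=P,\ \expectation[d(X,Y)]\le\delta\}$ as $\delta'\downarrow\delta$; extracting a convergent subsequence of near-optimal couplings for $\delta'\downarrow\delta$ and passing to the limit yields $\lim_{\delta'\downarrow\delta}R_{\rm c}(\delta';P)=R_{\rm c}(\delta;P)$. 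This is the concrete version of the "lower semicontinuity plus tightness" argument you sketch in your last sentence; on a finite alphabet there is no need to invoke Wasserstein machinery. Also note that your parenthetical justification of convexity ("the constraint set expands linearly") is incomplete as stated -- one also needs the convexity of the objective in $P_{XY}$ for fixed $P_X=P$, which holds because mutual information is convex in the channel $P_{Y|X}$ and $\expectation[\ln P(Y)]$ is linear -- but since the compactness argument gives right-continuity directly, the convexity of $R_{\rm c}$ is not actually needed for this corollary.
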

\begin{remark} A moment of reflection shows that the concentration exponent $R_{\rm c}(\delta; P)$ is nonpositive. Indeed, from definitions,
	\begin{align}
		& R_{\rm c}(\delta; P) \nonumber\\
		&= R(\delta; P,P) \nonumber\\
		&= \inf_{P_{XY}} \Big\{ I(X; Y) + \expectation[\ln P(Y)] \colon P_X = P,\, \expectation[d(X,Y)] \le \delta \Big\}  \nonumber\\
		&= \inf_{P_{XY}} \Big\{ H(Y) - H(Y|X) + \expectation[\ln P(Y)] \colon P_X = P,\, \expectation[d(X,Y)] \le \delta \Big\} \nonumber\\
		&= \inf_{P_{XY}} \Big\{ -D(P_Y \| P) - H(Y|X) \colon P_X = P,\, \expectation[d(X,Y)] \le \delta \Big\} \nonumber\\
		&= - \sup_{P_{XY}} \Big\{ D(P_Y \| P) + H(Y|X) \colon P_X = P,\, \expectation[d(X,Y)] \le \delta \Big\}, \label{eq:CM_exponent_alt}
	\end{align}
	which proves the claim, since both the divergence and the (conditional) entropy are nonnegative.
\end{remark}
\begin{remark} Using the achievability result from \cite{Kontoyiannis_sphere_covering} (cf.~Remark~\ref{rem:achievability_CM}), one can also prove that there exists a sequence of sets $\{A^{(n)}\}^\infty_{n=1}$, such that
\begin{align*}
	\lim_{n \to \infty} P^{\otimes n}\left(A^{(n)}_{n\delta}\right) = 1 \qquad \text{and} \qquad \lim_{n \to \infty}\frac{1}{n} \, \ln P^{\otimes n}\left(A^{(n)}\right) \le R_{\rm c}(\delta; P).
\end{align*}
\end{remark}

As an illustration, let us consider the case when $\cX = \{0,1\}$ and $d$ is the Hamming distortion, $d(x,y) = 1_{\{x \neq y\}}$. Then $\cX^n = \{0,1\}^n$ is the $n$-dimensional binary cube. Let $P$ be the ${\rm Bernoulli}(p)$ probability measure, which satisfies a ${\rm T}_1\left(\frac{1}{2\varphi(p)}\right)$ transportation-cost inequality with respect to the $L^1$ Wasserstein distance induced by the Hamming metric, where $\varphi(p)$ is defined in \eqref{eq:Hoeffding_phi}. By Proposition~\ref{prop:TC_tensorization}, the product measure $P^{\otimes n}$ satisfies a ${\rm T}_1\left(\frac{n}{2\varphi(p)}\right)$ transportation-cost inequality on the product space $(\cX^n,d_n)$. Consequently, it follows from \eqref{eq:Marton_conc_1} that for every $A^{(n)} \subseteq \cX^n$,
\begin{align}
	P^{\otimes n}\left(A^{(n)}_{n\delta}\right) & \ge 1 - \exp\left( -\frac{\varphi(p)}{n} \, \left(n\delta - \sqrt{\frac{n}{\varphi(p)} \; \ln \frac{1}{P^{\otimes n}\left(A^{(n)}\right)}}\right)^2\right) \nonumber \\[0.1cm]
    & = 1 - \exp\left( -n \, \varphi(p) \, \left(\delta - \sqrt{\frac{1}{n \, \varphi(p)} \; \ln \frac{1}{P^{\otimes n}\left(A^{(n)}\right)}}\right)^2\right)
\end{align}
provided that
$$\delta \ge \sqrt{\frac{1}{n \, \varphi(p)} \; \ln \frac{1}{P^{\otimes n}\left(A^{(n)}\right)}}.$$
Thus, if a sequence of sets $A^{(n)} \subseteq \cX^n$, $n \in \naturals$, satisfies
\begin{align}\label{eq:Hamming_cube_exponent}
	\liminf_{n \to \infty} \frac{1}{n} \, \ln P^{\otimes n}\left(A^{(n)}\right) > - \varphi(p)\delta^2,
\end{align}
then
\begin{align}\label{eq:Hamming_cube_blowup}
	P^{\otimes n}\left(A^{(n)}_{n\delta}\right) \xrightarrow{n \to \infty} 1.
\end{align}
The converse result, Corollary~\ref{cor:converse_CM}, says that if a sequence of sets $A^{(n)} \subseteq \cX^n$ satisfies \eqref{eq:Hamming_cube_blowup}, then \eqref{eq:converse_CM_exponent} holds. Let us compare the concentration exponent $R_{\rm c}(\delta; P)$, where $P$ is the ${\rm Bernoulli}(p)$ measure, with the exponent $-\varphi(p)\delta^2$ on the right-hand side of \eqref{eq:Hamming_cube_exponent}:

\begin{theorem} If $P$ is the ${\rm Bernoulli}(p)$ measure with $p \in [0,1/2]$, then the concentration exponent $R_{\rm c}(\delta;P)$ satisfies
    \begin{align}
    R_{\rm c}(\delta; P) \le - \varphi(p)\delta^2 - (1-p) \, h\left(\frac{\delta}{1-p}\right),
    \qquad \forall \, \delta \in [0,1-p]
	\end{align}
	and
	\begin{align}\label{eq:CM_exponent_large_delta}
		R_{\rm c}(\delta; P) = \ln p, \qquad \forall \, \delta \in [1-p,1]
	\end{align}
	where $$h(x) \triangleq -x \ln x - (1-x)\ln (1-x), \quad \forall \, x \in [0,1]$$ is the binary entropy function to base~$e$ (with the convention that $0 \log 0 = 0$).
\end{theorem}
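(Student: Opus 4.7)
The strategy is to attack both cases through the dual representation of the concentration exponent obtained in \eqref{eq:CM_exponent_alt}, namely
\begin{align*}
R_{\rm c}(\delta; P) = -\sup_{P_{XY}}\left\{ D(P_Y \| P) + H(Y|X) \,:\, P_X = P,\; \pr(X \neq Y) \le \delta \right\}.
\end{align*}
Every upper bound on $R_{\rm c}(\delta;P)$ will therefore be obtained by exhibiting a single explicit coupling that is feasible for the given $\delta$ and computing the two functionals above on it. Throughout, the relevant family of couplings will be the one-parameter ``push--to--1'' family: given $X \sim {\rm Bernoulli}(p)$, set $Y=1$ whenever $X=1$, and when $X=0$ set $Y=1$ independently with probability $\beta \in [0,1]$ (otherwise $Y=0$). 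A direct calculation gives $\pr(X \neq Y) = (1-p)\beta$, $P_Y = {\rm Bernoulli}\bigl(p + (1-p)\beta\bigr)$, and $H(Y|X) = (1-p)\,h(\beta)$.

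\textbf{Case $\delta \in [0, 1-p]$.} Here I would pick $\beta = \delta/(1-p)$, which is legitimate and saturates the distortion constraint, and yields $P_Y = {\rm Bernoulli}(p+\delta)$. Using the elementary identity $D(P_Y\|P) + H(Y|X) = -\expectation[\ln P(Y)] - I(X;Y)$ (which follows from $D(P_Y\|P) = -H(Y) - \expectation[\ln P(Y)]$), one finds that for this coupling
\begin{align*}
D(P_Y \| P) + H(Y|X) = d_{\rm bin}(p+\delta \,\|\, p) + (1-p)\,h\!\left(\frac{\delta}{1-p}\right),
\end{align*}
where $d_{\rm bin}$ is the binary relative entropy. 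To convert the first term into $\varphi(p)\delta^2$, I would invoke the distribution--dependent refinement of Pinsker's inequality \eqref{eq:Pinsker_Bernoulli_p}: since $\|P_Y - P\|_{\rm TV} = \delta$ and the balance coefficient of $P$ is $\pi_P = p$, that inequality gives $d_{\rm bin}(p+\delta \,\|\, p) \ge \varphi(p)\delta^2$. Combining, the stated upper bound on $R_{\rm c}(\delta; P)$ follows.

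\textbf{Case $\delta \in [1-p, 1]$.} For the upper bound, the same family of couplings with $\beta = 1$ is now feasible because $\pr(X\neq Y) = 1-p \le \delta$; here $Y \equiv 1$, so $I(X;Y) = 0$, $H(Y|X)=0$, and $D(P_Y\|P) = \ln(1/p)$, giving $R_{\rm c}(\delta;P) \le \ln p$. For the matching lower bound, I would pass back to the primal form
\begin{align*}
R_{\rm c}(\delta;P) = \inf_{P_{XY}}\Big\{ I(X;Y) + \expectation[\ln P(Y)] : P_X = P,\; \pr(X\neq Y) \le \delta\Big\}
\end{align*}
and use the two universal facts $I(X;Y) \ge 0$ and $\ln P(y) \ge \ln p$ for every $y \in \{0,1\}$ (valid because $p \le 1/2$ forces $P(y) \ge p$). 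Taking expectations then gives $I(X;Y) + \expectation[\ln P(Y)] \ge \ln p$ for \emph{every} admissible coupling, hence $R_{\rm c}(\delta;P) \ge \ln p$, and equality follows.

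\textbf{Main obstacle.} There is no genuinely hard step: the whole argument reduces to choosing the right one--parameter family of couplings and quoting the refined Pinsker inequality \eqref{eq:Pinsker_Bernoulli_p} that has already been recorded in the paper. The only point that needs care is the bookkeeping of signs in the identity $D(P_Y\|P) + H(Y|X) = -\expectation[\ln P(Y)] - I(X;Y)$, which is what allows the dual and primal forms of $R_{\rm c}$ to be used interchangeably when passing between the upper bound in Case~1 and the matching lower bound in Case~2.
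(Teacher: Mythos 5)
Your proof is correct and follows essentially the same strategy as the paper's: fix the same ``push-to-1'' coupling (which is precisely the optimal TV coupling \eqref{eq:TV_optimal_coupling} that the paper invokes), compute $H(Y|X) = (1-p)h(\delta/(1-p))$, and bound $D(P_Y\|P)\ge\varphi(p)\delta^2$ via the refined Pinsker inequality for the first case, then use the independent $Y\equiv 1$ coupling for the second. The one place your write-up is slightly sharper than the paper's is the second case: the paper's phrase ``if we let $Y$ be independent of $X$\dots we have to minimize $\expectation_Q[\ln P(Y)]$'' only delivers the upper bound $R_{\rm c}\le\ln p$ and leaves the matching lower bound implicit, whereas you spell out that $I(X;Y)\ge 0$ and $\ln P(y)\ge\ln p$ (for $p\le 1/2$) hold for \emph{every} admissible coupling, which is exactly what is needed to conclude equality.
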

\begin{proof}
From \eqref{eq:CM_exponent_alt}, we have
\begin{align}\label{eq:CM_exponent_variational}
	R_{\rm c}(\delta; P) &= - \sup_{P_{XY}}\Big\{ D(P_Y \| P) + H(Y|X) \colon P_X = P,\, \pr(X \neq Y) \le \delta \Big\}.
\end{align}
For a given $\delta \in [0,1-p]$, let us choose $P_Y$ so that $\| P_Y - P \|_{\rm TV} = \delta$. Then from \eqref{eq:Pinsker_refined_optimality},
\begin{align}
	\frac{D(P_Y \| P)}{\delta^2} &= \frac{D(P_Y \| P)}{\| P_Y - P \|^2_{\rm TV}}
	\ge \inf_{Q} \frac{D(Q \| P)}{\| Q - P \|^2_{\rm TV}}
	= \varphi(p).\label{eq:best_Pinsker_constant}
\end{align}
By the coupling representation of the total variation distance, we can choose a joint distribution $P_{\widetilde{X}\widetilde{Y}}$ with marginals $P_{\widetilde{X}} = P$ and $P_{\widetilde{Y}} = P_Y$, such that $\pr(\widetilde{X} \neq \widetilde{Y}) = \| P_Y - P \|_{\rm TV} = \delta$. Moreover, using \eqref{eq:TV_optimal_coupling}, we can compute
\begin{align*}
	P_{\tilde{Y}|\tilde{X}=0} = {\rm Bernoulli}\left(\frac{\delta}{1-p}\right) \qquad \text{and} \qquad P_{\tilde{Y}|\tilde{X}=1}(\tilde{y}) = \delta_1(\tilde{y}) \triangleq 1_{\{\tilde{y}=1\}}.
\end{align*}
Consequently,
\begin{align}\label{eq:cond_entropy_tilde}
	H(\widetilde{Y}|\widetilde{X}) = (1-p) H(\widetilde{Y}|\widetilde{X}=0) = (1-p)h\left( \frac{\delta}{1-p}\right).
\end{align}
From \eqref{eq:CM_exponent_variational}, \eqref{eq:best_Pinsker_constant} and \eqref{eq:cond_entropy_tilde},
we obtain
\begin{align*}
	R_{\rm c}(\delta; P) &\le -  D(P_{\widetilde{Y}} \| P) - H(\widetilde{Y} | \widetilde{X})  \\
	&\le - \varphi(p)\delta^2 - (1-p)h\left( \frac{\delta}{1-p}\right).
\end{align*}
To prove \eqref{eq:CM_exponent_large_delta}, it suffices to consider the case where $\delta = 1-p$.
If we let $Y$ be independent of $X \sim P$, then $I(X;Y) = 0$, so we have to minimize
$\expectation_Q[\ln P(Y)]$ over all distributions $Q$ of $Y$. But then
\begin{align*}
\min_{Q} \expectation_Q[\ln P(Y)] = \min_{y \in \{0,1\}} \ln P(y) = \min \left\{\ln p, \, \ln (1-p) \right\} = \ln p,
\end{align*}
where the last equality holds since $p \le 1/2$.
\end{proof}

\section{Summary}
\label{section: Entropy Chapter Summary}

In this chapter, we have covered the essentials of the entropy method, an information-theoretic technique for deriving concentration inequalities for functions of many independent random variables. As its very name suggests, the entropy method revolves around the relative entropy (or information divergence), which in turn can be related to the logarithmic moment-generating function and its derivatives.

A key ingredient of the entropy method is {\em tensorization}, or the use of a certain subadditivity property of the divergence in order to break the original multidimensional problem up into simpler one-dimensional problems. Tensorization is used in conjunction with various inequalities relating the relative entropy to suitable energy-type functionals defined on the space of functions for which one wishes to establish concentration. These inequalities fall into two broad classes: functional inequalities (typified by the logarithmic Sobolev inequalities) and transportation-cost inequalities (such as Pinsker's inequality). We have examined the many deep and remarkable information-theoretic ideas that bridge these two classes of inequalities, and also showed some examples of their applications to problems in coding and information theory.

At this stage, the relationship between information theory and the study of measure concentration is heavily skewed towards the use of the former as a tool for the latter. Moreover, applications of concentration of measure inequalities to problems in information theory, coding and communications have been exemplified in Chapters~2 and~3. We hope that the present monograph may offer some inspiration for information and coding theorists to deepen the ties between their discipline and the fascinating realm of high-dimensional probability and concentration of measure.

\begin{subappendices}
\section{Van Trees inequality}
\label{app:VanTrees}

Consider the problem of estimating a random variable $Y \sim P_Y$ based on a noisy observation
$U = \sqrt{s}Y + Z$, where $s > 0$ is the SNR parameter, while the additive noise $Z \sim G$
is independent of $Y$. We assume that $P_Y$ has a differentiable, absolutely continuous density
$p_Y$ with $J(Y) < \infty$. Our goal is to prove the van Trees inequality \eqref{eq:VanTrees}
and to establish that equality in \eqref{eq:VanTrees} holds if and only if $Y$ is Gaussian.

In fact, we will prove a more general statement: Let $\varphi(U)$ be an arbitrary (Borel-measurable)
estimator of $Y$. Then
\begin{align}\label{eq:VanTrees_general}
	\expectation\left[(Y - \varphi(U))^2\right] \ge \frac{1}{s + J(Y)},
\end{align}
with equality if and only if $Y$ has a standard normal distribution and $\varphi(U)$ is the MMSE
estimator of $Y$ given $U$.

The strategy of the proof is simple. Define two random variables
\begin{align*}
	\Delta(U,Y) &\deq \varphi(U) - Y, \\
	\Upsilon(U,Y) &\deq \frac{\d}{\d y}\ln \left[ p_{U|Y}(U|y)p_Y(y) \right]\Bigg|_{y=Y} \\
	& = \frac{\d}{\d y} \ln \left[\gamma(U-\sqrt{s}y) p_Y(y)\right]\Bigg|_{y=Y} \\
	& = \sqrt{s}(U-\sqrt{s}Y) + \rho_Y(Y) \\
	& = \sqrt{s}Z + \rho_Y(Y)
\end{align*}
where $\rho_Y(y) \deq \frac{\d}{\d y} \ln p_Y(y)$ for $y \in \reals$ is the score function.
We show below that $\expectation[\Delta(U,Y)\Upsilon(U,Y)]=1$. Then, by applying the Cauchy--Schwarz
inequality,
\begin{align*}
	1 &= \left| \expectation[\Delta(U,Y)\Upsilon(U,Y)]\right|^2 \\
	&\le \expectation[\Delta^2(U,Y)] \cdot \expectation[\Upsilon^2(U,Y)] \\
	&= \expectation[(\varphi(U)-Y)^2] \cdot \expectation[(\sqrt{s}Z + \rho_Y(Y))^2] \\
	&= \expectation[(\varphi(U)-Y)^2] \cdot (s + J(Y)).
\end{align*}
Upon rearranging, we obtain \eqref{eq:VanTrees_general}. The fact that $J(Y) < \infty$ implies that
the density $p_Y$ is bounded (see \cite[Lemma~A.1]{Johnson_Barron_CLT}). Using this and the rapid
decay of the Gaussian density $\gamma$ at infinity, we have
\begin{align}\label{eq:VanTrees_term1}
\int_{-\infty}^{\infty} \frac{\d}{\d y}\left[p_{U|Y}(u|y)p_Y(y)\right] \d y &= \gamma(u - \sqrt{s}y)p_Y(y)\Bigg|^\infty_{-\infty} = 0.
\end{align}
Integration by parts gives
\begin{align}
	&\int_{-\infty}^{\infty} y \frac{\d}{\d y}\left[ p_{U|Y}(u|y)p_Y(y)\right] \d y \nonumber\\
	&\quad= y \gamma(u - \sqrt{s}y)p_Y(y)\Bigg|^\infty_{-\infty} - \int_{-\infty}^{\infty} p_{U|Y}(u|y) p_Y(y) \d y \nonumber\\
	&\quad=  - \int_{-\infty}^{\infty} p_{U|Y}(u|y) p_Y(y) \d y = - p_U(u). \label{eq:VanTrees_term2}
\end{align}
Using \eqref{eq:VanTrees_term1} and \eqref{eq:VanTrees_term2}, we have
\begin{align*}
	&\expectation[\Delta(U,Y)\Upsilon(U,Y)] \nonumber\\[0.1cm]
	&= \int_{-\infty}^{\infty} \int_{-\infty}^{\infty} \left(\varphi(u)-y\right)
       \frac{\d}{\d y}\ln\left[p_{U|Y}(u|y)p_Y(y)\right] p_{U|Y}(u|y)p_Y(y) \d u\, \d y \\[0.1cm]
	&= \int_{-\infty}^{\infty} \int_{-\infty}^{\infty} \left(\varphi(u)-y\right) \frac{\d}{\d y}
       \left[p_{U|Y}(u|y) p_Y(y)\right] \d u\, \d y \\[0.1cm]
	&= \int_{-\infty}^{\infty} \varphi(u) \underbrace{\left(\int_{-\infty}^{\infty}
       \frac{\d}{\d y}\left[ p_{U|Y}(u|y)p_Y(y)\right] \d y\right)}_{=0} \d u \\
& \qquad \qquad \qquad - \int_{-\infty}^{\infty} \underbrace{\left( \int_{-\infty}^{\infty} y \frac{\d}{\d y} \left[p_{U|Y}(u|y)p_Y(y)\right]\d y\right)}_{=-p_U(u)} \d u \nonumber\\
	&= \int_{-\infty}^{\infty} p_U(u) \d u = 1,
\end{align*}
as was claimed. It remains to establish the necessary and sufficient condition for equality in \eqref{eq:VanTrees_general}. The Cauchy--Schwarz inequality for the product of $\Delta(U,Y)$ and $\Upsilon(U,Y)$ holds if and only if $\Delta(U,Y) = c \Upsilon(U,Y)$ for some constant $c \in \reals$, almost surely. This is equivalent to
\begin{align*}
	\varphi(U) &= Y + c\sqrt{s}(U - \sqrt{s}Y) + c \rho_Y(Y) \\
	&= c\sqrt{s}U + (1-cs)Y + c \rho_Y(Y)
\end{align*}
for some $c \in \reals$. In fact, $c$ must be nonzero, for otherwise we will have $\varphi(U) = Y$, which is not a valid estimator. But then it must be the case that $(1-cs)Y + c \rho_Y(Y)$ is independent of $Y$, i.e., there exists some other constant $c' \in \reals$, such that
$$
\rho_Y(y) \deq \frac{p'_Y(y)}{p_Y(y)} = \frac{c'}{c} + (s-1/c)y.
$$
In other words, the score $\rho_Y(y)$ must be an affine function of $y$, which is the case if and only if $Y$ is a Gaussian random variable.

\section{The proof of Theorem~\ref{thm:hypercontractive}}
\label{app:hypercontractive}

As a reminder, the $L^p$ norm of a real-valued random variable $U$
is defined by $\| U \|_p \deq \left( \expectation[|U|^p] \right)^{1/p}$
for $p \geq 1$.
It will be convenient to work with the following equivalent form of the
R\'enyi divergence in \eqref{eq:Renyi_div}: For every two random variables
$U$ and $V$ such that $P_U \ll P_V$, we have
\begin{align}\label{eq:Renyi_div_altern}
D_\alpha (P_U \| P_V) = \frac{\alpha}{\alpha-1} \ln \left\| \frac{\d P_U}{\d P_V}(V) \right\|_\alpha,
\qquad \alpha > 1.
\end{align}
Let us denote by $g$ the Radon--Nikodym derivative $\d P / \d G$. It is easy to show that $P_t \ll G$ for
all $t$, so the Radon--Nikodym derivative $g_t \deq \d P_t / \d G$ exists. Moreover, $g_0 = g$. Also, let
us define the function $\alpha \colon [0,\infty) \to [\beta, \infty)$ by $\alpha(t) = 1 + (\beta-1)e^{2t}$ for
some $\beta > 1$. Let $Z \sim G$. Using \eqref{eq:Renyi_div_altern}, it is easy to verify that the
desired bound \eqref{eq:Renyi_hypercontractivity} is equivalent to the statement that the function
$F \colon [0,\infty) \to \reals$, defined by
\begin{align*}
F(t) &\deq \ln \left\| \frac{\d P_t}{\d G}(Z) \right \|_{\alpha(t)} \equiv \ln \left\| g_t(Z) \right\|_{\alpha(t)},
\end{align*}
is non-increasing. From now on, we will adhere to the following notational convention: we will use either the dot or $\d/\d t$ to denote derivatives with respect to the ``time" $t$, and the prime to denote derivatives with respect to the ``space" variable $z$. We start by computing the derivative of $F$ with respect to $t$, which gives
\begin{align}
\dot{F}(t) &= \frac{\d }{\d t} \left\{ \frac{1}{\alpha(t)} \ln \expectation\left[ \big(g_t(Z)\big)^{\alpha(t)} \right] \right\} \nonumber \\
&= - \frac{\dot{\alpha}(t)}{\alpha^2(t)}  \ln \expectation\left[ \big(g_t(Z)\big)^{\alpha(t)} \right] + \frac{1}{\alpha(t)} \frac{\dfr{\d}{\d t} \expectation \left[ \big(g_t(Z) \big)^{\alpha(t)} \right] }{\expectation \left[ \big(g_t(Z) \big)^{\alpha(t)} \right]}. \label{eq:F_deriv}
\end{align}
To handle the derivative with respect to $t$ in the second term in \eqref{eq:F_deriv}, we need to delve a bit into the theory of the so-called {\em Ornstein--Uhlenbeck semigroup}, which is an alternative representation of the Ornstein--Uhlenbeck channel \eqref{eq:OU_channel}.

For every $t \ge 0$, let us define a linear operator $K_t$ acting on an arbitrary sufficiently regular (e.g., $L^1(G)$) function $h$ as
\begin{align}\label{eq:OU_semigroup}
K_t h(x) \deq \expectation\left[h\left(e^{-t} x + \sqrt{1-e^{-2t}} Z\right)\right],
\end{align}
where $Z \sim G$, as before. The family of operators $\{K_t\}^\infty_{t=0}$ has the following properties:
\begin{enumerate}
\item $K_0$ is the identity operator, $K_0 h = h$ for every $h$.
\item For every $t \ge 0$, if we consider the $\OU(t)$ channel, given by the random transformation \eqref{eq:OU_channel}, then for every measurable function $F$ such that $\expectation\bigl|F(Y)\bigr| < \infty$
    with $Y$ in \eqref{eq:OU_channel}, we can write
\begin{align}\label{eq:OU_semigroup_1}
K_t F(x) &= \expectation[F(Y)|X=x], \qquad \forall \, x \in \reals
\end{align}
and
\begin{align}\label{eq:OU_semigroup_2}
\expectation[F(Y)] = \expectation[K_t F(X)].
\end{align}
Here, \eqref{eq:OU_semigroup_1} easily follows from \eqref{eq:OU_channel}, and \eqref{eq:OU_semigroup_2} is immediate from \eqref{eq:OU_semigroup_1}.
\item A particularly useful special case of the above is as follows. Let $X$ have distribution $P$ with $P \ll G$, and let $P_t$ denote the output distribution of the $\OU(t)$ channel. Then, as we have seen before, $P_t \ll G$, and the corresponding densities satisfy
\begin{align}\label{eq:OU_density_evolution}
g_t(x) = K_t g(x).
\end{align}
To prove \eqref{eq:OU_density_evolution}, we can either use \eqref{eq:OU_semigroup_1} and the fact that $g_t(x) = \expectation[g(Y)|X=x]$, or proceed directly from \eqref{eq:OU_channel}:
\begin{eqnarray} \label{eq: g_t}
&& g_t(x) = \frac{1}{\sqrt{2\pi(1-e^{-2t})}}
\int_\reals \exp\left(-\frac{(u-e^{-t}x)^2}{2(1-e^{-2t})}\right) g(u) \d u \nonumber \\
&& \hspace*{1cm} = \frac{1}{\sqrt{2\pi}} \int_\reals g\left(e^{-t}x + \sqrt{1-e^{-2t}}z \right)
\exp\left(-\frac{z^2}{2}\right) \d z \nonumber \\
&& \hspace*{1cm} \equiv \expectation\left[g\left(e^{-t}x + \sqrt{1-e^{-2t}} Z \right)\right]
\end{eqnarray}
where in the second line we have made the change of variables $z = \frac{u-e^{-t}x}{\sqrt{1-e^{-2t}}}$, and
in the third line $Z \sim G$.
\item The family of operators $\{K_t\}^\infty_{t=0}$ forms a semigroup, i.e., for every $t_1,t_2 \ge 0$ we have
\begin{align*}
K_{t_1 + t_2} = K_{t_1} \circ K_{t_2} = K_{t_2} \circ K_{t_1},
\end{align*}
which is shorthand for saying that $K_{t_1 + t_2} h = K_{t_2}(K_{t_1}h) = K_{t_1}(K_{t_2}h)$ for every sufficiently regular $h$. This follows from \eqref{eq:OU_semigroup_1} and \eqref{eq:OU_semigroup_2} and from the fact that the channel family $\{\OU(t)\}^\infty_{t=0}$ is ordered by degradation. For this reason, $\{K_t\}^\infty_{t=0}$ is referred to as the {\em Ornstein--Uhlenbeck semigroup}. In particular, if $\{Y_t\}^\infty_{t=0}$ is the Ornstein--Uhlenbeck process, then for every function $F \in L^1(G)$ we have
\begin{align*}
K_t F(x) = \expectation[F(Y_t)|Y_0 = x], \qquad \forall \, x \in \reals.
\end{align*}
\end{enumerate}
Two deeper results concerning the Ornstein--Uhlenbeck semigroup, which we will need, are as follows: Define the second-order differential operator $\cL$ by
\begin{align*}
\cL h(x) \deq h''(x) - x h'(x)
\end{align*}
for all $C^2$ functions $h \colon \reals \to \reals$. Then:
\begin{enumerate}
\item The {\em Ornstein--Uhlenbeck flow} $\{h_t\}^\infty_{t=0}$, where $h_t = K_t h$ with a $C^2$ initial condition $h_0 = h$, satisfies the partial differential equation (PDE)
\begin{align}\label{eq:OU_flow_ODE}
\dot{h}_t &= \cL h_t.
\end{align}
\item For $Z \sim G$ and all $C^2$ functions $g,h \colon \reals \to \reals$ we have the {\em integration-by-parts formula}
\begin{align}\label{eq:OU_parts}
\expectation[g(Z) \cL h(Z)] = \expectation[h(Z) \cL g(Z)] = - \expectation[g'(Z) h'(Z)].
\end{align}
\end{enumerate}
We provide the proofs of \eqref{eq:OU_flow_ODE} and \eqref{eq:OU_parts} in Appendix~\ref{app:OU_semigroup}.

We are now ready to tackle the second term in \eqref{eq:F_deriv}. Noting that the family of densities $\{g_t\}^\infty_{t=0}$ forms an Ornstein--Uhlenbeck flow with initial condition $g_0 = g$, we have
\begin{align}
&\frac{\d}{\d t} \expectation \left[ \big(g_t(Z)\big)^{\alpha(t)}\right] \nonumber\\
&= \expectation\left[
 \frac{\d} {\d t} \Bigl\{\big(g_t(Z) \big)^{\alpha(t)} \Bigr\} \right] \nonumber \\
 &= \dot{\alpha}(t) \; \expectation\left[ \big(g_t(Z)\big)^{\alpha(t)} \ln g_t(Z) \right] + \alpha(t)\, \expectation\left[ \big(g_t(Z)\big)^{\alpha(t)-1} \frac{\d}{\d t} g_t(Z)\right] \nonumber \\
 &= \dot{\alpha}(t) \; \expectation\left[ \big(g_t(Z)\big)^{\alpha(t)} \ln g_t(Z) \right] \nonumber\\
& \qquad \qquad \qquad  +\alpha(t)\,\expectation\left[ \big(g_t(Z)\big)^{\alpha(t)-1} \cL g_t(Z)\right] \label{eq:g_t_deriv_1}\\
 &= \dot{\alpha}(t) \; \expectation\left[ \big(g_t(Z)\big)^{\alpha(t)} \ln g_t(Z) \right] \nonumber\\
& \qquad \qquad \qquad  - \alpha(t)\,\expectation\left[ \left(\big(g_t(Z)\big)^{\alpha(t)-1}\right)' g'_t(Z)\right]
 \label{eq:g_t_deriv_2} \\
 &= \dot{\alpha}(t) \; \expectation\left[ \big(g_t(Z)\big)^{\alpha(t)} \ln g_t(Z) \right] \nonumber\\
& \qquad \qquad \qquad  -
 \alpha(t) \big( \alpha(t) - 1\big) \; \expectation\left[ \big(g_t(Z)\big)^{\alpha(t)-2}
 \left(g'_t(Z)\right)^2\right] \label{eq:g_t_deriv_3}
\end{align}
where we use \eqref{eq:OU_flow_ODE} to get \eqref{eq:g_t_deriv_1}, and \eqref{eq:OU_parts} to get \eqref{eq:g_t_deriv_2}. (Referring back to \eqref{eq: g_t}, we see that the functions $g_t$, for all $t > 0$, are $C^\infty$ due to the smoothing property of the Gaussian kernel, so all interchanges of expectations and derivatives in the above display are justified.) If we define the function $\phi_t(z) \deq \bigl(g_t(z)\bigr)^{\alpha(t)/2}$, then
we can rewrite \eqref{eq:g_t_deriv_3} as
\begin{align}
\frac{\d}{\d t} \expectation \left[ \big(g_t(Z)\big)^{\alpha(t)}\right] &= \frac{\dot{\alpha}(t)}{\alpha(t)} \; \expectation\left[\phi^2_t(Z) \ln \phi^2_t(Z) \right] \nonumber\\
& \qquad - \frac{4\big(\alpha(t)-1\big)}{\alpha(t)} \;
\expectation\left[ \left( \phi'_t(Z) \right)^2 \right].\label{eq:g_t_deriv}
\end{align}
Using the definition of $\phi_t$ and substituting \eqref{eq:g_t_deriv} into
the right-hand side of \eqref{eq:F_deriv}, we get
\begin{align}
\alpha^2(t)\, \expectation[\phi^2_t(Z)] \, \dot{F}(t)
&= \dot{\alpha}(t) \; \left( \expectation[\phi^2_t(Z) \ln \phi^2_t(Z)] - \expectation[\phi^2_t(Z)] \ln \expectation[\phi^2_t(Z)] \right) \nonumber\\
&\qquad\qquad- 4 (\alpha(t)-1) \expectation\left[ \left( \phi'_t(Z) \right)^2 \right]. \label{eq:F_deriv_2}
\end{align}
If we now apply the Gaussian log-Sobolev inequality \eqref{eq:Gross_LSI} to $\phi_t$, then from \eqref{eq:F_deriv_2} we get
\begin{align}
\alpha^2(t)\, \expectation[\phi^2_t(Z)] \, \dot{F}(t) \le 2 \left(  \dot{\alpha}(t) - 2(\alpha(t)-1) \right) \expectation\left[ \left(\phi'_t(Z) \right)^2\right]. \label{eq:F_deriv_3}
\end{align}
Since $\alpha(t) = 1 + (\beta - 1)e^{2t}$, $\dot{\alpha}(t) - 2(\alpha(t)-1) = 0$, which implies that the right-hand side of \eqref{eq:F_deriv_3} is
equal to zero. Moreover, because $\alpha(t) > 0$
and $\phi^2_t(Z) > 0$ a.s.\ (note that $\phi^2_t > 0$
if and only if $g_t > 0$, but the latter follows from
\eqref{eq: g_t} where $g$ is a probability density function), we conclude that $\dot{F}(t) \le 0$.

What we have proved so far is that, for every $\beta > 1$ and $t \ge 0$,
\begin{align}\label{eq:hypercontractivity_1}
D_{\alpha(t)}(P_t \| G) \le \left(\frac{\alpha(t)(\beta-1)}{\beta(\alpha(t)-1)}\right) D_\beta(P \| G)
\end{align}
where $\alpha(t) = 1 + (\beta-1)e^{2t}$. By the monotonicity property of the R\'enyi divergence, the
left-hand side of \eqref{eq:hypercontractivity_1} is greater than or equal to $D_\alpha(P_t \| G)$ as
soon as $\alpha \le \alpha(t)$. By the same token, because the function $u \in (1,\infty) \mapsto \frac{u}{u-1}$
is strictly decreasing, the right-hand side of \eqref{eq:hypercontractivity_1} can be upper-bounded by
$\Bigl(\frac{\alpha(\beta-1)}{\beta(\alpha-1)}\Bigr) D_\beta(P \| G)$ for all $\alpha \ge \alpha(t)$.
Putting all these facts together, we conclude that the Gaussian log-Sobolev inequality \eqref{eq:Gross_LSI}
implies \eqref{eq:Renyi_hypercontractivity}.

We now show that \eqref{eq:Renyi_hypercontractivity} implies the log-Sobolev inequality of Theorem~\ref{thm:Gross_LSI}. To that end, we recall that \eqref{eq:Renyi_hypercontractivity} is equivalent to the right-hand side of \eqref{eq:F_deriv_2} being less than or equal to zero for all $t \ge 0$ and all $\beta > 1$. Let us choose $t = 0$ and $\beta = 2$, in which case
\begin{align*}
\alpha(0) = \dot{\alpha}(0) = 2, \qquad \phi_0 = g.
\end{align*}
Using this in \eqref{eq:F_deriv_2} for $t=0$, we get
\begin{align*}
2 \left(\expectation\left[g^2(Z) \ln g^2(Z) \right] - \expectation[g^2(Z)] \ln \expectation[g^2(Z)] \right) - 4\, \expectation\left[ \left( g'(Z)\right)^2 \right] \le 0
\end{align*}
which is precisely the log-Sobolev inequality \eqref{eq:Gross_LSI} where
$\expectation[g(Z)] = \expectation_G\left[\frac{\d P}{\d G} \right] = 1$.
This completes the proof of Theorem~\ref{thm:hypercontractive} (up to the proof of the equality in
\eqref{eq:OU_parts} that is related to Appendix~\ref{app:OU_semigroup}).

\section{Details on the Ornstein--Uhlenbeck semigroup}
\label{app:OU_semigroup}

In this appendix, we will prove the formulas \eqref{eq:OU_flow_ODE} and \eqref{eq:OU_parts} pertaining to the Ornstein--Uhlenbeck semigroup. We start with \eqref{eq:OU_flow_ODE}. Recalling that
\begin{align*}
h_t(x) = K_t h(x) = \expectation\left[h\left(e^{-t}x + \sqrt{1-e^{-2t}}Z\right)\right],
\end{align*}
we have
\begin{align*}
\dot{h}_t(x) &=  \frac{\d}{\d t} \expectation\left[h\left(e^{-t}x + \sqrt{1-e^{-2t}} Z\right)\right] \\
&= -e^{-t}x\, \expectation\left[h'\left(e^{-t}x + \sqrt{1-e^{-2t}}Z\right)\right] \\
& \qquad \qquad \qquad + \frac{e^{-2t}}{\sqrt{1-e^{-2t}}}
\cdot \expectation\left[Z h'\left(e^{-t}x + \sqrt{1-e^{-2t}}Z\right)\right].
\end{align*}
For an arbitrary sufficiently smooth function $h$ and every $m,\sigma \in \reals$,
\begin{align*}
\expectation[Z h'(m + \sigma Z) ] = \sigma \expectation[h''(m + \sigma Z)]
\end{align*}
(which is proved straightforwardly using integration by parts, provided that
$\lim_{x \to \pm \infty} e^{-\frac{x^2}{2}} h'(m+\sigma x) = 0$). Using this equality, we can write
\begin{align*}
\expectation\left[Z h'\left(e^{-t}x + \sqrt{1-e^{-2t}}Z\right)\right] &= \sqrt{1-e^{-2t}} \expectation\left[h''\left(e^{-t} x + \sqrt{1-e^{-2t}} Z\right)\right].
\end{align*}
Therefore,
\begin{align}\label{eq:OU_ODE_1}
\dot{h}_t(x) = - e^{-t} x \cdot K_t h'(x) + e^{-2t} K_t h''(x).
\end{align}
On the other hand,
\begin{align}
\cL h_t(x) &= h''_t(x) - x h'_t(x) \nonumber \\
&= e^{-2t} \expectation\left[h''\left(e^{-t}x + \sqrt{1-e^{-2t}} Z\right)\right] \nonumber \\
& \qquad - x e^{-t} \expectation\left[h'\left(e^{-t} x + \sqrt{1-e^{-2t}}Z \right) \right]\nonumber \\
&= e^{-2t} K_t h''(x) - e^{-t} x K_t h'(x). \label{eq:OU_ODE_2}
\end{align}
Comparing \eqref{eq:OU_ODE_1} and \eqref{eq:OU_ODE_2}, we get \eqref{eq:OU_flow_ODE}.

The proof of the integration-by-parts formula \eqref{eq:OU_parts} is more subtle, and relies on the fact that the Ornstein--Uhlenbeck process $\{Y_t\}^\infty_{t=0}$ with $Y_0 \sim G$ is stationary and {\em reversible} in the sense that, for every $t,t' \ge 0$, $(Y_t,Y_{t'}) \eqdist (Y_{t'}, Y_t)$. To see this, let
\begin{align*}
p^{(t)}(y|x) \deq \frac{1}{\sqrt{2\pi(1-e^{-2t})}} \exp\left( - \frac{(y-e^{-t}x)^2}{2(1-e^{-2t})} \right)
\end{align*}
be the transition density of the $\OU(t)$ channel. Then it is not hard to establish that
\begin{align*}
p^{(t)}(y|x) \gamma(x) = p^{(t)}(x|y) \gamma(y), \qquad \forall \, x, y \in \reals
\end{align*}
(recall that $\gamma$ denotes the standard Gaussian pdf). For $Z \sim G$ and every two smooth functions $g,h$, this implies that
\begin{align*}
\expectation[g(Z) K_t h(Z)] &= \expectation[g(Y_0) K_t h(Y_0) ] \\
&= \expectation[g(Y_0) \expectation[h(Y_t)|Y_0]] \\
&= \expectation[g(Y_0) h(Y_t) ] \\
&= \expectation[g(Y_t) h(Y_0) ] \\
&= \expectation[K_t g(Y_0) h(Y_0) ] \\
&= \expectation[K_t g(Z) h(Z)],
\end{align*}
where we have used \eqref{eq:OU_semigroup_1} and the reversibility property of the Ornstein--Uhlenbeck process. Taking the derivative of both sides with respect to $t$, we conclude that
\begin{align}\label{eq:OU_inner_product}
\expectation[g(Z) \cL h(Z)] = \expectation[\cL g(Z) h(Z)].
\end{align}
In particular, since $\cL 1 = 0$ (where on the left-hand side $1$ denotes the constant function $x \mapsto 1$), we have
\begin{align}\label{eq:OU_inner_product_2}
\expectation[\cL g(Z)] = \expectation[1 \cL g(Z)] = \expectation[g(Z) \cL 1] = 0
\end{align}
for all smooth $g$.
\begin{remark} If we consider the Hilbert space $L^2(G)$ of all functions $g \colon \reals \to \reals$ such that $\expectation[g^2(Z)] < \infty$ with $Z \sim G$, then \eqref{eq:OU_inner_product} expresses the fact that $\cL$ is a self-adjoint linear operator on this space. Moreover, \eqref{eq:OU_inner_product_2} shows that the constant functions are in the kernel of $\cL$ (the closed linear subspace of $L^2(G)$ consisting of all $g$ with $\cL g = 0$).
\end{remark}

We are now ready to prove \eqref{eq:OU_parts}. To that end, let us first define the operator $\Gamma$ on pairs of functions $g,h$ by
\begin{align}\label{eq:carre_du_champ}
\Gamma(g,h) \deq \frac{1}{2} \left[ \cL(gh) - g\cL h - h \cL g \right].
\end{align}
\begin{remark} This operator was introduced into the study of Markov processes by Paul Meyer under the name ``carr\'e du champ'' (French for ``square of the field''). In the general theory, $\cL$ can be an arbitrary linear operator that serves as an infinitesimal generator of a Markov semigroup. Intuitively, $\Gamma$ measures how far a given $\cL$ is from being a derivation, where we say that an operator $\cL$ acting on a function space is a {\em derivation} (or that it satisfies the {\em Leibniz rule}) if, for every $g,h$ in its domain,
\begin{align*}
\cL(gh) = g\cL h + h \cL g.
\end{align*}
An example of a derivation is the first-order linear differential operator $\cL g = g'$, in which case the Leibniz rule is simply the product rule of differential calculus.
\end{remark}

Now, for our specific definition of $\cL$, we have
\begin{align}
\Gamma(g,h)(x) &= \frac{1}{2} \Big[ (gh)''(x) - x (gh)'(x) - g(x)\big(h''(x) - x h'(x)\big) \nonumber\\
& \qquad \qquad - h(x)\big(g''(x) - x g'(x)\big) \Big] \nonumber \\
&= \frac{1}{2} \Big[ g''(x)h(x) + 2g'(x)h'(x) + g(x) h''(x) \nonumber \\
& \qquad \qquad - x g'(x) h(x) - x g(x) h'(x)- g(x) h''(x) \nonumber\\
& \qquad \qquad + x g(x) h'(x) - g''(x) h(x) + x g'(x) h(x) \Big] \nonumber\\
&= g'(x) h'(x), \label{eq:carre_du_champ_2}
\end{align}
or, more succinctly, $\Gamma(g,h) = g'h'$. Therefore,
\begin{align}
\expectation[g(Z) \cL h(Z)] &= \frac{1}{2} \Big\{ \expectation[g(Z) \cL h(Z)] + \expectation[h(Z) \cL g(Z)] \Big\} \label{eq:OU_parts_proof_1} \\
&= \frac{1}{2} \expectation[\cL(gh)(Z)] - \expectation[\Gamma(g,h)(Z)] \label{eq:OU_parts_proof_2} \\
&= - \expectation[g'(Z) h'(Z)], \label{eq:OU_parts_proof_3}
\end{align}
where \eqref{eq:OU_parts_proof_1} uses \eqref{eq:OU_inner_product}, \eqref{eq:OU_parts_proof_2} uses the definition \eqref{eq:carre_du_champ} of $\Gamma$, and \eqref{eq:OU_parts_proof_3} uses \eqref{eq:carre_du_champ_2} together with \eqref{eq:OU_inner_product_2}. This proves \eqref{eq:OU_parts}.

\section{LSI for Bernoulli and Gaussian measures}
\label{appendix: binary to Gaussian LSI}

The following log-Sobolev inequality was derived by Gross \cite{Gross}:
	\begin{align}\label{eq:Gross_Hamming_LSI}
		\Ent_P[g^2] \le \frac{(g(0)-g(1))^2}{2}.
	\end{align}
We will now show that \eqref{eq:Hamming_LSI} can be derived from \eqref{eq:Gross_Hamming_LSI}. Let us define $f$ by $e^f = g^2$, where we may assume without loss of generality that $0 < g(0) \le g(1)$. Note that
	\begin{align} \label{eq: bound on the squared of g(0)-g(1)}
		\left(g(0)-g(1)\right)^2 &= \left(\exp\left(f(0)/2\right)-\exp\left(f(1)/2\right)\right)^2 \nonumber \\
		&\le \frac{1}{8}\left[\exp\left(f(0)\right)+\exp\left(f(1)\right)\right]\left(f(0)-f(1)\right)^2 \nonumber \\
		&= \frac{1}{4}\expectation_P\left[\exp(f) (\Gamma f)^2\right]
	\end{align}
	with $\Gamma f = |f(0)-f(1)|$, where the inequality follows from the easily verified fact that $(1-x)^2 \le \frac{(1+x^2) (\ln x)^2}{2}$ for all $x \ge 0$, which we apply to $x \triangleq g(1)/g(0)$. Therefore,
the inequality in \eqref{eq:Gross_Hamming_LSI} implies the following:
\begin{align}
D(P^{(f)} || P)  & = \frac{\Ent_P[\exp(f)]}{\expectation_P[\exp(f)]} \label{eq:a} \\[0.1cm]
& = \frac{\Ent_P[g^2]}{\expectation_P[\exp(f)]} \label{eq:b} \\[0.1cm]
& \leq \frac{\big(g(0)-g(1)\big)^2}{2 \, \expectation_P[\exp(f)]} \label{eq:c} \\[0.1cm]
& \leq \frac{\expectation_P[\exp(f) \, (\Gamma f)^2]}{8 \, \expectation_P[\exp(f)]} \label{eq:d} \\[0.1cm]
& = \frac{1}{8} \, \expectation_P^{(f)} \bigl[(\Gamma f)^2 \bigr] \label{eq:e}
\end{align}
where equality \eqref{eq:a} follows from \eqref{eq:ent_vs_D},
equality \eqref{eq:b} holds due to the equality $e^f = g^2$,
inequality~\eqref{eq:c} holds due to \eqref{eq:Gross_Hamming_LSI},
inequality~\eqref{eq:d} follows from \eqref{eq: bound on the squared of g(0)-g(1)},
and equality~\eqref{eq:e} follows by definition of the expectation with respect to the
tilted probability measure $P^{(f)}$. Therefore, we conclude that indeed
\eqref{eq:Gross_Hamming_LSI} implies \eqref{eq:Hamming_LSI}.

Gross used \eqref{eq:Gross_Hamming_LSI} and the central limit theorem to establish his Gaussian log-Sobolev inequality (see Theorem~\ref{thm:Gross_LSI}). We can follow the same steps and arrive at \eqref{eq:Gross_LSI_2} from \eqref{eq:Hamming_LSI}. To that end, let $g \colon \reals \to \reals$ be a sufficiently smooth function (to guarantee, at least, that both $g \exp(g)$ and the derivative of $g$ are continuous and bounded), and define the function $f \colon \{0,1\}^n \to \reals$ by
	\begin{align*}
		f(x_1,\ldots,x_n) \deq g \left( \frac{x_1 + x_2 + \ldots + x_n - n/2}{\sqrt{n/4}} \right).
	\end{align*}
If $X_1,\ldots,X_n$ are i.i.d.\ $\Bernoulli(1/2)$ random variables, then, by the central limit theorem, the sequence of probability measures $\{P_{Z_n}\}^\infty_{n=1}$ with
\begin{align*}
	Z_n \deq \frac{X_1 + \ldots + X_n - n/2}{\sqrt{n/4}}
\end{align*}
converges weakly to the standard Gaussian distribution $G$ as $n \to \infty$: $P_{Z_n} \Rightarrow G$. Therefore, by the assumed smoothness properties of $g$ we have (see \eqref{eq: entropy functional} and \eqref{eq:ent_vs_D})
\begin{align}
	&\expectation\left[\exp\big(f(X^n)\big)\right] \cdot D \big( P^{(f)}_{X^n} \big\| P_{X^n} \big) \nonumber\\
	&= \expectation\left[ f(X^n) \exp \big(f(X^n) \bigr) \right] - \expectation[\exp \big(f(X^n)\bigr)] \ln \expectation[ \exp \big(f(X^n)\bigr)] \nonumber \\
	&= \expectation\left[ g(Z_n) \exp \big(g(Z_n)\big)\right] - \expectation[\exp \big(g(Z_n)\big)] \ln \expectation[ \exp \big(g(Z_n)\big)] \nonumber \\
	& \xrightarrow{n \to \infty} \expectation\left[ g(Z) \exp \big(g(Z)\big)\right] - \expectation[\exp \big(g(Z)\big)] \ln \expectation[ \exp \big(g(Z)\big)] \nonumber \\
	&= \expectation\left[\exp\left(g(Z)\right)\right] D \big( P^{(g)}_Z \big\| P_Z \big) \label{eq:CLT_LSI_step1}
\end{align}
where $Z \sim G$ is a standard Gaussian random variable. Moreover, using the definition \eqref{eq:Hamming_Gamma} of $\Gamma$ and the smoothness of $g$, for every $i \in \{1,\ldots,n\}$ and $x^n \in \{0,1\}^n$ we have
\begin{align*}
&\left| f(x^n \oplus e_i) - f(x^n) \right|^2 \\
&= \left| g\left( \frac{x_1 + \ldots + x_n - n/2}{\sqrt{n/4}} + \frac{(-1)^{x_i}}{\sqrt{n/4}}\right) - g \left(\frac{x_1 + \ldots + x_n - n/2}{\sqrt{n/4}}\right) \right|^2 \\
&= \frac{4}{n} \left(g'\left(\frac{x_1 + \ldots + x_n - n/2}{\sqrt{n/4}}\right)\right)^2 + o\left(\frac{1}{n}\right),
\end{align*}
which implies that
\begin{align*}
\left| \Gamma f(x^n) \right|^2
&= \sum^n_{i=1} \left( f(x^n \oplus e_i) - f(x^n)\right)^2 \\
&=  4 \left(g'\left(\frac{x_1 + \ldots + x_n - n/2}{\sqrt{n/4}}\right)\right)^2 + o\left(1\right).
\end{align*}
Consequently,
\begin{align}
&\expectation\left[\exp\left(f(X^n)\right)\right] \cdot \expectation^{(f)}\left[ \left(\Gamma f (X^n)\right)^2 \right] \nonumber\\
&\qquad = \expectation\left[ \exp\left(f(X^n)\right) \left(\Gamma f(X^n) \right)^2 \right] \nonumber \\
&\qquad = 4\, \expectation\left[ \exp\left(g(Z_n)\right) \left( \left(g'(Z_n)\right)^2 + o(1)\right)\right] \nonumber \\
&\xrightarrow{n \to \infty} 4\, \expectation \left[ \exp\left(g(Z)\right) \left(g'(Z)\right)^2\right] \nonumber \\
&\qquad = 4\, \expectation\left[\exp\left(g(Z)\right)\right] \cdot \expectation_{P_Z}^{(g)} \left[ \left(g'(Z)\right)^2\right]. \label{eq:CLT_LSI_step2}
\end{align}
Taking the limit of both sides of \eqref{eq:Hamming_LSI} as $n \to \infty$ and then using \eqref{eq:CLT_LSI_step1} and \eqref{eq:CLT_LSI_step2}, we obtain
\begin{align*}
D\big(P^{(g)}_Z \big\| P_Z \big) \le \frac{1}{2} \, \expectation_{P_Z}^{(g)}\left[ \left(g'(Z)\right)^2 \right],
\end{align*}
which is \eqref{eq:Gross_LSI_2}. The same technique applies in the case of an asymmetric Bernoulli measure:  given a sufficiently smooth function $g \colon \reals \to \reals$, define $f \colon \{0,1\}^n \to \reals$ by
	\begin{align*}
		f(x^n) \deq g\left( \frac{x_1 + \ldots + x_n - np}{\sqrt{npq}} \right).
	\end{align*}
	and then apply \eqref{eq:LSI_Bernoulli_p} to it.

\section{Fano's inequality for list decoding}
\label{app:Fano_list}

The following generalization of Fano's inequality for list decoding has been used in the proof of Theorem~\ref{thm:SC_DM-DBC}: Let $\cX$ and $\cY$ be finite sets, and let $(X,Y) \in \cX \times \cY$ be a pair of jointly distributed random variables. Consider an arbitrary mapping $L \colon \cY \to 2^\cX$ which maps every $y \in \cY$ to a set $L(y) \subseteq \cX$, such that $|L(Y)| \le N$ a.s.. Let $P_{\rm e} = \pr\left(X \not\in L(Y)\right)$ designate the list decoding error. Then
\begin{align}\label{eq:Fano_list}
	H(X|Y) \le  h(P_{\rm e}) + (1-P_{\rm e}) \ln N  + P_{\rm e} \ln |\cX|
\end{align}
(see, e.g., \cite{Ahlswede_Korner} or \cite[Lemma~1]{Kim_Sutivong_Cover}). For proving \eqref{eq:Fano_list},
define the indicator random variable $E \deq 1_{\{X \not\in L(Y)\}}$. Then we can expand the conditional
entropy $H(E,X|Y)$ in two ways as
\begin{subequations}
\begin{align}
	H(E,X|Y) &= H(E|Y) + H(X|E,Y) \label{eq:Fano_H_1}\\
	&= H(X|Y) + H(E|X,Y). \label{eq:Fano_H_2}
\end{align}
\end{subequations}
Since $X$ and $Y$ uniquely determine $E$ (for the given $L$), the quantity on the right-hand side of \eqref{eq:Fano_H_2} is equal to $H(X|Y)$. On the other hand, we can upper-bound the right-hand side of \eqref{eq:Fano_H_1} as
\begin{align*}
	H(E|Y) + H(X|E,Y) &\le H(E) + H(X|E,Y) \\
	&\le h(P_{\rm e}) + (1-P_{\rm e}) \ln N + P_{\rm e} \ln |\cX|,
\end{align*}
where we have bounded the conditional entropy $H(X|E,Y)$ as follows:
\begin{align*}
	& H(X|E,Y) \nonumber \\
    &= \sum_{y \in \cY} \pr(E=0,Y=y) \, H(X|E=0,Y=y) \\
	& \qquad + \sum_{y \in \cY} \pr(E=1,Y=y) \, H(X|E=1,Y=y) \\
	&\le \sum_{y \in \cY} \Bigl\{ \pr(E=0,Y=y) \, H(X|E=0,Y=y) \Bigr\} + P_{\rm e}\ln |\cX| \\
	&= (1-P_{\rm e})\sum_{y \in \cY} \Bigl\{ \pr(Y=y|E=0) \, H(X|E=0,Y=y) \Bigr\} + P_{\rm e}\ln |\cX| \\
	&\le (1-P_{\rm e}) \, \expectation\big[\ln |L(Y)| \, \big|\, E=0\big]  + P_{\rm e}\ln |\cX| \\
	&\le (1-P_{\rm e}) \ln N + P_{\rm e}\ln |\cX|,
\end{align*}
where in the first line we have used the standard log-cardinality bound on the entropy, while in the third line we have used the fact that, given $E=0$ and $Y=y$, $X$ is supported on the set $L(y)$. Since $$H(X|Y) = H(E|Y) + H(X|E,Y) \le H(E) + H(X|E,Y),$$ we get \eqref{eq:Fano_list}.

\begin{remark} If instead of assuming that $L(Y)$ is bounded a.s.\ we assume that it is bounded in expectation, i.e., if $\expectation[\ln |L(Y)|] < \infty$, then we can obtain a weaker inequality
	\begin{align*}
		H(X|Y) \le \expectation\left[ \ln |L(Y) \right] + h(P_{\rm e}) + P_{\rm e} \ln |\cX|.
	\end{align*}
To get this, we follow the same steps as before, except the last step in the above series of bounds on $H(X|E,Y)$ is replaced by
\begin{align*}
	& (1-P_{\rm e}) \, \expectation\big[\ln |L(Y)| \, \big|\, E=0\big] \nonumber \\[0.1cm]
    &\le (1-P_{\rm e})\, \expectation\big[\ln |L(Y)| \, \big|\, E=0\big] + P_{\rm e}\, \expectation\big[\ln |L(Y)| \, \big|\, E=1\big] \\[0.1cm]
	&= \expectation\left[ \ln |L(Y)| \right]
\end{align*}
(we assume, of course, that $L(y)$ is nonempty for all $y \in \cY$).
\end{remark}

\section{Details for the derivation of \eqref{eq: see Appendix D}}
\label{app:McDiarmid_details}

Let $X^n \sim P_{X^n}$ and $Y^n \in \cY^n$ be the input and output sequences of a DMC with transition matrix $T \colon \cX \to \cY$, where the DMC is used without feedback. In other words, $(X^n,Y^n) \in \cX^n \times \cY^n$ is a random variable with $X^n \sim P_{X^n}$ and
\begin{align*}
&	P_{Y^n|X^n}(y^n|x^n) = \prod^n_{i=1}P_{Y|X}(y_i|x_i),\\
&  \forall \, y^n \in \cY^n,\, \forall \, x^n \in \cX^n \text{ s.t. } P_{X^n}(x^n) > 0.
\end{align*}
Because the channel is memoryless and there is no feedback, the $i$th output symbol $Y_i \in \cY$ depends only on the $i$th input symbol $X_i \in \cX$ and not on the rest of the input symbols $\overline{X}^i$. Consequently, $\overline{Y}^i \to X_i \to Y_i$ is a Markov chain for every $i =1,\ldots,n$, so we can write
\begin{align}
	P_{Y_i|\overline{Y}^i}(y|\overline{y}^i) &= \sum_{x \in \cX}P_{Y_i|X_i}(y|x)P_{X_i|\overline{Y}^i}(x|\overline{y}^i) \\
	&= \sum_{x \in \cX} P_{Y|X}(y|x)P_{X_i|\overline{Y}^i}(x|\overline{y}^i)
\end{align}
for all $y \in \cY$ and all $\overline{y}^i \in \cY^{n-1}$ such that $P_{\overline{Y}^i}(\overline{y}^i) > 0$. Therefore, for every $y,y' \in \cY$ we have
\begin{align*}
	\ln \frac{P_{Y_i|\overline{Y}^i}(y|\overline{y}^i)}{P_{Y_i|\overline{Y}^i}(y'|\overline{y}^i)}
	&= \ln \frac{\sum_{x \in \cX} P_{Y|X}(y|x)P_{X_i|\overline{Y}^i}(x|\overline{y}^i)}{\sum_{x \in \cX} P_{Y|X}(y'|x)P_{X_i|\overline{Y}^i}(x|\overline{y}^i)} \\
	&= \ln \frac{\sum_{x \in \cX} P_{Y|X}(y'|x) P_{X_i|\overline{Y}^i}(x|\overline{y}^i) \frac{P_{Y|X}(y|x)}{P_{Y|X}(y'|x)}}{\sum_{x \in \cX} P_{Y|X}(y'|x)P_{X_i|\overline{Y}^i}(x|\overline{y}^i)},
\end{align*}
where in the last line we have used the fact that $P_{Y|X}(\cdot|\cdot) > 0$. This shows that we can express the quantity $	\ln \dfrac{P_{Y_i|\overline{Y}^i}(y|\overline{y}^i)}{P_{Y_i|\overline{Y}^i}(y'|\overline{y}^i)} $ as the logarithm of expectation of $\dfrac{P_{Y|X}(y|X)}{P_{Y|X}(y'|X)}$ with respect to the (conditional) probability measure
\begin{align*}
Q(x|\overline{y}^i,y') = \frac{P_{Y|X}(y'|x)P_{X_i|\overline{Y}^i}(x|\overline{y}^i)}{\sum_{x \in \cX} P_{Y|X}(y'|x)P_{X_i|\overline{Y}^i}(x|\overline{y}^i)}, \qquad \forall \, x \in \cX.
\end{align*}
Therefore,
\begin{align*}
	\ln \frac{P_{Y_i|\overline{Y}^i}(y|\overline{y}^i)}{P_{Y_i|\overline{Y}^i}(y'|\overline{y}^i)} \le \max_{x \in \cX} \ln \frac{P_{Y|X}(y|x)}{P_{Y|X}(y'|x)}.
\end{align*}
Interchanging the roles of $y$ and $y'$, we get
\begin{align*}
		\ln \frac{P_{Y_i|\overline{Y}^i}(y'|\overline{y}^i)}{P_{Y_i|\overline{Y}^i}(y|\overline{y}^i)}
	&\le	\max_{x \in \cX} \ln \frac{P_{Y|X}(y'|x)}{P_{Y|X}(y|x)}.
\end{align*}
This implies, in turn, that
\begin{align*}
	\left|	\ln \frac{P_{Y_i|\overline{Y}^i}(y|\overline{y}^i)}{P_{Y_i|\overline{Y}^i}(y'|\overline{y}^i)} \right| \le \max_{x \in \cX}\max_{y,y' \in \cY} \left| \ln \frac{P_{Y|X}(y|x)}{P_{Y|X}(y'|x)}\right| &= \frac{c(T)}{2}
\end{align*}
for all $y,y' \in \cY$.

\end{subappendices}

\bibliography{references}
\end{document}